\newtheorem{theorem}{Theorem}[section]
\newtheorem{proposition}[theorem]{Proposition}
\newcommand{\blind}{0}
\title{Bayesian modeling of nearly mutually orthogonal processes}
\author[1]{James Matuk} 
\author[2,3,4]{Amy H. Herring}
\author[2,5]{David B. Dunson}
\affil[1]{Department of Epidemiology, University of Pittsburgh, Pittsburgh, PA, USA}
\affil[2]{Department of Statistical Science, Duke University, Durham, NC, USA}
\affil[3]{Duke Global Health Institute, Duke University, Durham, NC, USA}
\affil[4]{Department of Biostatistics and Bioinformatics, Duke University, Durham, NC, USA}
\affil[5]{Department of Mathematics, Duke University, Durham, NC, USA}
\date{ }
\begin{document}

\linespread{1}

\if1\blind
{
  \bigskip
  \bigskip
  \bigskip
  \begin{center}
    {\LARGE\bf Bayesian modeling of nearly mutually orthogonal processes}
\end{center}
  \medskip
}\fi

\if0\blind
{
\maketitle
}\fi

\begin{abstract}
Functional factor analysis is an important dimension reduction method for functional and longitudinal data. Factor loadings give insight into patterns of variability of the observations, while latent factors provide a low-dimensional representation of the data that is useful for inferential tasks. Constraining the functional factor loadings to be mutually orthogonal is desirable for model parsimony but is computationally challenging. In this work, we introduce nearly mutually orthogonal processes, which can be used to effectively enforce mutual orthogonality of factor loadings while maintaining computational simplicity and efficiency. The joint distribution is governed by a penalty parameter that determines the degree to which the processes are mutually orthogonal and is related to ease of posterior computation. We demonstrate that our approach can be used for flexible and interpretable inference in an application to studying the effects of breastfeeding status, illness, and demographic factors on weight dynamics in early childhood. Code is available at \url{https://github.com/jamesmatuk/NeMO-FFA}.
\end{abstract}
\textit{Keywords:} Bayes; Functional data analysis; Gaussian process; Orthogonality; Uncertainty quantification.
    
\clearpage

\linespread{2}

\section{Introduction}\label{sec:intro}

A fundamental problem in functional data analysis is that observations are generated from infinite-dimensional stochastic processes \citep{ramsay2005,ferraty2006}. Consequently, dimension reduction is an important modeling tool. Low-dimensional representations of functional data can retain most of the information in the original observations when the modes of variability are assumed to be smooth. Functional factor analysis provides a generative model for functional data that decomposes observations into latent factors and functional factor loadings. The loadings are useful for understanding patterns of variability in the observations, while the latent factors serve as a low-dimensional representation of the data. Bayesian hierarchical models are successful in using latent factors for a variety of inferential tasks, including regression \citep{montagna2012}, joint modeling \citep{moran2021}, clustering \citep{marco2022}, and modeling multivariate functional data \citep{kowal2017,kowal2021}.

The most popular dimension reduction tool for functional data is functional principal component analysis \citep{james2000,yao2005,peng2009}. When viewed through a probabilistic framework \citep{tipping1999}, functional principal components are functional factor loadings, while projections of the data onto the components are latent factors. For relevant Bayesian articles, refer to \cite{behseta2005,vdLinde2008,saurez2017,jiang2020,nolan2021}. Within this context, a mutual orthogonality constraint is imposed on the factor loadings to aid interpretability of latent factors, especially for subsequent inferential tasks. Exact orthogonality can be imposed through an appropriate prior \citep{jauch2021,kowal2017,Sartini2026}, however, strictly enforcing mutual orthogonality can be computationally challenging \citep{shamshoian2022}.

We effectively enforce this constraint by defining the nearly mutually orthogonal ($\small{\mbox{NeMO}}$) process prior to shrink factor loadings towards mutual orthogonality. Our approach represents a fundamentally new strategy for functional factor analysis that relaxes strict orthogonality constraints while preserving their advantages in terms of interpretability and model parsimony. The joint distribution induced by $\small{\mbox{NeMO}}$ processes is governed by a penalty parameter controlling departures from mutual orthogonality, similar in spirit to the Bayesian constraint relaxation framework of \cite{duan2019}. This allows $\small{\mbox{NeMO}}$ to balance the extremes of an unconstrained model and a completely constrained model in a computationally efficient framework.

Our approach fits within the broader literature of using Bayesian nonparametric models to encode constraints, including monotonicity \citep{shively2009,lin2014}, convexity or concavity \citep{shively2011,hannah2013,lenk2017}, and other shape constraints \citep{wheeler2017,dasgupta2021,yu2022}.  In principle, $\small{\mbox{NeMO}}$ could be applied to a diverse range of problems where orthogonality plays an important role, such as semiparametric inference \citep{plumlee2018,kowal2022} and modeling spatial random effects \citep{hodges2010,khan2022}. However, we illustrate $\small{\mbox{NeMO}}$ processes through applications to factor analysis of functional data. 

The remainder of the paper is organized as follows. In section \ref{sec:nemo}, we define $\small{\mbox{NeMO}}$ processes and discuss important properties. In section \ref{sec:fpca}, we discuss using $\small{\mbox{NeMO}}$ processes as a prior model for functional factor analysis models within a wide range of application settings, including sparse and irregular observations and generalized functional data. In section \ref{sec:simulations}, we present simulations that highlight the computational tractability of $\small{\mbox{NeMO}}$ processes. In section \ref{sec:cebu}, we present a detailed application that illustrates how $\small{\mbox{NeMO}}$ processes can be used as a layer in a hierarchical model that provides flexible and interpretable inference with uncertainty quantification. In section \ref{sec:discussion}, we include a discussion of our approach and identify future directions of work. In extensive Supplemental Materials, we present proofs for the propositions in the main paper, full implementation details for the proposed models, additional simulation examples, and extensive details related to the Cebu data analysis.

\section{Nearly Mutually Orthogonal Processes}\label{sec:nemo}
\subsection{Definition of Nearly Mutually Orthogonal Processes}\label{sec:remo}

We define $\small{\mbox{NeMO}}$ processes with the goal of developing a prior model that shrinks the pairwise inner products of a set of random functions, $\{\lambda_k:\mathcal{T}\rightarrow\mathbb{R}\}_{k = 1}^K$. Our prior enforces a relaxed version of the following mutual orthogonality constraint:
\begin{equation}\label{eq:mutualOrtho}
    \langle \lambda_j,\lambda_k \rangle :=  \int_{\mathcal{T}} \lambda_j(s)\lambda_k(s)ds = 0, \quad j\neq k,
\end{equation}
where $\langle \cdot ,\cdot \rangle$ denotes the $\mathbb{L}^2$ inner product between two functions. Let $\mathcal{G}$ denote the product measure induced by $K$ independent zero-mean Gaussian processes with covariance kernels $C_k(\cdot,\cdot),\ k=1,\ldots,K$, and $\mathcal{H}$ denote the product of reproducing kernel Hilbert spaces defined by the covariance kernels \citep{van2008}. We refer to these as {\em parent Gaussian processes}.  We assume the measure induced by $\small{\mbox{NeMO}}$ processes, denoted by $\mathcal{N}$, is absolutely continuous with respect to $\mathcal{G}$. We define $\mathcal{N}$ through the Radon-Nikodym derivative, 
\begin{equation}\label{eq:jointPrior}
    \frac{\partial \mathcal{N}}{\partial \mathcal{G}}(\lambda_1,\ldots,\lambda_K) \propto \exp\bigg(-\frac{1}{2\nu_\lambda}\sum_{k = 1}^K\sum_{j<k}\langle \lambda_j,\lambda_k\rangle^2\bigg).
\end{equation}
The degree to which the mutual orthogonality constraint is met is governed by the penalty parameter, $\nu_\lambda$. 

To study $\mathcal{N}$ with fixed $\nu_\lambda >0$, we define a set of all functions $\{\lambda_1,\ldots,\lambda_K\}$ that are 
$\omega$-far from being mutually orthogonal: $E_\omega :=  \{\lambda_1,\ldots,\lambda_K\in\mathcal{H}\ \mid \ \sum_{ 1}^K\sum_{j<k}\langle\lambda_j,\lambda_k \rangle ^2> \omega \}$. For increasing $\omega$, $E_\omega$ will not contain any functions that are close to mutually orthogonal, quantified through the pairwise sum of squared inner products. Proposition \ref{propNemo3} formalizes the behavior of $\mathcal{N}$ for measurable subsets of $E_\omega$.

\begin{proposition}\label{propNemo3}
For any $\nu_\lambda > 0$ and measurable subset $E_\omega'\subseteq E_\omega$ with non-zero $\mathcal{G}(E_\omega')$, \begin{equation}
    \frac{\mathcal{N}(E_\omega')}{\mathcal{G}(E_\omega')} \leq F\exp\bigg(-\frac{\omega}{2\nu_\lambda}\bigg), \nonumber
\end{equation}
where $F = \big[\mathbb{E}_{\mathcal{G}}\{\exp{(-\frac{1}{2\nu_\lambda}\sum_{ 1}^K\sum_{j<k}\langle \lambda_j,\lambda_k\rangle^2)}\}\big]^{-1}$ is a normalizing constant.
\end{proposition}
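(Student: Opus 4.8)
The plan is to argue directly from the definition of $\mathcal{N}$ through its Radon--Nikodym derivative in \eqref{eq:jointPrior}. Write $S(\lambda_1,\ldots,\lambda_K):=\sum_{k=1}^K\sum_{j<k}\langle\lambda_j,\lambda_k\rangle^2$ for the penalty functional. Requiring $\mathcal{N}$ to be a probability measure pins down the proportionality constant in \eqref{eq:jointPrior}, so that
\[
\frac{\partial\mathcal{N}}{\partial\mathcal{G}}(\lambda_1,\ldots,\lambda_K)=F\exp\!\Big(-\tfrac{1}{2\nu_\lambda}S(\lambda_1,\ldots,\lambda_K)\Big),\qquad F=\Big[\mathbb{E}_{\mathcal{G}}\big\{\exp(-\tfrac{1}{2\nu_\lambda}S)\big\}\Big]^{-1},
\]
which is exactly the normalizing constant in the statement. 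A preliminary point is to note that $S$ is $\mathcal{G}$-measurable and finite $\mathcal{G}$-almost surely (the parent Gaussian processes have sample paths in $\mathbb{L}^2(\mathcal{T})$ a.s.), and that $0<F<\infty$, so this density is genuinely well defined; these facts are implicit in the construction of Section~\ref{sec:remo}.

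Next I would integrate this density over the given set:
\[
\mathcal{N}(E_\omega')=F\int_{E_\omega'}\exp\!\Big(-\tfrac{1}{2\nu_\lambda}S(\lambda_1,\ldots,\lambda_K)\Big)\,d\mathcal{G}.
\]
The key observation is simply that on $E_\omega'\subseteq E_\omega$ we have $S(\lambda_1,\ldots,\lambda_K)>\omega$ \emph{everywhere}, directly from the pointwise definition of $E_\omega$; hence the integrand is bounded above by the constant $\exp(-\tfrac{\omega}{2\nu_\lambda})$ on the entire domain of integration. Pulling this bound out of the integral gives $\mathcal{N}(E_\omega')\le F\exp(-\tfrac{\omega}{2\nu_\lambda})\,\mathcal{G}(E_\omega')$, and dividing by the assumed-positive quantity $\mathcal{G}(E_\omega')$ yields the stated inequality.

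There is no serious obstacle here: the proof is a one-line uniform bound on the Radon--Nikodym derivative over $E_\omega$. The only things worth being careful about are the measure-theoretic preliminaries just mentioned (measurability and a.s. finiteness of $S$, finiteness and positivity of $F$) and making explicit that $S>\omega$ holds pointwise on $E_\omega'$ rather than merely almost everywhere, so that the bound on the integrand requires no qualification. I would close by remarking that the right-hand side is monotone decreasing in $\omega$ and vanishes as $\nu_\lambda\to 0$, so the bound is consistent with Proposition~\ref{propNemo1} and quantifies the rate at which $\mathcal{N}$ withdraws mass from configurations that are far from mutually orthogonal.
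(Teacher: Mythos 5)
Your proposal is correct and matches the paper's own argument: both identify $F$ as the reciprocal of $\mathbb{E}_{\mathcal{G}}\{\exp(-\frac{1}{2\nu_\lambda}S)\}$, bound the exponential density above by $\exp(-\frac{\omega}{2\nu_\lambda})$ pointwise on $E_\omega'$, pull that constant out of the integral, and divide by $\mathcal{G}(E_\omega')>0$. The extra remarks on measurability, finiteness of $F$, and consistency with Proposition \ref{propNemo1} are fine additions but the core proof is the same.
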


Relative to the Gaussian measure, $\mathcal{N}$ assigns probability to $E_\omega'$ that decays exponentially fast as $\omega$ increases, with rate governed by $\frac{1}{2\nu_\lambda}$. This implies that, relative to drawing $\lambda_1,\ldots,\lambda_K$ independently from Gaussian processes, sampling from a $\small{\mbox{NeMO}}$ process will place higher probability on functions that are close to mutually orthogonal.

In propositions \ref{propNemo1} \& \ref{propNemo2}, we present the limiting properties of $\mathcal{N}$ with respect to $\nu_\lambda$.

\begin{proposition}\label{propNemo1}
Let $E :=  \{\lambda_1,\ldots,\lambda_K\in\mathcal{H}\ \mid \ \langle\lambda_j,\lambda_k \rangle \neq 0, \text{ for some } j\neq k \}$.
For any measurable subset $E'\subseteq E$, $\underset{\nu_\lambda\rightarrow 0}{\lim}\mathcal{N}(E') = 0$.
\end{proposition}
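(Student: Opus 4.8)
\medskip
\noindent\emph{Proof plan.}\quad The plan is to combine absolute continuity of $\mathcal{N}$ with respect to $\mathcal{G}$ with the fact that the set $E$ sits inside the product RKHS $\mathcal{H}^{K}$, which is $\mathcal{G}$-null. From \eqref{eq:jointPrior}, the normalising constant $\int\exp\!\big(-\tfrac{1}{2\nu_\lambda}\sum_{j<k}\langle\lambda_j,\lambda_k\rangle^{2}\big)\,d\mathcal{G}$ is finite (the integrand never exceeds $1$) and strictly positive (for $\mathcal{G}$-almost every $(\lambda_1,\dots,\lambda_K)$ the paths are continuous on the compact set $\mathcal{T}$ and hence lie in $\mathbb{L}^{2}(\mathcal{T})$, so the pairwise inner products are finite and the integrand is positive). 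Hence $\partial\mathcal{N}/\partial\mathcal{G}$ is a genuine probability density, $\mathcal{N}\ll\mathcal{G}$, and $\mathcal{N}(A)=0$ whenever $\mathcal{G}(A)=0$. Since every measurable $E'\subseteq E$ satisfies $E'\subseteq\mathcal{H}^{K}$, it is enough to prove $\mathcal{G}(\mathcal{H}^{K})=0$.

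\smallskip
For this last fact I would invoke a classical property of Gaussian measures: provided each parent kernel $C_k$ has infinite-dimensional RKHS — the standard situation for priors such as the squared-exponential or Mat\'{e}rn — a draw from the $k$th parent process almost surely fails to lie in $\mathcal{H}_k$, i.e.\ $\mathcal{G}_k(\mathcal{H}_k)=0$ \citep{van2008}. By independence of the $K$ parent processes, $\mathcal{G}(\mathcal{H}^{K})=\prod_{k=1}^{K}\mathcal{G}_k(\mathcal{H}_k)=0$; and if $\mathcal{H}_k$ is not itself Borel in the path space, one instead uses that it is contained in a Borel $\mathcal{G}_k$-null set, which still forces every measurable subset of $\mathcal{H}^{K}$ to be $\mathcal{G}$-null. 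Combined with the previous paragraph this gives $\mathcal{N}(E')=0$ for \emph{every} $\nu_\lambda>0$, and the stated limit is immediate.

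\smallskip
The proof is thus short, and I expect the genuine content of the statement to be conceptual: seeing why the clause ``$\lambda_1,\dots,\lambda_K\in\mathcal{H}$'' in the definition of $E$ cannot be dropped. Put $Q:=\sum_{j<k}\langle\lambda_j,\lambda_k\rangle^{2}$. The underlying quantitative fact is that NeMO mass leaves any region bounded away from orthogonality at an exponential rate: bound the restricted integral by $e^{-\delta/(2\nu_\lambda)}$ on $\{Q\ge\delta\}$, bound the normalising constant below by $e^{-\eta/(2\nu_\lambda)}\mathcal{G}(\{Q<\eta\})$ — which is positive because $(0,\dots,0)\in\operatorname{supp}\mathcal{G}$ (the parents are centred) and $Q$ is continuous on the path space with $Q(0,\dots,0)=0$ — and take $\eta=\delta/2$ to obtain $\mathcal{N}(\{Q\ge\delta\})\le e^{-\delta/(4\nu_\lambda)}/\mathcal{G}(\{Q<\delta/2\})\to0$. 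But the same estimate shows the RKHS clause is essential: $\{Q=0\}$ is itself $\mathcal{G}$-null, so $\mathcal{N}$ concentrates only near it, and the bound gives $\mathcal{N}(\{Q<\epsilon\})\to1$, hence $\mathcal{N}(\{0<Q<\epsilon\})\to1$ for every fixed $\epsilon>0$; since $\{0<Q<\epsilon\}$ is a measurable subset of the path-space analogue of $E$, the conclusion would fail without the restriction to $\mathcal{H}^{K}$. The RKHS constraint is exactly what excludes such sets and upgrades the exponential escape to the clean all-or-nothing statement.
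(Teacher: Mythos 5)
Your main argument is not the paper's, and under the reading the paper actually operates with it does not go through. The paper's proof is elementary and direct: for every tuple in $E'$ some $\langle\lambda_j,\lambda_k\rangle\neq 0$, so the penalty factor $\exp\{-\frac{1}{2\nu_\lambda}\sum_{k}\sum_{j<k}\langle\lambda_j,\lambda_k\rangle^2\}$ tends to $0$ pointwise on $E'$ as $\nu_\lambda\to 0$, and since this factor is bounded by $1$, the bounded convergence theorem gives $\int_{E'}\exp\{-\frac{1}{2\nu_\lambda}\sum_k\sum_{j<k}\langle\lambda_j,\lambda_k\rangle^2\}\,d\mathcal{G}\to 0$; no null-set considerations enter at all. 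You instead declare the proposition trivial by reading $\mathcal{H}$ literally as the product RKHS, invoking $\mathcal{G}_k(\mathcal{H}_k)=0$, and concluding $\mathcal{N}(E')=0$ identically in $\nu_\lambda$. But the paper manifestly does not use $\mathcal{H}$ that way: its proof of Proposition \ref{propNemo2} integrates over $\mathcal{H}$ against $\mathcal{G}$ and explicitly sets $\mathcal{G}(\mathcal{H})=1$, i.e.\ $\mathcal{H}$ is (with some abuse of the RKHS terminology) the space carrying the parent draws. Under that operative reading your central step collapses: $E$ is not contained in a $\mathcal{G}$-null set — for non-degenerate parents such as the squared exponential, $\langle\lambda_1,\lambda_2\rangle$ given $\lambda_1\neq 0$ is Gaussian with positive variance, so $\mathcal{G}(E)=1$ — and your reading would render Propositions \ref{propNemo1}--\ref{propNemo3} vacuous (every set in sight would have measure zero for all $\nu_\lambda$), which is plainly not their intended content. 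It also silently needs each RKHS to be infinite dimensional, which the proposition does not assume.

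The irony is that your closing remark contains the argument that actually carries the substance. Writing $Q:=\sum_k\sum_{j<k}\langle\lambda_j,\lambda_k\rangle^2$, your bound $\mathcal{N}(\{Q\geq\delta\})\leq e^{-\delta/(4\nu_\lambda)}/\mathcal{G}(\{Q<\delta/2\})$, with the normalizing constant bounded below via the support of $\mathcal{G}$, is precisely the quantitative mechanism behind Proposition \ref{propNemo3}, and it is in fact more careful than the paper's own proof of Proposition \ref{propNemo1}, whose ``$\propto$'' hides that the normalizing constant depends on $\nu_\lambda$ and diverges whenever $\mathcal{G}(\{Q=0\})=0$. Had you promoted that estimate to the proof — showing $\mathcal{N}(E'_\omega)\to 0$ for sets bounded away from orthogonality, or otherwise controlling the ratio rather than only the numerator — you would have a valid and sharper argument for the meaningful content of the statement. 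As submitted, however, the actual proof rests on the vacuous RKHS reading, and the quantitative estimate is used only to argue that the path-space version of the claim fails on sets such as $\{0<Q<\epsilon\}$; so you neither reproduce the paper's pointwise-decay-plus-bounded-convergence argument nor supply a valid substitute for the proposition as the paper understands and uses it. That mismatch — proving a different (trivial) statement in place of the intended one — is the gap to fix, ideally flagged explicitly as an issue with how $\mathcal{H}$ and the normalizing constant are handled rather than folded into a triviality proof.
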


\begin{proposition}\label{propNemo2}
For any measurable set $A\subseteq \mathcal{H}$, $\underset{\nu_\lambda\rightarrow \infty}{\lim}\mathcal{N}(A) = \mathcal{G}(A)$.
\end{proposition}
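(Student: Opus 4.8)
The plan is to make the implicit normalizing constant in \eqref{eq:jointPrior} explicit and then reduce everything to two applications of the dominated convergence theorem. Write
\[
  h_\nu(\lambda_1,\dots,\lambda_K) := \exp\!\Big(-\tfrac{1}{2\nu}\textstyle\sum_{k=1}^K\sum_{j<k}\langle\lambda_j,\lambda_k\rangle^2\Big),
  \qquad Z(\nu):=\int h_\nu\, d\mathcal{G},
\]
so that $\partial\mathcal{N}/\partial\mathcal{G}=h_\nu/Z(\nu)$ with $\nu=\nu_\lambda$, and hence $\mathcal{N}(A)=Z(\nu)^{-1}\int_A h_\nu\, d\mathcal{G}$ for every measurable $A$. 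Since $0<h_\nu\le 1$ pointwise, $Z(\nu)\in(0,1]$, so $\mathcal{N}$ is a well-defined probability measure and this ratio makes sense for every $\nu_\lambda>0$.

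Next I would record the pointwise limit $h_\nu\to 1$ as $\nu\to\infty$, valid $\mathcal{G}$-almost surely. This needs the penalty functional $\sum_{k}\sum_{j<k}\langle\lambda_j,\lambda_k\rangle^2$ to be finite $\mathcal{G}$-a.s., which follows from the standing assumptions on the parent Gaussian processes: provided each kernel is integrable on the diagonal (automatic, e.g., for continuous $C_k$ on compact $\mathcal{T}$), Tonelli gives $\mathbb{E}_\mathcal{G}\|\lambda_k\|^2=\int_{\mathcal{T}}C_k(s,s)\,ds<\infty$, so $\|\lambda_k\|<\infty$ a.s., and Cauchy--Schwarz bounds each $|\langle\lambda_j,\lambda_k\rangle|$ by $\|\lambda_j\|\,\|\lambda_k\|<\infty$ a.s. On this full-$\mathcal{G}$-measure event the exponent tends to $0$, hence $h_\nu\to 1$.

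Finally, to pass from a sequential statement to the limit along the continuous parameter, fix a measurable $A$ and an arbitrary sequence $\nu_n\to\infty$. The constant $1$ is $\mathcal{G}$-integrable and dominates each $h_{\nu_n}\mathbf{1}_A$, so dominated convergence gives $\int_A h_{\nu_n}\,d\mathcal{G}\to\mathcal{G}(A)$; specializing to $A$ equal to the whole space gives $Z(\nu_n)\to 1$. Therefore $\mathcal{N}(A)=Z(\nu_n)^{-1}\int_A h_{\nu_n}\,d\mathcal{G}\to\mathcal{G}(A)$ along every such sequence, i.e.\ $\lim_{\nu_\lambda\to\infty}\mathcal{N}(A)=\mathcal{G}(A)$. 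The only genuinely delicate step is the a.s.\ finiteness of the penalty functional --- equivalently, that the parent-process sample paths lie in $\mathbb{L}^2(\mathcal{T})$ with probability one --- because without it the limiting integrand would equal $1$ only on a proper subset and $Z(\nu_n)$ would fail to converge to $1$; the remainder is bookkeeping around the hidden normalizer together with the two routine DCT applications.
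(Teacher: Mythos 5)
Your proposal is correct and follows essentially the same route as the paper's proof: the pointwise limit $h_{\nu}\to 1$, the bounded/dominated convergence theorem using the bound $h_{\nu}\le 1$, and the observation that the normalizing constant tends to $\mathcal{G}(\mathcal{H})^{-1}=1$. Your additional care in verifying that the penalty functional is $\mathcal{G}$-almost surely finite (via $\mathbb{E}_{\mathcal{G}}\|\lambda_k\|^2=\int_{\mathcal{T}}C_k(s,s)\,ds<\infty$ and Cauchy--Schwarz) is a worthwhile refinement that the paper leaves implicit, but it does not change the underlying argument.
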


For measurable subsets of non-mutually orthogonal functions, $E'$, $\mathcal{N}$ has measure zero, as $\nu_\lambda\rightarrow0$. On the other hand, for any measurable set, $A$, $\mathcal{N}$ is equal in measure to $\mathcal{G}$, as $\nu_\lambda \rightarrow \infty$. Consequently, for fixed $\nu_\lambda >0$, $\mathcal{N}$ balances Gaussian process-like behavior and mutual orthogonality. As a result, $\small{\mbox{NeMO}}$ processes inherit properties from the covariance functions used to define the parent Gaussian processes, such as magnitude, smoothness, and periodicity. The role of these covariance functions is further discussed in Section \ref{sec:cov}.

Proofs for Propositions \ref{propNemo3}, \ref{propNemo1}, and \ref{propNemo2} are presented in Appendix 1. 

\subsection{Conditional Distribution of $\lambda_k$ given $\{\lambda_j\}_{j\neq k}$} \label{sec:priorLimit}

Assigning  functional factor loadings a $\small{\mbox{NeMO}}$ process prior, $\{\lambda_1,\ldots,\lambda_K\} \sim \mathcal{N}$, we develop a broad class of methods for functional factor analysis in Section \ref{sec:fpca}. A concern is that posterior computation may be difficult due to the analytically intractable normalizing constant of the measure $\mathcal{N}$ defined in Equation \eqref{eq:jointPrior}. However, we bypass this problem by exploiting a simple form for the conditional distribution of $\lambda_k$ given $\{\lambda_j\}_{j\neq k}$ under the joint prior 
$\{\lambda_1,\ldots,\lambda_K\} \sim \mathcal{N}.$
This conditional distribution is stated in Proposition \ref{propNemocond1}. 

\begin{proposition}\label{propNemocond1}
Given $\{\lambda_j\}_{j\neq k}$, $\lambda_k$ is a zero-mean Gaussian process with covariance
\begin{align}
        & C^{\nu_\lambda}_k(s,t) = C_k(s,t) - h_{\Lambda_{(-k)}}(s)^\top\{\nu_\lambda I_{K-1} + H_{\Lambda_{(-k)}}\}^{-1}h_{\Lambda_{(-k)}}(t),\text{ where} \label{eq:relaxedcov} \\
    & h_{\Lambda_{(-k)}}(t) = \int_{\mathcal{T}} C_k(s,t)\Lambda_{(-k)}(s)ds,\ H_{\Lambda_{(-k)}} = \int_{\mathcal{T}}\int_{\mathcal{T}} C_k(s,s')\Lambda_{(-k)}(s)\Lambda_{(-k)}(s')^\top ds ds' \label{eq:integrals}
\end{align}
with $\Lambda_{(-k)}(s) = \{\lambda_1(s),\ldots,\lambda_{k-1}(s),\lambda_{k+1}(s),\ldots,\lambda_K(s)\}^{\top}$.
\end{proposition}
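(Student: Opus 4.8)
The plan is to reduce the claim to a finite‑dimensional Gaussian conditioning/tilting computation and then read off the covariance via the Woodbury identity. First, under $\mathcal{G}$ the processes $\lambda_1,\dots,\lambda_K$ are independent, so the $\mathcal{G}$‑conditional law of $\lambda_k$ given $\{\lambda_j\}_{j\neq k}$ is exactly the zero‑mean Gaussian process with kernel $C_k$. In the double sum in \eqref{eq:jointPrior}, the only terms involving $\lambda_k$ are $\sum_{j\neq k}\langle\lambda_j,\lambda_k\rangle^2$; the rest depend only on $\{\lambda_j\}_{j\neq k}$ and are absorbed into the conditional normalizing constant. Hence, writing $\ell_j(\lambda):=\langle\lambda_j,\lambda\rangle$ for the fixed square‑integrable functions $\lambda_j$, $j\neq k$ — a bounded linear functional of $\lambda\in\L^2(\mathcal T)$ — the $\mathcal{N}$‑conditional law $\mathcal{N}_k$ of $\lambda_k$ has Radon–Nikodym derivative with respect to the $C_k$‑Gaussian measure proportional to $\exp\big(-\tfrac{1}{2\nu_\lambda}\sum_{j\neq k}\ell_j(\lambda)^2\big)$, which is a well‑defined probability measure since this factor is bounded by $1$. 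Note also that $\nu_\lambda I_{K-1}+H_{\Lambda_{(-k)}}$ is invertible, since $H_{\Lambda_{(-k)}}$ is positive semidefinite and $\nu_\lambda>0$.

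Next I would recognize $\mathcal{N}_k$ as the posterior in a conjugate Gaussian model: take $\lambda_k\sim\mathrm{GP}(0,C_k)$ and, independently for $j\neq k$, let $y_j=\ell_j(\lambda_k)+\varepsilon_j$ with $\varepsilon_j\sim N(0,\nu_\lambda)$; conditioning on the event $\{y_j=0,\ j\neq k\}$ reproduces exactly the density above. By standard Gaussian‑process regression with linear‑functional observations, the posterior is again a zero‑mean Gaussian process — the mean vanishes because the pseudo‑observations are zero — with covariance equal to the prior covariance minus the usual ``explained'' term. The relevant finite‑dimensional quantities are the prior cross‑covariance between $\lambda_k(t)$ and the observation vector, $\mathrm{Cov}\big(\lambda_k(t),(\ell_j(\lambda_k))_{j\neq k}\big)=\big(\int_{\mathcal T}C_k(s,t)\lambda_j(s)\,ds\big)_{j\neq k}=h_{\Lambda_{(-k)}}(t)$, and the prior covariance of the observation vector, $\mathrm{Cov}\big((\ell_i(\lambda_k))_{i\neq k},(\ell_j(\lambda_k))_{j\neq k}\big)=\int_{\mathcal T}\!\!\int_{\mathcal T} C_k(s,s')\Lambda_{(-k)}(s)\Lambda_{(-k)}(s')^\top ds\,ds'=H_{\Lambda_{(-k)}}$, to which the noise covariance $\nu_\lambda I_{K-1}$ is added. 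Substituting into the posterior‑covariance formula yields \eqref{eq:relaxedcov}.

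For a self‑contained argument that does not appeal to a GP‑regression black box, I would instead verify the claim through the characteristic functional. Using the identity $\exp(-\tfrac{1}{2\nu_\lambda}v^2)=\mathbb{E}_{z}[\exp(i z v)]$ with $z\sim N(0,\nu_\lambda^{-1})$, applied coordinatewise, the density factor becomes $\mathbb{E}_{z\in\R^{K-1}}\big[\exp\big(i z^\top(\ell_j(\lambda))_{j\neq k}\big)\big]$ with $z\sim N(0,\nu_\lambda^{-1}I_{K-1})$. For a test function $\phi$, I would compute $\mathbb{E}_{\mathcal{N}_k}[\exp(i\langle\phi,\lambda\rangle)]$ by interchanging the expectations over $\lambda$ (under the $C_k$‑Gaussian law) and over $z$, evaluating the inner Gaussian characteristic functional at the linear functional $\langle\phi,\cdot\rangle+z^\top(\ell_j(\cdot))_{j\neq k}$ to obtain $\exp\big(-\tfrac12[\langle\phi,C_k\phi\rangle+2z^\top b(\phi)+z^\top H_{\Lambda_{(-k)}}z]\big)$ with $b(\phi)=\int_{\mathcal T} h_{\Lambda_{(-k)}}(t)\phi(t)\,dt$, and then performing the remaining finite‑dimensional Gaussian integral over $z$ by completing the square. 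After normalizing this gives $\exp\big(-\tfrac12\big[\langle\phi,C_k\phi\rangle-b(\phi)^\top(\nu_\lambda I_{K-1}+H_{\Lambda_{(-k)}})^{-1}b(\phi)\big]\big)$, which is the characteristic functional of a centered Gaussian process with covariance kernel precisely \eqref{eq:relaxedcov}. The main obstacle is not the algebra — which is just completing the square, equivalently the Woodbury identity — but the bookkeeping needed to carry the finite‑dimensional argument rigorously into the function space: checking that the $\ell_j$ are measurable and square‑integrable on the support, justifying Fubini for the interchange of expectations, and noting that equality of characteristic functionals determines the Gaussian measure and that the limiting kernel is genuinely nonnegative definite.
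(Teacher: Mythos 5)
Your proposal is correct, and it shares the paper's overall skeleton --- factor the joint density to get that the conditional law $\mathcal{N}_k$ of $\lambda_k$ given $\{\lambda_j\}_{j\neq k}$ has density proportional to $\exp\{-\tfrac{1}{2\nu_\lambda}\sum_{j\neq k}\langle\lambda_j,\lambda_k\rangle^2\}$ with respect to the parent Gaussian measure $\mathcal{G}_k$, then identify the conditional characteristic functional with that of a centered Gaussian process with kernel $C_k^{\nu_\lambda}$ --- but the way you compute that characteristic functional is genuinely different. The paper discretizes the inner products by Riemann sums on a grid of size $m$, evaluates finite-dimensional Gaussian integrals with explicit determinants, interchanges limits and expectations by bounded convergence, and finally applies Woodbury's identity to pass from $\{C_k^{-1}+\tfrac{1}{\nu_\lambda}\Lambda_{(-k)}\Lambda_{(-k)}^\top\}^{-1}$ (in its discretized form) to the stated kernel. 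Your route (b) instead linearizes the quadratic tilting with an auxiliary variable $z\sim N(0,\nu_\lambda^{-1}I_{K-1})$ via $\exp(-v^2/2\nu_\lambda)=\mathbb{E}_z\{\exp(izv)\}$, evaluates the $\mathcal{G}_k$-characteristic functional directly at the bounded linear functional $\langle\phi+\Lambda_{(-k)}^\top z,\cdot\rangle$, and is left with only a $(K-1)$-dimensional Gaussian integral in $z$; completing the square there plays the role Woodbury plays in the paper. This avoids the discretization and limit-interchange bookkeeping entirely (the only regularity needed is $\lambda_j\in\mathbb{L}^2(\mathcal{T})$ so that the $\ell_j$ are bounded functionals, plus Fubini for an integrand of modulus at most one), at the cost of being slightly less elementary than working with finite-dimensional densities. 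Your route (a) --- reading $\mathcal{N}_k$ as the posterior of $\lambda_k\sim\mbox{GP}(0,C_k)$ given noisy pseudo-observations $\langle\lambda_j,\lambda_k\rangle+\varepsilon_j=0$ with $\varepsilon_j\sim N(0,\nu_\lambda)$ --- is the same computation packaged as GP regression with linear-functional observations; it is the least self-contained but the most illuminating, since it makes Proposition 2.5 (the $\nu_\lambda\to 0$ limit to the orthogonal Gaussian process of Plumlee and Joseph) transparent as the noiseless limit. Both of your routes deliver the zero mean and the kernel \eqref{eq:relaxedcov} correctly, and the loose ends you flag (measurability and square-integrability of the conditioning paths, Fubini, and that equality of characteristic functionals identifies the Gaussian law) are exactly the right ones and are routine to close.
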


Proposition \ref{propNemocond1} implies that the finite-dimensional conditional distribution of $\lambda_k$ evaluated at a fixed set of points, given $\{\lambda_j\}_{j\neq k}$, is multivariate Gaussian, which is conditionally conjugate to the Gaussian likelihood of discretely observed functional data assumed in the model in Section \ref{sec:fpca}. Conditional conjugacy enables posterior inference through a computationally simple Gibbs sampler, outlined in Section \ref{sec:computation}. The behavior of $\lambda_k\mid \{\lambda_j\}_{j\neq k} \sim \mbox{GP}\{0,C^{\nu_\lambda}_k(\cdot.\cdot)\}$ is governed by the covariance function $C^{\nu_\lambda}_k(s,t) = \text{cov}\{\lambda_k(s),\lambda_k(t)\mid \{\lambda_j\}_{j\neq k}\}$, illustrated in Section \ref{sec:cov}. 

Conditionally, $\small{\mbox{NeMO}}$ processes are related to the orthogonal Gaussian processes of \cite{plumlee2018}, introduced in the context of the design and analysis of computer experiments. Their goal was to develop a non-parametric model for a single function $f(t)$, imposing $f(t)$ is orthogonal to a pre-specified set of functions $g(t) = \{g_1(t),\ldots,g_K(t)\}^\top$, which typically have some physical interpretability. To this end, the authors define orthogonal Gaussian processes through an orthogonal covariance function, 
$    C^\perp_{g}(s,t) = C(s,t) - h_{g}(s)^\top H_{g}^{-1}h_{g}(t).$ 
The authors show $f\sim \mbox{GP}\{0,C^\perp_g(\cdot,\cdot)\}$ is orthogonal to each $g_1(t),\ldots,g_K(t)$ with probability 1. The relationship between $\small{\mbox{NeMO}}$ processes and orthogonal Gaussian processes is stated in Proposition \ref{propNemocond2}.  

\begin{proposition}\label{propNemocond2}
Viewing $\Lambda_{(-k)}=\{\lambda_j \}_{j \neq k}$ as fixed, let $\lambda_k^\perp\mid \{\lambda_j\}_{j\neq k} \sim \mbox{GP}\{0,C^\perp_{\Lambda_{(-k)}}\}$ be drawn from a zero-mean Gaussian process orthogonal to $\Lambda_{(-k)}$. Also, let $\{\lambda_1,\ldots,\lambda_K\} \sim \mathcal{N}$ with $\mathcal{N}$ defined in 
\eqref{eq:jointPrior}. Then, 
    $\lambda_k\mid \{\lambda_j\}_{j\neq k}$ converges in distribution to   $\lambda_k^\perp\mid \{\lambda_j\}_{j\neq k}$ as $\nu_\lambda\rightarrow 0$.
\end{proposition}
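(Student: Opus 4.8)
The plan is to reduce the statement to convergence of covariance structures. By Proposition~\ref{propNemocond1}, conditionally on $\Lambda_{(-k)}=\{\lambda_j\}_{j\neq k}$ the law of $\lambda_k$ is the centered Gaussian process with covariance $C^{\nu_\lambda}_k$ of \eqref{eq:relaxedcov}, and by construction the target $\lambda_k^\perp$ is the centered Gaussian process with covariance $C^\perp_{\Lambda_{(-k)}}=C_k-h_{\Lambda_{(-k)}}^\top H_{\Lambda_{(-k)}}^{-1}h_{\Lambda_{(-k)}}$. Both families are centered Gaussian, so their laws are pinned down by the covariance functions alone; thus it suffices to show $C^{\nu_\lambda}_k\to C^\perp_{\Lambda_{(-k)}}$ as $\nu_\lambda\to0$ and then argue that this drives the appropriate notion of convergence in distribution. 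Throughout I restrict to the (\,$\mathcal{G}$-almost sure\,) event that the Gram-type matrix $H_{\Lambda_{(-k)}}$ in \eqref{eq:integrals} is positive definite; this is precisely the regularity condition under which $C^\perp_{\Lambda_{(-k)}}$, and hence $\lambda_k^\perp$, is well defined in the sense of \cite{plumlee2018}, and it holds because the coordinates of $\Lambda_{(-k)}$ are $\mathcal{G}$-a.s.\ linearly independent in $\mathcal{H}$.

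On that event, $\nu_\lambda I_{K-1}+H_{\Lambda_{(-k)}}$ is positive definite for every $\nu_\lambda\ge 0$, matrix inversion is continuous on the cone of positive definite matrices, and $h_{\Lambda_{(-k)}}(s),h_{\Lambda_{(-k)}}(t)$ do not depend on $\nu_\lambda$; hence
\[
C^{\nu_\lambda}_k(s,t)=C_k(s,t)-h_{\Lambda_{(-k)}}(s)^\top\{\nu_\lambda I_{K-1}+H_{\Lambda_{(-k)}}\}^{-1}h_{\Lambda_{(-k)}}(t)\ \longrightarrow\ C^\perp_{\Lambda_{(-k)}}(s,t)
\]
for every $(s,t)\in\mathcal{T}\times\mathcal{T}$ (and uniformly on $\mathcal{T}\times\mathcal{T}$ when $\mathcal{T}$ is compact and $C_k$ is continuous, since then $t\mapsto h_{\Lambda_{(-k)}}(t)$ is bounded). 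From this, convergence of all finite-dimensional distributions is immediate: for any $t_1,\dots,t_n\in\mathcal{T}$ the covariance matrices $\{C^{\nu_\lambda}_k(t_a,t_b)\}_{a,b}$ converge entrywise to $\{C^\perp_{\Lambda_{(-k)}}(t_a,t_b)\}_{a,b}$, so the corresponding centered multivariate Gaussians converge weakly, e.g.\ via pointwise convergence of characteristic functions.

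To upgrade finite-dimensional convergence to convergence in distribution of the processes, I would use two structural features of the correction term $R^{\nu_\lambda}(s,t):=h_{\Lambda_{(-k)}}(s)^\top\{\nu_\lambda I_{K-1}+H_{\Lambda_{(-k)}}\}^{-1}h_{\Lambda_{(-k)}}(t)$. First, $R^{\nu_\lambda}$ is a positive semidefinite kernel of rank at most $K-1$, and its covariance operator depends continuously (hence, being finite rank, continuously in trace norm) on $\{\nu_\lambda I_{K-1}+H_{\Lambda_{(-k)}}\}^{-1}$; therefore the covariance operators of $C^{\nu_\lambda}_k=C_k-R^{\nu_\lambda}$ converge in trace norm to that of $C^\perp_{\Lambda_{(-k)}}$, which is exactly what is needed for weak convergence of the centered Gaussian laws on $\mathbb{L}^2(\mathcal{T})$. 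Second, positive semidefiniteness of $R^{\nu_\lambda}$ gives that the kernel $C_k-C^{\nu_\lambda}_k$ is positive semidefinite, so for a process $\lambda_k^{\nu_\lambda}$ with covariance $C^{\nu_\lambda}_k$,
\[
\mathbb{E}\,\big|\lambda_k^{\nu_\lambda}(s)-\lambda_k^{\nu_\lambda}(t)\big|^2=C^{\nu_\lambda}_k(s,s)-2C^{\nu_\lambda}_k(s,t)+C^{\nu_\lambda}_k(t,t)\ \le\ C_k(s,s)-2C_k(s,t)+C_k(t,t),
\]
because the corresponding increment of the positive semidefinite kernel $R^{\nu_\lambda}$ is nonnegative. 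Thus the increments of the whole family $\{\lambda_k^{\nu_\lambda}\}_{\nu_\lambda>0}$ are dominated, uniformly in $\nu_\lambda$, by those of the parent Gaussian process, and under the smoothness hypotheses on $C_k$ that already place the parent process (and $\lambda_k^\perp$) in $C(\mathcal{T})$, a standard Gaussian tightness criterion (Kolmogorov--Chentsov / metric-entropy bound applied uniformly) gives tightness of $\{\lambda_k^{\nu_\lambda}\}$ in $C(\mathcal{T})$. Finite-dimensional convergence together with tightness then yields $\lambda_k^{\nu_\lambda}\Rightarrow\lambda_k^\perp$.

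I expect the main obstacle to be this last upgrade rather than the covariance computation: the pointwise limit of $C^{\nu_\lambda}_k$ falls out of continuity of matrix inversion, but making ``converges in distribution'' precise requires either the trace-norm/covariance-operator argument on $\mathbb{L}^2(\mathcal{T})$ or the tightness argument on $C(\mathcal{T})$, and one must also be careful to state the regularity hypothesis on $C_k$ under which the parent process lives in $C(\mathcal{T})$ and to work on the almost-sure event $\{H_{\Lambda_{(-k)}}\succ 0\}$ on which $\lambda_k^\perp$ is defined. If the intended reading of the statement is convergence of all finite-dimensional distributions, the proof can stop after the covariance-convergence step.
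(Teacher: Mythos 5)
Your proposal is correct and follows essentially the same route as the paper: both arguments rest on the conditional covariance $C_k^{\nu_\lambda}$ from Proposition \ref{propNemocond1} together with the continuity of matrix inversion, $\{\nu_\lambda I_{K-1}+H_{\Lambda_{(-k)}}\}^{-1}\to H_{\Lambda_{(-k)}}^{-1}$ as $\nu_\lambda\to 0$, under the assumption that $H_{\Lambda_{(-k)}}$ is positive definite. The only difference is presentational: the paper takes the limit inside the conditional characteristic functional $\exp\{-\tfrac{1}{2}\int_{\mathcal{T}}\int_{\mathcal{T}}C_k^{\nu_\lambda}(s,s')U(s)U(s')\,ds\,ds'\}$ and stops there, while you phrase it through convergence of the covariance kernel and finite-dimensional distributions and then add the trace-norm/tightness upgrade to a process-level limit, which is more care about the mode of convergence than the published proof itself supplies.
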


While an orthogonal Gaussian process is a prior for a single function orthogonal to a fixed set of functions, $\small{\mbox{NeMO}}$ processes provide a prior for a collection of nearly mutually orthogonal functions. \cite{plumlee2018} restrict attention to orthogonal Gaussian processes with continuous sample paths. Proposition \ref{propNemocond3} follows under an analogous assumption that the parent Gaussian processes used to define $\small{\mbox{NeMO}}$ processes have continuous sample paths. This verifies the desired property that for any $j \neq k$ the inner product between $\lambda_j$ and $\lambda_k$ converges in probability to zero.

\begin{proposition}\label{propNemocond3}
Assuming $\{\lambda_1,\ldots,\lambda_K\} \sim \mathcal{N}$ with $\mathcal{N}$ defined in 
\eqref{eq:jointPrior}, 
    $\langle \lambda_j,\lambda_k\rangle\mid \{\lambda_j\}_{j\neq k}$ converges in probability to $0$ for any $j\neq k$ as $\nu_\lambda\rightarrow 0$.
\end{proposition}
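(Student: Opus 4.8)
The plan is to work directly from the explicit conditional law supplied by Proposition~\ref{propNemocond1}: given $\{\lambda_j\}_{j\neq k}$, the function $\lambda_k$ is a zero-mean Gaussian process with covariance $C^{\nu_\lambda}_k$. Since the fixed $\lambda_j$ ($j\neq k$) belongs to the conditioning $\sigma$-algebra, the target quantity $\langle\lambda_j,\lambda_k\rangle=\int_{\mathcal T}\lambda_j(s)\lambda_k(s)\,ds$ is, conditionally, a bounded linear functional of this Gaussian process, hence a univariate Gaussian random variable with mean $0$. So it suffices to show that its conditional variance tends to $0$ as $\nu_\lambda\to0$ and then invoke Chebyshev's inequality; in fact we will obtain a bound on the variance that is uniform in the conditioning value, so that the convergence in probability is itself uniform.

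First I would write, via Fubini's theorem, $\mathrm{Var}\big(\langle\lambda_j,\lambda_k\rangle\mid\{\lambda_\ell\}_{\ell\neq k}\big)=\int_{\mathcal T}\int_{\mathcal T}\lambda_j(s)\lambda_j(t)\,C^{\nu_\lambda}_k(s,t)\,ds\,dt$. The interchange of expectation and double integral is the one place the hypothesis is needed: under the assumption that the parent processes have continuous sample paths, $\lambda_j$ is a.s. continuous on the compact set $\mathcal T$ (and remains so under $\mathcal N$ by absolute continuity with respect to $\mathcal G$), while $C^{\nu_\lambda}_k$ is bounded, which also guarantees $\langle\lambda_j,\lambda_k\rangle$ is a.s. finite and genuinely Gaussian.

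Next I would substitute $C^{\nu_\lambda}_k(s,t)=C_k(s,t)-h_{\Lambda_{(-k)}}(s)^\top\{\nu_\lambda I_{K-1}+H_{\Lambda_{(-k)}}\}^{-1}h_{\Lambda_{(-k)}}(t)$ and simplify with two identities that follow directly from the definitions in \eqref{eq:integrals}. Writing $e$ for the standard basis vector of $\mathbb R^{K-1}$ indexing the slot of $\lambda_j$ inside $\Lambda_{(-k)}$, one has $\int_{\mathcal T}\lambda_j(s)C_k(s,t)\,ds=e^\top h_{\Lambda_{(-k)}}(t)$ and $\int_{\mathcal T}\lambda_j(s)h_{\Lambda_{(-k)}}(s)\,ds=H_{\Lambda_{(-k)}}e$. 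Carrying these through collapses the double integral to $e^\top\big\{H_{\Lambda_{(-k)}}-H_{\Lambda_{(-k)}}(\nu_\lambda I_{K-1}+H_{\Lambda_{(-k)}})^{-1}H_{\Lambda_{(-k)}}\big\}e$, and the algebraic identity $H-H(\nu I+H)^{-1}H=\nu H(\nu I+H)^{-1}$ yields $\mathrm{Var}\big(\langle\lambda_j,\lambda_k\rangle\mid\{\lambda_\ell\}_{\ell\neq k}\big)=\nu_\lambda\,e^\top H_{\Lambda_{(-k)}}\{\nu_\lambda I_{K-1}+H_{\Lambda_{(-k)}}\}^{-1}e$. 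To finish, I would note that $H_{\Lambda_{(-k)}}$ is symmetric positive semidefinite (for any $v$, $v^\top H_{\Lambda_{(-k)}}v=\int_{\mathcal T}\int_{\mathcal T}C_k(s,s')g(s)g(s')\,ds\,ds'\geq0$ with $g=v^\top\Lambda_{(-k)}$, since $C_k$ is a covariance kernel), so by its spectral decomposition the matrix $H_{\Lambda_{(-k)}}\{\nu_\lambda I_{K-1}+H_{\Lambda_{(-k)}}\}^{-1}$ has eigenvalues $\mu_i/(\nu_\lambda+\mu_i)\in[0,1)$, whence $e^\top H_{\Lambda_{(-k)}}\{\nu_\lambda I_{K-1}+H_{\Lambda_{(-k)}}\}^{-1}e\leq\|e\|^2=1$. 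Thus $\mathrm{Var}\big(\langle\lambda_j,\lambda_k\rangle\mid\{\lambda_\ell\}_{\ell\neq k}\big)\leq\nu_\lambda$, and Chebyshev's inequality gives $\mathcal N\big(|\langle\lambda_j,\lambda_k\rangle|>\epsilon\mid\{\lambda_\ell\}_{\ell\neq k}\big)\leq\nu_\lambda/\epsilon^2\to0$.

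The main obstacle is mostly bookkeeping: tracking the coordinate of $\lambda_j$ within $\Lambda_{(-k)}$ through the integral identities, and carefully justifying the Fubini step together with the a.s.\ existence and Gaussianity of $\langle\lambda_j,\lambda_k\rangle$ from the continuous-paths assumption. A softer alternative would be to deduce the statement from Proposition~\ref{propNemocond2}: since $\lambda_k\mid\{\lambda_\ell\}_{\ell\neq k}$ converges in distribution to $\lambda_k^\perp$, which satisfies $\langle\lambda_j,\lambda_k^\perp\rangle=0$ a.s., one could try a continuous-mapping argument for $\lambda_k\mapsto\langle\lambda_j,\lambda_k\rangle$; but making that rigorous on function space requires essentially the same integrability control, so the direct variance computation is cleaner and, as a bonus, exhibits the explicit $O(\nu_\lambda)$ decay of the conditional second moment.
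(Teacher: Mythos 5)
Your proposal is correct and follows essentially the same route as the paper: condition on $\{\lambda_\ell\}_{\ell\neq k}$, use the covariance $C^{\nu_\lambda}_k$ from Proposition~\ref{propNemocond1}, justify the interchange of expectation and integration via continuous sample paths, reduce the conditional second moment of $\langle\lambda_j,\lambda_k\rangle$ to the quadratic form $e^\top\{H_{\Lambda_{(-k)}}-H_{\Lambda_{(-k)}}(\nu_\lambda I_{K-1}+H_{\Lambda_{(-k)}})^{-1}H_{\Lambda_{(-k)}}\}e$, and conclude by Markov/Chebyshev. Your final algebraic step is a mild refinement: the identity $H-H(\nu_\lambda I+H)^{-1}H=\nu_\lambda H(\nu_\lambda I+H)^{-1}$ yields the explicit bound $\nu_\lambda$ under mere positive semidefiniteness, whereas the paper passes to the limit via $(\nu_\lambda I+H)^{-1}\rightarrow H^{-1}$, implicitly assuming $H_{\Lambda_{(-k)}}$ positive definite.
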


Proofs for Propositions \ref{propNemocond1}, \ref{propNemocond2}, and \ref{propNemocond3} are presented in Appendix 1. 

\subsection{Role of Covariance Parameters}\label{sec:cov}

\begin{figure}[t!]
\begin{center}
 \begin{tabular}{c}
\includegraphics[width = 1.5 in]{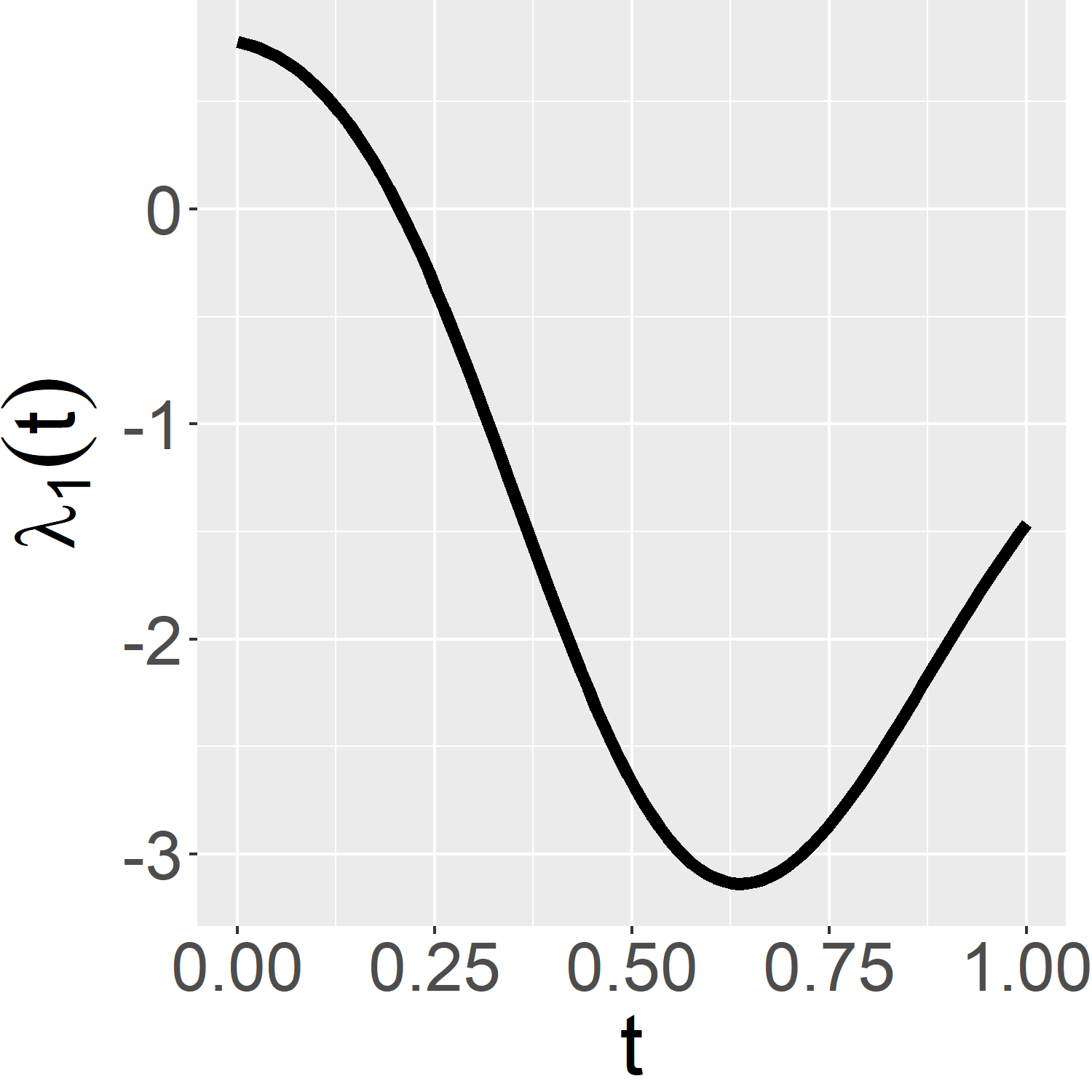} \\
(a)  \\

\includegraphics[width =  1.5in]{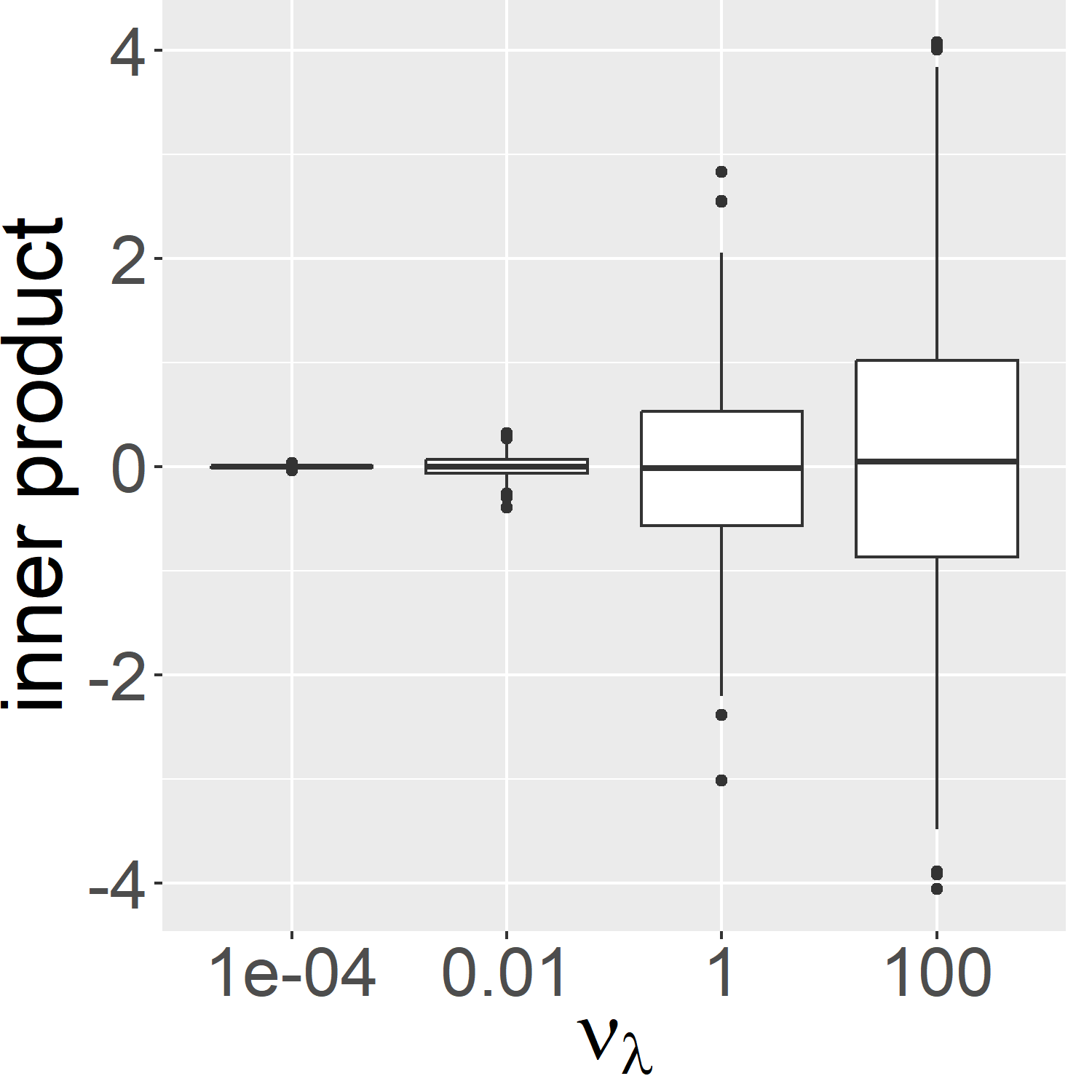}\\
(f)  \\
\end{tabular}
 \begin{tabular}{|cc}
\includegraphics[width = 1.5in]{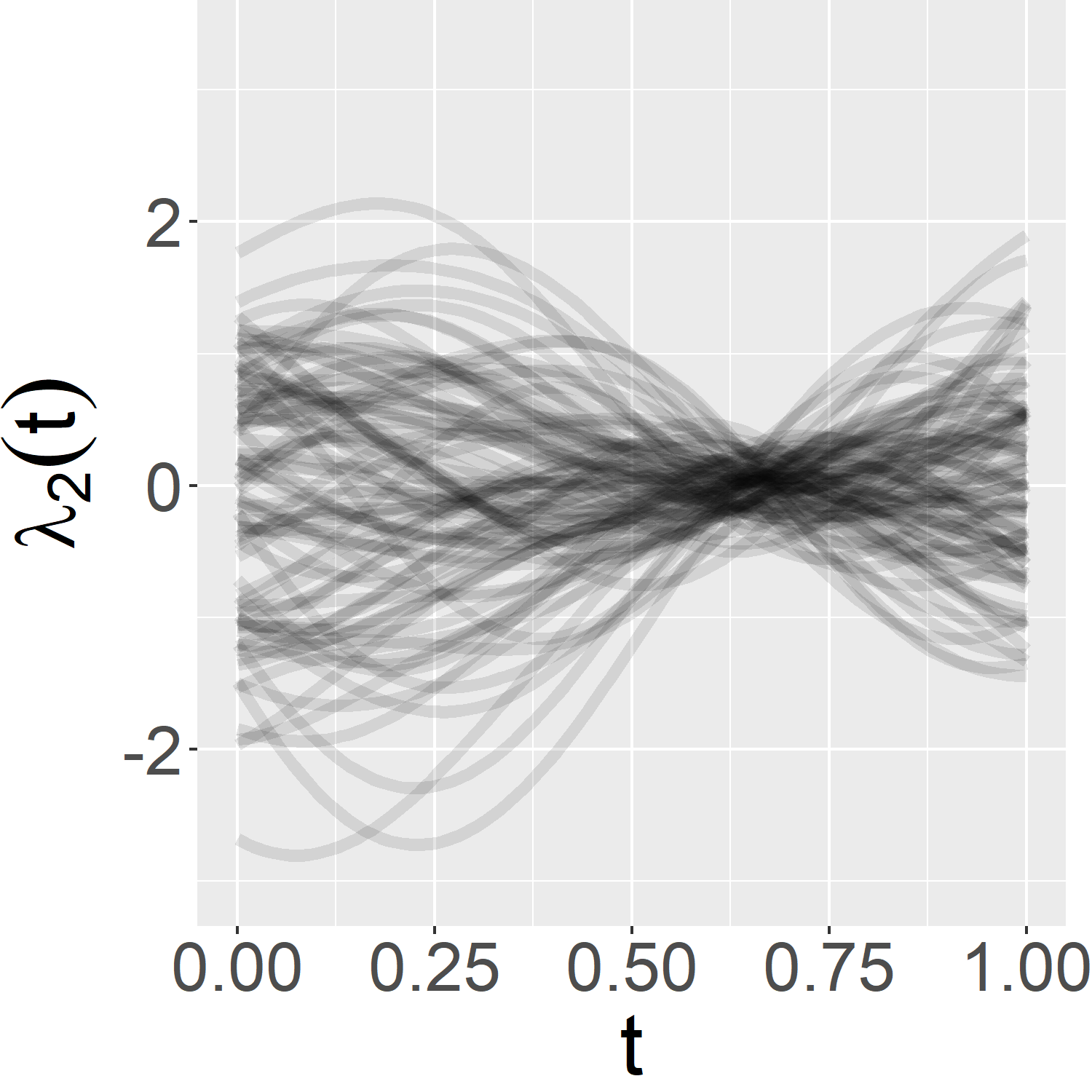} & \includegraphics[width = 1.5in]{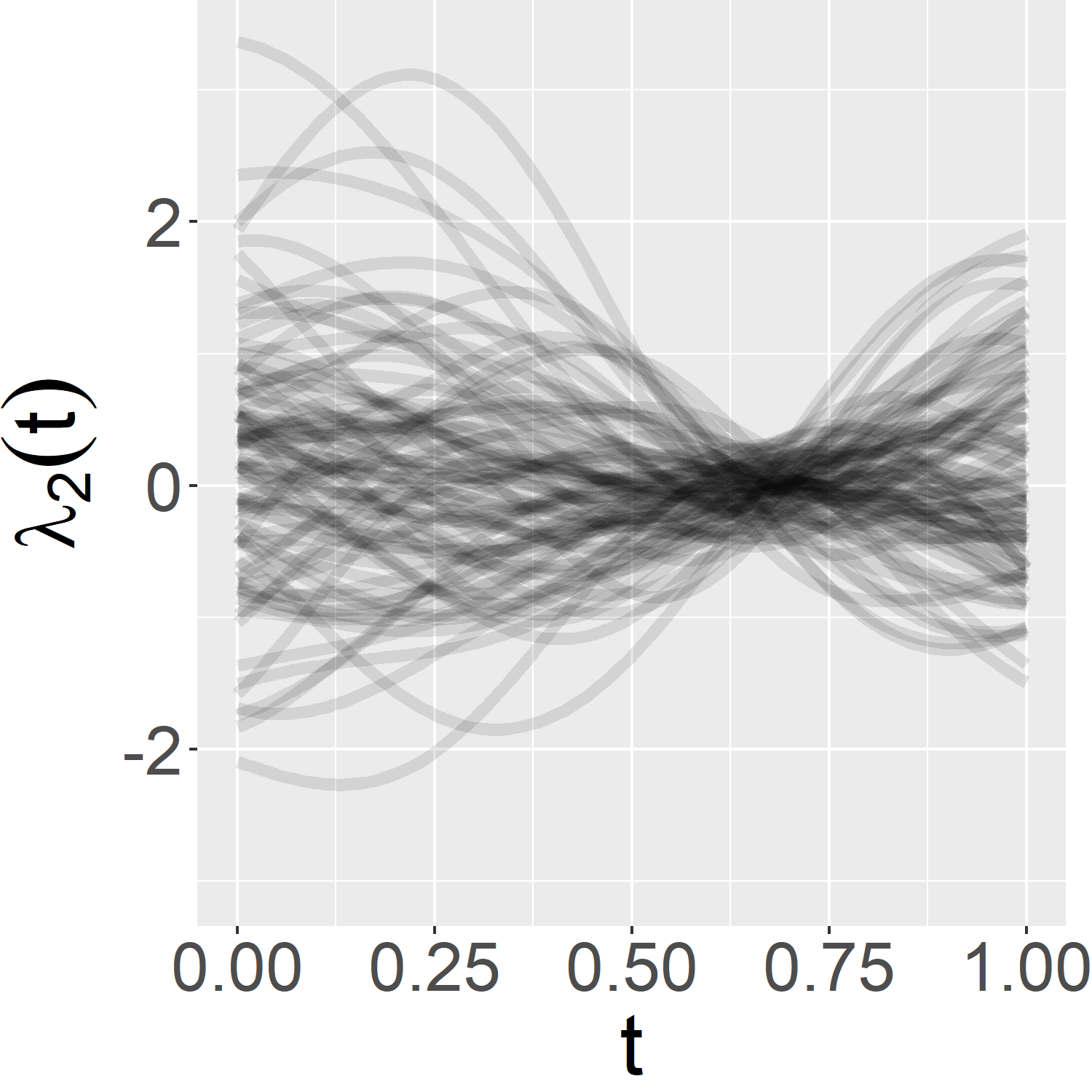} \\
(b) & (c) \\
\includegraphics[width = 1.5in]{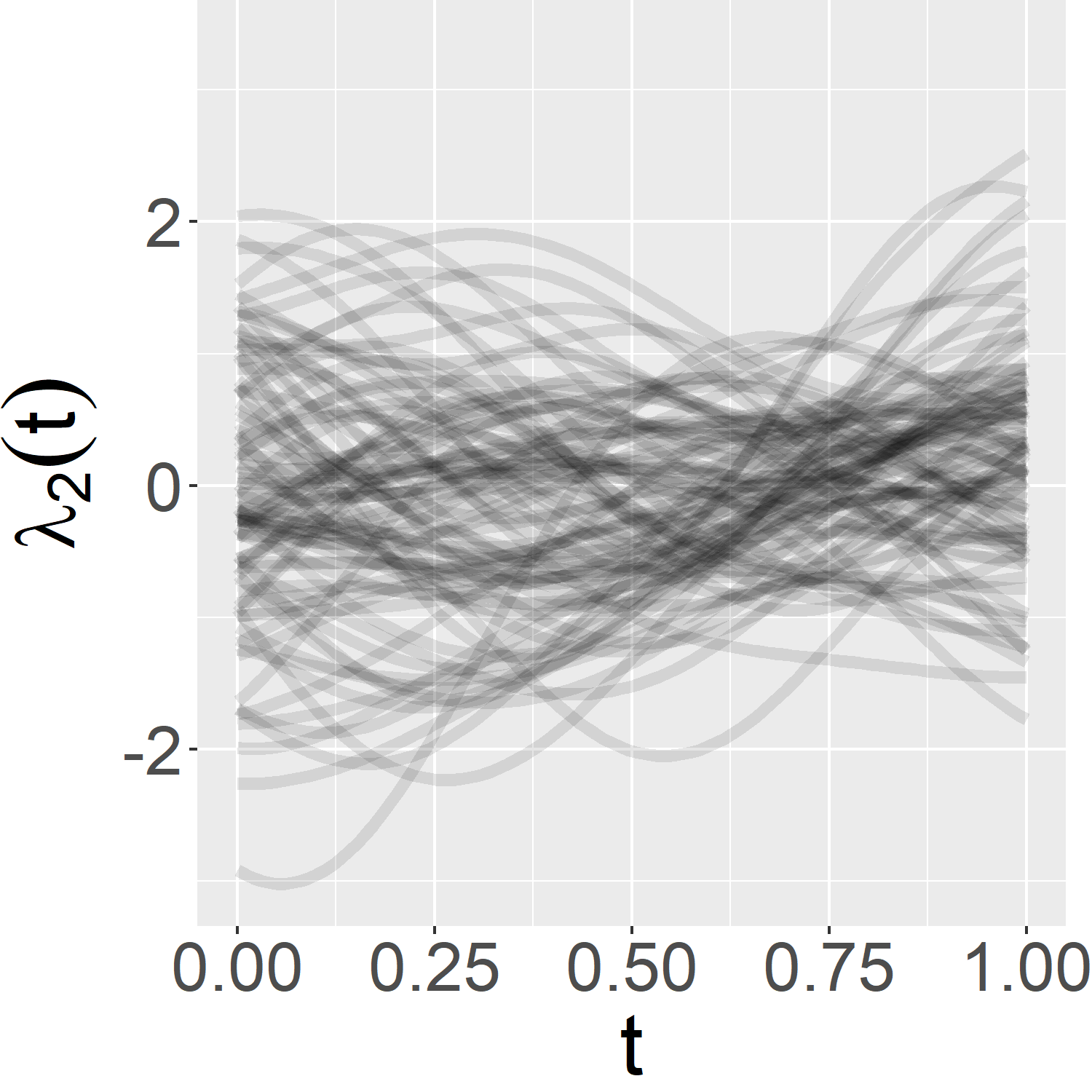} & \includegraphics[width = 1.5in]{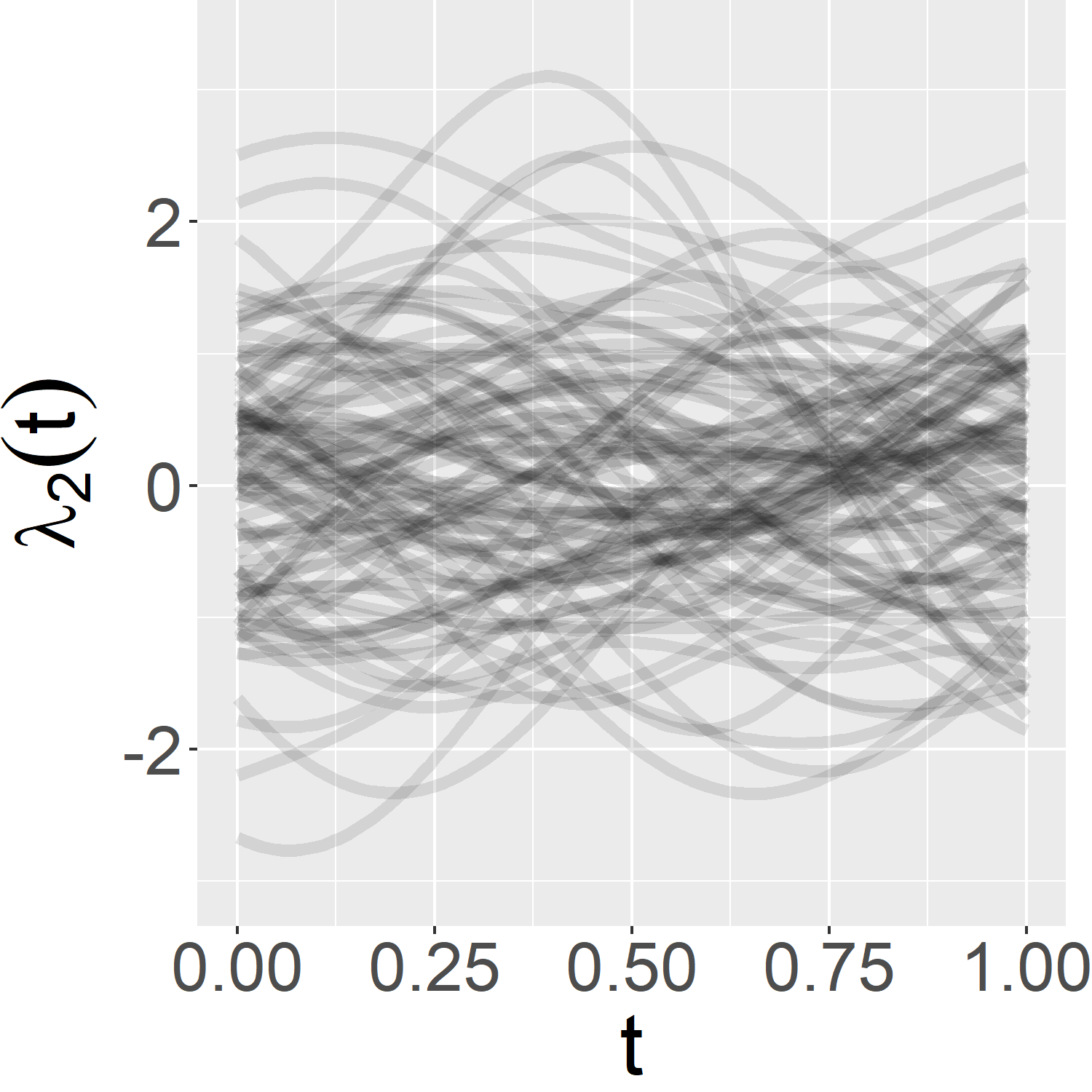}\\
 (d) & (e) \\
\end{tabular}

\caption{(a) Value of $\lambda_1$ on which $\lambda_2$ is conditioned. Realizations of $\lambda_2\mid \lambda_1 \sim \mbox{GP}\{0,C_2^{\nu_\lambda}(\cdot,\cdot)\}$ with  $l_2^2 = .1, \tau^2_2 = 1$ fixed, while (b) $\nu_\lambda = 0.0001$, (c) $\nu_\lambda = 0.01$, (d) $\nu_\lambda = 1$, (e) $\nu_\lambda = 100$. (f) Boxplots of 1000 realizations of $\langle\lambda_1,\lambda_2\rangle$, with $\lambda_2\mid \lambda_1 \sim \mbox{GP}\{0,C_2^{\nu_\lambda}(\cdot,\cdot)\}$ with fixed $l^2_2 = 0.1,\ \tau^2_2 = 1$ and $\nu_\lambda = 0.0001,0.01,1,100$. }\label{fig:simu_inner_prod}
    \end{center}
\end{figure}

While the joint distribution defined in Equation \eqref{eq:jointPrior} is well defined for any choice of covariance kernels $C_k(\cdot,\cdot),\  1,\ldots,K$, we illustrate the model based on the squared exponential: 
\begin{equation}\label{eq:sqExpKernel}
    C_k(s,t)= \tau_k^2\exp\bigg\{-\frac{1}{2l^2_k}(s - t)^2\bigg\},\quad s,t\in\mathcal{T}\times\mathcal{T}.
\end{equation}
In a typical Gaussian process, the scale parameters $\tau_k^2$ determine the spread of realized processes about the mean, while the length-scale parameters $l_k$ determine the smoothness of the realizations. 

For a simple setting when $K=2$, we illustrate the conditional prior of Proposition \ref{propNemocond1} when $\lambda_1$ is sampled from a Gaussian process, shown in panel (a) of Figure \ref{fig:simu_inner_prod}, and $C_2$ is defined according to Equation \eqref{eq:sqExpKernel}. Panels (b)-(e) show realizations from the prior with $l_2^2 = .1, \tau^2_2 = 1$ fixed, and $\nu_\lambda = 0.0001,0.01,1,\ 100$ varying. As discussed in Section \ref{sec:priorLimit}, $\nu_\lambda$ has an important role in enforcing orthogonality. As $\nu_\lambda$ increases, the realizations have typical Gaussian process behavior. However, as $\nu_\lambda$ decreases, the realizations become close to orthogonal to $\lambda_1$. When $\nu_\lambda$ is small, the variance shrinks around $t \approx .65$. This is due to the prior's need to offset the extrema of $\lambda_1$ to enforce orthogonality. Panel (f) shows a boxplot of the inner product between $\lambda_1$ and 1000 realizations of $\lambda_2\mid \lambda_1$ for different values of the penalty parameter $\nu_\lambda$. As expected, varying $\nu_\lambda$ has a drastic effect on the inner product between $\lambda_1$ and the realizations of $\lambda_2\mid \lambda_1$. The inner product is negligible when $\nu = 0.0001$ compared to when $\nu = 100$.

As discussed in Appendix 2, parameters $l_2^2$ and $\tau_2^2$ retain their meaning from the typical Gaussian process setting  of governing the smoothness and spread of the prior, respectively. Changing $l_2^2$ and $\tau_2^2$ with $\nu_\lambda$ fixed has negligible impact on the inner product between the two functions. In our implementations, we infer $l_k^2$ and $\tau_k^2$, while fixing $\nu_\lambda$ as a small value. 

\section{Functional Data Analysis using Nearly Mutually Orthogonal Processes}\label{sec:fpca}

\subsection{Observation Model}\label{sec:obsModel}
Letting $y_i:\mathcal{T}\rightarrow \mathbb{R},\ i = 1,\ldots,n$ denote functional observations, we assume 
\begin{equation}\label{eq:fpca}
    y_i(t) = \mu(t) + \{\lambda_1(t),\ldots,\lambda_K(t)\}^\top\eta_i + \epsilon_i(t),\quad i = 1,\ldots,n.
\end{equation}  
where $\mu(t)$ is a mean process common to all observations, $\lambda_1(t), \ldots, \lambda_K(t)$ are functional factor loadings, $\eta_i\in \Re^K$ are latent factors, and $\epsilon_i(t)$ is a subject-specific error term. Near mutual orthogonality is imposed on $\lambda_1(t), \ldots, \lambda_K(t)$, and the factors $\eta_i\in\Re^K$ are constrained to have mean close to zero across observations as detailed in Section \ref{sec:priorFormulation}.

With sparse and irregular observations, each $y_i$ is observed on an observation-specific grid $\vv{t}_i = (t_{i,1},\ldots,t_{i,m_i})^\top$. Equation \eqref{eq:fpca} can be altered to accommodate this case, 
\begin{equation}\label{eq:fpca_sparse}
    y_i(\vv{t}_i) = O_i\mu(t) + O_i\{\lambda_1(\vv{t}_i),\ldots,\lambda_K(\vv{t}_i)\}^\top\eta_i + \epsilon_i(\vv{t}_i),\quad i = 1,\ldots,n,
\end{equation}
with $\vv{t}$ a common grid of size $m$ merging 
all observation-specific grid points. Use $f(\vv{t}) = \{f(t_1),\ldots,f(t_m)\}^\top$ to denote a vector of function evaluations. The matrix $O_i\in\{0,1\}^{m_i \times m}$ relates the common and observation-specific grids, so that $O_if(t) = f(\vv{t}_i)$. The $j$\textsuperscript{th} row of $O_i$ is a row vector of $0$'s with a $1$ in the index corresponding to the $t_{i,j}$ element of 
$\vv{t}$.

We assume independent Gaussian white noise error processes, $\epsilon_i \sim \mbox{GP}\{0,C_\epsilon(\cdot,\cdot)\}$. For $s,t \in \mathcal{T}$, $C_\epsilon(s,t) = \sigma^2\delta(s-t)$, where $\delta(s - t)$ is the Dirac delta function. For the discretized version of Equation \eqref{eq:fpca_sparse}, we obtain a Gaussian likelihood,
\begin{equation}
    y_i\mid \mu(\vv{t}),\lambda_1(\vv{t}),\ldots,\lambda_K(\vv{t}),\eta_i \sim N_{m_i}\big[ O_i\mu(\vv{t}) + O_i\{\lambda_1(\vv{t}),\ldots,\lambda_K(\vv{t})\}^\top\eta_i,\sigma^2 I_{m_i}\big]. \nonumber
\end{equation}

\subsection{Prior Specification}\label{sec:priorFormulation}

We assume a Gaussian process prior for the mean in Equation \eqref{eq:fpca}, $\mu \sim \mbox{GP}\{0,C_\mu(\cdot,\cdot)\}$, with  
$C_\mu(\cdot,\cdot)$ squared exponential with 
smoothness $l_\mu$ and scale $\tau^2_\mu$. For the error variance, we assume $(\sigma^2)^{-1}\sim\text{gamma}(\alpha_\sigma,\beta_\sigma)$.
We further assume $\{\lambda_1,\ldots,\lambda_K\} \sim \mathcal{N}$. We provide a discussion of the role of $\nu_\lambda$ for posterior inference in Section \ref{sec:nu_sensitivity} for a simulated dataset. While $\nu_\lambda$ should be chosen to be small enough so that the loadings are nearly mutually orthogonal, extremely small values can lead to slow Markov chain Monte Carlo (MCMC) mixing. The value $\nu_\lambda = 1\times 10^{-4}$ reflects a reasonable balance in our experience, and is implemented in Section \ref{sec:cebu}. 

Latent factors are often constrained to have zero mean \citep{ramsay2005,ferraty2006}. Let $\eta_{\cdot k} = (\eta_{1,k},\ldots,\eta_{n,k})^\top$ denote the vector of the $k$\textsuperscript{th} factors for all observations. For equation \eqref{eq:fpca}, we impose a relaxed version of the sum-to-zero constraint, 
    $\sum_{i = 1}^n \eta_{i,k} = \eta_{\cdot k}^\top 1_n = 0,$ 
    for $k=1,\ldots,K$
via a prior that favors $\eta_{\cdot k}$ values with small sums,
\begin{equation}\label{eq:etaPrior}
    \pi(\eta_{\cdot k}) \propto \exp\bigg(-\frac{1}{2}\eta_{\cdot k}^\top\eta_{\cdot k}\bigg)\exp\bigg\{-\frac{1}{2\nu_\eta}(1_n^\top\eta_{\cdot k})^2\bigg\}, 
\end{equation}
independently across $k$. The first term in \eqref{eq:etaPrior} is the kernel of a multivariate Gaussian distribution with identity covariance, while the second term enforces the relaxed sum-to-zero constraint.  The sum-to-zero constraint for the latent factors is specified so that the mean process is able to be identified. The prior allows for us to infer the mean and does not require mean centering prior to data analysis.

Through rearranging the terms in \eqref{eq:etaPrior}, the prior can be recognized as a zero-mean multivariate Gaussian distribution with covariance,
\begin{equation}
    \text{var}(\eta_{\cdot k}) = \bigg(I_n + \frac{1}{\nu_\eta}1_n1_n^\top\bigg)^{-1}  = I_n - \frac{1}{\nu_\eta + n}1_n1_n^\top. \nonumber
\end{equation}
The matrix inversion is computed via Woodbury's formula \citep{harville1998}. The limiting distribution of $\eta_{\cdot k}$, as $\nu_\eta \rightarrow 0$, is stated in Proposition \ref{prop3}. 
\begin{proposition}\label{prop3}
Under equation \eqref{eq:etaPrior}, $\eta_{\cdot k}$ converges in distribution to \\ $(I_n - \frac{1}{n}1_n1_n^\top)N(0,I_n)$ as $\nu_\eta \rightarrow 0$.
\end{proposition}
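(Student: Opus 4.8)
The plan is to reduce the statement to a routine fact about convergence of Gaussian laws, once the covariance has been identified. First I would recall from the discussion immediately preceding the proposition that \eqref{eq:etaPrior} is, up to the normalizing constant, the density of the mean-zero multivariate Gaussian $N(0,\Sigma_{\nu_\eta})$ with
\[
\Sigma_{\nu_\eta} = \Big(I_n + \tfrac{1}{\nu_\eta}1_n1_n^\top\Big)^{-1} = I_n - \tfrac{1}{\nu_\eta + n}1_n1_n^\top,
\]
the second equality being the Woodbury identity already invoked in the text. Thus $\eta_{\cdot k}$ is a mean-zero Gaussian vector for every $\nu_\eta>0$, and the entire question concerns the behaviour of its covariance as $\nu_\eta\to 0$.

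Next I would observe that $\Sigma_{\nu_\eta}\to P:=I_n - \tfrac{1}{n}1_n1_n^\top$ entrywise as $\nu_\eta\to 0$, since $(\nu_\eta+n)^{-1}\to n^{-1}$. I would then note that $P$ is symmetric and idempotent --- it is the orthogonal projection onto $\{1_n\}^\perp$ --- so that if $Z\sim N(0,I_n)$ then $PZ\sim N(0,PP^\top)=N(0,P)$. Hence the candidate limit $(I_n-\tfrac{1}{n}1_n1_n^\top)N(0,I_n)$ appearing in the statement is exactly $N(0,P)$, a (degenerate) Gaussian supported on the hyperplane orthogonal to $1_n$.

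Finally I would conclude by characteristic functions and L\'evy's continuity theorem: the characteristic function of $\eta_{\cdot k}$ under \eqref{eq:etaPrior} is $t\mapsto \exp\!\big(-\tfrac12 t^\top\Sigma_{\nu_\eta} t\big)$, which converges pointwise in $t\in\R^n$ to $\exp\!\big(-\tfrac12 t^\top P t\big)$, the characteristic function of $N(0,P)$; pointwise convergence of characteristic functions to a characteristic function yields convergence in distribution. Alternatively, one could fix $Z\sim N(0,I_n)$, note $\Sigma_{\nu_\eta}^{1/2}\to P^{1/2}=P$, and apply the continuous mapping theorem to $\Sigma_{\nu_\eta}^{1/2}Z$.

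The only point requiring care --- rather than a genuine obstacle --- is that the limit law is degenerate, so one cannot argue through convergence of Lebesgue densities; routing the argument through characteristic functions (or through the continuous mapping theorem as above) sidesteps this cleanly, and everything else is elementary matrix algebra that the text has essentially already performed.
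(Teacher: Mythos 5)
Your proposal is correct and follows essentially the same route as the paper's proof: identify the prior as $N(0,(I_n+\tfrac{1}{\nu_\eta}1_n1_n^\top)^{-1})$ via Woodbury, use the symmetric idempotency of $I_n-\tfrac{1}{n}1_n1_n^\top$ to identify the limit law as $N(0,I_n-\tfrac{1}{n}1_n1_n^\top)$, note the covariance converges, and conclude by pointwise convergence of the Gaussian characteristic functions. The only cosmetic difference is that you invoke L\'evy's continuity theorem explicitly (and offer a continuous-mapping alternative), whereas the paper simply computes the limiting characteristic function.
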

\noindent Here $(I_n - \frac{1}{n}1_n1_n^\top)$ is the matrix that projects $n$-vectors onto the subspace of vectors that sum to zero. Consequently, when $\nu_\eta$ is small, the prior enforces a close to sum-to-zero constraint. A proof for this Proposition is presented in Appendix 1. 

The limiting case for the prior of $\eta_{\cdot,k}$ defined in Equation \eqref{eq:etaPrior} is a degenerate multivariate Gaussian distribution, since the limiting covariance matrix is singular. Constraint relaxation for this term defines a prior that results in a non-degenerate multivariate Gaussian distribution in an $n-$dimensional Euclidean space. The condition number for the covariance matrix from the relaxed prior is $\frac{\nu_\eta + n}{\nu_\eta}$, which converges to $\infty$ as $\nu_\eta \rightarrow 0$ and 1 as $\nu_\eta \rightarrow \infty$. Consequently, sum-to-zero constraint relaxation in this context provides a balance between computational tractability and constraint enforcement.

\subsection{Generalized Functional Factor Analysis} \label{sec:gfpca}

The above methodology can be extended to allow non-Gaussian observations, such as binary or count data. \cite{hall2008} consider such an extension of the functional data methods in \cite{yao2005} by using latent Gaussian processes.
Bayesian \citep{vdLinde2009} and frequentist \citep{goldsmith2015,wrobel2019,zhong2021} extensions of \cite{james2000} to allow similar generality have also been proposed.

To modify \eqref{eq:fpca} to allow non-Gaussian $y_i(t)$, let 
\begin{equation}\label{eq:gfpca}
    \mathbb{E}\{ y_i(t)\mid \mu,\lambda_1,\ldots,\lambda_K,\eta_i\} = h\big[\mu(t) + \{\lambda_1(t),\ldots,\lambda_K(t)\}^\top\eta_i\big],\quad i = 1,\ldots,n,
\end{equation}
where $h(\cdot)$ is a monotone one-to-one and differentiable link function. Model \eqref{eq:gfpca} implies the following model for irregular observed data, 
\begin{equation}
    \mathbb{E}\{y_i(t_i)\mid \mu,\lambda_1,\ldots,\lambda_K,\eta_i\} = h\big[ O_i\mu(\vv{t}) + O_i\{\lambda_1(\vv{t}),\ldots,\lambda_K(\vv{t})\}^\top\eta_i\big],\quad i = 1,\ldots,n. \nonumber
\end{equation}
In the application in Section \ref{sec:cebu}, we consider binary functional data with $h(\cdot)$ the standard normal cumulative distribution function, $\Phi$.

\subsection{Notes on Computation} \label{sec:computation}

We present an adaptive Metropolis-within-Gibbs algorithm to draw $N$ samples from the joint posterior in Algorithm 1 in Appendix 3. Gibbs steps are available for the majority of parameters, and we use adaptive Metropolis steps with proposal variances that are tuned to meet a target acceptance rate \citep[Algorithm 4]{andrieu2008} for parameters that do not have conditionally conjugate full posterior distributions. In Algorithm 2 of Appendix 3, we present an extension of Algorithm 1 for the case of sparse and irregularly sampled binary functional data.

When evaluating the conditional variance $C_k^{\nu_\lambda}(\cdot,\cdot)$ in Equation \eqref{eq:relaxedcov}, we approximate integrals in Equation \ref{eq:integrals} using Riemann sums. This choice of approximation is convenient for the Gibbs sampler that is implemented, although other choices could be used in different settings. Specifically, we use the approximation $\int_{T}f(s)ds\approx \sum_{l = 1}^mw_lf(t_l)$ relying on the grid $\vv{t} = (t_1,\ldots,t_m)^\top$. The weights $w_l, l=1,\ldots,m,$ are scalars accounting for the spacing of the grid. Letting 
$W = \mbox{diag}(w_1,\ldots,w_m)$, 
we choose $w_1 = \frac{t_2 - t_1}{2},\ w_l = \frac{t_{l + 1} - t_{l-1}}{2},\ l = 2,3,\ldots,m-1, w_m = \frac{t_m - t_{m-1}}{2}$. The approximation error, $|\int_{T}f(s)ds - \sum_{l = 1}^mw_lf(t_l)|$, is bounded above by a value proportional to $\frac{1}{m^2}$. 

The full conditional posterior distribution of $\lambda_k(\vv{t})$ is given by
\begin{eqnarray} 
   \lambda_k(\vv{t})\mid - & \sim & \nonumber N\big[\mathbb{E}\{\lambda_k(\vv{t})\mid -\},\mathbb{V}\{\lambda_k(\vv{t})\mid -\}\big], \\
   \mathbb{V}\{\lambda_k(\vv{t})\mid -\} & = & \bigg\{C_k^{-1}(\vv{t},\vv{t}) + \frac{1}{\nu_\lambda}W\Lambda_{(-k)}(\vv{t})\Lambda_{(-k)}(\vv{t})^\top W + \frac{1}{\sigma^2}\sum_{i = 1}^n\eta_{i,k}^2O_i^\top O_i\bigg\}^{-1}\nonumber \\
   \mathbb{E}\{\lambda_k(\vv{t})\mid -\} & = & \nonumber \frac{1}{\sigma^2}\mathbb{V}\{\lambda_k(\vv{t})\mid -\}\sum_{i = 1}^n\bigg[\eta_{i,k}O_i^\top\Big\{y_i(\vv{t}_i) - O_i\mu(\vv{t})- O_i\sum_{j \neq k}\eta_{i,j}\lambda_j(\vv{t})\Big\}\bigg].
\end{eqnarray}
Hence, the Gibbs step for the loadings is simple. This step is easily modified to generalized functional factor analysis with binary or categorical observations by using data augmentation \citep{albert1993,polson2013bayesian}.

Code to implement functional factor analysis and generalizations on simulated data is available on GitHub: \url{https://github.com/jamesmatuk/NeMO-FFA}. This repository also contains implementations of all examples presented in this paper. The sign and label ambiguity typical of factor analysis is resolved using 
\cite{poworoznek2021}.

\subsection{Selecting the Number of Latent Factors} \label{sec:select_K}

We  take an over-fitted factor modeling approach with  latent space dimension $K_\text{max}$ providing an upper bound on the number of factors, and select a number of latent factors, $K < K_\text{max}$, based on posterior evidence.

Our approach relies on family-wise $(1 - \alpha)$\% simultaneous credible bands for factor loadings using the approach in \cite{ruppert2003}, Chapter 6.5. The simultaneous credible band for any factor loading, $\lambda_k$ is constructed using MCMC samples evaluated on a grid, $\lambda^{[iter]}(\vv{t}),\ iter= 1,\ldots,N$, as follows. First, we compute the pointwise posterior mean and standard deviation of the factor loading, $\widehat{\text{mean}}[\lambda_k(\vv{t})|-] = \frac{1}{N}\sum_{iter = 1}^N\lambda_k^{[iter]}(\vv{t})$ and $\widehat{\text{sd}}[\lambda_k(\vv{t})|-] = \sqrt{\frac{1}{N -1}\sum_{iter = 1}^N(\lambda_k^{[iter]}(\vv{t}) - \widehat{\text{mean}}[\lambda_k(\vv{t})|-] )^2}$, respectively. Based on these posterior summaries, we compute the multiplier for the credible interval, $q_{1-\alpha}$, which is the $1-\alpha$ quantile of 
$$
\max_{t \in \vv{t}}|\frac{\lambda_k^{[iter]}(\vv{t}) - \widehat{\text{mean}}[\lambda_k(\vv{t})|-]}{\widehat{\text{sd}}[\lambda_k(\vv{t})|-] }|
$$
across MCMC samples. Finally, we compute the simultaneous credible interval as 
\begin{equation}
    \widehat{\text{mean}}[\lambda_k(\vv{t})|-] \pm q_{1-\alpha}\widehat{\text{sd}}[\lambda_k(\vv{t})|-]
\end{equation}

In the application presented in Section \ref{sec:cebu}, we adopt a two-stage strategy: i) fitting an initial model to determine the number of factors whose credible intervals exclude the zero function, and ii) refitting the model for parsimony. This approach simplifies the final interpretation, though one could draw conclusions directly from the initial over-specified model, in which several factors loadings are concentrated around the zero function.

To select the number of latent factors, one could alternatively adopt frameworks from the over-fitted factor models literature \citep{bhattacharya2011,Schiavon2022}, 
in which one specifies an upper bound on the number of factors and uses carefully-defined shrinkage priors to effectively delete unnecessary factors.  $\small{\mbox{NeMO}}$ employs a similar strategy, except shrinkage of redundant factors towards zero is encouraged through the near mutual orthogonality constraint rather than through direct shrinking of the magnitude of factor loadings. Information criteria, such as Watanabe-Akaike information criteria \citep{vehtari2017}, offer an additional alternative approach, which requires refitting the model for each choice of the number of factors.

\section{Simulation examples}\label{sec:simulations}

\begin{figure}[!t]
\begin{center}
 \begin{tabular}{ccc}
\includegraphics[width = 1.5 in]{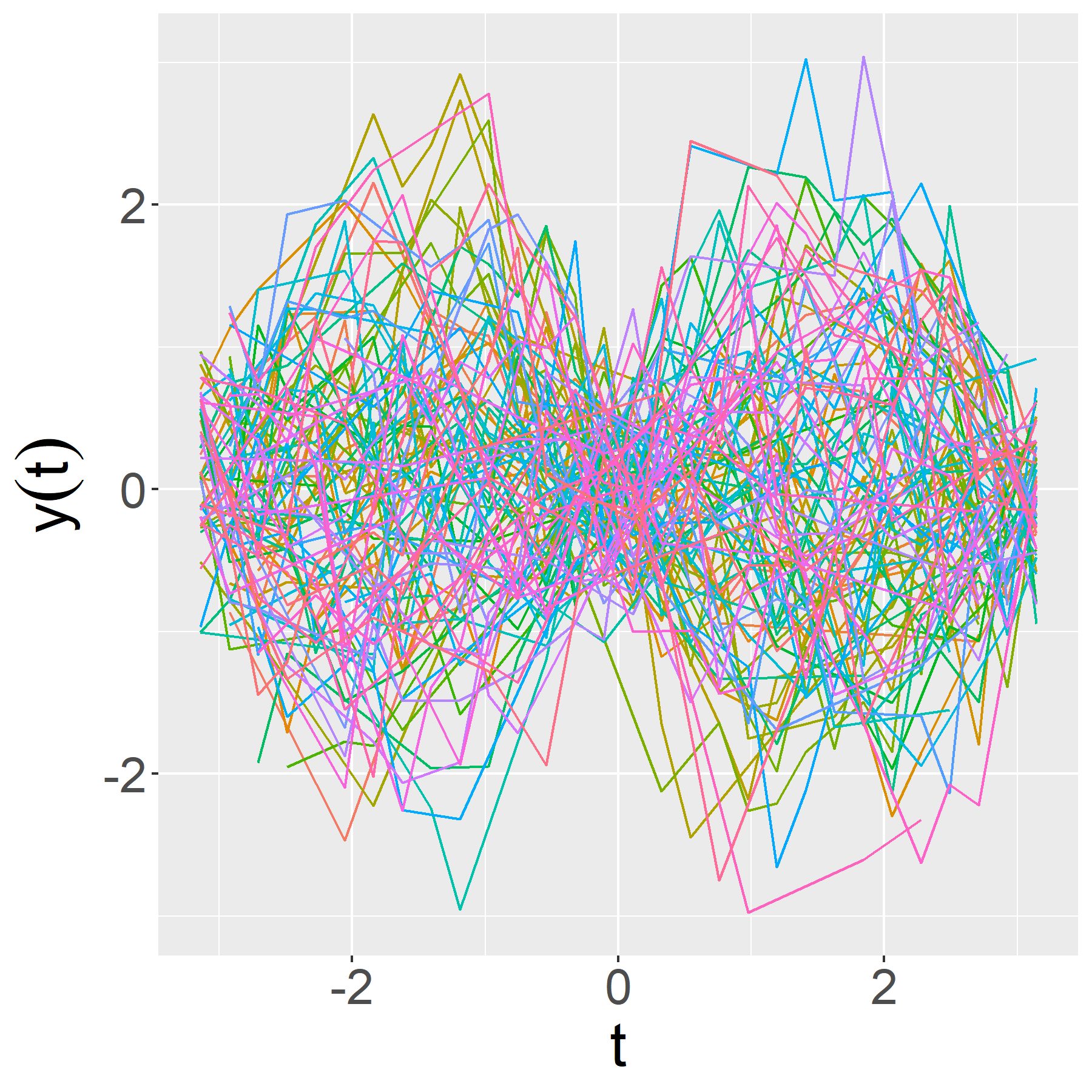} & \includegraphics[width = 1.5 in]{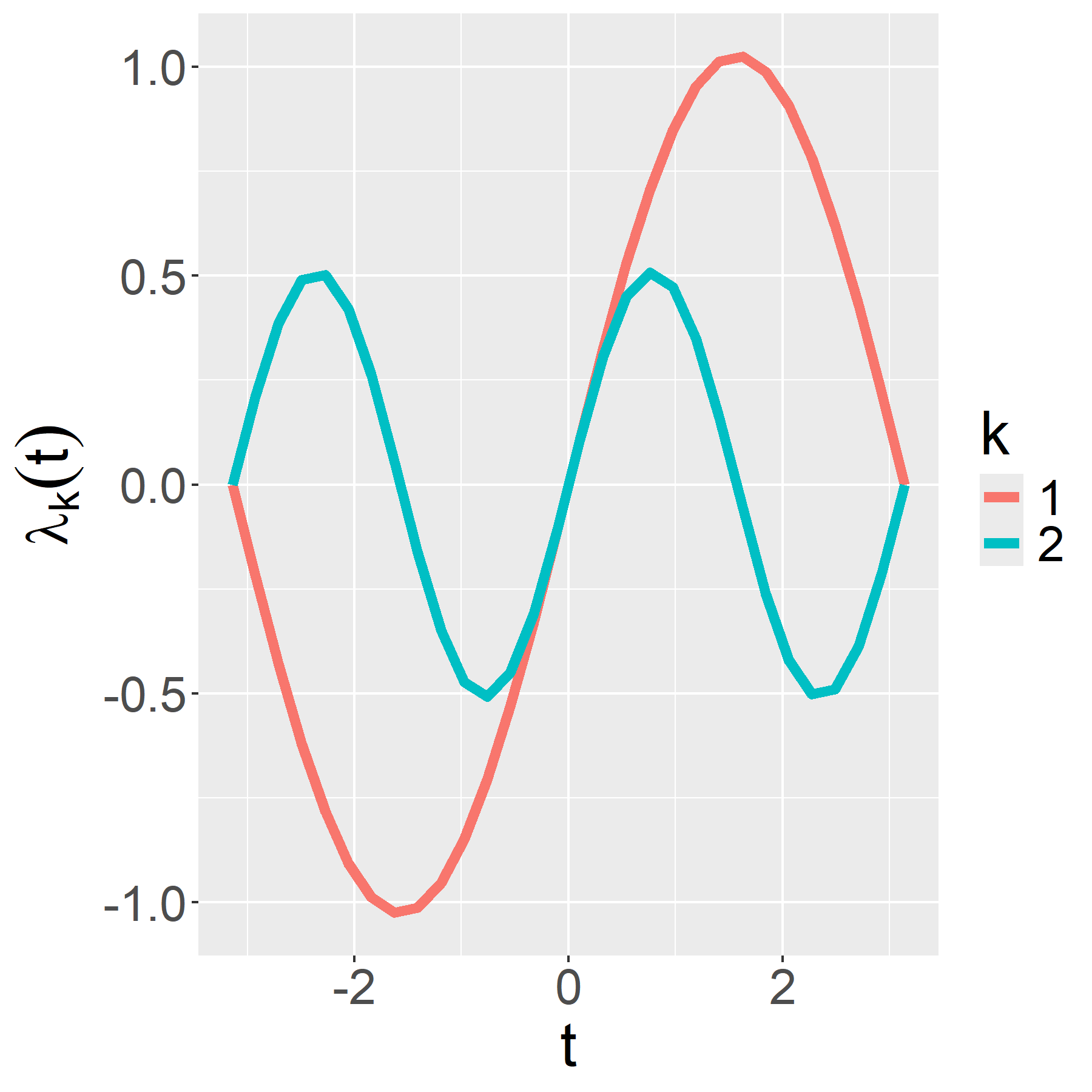}  & \includegraphics[width = 1.5 in]{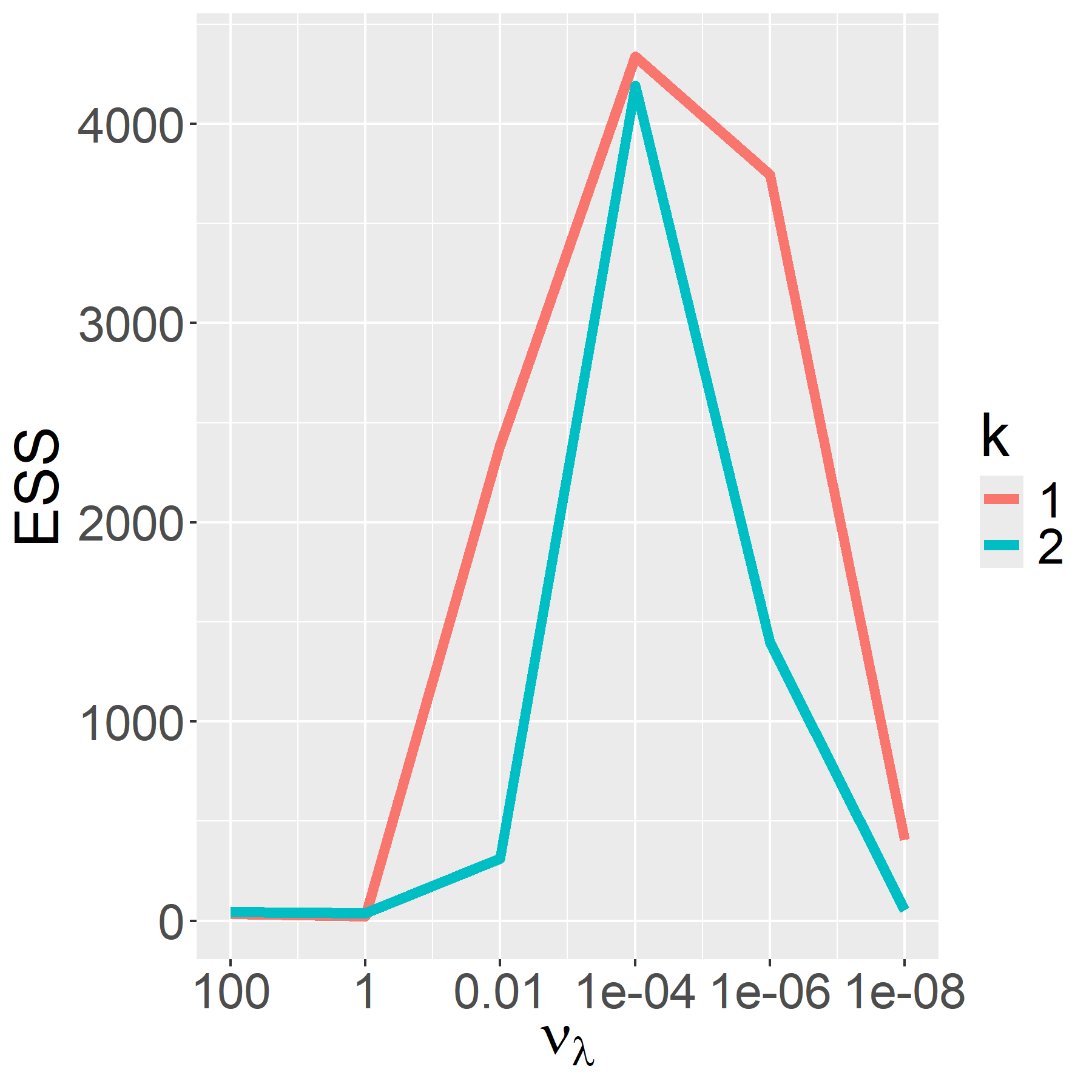}\\
(a) & (b) & (c) \\
\includegraphics[width = 1.5 in]{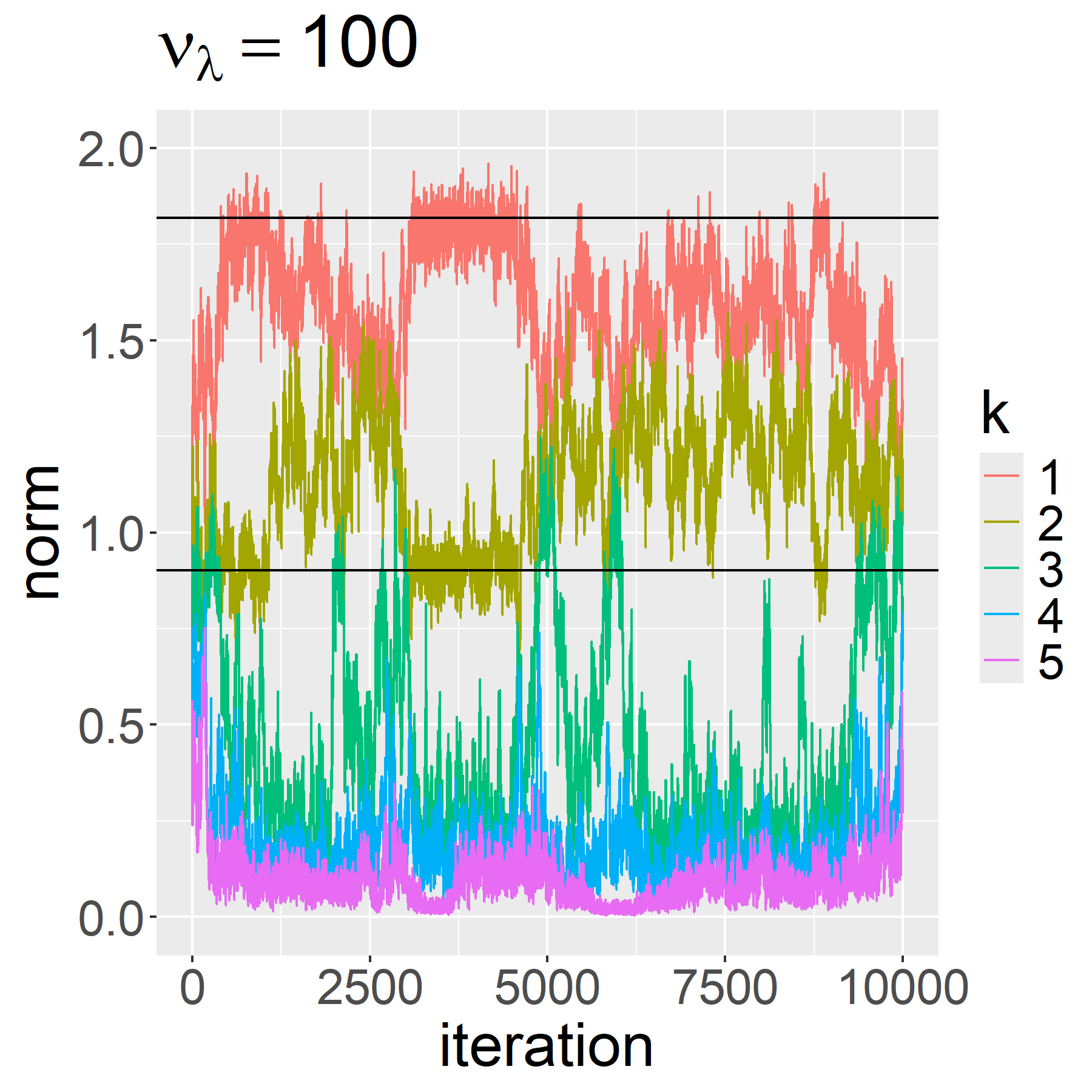} & \includegraphics[width = 1.5 in]{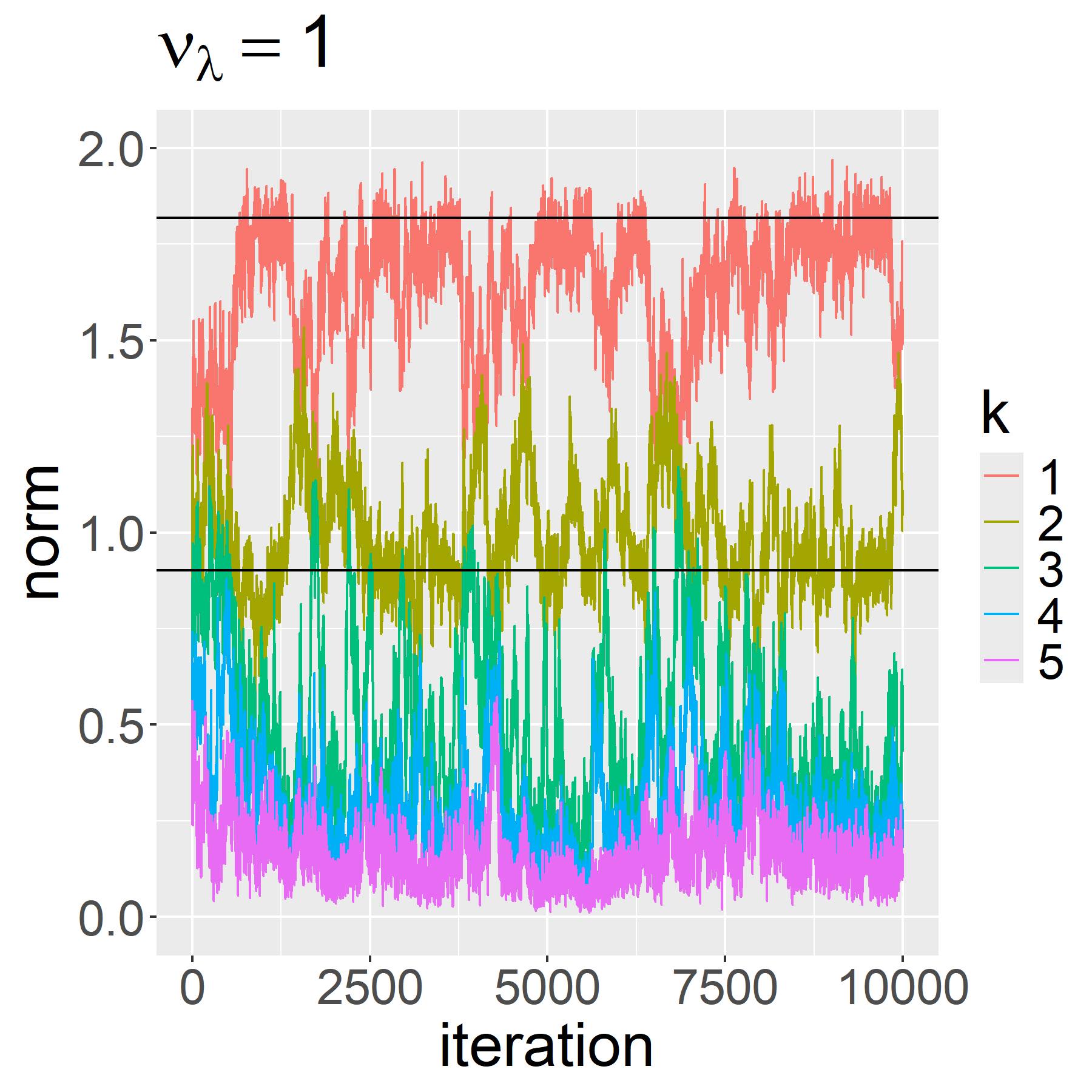}  & \includegraphics[width = 1.5 in]{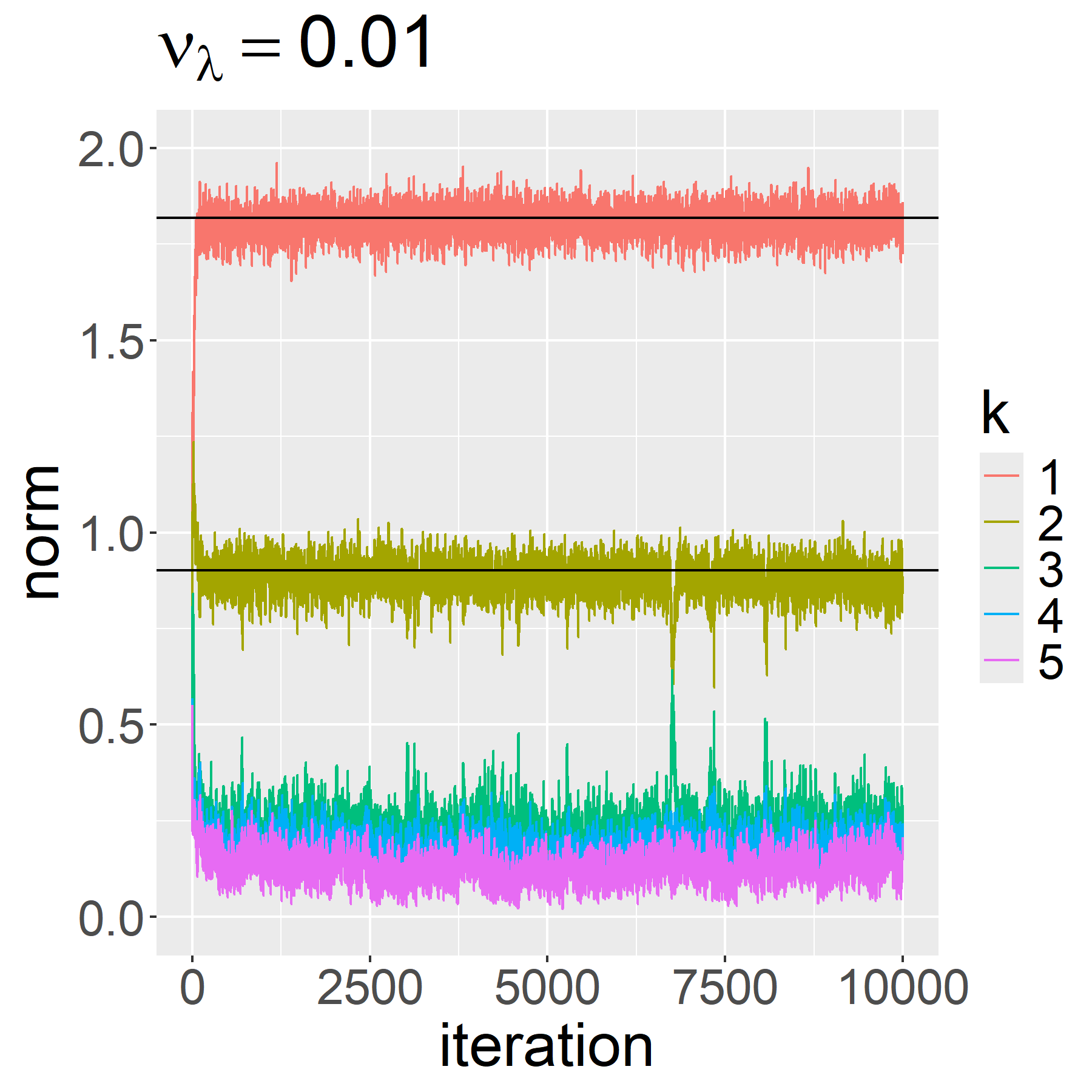}\\
(d) & (e) & (f) \\
\includegraphics[width = 1.5 in]{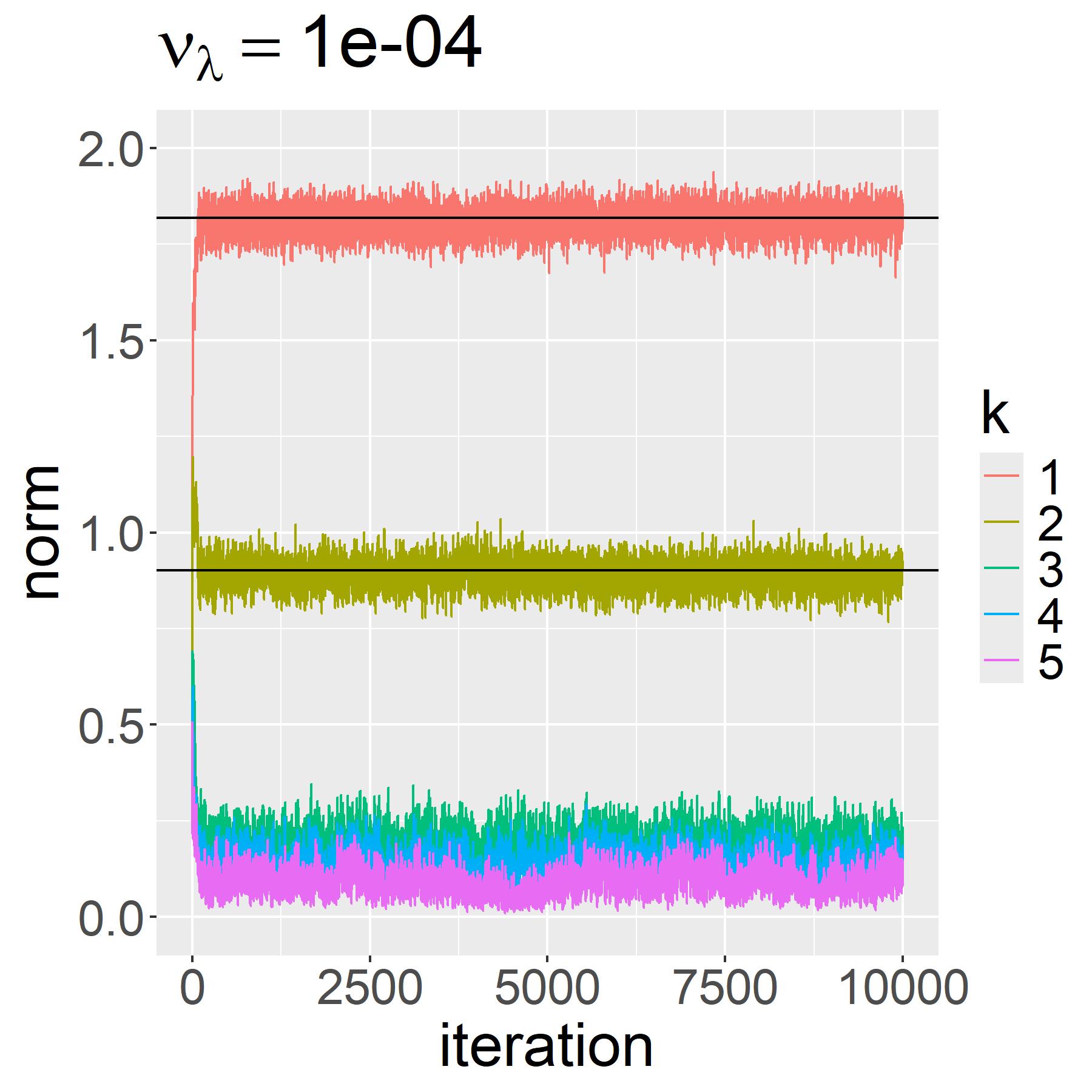} & \includegraphics[width = 1.5 in]{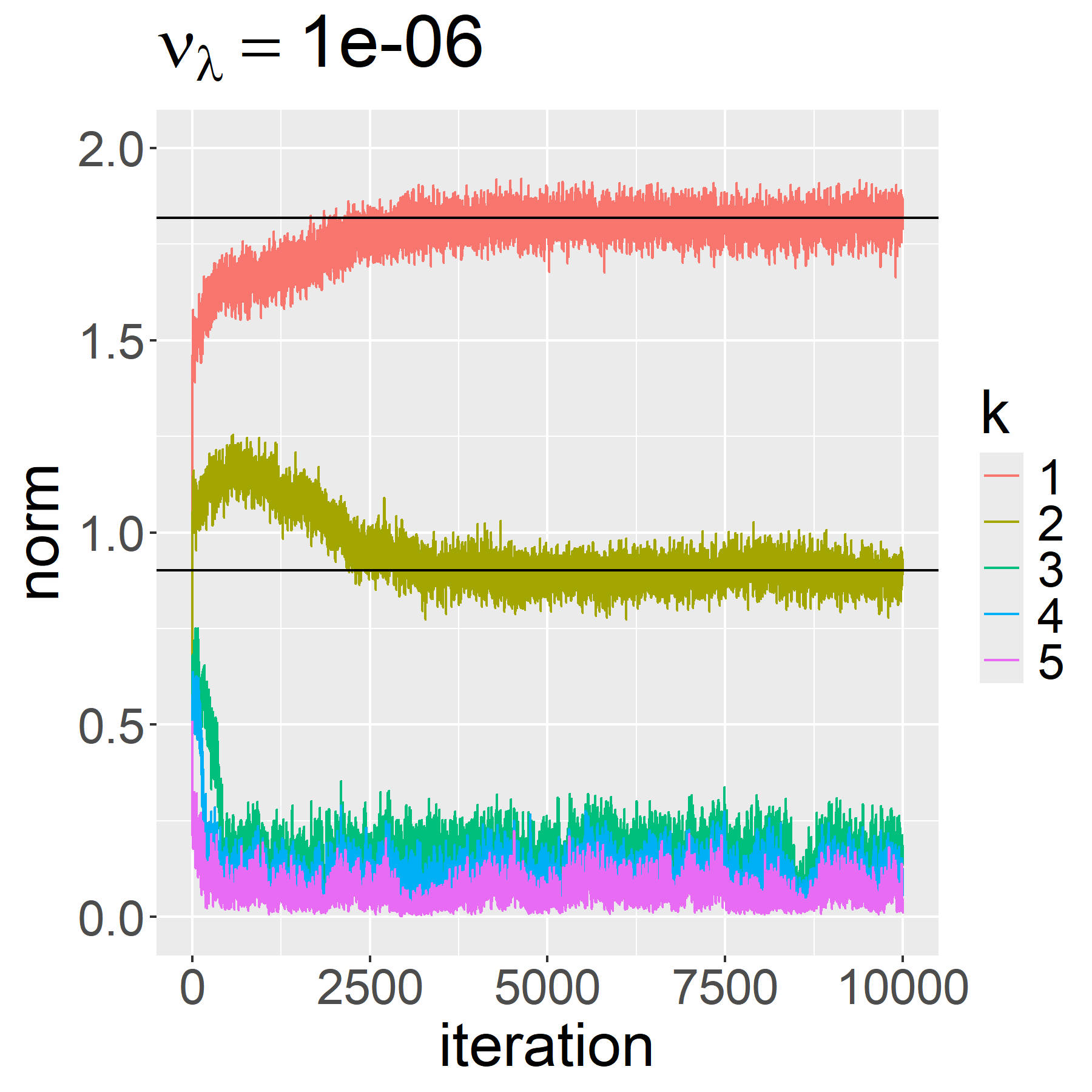}  & \includegraphics[width = 1.5 in]{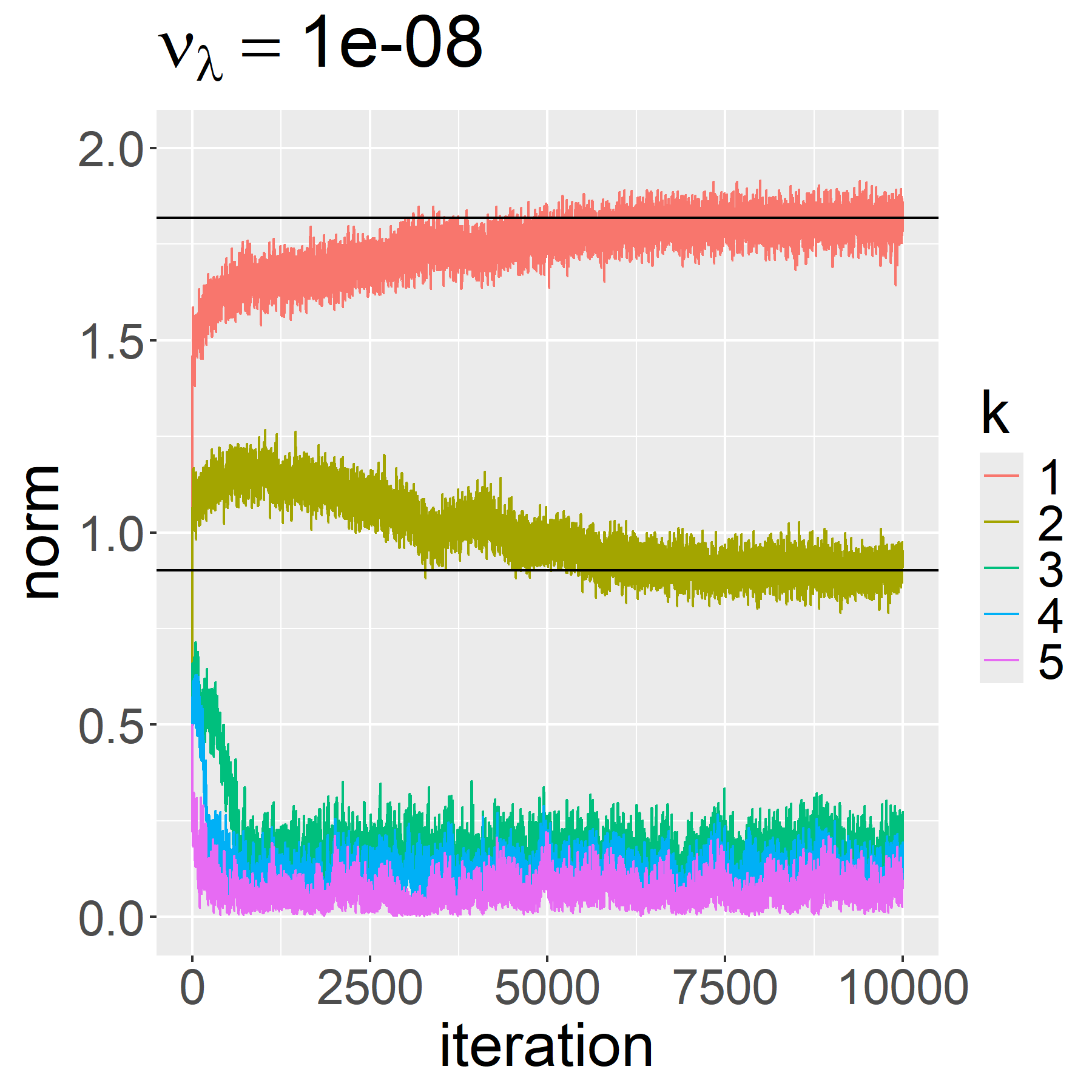}\\
(g) & (h) & (i) \\
\end{tabular}
\caption{(a) Simulated observations. (b) Functional factor loadings that underlie the observations. (c) Effective sample size (ESS) of $\|\lambda_k\|_2,\ k = 1,2$ for different values of $\nu_\lambda$. Trace plots of $\|\lambda_k\|_2,\ k = 1,\ldots,5$ for (d) $\nu_\lambda = 100$, (e) $\nu_\lambda = 1$, (f) $\nu_\lambda = 0.01$, (g) $\nu_\lambda = 1e-04$, (h) $\nu_\lambda = 1e-06$, (i) $\nu_\lambda = 1e-08$.}\label{fig:nu_sensitivity}
    \end{center}
\end{figure}

In this section, we present examples that illustrate the importance of the mutual orthogonality constraint in functional factor analysis models for identifying factor loadings, and the computational advantages of relaxing the constraint through the $\nu_\lambda$ hyperparameter of the $\small{\mbox{NeMO}}$ prior in terms of MCMC mixing properties. 

\subsection{Posterior sensitivity to specification of $\nu_\lambda$}\label{sec:nu_sensitivity}

In this subsection, we use a simple setting to investigate posterior sensitivity for different values of $\nu_\lambda$. Figure \ref{fig:nu_sensitivity} panel (a) shows a simulated dataset that was generated according to the functional factor analysis observation model presented in \eqref{eq:fpca_sparse}, with $K = 2$. Panel (b) shows the $2$ functional factor loadings that underlie the observations in panel (a). These are specified as the orthogonal functions $\lambda_1(t) := \sin(t)$ and $\lambda_2(t) := \frac{1}{2}\sin(2t)$. 

To study the effects of $\nu_\lambda$ on posterior inference, we run several MCMC implementations of our model at different values of $\nu_\lambda = 100,1,0.01,1e-04,1e-06,1e-08$. During these different implementations we fix $K = 5$, and the initialization and the seed for the pseudo-random number generator to simulate MCMC samples from the target posterior. Panels (d)-(i) of Figure \ref{fig:nu_sensitivity} show trace plots of $10,000$ posterior samples of $\|\lambda_k\|_2,\ k = 1,\ldots,5$, with the true values of $\|\lambda_k\|_2,\ k = 1,2$ shown as black horizontal lines. Panels (d)\&(e) display results when $\nu_\lambda$ is relatively large. In this setting, the model does not effectively enforce the mutual orthogonality constraint. This results in the inferred factor loadings splitting the true factors as indicated by the norms of the MCMC samples failing to converge. Panels (h)\&(i) display results when $\nu_\lambda$ is close to zero. In this setting, the model strictly enforces the mutual orthogonality constraint. The posterior samples of the norm of the factor loadings slowly converge to the values used to generate the data while shrinking the other factor loadings close to zero. Panel (c)\&(d) display results for moderate values of $\nu_\lambda$ that balances these two extremes. The trace plots converge relatively quickly to values used to generate the data, as the model shrinks redundant factor loadings to zero while exhibiting trace plots that indicate adequate convergence. 

The benefit of relaxing the mutual orthogonality constraint is further evidenced by Panel (c), which shows the effective sample size of the posterior samples of $\|\lambda_k\|_2,\ k = 1,2$ after the first $5,000$ MCMC iterations. The effective sample size of $\|\lambda_k\|_2,\ k = 1,2$  is small for relatively large values of $\nu_\lambda$ because of factor splitting, while the effective sample size for small values of $\nu_\lambda$ is small because of slow mixing. Moderate values of $\nu_\lambda$ have relatively large effective sample size. In practice, a value of $\nu_\lambda$ could be tuned for a single dataset based on mixing properties in a burn-in phase during sampling.

\subsection{Comparison with Existing Bayesian FFA Approaches}

\begin{table}[]
\resizebox{\textwidth}{!}{
\begin{tabular}{p{2cm}llll}
                            & \mbox{NeMO}                          & FDLM                       & FAST                            & VMP                        \\
\textbf{Run Time (Seconds)} & \textbf{8.8 (8.6, 9.0)}          & 18.9 (18.6, 19.1)          & 32.4 (30.6, 34.9)               & 62.7 (33.2, 153.4)         \\
\multicolumn{3}{l}{\textbf{ESS per 1000 iterations}}                                        &                                 &                            \\
$\mu$                       & \textbf{1146.9 (1134.9, 1168.1)} &                            & 1014 (966.6, 1084.5)            & \multirow{6}{*}{}          \\
$\lambda_1$                 & 650.9 (641.6, 659.7)             & 380 (357.7, 407.4)         & \textbf{1022.9 (958.6, 1082.6)} &                            \\
$\lambda_2$                 & 676.9 (666.5, 694)               & 388.9 (364.7, 415.3)       & \textbf{1009.6 (958.9, 1088.6)} &                            \\
$\lambda_3$                 & 641.4 (627.7, 651.1)             & 392.4 (359.6, 413.2)       & \textbf{1010.3 (954.2, 1064.3)} &                            \\
$\lambda_4$                 & 623.2 (610.8, 638.4)             & 382.1 (357.2, 408.6)       & \textbf{1001.1 (958.9, 1055.3)} &                            \\
$\lambda_5$                 & 603.5 (588.4, 627.8)             & 352.9 (328.9, 374.5)       & \textbf{1014.3 (937.7, 1095.1)} &                            \\
\multicolumn{3}{l}{\textbf{MISE}}                                                           &                                 &                            \\
$\mu$                       & 0.52 (0.36, 0.72)                &                            & 0.66 (0.45, 0.94)               & \textbf{0.50 (0.34, 0.64)} \\
$\lambda_1$                 & \textbf{0.10 (0.07, 0.15)}       & \textbf{0.10 (0.06, 0.14)} & \textbf{0.10 (0.07, 0.13)}      & 0.11 (0.07, 0.16)          \\
$\lambda_2$                 & 0.18 (0.12, 0.23)                & \textbf{0.15 (0.12, 0.22)} & 0.16 (0.13, 0.21)               & 0.17 (0.14, 0.24)          \\
$\lambda_3$                 & 0.19 (0.13, 0.24)                & \textbf{0.16 (0.12, 0.21)} & 0.18 (0.15, 0.22)               & 0.18 (0.14, 0.24)          \\
$\lambda_4$                 & 0.19 (0.16, 0.23)                & \textbf{0.17 (0.14, 0.22)} & 0.25 (0.22, 0.28)               & 0.2 (0.16, 0.4)            \\
$\lambda_5$                 & 0.31 (0.27, 0.36)                & \textbf{0.3 (0.24, 0.35)}  & 0.44 (0.39, 0.53)               & 0.35 (0.28, 0.54)         
\end{tabular}
}
\caption{Summaries of MISE, mESS and run time for the dense example. Each cell displays median (25\textsuperscript{th} percentile, 75\textsuperscript{th} percentile) across 100 replicates.}\label{table:dense_sim}
\end{table}

In this subsection, estimation and MCMC mixing properties are compared with existing Bayesian FFA approaches. We compare against three different methods: i) \cite{kowal2021} (FDLM) specifies a spline-based representation of loadings, where orthogonality is enforced through a conditionally defined prior. Model fitting is achieved through a Gibbs sampler. ii) \cite{Sartini2026} (FAST) uses polar expansion to enforce orthogonality through spline coefficients, while relying on Hamiltonian Monte Carlo using Stan \citep{carpenter2017stan}. iii) The `BayesFPCA' R package \citep{bayesFPCA2023} implements a variational-Bayes approach (VMP). The model does not enforce constraints, but post-processing techniques are used to identify latent factors and factor loadings.

Data for this simulation are generated using Equation \eqref{eq:fpca} with $n = 100$. The factor loadings are five orthonormalized B-spline basis functions, and the latent factors have mean zero and standard deviations, 5, 3, 2, 1, 0.5. A non-linear mean function is set to $\mu(t) = 12 + 5\log(t + 0.1) + 2t^2,\ t\in [0,1]$. Gaussian noise is added to the smooth observations with mean zero and noise standard deviation consistent with a root signal-to-noise ratio of 3. We present results under dense and sparse settings, assuming the number of latent factors is $K=8$ for all models. In the dense setting, each functional observation is sampled on the same grid with 30 time points. In the sparse setting, these functions are evaluated at 10 uniformly distributed locations. In the sparse setting only $\mbox{NeMO}$ and FDLM are compared, as FAST and VMP are not readily applicable in this setting. FDLM does not directly infer a mean function, so a pointwise mean of the raw data is subtracted from observations prior to model fitting. The competing approaches do not provide an estimate for the number of active latent factors, so we focus on $\mbox{NeMO}$'s ability to exclude redundant factors using the framework detailed in Section \ref{sec:select_K}. For each setting, the simulation was repeated 100 times.

Comparisons focus on MCMC efficiency and estimation accuracy. Estimation accuracy is quantified using mean integrated squared error (MISE) calculated as the integral of the squared difference between an estimate and the value of a functional parameter used for data generation. To enhance comparability, estimates across all approaches were normalized. MCMC efficiency is quantified using multivariate effective sample size (mESS), defined in \cite{vats2019}, and total algorithm run time. All MCMC based approaches were run for 2,000 total iterations, while discarding the first 1,000 iterations. All implementations were run on the same Windows desktop with 32.0 GB RAM and $12^\text{th}$ Gen Intel(R) Core(TM) i9-12900 processor.

\begin{table}[]
\begin{tabular}{lll}
                            & \mbox{NeMO}                           & FDLM                       \\
\textbf{Run Time (Seconds)} & \textbf{8.3 (8.2, 8.4)}         & 19.8 (19.5, 20)            \\
\multicolumn{3}{l}{\textbf{ESS per 1000 iterations}}                                       \\
$\mu$                       & \textbf{703.7   (684.4, 733.3)} &                            \\
$\lambda_1$                 & \textbf{469.2 (450.9, 484.9)}   & 157.2 (145.4, 168.3)       \\
$\lambda_2$                 & \textbf{475.1 (451.4, 491.7)}   & 149.4 (139.7, 159.2)       \\
$\lambda_3$                 & \textbf{479.8 (457.3, 498.3)}   & 151.7 (138.8, 160.4)       \\
$\lambda_4$                 & \textbf{509.7 (477.2, 561.5)}   & 148.9 (138.5, 157.9)       \\
$\lambda_5$                 & \textbf{616.6 (571.3, 657.6)}            & 143.6 (133.7, 152.8)       \\
\multicolumn{3}{l}{\textbf{MISE}}                                                          \\
$\mu$                       & \textbf{0.61 (0.49, 0.74)}      &                            \\
$\lambda_1$                 & 0.43 (0.25, 0.61)               & \textbf{0.21 (0.14, 0.41)} \\
$\lambda_2$                 & \textbf{0.55 (0.36, 0.7)}       & 0.65 (0.51, 0.75)          \\
$\lambda_3$                 & \textbf{0.54 (0.33, 0.69)}      & 0.73 (0.65, 0.78)          \\
$\lambda_4$                 & \textbf{0.46 (0.36, 0.61)}      & 0.85 (0.75, 0.95)          \\
$\lambda_5$                 & 1.06 (0.84, 1.19)               & \textbf{0.97 (0.81, 1.07)}
\end{tabular}
\caption{Summaries of MISE, mESS and run time for the sparse example. Each cell displays median (25\textsuperscript{th} percentile, 75\textsuperscript{th} percentile) across 100 replicates.}\label{table:sparse_sim}
\end{table}

Tables \ref{table:dense_sim} and \ref{table:sparse_sim} summarize the simulation results for the dense and sparse setting, respectively. Across both settings, estimation results are generally comparable given the quartiles of MISE for each of the functional parameters. In the dense setting, FDLM has slightly lower MISE for the factor loadings relative to competitors, while in the sparse setting there are mixed results between \mbox{NeMO} and FDLM. In terms of mean estimation, \mbox{NeMO} and VMP have slightly lower MISE. This is potentially due to the fact that these approaches impose additional constraints on latent factors to aid mean identifiability. In terms of computational efficiency, \mbox{NeMO} had the fastest total run time across both settings while attaining adequate mESS. In the dense setting \mbox{NeMO} identified 5 latent factors in 64\% and 4 latent factors in 36\% of the replicated simulations. In the sparse setting, \mbox{NeMO} generally identified a more conservative number of latent factors, selecting 3, 4, and 5 latent factors 63\%, 36\% and 1\% of the replicated simulations, respectively. $\small{\mbox{NeMO}}$ may underselect latent factors that describe relatively low levels of variability, as the model infers there is insufficient posterior evidence that they are non-zero. In both the dense and sparse setting, $\small{\mbox{NeMO}}$ has favorable performance in terms of ratio of mESS to run time, while maintaining comparable estimation accuracy. These simulations highlight the computational efficiency of our proposed approach.

In Appendix 4, we present additional simulated examples. For the remainder of the paper, we present an in-depth application that leverages a Bayesian hierarchical model using functional factor analysis with a $\mbox{NeMO}$ prior. 

\section{Cebu Longitudinal Health and Nutrition Survey}  \label{sec:cebu}

\subsection{Study design and motivation}

The Cebu Longitudinal Health and Nutrition Survey is a multi-generational study. The focus is on maternal, child, and environmental factors and their connection to health, social, and economic outcomes for mother-child pairs in the metropolitan area of Cebu, Philippines. Refer to \url{https://cebu.cpc.unc.edu/}. In this work, we focus on studying the association between weight dynamics in early childhood and environmental and health related covariates.  Previous work \citep{adair2011} found that scalar covariates, such as height of the mother, sex of the child, area (urban or rural) where the family lives, season of birth (rainy or dry), and longitudinal covariates, such as breastfeeding status and illness, are associated with weight dynamics in early childhood. We base inference on $n = 2898$ mother-child pairs of the $3327$ pairs originally enrolled in the study, where subjects were excluded due to child death, the family moving out of the Cebu metropolitan area, refusal to participate after the baseline survey, or if there was no covariate or response data recorded. Our goal is to specify a Bayesian hierarchical model using $\mbox{NeMO}$ process priors that overcomes the analytic challenges of flexibly capturing non-linear weight dynamics, capturing the regression relationships between weight and the covariates, while accounting for missingness in the response and longitudinal covariates. 

\begin{figure}[t!]
 \begin{tabular}{c}
\includegraphics[width = 1.5 in]{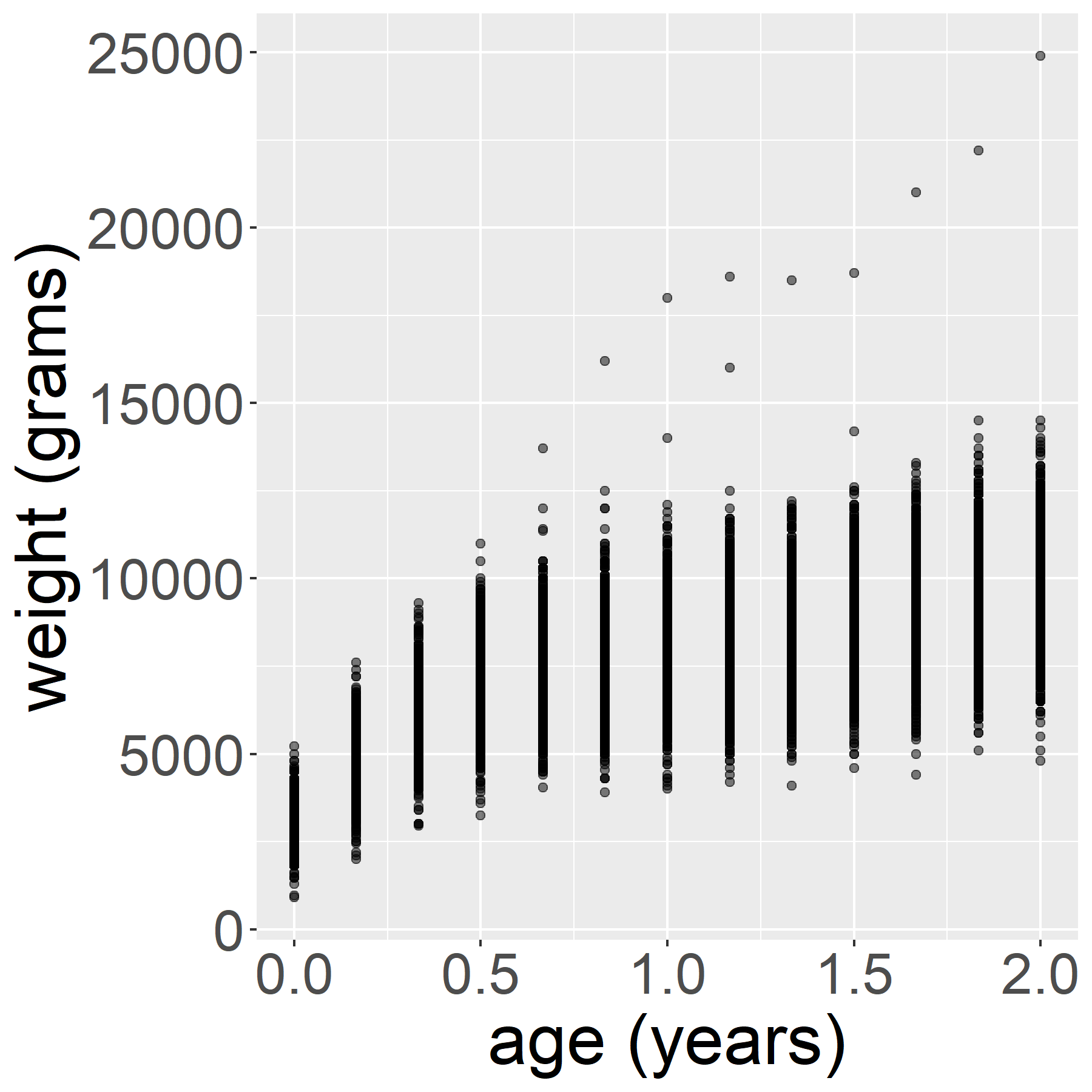} \\ (a) \\\includegraphics[width = 1.5 in]{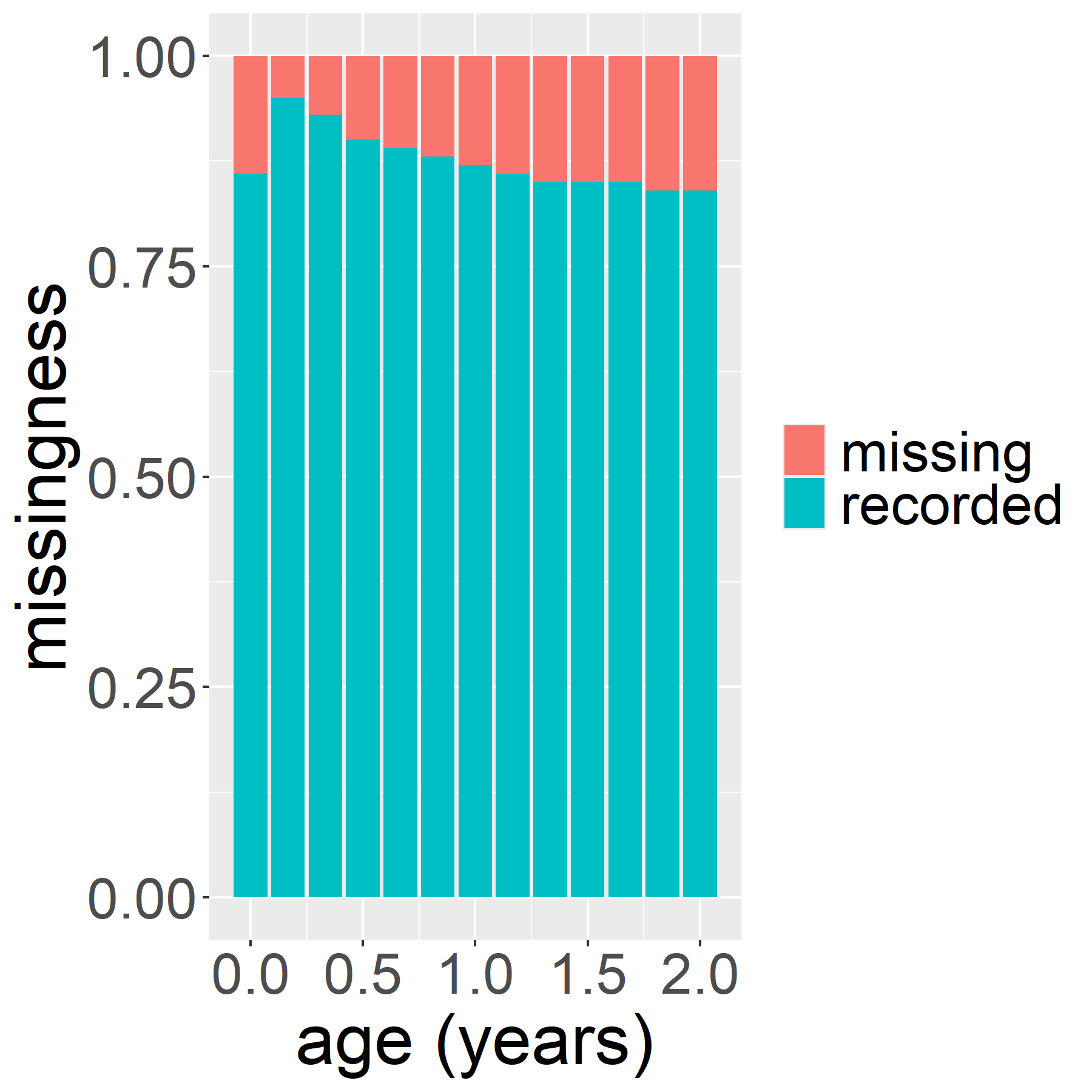} \\
(b) 
\end{tabular}
 \begin{tabular}{|cc}
\includegraphics[width = 1.5 in]{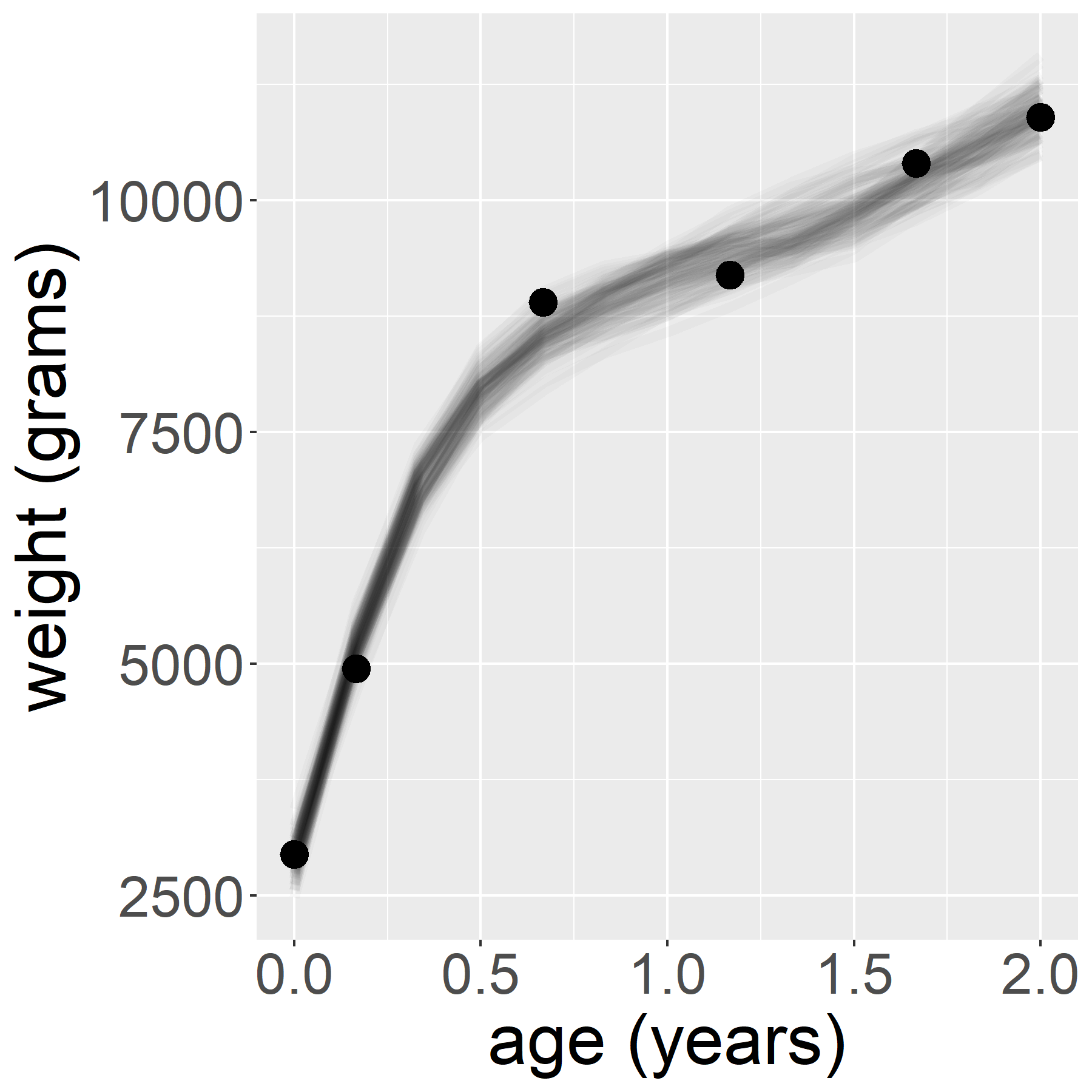} & \includegraphics[width = 1.5 in]{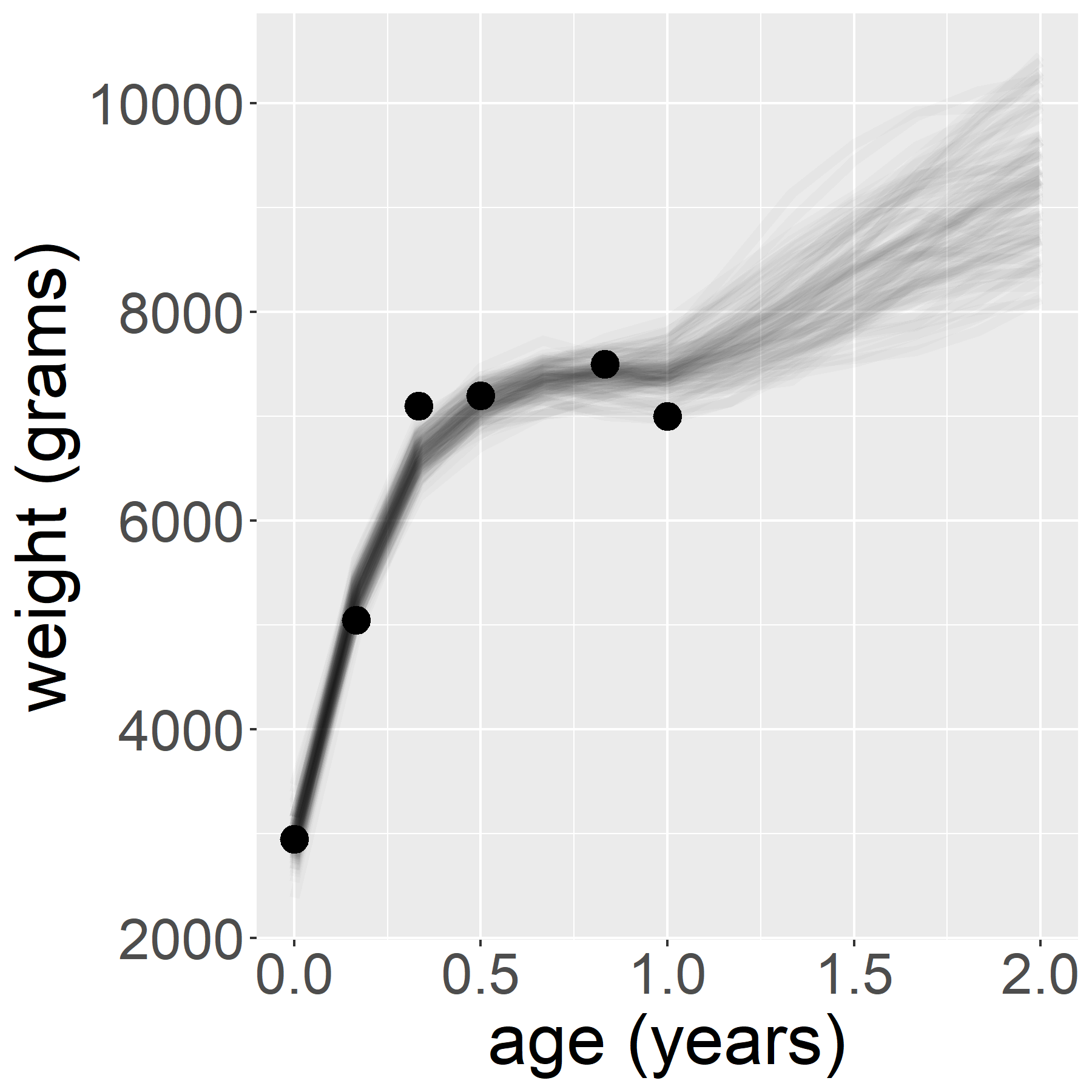}  \\ (c) & (d) \\ \includegraphics[width = 1.5 in]{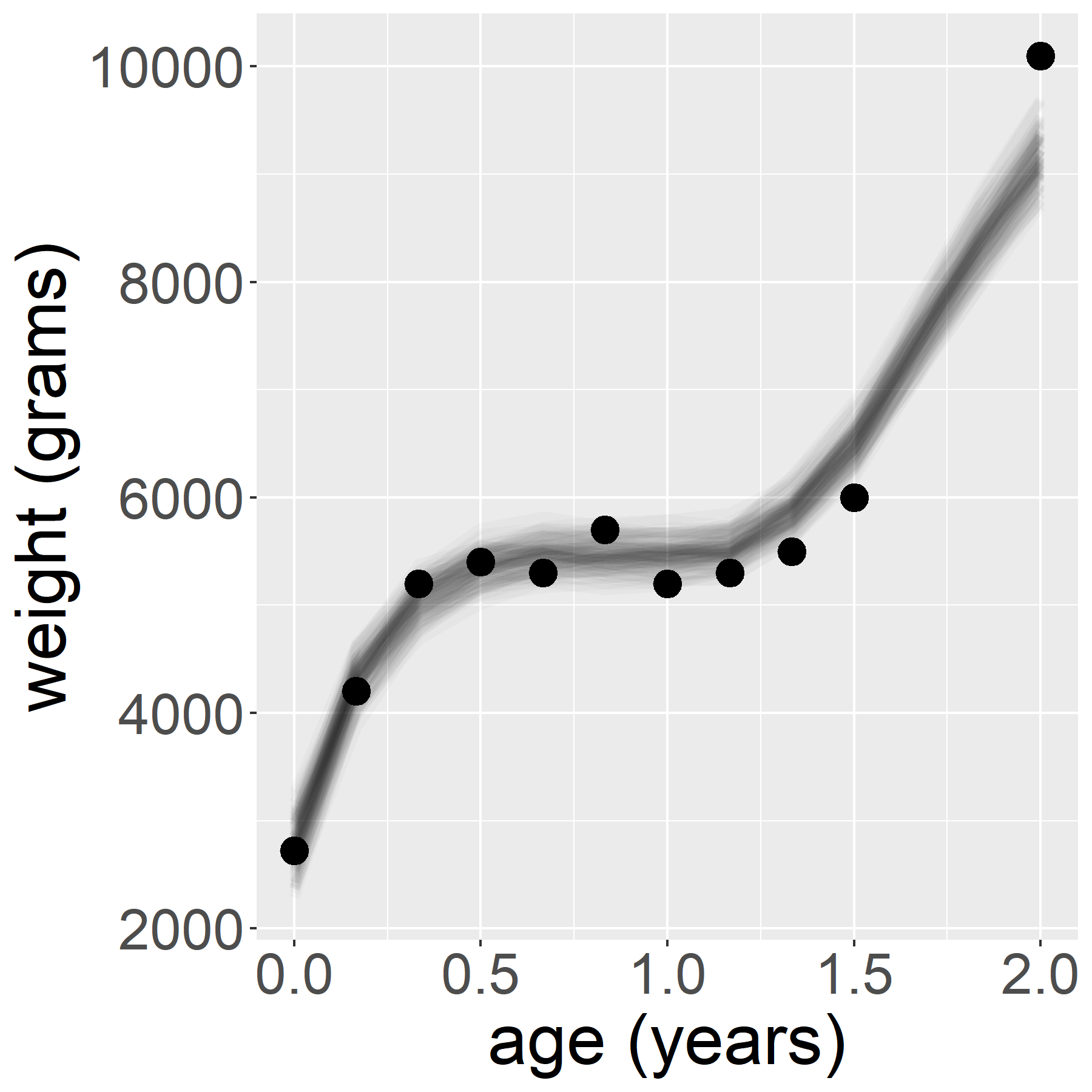} & \includegraphics[width = 1.5 in]{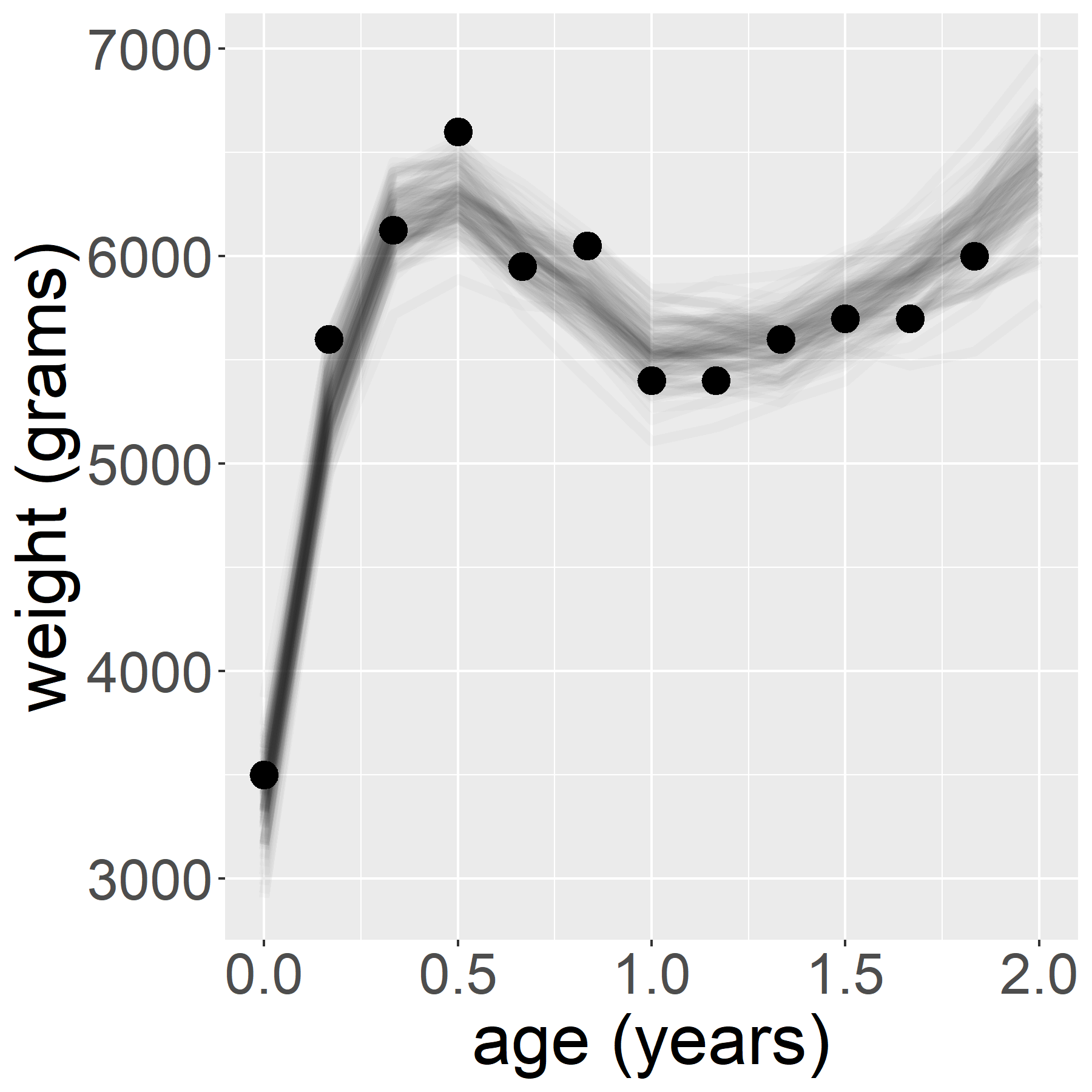} \\
 (e) & (f)\\  
\end{tabular}
\caption{(a) Weight by age for $n = 2898$ children in the Cebu Longitudinal Health and Nutrition Survey. (b) Proportion of missing weight data by age. (c)-(f) Posterior samples of fitted weight trajectories, $\mu(\protect\vv{t}) + \{\lambda_1(\protect\vv{t}),\ldots,\lambda_5(\protect\vv{t})\}^\top\eta_i + \sum_{j = 1}^4\int_{T}\beta_j(s,\protect\vv{t})z_{i,j}(s)ds$ overlaid on $y_i(\protect\vv{t}_i)$ for $i = 421,\ 1626,\ 2205,\ 2601$.
}\label{fig:cebu_data}
\end{figure}

\begin{figure}[t]
\begin{center}
 \begin{tabular}{ccc}
\includegraphics[width = 1.5 in]{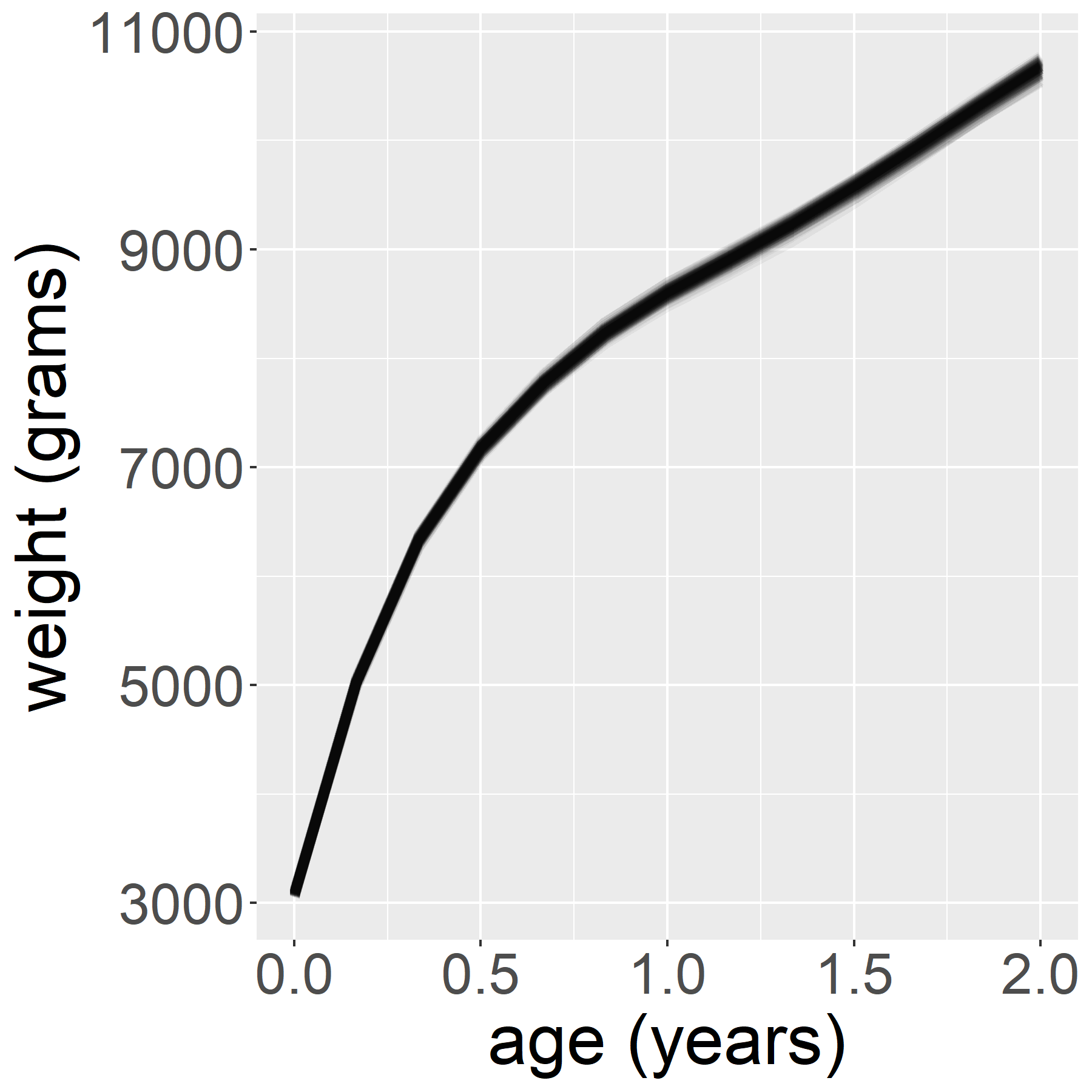} & \includegraphics[width = 1.5 in]{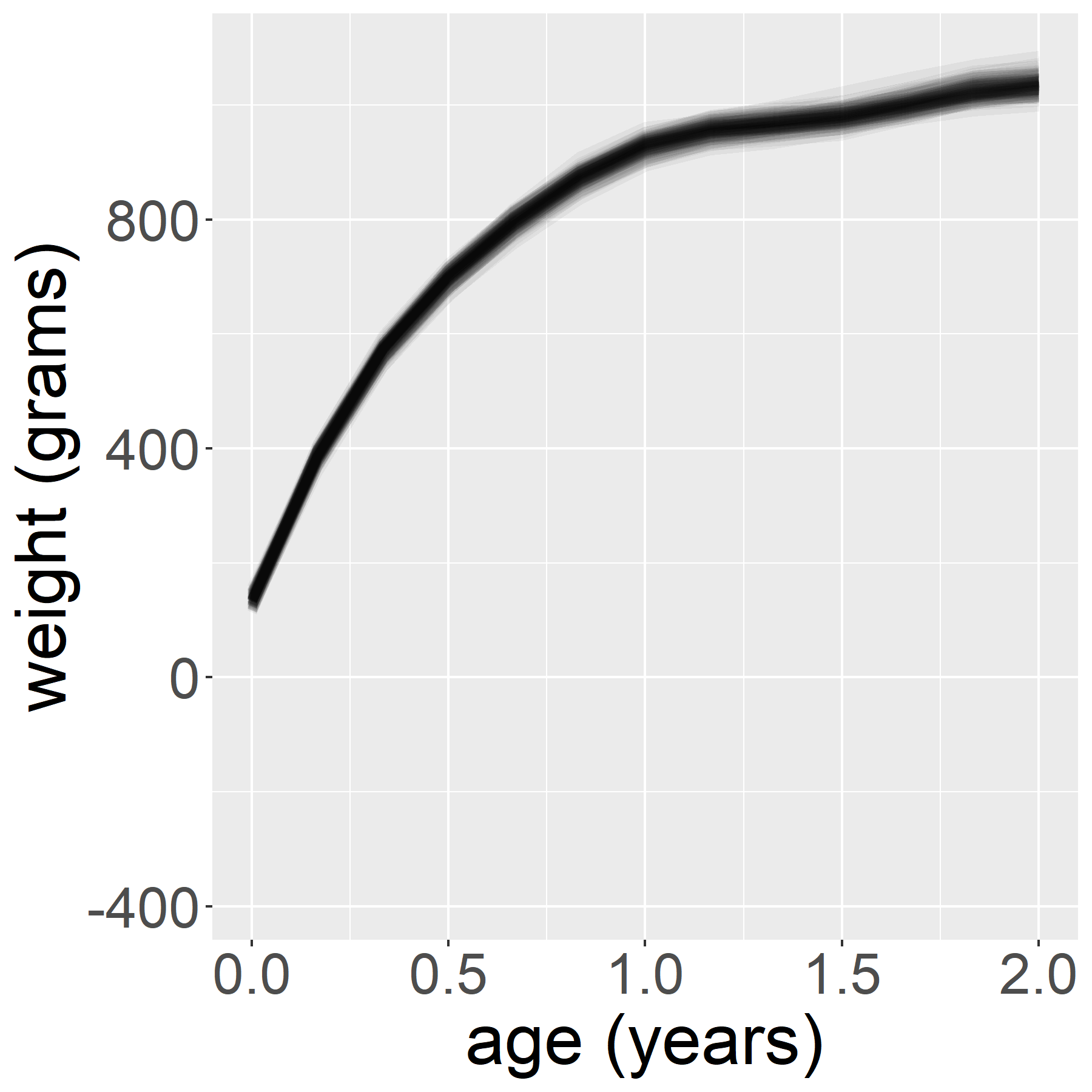}& \includegraphics[width = 1.5 in]{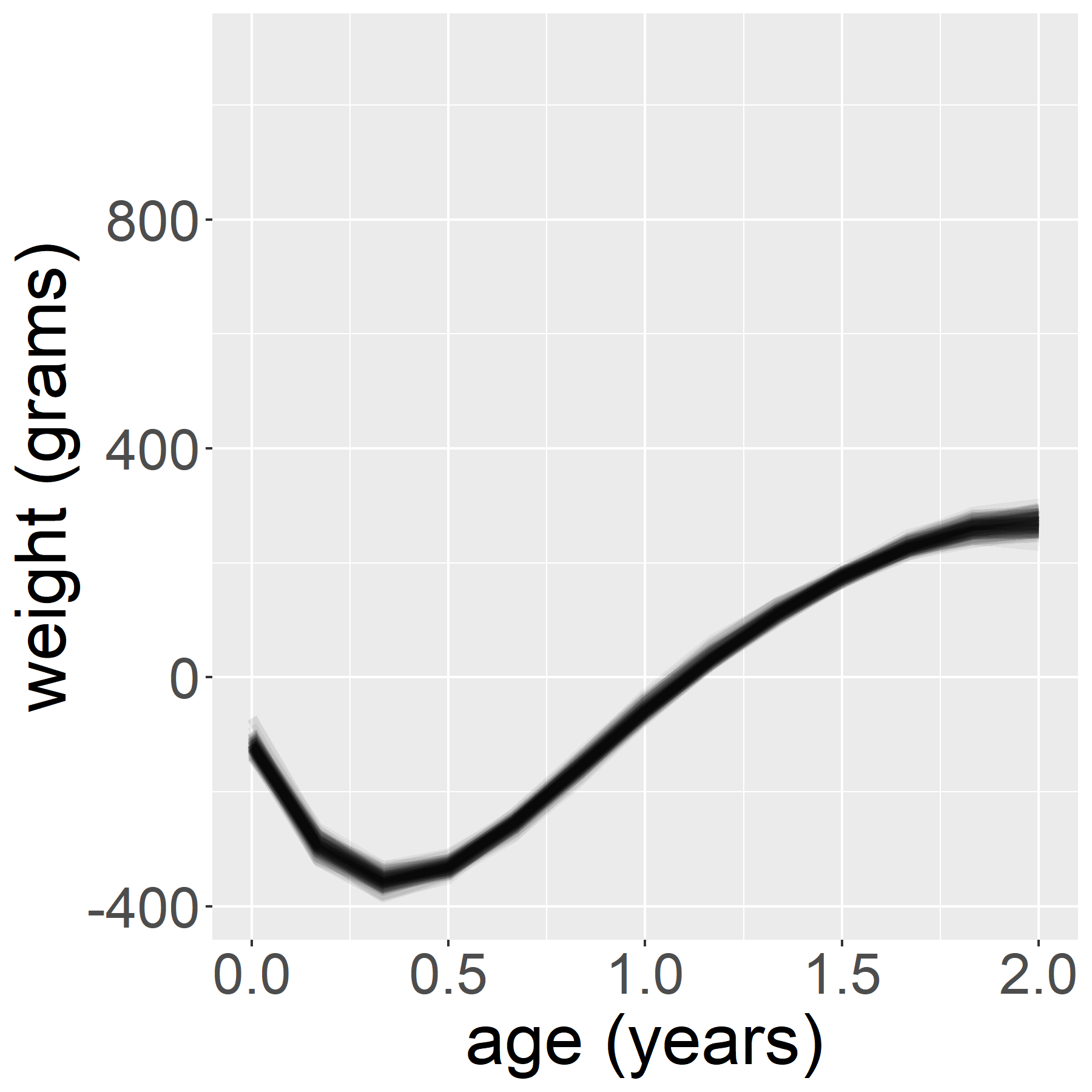} \\
(a) & (b) & (c)\\
\includegraphics[width = 1.5 in]{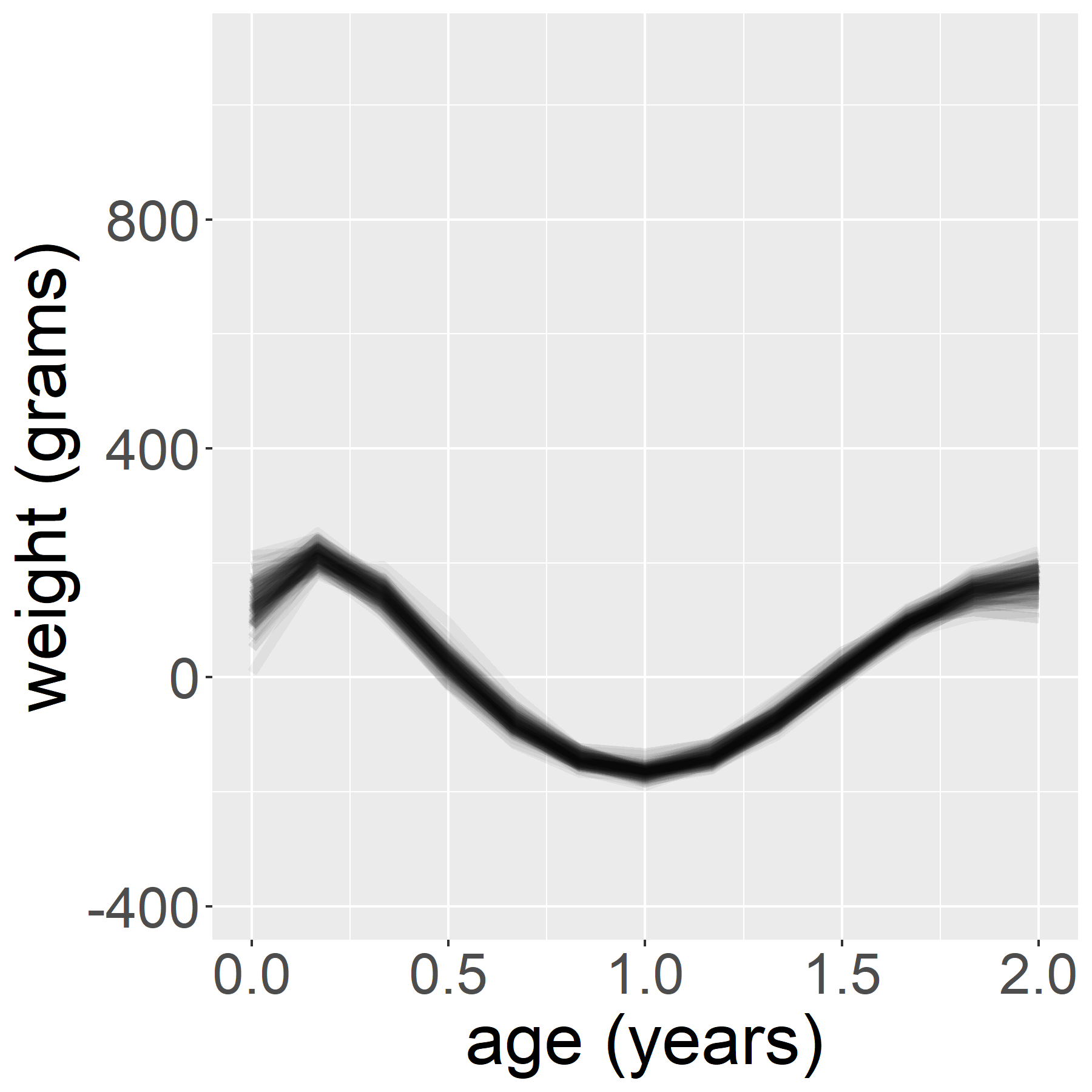} & \includegraphics[width = 1.5 in]{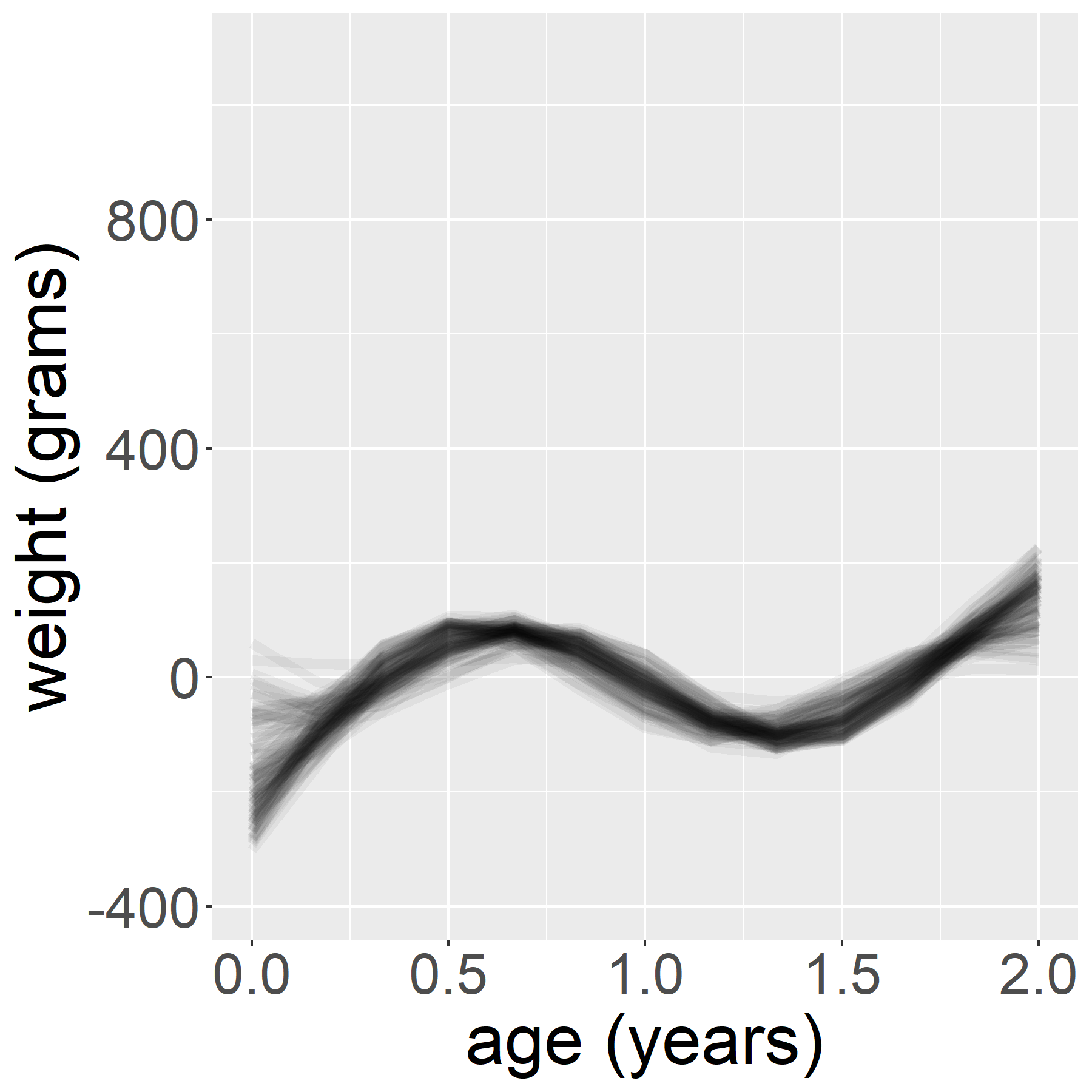}& \includegraphics[width = 1.5 in]{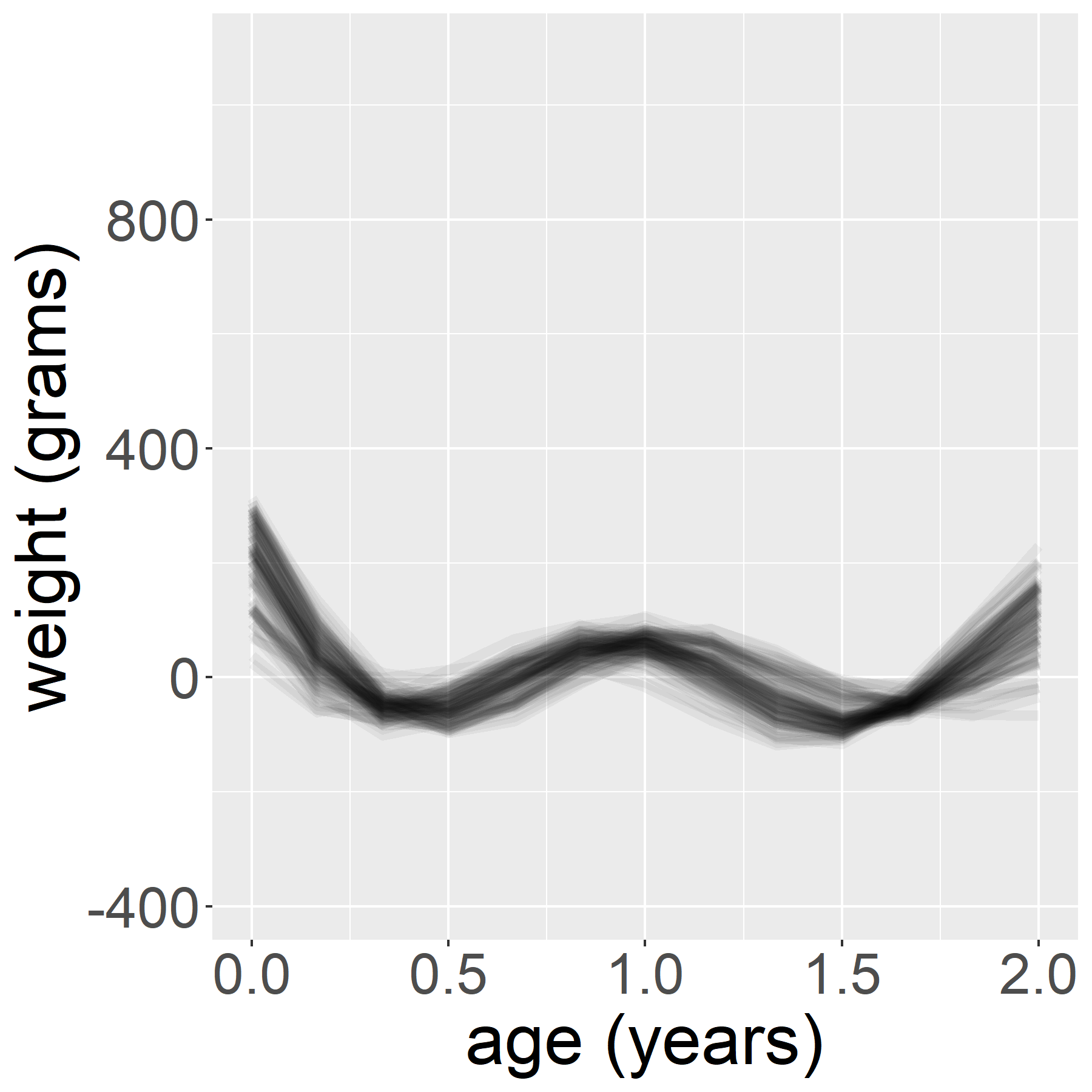} \\
(d) & (e) & (f)\\
\end{tabular}
\caption{(a) Posterior samples of the mean process, $\mu(\protect\vv{t})$, and (b)-(f) functional factor loadings, $\lambda_1(\protect\vv{t}),\ldots,\lambda_5(\protect\vv{t})$, for weight measurements of children recorded in the Cebu Longitudinal Health and Nutrition Survey. }\label{fig:cebu_FPCs}
    \end{center}
\end{figure}

\subsection{Bayesian hierarchical model for weight dynamics}

We implement the following flexible model for the weight dynamics, 
\begin{eqnarray}
y_i(\vv{t}_i) & = & \mu(\vv{t}_i) + \{\lambda_1(\vv{t}_i),\ldots,\lambda_K(\vv{t}_i)\}^\top\eta_i + O_i\sum_{j = 1}^4\int_{T}\beta_j(s,\vv{t})z_{i,j}(s)ds + \epsilon_i(\vv{t}_i) \nonumber \\
 \eta_i & = & \Theta x_i + \xi_i  \nonumber 
\end{eqnarray}
The longitudinal records of weight of the $i$\textsuperscript{th} child are denoted by $y_i(\vv{t}_i)$, where the observation-specific grid points $\vv{t}_i$ are a subset of $\vv{t} = (0,\frac{1}{2},\ldots,\frac{24}{2})^\top$, representing the two month intervals on which the measurements are recorded. We use the functional factor analysis approach developed in Section \ref{sec:fpca} for $\mu,\ \lambda_1,\ldots, \lambda_K$ and $\xi_i$. 

The response can vary with the scalar covariates, $x_i\in\mathbb{R}^4$, through a latent factor regression setup \citep{montagna2012}. The coefficient matrix $\Theta\in\mathbb{R}^{K\times4}$ models the relationship between the latent factors and the scalar covariates. To account for longitudinal covariates, $z_{i,j}(s),\ j=1,\ldots,4$, we use a functional linear model \citep{ramsay1991}. These covariates track if a child is fed breast milk and 3 indicators of illness (diarrhea, fever, and cough). We use our generalized functional factor analysis approach for model-based imputation of these covariates when not observed.  For the coefficient surface $\beta_1(s,t)$ corresponding to the breast milk indicator, we specify $\beta_1(s,t) = 0$ for $s>t$ in a historical linear model approach \citep{malfait2003}. Through this setup, weight at time $t$ is only affected by breastfeeding status at earlier ages. For the coefficient surface of the three illness indicators $\beta_2(s,t),\ \beta_3(s,t),\ \beta_4(s,t)$, we specify $\beta_j(s,t) = 0$ for $s\neq t,\ j = 2,3,4$ in a concurrent linear model \citep{hastie1993}. Through this setup the illness status of a child at time $t$ only effects the weight at time $t$. In Appendix 5, we provide full details of the hierarchical model specification and present additional supportive figures that include trace plot diagnostics for the MCMC used for posterior inference.

\subsection{Inferential results for weight dynamics}

Figure \ref{fig:cebu_data} panel (a) displays the weight of all children measured every two months from birth to age 2 and panel (b) displays the proportion of missing observations by age. In panels (c)-(f), we plot weight measurements, $y_i(\vv{t}_i)$, for a few children along with posterior samples of $\mu(\vv{t}) + \{\lambda_1(\vv{t}),\ldots,\lambda_5(\vv{t})\}^\top\eta_i + \sum_{j = 1}^4\int_{T}\beta_j(s,\vv{t})z_{i,j}(s)ds$. The model is able to fit a variety of shapes of underlying functions, with greater uncertainty at ages where the child's weight was not observed, such as panel (d). The functional factor loadings, $\lambda_1,\ldots,\lambda_K$, are able to describe modes of variability in weight dynamics. We implement our approach described in Section \ref{sec:select_K} to select the number of latent factors. Starting with an over-fitted factor model, our approach forces the loadings for extra unnecessary factors to be close to zero. 

Posterior samples of the mean process, $\mu$, and functional factor loadings, $\lambda_1,\ldots,\lambda_5$ are displayed in Figure \ref{fig:cebu_FPCs} in panels (a) and (b)-(f), respectively. The mean process increases rapidly from age 0 to 6 months, then appears to behave nearly linearly from age 6 months to 2 years. The first loading shown in panel (b) captures subject specific deviations having roughly a similar shape as the mean. The latter loadings in panels (c)-(f) capture more intricate variability that is smaller in magnitude. 

\subsection{Inferential results for the associations between weight dynamics and scalar covariates}

\begin{figure}[t]
\begin{center}
 \begin{tabular}{cc}
\includegraphics[width = 1.7 in]{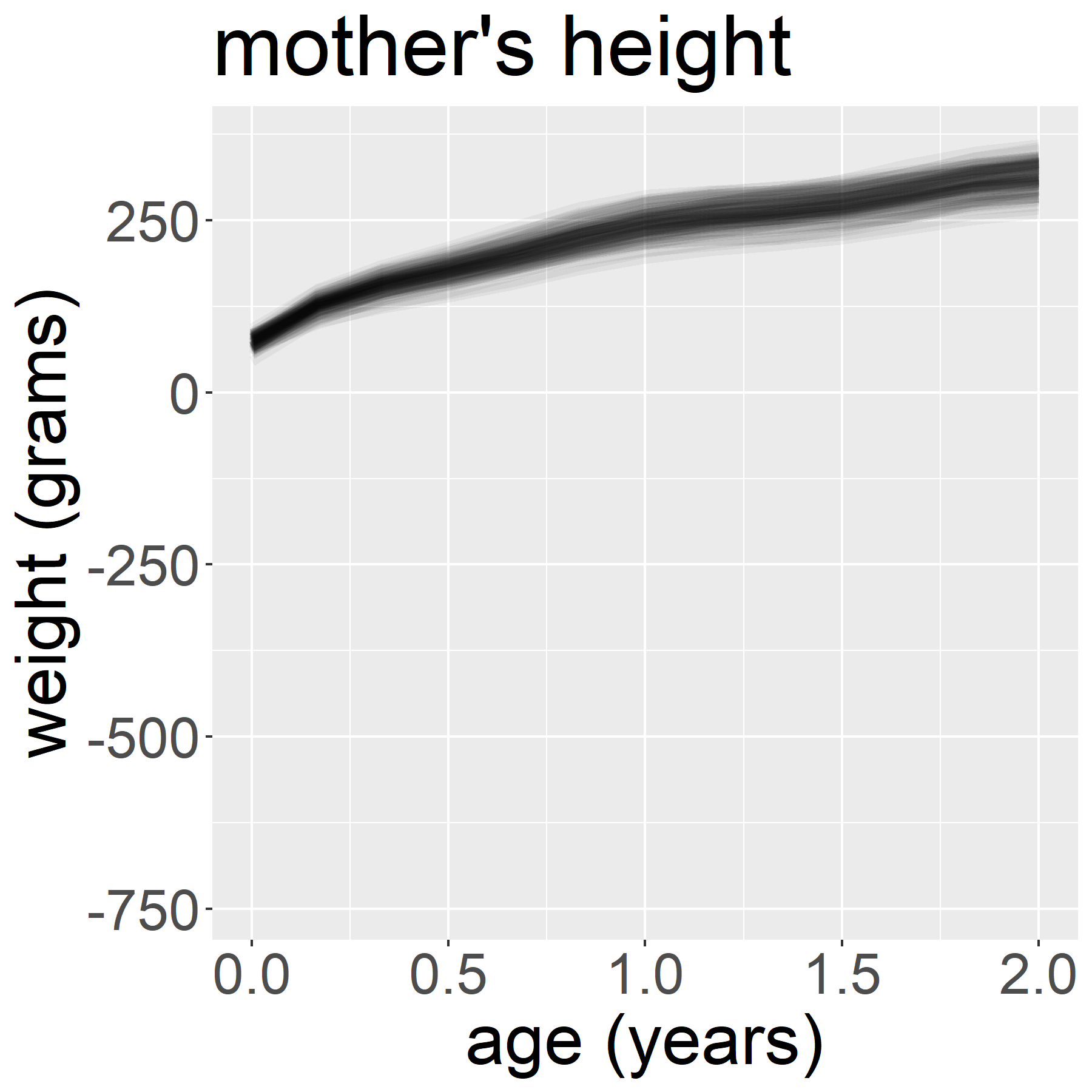} & \includegraphics[width = 1.7 in]{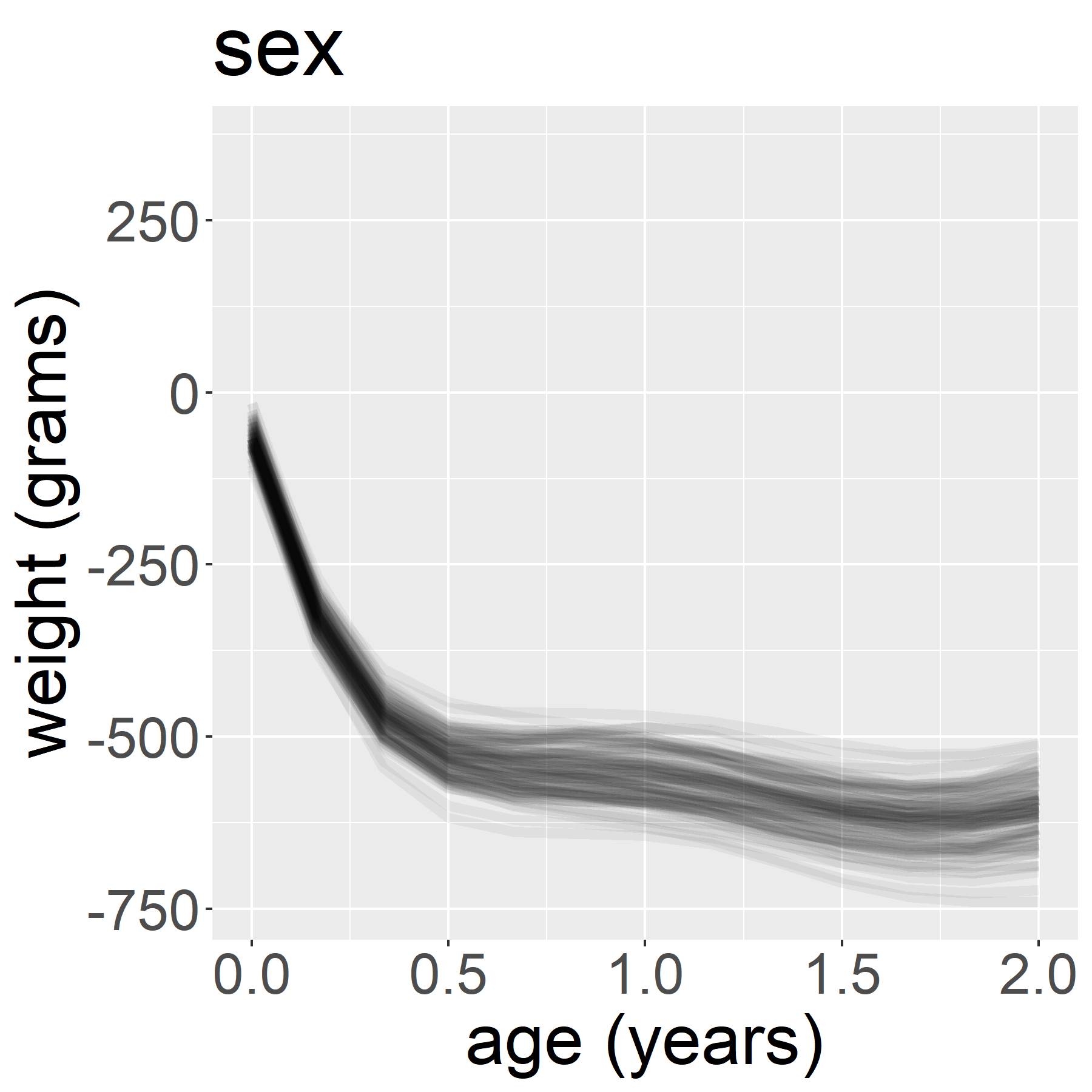}  \\
(a) & (b) \\
\includegraphics[width = 1.7 in]{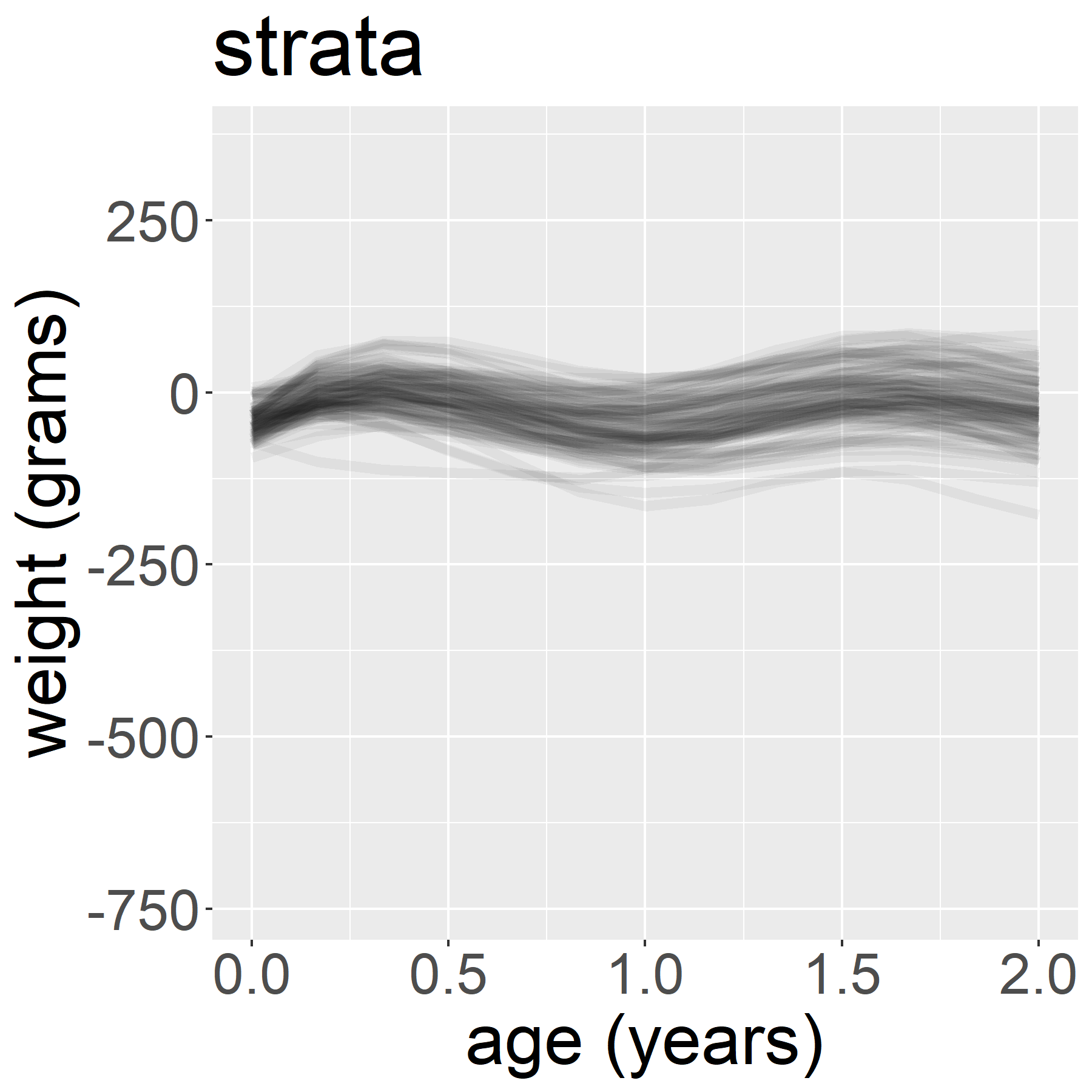}& \includegraphics[width = 1.7 in]{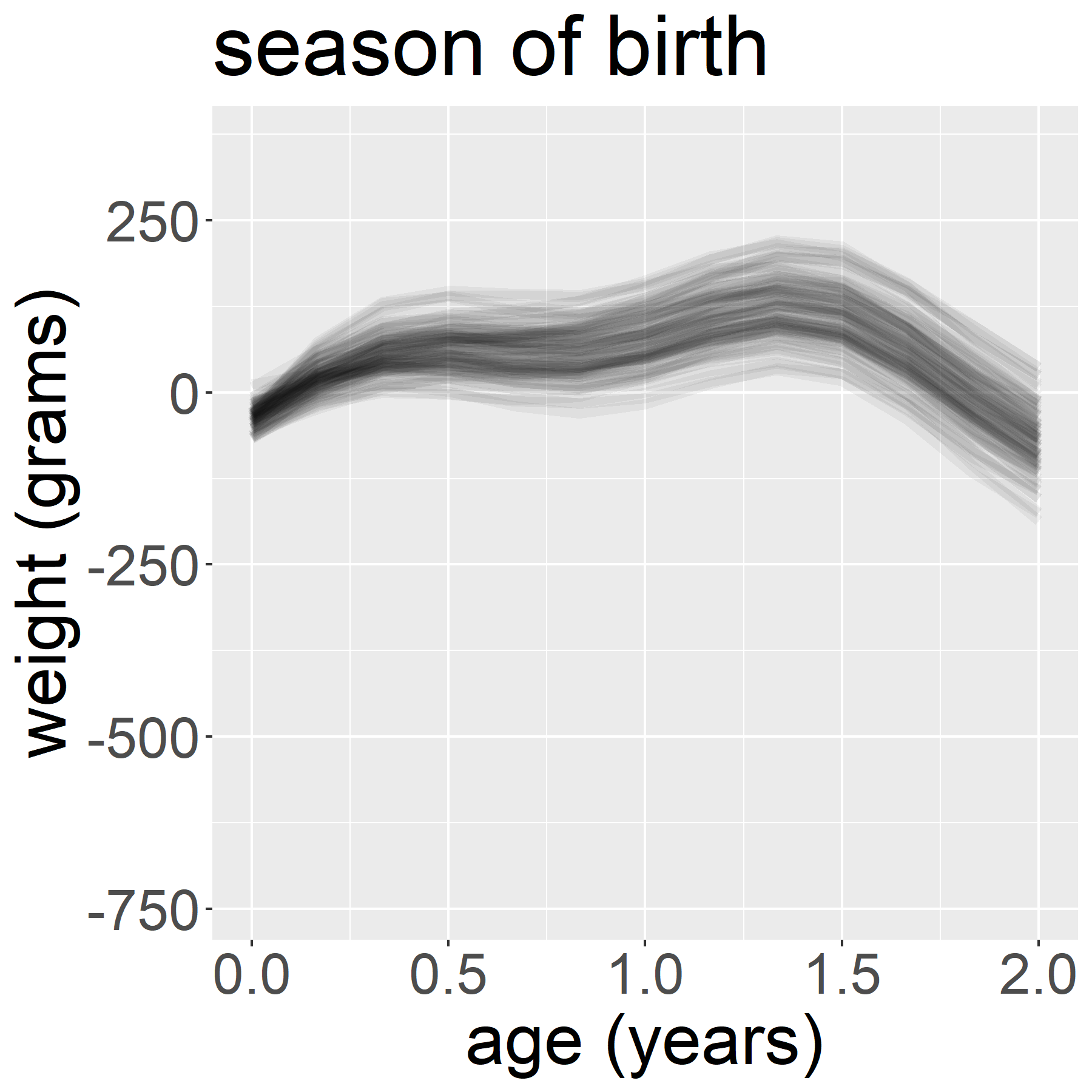}   \\
 (c) & (d)
\end{tabular}
\caption{(a)-(d) Posterior samples of $(\lambda_1(\protect\vv{t}),\ldots,\lambda_5(\protect\vv{t}))^\top \Theta_{q}$ for $q = 1,\ldots,4$, representing the effect of the scalar covariates (height of the mother, sex of the child, area (urban or rural) where the family lives, season of birth (rainy or dry)) on weight. Mother's height and sex of child are clearly associated with weight dynamics. Strata and season of birth appear to have near zero effects with relatively high uncertainty. \label{fig:cebu_b_inference}}
    \end{center}
\end{figure}

The effect of the scalar covariates on the response is modeled through the latent factors, $\eta_i = \Theta x_i + \xi_i$. With all other variables held constant, the effect of a unit increase of a single covariate on the response is given by $\{\lambda_1(\vv{t}),\ldots,\lambda_5(\vv{t})\}^\top \Theta_{q}$,  where $\Theta_{q}$ denotes the $q$\textsuperscript{th} column of the matrix $\Theta$, for $q = 1,\ldots,4$. Posterior samples of these functions are shown in Figure \ref{fig:cebu_b_inference}. Panel (a) shows the inferred effect of a unit increase in mother's height on child weight. The model infers that children with taller mothers generally weigh more. Panel (b) shows the inferred difference in weight between male and female children, where $x_{i,2} = 0$ if the $i$\textsuperscript{th} subject is male. The model infers that female children generally weigh less than male children. The discrepancy in weight is small at birth, but changes rapidly from birth to 6 months. Panel (c) shows the inferred difference between children born in urban or rural strata, where $x_{i,3} = 0$ if the $i$\textsuperscript{th} subject was born in an urban area. Panel (d) shows the inferred difference between children born in the rainy or dry season, where $x_{i,4} = 0$ if the $i$\textsuperscript{th} subject was born during the rainy season. Compared to mother's height or sex, these covariate effects are smaller in magnitude with greater uncertainty.

\subsection{Inferential results for the associations between weight dynamics and longitudinal covariates}

\begin{figure}[t]
\begin{center}
 \begin{tabular}{ccc}
\includegraphics[width = 1.5 in]{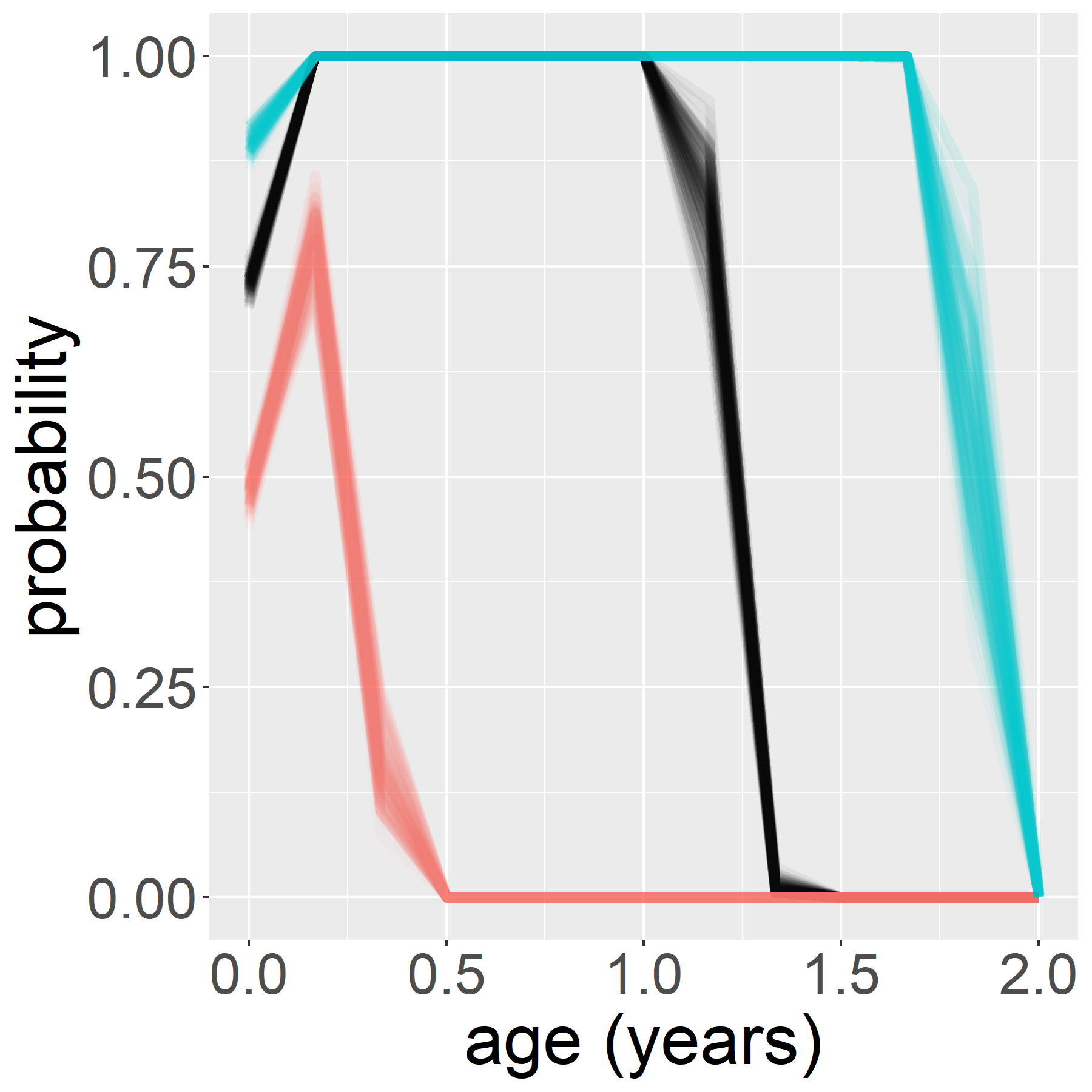} & \includegraphics[width = 1.5 in]{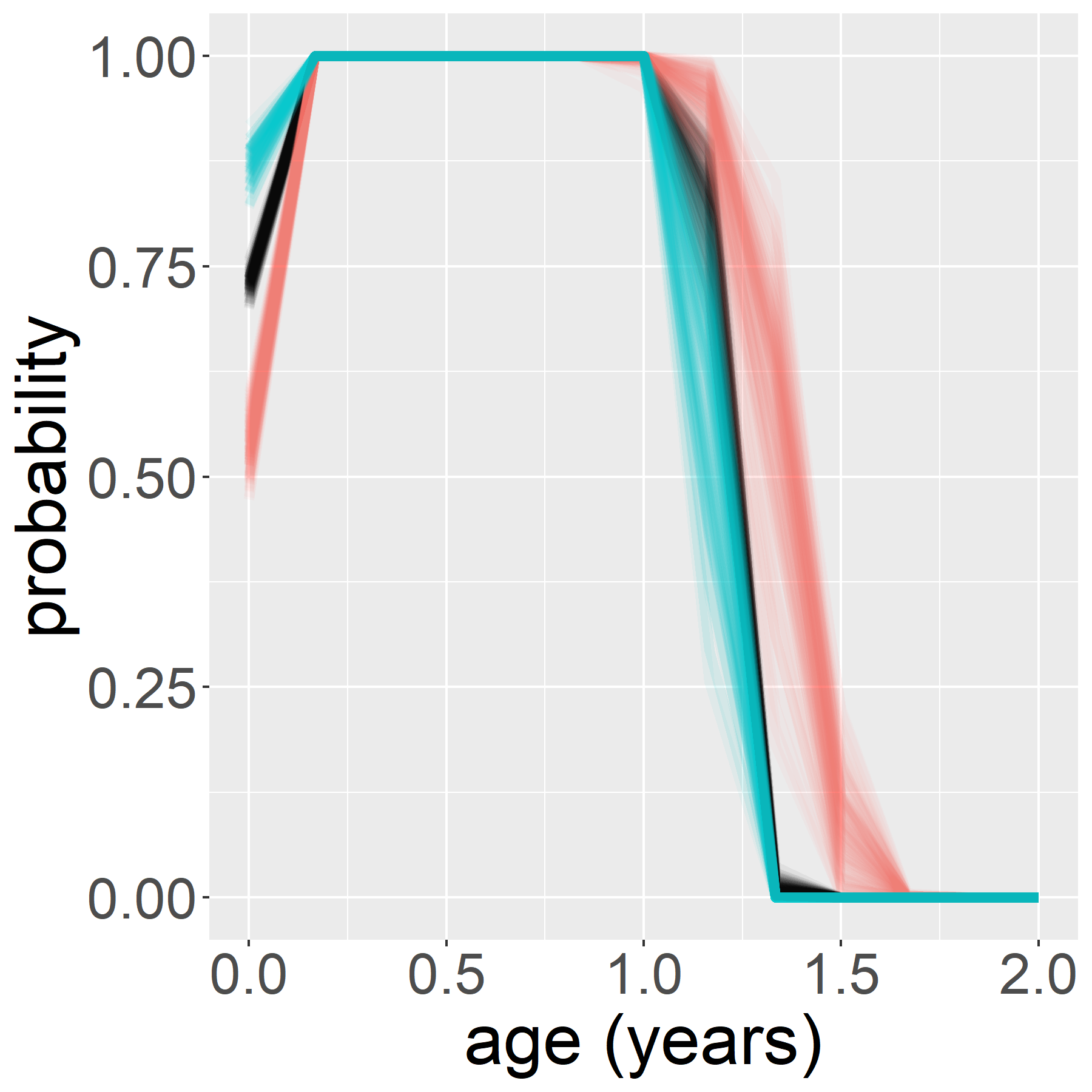}  & \includegraphics[width = 1.5 in]{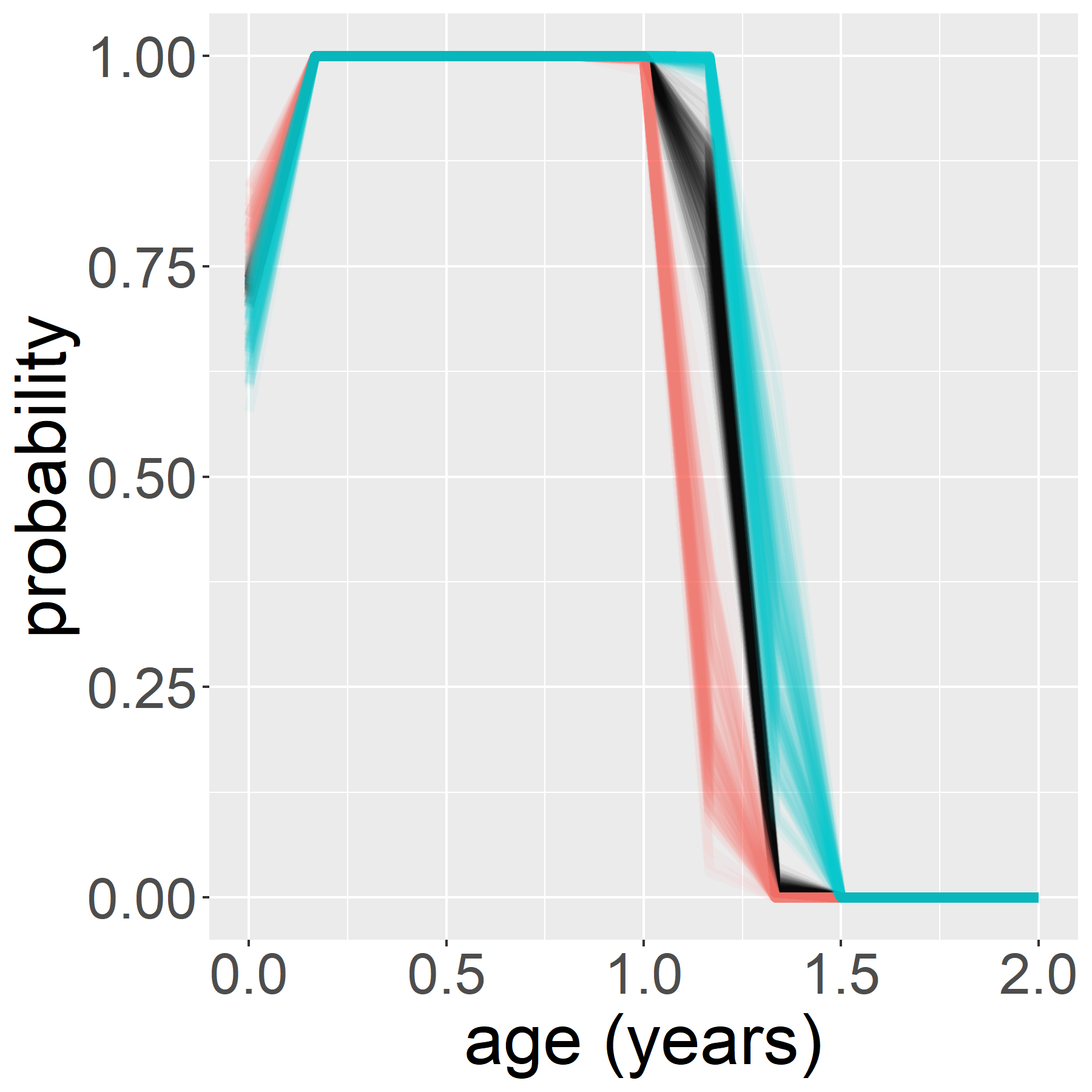} \\
(a) & (b) & (c)\\
\includegraphics[width = 1.5 in]{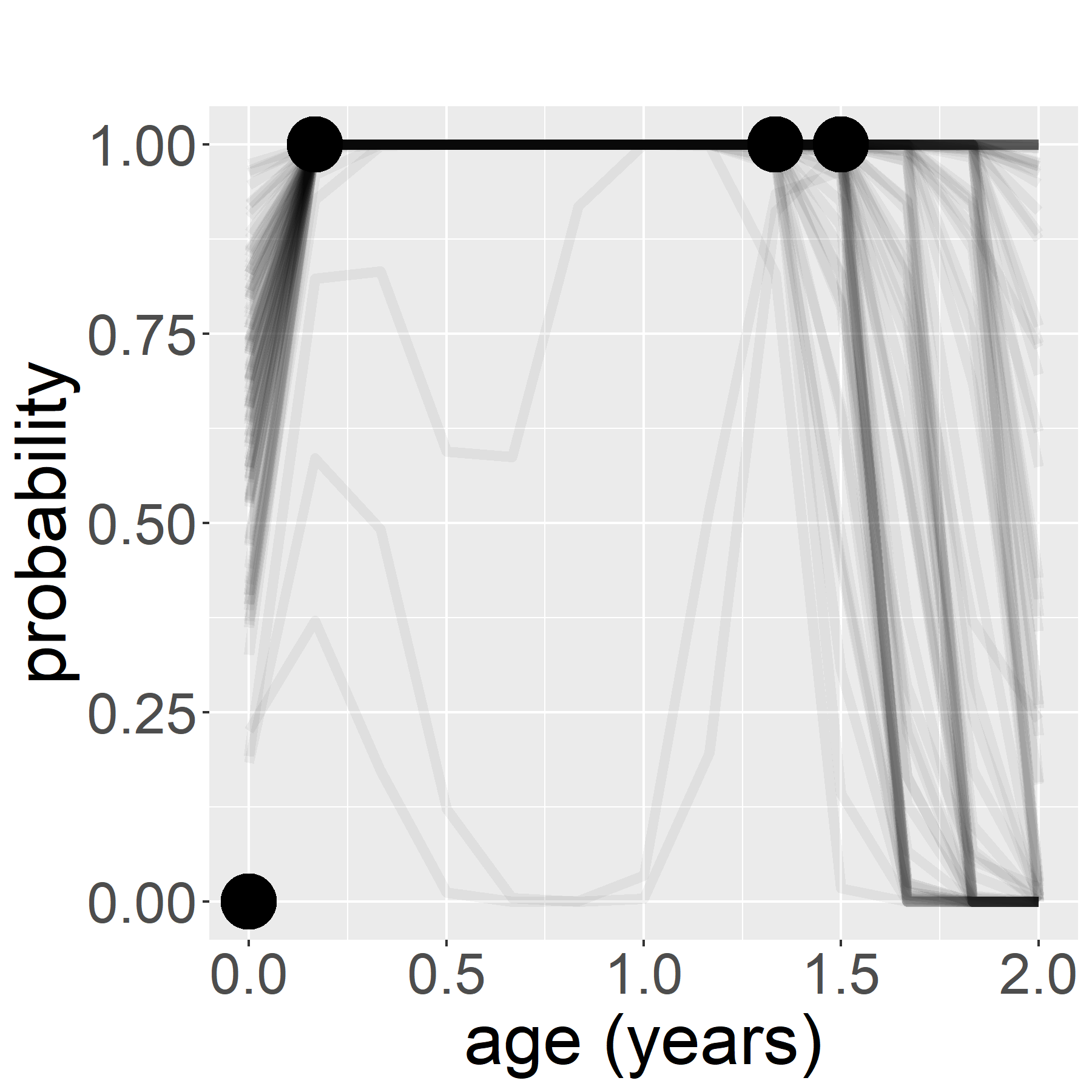} & \includegraphics[width = 1.5 in]{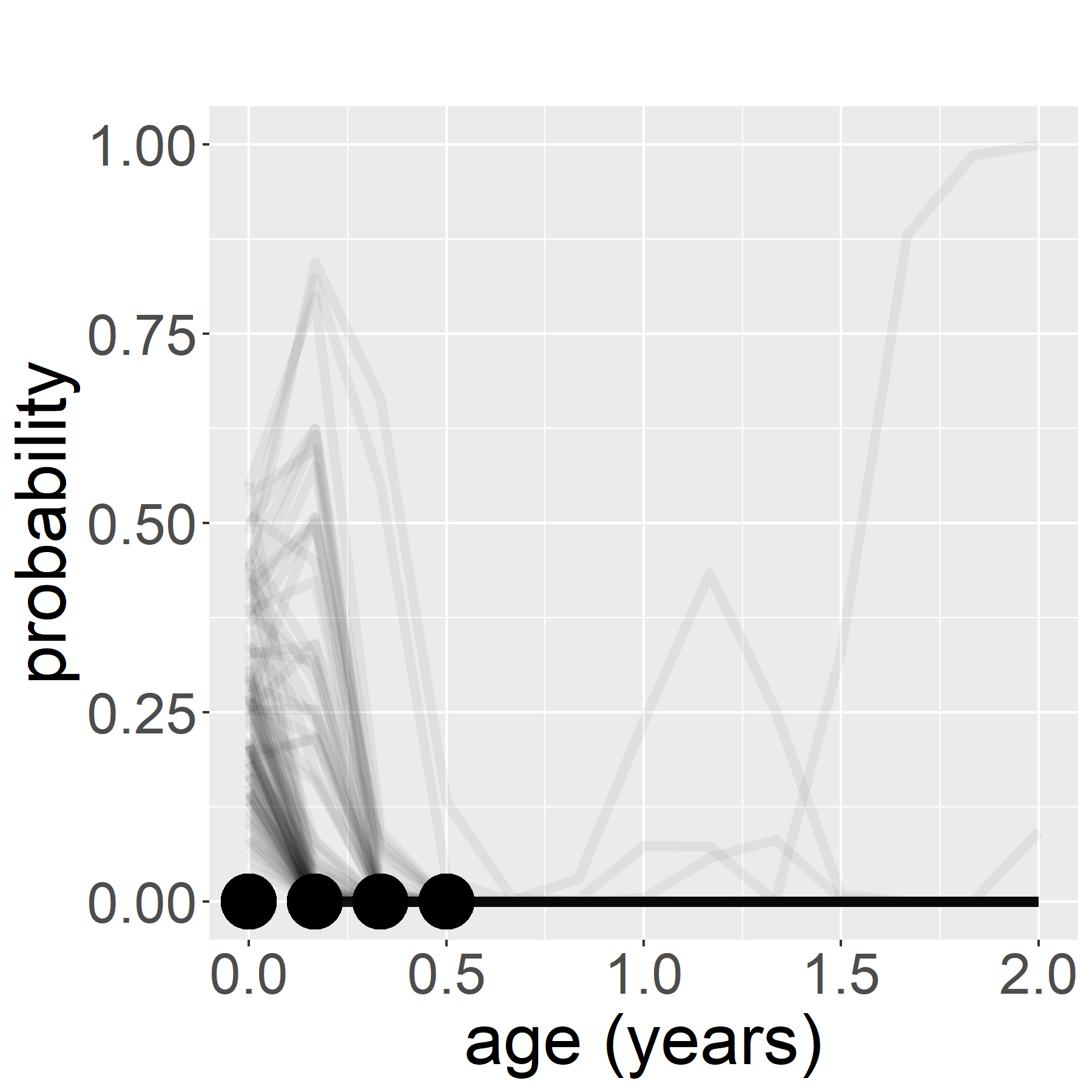}  & \includegraphics[width = 1.5 in]{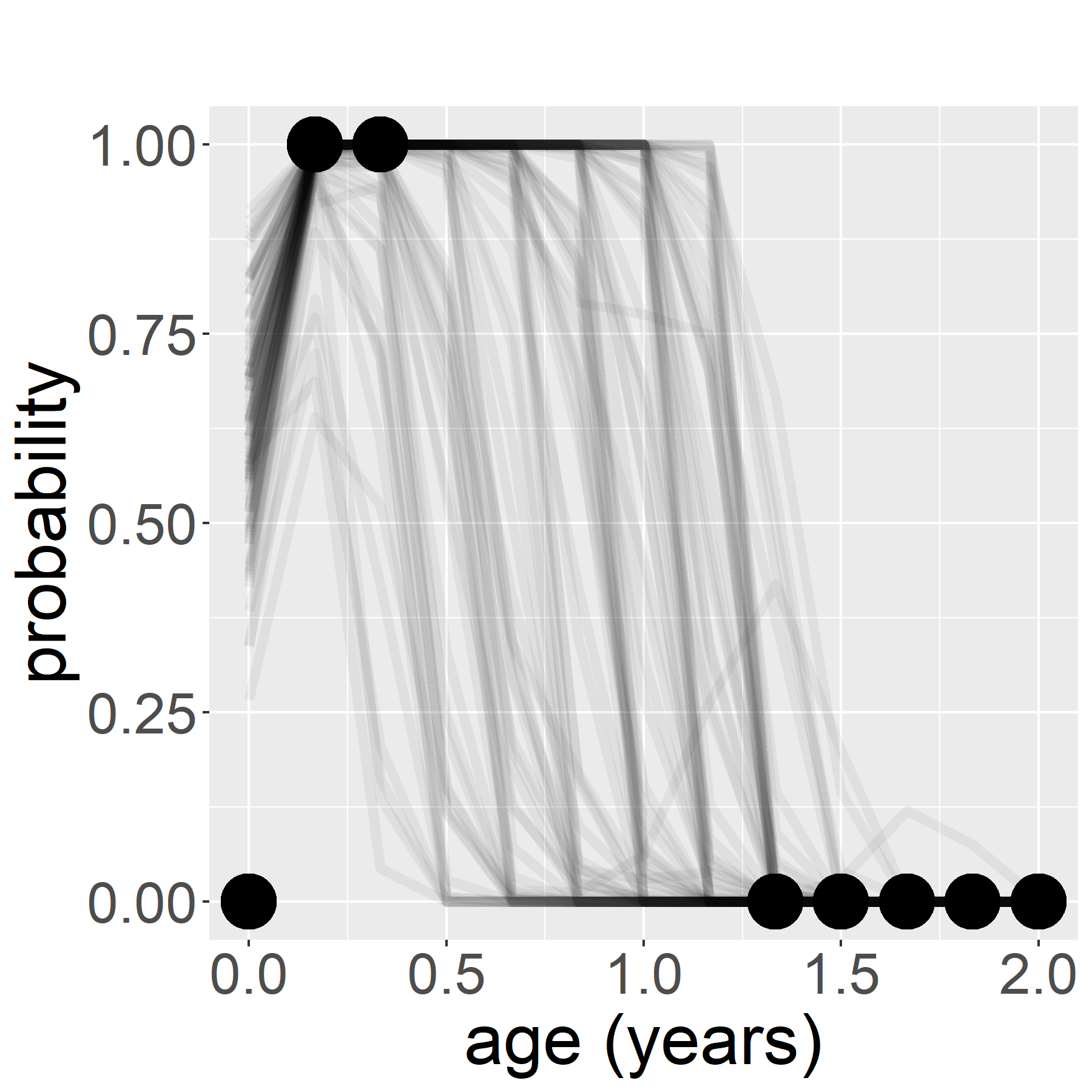} \\
(d) & (e) & (f)
\end{tabular}
\caption{(a)-(c) Posterior samples of $\Phi(\mu^{z_1}(\protect\vv{t}))$ (black) and $\Phi(\mu^{z_1}(\protect\vv{t}) \pm \text{sd}(\eta^z_{\cdot k})\lambda^z_{k}(\protect\vv{t}))$ (blue and red) for $k=1,2,3$, representing the variability described by the loadings of breastfeeding trajectories. (d)-(f) Posterior samples fitted breastfeeding trajectories, $\Phi(\mu^{z_1}(\protect\vv{t}) + (\lambda^z_1(\protect\vv{t}),\lambda^z_2(\protect\vv{t}),\lambda^z_3(\protect\vv{t}))^\top\eta^z_i)$ overlaid on $z_{i,1}(\protect\vv{t}_i)$ for $i = 250,\ 1079,\ 2087$.}\label{fig:cebu_bf_gfpca}
\end{center}
\end{figure}

\begin{figure}[t]
\begin{center}
 \begin{tabular}{cc}
\includegraphics[width = 1.5 in]{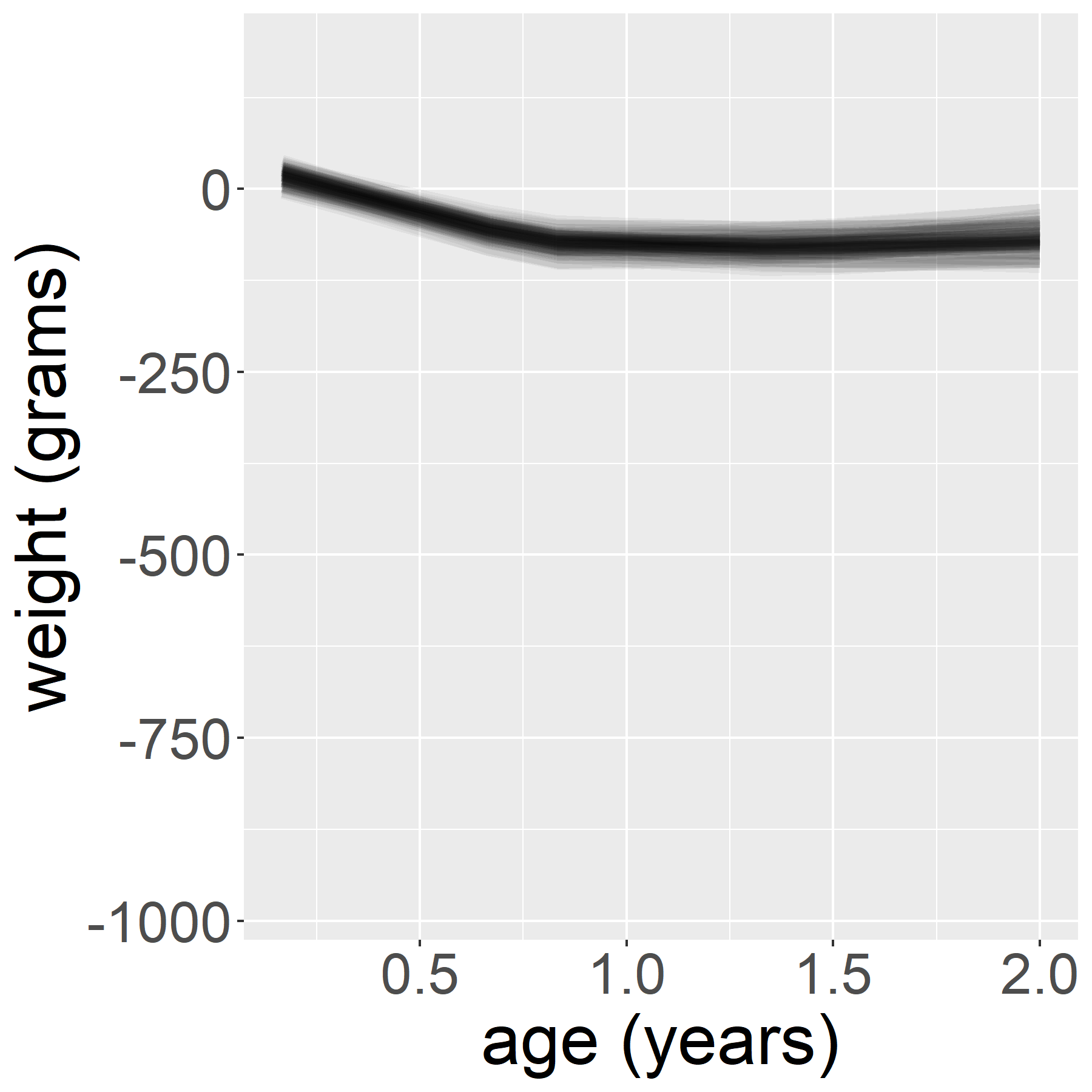} & \includegraphics[width = 1.5 in]{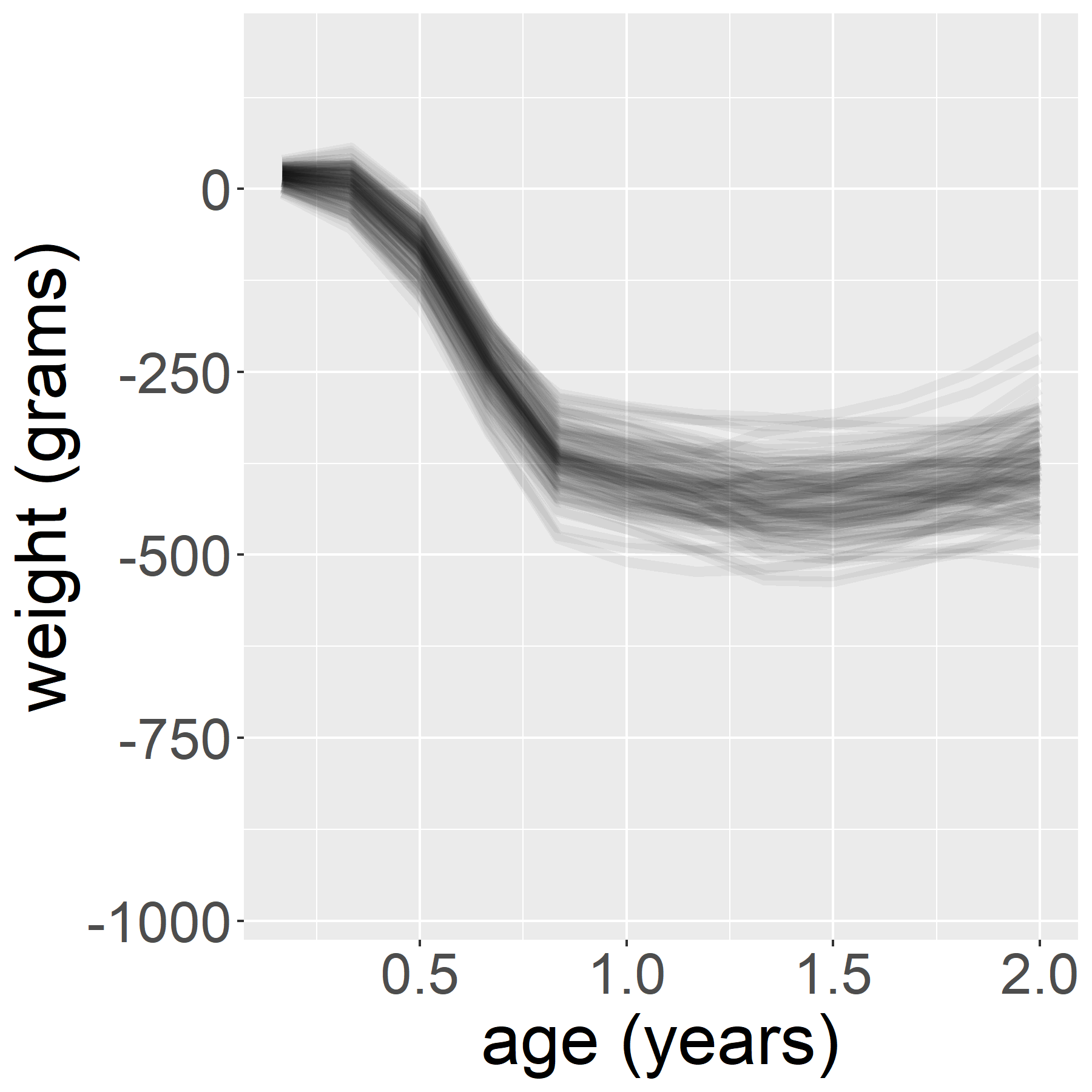} 
 \\ 
 (a) & (b) \\
 \includegraphics[width = 1.5 in]{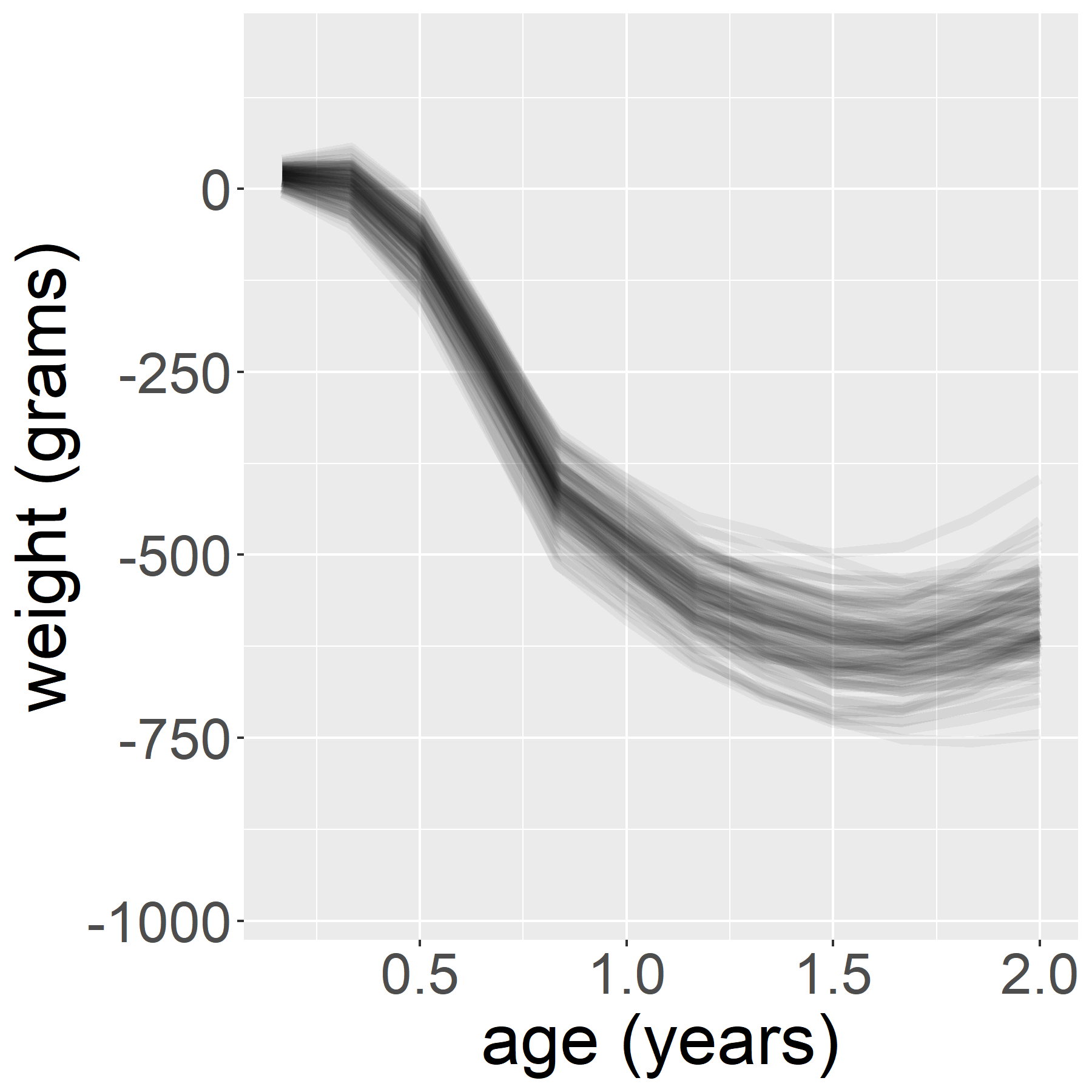} &\includegraphics[width = 1.5in]{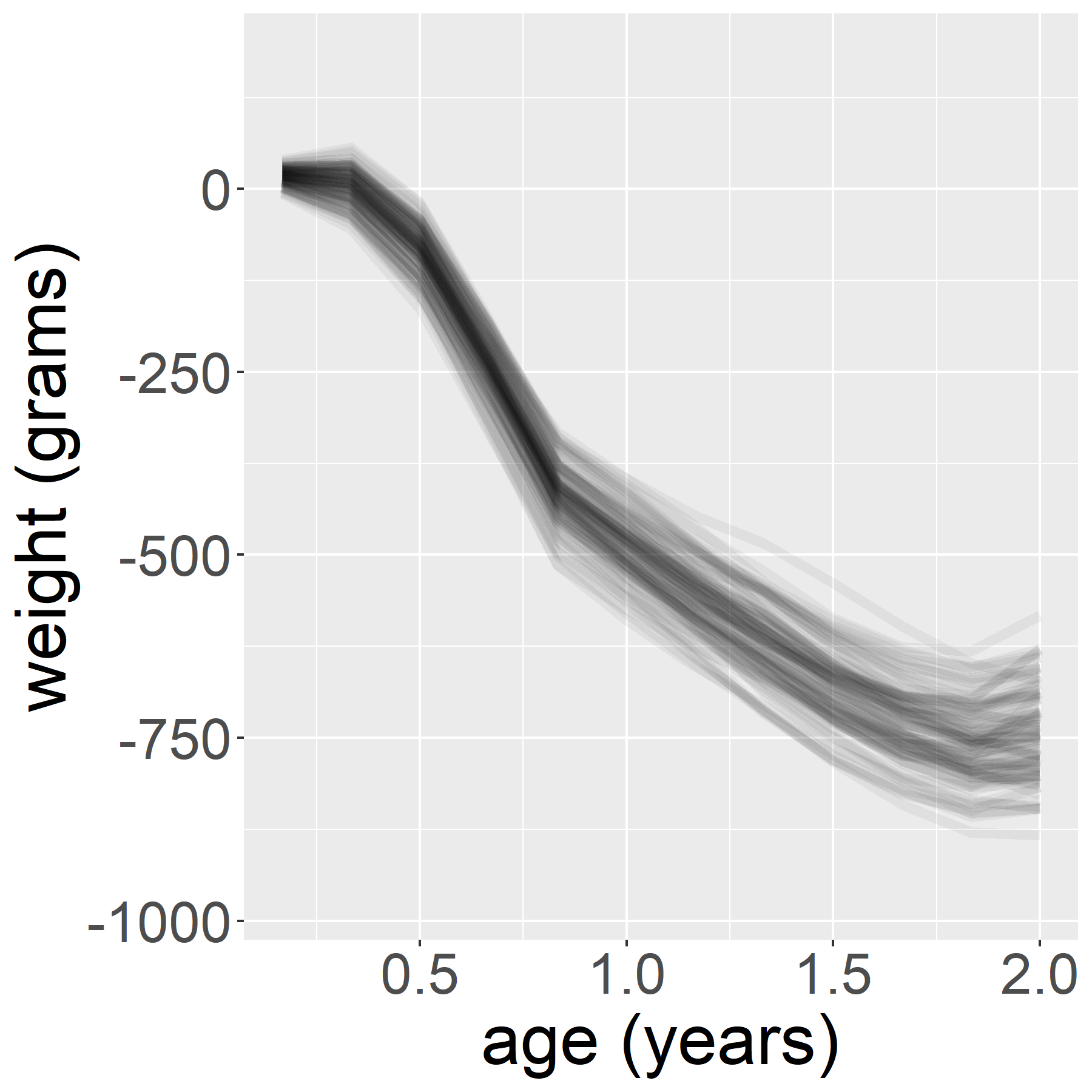} \\
(c) & (d) \\
\end{tabular}
 \begin{tabular}{c}
\includegraphics[width = 1.7 in]{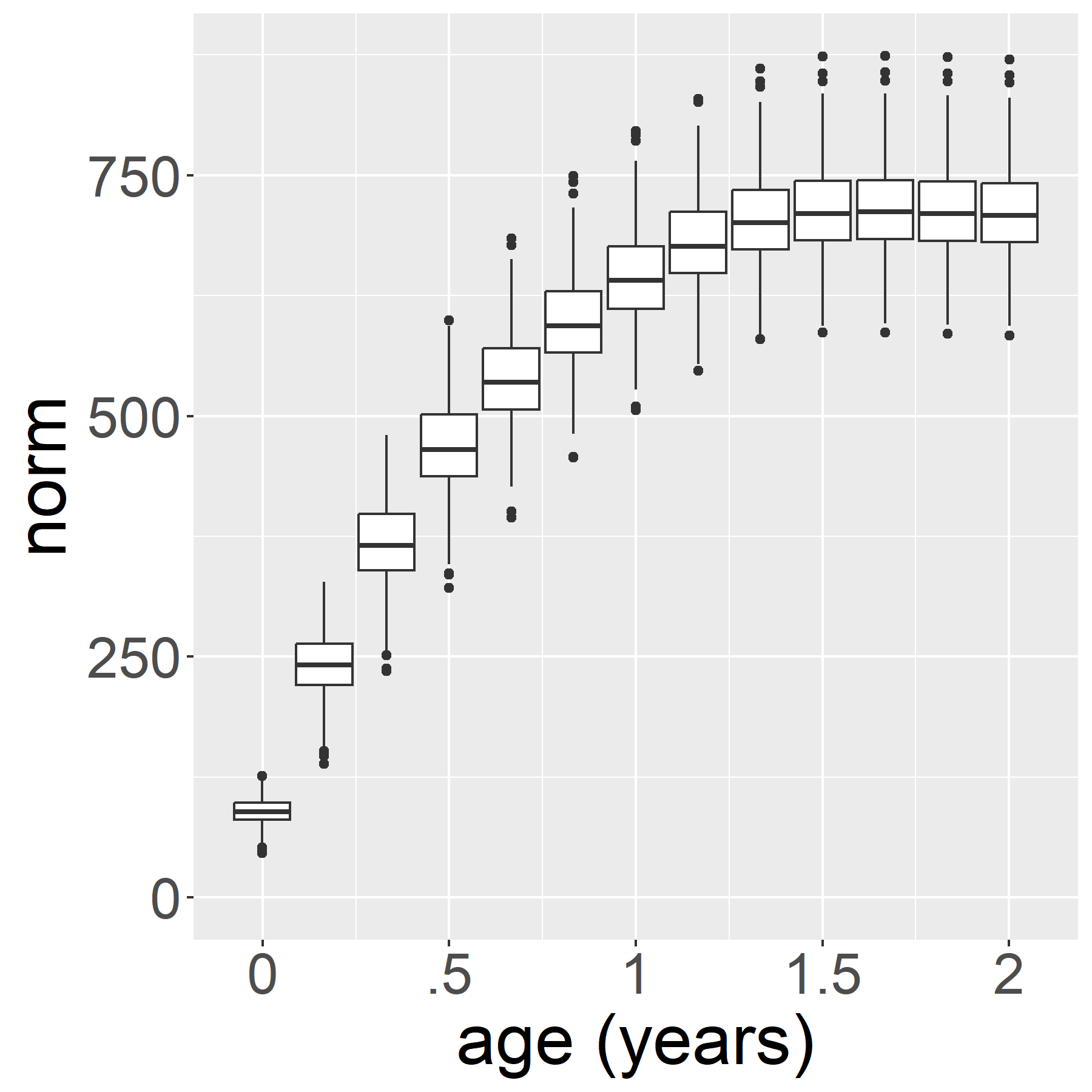} \\
(e)
\end{tabular}
\caption{(a)-(d) Posterior samples of $\int_{\mathcal{T}} \beta_1 1_{s\leq s'}ds$ for $s' = 0,\ 0.5,\ 1,\ 2$ years, representing the effect of breastfeeding for different durations on weight. (e) Boxplots of posterior samples of $\|\int_{\mathcal{T}} \beta_1 1_{s\leq s'}ds\|_2$ for different values of $s'$, representing the magnitude of the effect of breastfeeding for different durations. Children that are fed breast milk tend to weigh less than non-breast milk fed counterparts. The effect of breastfeeding on weight dynamics tends to level off after 1 year.}\label{fig:cebu_bf_reg}
    \end{center}
\end{figure}

\begin{figure}[t]
\begin{center}
 \begin{tabular}{ccc}
\includegraphics[width = 1.5 in]{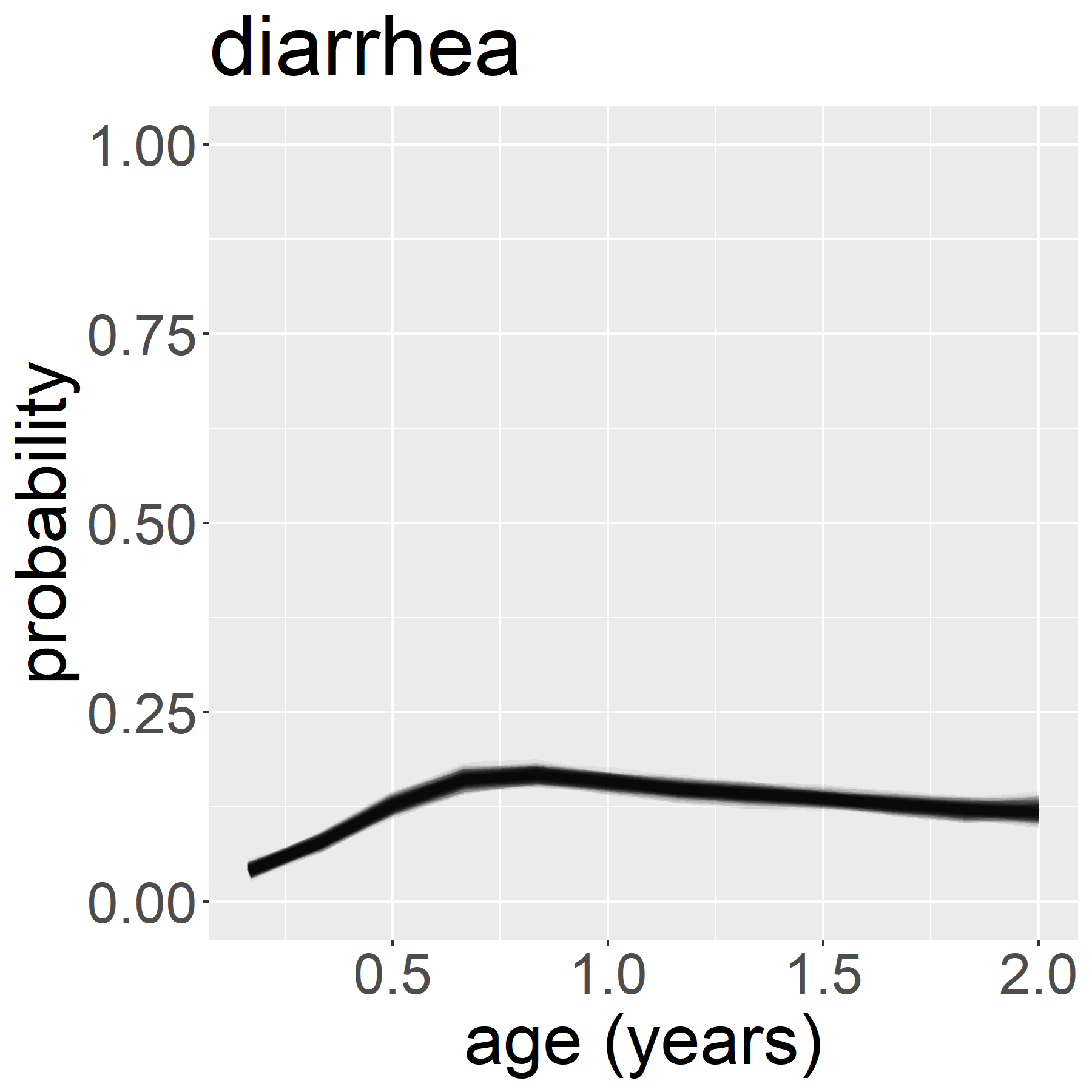} & \includegraphics[width = 1.5 in]{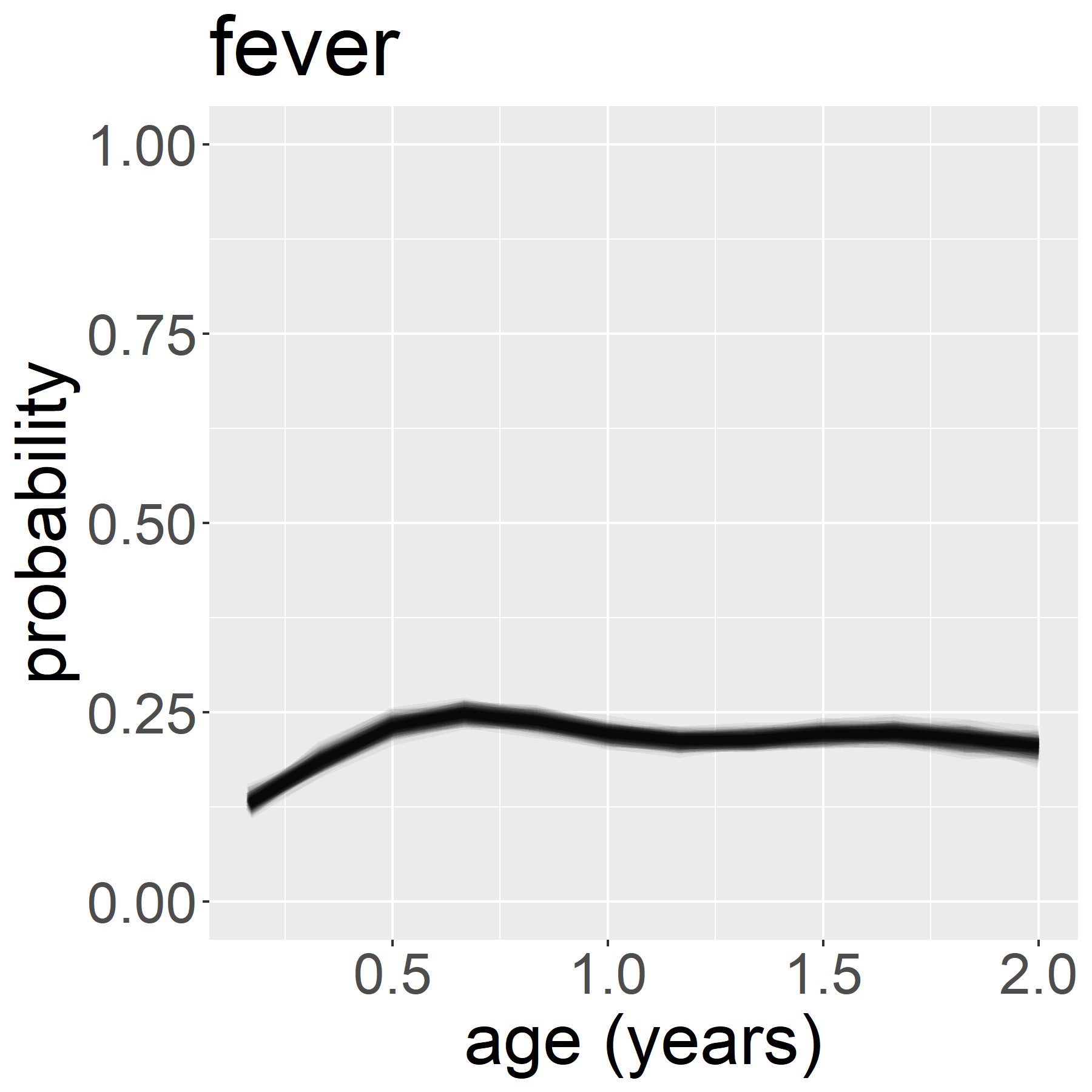}  & \includegraphics[width = 1.5 in]{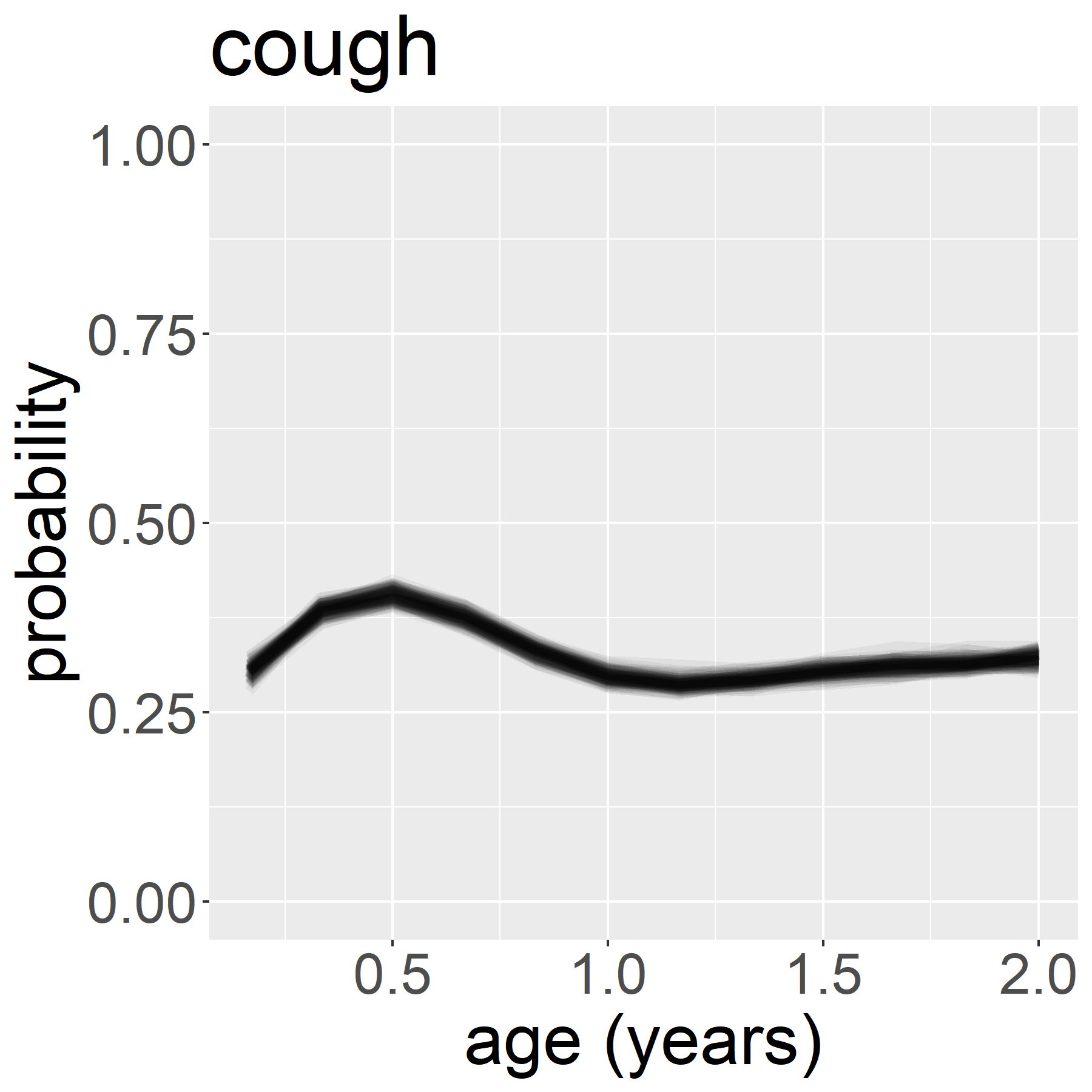} \\
(a) & (b) & (c)\\
\includegraphics[width = 1.5 in]{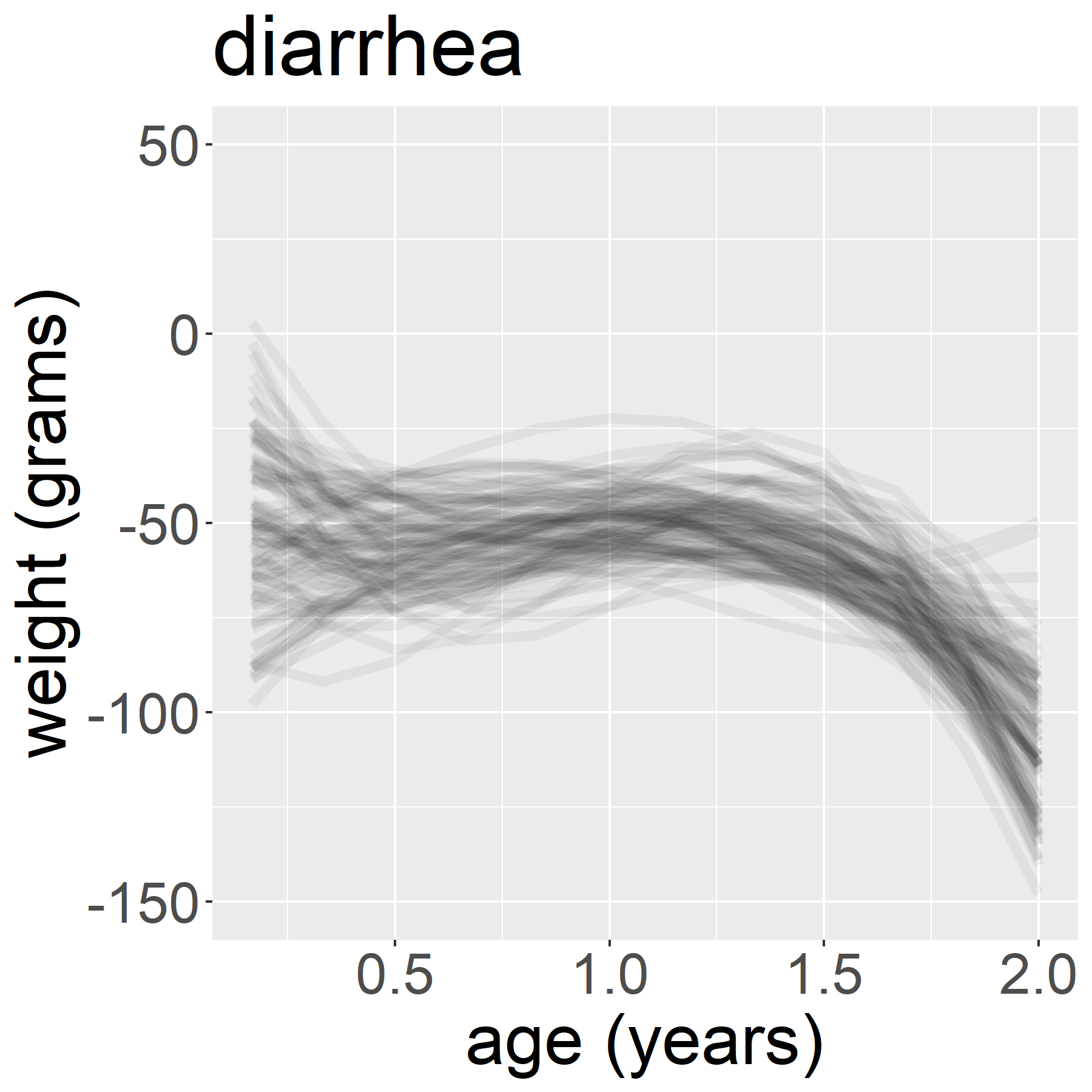} & \includegraphics[width = 1.5 in]{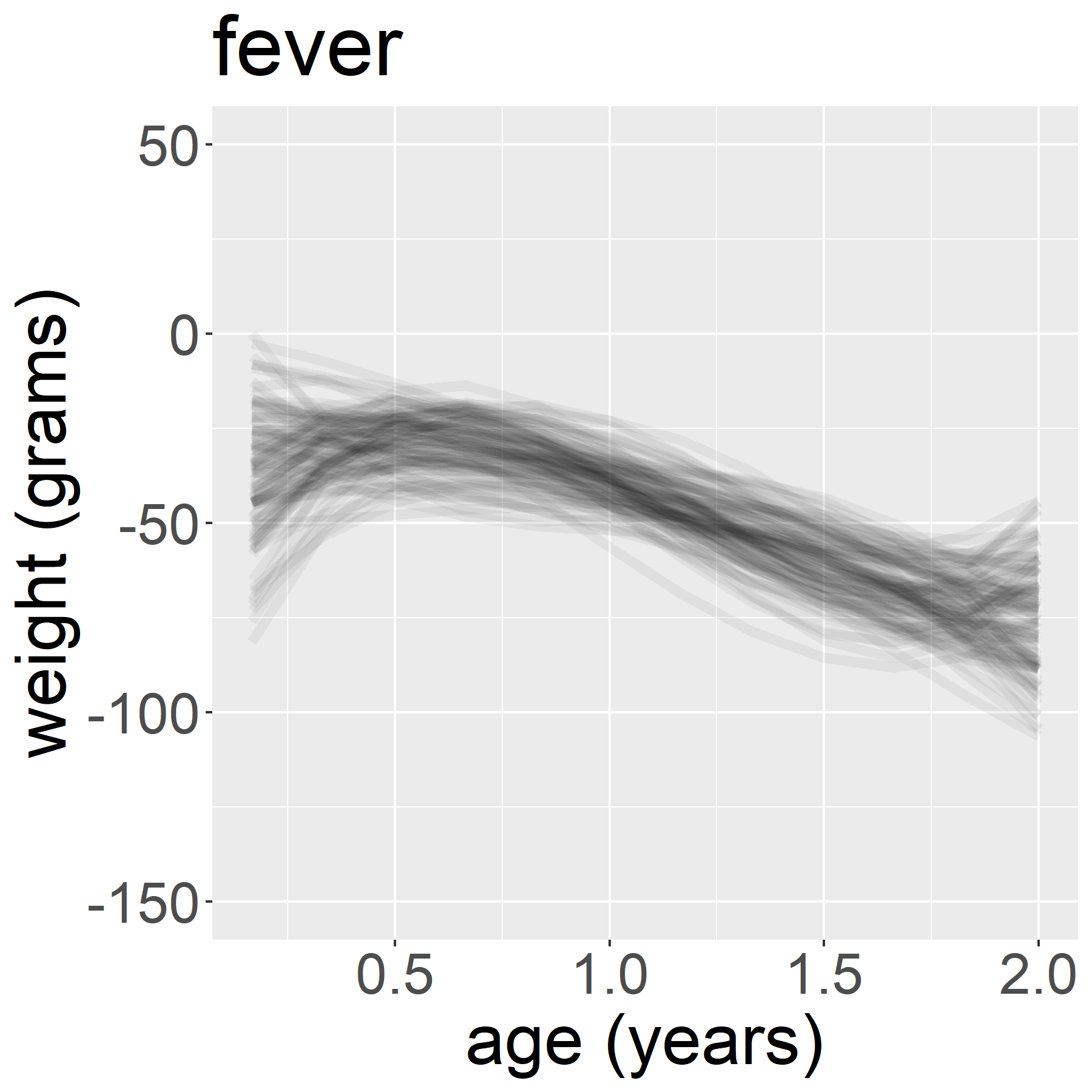}  & \includegraphics[width = 1.5 in]{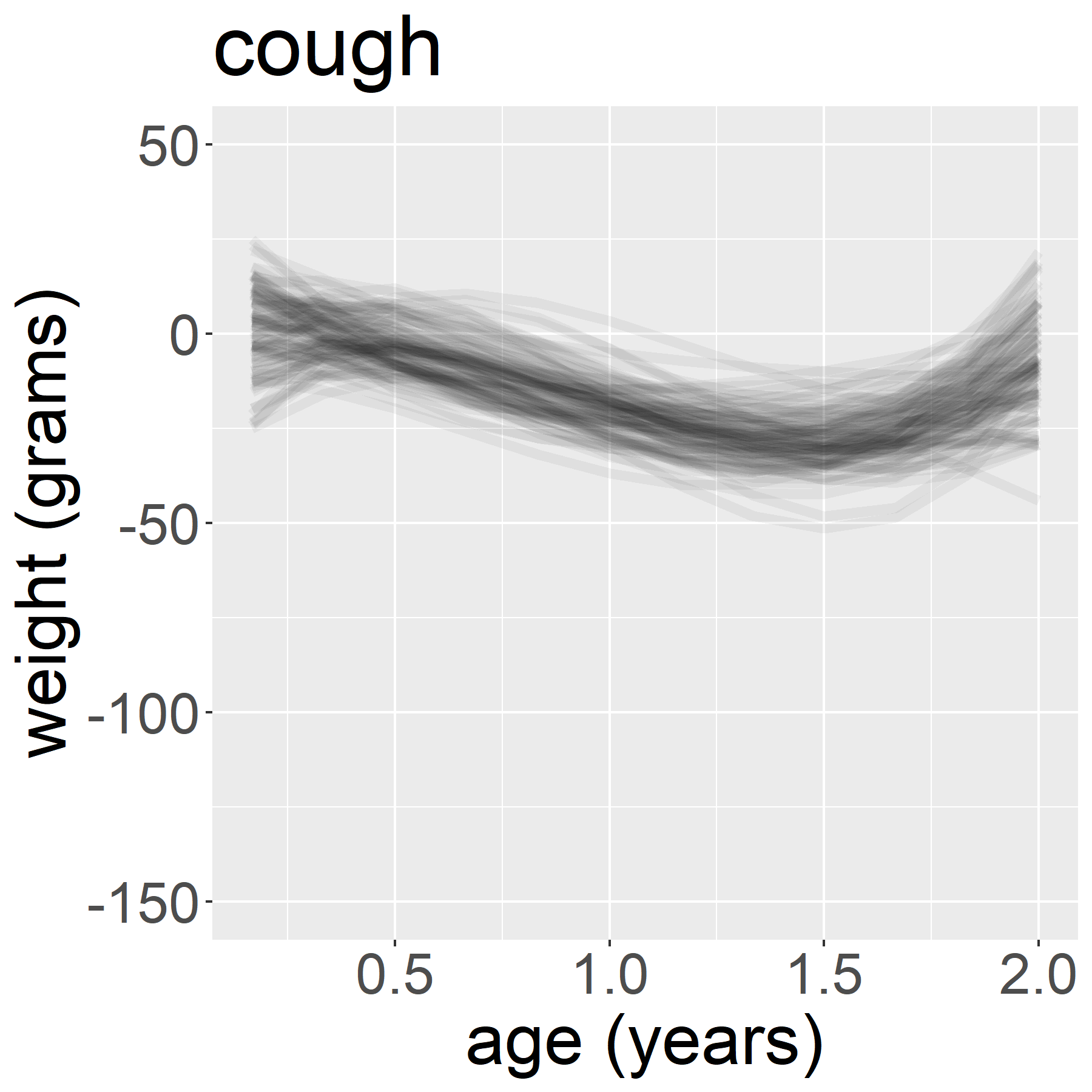} \\
(d) & (e) & (f)
\end{tabular}
\caption{ (a)-(c) Posterior samples of $\Phi(\mu^{z_j}(\protect\vv{t}))$ for $j=2,3,4$, representing the proportion of children in the study that experienced an illness by age.(d)-(f) Posterior samples of $\beta_j(\protect\vv{t})$, $j = 2,3,4$, representing the effect of experiencing an illness on weight. Children tend to experience cough more often than fever and fever more often than diarrhea. Experiencing diarrhea or fever tend to have a larger effect on weight than experiencing cough.}\label{fig:cebu_ill}
    \end{center}
\end{figure}

In order to compute the integral $\int_{T}\beta_1(s,\vv{t})z_{i,1}(s)ds$ related to the functional linear model relating breastfeeding status to weight, we rely on generalized functional factor analysis for model based imputation of the binary functional data $z_{i,1},\ i =1,\ldots,n$. Figure \ref{fig:cebu_bf_gfpca} panels (a)-(c) visualize the variability in breastfeeding dynamics captured through our generalized functional factor analysis model by showing posterior samples of plus and minus one standard deviation in the direction of the generalized loadings interpreted on the probability scale through the link function $\Phi(\cdot)$. The loadings mostly capture whether or not a child was breastfed at all and, if so, when breastfeeding was discontinued. In the posterior samples of the mean, the values at birth are lower than at 2 months. This is due to the fact that breastfeeding status at birth was measured within the first 2 days of birth, and it could take longer than 2 days for mothers to express breast milk for the first time after childbirth \citep{casey1986}. Panels (d)-(f) of Figure \ref{fig:cebu_bf_gfpca} shows posterior samples of underlying processes describing the probability that a child is fed breast milk by age for different subjects. In panel (d), breastfeeding status was only recorded at 4 ages (black dots). Between the ages when the child was reported to have breast milk, the model infers that the child was fed breast milk. After the last recording, the model is uncertain if and when the child stopped. In panel (e), initially the child was reported to not have breast milk, so the model infers that the child was not given breast milk after the age of the last recording. In panel (f), the child was reported to have breastfed early on, and stopped at some age after age 1. The model is uncertain when breastfeeding was discontinued. 

In order to understand the effects of breastfeeding on child weight, we study posterior samples of the functions $\int_{\mathcal{T}} \beta_1(s,s') 1_{s\leq s'}ds$, where $1_{s\leq s'}$ denotes the indicator function. These functions capture the effect of breastfeeding on weight at $s'$ years with other variables held constant. They can be compared to the effect of never breastfeeding captured by the zero function. Figure \ref{fig:cebu_bf_reg} shows the effect of breastfeeding for  $s' = 0,\ 0.5,\ 1,\ 2$ years. If a child is breastfed only at birth, the effect is close to zero (panel (a)). While breastfed and non-breastfed children are similar in weight at a younger age, their discrepancy is pronounced at larger ages, where breastfed children tend to weigh less. This difference is exaggerated the longer that a child is breastfed (panels (b) through (d)). The effect of breastfeeding tends to level off, as shown through $\|\int_{\mathcal{T}} \beta_1(s,s') 1_{s\leq s'}ds\|_2$ for different values of $s'$ shown in panel (e). Continuing breastfeeding for more months in the first year has a significant impact on weight, while continuing longer after age 1 has a relatively small impact. These inferences corroborate past work devoted to studying the effects of breastfeeding. \cite{dewey1998} conduct a meta-analysis studying effects of breastfeeding on child growth outcomes, including weight, height and  head circumference. They report that breastfed infants tend to self-regulate food intake, have lower observable metabolic rates, and are leaner than their non-breastfed counterparts. The differences in weight between breastfed and non-breastfed children tend to be noticeable after 6 months of age. 

In contrast with the breastfeeding covariates, there does not appear to be structured variability in timing of illness, with children experiencing diarrhea, fever and cough sporadically. Consequently, model based imputation is based only on a mean process interpreted on the probability scale through the link function $\Phi(\cdot)$. Figure \ref{fig:cebu_ill} panels (a)-(c) show posterior samples of the average probability that a child experiences diarrhea, fever, or cough by age. Generally, children experienced fever more often than diarrhea, and cough more often than fever. In all of these mean processes, there appear to be local maxima before year 1. For the illness longitudinal covariates we assume a concurrent regression model, which simplifies the integrals, $\int_{T}\beta_j(s,t)z_{i,j}(s)ds = \beta_j(t)z_{i,j}(t)$, for all $j = 2,3,4$, $i = 1\ldots,n$.  The effects on weight of experiencing these illnesses with all other variables held constant are shown in panels (d)-(f). Diarrhea tends to affect weight more than fever, and fever tends to affect weight more than cough. These regression functions tend to be more negative at higher ages. We speculate that the effects of diarrhea and fever affect weight more than cough because they can lead to dehydration.

In Appendix 5.3, we present an analysis of these data using \cite{crainiceanu2010}. The inferential results are generally consistent with our $\small{\mbox{NeMO}}$-based analysis with a few marked differences. In the competing model, there is no ability to propagate uncertainty in the estimated mean process and factor loadings while inferring covariate effects on weight. Consequently, the empirical Bayes setup of the competing model can distort the inferences made about effect sizes. There are discrepancies between the competing model and the $\small{\mbox{NeMO}}$ model in the magnitude of effects of season of birth and breastfeeding. In the competing model, season of birth has a larger effect on weight, and breastfeeding has a smaller effect on weight. To further compare the two models, we use the widely applicable information criterion to assess predictive performance \citep{vehtari2017}, with preference for the $\small{\mbox{NeMO}}$ model (69709.10) compared to the model of \cite{crainiceanu2010} (69905.07). Further, we assess fit of the $\small{\mbox{NeMO}}$ model by residual analysis, performing posterior predictive checks and considering reduced models omitting covariates with seemingly small effects. We find that the model is adequate in describing individual- and population-level features of the observations, and the full model with all covariates considered is preferred over reduced models according to the widely applicable information criterion. For this dataset, we conclude that the model provides reasonable fit to the observations while enabling interpretable inference with uncertainty quantification.

\section{Discussion} \label{sec:discussion}

In this work, we have presented a novel approach for Bayesian functional factor analysis through the introduction of $\mbox{NeMO}$ processes. As illustrated by the data analysis of the Cebu Longitudinal Health and Nutrition Survey, our approach can flexibly model functional data, while enabling interpretable inference with coherent uncertainty quantification. We have identified several directions for future work.

Our approach shrinks towards the orthogonality constraint needed to make meaningful inference about factor loadings, while lending itself to a computationally simple Gibbs sampler. Future work could explore more complex sampling regimes. For example, MCMC convergence properties during a burn-in period could be monitored to adaptively select the penalty parameters for $\mbox{NeMO}$ processes. Additionally, it will be interesting to further study this approach using gradient-based MCMC to simultaneously update all model parameters.

There is a rich literature on Bayesian factor analysis and the problem of how to best choose a prior for high-dimensional loadings. It is particularly popular to take an over-fitted factor modeling approach, starting with an upper bound on the number of factors $K$ and then choosing a shrinkage prior that favors setting columns of the loadings matrix for unnecessary factors equal to zero \citep{bhattacharya2011,Schiavon2022}. Such priors must increase shrinkage aggressively enough to protect against factor splitting but not too much so that important factors may have their loadings over-shrunk; achieving the appropriate tradeoff in practice is a subtle problem lacking associated theory. The use of mutual orthogonality to encourage factor sparsity represents an alternative to shrinkage methods, which deserves further theoretical and empirical study.


One aspect of functional data that we have not considered is phase variability, corresponding to the variability of the relative timing of functional features, such as extrema. \cite{marron2015} discuss this concept and state that models that explicitly account for this variability can be more parsimonious. In future work, we plan to extend our model to account for phase variability, along the lines of \cite{kneip2008,earls2017}. One challenge is developing a prior model for phase that is computationally efficient and can scale to large samples.

Our implementation utilizes independent parent Gaussian processes with squared exponential covariance kernels, though this framework is naturally extensible to alternative parent distributions. For example, it is automatic to consider multivariate Gaussian parent distributions, obtaining new types of methods for Bayesian factor analysis that can shrink towards the distribution discussed in \cite{jauch2021}. It is also interesting to consider parent Gaussian processes that assume different covariance kernels or more intricate parent processes, such as L\'evy processes.

\section*{Acknowledgements}
This work was supported by NIH grants RO1ES027489 and R01ES028804. 

\clearpage

  \bigskip
  \bigskip
  \bigskip
  \begin{center}
    {\LARGE Supplemental Material: Bayesian modeling of nearly mutually orthogonal processes}
\end{center}
  \medskip
  
\linespread{1}

\begin{abstract}
This is the supplemental material for the paper entitled `Bayesian modeling of nearly mutually orthogonal processes'. In Section \ref{sec:proofs}, we present proofs for the propositions stated in the main paper. In Section \ref{sec:prior_illustration}, we present figures that illustrate the role of covariance hyperparameters in the prior. In Section \ref{sec:MCMC}, we provide full implementation details for the functional factor analysis models and generalized functional factor analysis model for binary functional data. In Section \ref{sec:additionalSim}, we present additional simulation examples that compare our approach to other state-of-the-art methods. In Section \ref{sec:cebuAdd}, we provide additional details related to the data analysis of the Cebu Longitudinal Health and Nutrition Survey.
\end{abstract}

\linespread{2}

\appendix

\section{Proofs}\label{sec:proofs}

\begin{proposition}
Let $E_\omega :=  \{\lambda_1,\ldots,\lambda_K\in\mathcal{H}\ \mid\ \sum_{k = 1}^K\sum_{j<k}\langle\lambda_j,\lambda_k \rangle^2 > \omega \}$. For any $\nu_\lambda > 0$ and measurable subset $E_\omega'\subseteq E_\omega$ with non-zero $\mathcal{G}(E_\omega')$, \begin{equation}
    \frac{\mathcal{N}(E_\omega')}{\mathcal{G}(E_\omega')} \leq F\exp\big(-\frac{\omega}{2\nu_\lambda}\big),\nonumber
\end{equation}
where $F = \big(\mathbb{E}_{\mathcal{G}}[\exp{\{-\frac{1}{2\nu_\lambda}\sum_{k = 1}^K\sum_{j<k}\langle \lambda_j,\lambda_k\rangle^2\}}]\big)^{-1}$ is a normalizing constant.
\end{proposition}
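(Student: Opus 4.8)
The plan is to work directly from the definition of $\mathcal{N}$ in \eqref{eq:jointPrior}: first promote the proportionality to an exact identity by inserting the normalizing constant $F$, and then exploit the fact that the penalty term is uniformly small on the super-level set $E_\omega$. In effect this is a Chernoff/Markov-type tail bound — $E_\omega$ is the event that the statistic $\sum_{k=1}^K\sum_{j<k}\langle\lambda_j,\lambda_k\rangle^2$ exceeds $\omega$, and reweighting $\mathcal{G}$ by a decreasing function of that statistic must deflate the mass of such an event.

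First I would check that $F$ is well defined and finite, so that $\mathcal{N}$ is a genuine probability measure. Since $0 < \exp\{-\frac{1}{2\nu_\lambda}\sum_{k=1}^K\sum_{j<k}\langle\lambda_j,\lambda_k\rangle^2\} \le 1$ pointwise and $\mathcal{G}$ is a probability measure, the expectation $\mathbb{E}_{\mathcal{G}}[\exp\{-\frac{1}{2\nu_\lambda}\sum_{k=1}^K\sum_{j<k}\langle\lambda_j,\lambda_k\rangle^2\}]$ lies in $(0,1]$, hence $F \in [1,\infty)$. Consequently, for any measurable $E_\omega'\subseteq\mathcal{H}$,
\[
\mathcal{N}(E_\omega') = F\int_{E_\omega'} \exp\Big\{-\frac{1}{2\nu_\lambda}\sum_{k=1}^K\sum_{j<k}\langle\lambda_j,\lambda_k\rangle^2\Big\}\, d\mathcal{G}(\lambda_1,\ldots,\lambda_K).
\]
Next, using the defining property of $E_\omega$: for every $(\lambda_1,\ldots,\lambda_K)\in E_\omega'\subseteq E_\omega$ we have $\sum_{k=1}^K\sum_{j<k}\langle\lambda_j,\lambda_k\rangle^2 > \omega$, so the integrand is bounded above by $\exp(-\frac{\omega}{2\nu_\lambda})$ throughout the domain of integration. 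Pulling this constant out gives $\mathcal{N}(E_\omega') \le F\exp(-\frac{\omega}{2\nu_\lambda})\int_{E_\omega'}d\mathcal{G} = F\exp(-\frac{\omega}{2\nu_\lambda})\,\mathcal{G}(E_\omega')$, and dividing by $\mathcal{G}(E_\omega')>0$ — which is legitimate by hypothesis — yields the stated inequality.

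There is no serious obstacle in this argument; the only places that warrant a moment of care are (i) verifying that $F<\infty$, which is what makes the displayed identity for $\mathcal{N}(E_\omega')$ valid, and (ii) ensuring the uniform bound $\exp\{-\frac{1}{2\nu_\lambda}\sum_{k}\sum_{j<k}\langle\lambda_j,\lambda_k\rangle^2\} \le \exp(-\frac{\omega}{2\nu_\lambda})$ is invoked on $E_\omega'$ rather than on all of $\mathcal{H}$ — the restriction to the super-level set is precisely what forces the penalty to be uniformly small. The hypothesis $\mathcal{G}(E_\omega')\neq 0$ is needed only so that the ratio $\mathcal{N}(E_\omega')/\mathcal{G}(E_\omega')$ is meaningful.
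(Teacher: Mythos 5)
Your argument is correct and matches the paper's proof essentially step for step: insert the normalizing constant $F$ to write $\mathcal{N}(E_\omega')=F\int_{E_\omega'}\exp\{-\frac{1}{2\nu_\lambda}\sum_{k}\sum_{j<k}\langle\lambda_j,\lambda_k\rangle^2\}\,d\mathcal{G}$, bound the integrand uniformly by $\exp(-\frac{\omega}{2\nu_\lambda})$ on $E_\omega'$, and divide by $\mathcal{G}(E_\omega')>0$. Your additional remark that $F\in[1,\infty)$ because the penalty factor lies in $(0,1]$ is a small but welcome piece of extra care that the paper leaves implicit.
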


\begin{proof}
Note that for any $\lambda_1,\ldots,\lambda_K\in E_{\omega}'$, $\exp\big\{-\frac{1}{2\nu_\lambda} \sum_{k = 1}^K\sum_{j<k}\langle\lambda_j,\lambda_k \rangle^2\big\} \leq \exp\big(-\frac{\omega}{2\nu_\lambda}\big)$.

By the definition of $\mathcal{N}$,
\begin{align}
\mathcal{N}(E_\omega')& = F \int_{E_\omega'} \exp\big\{-\frac{1}{2\nu_\lambda} \sum_{k = 1}^K\sum_{j<k}\langle\lambda_j,\lambda_k \rangle^2\big\} d\mathcal{G}(\lambda_1,\ldots,\lambda_K)\nonumber\\
& =  F\int_{\mathcal{H}} 1_{\{\lambda_1,\ldots,\lambda_K \in E_\omega'\}}\exp\big\{-\frac{1}{2\nu_\lambda} \sum_{k = 1}^K\sum_{j<k}\langle\lambda_j,\lambda_k \rangle^2\big\}d\mathcal{G}(\lambda_1,\ldots,\lambda_K)\nonumber\\
& \leq F\exp\big(-\frac{\omega}{2\nu_\lambda}\big) \int_{E_\omega'} d\mathcal{G}(\lambda_1,\ldots,\lambda_K)  = F\exp\big(-\frac{\omega}{2\nu_\lambda}\big) \mathcal{G}(E_\omega')\nonumber
\end{align}
Since we assumed $\mathcal{G}(E_\omega')>0$, it follows
\begin{equation}
    \frac{\mathcal{N}(E_\omega')}{\mathcal{G}(E_\omega')} \leq F\exp\big(-\frac{\omega}{2\nu_\lambda}\big).\nonumber
\end{equation}
\end{proof}

\begin{proposition}
Let $E :=  \{\lambda_1,\ldots,\lambda_K\in\mathcal{H}\ \mid\ \langle\lambda_j,\lambda_k \rangle \neq 0, \text{ for some } j\neq k \}$.
For any measurable subset $E'\subseteq E$, $\underset{\nu_\lambda\rightarrow 0}{\lim}\mathcal{N}(E') = 0$.
\end{proposition}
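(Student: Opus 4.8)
The plan is to exploit the fact that the event $E$ is defined as a subset of $\mathcal{H}$, the product of the reproducing kernel Hilbert spaces of the parent kernels. For the kernels of interest --- squared exponential, or any strictly positive definite kernel --- each such RKHS is infinite-dimensional, and an infinite-dimensional Gaussian measure assigns zero probability to its Cameron--Martin (reproducing kernel Hilbert) space \citep{van2008}. Hence $\mathcal{G}(\mathcal{H}) = \prod_{k=1}^K \mathcal{G}_k(\mathcal{H}_k) = 0$, where $\mathcal{G}_k$ and $\mathcal{H}_k$ are the law and RKHS of the $k$-th parent process. Since $E' \subseteq E \subseteq \mathcal{H}$ while $\mathcal{N} \ll \mathcal{G}$ by construction, it follows that $\mathcal{N}(E') = 0$ for every $\nu_\lambda > 0$, and a fortiori $\lim_{\nu_\lambda \to 0}\mathcal{N}(E') = 0$.

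To make contact with the $\nu_\lambda \to 0$ heuristic --- and to handle a reading in which the $\lambda_k$ are only required to be functions for which the inner products are defined --- I would also record the direct computation from the Radon--Nikodym derivative \eqref{eq:jointPrior}. Set $S := \sum_{k=1}^K\sum_{j<k}\langle\lambda_j,\lambda_k\rangle^2 \ge 0$, which is measurable (a sum of squares of continuous bilinear functionals), and $Z_{\nu_\lambda} := \mathbb{E}_{\mathcal{G}}\{\exp(-S/(2\nu_\lambda))\}$. Then
\[
  \mathcal{N}(E') \;=\; \frac{1}{Z_{\nu_\lambda}}\int_{E'}\exp\!\Big(-\tfrac{1}{2\nu_\lambda}\,S\Big)\,d\mathcal{G}, \qquad E' \subseteq E = \{S > 0\}.
\]
On $E'$ the integrand decreases pointwise to $0$ as $\nu_\lambda \downarrow 0$ and is dominated by $1$, so dominated convergence drives the numerator to $0$; the identical monotone/dominated convergence argument over the whole space gives $Z_{\nu_\lambda} \downarrow \mathcal{G}(\{S = 0\})$. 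Hence $\mathcal{N}(E') \to 0$ provided $\mathcal{G}(\{S=0\}) > 0$, and the general case $E' \subseteq E$ then follows from $\mathcal{N}(E') \le \mathcal{N}(E)$ by monotonicity.

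The measurability bookkeeping, the rewriting of $\mathcal{N}(E')$ as a ratio, the convergence of $Z_{\nu_\lambda}$, and the monotonicity reduction are all routine. The genuine obstacle is the normalizing constant $Z_{\nu_\lambda}$ as $\nu_\lambda \to 0$: there is no one-line lower bound, and whether it stays bounded away from $0$ depends on whether $\mathcal{G}$ charges the exactly-orthogonal set $\{S = 0\}$. For strictly positive definite kernels, $\langle\lambda_j,\lambda_k\rangle$ has a continuous law under $\mathcal{G}$, so $\mathcal{G}(\{S=0\}) = 0$ and the ratio argument on its own is inconclusive --- which is precisely why the clean proof instead leans on $E \subseteq \mathcal{H}$ being a $\mathcal{G}$-null event.
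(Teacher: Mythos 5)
Your second argument is essentially the paper's own route: the paper writes $\mathcal{N}(E') \propto \int_{E'}\exp\{-S/(2\nu_\lambda)\}\,d\mathcal{G}$ with $S=\sum_{k}\sum_{j<k}\langle\lambda_j,\lambda_k\rangle^2$ and drives this integral to zero by bounded convergence, exactly as you do for your numerator. The difference is that you make the normalizing constant explicit, and the obstacle you flag is genuine: $Z_{\nu_\lambda}=\mathbb{E}_{\mathcal{G}}\{\exp(-S/(2\nu_\lambda))\}$ decreases to $\mathcal{G}(\{S=0\})$, which is zero for the kernels the paper has in mind, so ``numerator $\to 0$'' alone does not give $\mathcal{N}(E')\to 0$; the paper's proof sidesteps this by treating the constant of proportionality as if it did not depend on $\nu_\lambda$. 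In fact, if $\mathcal{G}(\{S=0\})=0$, then for the admissible choice $E'=E=\{S>0\}$ one has $\mathcal{N}(E)=1-\mathcal{G}(\{S=0\})/Z_{\nu_\lambda}=1$ for every $\nu_\lambda>0$, so no sharper control of $Z_{\nu_\lambda}$ can rescue the ratio argument in that reading; nontrivial decay is only available on sets bounded away from orthogonality, as in Proposition \ref{propNemo3}. So you have correctly located the real difficulty, but your proposal does not close it.

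Your first argument does not repair this. Reading $\mathcal{H}$ literally as the product of reproducing kernel Hilbert spaces and invoking $\mathcal{G}(\mathcal{H})=0$ (the Cameron--Martin space of an infinite-dimensional Gaussian measure is null) proves the proposition only by making it vacuous, and it is inconsistent with the paper's own usage: the proof of Proposition \ref{propNemo2} explicitly computes the constant of proportionality as $\mathcal{G}(\mathcal{H})^{-1}=1$, i.e.\ $\mathcal{H}$ is being treated as carrying the full Gaussian mass, and Propositions \ref{propNemo2} and \ref{propNemo3} would likewise be empty statements under your reading. Under the intended reading the premise $\mathcal{G}(\mathcal{H})=0$ fails (the topological support of the process is the closure of the RKHS, not the RKHS itself), so the ``clean'' proof collapses, and what remains is your admittedly inconclusive ratio computation. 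In short: the proposal is not a complete proof of the intended statement; its value is that your bookkeeping of $Z_{\nu_\lambda}$ exposes a step that the paper's own proof also leaves unjustified.
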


\begin{proof}
For any $(\lambda_1,\ldots,\lambda_k) \in E'$, $\underset{\nu_\lambda\rightarrow 0}{\lim}\exp{\big\{-\frac{1}{2\nu_\lambda}\sum_{k = 1}^K\sum_{j<k}\langle \lambda_j,\lambda_k\rangle^2\big\} = 0}$.

By definition of $\mathcal{N}$,
\begin{align}
    \underset{\nu_\lambda\rightarrow 0}{\lim}\mathcal{N}(E') & \propto \underset{\nu_\lambda\rightarrow 0}{\lim}\int_{E'} \exp{\{-\frac{1}{2\nu_\lambda}\sum_{k = 1}^K\sum_{j<k}\langle \lambda_j,\lambda_k\rangle^2\big\}} d\mathcal{G}(\lambda_1,\ldots,\lambda_K) \label{prop_1_e1}\\
    & = \int_{\mathcal{H}} \underset{\nu_\lambda\rightarrow 0}{\lim} 1_{\{\lambda_1,\ldots,\lambda_K\in 
    E'\}}\exp{\big\{-\frac{1}{2\nu_\lambda}\sum_{k = 1}^K\sum_{j<k}\langle \lambda_j,\lambda_k\rangle^2\big\}} d\mathcal{G}(\lambda_1,\ldots,\lambda_K) \label{prop_1_e2}\\
    & = \int_{\mathcal{H}} 0\, d\mathcal{G}(\lambda_1,\ldots,\lambda_K) = 0.\nonumber
\end{align}
The equality from Equation \eqref{prop_1_e1} to \eqref{prop_1_e2} follows from the Bounded Convergence Theorem, since $\exp{\big\{-\frac{1}{2\nu_\lambda}\sum_{k = 1}^K\sum_{j<k}\langle \lambda_j,\lambda_k\rangle^2\big\}}\leq 1$.

\end{proof}

\begin{proposition}
For any measurable set $A\subseteq \mathcal{H}$, $\underset{\nu_\lambda\rightarrow \infty}{\lim}\mathcal{N}(A) = \mathcal{G}(A)$.
\end{proposition}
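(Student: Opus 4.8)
The plan is to work directly with the Radon--Nikodym representation \eqref{eq:jointPrior} and interchange the limit $\nu_\lambda \to \infty$ with integration. Write $Q(\lambda_1,\ldots,\lambda_K) := \sum_{k=1}^K \sum_{j<k} \langle \lambda_j,\lambda_k\rangle^2 \ge 0$. Then, with the normalizing constant $F$ from Proposition \ref{propNemo3},
\[
\mathcal{N}(A) = F \int_A \exp\!\Big\{-\tfrac{1}{2\nu_\lambda} Q(\lambda_1,\ldots,\lambda_K)\Big\}\, d\mathcal{G}, \qquad F^{-1} = \mathbb{E}_{\mathcal{G}}\!\Big[\exp\!\Big\{-\tfrac{1}{2\nu_\lambda} Q\Big\}\Big].
\]
Because $Q \ge 0$, for every $\nu_\lambda > 0$ the integrand satisfies $0 < \exp\{-Q/(2\nu_\lambda)\} \le 1$, and as $\nu_\lambda \to \infty$ it increases pointwise to $1$; in particular $F^{-1} \in (0,1]$, so $\mathcal{N}$ is a well-defined probability measure for each $\nu_\lambda > 0$.

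First I would fix an arbitrary sequence $\nu_\lambda \to \infty$ and apply the Bounded Convergence Theorem (equivalently Monotone Convergence), which applies since $\mathcal{G}$ is a probability measure and the integrands are dominated by the constant function $1$. Applied to the denominator this gives $F^{-1} \to \mathbb{E}_{\mathcal{G}}[1] = 1$, hence $F \to 1$. Applied to the numerator it gives $\int_A \exp\{-Q/(2\nu_\lambda)\}\, d\mathcal{G} \to \int_A 1\, d\mathcal{G} = \mathcal{G}(A)$. Multiplying the two limits---legitimate because $\lim F$ is finite and the limit of the numerator exists---yields $\mathcal{N}(A) \to \mathcal{G}(A)$. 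Since the sequence was arbitrary, the limit as $\nu_\lambda \to \infty$ exists and equals $\mathcal{G}(A)$.

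The argument is short once the product form $\mathcal{N}(A) = F \int_A (\cdots)\, d\mathcal{G}$ is in hand, so I do not anticipate a genuine obstacle; the only points requiring attention are that the limit may be passed through the numerator and the normalizing integral simultaneously, and that $\lim F = 1$ is finite and nonzero so that the limit of the product equals the product of the limits. Both follow immediately from nonnegativity of $Q$ and the fact that $\mathcal{G}$ has total mass $1$, and---unlike in Proposition \ref{propNemo3}---no quantitative control of $F$ or tail bounds on the parent Gaussian processes are needed.
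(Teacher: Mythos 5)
Your proposal is correct and follows essentially the same route as the paper: both use the Radon--Nikodym form of $\mathcal{N}$, pass the limit $\nu_\lambda\to\infty$ inside the integral over $A$ via bounded convergence (the integrand being dominated by $1$ under the probability measure $\mathcal{G}$), and observe that the normalizing constant tends to $\mathcal{G}(\mathcal{H})^{-1}=1$. Your write-up is simply a bit more explicit about treating the numerator and the normalizing constant separately and combining the two limits.
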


\begin{proof}

For any measurable subset $A\subset \mathcal{H}$, $\underset{\nu_\lambda\rightarrow \infty}{\lim}\exp{\big\{-\frac{1}{2\nu_\lambda}\sum_{k = 1}^K\sum_{j<k}\langle \lambda_j,\lambda_k\rangle^2\big\}} = 1$ for any $\lambda_1,\ldots,\lambda_K\in A$. 

By definition of $\mathcal{N}$,
\begin{align}
\underset{\nu_\lambda\rightarrow \infty}{\lim}\mathcal{N}(A) & \propto \underset{\nu_\lambda\rightarrow \infty}{\lim}\int_{A} \exp{\big\{-\frac{1}{2\nu_\lambda}\sum_{k = 1}^K\sum_{j<k}\langle \lambda_j,\lambda_k\rangle^2\big\}} d\mathcal{G}(\lambda_1,\ldots,\lambda_K) \label{prop_2_e1}\\
& = \int_{\mathcal{H}} \underset{\nu_\lambda\rightarrow \infty}{\lim} 1_{\{\lambda_1,\ldots,\lambda_K\in 
A\}}\exp{\big\{-\frac{1}{2\nu_\lambda}\sum_{k = 1}^K\sum_{j<k}\langle \lambda_j,\lambda_k\rangle^2\big\}} d\mathcal{G}(\lambda_1,\ldots,\lambda_K) \label{prop_2_e2}\\
& = \int_{\mathcal{H}} 1_{\{\lambda_1,\ldots,\lambda_K\in 
A\}} d\mathcal{G}(\lambda_1,\ldots,\lambda_K) = \int_{A}d\mathcal{G}(\lambda_1,\ldots,\lambda_K) = \mathcal{G}(A).\nonumber
\end{align}
The equality from Equation \eqref{prop_2_e1} to \eqref{prop_2_e2} follows from the Bounded Convergence Theorem, since $\exp{\big\{-\frac{1}{2\nu_\lambda}\sum_{k = 1}^K\sum_{j<k}\langle \lambda_j,\lambda_k\rangle^2\big\}}\leq 1$. Note that in Equation \eqref{prop_2_e1}, the constant of proportionality is 

$$\big(\underset{\nu_\lambda\rightarrow \infty}{\lim} \int_{\mathcal{H}} \exp{\big\{-\frac{1}{2\nu_\lambda}\sum_{k = 1}^K\sum_{j<k}\langle \lambda_j,\lambda_k\rangle^2\big\}} d\mathcal{G}(\lambda_1,\ldots,\lambda_K)\big)^{-1} = \mathcal{G}(\mathcal{H})^{-1}= 1$$.
\end{proof}

\begin{proposition}
Given $\{\lambda_j\}_{j\neq k}$, $\lambda_k$ is a zero-mean Gaussian process with covariance
\begin{align}
        & C^{\nu_\lambda}_k(s,t) = C_k(s,t) - h_{\Lambda_{(-k)}}(s)^\top\{\nu_\lambda I_{K-1} + H_{\Lambda_{(-k)}}\}^{-1}h_{\Lambda_{(-k)}}(t),\text{ where} \label{eq:relaxedcov} \\
    & h_{\Lambda_{(-k)}}(t) = \int_{\mathcal{T}} C_k(s,t)\Lambda_{(-k)}(s)ds,\ H_{\Lambda_{(-k)}} = \int_{\mathcal{T}}\int_{\mathcal{T}} C_k(s,s')\Lambda_{(-k)}(s)\Lambda_{(-k)}(s')^\top ds ds' \label{eq:integrals}
\end{align}
with $\Lambda_{(-k)}(s) = \{\lambda_1(s),\ldots,\lambda_{k-1}(s),\lambda_{k+1}(s),\ldots,\lambda_K(s)\}^{\top}$.
\end{proposition}

\begin{proof}
Let $H_k$ denote the reproducing kernel Hilbert space defined by the $k$\textsuperscript{th} parent Gaussian process, with Gaussian measure $\mathcal{G}_k$. Similarly, let $H_{(-k)}$ denote the product reproducing kernel Hilbert spaces defined by all parent Gaussian Processes except the $k$\textsuperscript{th}, with Gaussian product measure $\mathcal{G}_{(-k)}$. Then, for any measurable $A\subset H_k$ and $B\subset H_{(-k)}$,
\begin{align}
    P_\mathcal{N}(\lambda_k\in A,\{\lambda_j\}_{j\neq k}\in B)& \propto \nonumber \\ & \hspace{-.5in} \int_\mathcal{H}1_{\big\{\lambda_k \in A,\{\lambda_j\}_{j\neq k}\in B\big\}} \exp\big\{-\frac{1}{2\nu_\lambda} \sum_{k = 1}^K\sum_{j<k}\langle\lambda_j,\lambda_k \rangle^2\big\} d\mathcal{G}(\lambda_1,\ldots,\lambda_K) \nonumber\\
    &  \hspace{-.5in}= \int_{\mathcal{H}_{(-k)}}\int_{\mathcal{H}_{k}}1_{\{\lambda_k \in A\}} \exp\big\{-\frac{1}{2\nu_\lambda} \sum_{j \neq k}\langle\lambda_j,\lambda_k \rangle^2\big\} d\mathcal{G}_k(\lambda_k) \nonumber\\
    & \hspace{-.45in}\times 1_{\big\{\{\lambda_j\}_{j\neq k}\in B\big\}} \exp\big\{-\frac{1}{2\nu_\lambda} \sum_{j \neq k }\sum_{j'<j}\langle\lambda_j,\lambda_{j'} \rangle^2\big\}d\mathcal{G}_{(-k)}(\{\lambda_j\}_{j\neq k}), \nonumber\\
    & \hspace{-.5in}= \int_{B}\int_{A} \exp\big\{-\frac{1}{2\nu_\lambda} \sum_{j \neq k}\langle\lambda_j,\lambda_k \rangle^2\big\} d\mathcal{G}_k(\lambda_k) d\mathcal{N}_{(-k)}(\{\lambda_j\}_{j\neq k}),\nonumber
\end{align}
where $\mathcal{N}_{(-k)}$ is the $\small{\mbox{NeMO}}$ measure defined using all parent Gaussian processes except the $k$\textsuperscript{th}. We denote $\mathcal{N}_k(A) \propto \int_{A} \exp\big\{-\frac{1}{2\nu_\lambda} \sum_{j \neq k}\langle\lambda_j,\lambda_k \rangle^2\big\} d\mathcal{G}_k(\lambda_k)$, which is the conditional probability measure \citep[Section 5.8]{resnick2019} induced by $\lambda_k\mid\{\lambda_j\}_{j\neq k}$, when $\lambda_1,\ldots,\lambda_K\sim \mathcal{N}$.

To show $\lambda_k\mid\{\lambda_j\}_{j\neq k}$ is a Gaussian process, we will show the conditional characteristic functional $\mathbb{E}_{\mathcal{N}_k}\big[\exp\big\{i\langle U,\lambda_k\rangle\big\}\mid\{\lambda_j\}_{j\neq k}\big]$ coincides with the characteristic functional of a zero-mean Gaussian process with covariance function $C^{\nu_\lambda}_k(\cdot,\cdot)$, for any function $U\in\mathcal{H}_k$ that satisfies $\mathbb{E}_{\mathcal{N}_k}\big[\exp\big\{i\langle U,\lambda_k\rangle\big\}\mid\{\lambda_j\}_{j\neq k}\big] < \infty$. We will make use of the lower Riemann sum, which satisfies $\frac{|T|}{m}\sum_{l=1}^m U(t^*_l)\lambda_k(t^*_l)  \leq \langle U,\lambda_k\rangle$ for $t_l = a + l\frac{(b-a)}{m}, \ l = 0,1,\ldots,m$, $t^*_l\in [t_{l-1},t_l],\ l=1,\ldots,m$, $T = [a,b]$. In general, we use $f(\vv{t}^*) = (f(t_1^*),\ldots,f(t_m^*))^\top$ to denote a vector of function evaluations on the grid $\vv{t}^* = (t_1^*,\ldots,t_m^*)^\top$.

First, note
\begin{align}
    \mathbb{E}_{\mathcal{G}_k}\Big[\exp\Big\{-\frac{1}{2\nu_\lambda}\sum_{j\neq k}\langle\lambda_j,\lambda_k\rangle^2\Big\}\mid\{\lambda_j\}_{j\neq k}\Big] \nonumber & \\ & \hspace{-2in} = \mathbb{E}_{\mathcal{G}_k}\Big[\underset{m\rightarrow\infty}{\lim}\exp\Big\{-\frac{|T|^2}{2\nu_\lambda m^2}\lambda_j(\vv{t}^*)^\top \Lambda_{-k}(\vv{t}^*)\Lambda_{-k}(\vv{t}^*)^\top\lambda_j(\vv{t}^*)\Big\}\mid\{\lambda_j\}_{j\neq k}\Big] \label{eq:prop4_eq1}\\
    & \hspace{-2in} = \underset{m\rightarrow\infty}{\lim}\int_{\mathbb{R}^m } \exp\Big\{-\frac{|T|^2}{2\nu_\lambda m^2}\lambda_j(\vv{t}^*)^\top  \Lambda_{-k}(\vv{t}^*)\Lambda_{-k}(\vv{t}^*)^\top\lambda_j(\vv{t}^*)\Big\}\nonumber\\
    & \hspace{-2 in} \times \big(2\pi \det\big[C_k(\vv{t}^*,\vv{t}^*)\big]\big)^{-\frac{1}{2}}\exp\Big\{-\frac{1}{2}\lambda_j(\vv{t}^*)^\top C_k(\vv{t}^*,\vv{t}^*)^{-1}\lambda_j(\vv{t}^*)\Big\}d\lambda_k(\vv{t}^*) \label{eq:prop4_eq2}\\
    & \hspace{-2 in} =  \det\big[C_k(\vv{t}^*,\vv{t}^*)\big]^{-\frac{1}{2}} \det\Big[C_k(\vv{t}^*,\vv{t}^*)^{-1} + \frac{|T|^2}{\nu_\lambda m^2}  \Lambda_{-k}(\vv{t}^*)\Lambda_{-k}(\vv{t}^*)^\top\Big]^{-\frac{1}{2}}\nonumber
\end{align}
where the interchange of expectation and the limit from Equation \eqref{eq:prop4_eq1} to \eqref{eq:prop4_eq2} follows from the bounded convergence theorem, since 

$$\exp\big\{-\frac{|T|^2}{2\nu_\lambda m^2}\lambda_j(\vv{t}^*)^\top \Lambda_{-k}(\vv{t}^*)\Lambda_{-k}(\vv{t}^*)^\top\lambda_j(\vv{t}^*)\big\}\leq 1.$$ 

Similarly, note
\begin{align}
    \mathbb{E}_{\mathcal{G}_k}\Big[\exp\big\{i\langle U,\lambda_k\rangle\big\}\exp\Big\{-\frac{1}{2\nu_\lambda}\sum_{j\neq k}\langle\lambda_j,\lambda_k\rangle^2\Big\}\mid\{\lambda_j\}_{j\neq k}\Big] \nonumber & \\ & \hspace{-3.2in} = \underset{m\rightarrow\infty}{\lim}\int_{\mathbb{R}^m }  \exp\Big\{i\frac{|T|}{m}U(\vv{t}^*)^\top\lambda_k(\vv{t}^*)\Big\}\big(2\pi \det\big[C_k(\vv{t}^*,\vv{t}^*)\big]\big)^{-\frac{1}{2}} \nonumber \\ &
    \hspace{-3.25in}\times \exp\Big\{-\frac{1}{2}\lambda_j(\vv{t}^*)^\top \big (C_k(\vv{t}^*,\vv{t}^*)^{-1} + \frac{|T|^2}{\nu_\lambda m^2}  \Lambda_{-k}(\vv{t}^*)\Lambda_{-k}(\vv{t}^*)^\top\big)\lambda_j(\vv{t}^*)\Big\}d\lambda_k(\vv{t}^*) \label{eq:prop4_eq3}, 
\end{align}
where the interchange of expectation and limit in Equation \eqref{eq:prop4_eq3} follows from the bounded convergence theorem, since $$\big|\exp\big\{i\frac{|T|}{m}U(\vv{t}^*)^\top\lambda_k(\vv{t}^*)-\frac{|T|^2}{2\nu_\lambda m^2}\lambda_j(\vv{t}^*)^\top \Lambda_{-k}(\vv{t}^*)\Lambda_{-k}(\vv{t}^*)^\top\lambda_j(\vv{t}^*)\big\}\big|\leq 1.$$

Finally, consider
\begin{align}
    \mathbb{E}_{\mathcal{N}_k}\big[\exp\big\{i\langle U,\lambda_k\rangle\big\}\mid\{\lambda_j\}_{j\neq k}\big] & = \frac{\mathbb{E}_{\mathcal{G}_k}\big[\exp\big\{i\langle U,\lambda_k\rangle\big\}\exp\big\{-\frac{1}{2\nu_\lambda}\sum_{j\neq k}\langle\lambda_j,\lambda_k\rangle^2\big\}\mid\{\lambda_j\}_{j\neq k}\big]}{\mathbb{E}_{\mathcal{G}_k}\big[\exp\big\{-\frac{1}{2\nu_\lambda}\sum_{j\neq k}\langle\lambda_j,\lambda_k\rangle^2\big\}\mid\{\lambda_j\}_{j\neq k}\big]}\nonumber\\
    & \hspace{-1.75 in} = \underset{m\rightarrow\infty}{\lim}\int_{\mathbb{R}^m} \exp\big\{i\frac{|T|}{m}U(\vv{t}^*)^\top\lambda_k(\vv{t}^*)\big\} (2\pi)^{-\frac{1}{2}} \\
    &\times\det\big[C_k(\vv{t}^*,\vv{t}^*)^{-1} + \frac{|T|^2}{\nu_\lambda m^2}  \Lambda_{-k}(\vv{t}^*)\Lambda_{-k}(\vv{t}^*)^\top\big]^{\frac{1}{2}} \nonumber \\
    & \hspace{-1.75 in} \times \exp\Big\{-\frac{1}{2}\lambda_j(\vv{t}^*)^\top \big(C_k(\vv{t}^*,\vv{t}^*)^{-1} + \frac{|T|^2}{\nu_\lambda m^2}  \Lambda_{-k}(\vv{t}^*)\Lambda_{-k}(\vv{t}^*)^\top\big)\lambda_j(\vv{t}^*)\Big\}d\lambda_k(\vv{t}^*)\nonumber \\
    & \hspace{-1.75 in} = \underset{m\rightarrow\infty}{\lim}\exp\Big\{-\frac{1}{2}\frac{|T|^2}{m^2}U(\vv{t}^*)^\top\big(C_k(\vv{t}^*,\vv{t}^*)^{-1} + \frac{|T|^2}{\nu_\lambda m^2}  \Lambda_{-k}(\vv{t}^*)\Lambda_{-k}(\vv{t}^*)^\top\big)^{-1}U(\vv{t}^*)\Big\} \label{eq:prop4_eq4} \\
    & \hspace{-1.75 in} = \underset{m\rightarrow\infty}{\lim}\exp\bigg\{-\frac{1}{2}\frac{|T|^2}{m^2}U(\vv{t}^*)^\top\Big(C_k(\vv{t}^*,\vv{t}^*) - \frac{|T|}{m}C_k(\vv{t}^*,\vv{t}^*)\Lambda_{-k}(\vv{t}^*) \nonumber\\
    & \hspace{-1.75 in} \times \big(\nu_\lambda I_{K-1} + \frac{|T|^2}{ m^2}  \Lambda_{-k}(\vv{t}^*)^\top C_k(\vv{t}^*,\vv{t}^*)\Lambda_{-k}(\vv{t}^*)\big)^{-1}\frac{|T|}{m}\Lambda_{-k}(\vv{t}^*)^\top C_k(\vv{t}^*,\vv{t}^*)\Big)U(\vv{t}^*)\bigg\} \label{eq:prop4_eq5} \\
    & \hspace{-1.75 in} = \exp\big\{-\frac{1}{2}\int_{\mathcal{T}}\int_{\mathcal{T}} C_k^{\nu_\lambda}(s,s')U(s)U(s')dsds'\big\}\label{eq:prop4_eq6},
\end{align}
where the matrix inversion from Equations \eqref{eq:prop4_eq4} to \eqref{eq:prop4_eq5}  is computed via the Woodbury's formula \citep{harville1998}. In Equation \eqref{eq:prop4_eq6}, the characteristic functional of $\lambda_k\mid\{\lambda_j\}_{j\neq k}$ coincides with the characteristic functional of a zero-mean Gaussian process with covariance function $C_k^{\nu_\lambda}(\cdot,\cdot)$, as desired.
\end{proof}

\begin{proposition}
    $\lambda_k\mid\{\lambda_j\}_{j\neq k}$ converges in distribution to $\lambda_k^\perp\mid\{\lambda_j\}_{j\neq k}$ as $\nu_\lambda\rightarrow 0$.
\end{proposition}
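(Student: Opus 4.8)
The plan is to read the limiting covariance off Proposition~\ref{propNemocond1} and then lift convergence of covariance kernels to convergence in distribution using the characteristic-functional identity established in that proof. By Proposition~\ref{propNemocond1}, $\lambda_k\mid\{\lambda_j\}_{j\neq k}\sim\mbox{GP}\{0,C_k^{\nu_\lambda}(\cdot,\cdot)\}$ with $C_k^{\nu_\lambda}(s,t)=C_k(s,t)-h_{\Lambda_{(-k)}}(s)^\top\{\nu_\lambda I_{K-1}+H_{\Lambda_{(-k)}}\}^{-1}h_{\Lambda_{(-k)}}(t)$, whereas $\lambda_k^\perp\mid\{\lambda_j\}_{j\neq k}\sim\mbox{GP}\{0,C^\perp_{\Lambda_{(-k)}}(\cdot,\cdot)\}$ with $C^\perp_{\Lambda_{(-k)}}(s,t)=C_k(s,t)-h_{\Lambda_{(-k)}}(s)^\top H_{\Lambda_{(-k)}}^{-1}h_{\Lambda_{(-k)}}(t)$; thus everything hinges on passing to the limit $\nu_\lambda\to0$ in the matrix inverse.

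First I would record that $H_{\Lambda_{(-k)}}$ is invertible on an event of full $\mathcal{N}_{(-k)}$-probability: for nonzero $v$, $v^\top H_{\Lambda_{(-k)}}v=\langle\mathcal{C}_k g_v,g_v\rangle_{L^2}$, where $g_v(\cdot)=v^\top\Lambda_{(-k)}(\cdot)$ and $\mathcal{C}_k$ is the integral operator with kernel $C_k$ (strictly positive definite for, e.g., the squared exponential), so this quadratic form vanishes only if $g_v\equiv0$ a.e., i.e.\ only on the null event where $\{\lambda_j\}_{j\neq k}$ are linearly dependent in $L^2$; absolute continuity of $\mathcal{N}_{(-k)}$ with respect to $\mathcal{G}_{(-k)}$ transfers this to $\mathcal{N}_{(-k)}$. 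On that event, continuity of matrix inversion at the invertible matrix $H_{\Lambda_{(-k)}}$ gives $\{\nu_\lambda I_{K-1}+H_{\Lambda_{(-k)}}\}^{-1}\to H_{\Lambda_{(-k)}}^{-1}$ as $\nu_\lambda\to0$, hence $C_k^{\nu_\lambda}(s,t)\to C^\perp_{\Lambda_{(-k)}}(s,t)$ for every $s,t\in\mathcal{T}$.

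Finally I would promote this pointwise kernel convergence to convergence in distribution. From the proof of Proposition~\ref{propNemocond1}, the characteristic functional of $\lambda_k\mid\{\lambda_j\}_{j\neq k}$ at a test function $U$ equals $\exp\{-\tfrac12\int_{\mathcal{T}}\int_{\mathcal{T}}C_k^{\nu_\lambda}(s,s')U(s)U(s')\,ds\,ds'\}$, and the characteristic functional of $\lambda_k^\perp\mid\{\lambda_j\}_{j\neq k}$ is the analogous expression with $C^\perp_{\Lambda_{(-k)}}$ in place of $C_k^{\nu_\lambda}$. Since each $C_k^{\nu_\lambda}$ is a covariance kernel with $C_k^{\nu_\lambda}(s,s)\le C_k(s,s)$, the Cauchy--Schwarz inequality for covariances gives $|C_k^{\nu_\lambda}(s,s')|\le\sqrt{C_k(s,s)C_k(s',s')}$, a $\nu_\lambda$-uniform integrable majorant for the integrand once multiplied by $|U(s)U(s')|$; dominated convergence then makes the exponents converge, so the characteristic functionals converge and the continuity theorem delivers the claimed convergence in distribution. (Alternatively one can argue through finite-dimensional distributions, whose covariance matrices over any finite grid converge entrywise, with tightness supplied by the same uniform bound by $C_k$.) I expect the main obstacle to be precisely this last step -- being careful about the mode of convergence for the whole process rather than just its covariance kernel -- and, secondarily, pinning down the almost-sure invertibility of $H_{\Lambda_{(-k)}}$; once those are in hand, the algebraic limit $C_k^{\nu_\lambda}\to C^\perp_{\Lambda_{(-k)}}$ is immediate.
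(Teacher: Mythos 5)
Your proposal is correct and takes essentially the same route as the paper: the paper likewise passes to the limit in the characteristic functional $\exp\big\{-\frac{1}{2}\int_{\mathcal{T}}\int_{\mathcal{T}} C_k^{\nu_\lambda}(s,s')U(s)U(s')\,ds\,ds'\big\}$ obtained in the proof of Proposition \ref{propNemocond1}, using $(\nu_\lambda I_{K-1}+H_{\Lambda_{(-k)}})^{-1}\rightarrow H_{\Lambda_{(-k)}}^{-1}$. The only minor differences are that the paper avoids your dominated-convergence step by isolating the $\nu_\lambda$-dependence in the finite-dimensional quadratic form $\big(\int_{\mathcal{T}}U(s)h(s)ds\big)^\top(\nu_\lambda I_{K-1}+H)^{-1}\big(\int_{\mathcal{T}}U(s')h(s')ds'\big)$, and it simply assumes $H_{\Lambda_{(-k)}}$ is positive definite, a point your almost-sure linear-independence argument supplies.
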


\begin{proof}
Under the assumption that $H$ is a positive definite matrix, it follows that $\underset{\nu_\lambda\rightarrow0}{\lim}(\nu_\lambda I_{K-1} + H)^{-1} = H^{-1}$. Consider, 
\begin{align}
    \underset{\nu_\lambda\rightarrow0}{\lim}\mathbb{E}_{\mathcal{N}_k}\big[\exp\big\{i\langle U,\lambda_k\rangle\big\}\mid\{\lambda_j\}_{j\neq k}\big] & = \underset{\nu_\lambda\rightarrow0}{\lim}\exp\big\{-\frac{1}{2}\int_{\mathcal{T}}\int_{\mathcal{T}} C_k^{\nu_\lambda}(s,s')U(s)U(s')dsds'\big\}\nonumber\\ 
    & \hspace{-1in} = \exp\big\{-\frac{1}{2}\int_{\mathcal{T}}\int_{\mathcal{T}} C_k(s,s')U(s)U(s')dsds'\big\} \nonumber \\
    & \hspace{-1in} \times \underset{\nu_\lambda\rightarrow0}{\lim}\exp\big\{\frac{1}{2}(\int_\mathcal{T}U(s){h}(s)ds)^\top(\nu_\lambda I_{K-1} + H)^{-1} (\int_\mathcal{T}U(s'){h}(s')ds')\big\} \nonumber\\
    & \hspace{-1in} = \exp\big\{-\frac{1}{2}\int_{\mathcal{T}}\int_{\mathcal{T}} C_k(s,s')U(s)U(s')dsds'\big\} \nonumber \\
    & \hspace{-1in} \times \exp\big\{\frac{1}{2}(\int_\mathcal{T}U(s){h}(s)ds)^\top H^{-1} (\int_\mathcal{T}U(s'){h}(s')ds')\big\} \nonumber\\
    &  \hspace{-1in} = \exp\big\{-\frac{1}{2}\int_{\mathcal{T}}\int_{\mathcal{T}} C_{\Lambda_{(-k)}}^\perp(s,s')U(s)U(s')dsds'\big\}.\nonumber
\end{align}
\end{proof}

\begin{proposition}
Assuming $\{\lambda_1,\ldots,\lambda_K\} \sim \mathcal{N}$ with $\mathcal{N}$, 
    $\langle \lambda_j,\lambda_k\rangle\mid\{\lambda_j\}_{j\neq k}$ converges in probability to $0$ for any $j\neq k$ as $\nu_\lambda\rightarrow 0$.
\end{proposition}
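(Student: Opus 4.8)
The plan is to reduce the statement to showing that the conditional variance of the inner product vanishes. First I would invoke Proposition~\ref{propNemocond1}: conditionally on $\{\lambda_j\}_{j\neq k}$, the loading $\lambda_k$ is a zero-mean Gaussian process with covariance $C_k^{\nu_\lambda}$. Since $\lambda_j$ is then fixed and, under the assumption that the parent processes have continuous sample paths, lies in $\mathbb{L}^2(\mathcal{T})$ almost surely, the random variable $\langle\lambda_j,\lambda_k\rangle=\int_\mathcal{T}\lambda_j(s)\lambda_k(s)\,ds$ is an almost-sure $\mathbb{L}^2$ limit of Gaussian Riemann sums — exactly the argument used in the proof of Proposition~\ref{propNemocond1} with the continuous test function $U=\lambda_j$ — hence Gaussian with mean zero and variance
\[
\sigma^2_{\nu_\lambda} := \int_\mathcal{T}\!\!\int_\mathcal{T}\lambda_j(s)\,C_k^{\nu_\lambda}(s,s')\,\lambda_j(s')\,ds\,ds'.
\]
Convergence in probability to $0$ for (almost) every realization of $\{\lambda_j\}_{j\neq k}$ then follows once $\sigma^2_{\nu_\lambda}\to0$ as $\nu_\lambda\to0$, because a mean-zero Gaussian with vanishing variance tends to $0$ in $L^2$ and hence in probability (equivalently, its characteristic function $e^{-\theta^2\sigma^2_{\nu_\lambda}/2}\to1$).

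The core of the argument is the algebraic evaluation of $\sigma^2_{\nu_\lambda}$. Substituting the formula \eqref{eq:relaxedcov} for $C_k^{\nu_\lambda}$ gives
\[
\sigma^2_{\nu_\lambda} = \int\!\!\int\lambda_j(s)\lambda_j(s')C_k(s,s')\,ds\,ds' \;-\; v^\top\{\nu_\lambda I_{K-1}+H_{\Lambda_{(-k)}}\}^{-1}v,\qquad v:=\int_\mathcal{T}\lambda_j(s)\,h_{\Lambda_{(-k)}}(s)\,ds.
\]
The key observation is that, since $j\neq k$, the function $\lambda_j$ is itself one of the components of $\Lambda_{(-k)}$, so \emph{both} terms are entries of $H_{\Lambda_{(-k)}}$: writing $e_{j'}$ for the standard basis vector selecting $\lambda_j$ inside $\Lambda_{(-k)}$, symmetry of $C_k$ together with the definitions in \eqref{eq:integrals} yield $v=H_{\Lambda_{(-k)}}\,e_{j'}$ and $\int\!\!\int\lambda_j(s)\lambda_j(s')C_k(s,s')\,ds\,ds'=e_{j'}^\top H_{\Lambda_{(-k)}}\,e_{j'}$. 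Hence, abbreviating $H:=H_{\Lambda_{(-k)}}$ and using the identity $H-H(\nu_\lambda I_{K-1}+H)^{-1}H=\nu_\lambda H(\nu_\lambda I_{K-1}+H)^{-1}$,
\[
\sigma^2_{\nu_\lambda} = e_{j'}^\top\bigl(H - H(\nu_\lambda I_{K-1}+H)^{-1}H\bigr)e_{j'} = \nu_\lambda\, e_{j'}^\top H(\nu_\lambda I_{K-1}+H)^{-1}e_{j'}.
\]

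It remains to bound the right-hand side. Since $C_k$ is a valid covariance kernel, $H$ is symmetric positive semidefinite, so the eigenvalues of $H(\nu_\lambda I_{K-1}+H)^{-1}$ all lie in $[0,1)$; therefore $\sigma^2_{\nu_\lambda}\le\nu_\lambda\|e_{j'}\|^2=\nu_\lambda\to0$, which finishes the proof. The only real obstacle is bookkeeping: recognizing $v$ as a column of $H$ and keeping the indexing of $\Lambda_{(-k)}$ straight; once that is in hand, the matrix identity and the eigenvalue bound are routine, and one could equivalently pass to the limit $(\nu_\lambda I_{K-1}+H)^{-1}\to H^{-1}$ when $H$ is positive definite, recovering the cancellation against the limiting orthogonal-covariance kernel already exploited in the proof of Proposition~\ref{propNemocond2}.
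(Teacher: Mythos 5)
Your argument is correct and follows essentially the same route as the paper: both reduce the claim to showing that the conditional second moment $\int_{\mathcal{T}}\int_{\mathcal{T}} C_k^{\nu_\lambda}(s,s')\lambda_j(s)\lambda_j(s')\,ds\,ds' = e_{j'}^\top\big\{H - H(\nu_\lambda I_{K-1}+H)^{-1}H\big\}e_{j'}$ vanishes as $\nu_\lambda\to 0$, exploiting that $\lambda_j$ is a component of $\Lambda_{(-k)}$ (the paper then concludes via Markov's inequality rather than Gaussianity, which is all that is needed). Your explicit bound $\sigma^2_{\nu_\lambda}\le \nu_\lambda$ from the identity $H - H(\nu_\lambda I_{K-1}+H)^{-1}H=\nu_\lambda H(\nu_\lambda I_{K-1}+H)^{-1}$ is a small quantitative sharpening that also avoids assuming $H$ positive definite, but otherwise the two proofs coincide.
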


\begin{proof}
First, consider
\begin{align}
    \mathbb{E}_{\mathcal{N}_k}\big[\int_{\mathcal{T}}\int_{\mathcal{T}}|\lambda_k(s)\lambda_k(s')\lambda_j(s)\lambda_j(s')|dsds'\mid\{\lambda_j\}_{j\neq k}\big] & = \nonumber \\ & \hspace{-1in}  \mathbb{E}_{\mathcal{N}_k}[\|\lambda_j\lambda_k\|_1^2|\{\lambda_j\}_{j\neq k}] \label{eq:prop6_eq1} \\
    & \hspace{-1in} \leq \|\lambda_j\|_2^2\mathbb{E}_{\mathcal{N}_k}\big[\|\lambda_k\|_2^2|\{\lambda_j\}_{j\neq k}\big] \nonumber\\
    &  \hspace{-1in}  = \|\lambda_j\|_2^2\int_\mathbb{T} \mathbb{E}_{\mathcal{N}_k}\big[\lambda_k(s)^2|\{\lambda_j\}_{j\neq k}\big]ds \nonumber\\
    & \hspace{-1in} = \|\lambda_j\|_2^2\int_\mathbb{T} C^{\nu_\lambda}_k(s,s)ds \nonumber\\
    & \hspace{-1in} \leq \|\lambda_j\|_2^2\int_\mathbb{T} C_k(s,s)ds \label{eq:prop6_eq2} 
\end{align}
Assuming that the covariance functions of the parent Gaussian Processes are defined so that sample paths are almost surely continuous, Equation \eqref{eq:prop6_eq2} is almost surely finite \citep{marcus1972}. This implies that the expectation and integration operations in the left hand side of Equation \eqref{eq:prop6_eq1} can be interchanged, by Fubini's theorem.

Define $e_j$ to be a binary vector of length $K-1$ that satisfies $\lambda_j(t) = e_j^\top\Lambda_{(-k)}(t)$. For any $\epsilon>0$, by Markov's inequality, 
\begin{align}
    \underset{\nu_\lambda\rightarrow 0}{\lim}P_{\mathcal{N}_k}\big(|\langle \lambda_j,\lambda_k \rangle|>\epsilon|\{\lambda_j\}_{j\neq k}\big) & \leq \underset{\nu_\lambda\rightarrow 0}{\lim}\frac{1}{\epsilon^2}\mathbb{E}_{\mathcal{N}_k}\big[|\langle \lambda_j,\lambda_k \rangle|^2|\{\lambda_j\}_{j\neq k}\big] \nonumber\\
    & \hspace{-1 in}= \underset{\nu_\lambda\rightarrow 0}{\lim}\frac{1}{\epsilon^2}\mathbb{E}_{\mathcal{N}_k}\big[\int_{\mathcal{T}}\int_{\mathcal{T}}\lambda_k(s)\lambda_k(s')\lambda_j(s)\lambda_j(s')dsds'|\{\lambda_j\}_{j\neq k}\big] \nonumber\\
    & \hspace{-1 in}= \underset{\nu_\lambda\rightarrow 0}{\lim}\frac{1}{\epsilon^2}\int_{\mathcal{T}}\int_{\mathcal{T}}\mathbb{E}_{\mathcal{N}_k}[\lambda_k(s)\lambda_k(s')\lambda_j(s)\lambda_j(s')|\{\lambda_j\}_{j\neq k}]dsds' \nonumber\\
    & \hspace{-1 in}= \underset{\nu_\lambda\rightarrow 0}{\lim}\frac{1}{\epsilon^2}\int_{\mathcal{T}}\int_{\mathcal{T}}C_k^{\nu_\lambda}(s,s')\lambda_j(s)\lambda_j(s')dsds' \nonumber\\
    & \hspace{-1 in}= \underset{\nu_\lambda\rightarrow 0}{\lim}\frac{1}{\epsilon^2}\big(e_j^\top H e_j - e_j^\top H(\nu_\lambda I_{K-1} + H)^{-1} H e_j\big) = 0.  \nonumber
\end{align}

\end{proof}

\begin{proposition}
$\eta_{\cdot k}$ converges in distribution to $(I_n - \frac{1}{n}1_n1_n^\top)N(0,I_n)$ as $\nu_\eta \rightarrow 0$.
\end{proposition}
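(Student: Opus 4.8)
The plan is to work directly with the Gaussian representation of the prior \eqref{eq:etaPrior} already derived in the text, namely $\eta_{\cdot k}\sim N(0,\Sigma_{\nu_\eta})$ with $\Sigma_{\nu_\eta}=I_n-\frac{1}{\nu_\eta+n}1_n1_n^\top$, and to track the limit of this covariance as $\nu_\eta\to 0$. Writing $P:=I_n-\frac1n 1_n1_n^\top$, one has $\Sigma_{\nu_\eta}\to P$ entrywise as $\nu_\eta\to 0$, since only the scalar factor $\frac{1}{\nu_\eta+n}$ depends on $\nu_\eta$ and it converges to $\frac1n$.

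Next I would pass from convergence of covariances to convergence in distribution via characteristic functions. The characteristic function of $\eta_{\cdot k}$ is $t\mapsto \exp(-\tfrac12 t^\top\Sigma_{\nu_\eta}t)$, which converges pointwise in $t\in\mathbb{R}^n$ to $\exp(-\tfrac12 t^\top P t)$ because $t^\top\Sigma_{\nu_\eta}t\to t^\top P t$ for every fixed $t$. The limiting function is continuous at the origin and is itself the characteristic function of the $N(0,P)$ law (note $P$ is symmetric with eigenvalues $0$ and $1$, hence a valid positive semidefinite covariance), so by L\'evy's continuity theorem $\eta_{\cdot k}$ converges in distribution to $N(0,P)$.

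Finally I would identify $N(0,P)$ with $(I_n-\frac1n 1_n1_n^\top)N(0,I_n)$. Since $P$ is symmetric and idempotent ($P^\top=P$ and $P^2=P$, as $P$ is the orthogonal projection onto $\{v\in\mathbb{R}^n: 1_n^\top v=0\}$), if $Z\sim N(0,I_n)$ then $PZ$ is Gaussian with mean $0$ and covariance $P I_n P^\top=P^2=P$. Hence $PZ\sim N(0,P)$, which is exactly the claimed limit, completing the argument.

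There is essentially no serious obstacle here: the statement is a routine limit of Gaussian laws. The only points requiring a line of care are (i) invoking L\'evy's continuity theorem correctly, which amounts to checking that the pointwise limit of the characteristic functions is continuous at the origin, and this is immediate; and (ii) confirming that the degenerate covariance $P$ really does correspond to the law of $PN(0,I_n)$, which follows from the idempotency of the projection $P$.
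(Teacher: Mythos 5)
Your argument is correct and follows essentially the same route as the paper's proof: both identify $(I_n-\frac{1}{n}1_n1_n^\top)N(0,I_n)$ with $N(0,I_n-\frac{1}{n}1_n1_n^\top)$ via symmetry and idempotency of the projection, establish convergence of the covariance $I_n-\frac{1}{\nu_\eta+n}1_n1_n^\top$ to the projection matrix, and conclude convergence in distribution through pointwise convergence of the Gaussian characteristic functions. No gaps; the explicit appeal to L\'evy's continuity theorem is a minor additional point of care that the paper leaves implicit.
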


\begin{proof}
First notice, $(I_n - \frac{1}{n}{1}_n{1}_n^\top)N(0,I_n) = N(0,(I_n - \frac{1}{n}{1}_n{1}_n^\top))$ in distribution, since $(I_n - \frac{1}{n}{1}_n{1}_n^\top)$ is a symmetric idempotent matrix. 
Next, for any $\epsilon > 0$, choose $\nu_\eta < \frac{\epsilon n^2}{\|{1}_n{1}_n^\top\|_2}$, and consider 
\begin{eqnarray}
\|\text{var}(\eta_{\cdot k}) - (I_n - \frac{1}{n} 1_n1_n^\top)\|_2 & = & \frac{\nu_\eta}{n\nu_\eta + n^2}\|{1}_n{1}_n^\top\|_2 \nonumber \\
 & < & \frac{\nu_\eta}{n^2}\|{1}_n{1}_n^\top\|_2 <  \epsilon. \nonumber
\end{eqnarray}
Hence, $\text{var}(\eta_{\cdot k})\rightarrow  (I_n - \frac{1}{n}J)$ as $\nu_\eta \rightarrow 0$.
Finally, for $s\in\mathbb{R}^n$, consider the characteristic function of $\eta_{\cdot k}$,
\begin{eqnarray}
   \lim_{\nu_\eta\rightarrow0}\mathbb{E}\big[\exp\big(i s^\top\eta_{\cdot k}\big)\big] &=& \lim_{\nu_\eta\rightarrow0}\exp{\big(-\frac{1}{2}s^\top  \text{var}(\eta_{\cdot k})s\big)}\nonumber\\
    &=& \exp{\big(-\frac{1}{2}s^\top  (I_n - \frac{1}{n}{1}_n{1}_n^\top)s\big)},\nonumber
\end{eqnarray}
as desired.

\end{proof}

\section{Additional Prior Illustration Figures}\label{sec:prior_illustration}

\begin{figure}[b!]
\begin{center}
 \begin{tabular}{c|cc}
\includegraphics[width = 1 in]{prior_figures/simu_lambda_minus_k_high_res.png} & \includegraphics[width = 1 in]{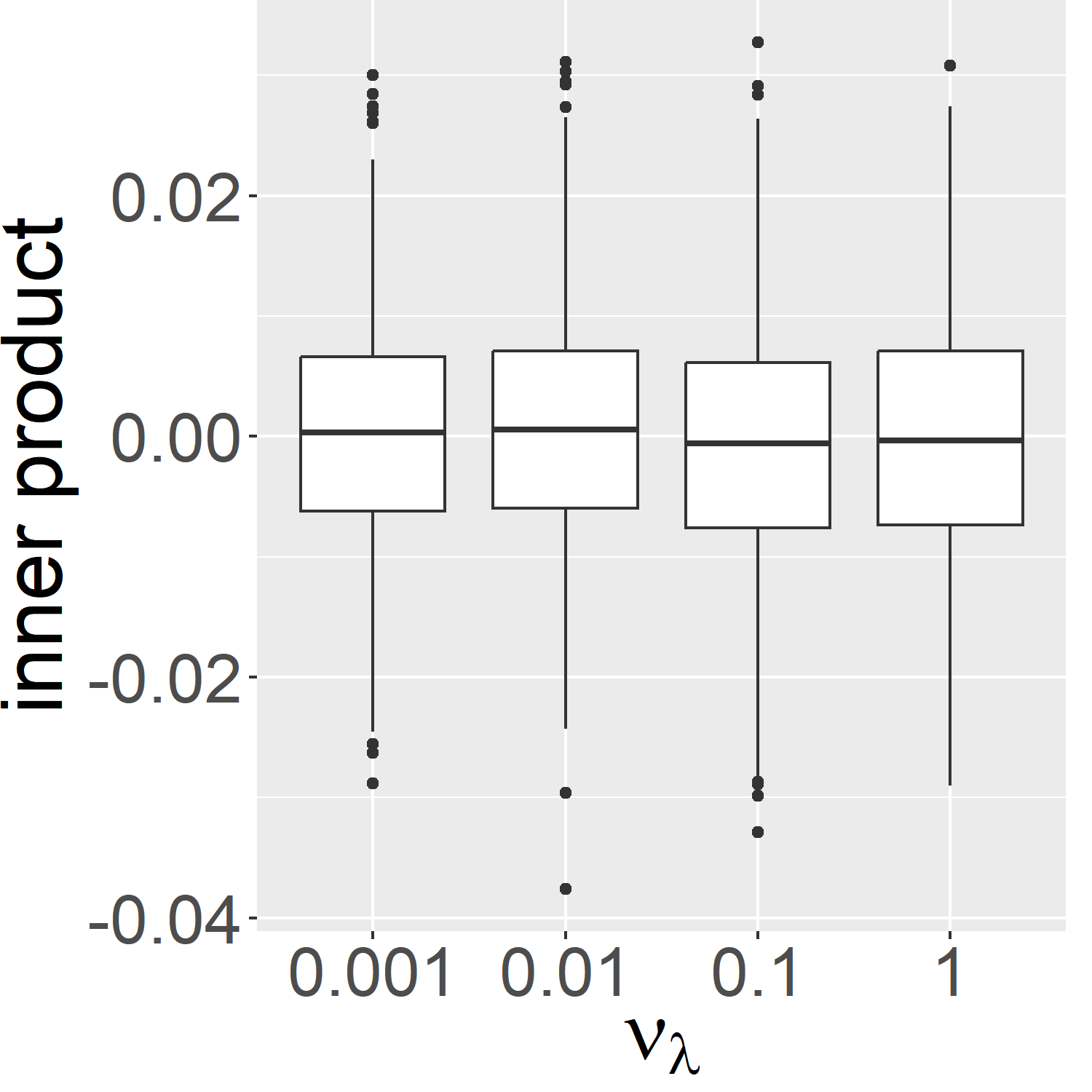} & \includegraphics[width = 1 in]{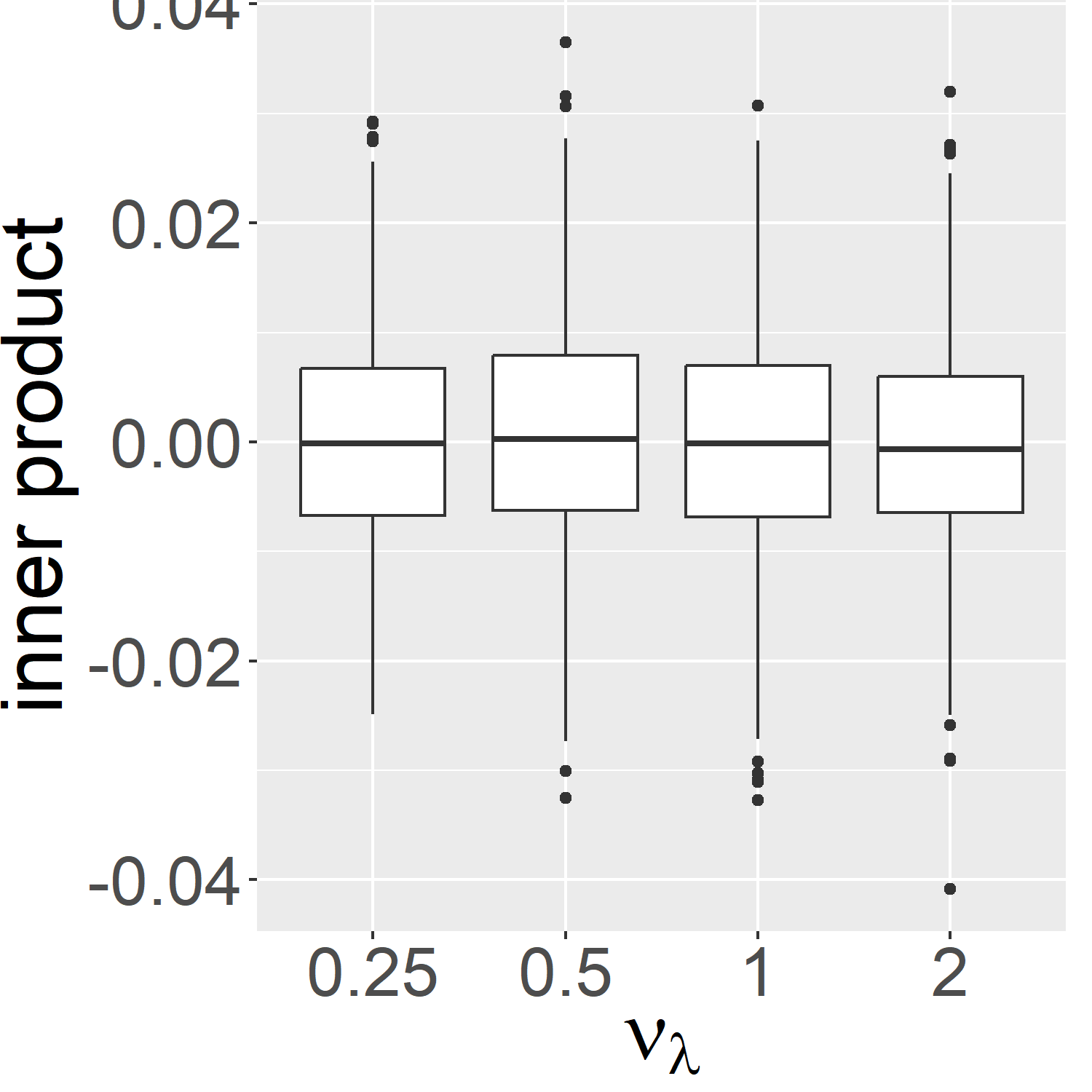}  \\
(a) & (b)  & (c)\\
\end{tabular}
 \begin{tabular}{cccc}
\includegraphics[width = 1 in]{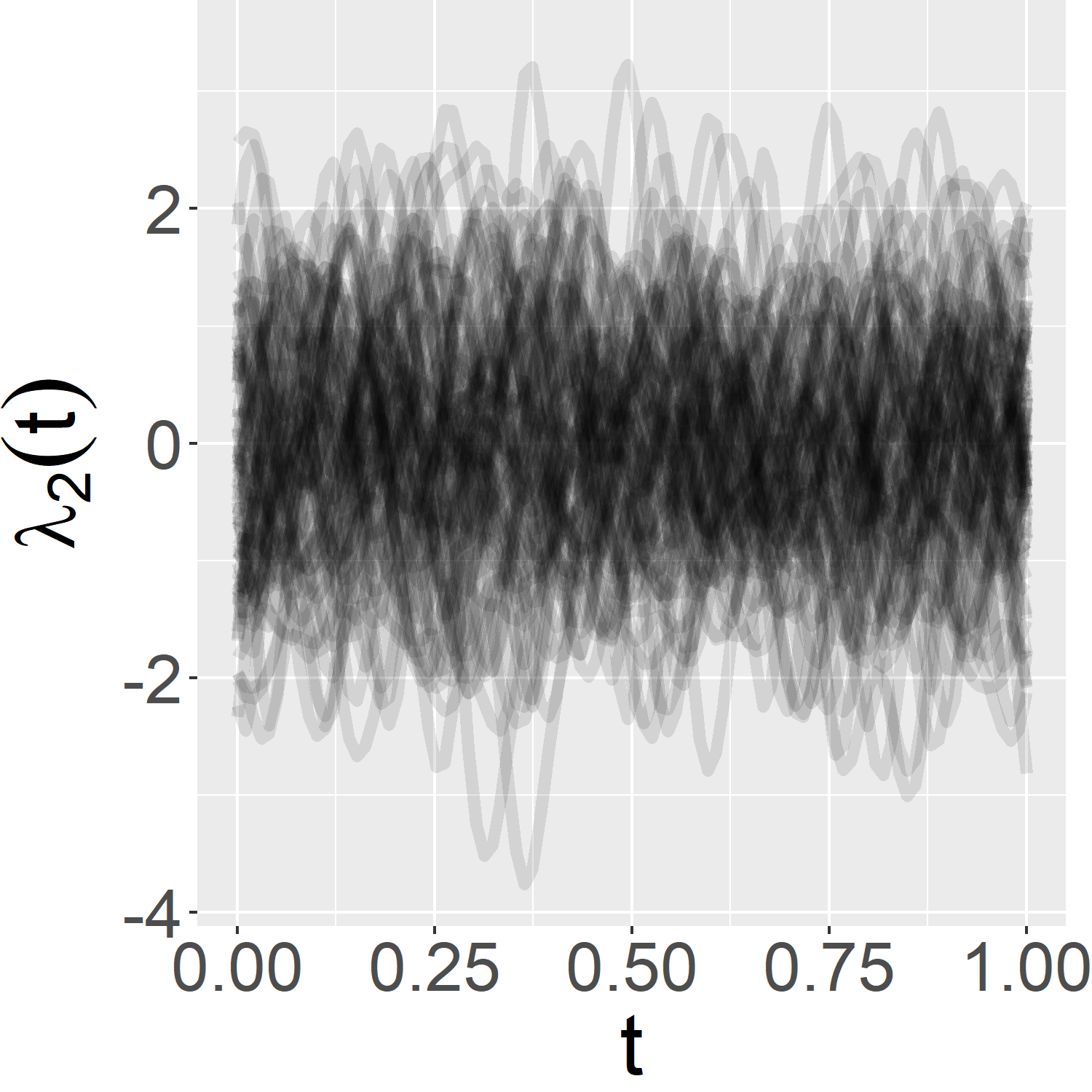} & \includegraphics[width = 1 in]{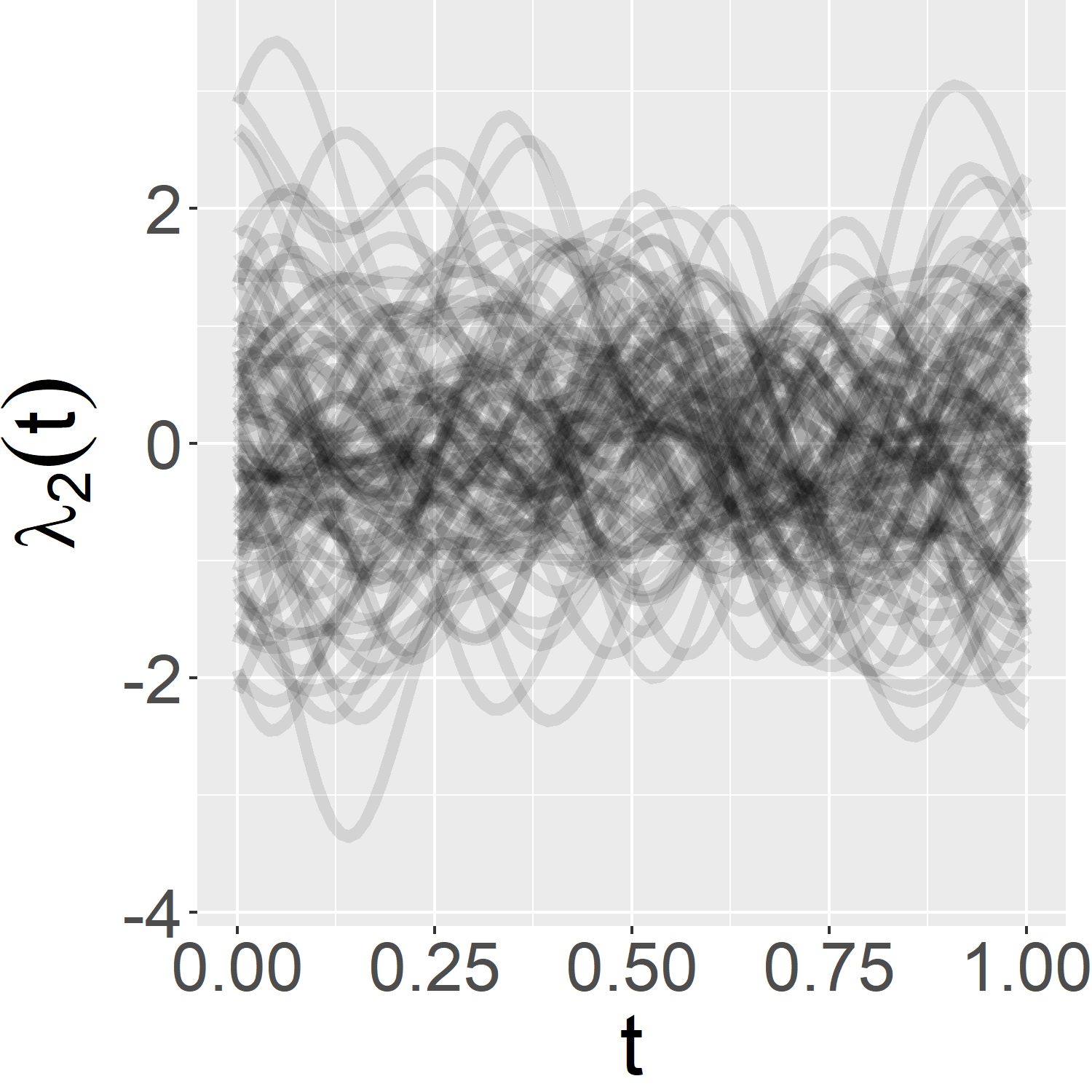} &
\includegraphics[width = 1 in]{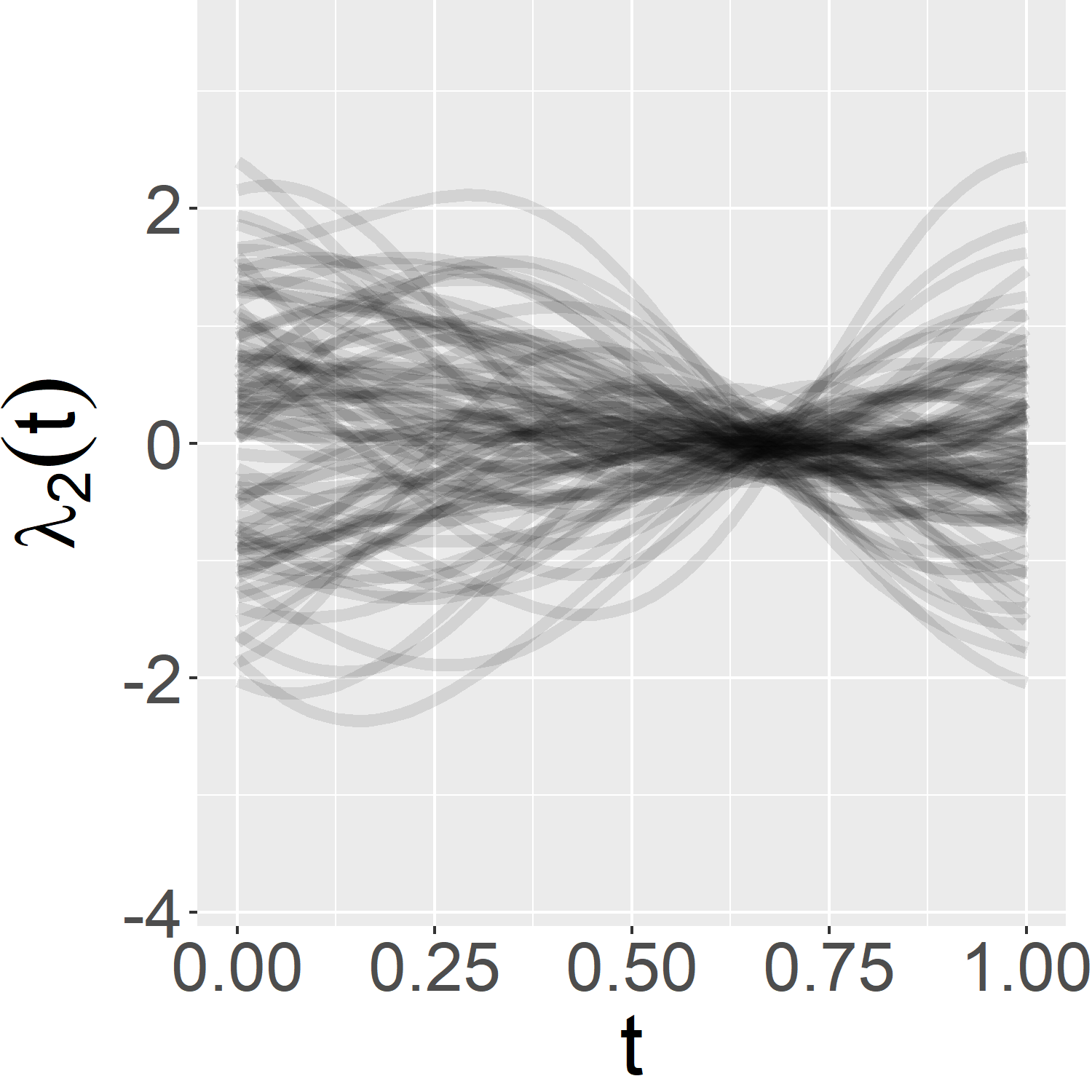} & \includegraphics[width = 1 in]{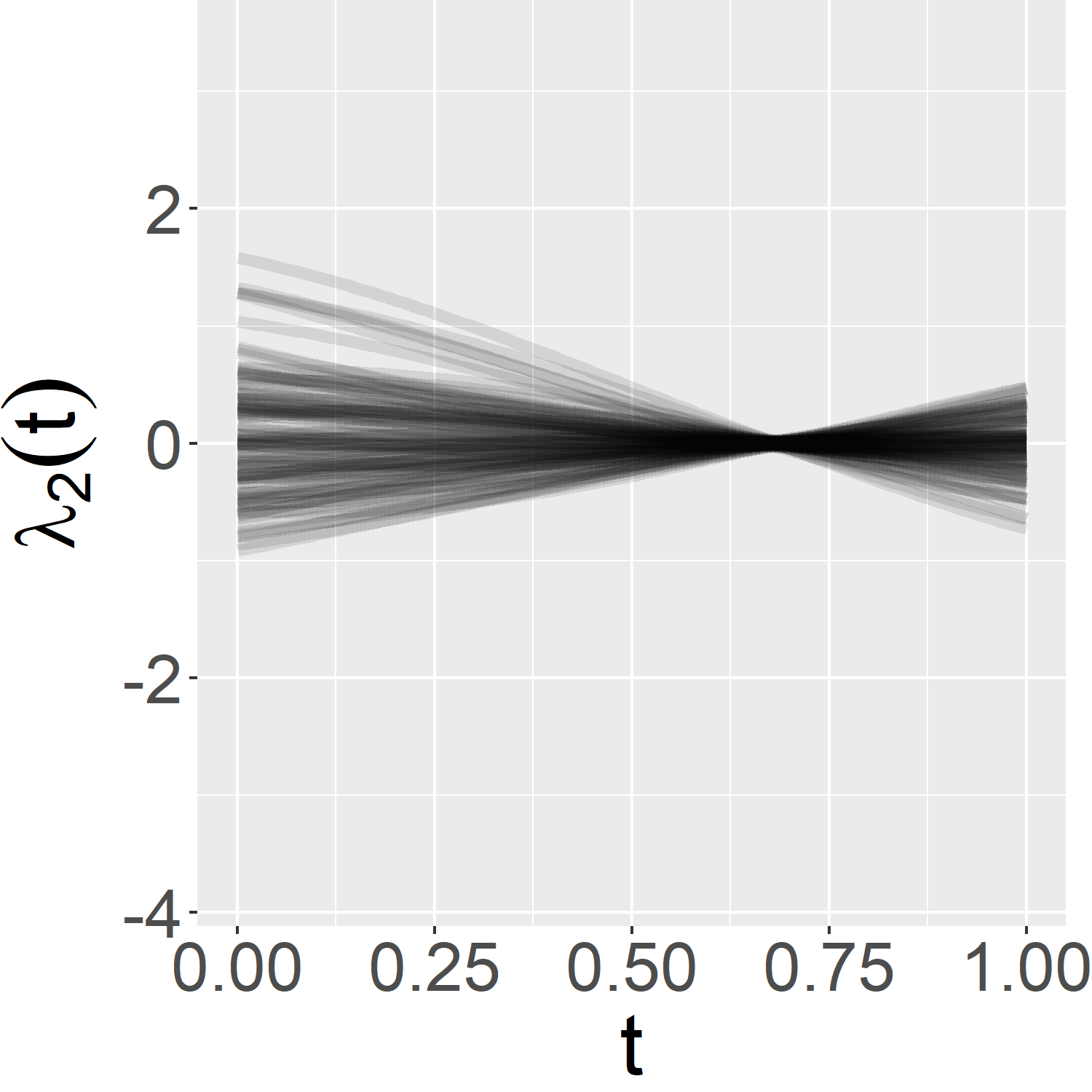}\\
(d) & (e) & (f) & (g) \\
\includegraphics[width = 1 in]{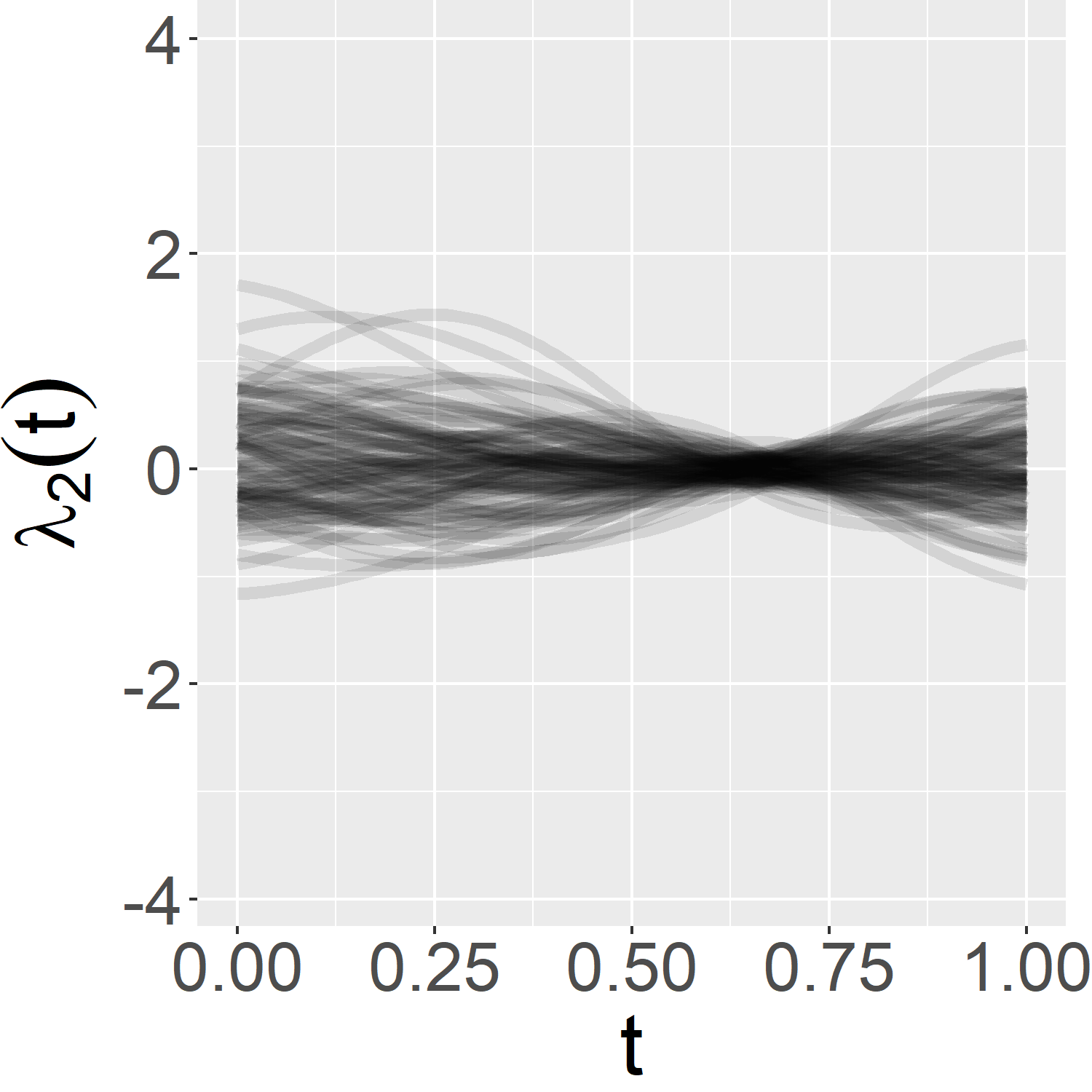} & \includegraphics[width = 1 in]{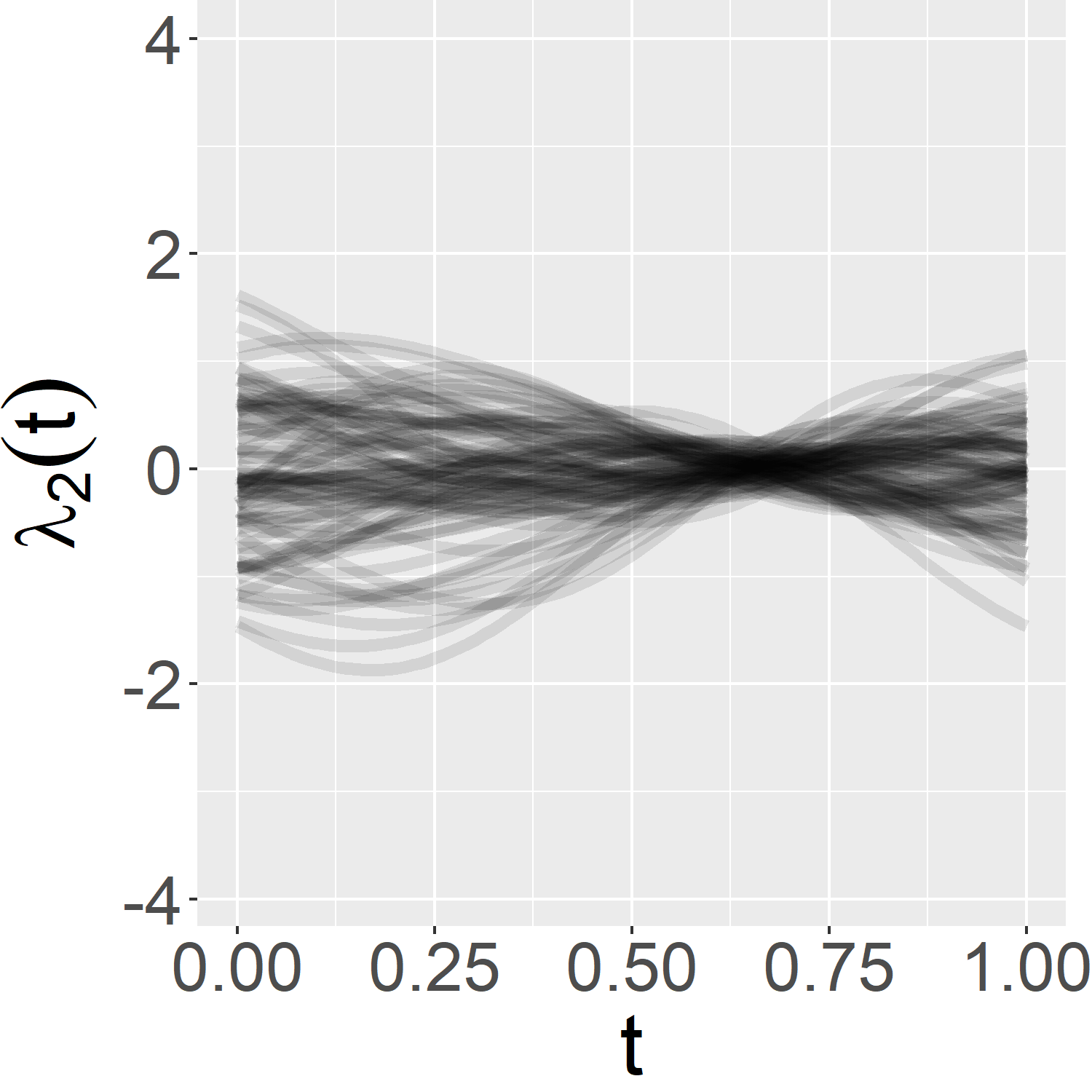} &
\includegraphics[width = 1 in]{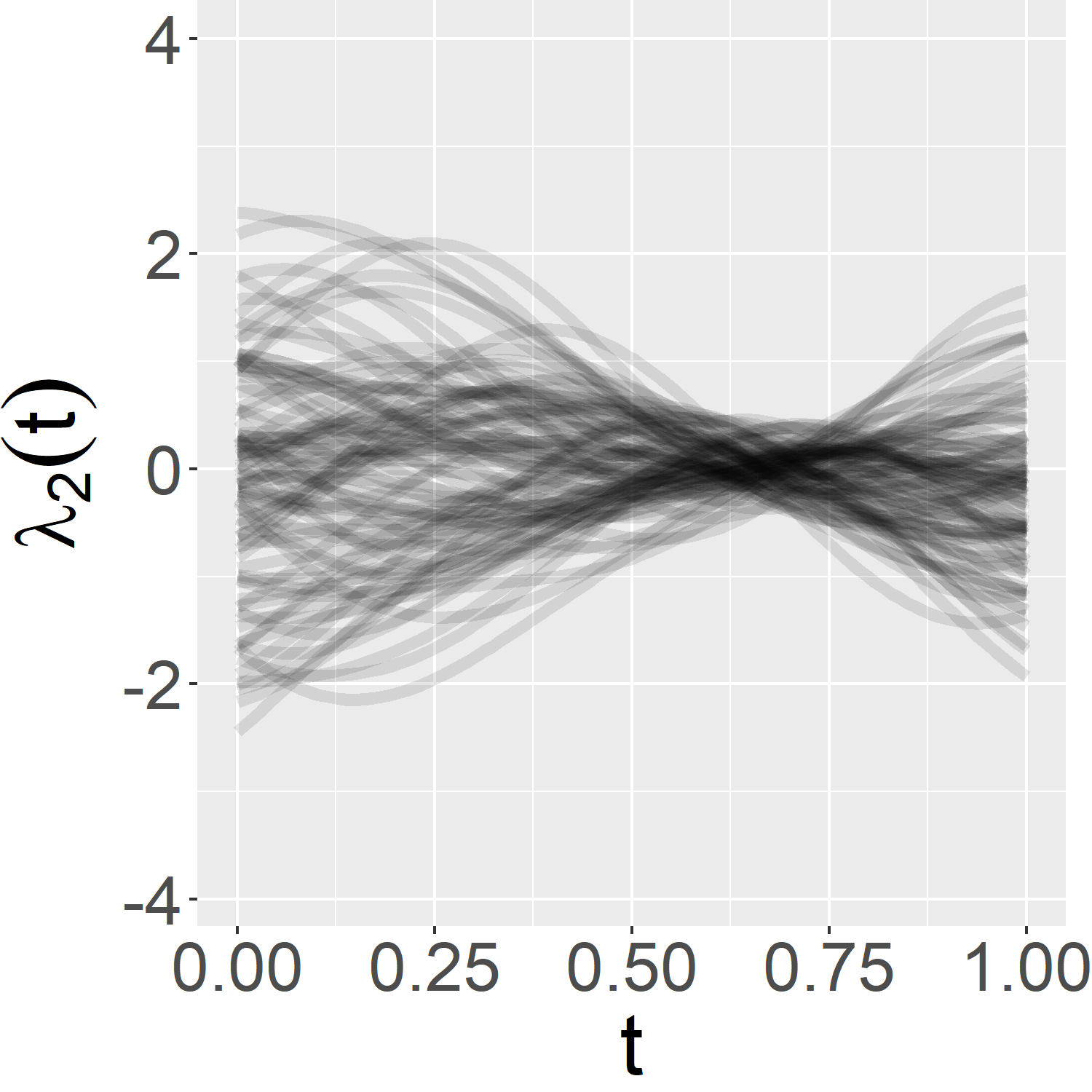} & \includegraphics[width = 1 in]{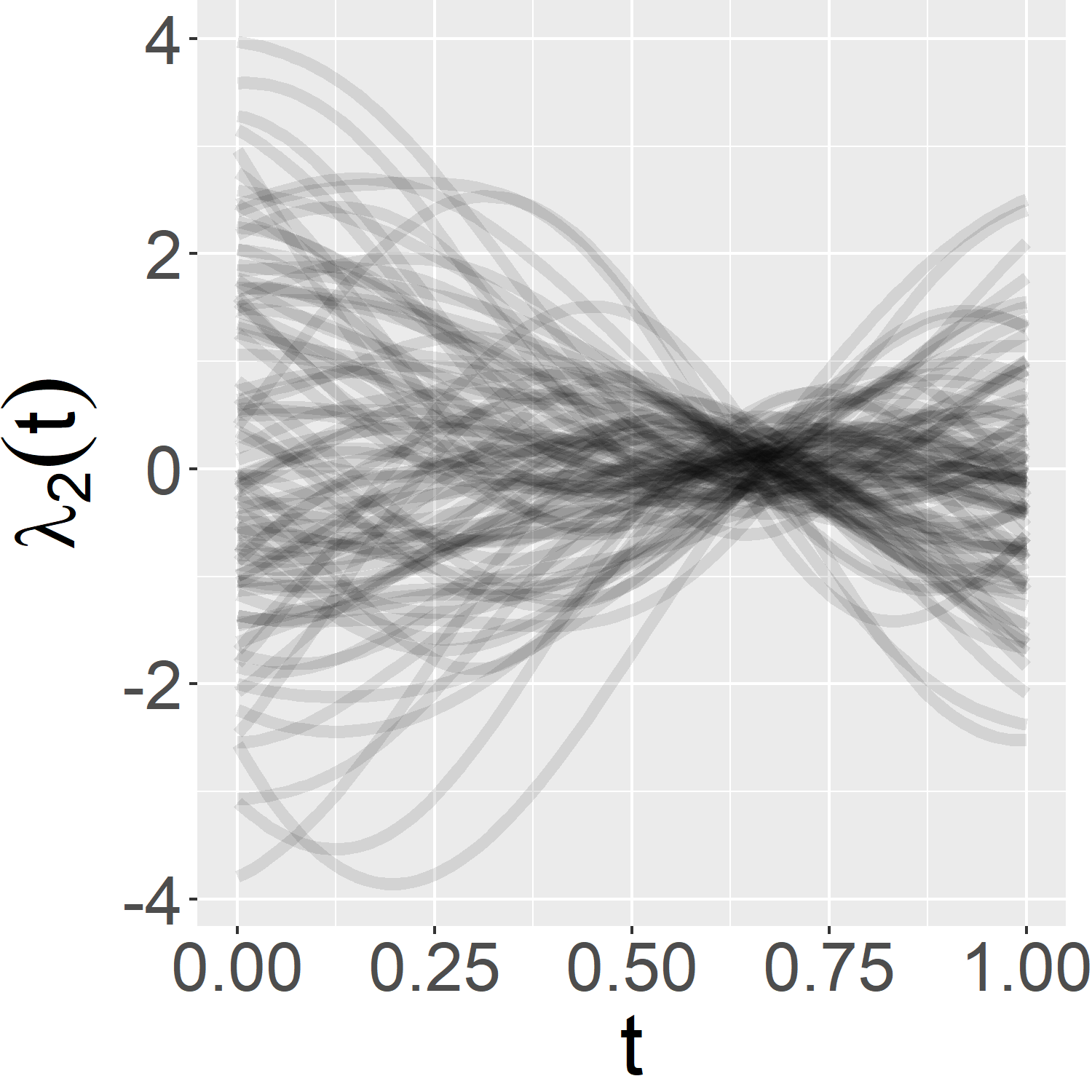}\\
(h) & (i) & (j) & (k) \\
\end{tabular}

\caption{(a) The value of $\lambda_1$ on which $\lambda_2$ is conditioned. Boxplots of 1000 realizations of $\langle\lambda_1,\lambda_2\rangle$, with $\lambda_2\mid\lambda_1 \sim \mbox{GP}(0,C_2^{\nu_\lambda}(\cdot,\cdot))$ (b) $\tau^2_2 = 1,\ \nu_\lambda = 0.0001$ are fixed as $l_2 = 0.001,0.01,0.1,1$ varies, (c) $l_2= 0.01,\ \nu_\lambda= 0.0001$ are fixed as $\tau^2_2 = 0.25,0.5,1,2$ varies. Realizations of $\lambda_2\mid\lambda_1 \sim \mbox{GP}(0,C_2^{\nu_\lambda}(\cdot,\cdot))$ with (d)-(g)  $\tau_2^2 = 1, \nu_\lambda = 0.0001$ fixed, while $l_2^2 = 0.001$, $l_2^2 = 0.01$, $l_2^2 = 0.1$, $l_2^2 = 1$ and with (h)-(k)  $l_2^2 = .1, \nu_\lambda = 0.0001$ fixed, while $\tau_2^2 = 0.25$, $\tau_2^2 = 0.5$, $\tau_2^2 = 1$, $\tau_2^2 = 2$.}
\label{fig:simu_inner_prod}
    \end{center}
\end{figure}

In this section, we illustrate the role of the length-scale and scale parameters of the conditional prior from Section 2.3 in the main paper. Panel (a) of Figure \ref{fig:simu_inner_prod} again displays the value of $\lambda_1$ on which $\lambda_2\mid\lambda_1$ is conditioned.

Figure \ref{fig:simu_inner_prod} shows realizations from the prior with $\tau_2^2 = 1, \nu_\lambda = 0.0001$ fixed, while $l_2^2 = 0.001,\ 0.01,\ 0.1,\ 1$  varies in panels (d)-(g). Notice, as the length-scale parameter increases, the realizations become more smooth. When the realizations are relatively smooth, in panels (c)\&(d), the variance shrinks around $t \approx .65$. This feature is present because of the prior's need to offset the extrema of $\lambda_1$ to enforce orthogonality between the two functions. When the realizations are relatively rough, the shrinking of the variance of $\lambda_2$ is less evident, since rougher functions tend to be close to orthogonal than smoother ones. Panel (b) of Figure \ref{fig:simu_inner_prod} shows a boxplot of the inner product between $\lambda_1$ and 1000 realizations of $\lambda_2$ for the different values of the length-scale parameter.  Varying $l^2_2$ does not appear to have an effect on the inner product between $\lambda_1$ and the realizations of $\lambda_2\mid\lambda_1$. 

Figure \ref{fig:simu_inner_prod} shows realizations from the prior with $l_2^2 = .1, \nu_\lambda = 0.0001$ fixed, and $\tau_2^2 = 0.25,0.5,1,\ 2$ varying in panels (h)-(k). Varying this parameter changes the spread of the realizations, more drastically outside of the region where $t \approx .65$. Panel (c) of Figure \ref{fig:simu_inner_prod} shows a boxplot of the inner product between $\lambda_1$ and 1000 realizations of $\lambda_2\mid\lambda_1$ for the different values of the scale parameter. Again, varying $\tau^2_2$ does not appear to have an effect on the inner product between the $\lambda_1$ and the realizations of $\lambda_2\mid\lambda_1$. 

\section{Markov chain Monte Carlo Implementation}\label{sec:MCMC}

We concisely restate the likelihood and prior distributions in hierarchical model notation. Our hierarchical functional factor analysis model is specified as follows,
\begin{eqnarray*}
 & y_i\mid\mu(\vv{t}),\lambda_1(\vv{t}),\ldots,\lambda_K(\vv{t}),\eta_i \sim \ N_{m_i}(O_i\mu(\vv{t}) + O_i(\lambda_1(\vv{t}),\ldots,\lambda_K(\vv{t}))^\top\eta_i,\sigma^2 I_{m_i}), \\
 & \mu\mid l_\mu,\tau^2_\mu \sim \mbox{GP}(0,C_\mu(\cdot,\cdot)),\ (l^2_\mu)^{-1} \sim \text{gamma}(\alpha_\mu,\beta_\mu), \ \tau^2_\mu \sim \text{half-normal}(\gamma_\mu) \\
  & \lambda_k\mid l_k,\tau^2_k,\Lambda_{(-k)}\sim \mbox{GP}(0,C^{\nu_\lambda}_k(\cdot,\cdot)),\ \eta_{\cdot k}\mid\psi_k \sim N(0,\psi_k(I_n + \frac{1}{\nu_\eta}1_n1_n^\top)^{-1}),\ \\
  & (l^2_k)^{-1} \sim \text{gamma}(\alpha_\lambda,\beta_\lambda), \ \tau^2_k \sim \text{half-normal}(\gamma_\lambda),\ (\psi_k)^{-1}\sim \text{gamma}(\alpha_\eta,\beta_\eta), \\
  & (\sigma^2)^{-1}\sim \text{gamma}(\alpha_\sigma,\beta_\sigma),\ i = 1\ldots, n,\ k = 1,\ldots, K.
\end{eqnarray*}
Our hierarchical generalized functional factor analysis extension for binary functional data is 
\begin{eqnarray*}
    &    z_i(\vv{t}_i) =
    \begin{cases}
    1,\quad z_i^*(\vv{t}_i)>0,\\
    0,\quad \text{otherwise},
    \end{cases} 
    \\
    & z^*_i(\vv{t}_i)\mid\mu(\vv{t}),\lambda_1(\vv{t}),\ldots,\lambda_K(\vv{t}),\eta_i \text{ind}{\sim} \ N_{m_i}(O_i\mu(\vv{t}) + O_i(\lambda_1(\vv{t}),\ldots,\lambda_K(\vv{t}))^\top\eta_i,I_{m_i}),\\
    & \mu\mid l_\mu,\tau^2_\mu \sim \mbox{GP}(0,C_\mu(\cdot,\cdot)),\ (l^2_\mu)^{-1} \sim \text{gamma}(\alpha_\mu,\beta_\mu), \ \tau^2_\mu \sim \text{half-normal}(\gamma_\mu) \\
    & \lambda_k\mid l_k,\tau^2_k,\Lambda_{(-k)}\sim \mbox{GP}(0,C^{\nu_\lambda}_k(\cdot,\cdot)),\ \eta_{\cdot k}\mid\psi_k \sim N(0,\psi_k(I_n + \frac{1}{\nu_\eta}1_n1_n^\top)^{-1}),\\
    & (l^2_k)^{-1} \sim \text{gamma}(\alpha_\lambda,\beta_\lambda), \ \tau^2_k \sim \text{half-normal}(\gamma_\lambda),\\
    & (\psi_k)^{-1}\sim \text{gamma}(\alpha_\eta,\beta_\eta),\ i = 1,\ldots,n,\ k = 1,\ldots, K.
\end{eqnarray*}
The $\psi_k$ auxiliary parameters introduced in the priors for $\eta_k$ govern prior variance following the parameter expansion approach of \cite{ghosh2009}, which introduced this technique in the context of multivariate latent factor models. Inference of the $\psi_k$ parameters themselves is not of interest. Saved posterior samples of $\psi_k$ are used to scale saved posterior samples of $\lambda_k$ and $\eta_{\cdot k}$. This approach has been shown to diminish posterior dependence and lead to better Markov chain Monte Carlo mixing.

Throughout the algorithms presented in this section, we use the superscript $cur$ and $can$ to denote current and candidate states of parameters in the Markov chain, and use superscripted square brackets to denote saved parameter values. The notation $\cdot\mid-$ is used to denote a parameter conditioned on all other parameters and the data.

\begin{algorithm}[!h]
\caption{Metropolis-within-Gibbs for Functional Factor Analysis}
\label{alg:fpca}
\begin{algorithmic}[1]
\State Randomly initialize current states of all parameters using their prior distribution
\For{ iter = 1 : N}
\State Draw $\mu^{cur}(\vv{t})\mid-\sim \pi(\mu(\vv{t})\mid-)$ from Equation \eqref{eq:muCond} 
\State Propose $l^{can}_\mu \sim N(l^{cur}_\mu,\Sigma_{l_\mu}^{prop})$
\State Compute $\alpha_{l_\mu} := \frac{\pi(\mu^{cur}(\vv{t})\mid l^{can}_\mu,\tau^{2,cur}_\mu)\pi(l^{can}_\mu)}{\pi(\mu^{cur}(\vv{t})\mid l^{cur}_\mu,\tau^{2,cur}_\mu)\pi(\tau^{2,cur}_\mu)}$
\State If $\text{unif}(0,1) < \alpha_{l_\mu}$, set $l^{cur}_\mu = l^{cur}_\mu$
\State Propose $\tau^{2,can}_\mu \sim N(\tau^{2,cur}_\mu,\Sigma_{\tau^2_\mu}^{prop})$
\State Compute $\alpha_{\tau^2_\mu} := \frac{\pi(\mu^{cur}(\vv{t})\mid l^{cur}_\mu,\tau^{2,can}_\mu)\pi(\tau^{2,can}_\mu)}{\pi(\mu^{cur}(\vv{t})\mid l^{cur}_\mu,\tau^{2,cur}_\mu)\pi(\tau^{2,cur}_\mu)}$
\State If $\text{unif}(0,1) < \alpha_{\tau^2_\mu}$, set $\tau^{2,cur}_\mu  = \tau^{2,can}_\mu $
\For{k in 1 : K}
\State Draw $\lambda_k^{cur}(\vv{t})\mid-\sim \pi(\lambda_k(\vv{t})\mid-)$ from Equation \eqref{eq:lambdaCond} 
\State Propose $l^{can}_k \sim N(l^{cur}_k,\Sigma_{l_k}^{prop})$
\State Compute $\alpha_{l_k} := \frac{\pi(\lambda_k^{cur}(\vv{t})\mid l^{can}_k,\tau^{2,cur}_k)\pi(l^{can}_k)}{\pi(\lambda_k^{cur}(\vv{t})\mid l^{cur}_k,\tau^{2,cur}_k)\pi(l^{cur}_k)}$
\State If $\text{unif}(0,1) < \alpha_{l_k}$, set $l^{cur}_k = l^{can}_k$
\State Propose $\tau^{2,can}_k \sim N(\tau^{2,cur}_k,\Sigma_{\tau^2_k}^{prop})$
\State Compute $\alpha_{\tau^2_k} := \frac{\pi(\lambda_k^{cur}(\vv{t})\mid l^{cur}_k,\tau^{2,can}_k)\pi(\tau^{2,can}_k)}{\pi(\lambda_k^{cur}(\vv{t})\mid l^{cur}_k,\tau^{2,cur}_k)\pi(\tau^{2,cur}_k)}$
\State If $\text{unif}(0,1) < \alpha_{\tau^2_k}$, set $\tau^{2,cur}_k  = \tau^{2,can}_k $
\State Draw $\eta_{\cdot k}^{cur}\mid-\sim \pi(\eta_{\cdot k}\mid-)$ from Equation \eqref{eq:etaCond} 
\State Draw $\psi_k^{cur}\mid-\sim \pi(\psi_k\mid-)$ from Equation \eqref{eq:psiCond} 
\EndFor
\State Draw $\sigma^{2,cur}\mid-\sim \pi(\sigma^{2}\mid-)$ from Equation \eqref{eq:sigmaCond} 
\State Save \begin{eqnarray*}
   &\mu^{[iter]}(\vv{t}),l_\mu^{[iter]},\tau^{2,[iter]}_\mu\{\lambda_k(\vv{t})^{[iter]},l_k^{[iter]},\tau^{2,[iter]}_k,\eta_{\cdot k}^{[iter]},\psi_k^{[iter]}\}_{k = 1}^K,\sigma^{2,[iter]} \\ 
   & := \mu^{cur}(\vv{t}),l_\mu^{cur},\tau^{2,cur}_\mu\{\lambda_k(\vv{t})^{cur},l_k^{cur},\tau^{2,cur}_k,\eta_{\cdot k}^{cur},\psi_k^{cur}\}_{k = 1}^K,\sigma^{2,cur}
\end{eqnarray*} 
\EndFor
\end{algorithmic}
\end{algorithm}

For algorithm to sample parameters from functional factor analysis model presented in Algorithm \ref{alg:fpca}, the full conditional distribution of $\mu(\vv{t})$ is 
\begin{eqnarray}\label{eq:muCond}
   \mu(\vv{t})\mid- & \sim & N(\mathbb{E}[\mu(\vv{t})\mid-],\mathbb{V}[\mu(\vv{t})\mid-]), \\
   \mathbb{V}[\mu(\vv{t})\mid-] & = & \Big(C_\mu^{-1}(\vv{t},\vv{t}) + \frac{1}{\sigma^2}\sum_{i = 1}^nO_i^\top O_i\Big)^{-1}\nonumber \\
   \mathbb{E}[\mu(\vv{t})\mid-] & = & \frac{1}{\sigma^2}\mathbb{V}[\mu(\vv{t})\mid-]\sum_{i = 1}^n\Big(O_i^\top y_i(\vv{t}_i) - O_i^\top O_i\sum_{k = 1}^k\eta_{i,k}\lambda_k(\vv{t})\Big).\nonumber 
\end{eqnarray}
The full conditional distribution of $\lambda_k(\vv{t}),\ k = 1,\ldots,K$ is 
\begin{eqnarray}\label{eq:lambdaCond}
   \lambda_k(\vv{t})\mid- & \sim & N(\mathbb{E}[\lambda_k(\vv{t})\mid-],\mathbb{V}[\lambda_k(\vv{t})\mid-]), \\
   \mathbb{V}[\lambda_k(\vv{t})\mid-] & = & \big(C_k^{-1}(\vv{t},\vv{t}) + \frac{1}{\nu_\lambda}W\Lambda_{(-k)}(\vv{t})\Lambda_{(-k)}(\vv{t})^\top W + \frac{1}{\sigma^2}\sum_{i = 1}^n\eta_{i,k}^2O_i^\top O_i\big)^{-1}\nonumber \\
   \mathbb{E}[\lambda_k(\vv{t})\mid-] & = & \nonumber \frac{1}{\sigma^2}\mathbb{V}[\lambda_k(\vv{t})\mid-]\sum_{i = 1}^n\big(\eta_{i,k} O_i^\top y_i(\vv{t}_i) - O_i^\top O_i\mu(\vv{t})- O_i^\top O_i\sum_{j \neq k}\eta_{i,j}\lambda_j(\vv{t})\big)  .
\end{eqnarray}
The full conditional distribution of $\eta_{\cdot k},\ k = 1,\ldots,K$ is 
\begin{eqnarray}\label{eq:etaCond}
   \eta_{\cdot k}\mid- & \sim & N(\mathbb{E}[\eta_{\cdot k}\mid-],\mathbb{V}[\eta_{\cdot k}\mid-]), \\
   \mathbb{V}[\eta_{\cdot k}\mid-]  & = & \Big(\text{diag}\big(\lambda_k(\vv{t})^\top O_i^\top O_i \lambda_k(\vv{t})\big)_{i = 1}^n + \frac{1}{\psi_k}(I_n + \frac{1}{\nu_\eta}1_n1_n^\top)\Big)^{-1}\nonumber \\
    \mathbb{E}[\eta_{\cdot k}\mid-]& = & \frac{1}{\sigma^2}\mathbb{V}[\eta_{\cdot k}\mid-]\Big[\big(y_i(\vv{t}_i) - O_i\mu(\vv{t})- O_i\sum_{j \neq k}\eta_{i,j}\lambda_j(\vv{t})\big)^\top O_i\lambda_k(\vv{t})\Big]_{i = 1}^n \nonumber,  
\end{eqnarray}
where $\text{diag}(\cdot)_{i = 1}^n$ denotes a diagonal matrix with diagonal elements given by the argument, and $[\cdot]_{i = 1}^n$ is a vector with elements given by the argument. \\
The full conditional distribution of $\psi_k,\ k = 1,\ldots,K$ is 
\begin{eqnarray}\label{eq:psiCond}
   (\psi_k)^{-1}\mid- & \sim & \text{gamma}\big(\alpha_\eta + \frac{n}{2},\beta_\eta \frac{1}{2}\eta_{\cdot k}^\top(I_n + \frac{1}{\nu_\eta}1_n1_n^\top)^{-1}\eta_{\cdot k}\big)
\end{eqnarray}
The full conditional distribution of $\sigma^2$ is 
\begin{eqnarray}\label{eq:sigmaCond}
   (\sigma^2)^{-1}\mid- \sim \text{gamma}\Big(\alpha_\sigma + \frac{1}{2}\sum_{i = 1}^n m_i, & \nonumber \\ & \hspace{-1.1in} \beta_\sigma + \frac{1}{2}\sum_{i = 1}^n\sum_{j = 1}^{m_i}\big(y_i(t_{i,j}) - \mu(t_{i,j}) + \sum_{k = 1}^K\eta_{i,k}\lambda_k(t_{i,j})\big)^2\Big).
\end{eqnarray}

A Metropolis-within-Gibbs algorithm for our generalized functional factor analysis approach for binary functional data is presented in Algorithm \ref{alg:gfpca}. The latent variable formulation of the probit model leads to a simple step for sampling the latent processes $z^*_i,\ i = 1,\ldots,n$ as described in \cite{albert1993}. Conditioning on the latent processes, the sampling of the model parameters is similar to the functional factor analysis setting. 

\begin{algorithm}[!h]
\caption{Metropolis-within-Gibbs for Generalized Functional Factor Analysis}
\label{alg:gfpca}
\begin{algorithmic}[1]
\State Randomly initialize current states of all parameters using their prior distribution
\For{ iter = 1 : N}
\For{ i = 1 : n}
\For{ j = 1 : $m_i$}
\State \hspace{-.75 in} Draw $z^{*,cur}_i(t_{i,j})\mid-\sim$  $$\begin{cases}
    \text{truncated-normal}_{(0,\infty)}(\mu^{cur}(t_{i,j}) + \sum_{k = 1}^K\eta^{cur}_{i,k}\lambda^{cur}_k(t_{i,j}),1),\ z_i(t_{i,j})=1,\\
    \text{truncated-normal}_{(-\infty,0)}(\mu^{cur}(t_{i,j}) + \sum_{k = 1}^K\eta^{cur}_{i,k}\lambda^{cur}_k(t_{i,j}),1),\ z_i(t_{i,j})=0.
    \end{cases}$$ 
\EndFor
\EndFor
\State $\mu^{cur}(\vv{t}),l_\mu^{cur},\tau^{2,cur}_\mu\{\lambda_k(\vv{t})^{cur},l_k^{cur},\tau^{2,cur}_k,\eta_{\cdot k}^{cur},\psi_k^{cur}\}_{k = 1}^K$ can be sampled using steps 3 - 19 of Algorithm \ref{alg:fpca}, replacing $y_i := z^*_i,\ i=1,\ldots,n \text{ and } \sigma^2 := 1$
\State Save \begin{eqnarray*}
   &\mu^{[iter]}(\vv{t}),l_\mu^{[iter]},\tau^{2,[iter]}_\mu\{\lambda_k(\vv{t})^{[iter]},l_k^{[iter]},\tau^{2,[iter]}_k,\eta_{\cdot k}^{[iter]},\psi_k^{[iter]}\}_{k = 1}^K \\ 
   & := \mu^{cur}(\vv{t}),l_\mu^{cur},\tau^{2,cur}_\mu\{\lambda_k(\vv{t})^{cur},l_k^{cur},\tau^{2,cur}_k,\eta_{\cdot k}^{cur},\psi_k^{cur}\}_{k = 1}^K
\end{eqnarray*} 
\EndFor
\end{algorithmic}
\end{algorithm}

\section{Simulation Experiments}\label{sec:additionalSim}

In this section, we study the performance of our approach under inferential settings that are motivated by the goals of the analysis of the Cebu Longitudinal Health and Nutrition Survey in the main paper. 

\subsection{Functional Factor Analysis for Gaussian Observations}\label{sec:ffa_sim}

We generate $n = 100$ functions on a grid, $\vv{t}$, with $m = 30$ equally spaced points, following 
\begin{equation}\nonumber
    f_i(\vv{t}) = \mu(\vv{t}) + \{\lambda_1(\vv{t}),\ldots,\lambda_K(\vv{t})\}^\top\eta_i,\ i=1,\ldots,n.
\end{equation}
With $K = 2$, we generate $\mu,\ \lambda_1,\ \lambda_2,\ \eta_{\cdot 1},\ \eta_{\cdot 2}$ according to the priors presented in Section 3.2 of the main paper with $l^2_\mu = l^2_{\lambda_1} = l^2_{\lambda_2} = .4$ and $\tau^2_\mu = \tau^2_{\lambda_1} = \tau^2_{\lambda_2} = 1$. Simulated observations are generated according to $y_i(\vv{t}_i) = f_i(\vv{t}_i) + \epsilon_i(\vv{t}_i)$, with $\epsilon_i(\vv{t}_i)\sim N_{m_i}(0,\sigma^2I_{m_i})$ and $\sigma^2 = 1$. The subject specific grids $\vv{t}_i$ are generated by randomly omitting $25\%,\ 50\%,\ \text{and }75\%$ of the points in the common grid, $\vv{t}$. Estimation performance is measured using mean integrated squared error for how well different methods are able to capture $\mu$, $\lambda_1$, $\lambda_2$, and all $f_i,\ i = 1,\ldots,n$. 

As competing methods, we use \texttt{R} package \texttt{fdapace} to estimate model parameters according to the principal analysis through conditional expectation methodology \citep{yao2005}. The approach of \cite{yao2005} uses kernel smoothing to estimate a mean and covariance function from sparse functional data, where individual function estimates rely on these population quantities. Additionally, we use the \texttt{R} package \texttt{fpca} to implement the methodology of \cite{peng2009}. They use a pre-specified basis to represent factor loadings, and inference is carried out through the basis coefficients. The work of \cite{crainiceanu2010} provides an empirical Bayes approach, where the authors mean-center the observations, and estimate a covariance surface using penalized thin-plate splines, from which estimates of factor loadings are computed. While fixing the factor loadings to their estimated values, the authors formulate priors for latent factors and error variance, and perform Markov chain Monte Carlo using \texttt{WinBUGS}. 

To compare these methods, we replicate data generation $100$ times. Figure \ref{fig:sims}\footnote{Abbreviations in figure legends: CG2010 - \cite{crainiceanu2010}; PP2009 - \cite{peng2009}; WZSG2019 - \cite{wrobel2019}; YMW2005 - \cite{yao2005}; ZLLZ2021 - \cite{zhong2021}.} shows mean integrated squared error results for different model parameters and levels of sparsity in panel (a). Boxplots show differences between mean integrated squared error of model parameters between a competing method and our $\small{\mbox{NeMO}}$ approach within a simulation replicate.  For all methods, estimation performance deteriorates as the sparsity level increases. Consistently, the estimation error from $\small{\mbox{NeMO}}$ is lower than the others. 

We are also interested in performance in inferring the number of latent factors, $K = 2$. We use the approach outlined in Section 3.5 of the main paper for the $\small{\mbox{NeMO}}$ model, while using default choices for competing methods. In \cite{yao2005} the number of latent factors is set to describe $95\%$ of variability in the observations. In \cite{peng2009}, the number of factors is selected to minimize a cross validation score. The implementation of \cite{crainiceanu2010} assumes the number of factors is given; we provide the method the correct number. Across sparsity levels and replicates, the correct number of factors is chosen $99.7\%$ of the time for $\small{\mbox{NeMO}}$, $2.3\%$ for \cite{yao2005} and $66\%$ for \cite{peng2009}.

\begin{figure}[t!]
\begin{center}
 \begin{tabular}{c}
 Continuous Data \\
\includegraphics[width = 5 in]{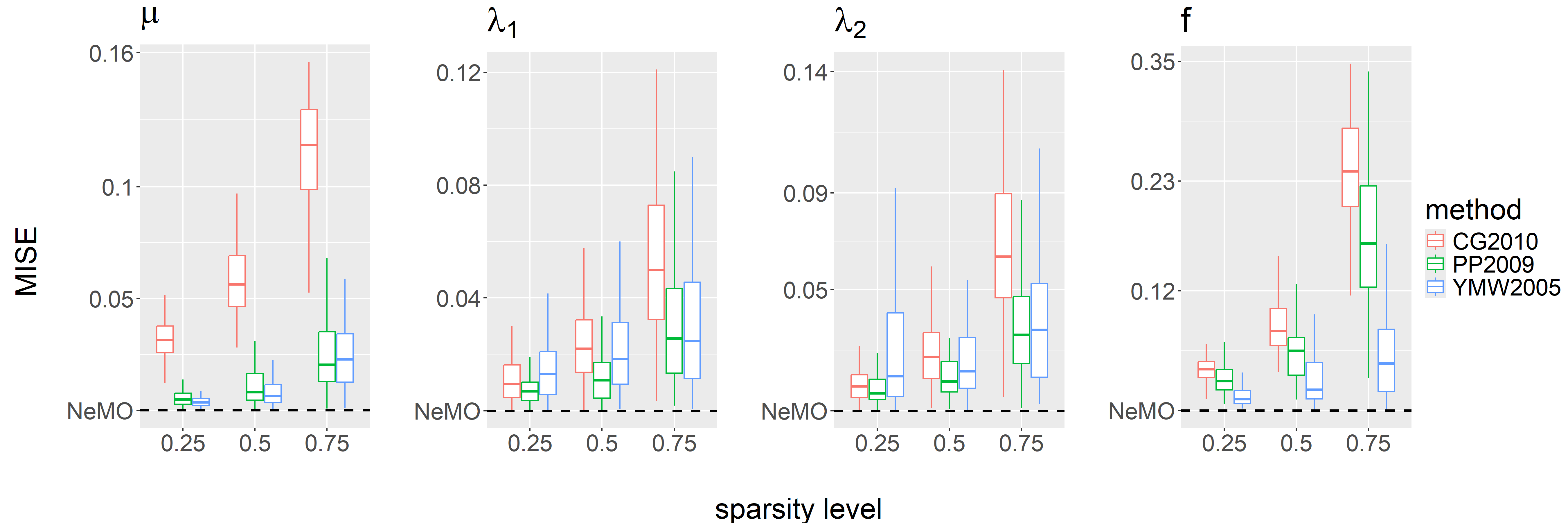} \\
(a) 
\end{tabular}

 \begin{tabular}{c|c}
 Binary Data & Continuous Data \\
 & (Coverage)\\
 \includegraphics[width = 1.75 in]{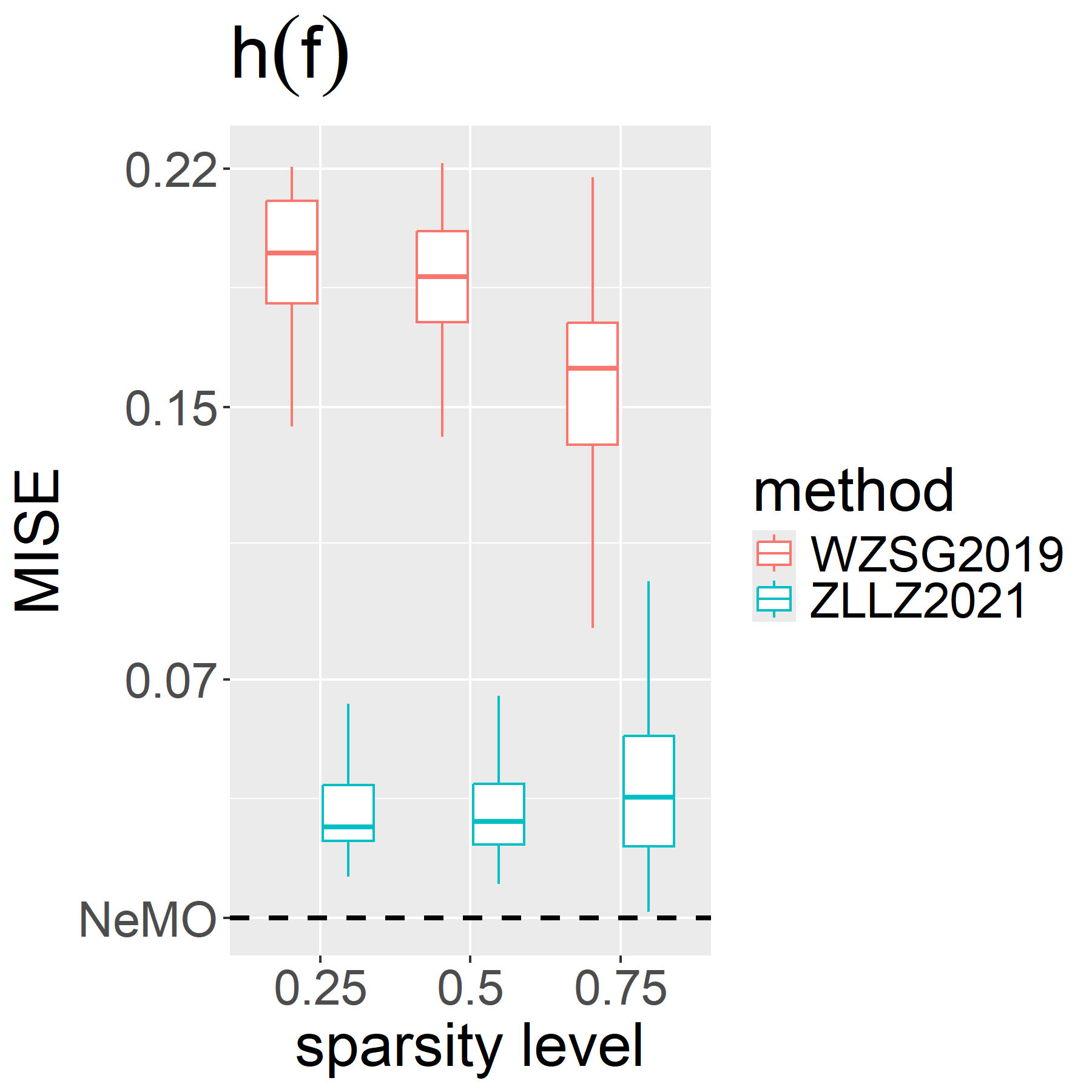} &
\includegraphics[width = 1.55 in]{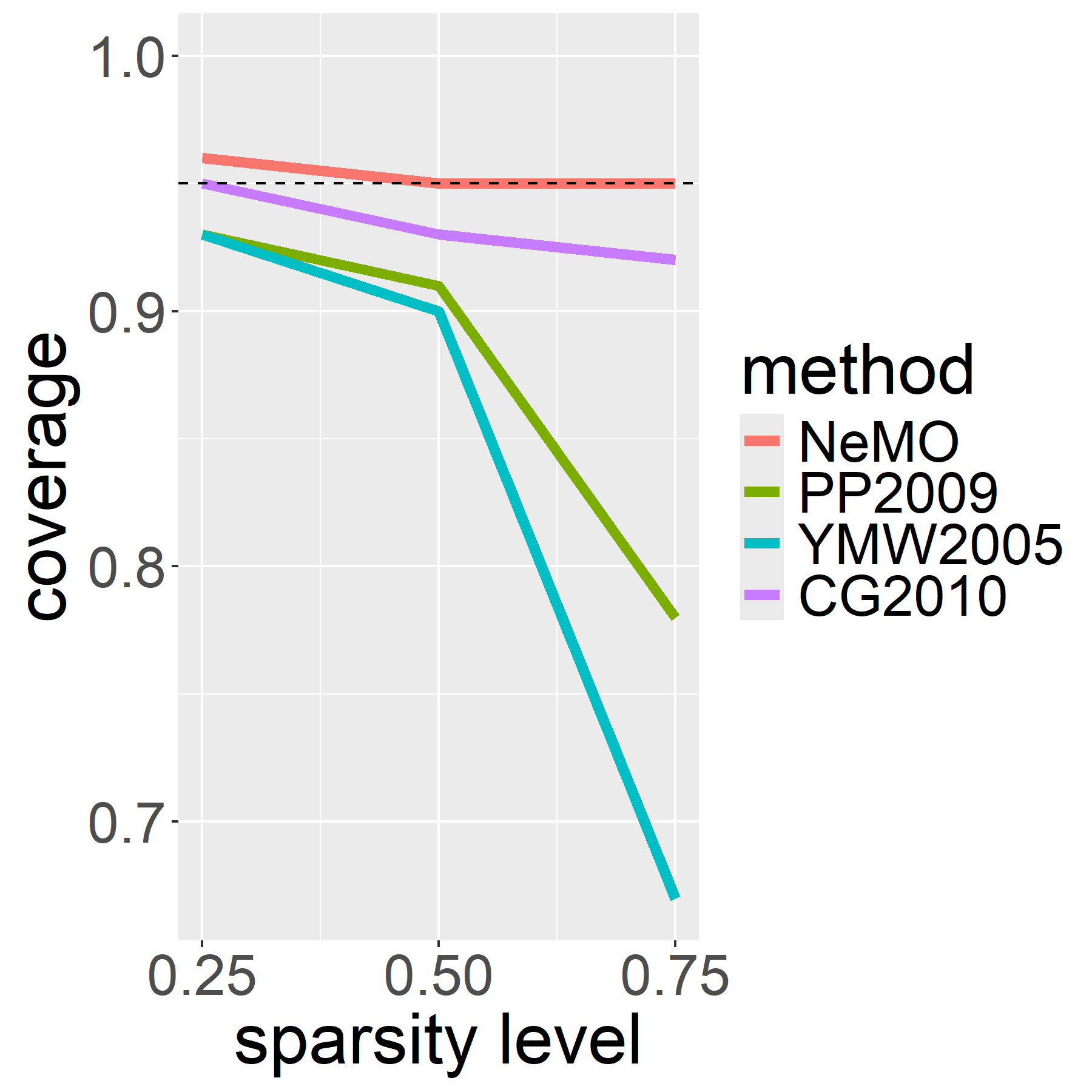}\\ (b) & (c) 
\end{tabular}
\caption{(a) \& (b) Mean integrated squared error for estimated parameters at different sparsity levels in the simulations of Sections \ref{sec:ffa_sim} and \ref{sec:bffa_sim}, respectively. (c) Coverage of credible/confidence intervals for different methods at different sparsity levels in simulations of Section \ref{sec:sim_regression}.}\label{fig:sims}
    \end{center}
\end{figure}


\subsection{Estimation for Non-Gaussian Observations}\label{sec:bffa_sim}

In the generalized functional factor analysis setting, we generate $n = 100$ binary functions on a grid $\vv{t}$ with $m = 30$ equally spaced points according to  \begin{equation}\nonumber
    \mbox{pr}\{y_i(\vv{t}) = 1\} = h
    \{f_i(\vv{t})\} = \Phi\big[\mu(\vv{t}) + \{\lambda_1(\vv{t}),\ldots,\lambda_K(\vv{t})\}^\top\eta_i\big],\ i=1,\ldots,n, 
\end{equation}
where $h(\cdot) = \Phi(\cdot)$ is the standard normal cumulative distribution function. With $K = 2$, we generate $\mu,\ \lambda_1,\ \lambda_2,\ \eta_{\cdot 1}\ \eta_{\cdot 2}$ according to the priors presented in Section 3.2 of the main paper with $l^2_\mu = l^2_{\lambda_1} = l^2_{\lambda_2} = .4$ and $\tau^2_\mu = \tau^2_{\lambda_1} = \tau^2_{\lambda_2} = 10$. To study the effect of sparsity, we randomly omit $25\%,\ 50\%,\ \text{and }75\%$ of the observations. Estimation performance of $h(f_i),\ i = 1,\ldots,n$ is measured through mean integrated squared error.

From our model, we use the posterior mean of $h[\mu(\vv{t}) + \{\lambda_1(\vv{t}),\ldots,\lambda_K(\vv{t})\}^\top\eta_i]$ to produce an estimate of $h(f_i)$ based on $1,000$ Markov chain Monte Carlo iterations. For competitors, we use the \texttt{bfpca} function from the \texttt{registr} package in \texttt{R} to estimate $h(f_i)$ using the methodology described in \cite{wrobel2019}. We also estimate $h(f_i)$ using the methods of \cite{zhong2021} from the \texttt{SLFPCA} \texttt{R} package. 
Panel (b) of Figure \ref{fig:sims} shows the mean integrated squared error results. Each of the boxplots show the difference between the mean integrated squared error of model parameters between a competing method and our $\small{\mbox{NeMO}}$ approach within a simulation replicate. Mean integrated squared error from $\small{\mbox{NeMO}}$ is consistently below that of the others. 

The implementations of \cite{wrobel2019} and \cite{zhong2021} do not enable selection of the number of latent factors, so they are fixed to the true number. Using the approach outlined in Section 3.5 of the main paper for our $\small{\mbox{NeMO}}$ model, we select the correct number of latent factors $100\%$ of the time when $25\%$ or $50\%$ of the observations are missing, and $89\%$ of the time when $75\%$ of the data are missing. 


\subsection{Coverage of Regression Coefficients}\label{sec:sim_regression}

We study our model's ability to serve as the foundation for inferential tasks in a latent factor regression setting \citep{montagna2012}, especially as it relates to the analysis of the Cebu Longitudinal Health and Nutrition Survey in the main paper. Data are generated as in Section \ref{sec:ffa_sim} with $K = 1$ and $\eta_{i,1} = \Theta x_{i,1} + \xi_{i,1},$ where $x_{i,1}$ is an observed covariate, $\Theta$ is a regression coefficient and $\xi_{i,1}\sim N(0,1)$ are independent. In each of 100 replicates, $\Theta$ and $x_{i,1}$ are independently generated as standard normal random variables. We study frequentist coverage of $95\%$ credible intervals for $\Theta$ based on $10,000$ Markov chain Monte Carlo samples. For comparison, we estimate $\eta_{i,1}, i = 1,\ldots,n$, using the methods of \cite{yao2005} and \cite{peng2009} from Section \ref{sec:ffa_sim} and regress the estimated factors onto the covariate. For the method of \cite{crainiceanu2010}, the latent factor regression is a hierarchical layer in the model. Figure \ref{fig:sims} panel (c) show interval coverage when $\sigma^2 = 1$. These results illustrate the importance of uncertainty propagation between estimation and inference. The coverage for \cite{peng2009} and \cite{yao2005} is very low due to conditioning on first stage estimates. The low coverage of \cite{crainiceanu2010} is due to the empirical Bayes setup, which fails to account for uncertainty in estimating factor loadings. In this setting $\small{\mbox{NeMO}}$ propagates uncertainty between estimation and inference improving performance.


\subsection{Simulation Example under Model Misspecification}

\begin{figure}[t!]
\begin{center}
 \begin{tabular}{cc}
\includegraphics[width = 1.25 in]{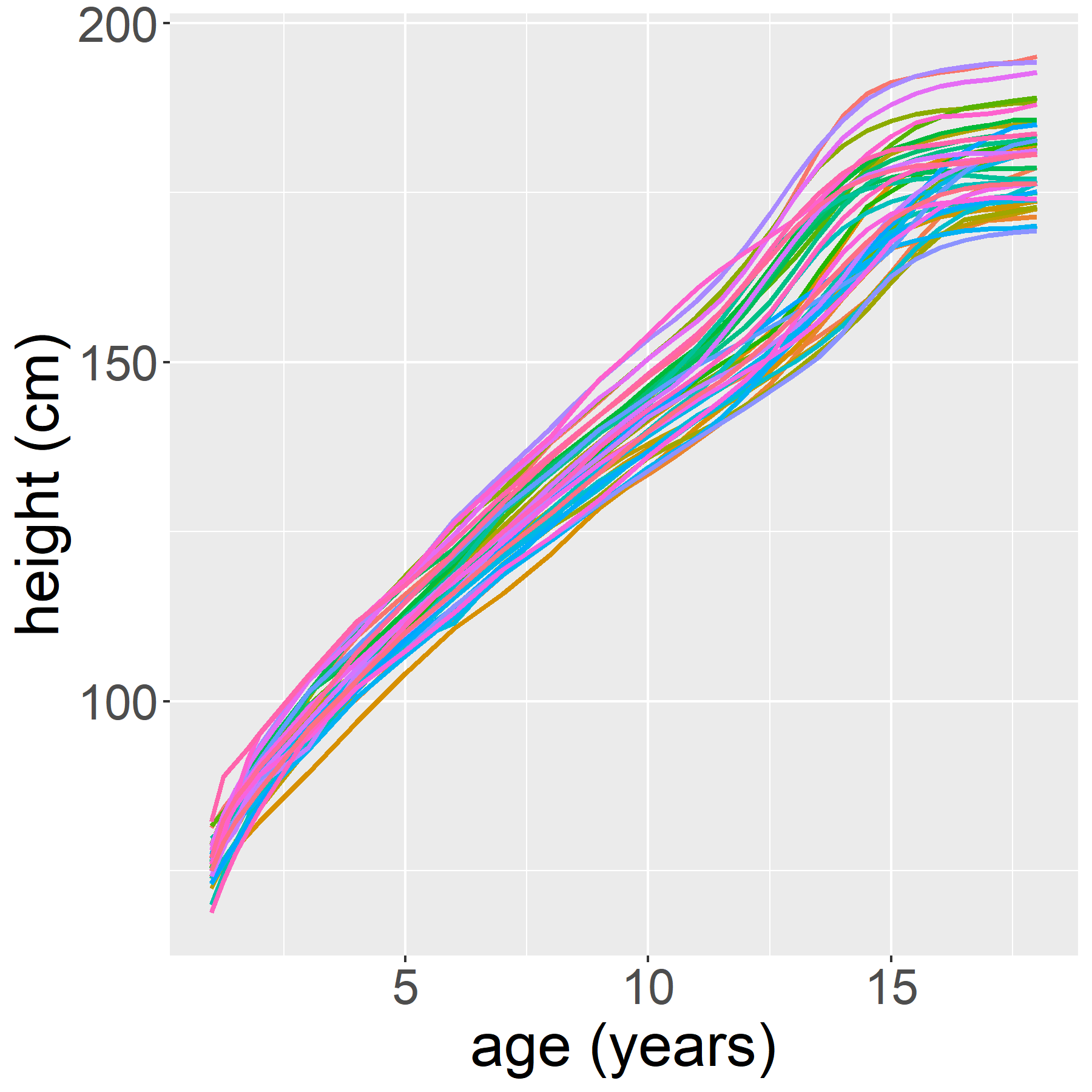} & \includegraphics[width = 4 in]{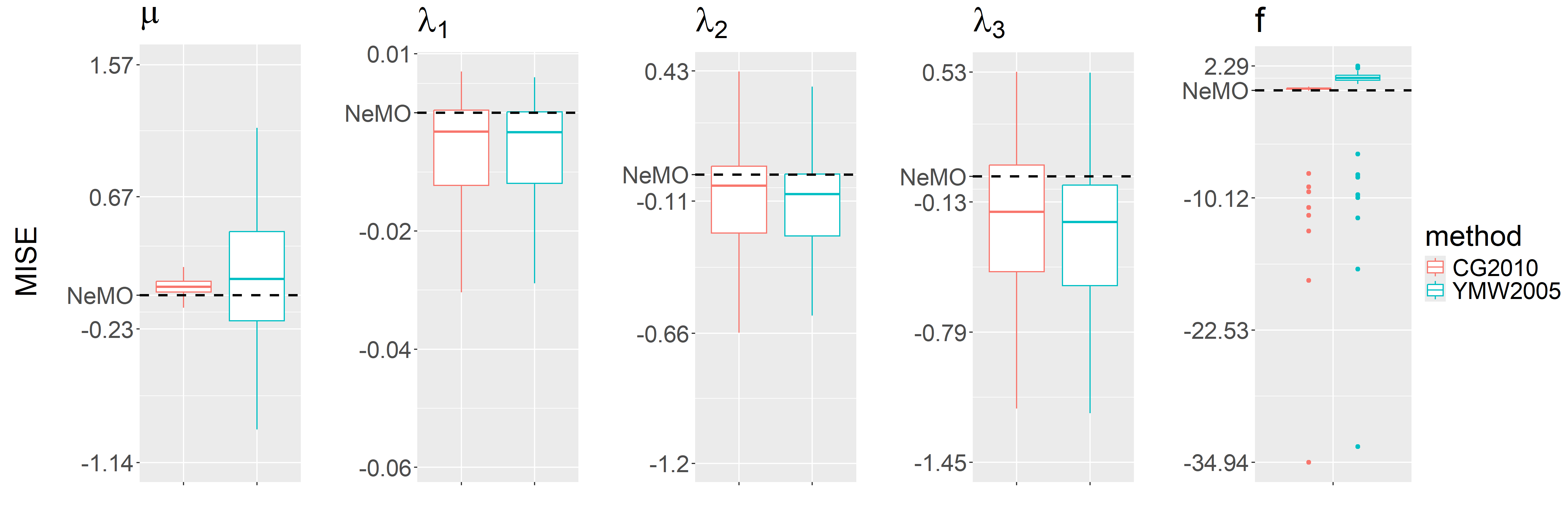} \\
(a) & (b) 
\end{tabular}
\caption{(a) Observed growth curves for $n = 39$ male subjects in the Berkeley growth study. (b) Mean integrated squared error results of estimated model parameters.}\label{fig:growthSim}
    \end{center}
\end{figure}

In this simulated example we generate data that replicate the variability in observations from the Berkeley growth study, described in \citep{ramsay2005}. Panel (a) of Figure \ref{fig:growthSim} displays the height in centimeters of the 39 male children in the study. Using the \texttt{fdapace} \texttt{R} package we inferred the components of an functional factor analysis model that implements the methodology of \cite{yao2005}, under the following observation model
\begin{equation}
    y_i(\vv{t}) = \mu(\vv{t}) + (\lambda_1(\vv{t}),\ldots,\lambda_K(\vv{t}))^\top\eta_i + \epsilon_i(\vv{t}),\ i=1,\ldots,n.
\end{equation}
The ages $\vv{t}$ correspond to quarterly measurements in the first year of age, annual measurements from age two to age eight, then biannual measurements from age 8 to 18. The \texttt{fdapace} package selected $K = 3$, so that the estimated fraction of variance explained by the loadings is 95\%. In this simulation, we sample independent Gaussian latent factors and additive error with variance estimated from \texttt{fdapace}, and plug estimates of the mean and factor loadings into the above observation model to generate functional datasets with variability similar to that of Figure \ref{fig:growthSim}(a). Using the simulated observations, we measure estimation performance of various methods using mean integrated squared error.

For the $\small{\mbox{NeMO}}$ model, we use the posterior mean of $\mu(\vv{t}) + (\lambda_1(\vv{t}),\ldots,\lambda_K(\vv{t}))^\top\eta_i$ to produce an estimate of $f_i$ based on $1,000$ Markov chain Monte Carlo samples. In terms of other functional factor analysis methods, we compare to \cite{yao2005}, \cite{peng2009}, and \cite{crainiceanu2010}. We replicate the entire data generating mechanism $100$ times for comparison. Panel (b) of Figure \ref{fig:growthSim}\footnote{Abbreviations in figure legend: CG2010 - \cite{crainiceanu2010}; YMW2005 - \cite{yao2005}.} shows mean integrated squared error results. The results from the \cite{peng2009} method is excluded from the figure because the mean integrated squared error was orders of magnitudes larger than the other  methods. The mean integrated squared error computed using $\small{\mbox{NeMO}}$, \cite{yao2005}, and \cite{crainiceanu2010} estimates are all comparable. Occasionally, the $\small{\mbox{NeMO}}$ model selects $K<3$, which results in the mean integrated squared error to be larger than the competitors, although this only happened in 8 of the 100 replicates. Even in this realistic simulation setting under model misspecification, our functional factor analysis model using $\small{\mbox{NeMO}}$ processes performs well compared with competing methods.

\section{Additional Information for the Analysis of the Cebu Longitudinal Health and Nutrition Survey Dataset}\label{sec:cebuAdd}

\subsection{Hierarchichal Model Formulation}  

In Section 5 of the main paper, we motivated a flexible model for weight of young children while accounting for scalar and longitudinal covariates. The observation model is restated as follows, 
\begin{eqnarray} 
y_i(\vv{t}_i) & = & O_i\mu(\vv{t}) + O_i(\lambda_1(\vv{t}),\ldots,\lambda_K(\vv{t}))^\top\eta_i + O_i\sum_{j = 1}^4\int_{T}\beta_j(s,\vv{t})z_{i,j}(s)ds + \epsilon_i(\vv{t}_i) \label{eq:obs_model1}\\
 \eta_i & = & \Theta x_i + {\xi}_i \label{eq:obs_model2}
\end{eqnarray}
Similar to the hierarchical functional factor analysis model presented in section \ref{sec:MCMC}, we specify the following model components: 
\begin{eqnarray*}
 & \mu\mid l_\mu,\tau^2_\mu \sim \mbox{GP}(0,C_\mu(\cdot,\cdot)),\ (l^2_\mu)^{-1} \sim \text{gamma}(\alpha_\mu,\beta_\mu), \ \tau^2_\mu \sim \text{half-normal}(\gamma_\mu) \\
  & \lambda_k\mid l_k,\tau^2_k,\Lambda_{(-k)}\sim \mbox{GP}(0,C^{\nu_\lambda}_k(\cdot,\cdot)),\ k = 1,\ldots,K \\ 
  & (l^2_k)^{-1} \sim \text{gamma}(\alpha_\lambda,\beta_\lambda), \ \tau^2_k \sim \text{half-normal}(\gamma_\lambda)\ k = 1,\ldots, K \\
  &  \epsilon_i(\vv{t}_i)\mid\sigma^2\sim N(0,I_{m_i}),\ i = 1\ldots,n,\ (\sigma^2)^{-1} \sim \text{gamma}(\alpha_\sigma,\beta_\sigma).
\end{eqnarray*}
We incorporate scalar covariates, $x_i$, through the latent factors, $\eta_i = B x_i + {\xi}_i$. For these model parameters, we assign 
\begin{eqnarray*}
 & \Theta_{k,q}\mid\psi_k\sim N(0,\psi_k),\ q=1,\ldots,4,\ {\xi}_k\mid\psi_k \sim N(0,\psi_k(I_n + \frac{1}{\nu_\eta}1_n1_n^\top)^{-1}),\ k = 1,\ldots, K \\
  & (\psi_k)^{-1}\sim \text{gamma}(\alpha_\xi,\beta_\xi), k = 1,\ldots, K 
\end{eqnarray*}

As noted in the main paper, the longitudinal covariates are sparsely recorded. Based on initial EDA of the data, the illness indicator covariates, $z_{i,2},z_{i,3}, z_{i,4},\ i = 1,\ldots,n$, do not appear to have much structured variability across subjects. For model based imputation of these covariates, we model the probability of illness at time $s$ as $p(z_{i,j}(s) = 1) = \Phi(\mu^{z_j}(s)),\ i = 1,\ldots,n, j = 2,3,4$. Alternatively, we do suspect that there is structured variability in the breastfeeding indicator $z_{i,1}$, given typical breastfeeding patterns and biological constraints. We model $P(z_{i,1}(t) = 1) = \Phi(\mu^{z_1}(s) + (\lambda^z_1(s),\ldots,\lambda^z_{K_z}(s))^\top{\eta}^z_i)$ as in the generalized functional factor analysis setting of our model. For these components, we assign 
\begin{eqnarray*}
    & \mu^{z_j}(\vv{s})\mid l_{\mu^{z_j}},\tau^2_{\mu^{z_j}} \sim \mbox{GP}(0,C_{\mu^{z_j}}(\cdot,\cdot)), j = 1\ldots,4 \\ & (l^2_{\mu^{z_j}})^{-1}\ \sim \text{gamma}(\alpha_{\mu^z},\beta_{\mu^z}), \ \tau^2_{\mu^{z_j}} \sim \text{half-normal}(\gamma_{\mu^z}),\ j = 1,\ldots,4 \\
    & \lambda^z_k\mid l^z_k,\tau^{2,z}_k,\Lambda^z_{(-k)}\sim \mbox{GP}(0,C^{\nu^z_\lambda}_{k^z}(\cdot,\cdot)),\ \eta^z_{\cdot k}\mid\psi^z_k \sim N(0,\psi^z_k(I_n + \frac{1}{\nu_\eta}1_n1_n^\top)^{-1}),\\
    & (l^{2,z}_k)^{-1} \sim \text{gamma}(\alpha_{\lambda^z},\beta_{\lambda^z}), \ \tau^{2,z}_k \sim \text{half-normal}(\gamma_{\lambda^z}),\ (\psi^z_k)^{-1}\sim \text{gamma}(\alpha_{\eta^z},\beta_{\eta^z}),\\ & k = 1,\ldots, K^z.
\end{eqnarray*}
For grid points where the covariates are not observed, $s^{miss}_{i,z_j},\ j = 1,\ldots,4$, values are imputed for $z_{i,j}(s^{miss}_{i,z_j})$ based on the above models. These values are sampled throughout the Markov chain Monte Carlo algorithm to account for the uncertainty due to the missing data. 

Finally, we can specify the model components for the functional linear model component \citep{ramsay1991}. Each $\beta_{(s,t)}$ is represented via a basis expansion. For the historical linear model component for the breastfeeding indicator, we use the tent basis described in \cite{malfait2003}. We denote these basis elements with $V_{p,1}(s,t),\ p = 1,\ldots,P$, so that $\int_{T}\beta_1(s,t)z_{i,1}(s)ds = \sum_{p = 1}^{P_1}\rho_{p,1}\int_{T}V_{p,1}(s,t)z_{i,1}(s)ds$, where the integrals are approximated numerically based on the imputed breastfeeding indicator covariate. For the illness covariates, the integral form equates to $\int_{T}\beta_j(s,t)z_{i,j}(s)ds = \beta_j(t)z_{i,j}(t),\ j = 2,3,4$ in the concurrent linear model \citep{hastie1993}. We represent each with a b-spline basis, $\beta_j(t) = \sum_{p = 1}^{P_j}\rho_{p,j}V_{p,j}(t)$. Inference for these model components is carried out through inference on the basis coefficients for which we specify 
\begin{eqnarray*}
    & \rho_{p,j} \sim N(0,\tau^2_{\rho}),\ p = 1,\ldots,P_j,\ j = 1,\ldots,4.
\end{eqnarray*}

\subsection{Additional Figures for Posterior Inference}\label{sec:add_post_figs}

\begin{figure}[t]
\begin{center}
 \begin{tabular}{c}
\includegraphics[width = 3.5 in]{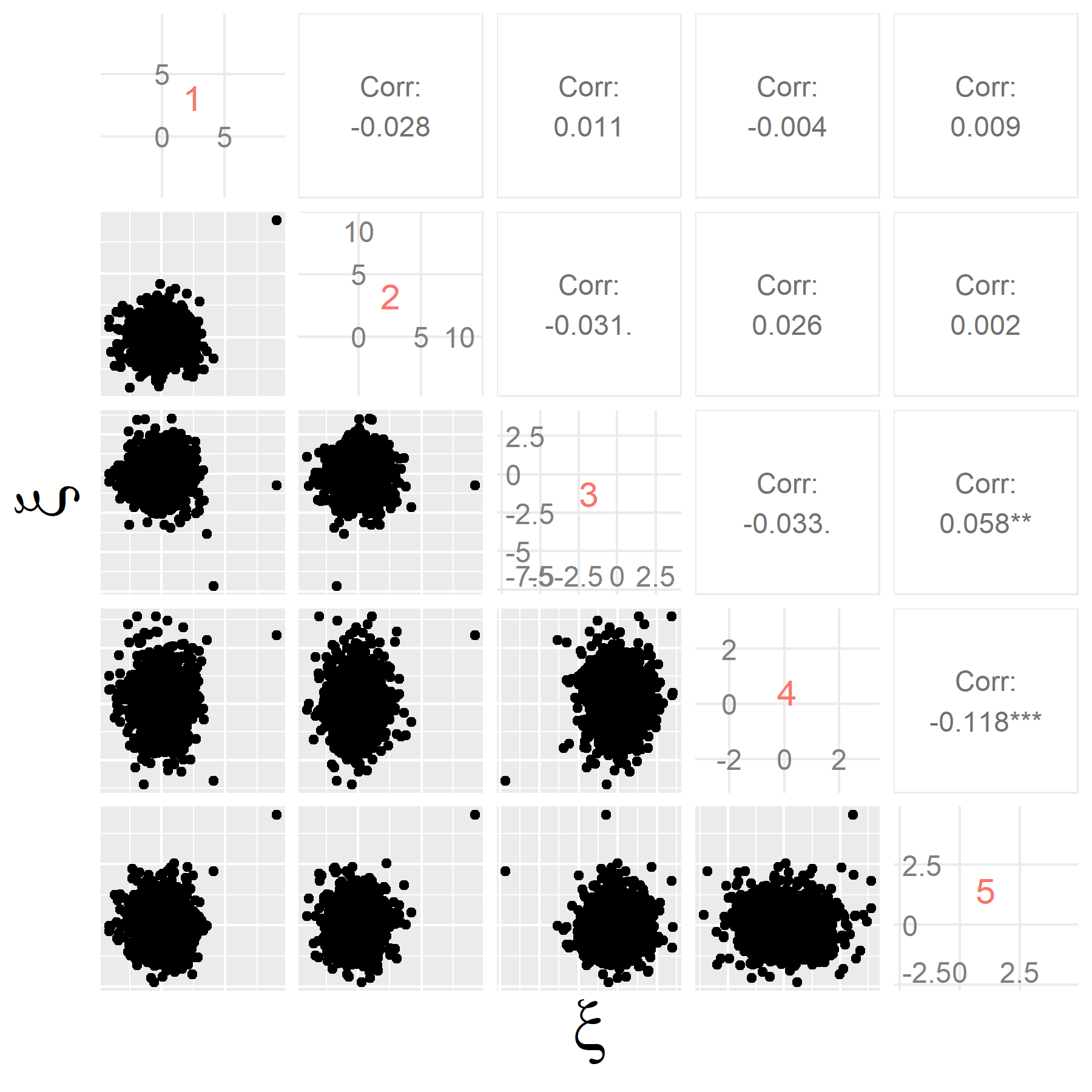} \\
(a) \\
\includegraphics[width = 3.5 in]{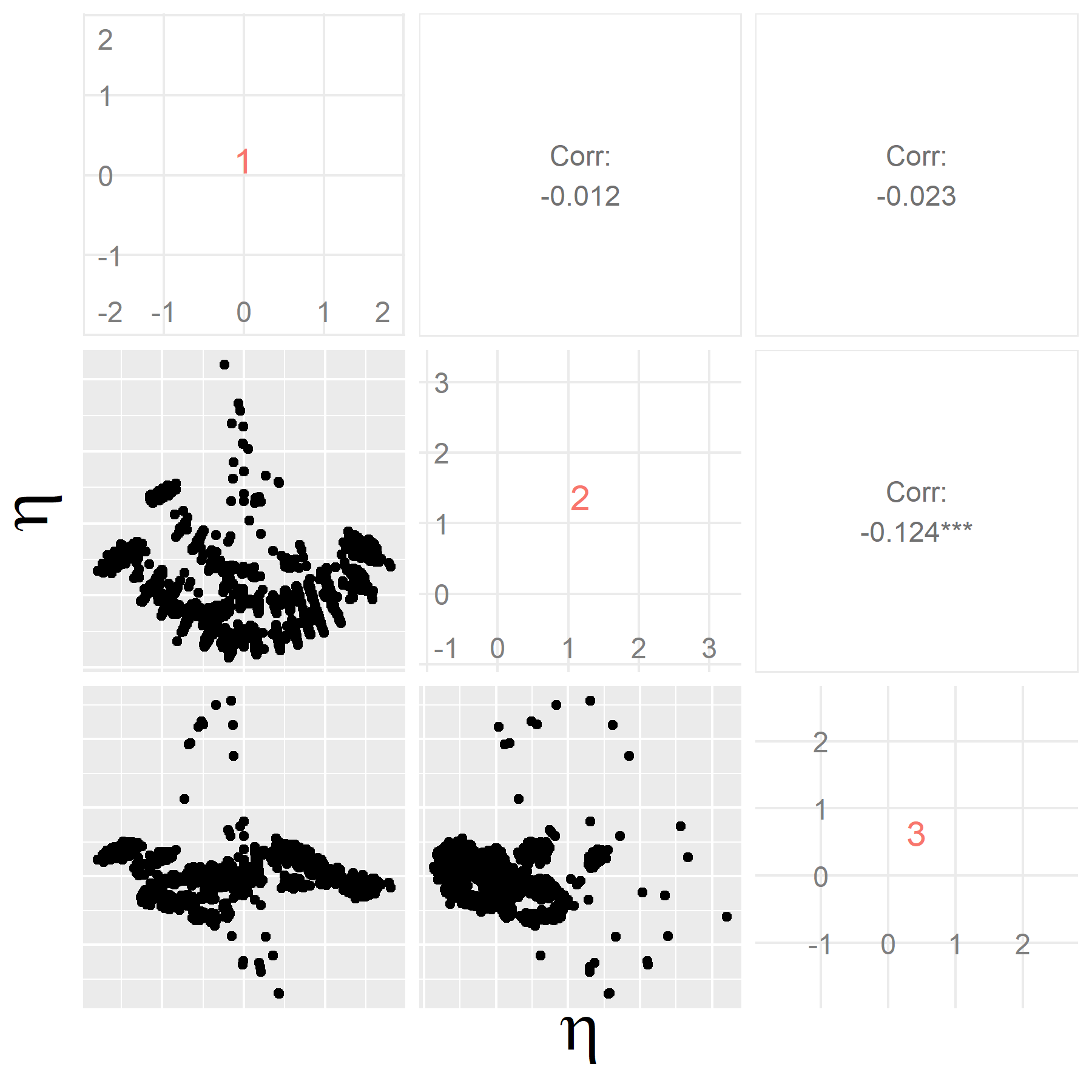} \\
(b) 
\end{tabular}
\caption{Scatter plot matrices of posterior means of latent factors for (a) $\xi$ (weight) and (b) $\eta^z$ (breastfeeding). }\label{fig:cebu_scores}
    \end{center}
\end{figure}

\begin{figure}[t]
\begin{center}
 \begin{tabular}{cc}
\includegraphics[width = 2 in]{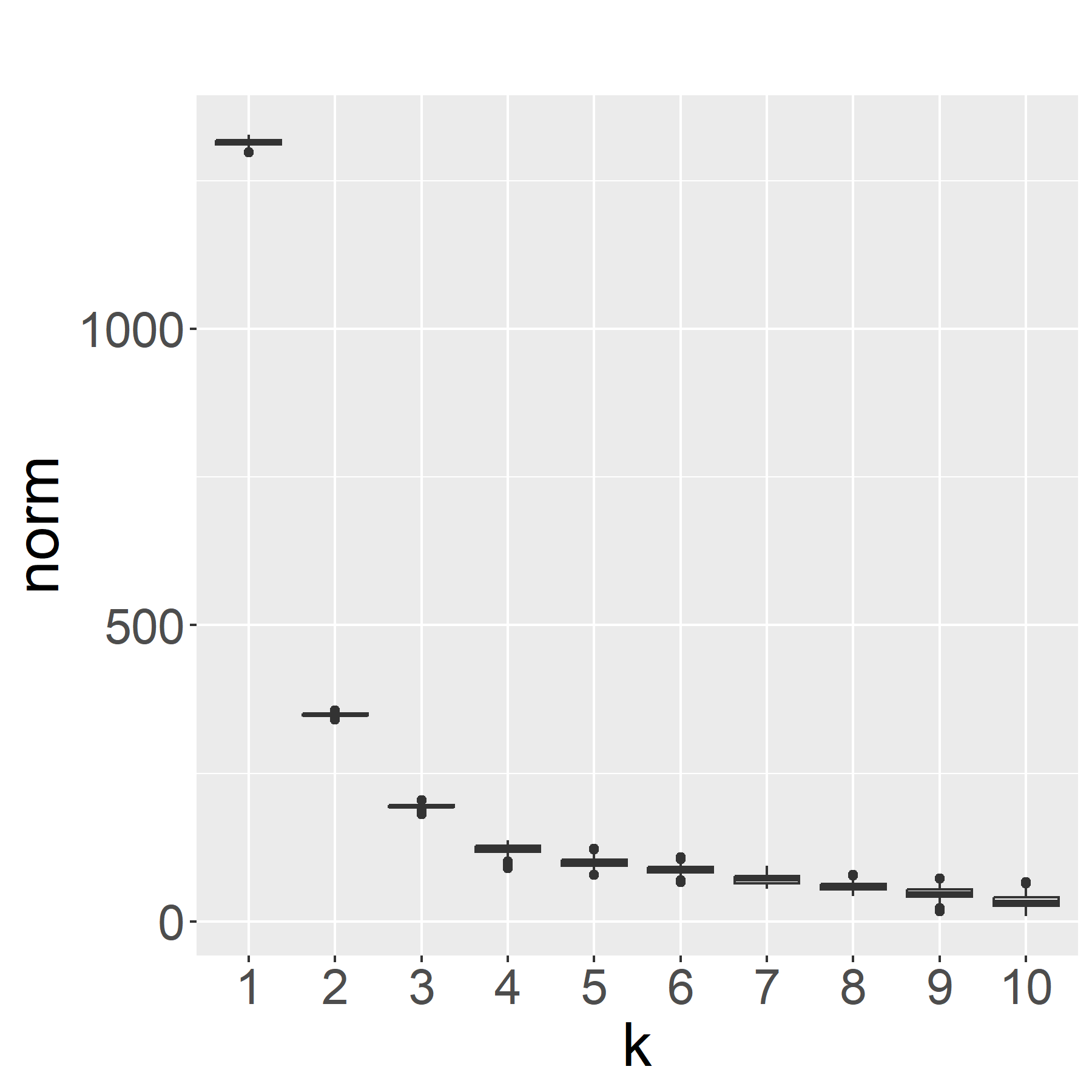} & \includegraphics[width =  2 in]{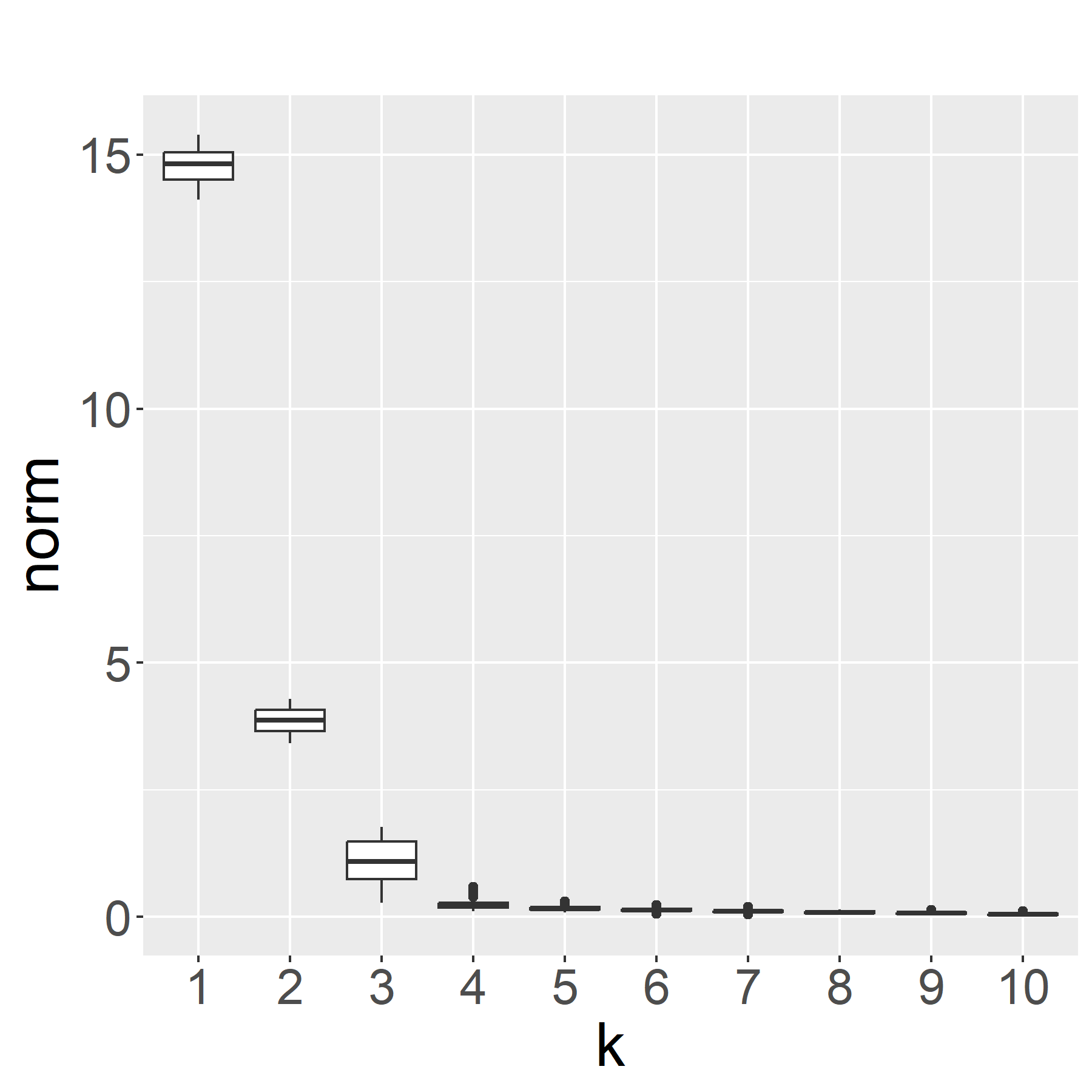} \\
(a) & (b) 
\end{tabular}
\caption{Scree plots that display boxplots of posterior samples of norms of factor loadings for the (a) weight and (b) breastfeeding trajectories of children in the Cebu Longitudinal Health and Nutrition Survey.}\label{fig:cebu_FPC_scree}
    \end{center}
\end{figure}

\begin{figure}[t]
\begin{center}
 \begin{tabular}{cccc}
\includegraphics[width = 1.25 in]{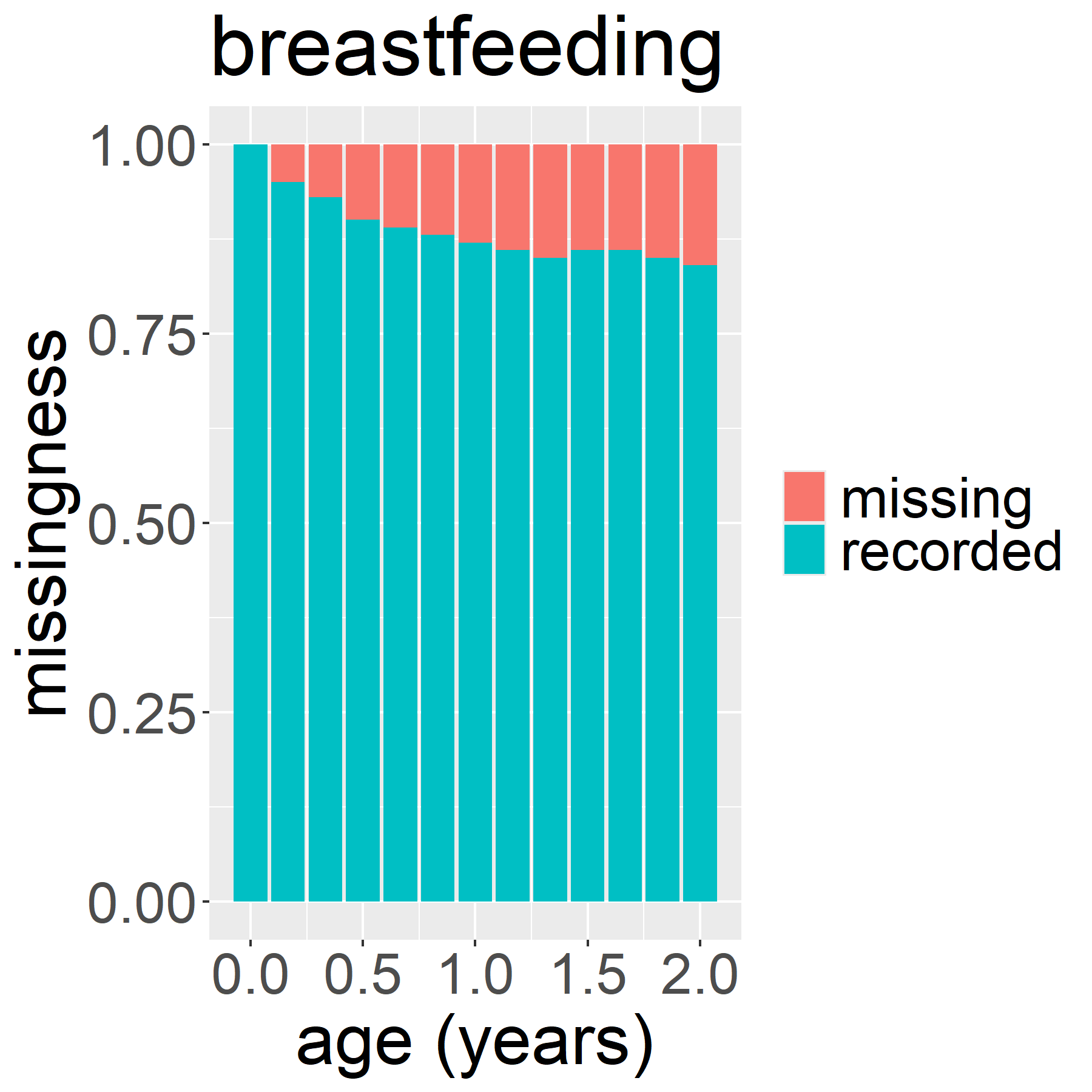} & \includegraphics[width = 1.25 in]{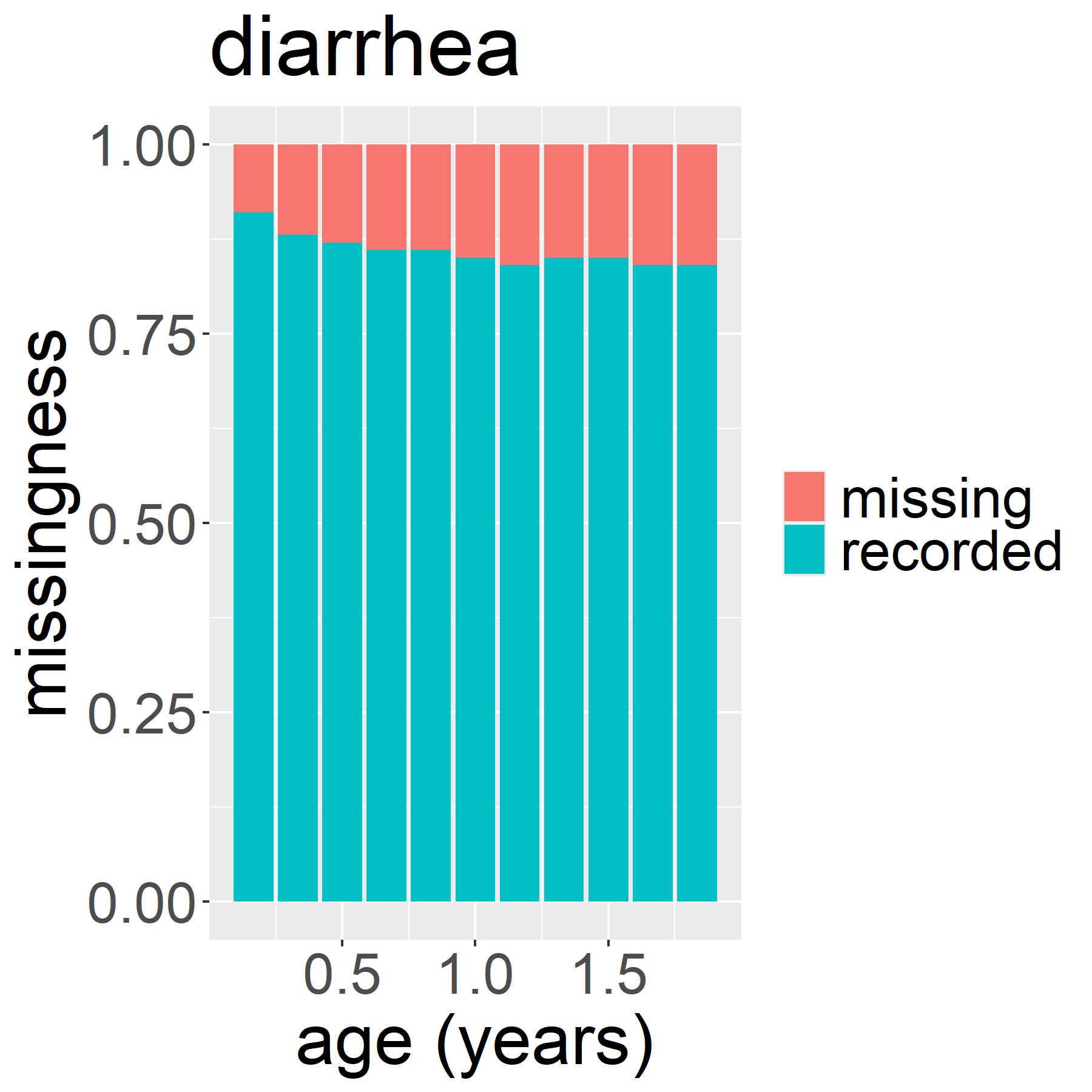} & \includegraphics[width = 1.25 in]{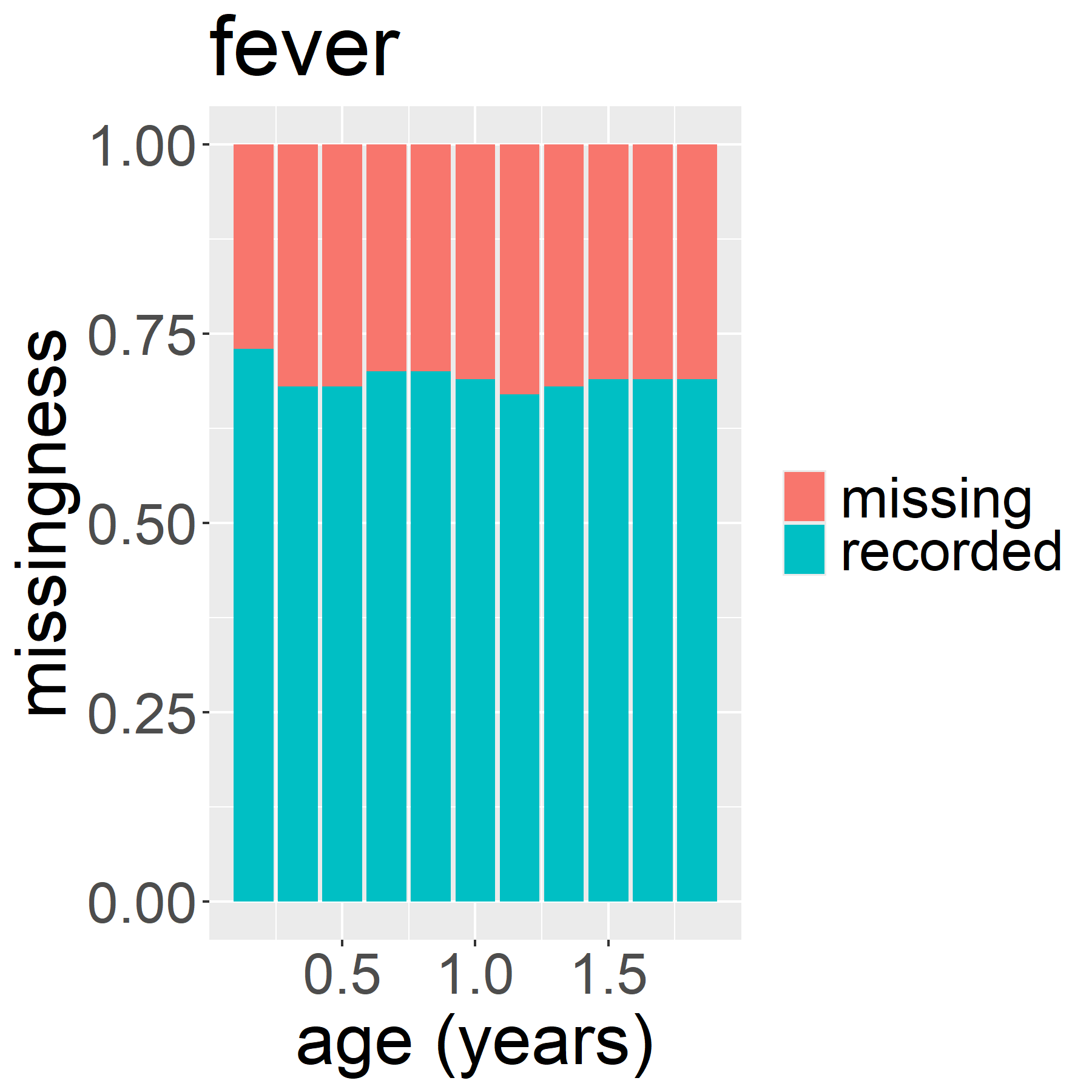} & \includegraphics[width = 1.25 in]{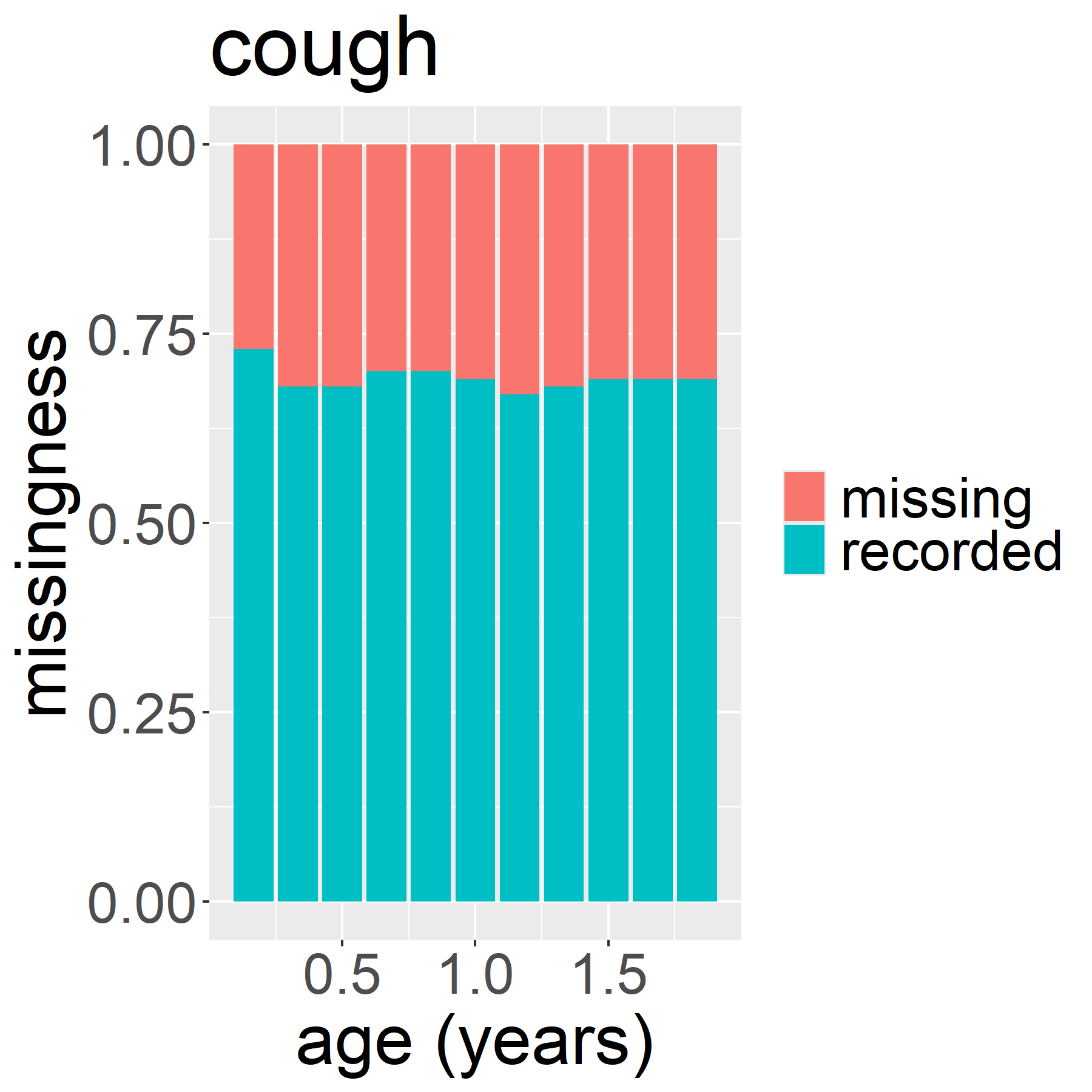}  \\
(a) & (b) & (c) & (d)
\end{tabular}
\caption{(a)-(d) Proportion of missing data by age for each of the binary longitudinal covariates (breastfeeding, diarrhea, fever, cough) in the Cebu Longitudinal Health and Nutrition Survey.}\label{fig:cebu_cov_missing}
    \end{center}
\end{figure}

\begin{figure}[t]
\begin{center}
 \begin{tabular}{ccc}
\includegraphics[width = 1.75 in]{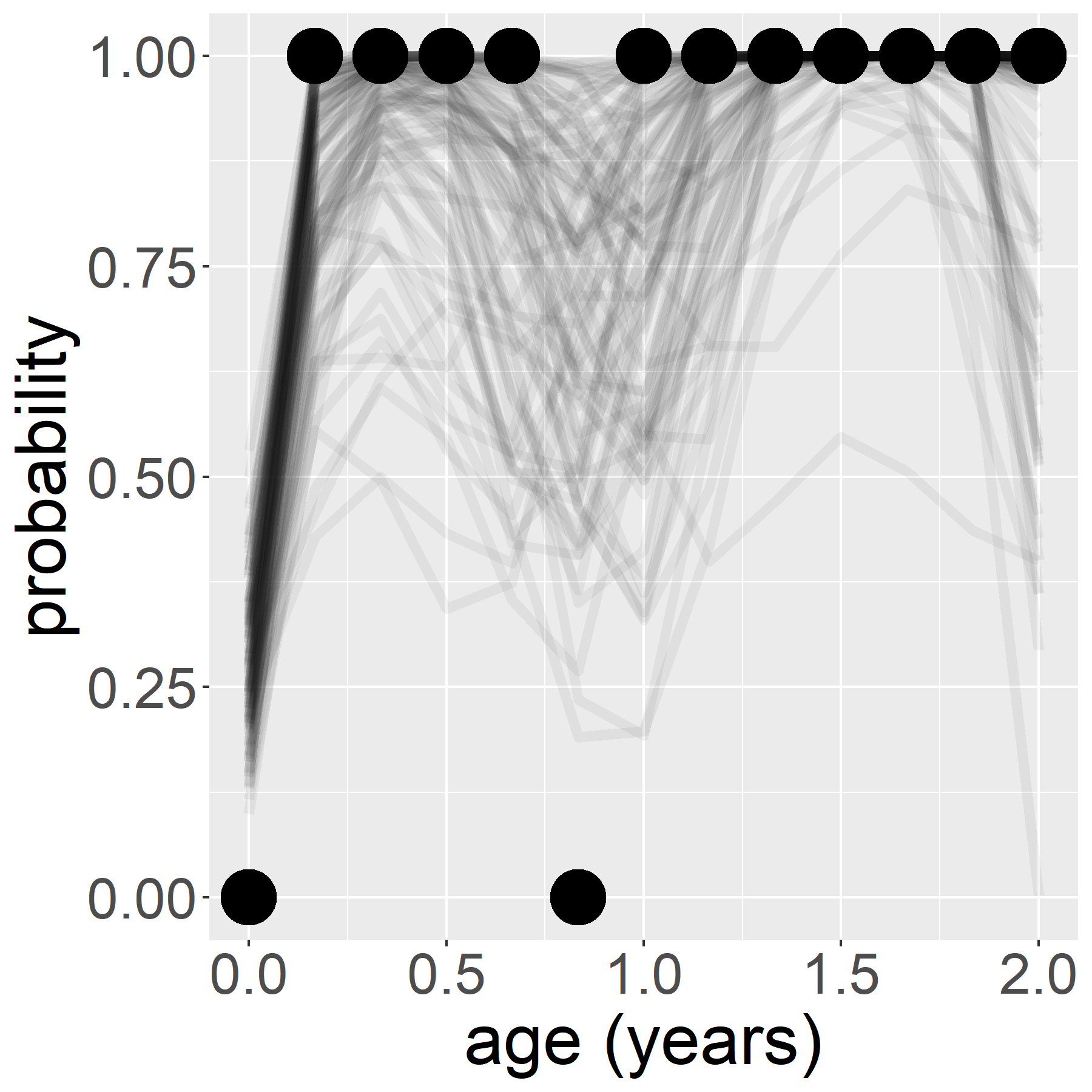} & \includegraphics[width = 1.75 in]{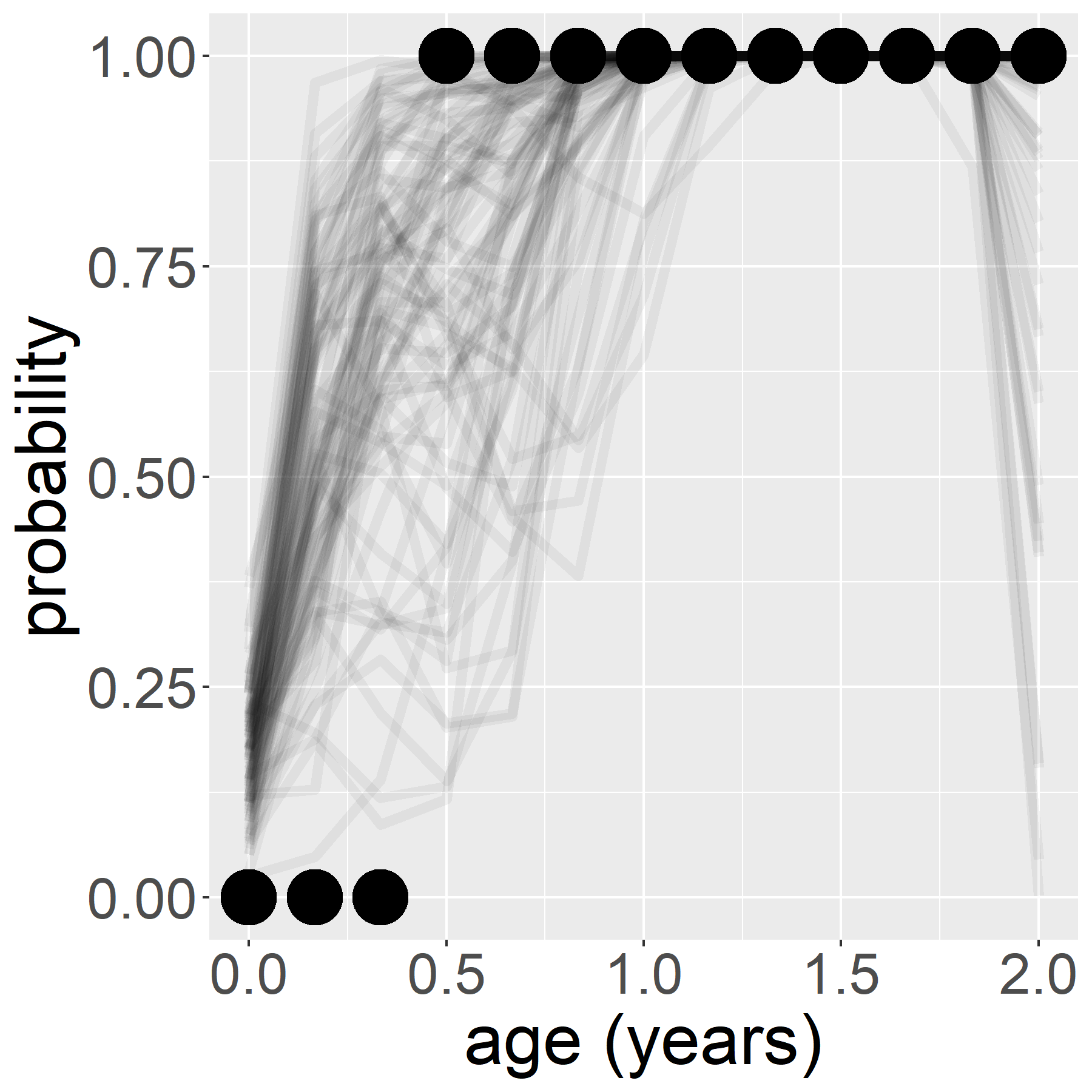} & \includegraphics[width = 1.75 in]{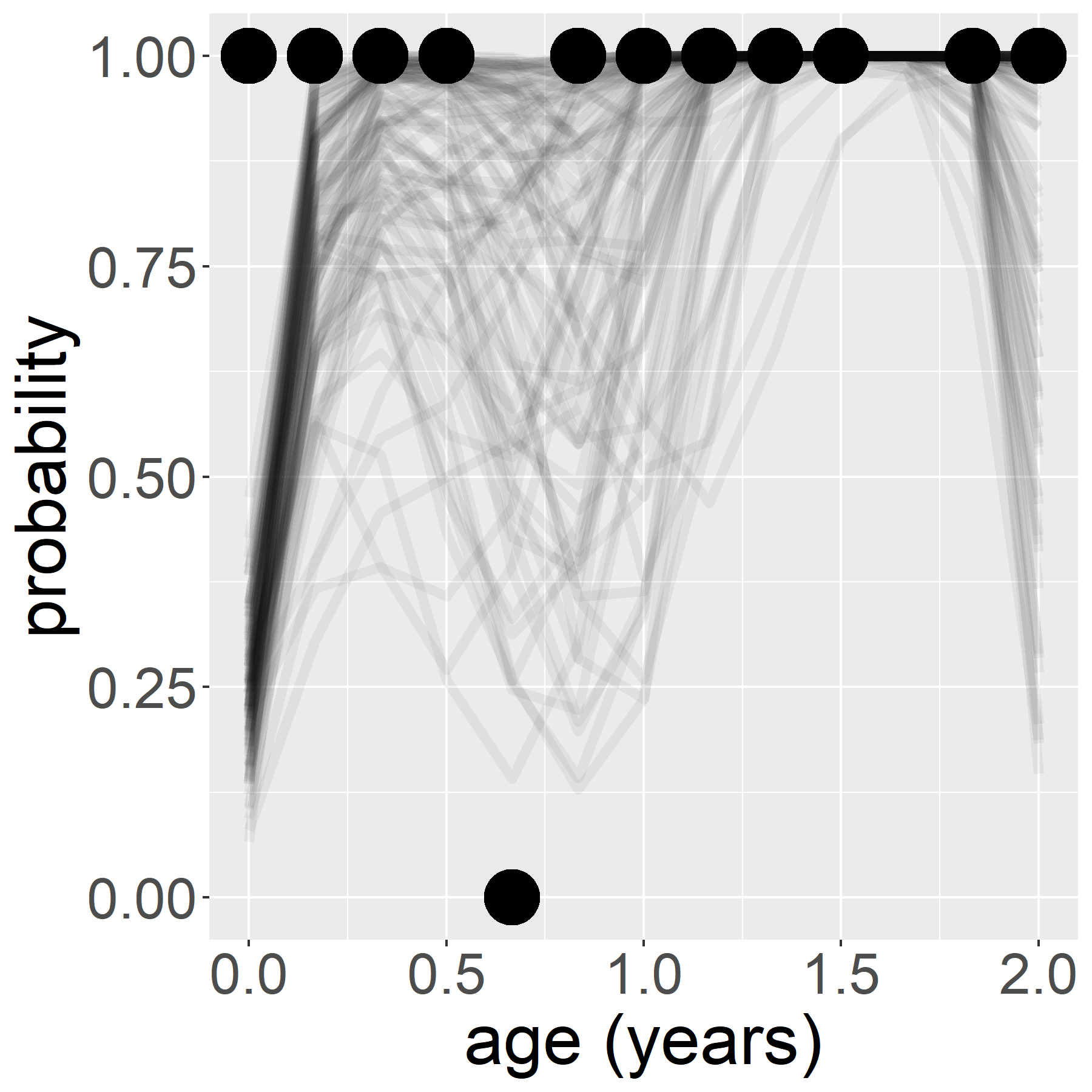} \\
(a) & (b) & (c) \\

\includegraphics[width = 1.75 in]{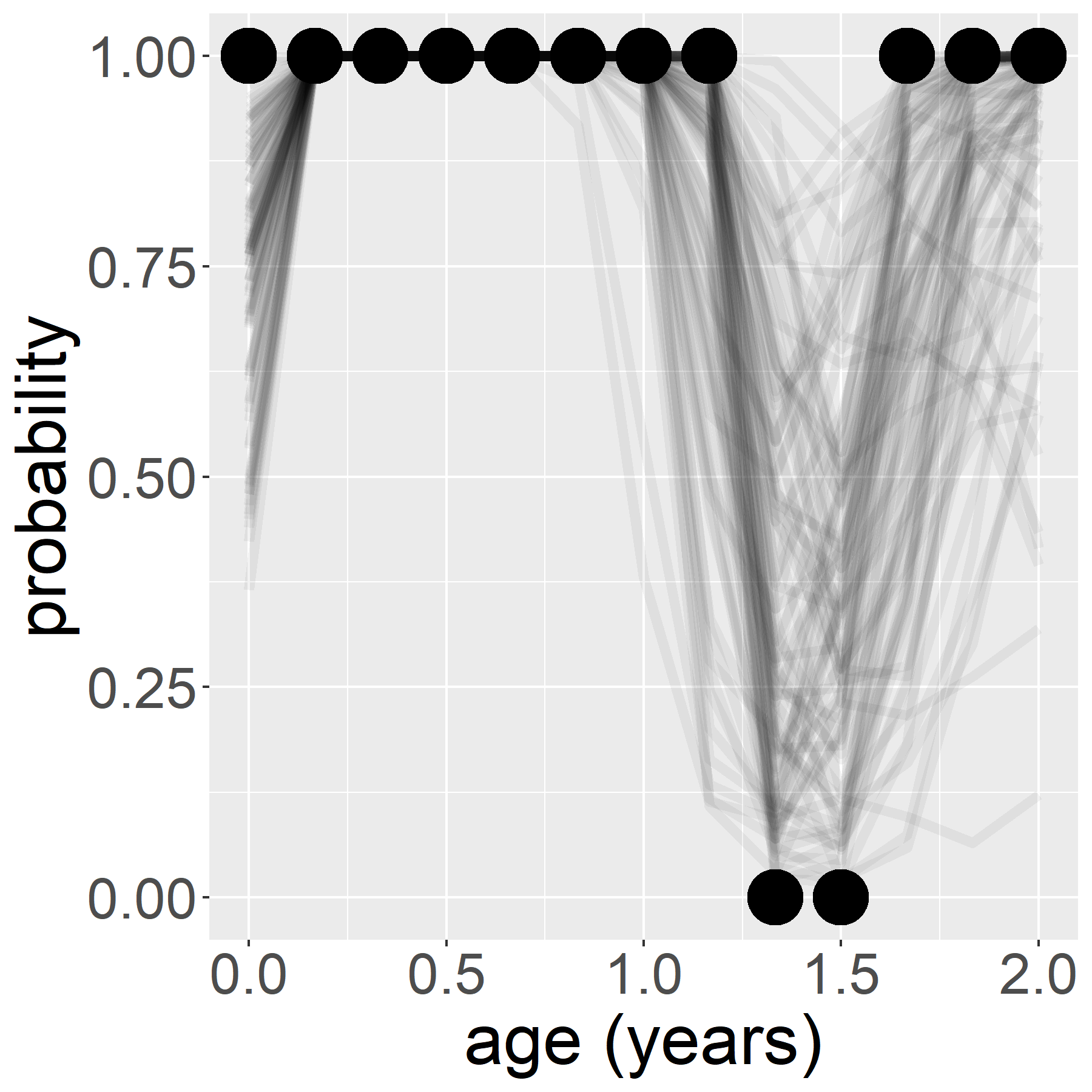} & \includegraphics[width = 1.75 in]{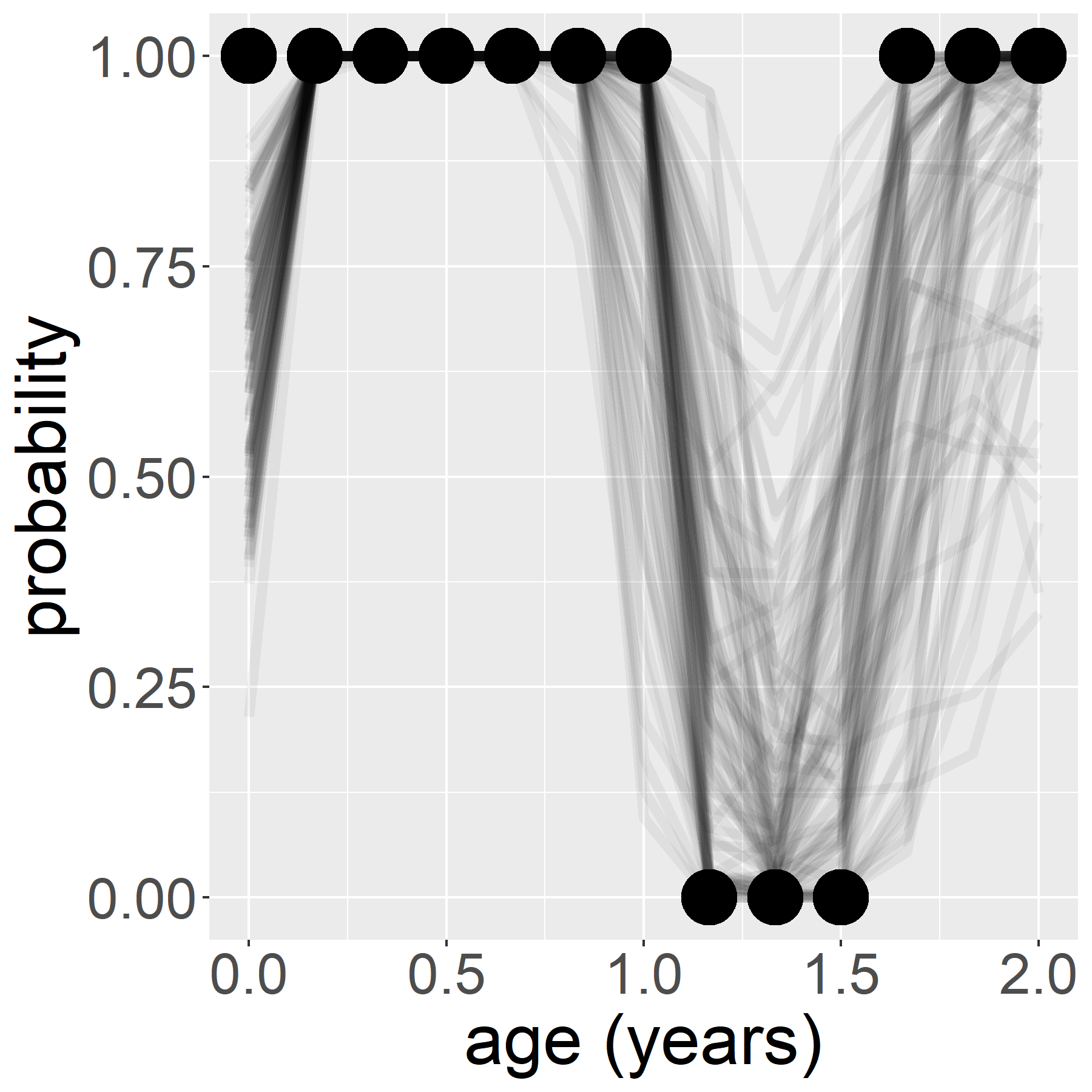} & \includegraphics[width = 1.75 in]{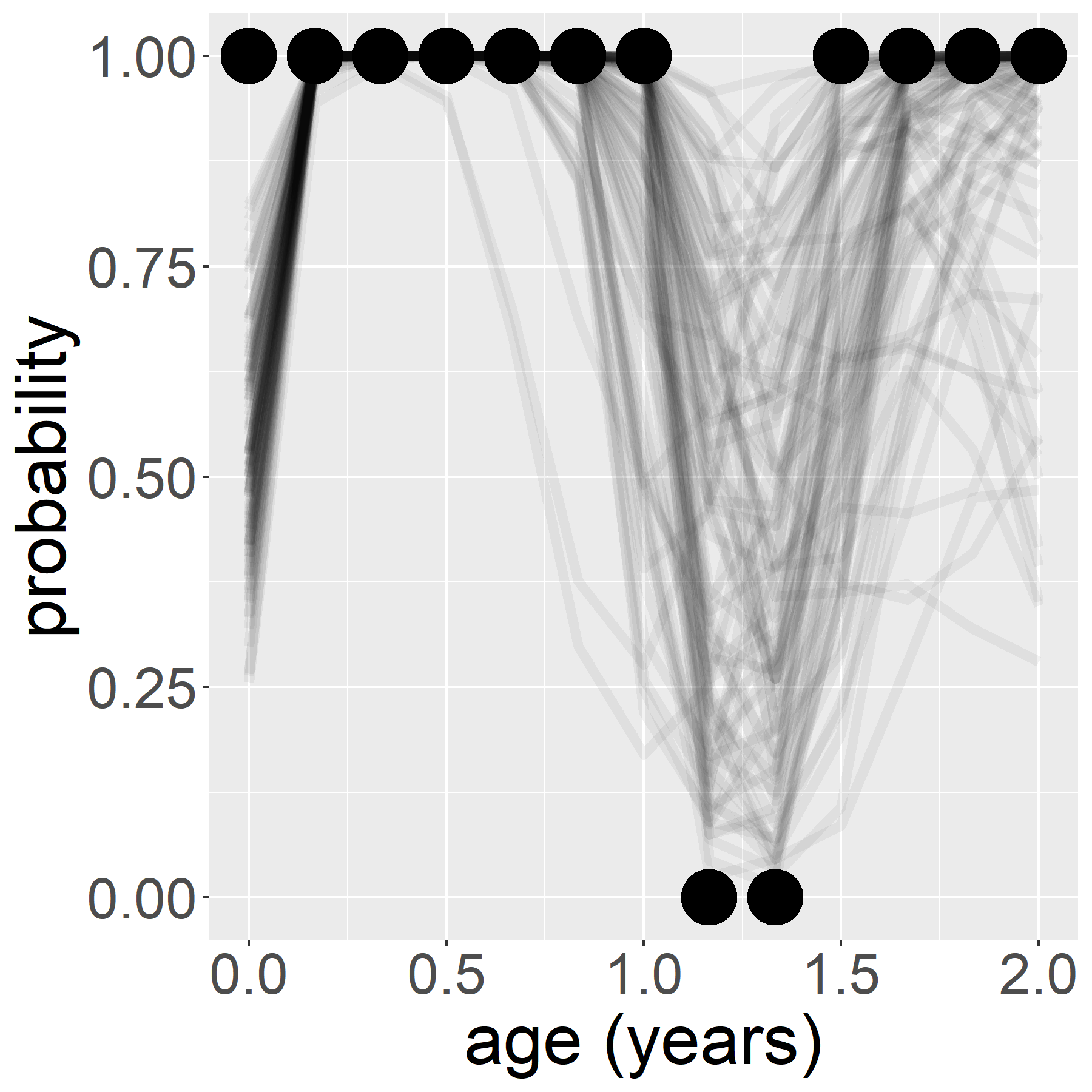} \\
(d) & (e) & (f) \\
\end{tabular}
\caption{(a)-(f) Posterior samples of fitted atypical breastfeeding trajectories, $\Phi(\mu^{z_1}(\protect\vv{t}) + (\lambda^z_1(\protect\vv{t}),\lambda^z_2(\protect\vv{t}),\lambda^z_3(\protect\vv{t}))^\top\eta^z_i)$ overlaid on $z_{i,1}(\protect\vv{t}_i)$ for $i = 212,\ 2782,\ 2809,\ ,1516,\ 1844,\ 2701$. }\label{fig:cebu_bf_outliers}
    \end{center}
\end{figure}

In the main paper, the inferential results for the analysis of the Cebu Longitudinal Health and Nutrition Survey are reported with $K = 5$ and $K^z = 3$. These numbers of factors were selected based on the approach outlined in Section 3.5 of the main paper. Figure \ref{fig:cebu_FPC_scree} displays scree plots of norms of inferred loadings for modeling weight and breastfeeding trajectories based on a larger number of latent factors. In both of these cases, the first few loadings are significantly larger in magnitude compared to trailing loadings. 

Figure \ref{fig:cebu_scores} shows latent factors for weight and breastfeeding status. In panel (a), the factors of weight appear much more normal, relative to the latent factors associated with breastfeeding status shown in panel (b). We investigate atypical observations for breastfeeding in Figure \ref{fig:cebu_bf_outliers}. Outliers in terms of the second factor are shown in panels (a)-(c). In general, it looks like these outliers correspond to the situation where a child was reported not to have breastfed erratically at an early age. Panels (d)-(f) show outliers in terms of the third factor. These outliers correspond to the situation where a child was reported not to have breastfed sometime after age 1, but continued to breastfeed after age 1.5. In addition to these outliers, the non-normal structure of the factors is replicated in exploratory data analyses presented in Section \ref{sec:freq_model}.

\subsection{Data Analysis using  \citep{crainiceanu2010}}\label{sec:freq_model}

\begin{figure}[t]
\begin{center}
 \begin{tabular}{cc}
\includegraphics[width = 1.25 in]{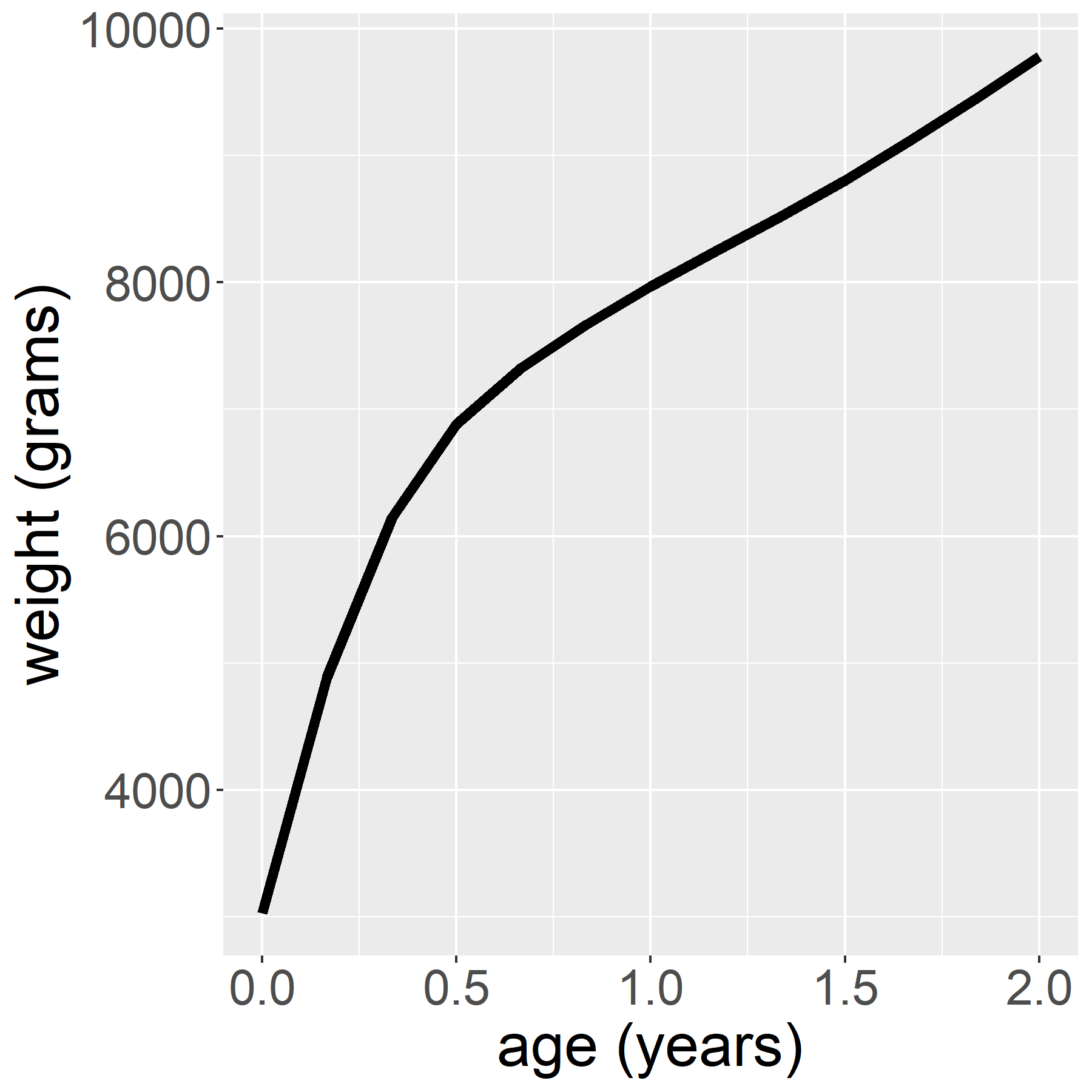}&\includegraphics[width = 1.25 in]{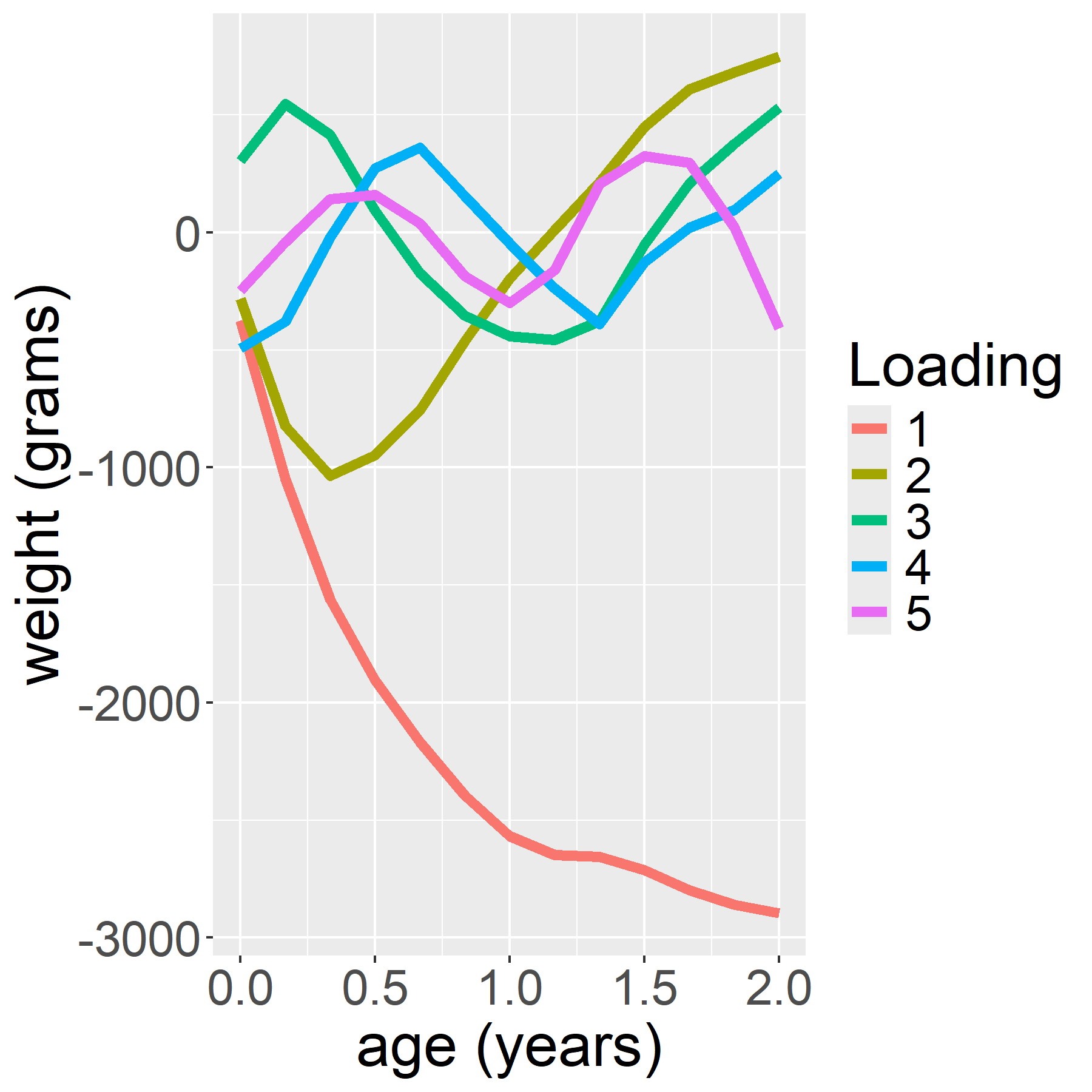} \\
(a) & (b)  \\
\end{tabular}
\begin{tabular}{ccc}
\includegraphics[width = 3 in]{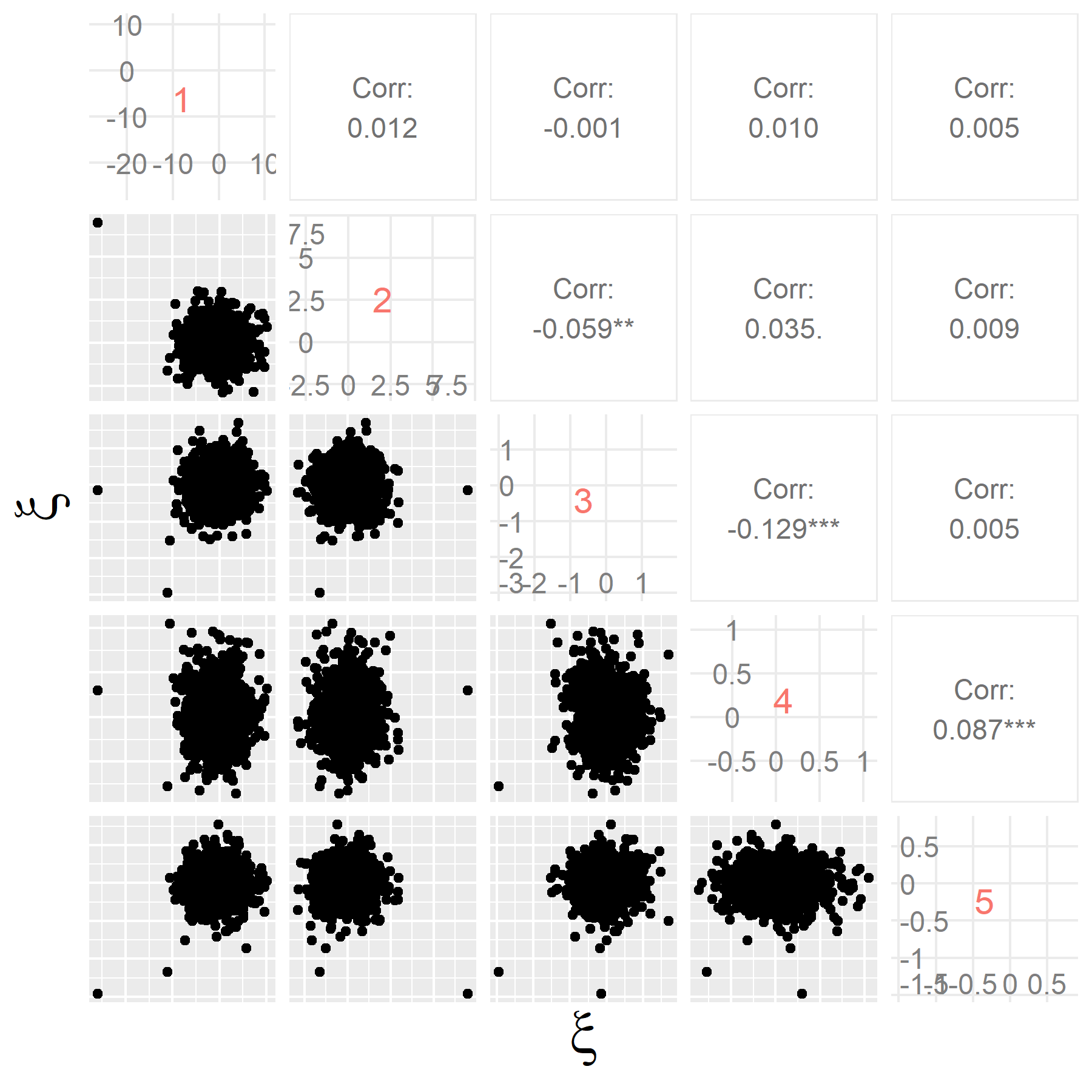}\\
(c) \\
\end{tabular}
 \begin{tabular}{cccc}
\includegraphics[width = 1.25 in]{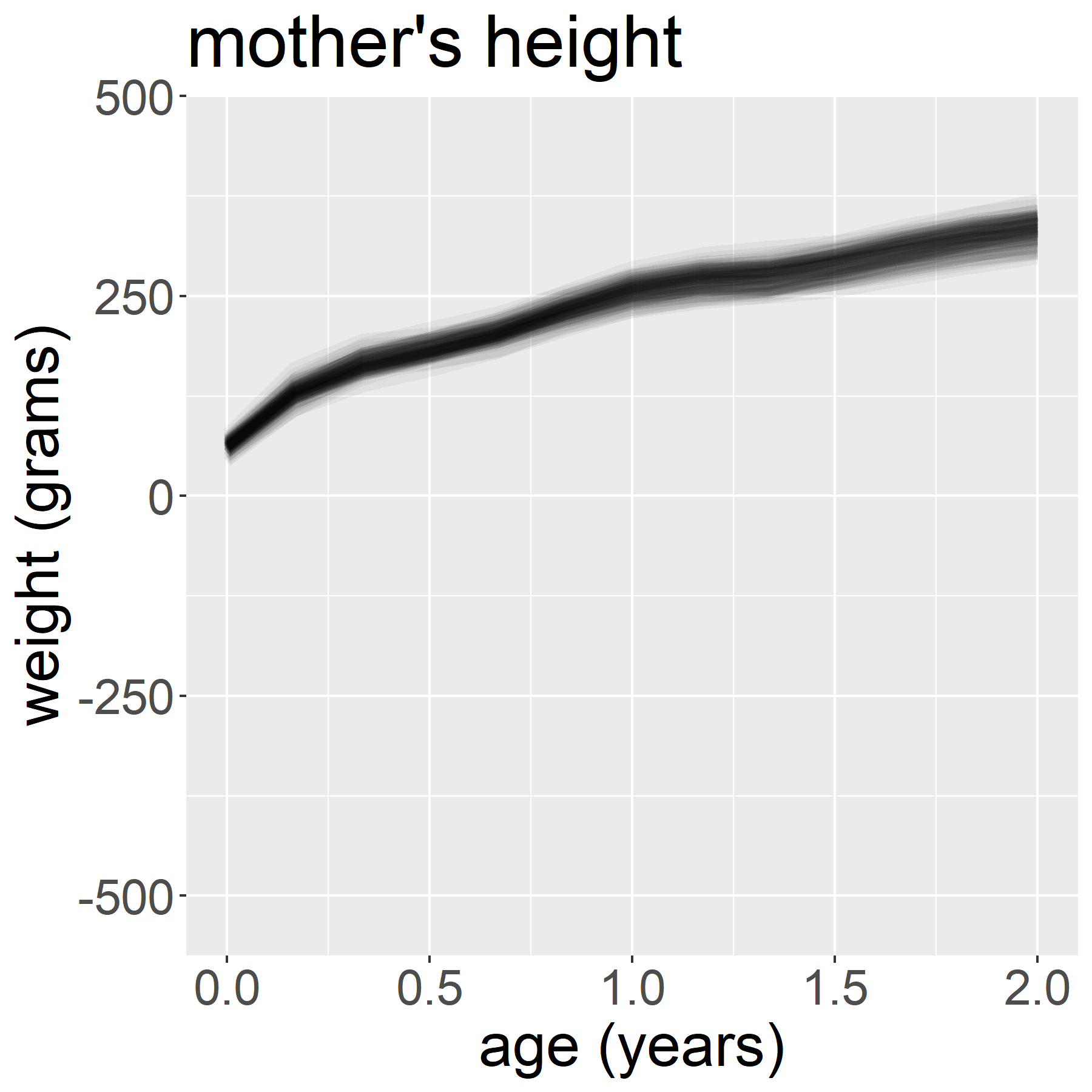}& \includegraphics[width = 1.25 in]{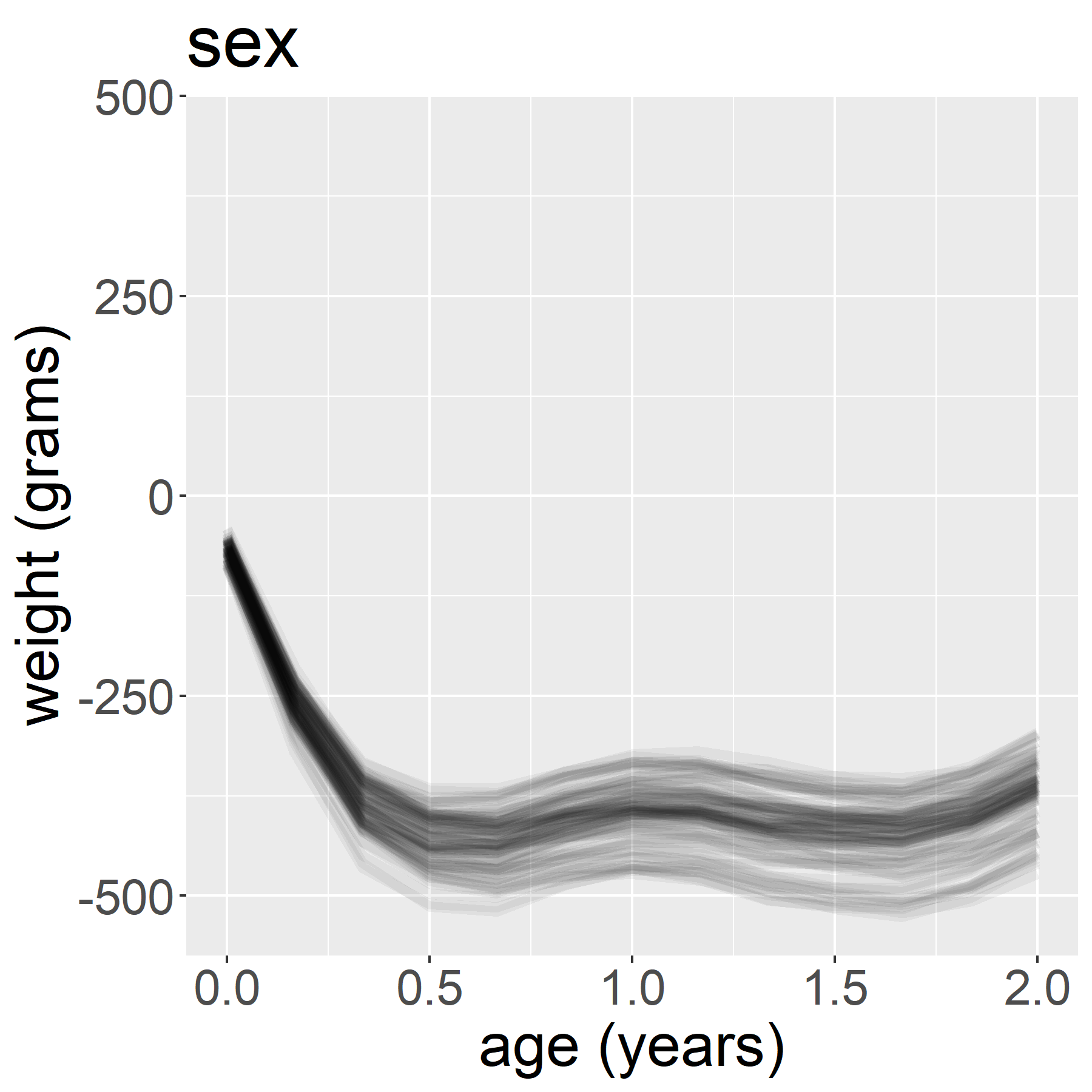}&\includegraphics[width = 1.25 in]{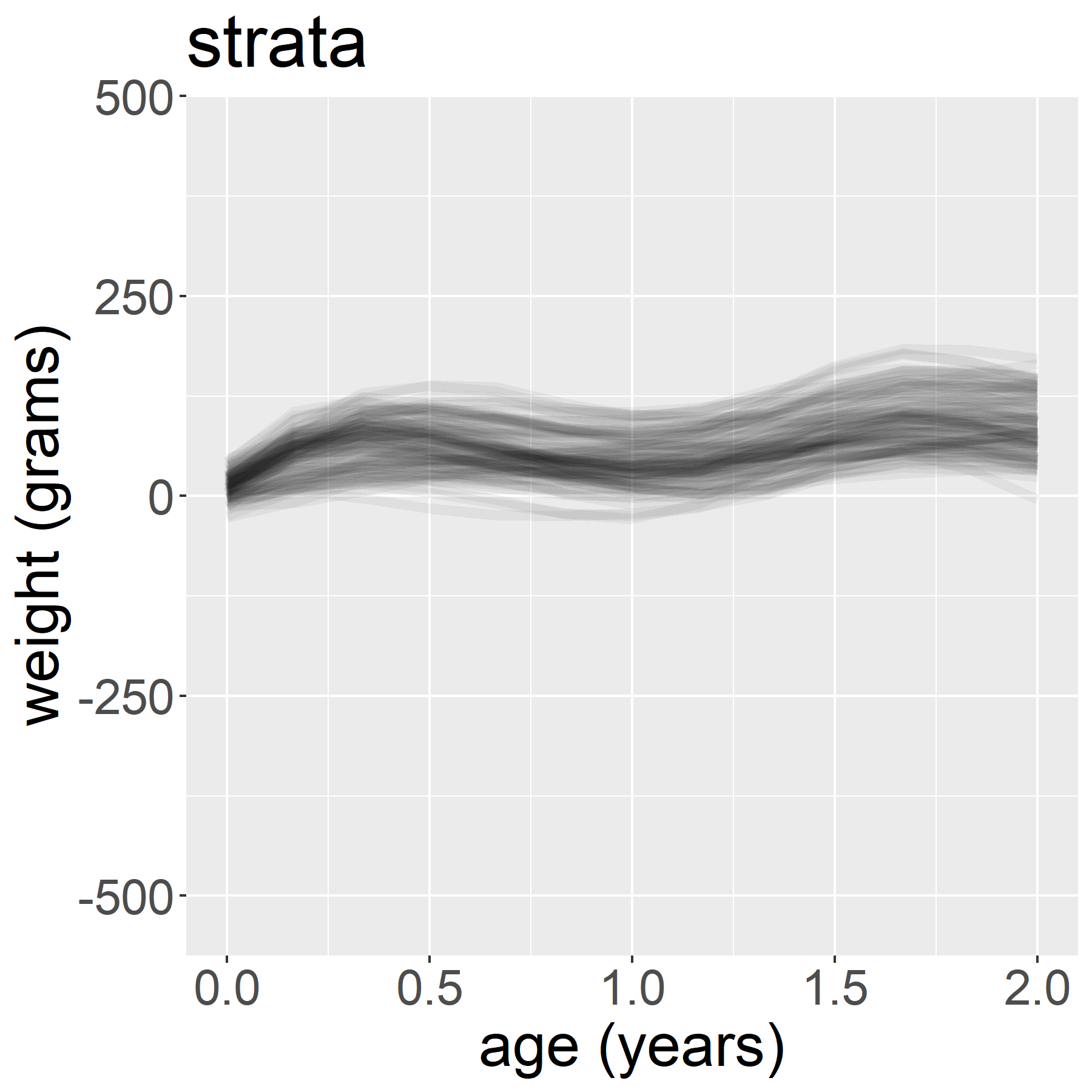}&\includegraphics[width = 1.25 in]{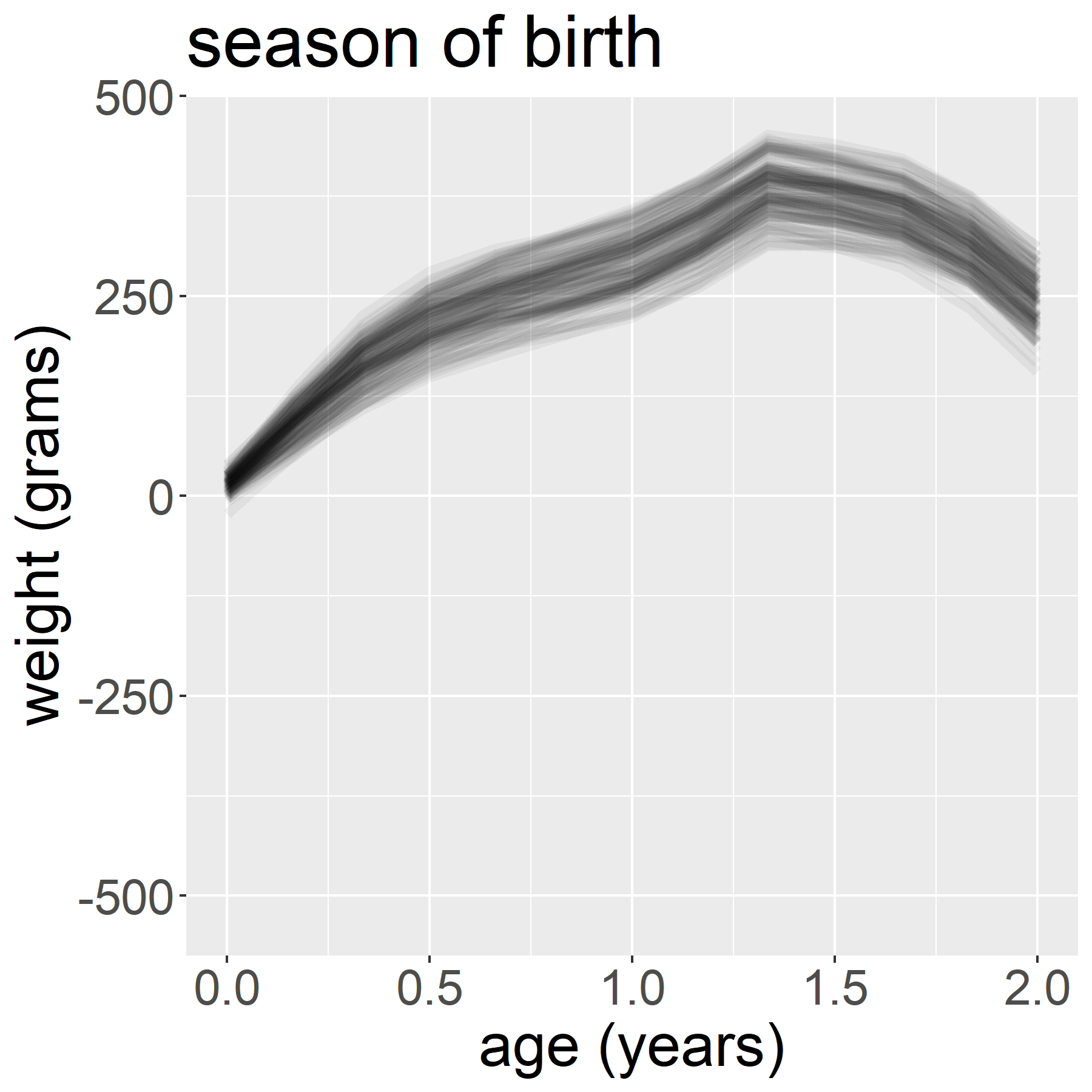} \\
(d) & (e) & (f) & (g) \\
\end{tabular}
\caption{Inferred parameters related to functional factor analysis of weight and scalar covariates using the model of \cite{crainiceanu2010}: (a) $\hat{\mu}(t)$. (b) $\sqrt{\hat\psi_1}\hat{\lambda}_1,\ldots,\sqrt{\hat\psi_K}\hat{\lambda}_K$. (c) Posterior means of $\xi_{i,k}\ i = 1,\ldots,n\ k = 1,\ldots,K$. (d)-(g) Posterior samples of $(\hat{\lambda}_1(t),\ldots,\hat{\lambda}_K(t))\Theta_q,\ q = 1\ldots,4$. }\label{fig:wb_cebu_fpca}
    \end{center}
\end{figure}

\begin{figure}[t]
\begin{center}
 \begin{tabular}{ccc}
\includegraphics[width = 1.25 in]{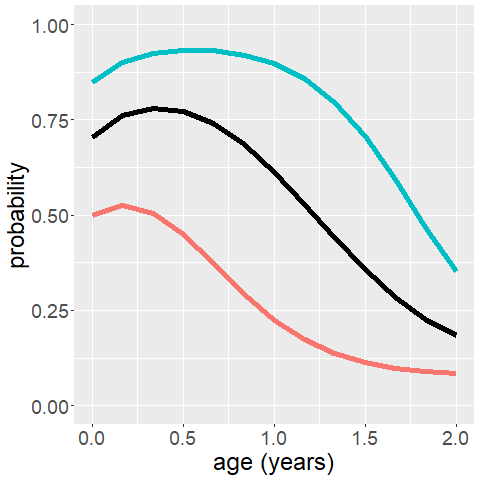}& \includegraphics[width = 1.25 in]{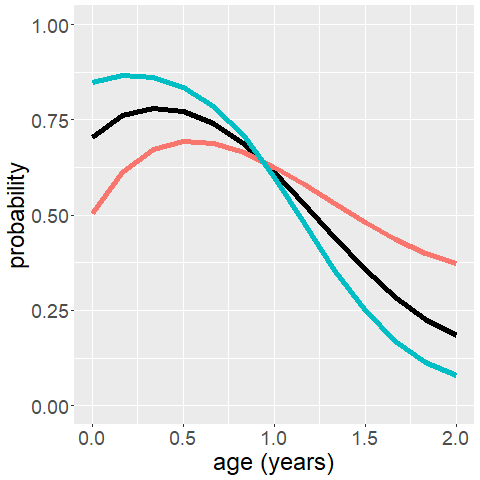} & \includegraphics[width = 1.25 in]{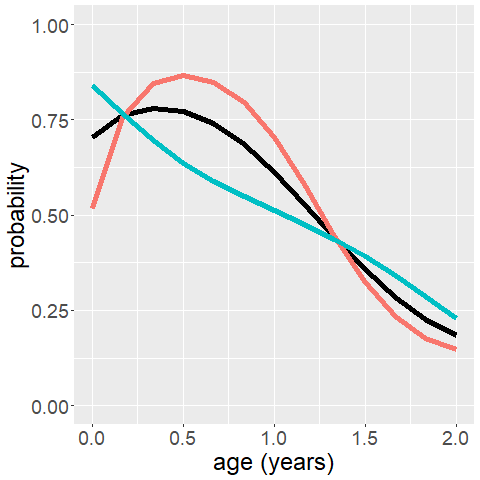} \\
(a) & (b) & (c) \\
\end{tabular}
 \begin{tabular}{c}
 \includegraphics[width = 2.5 in]{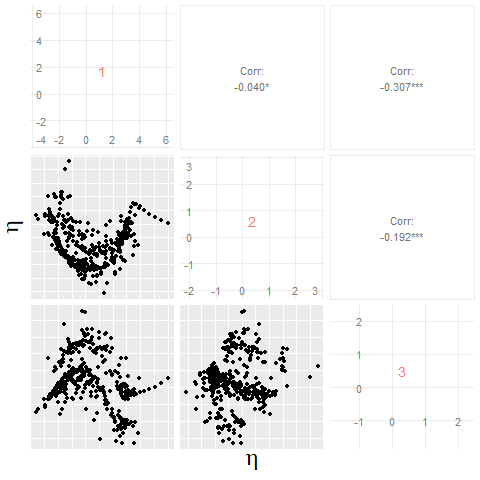}\\
 (d)\\
\end{tabular}
 \begin{tabular}{ccccc}
\includegraphics[width = 1.0 in]{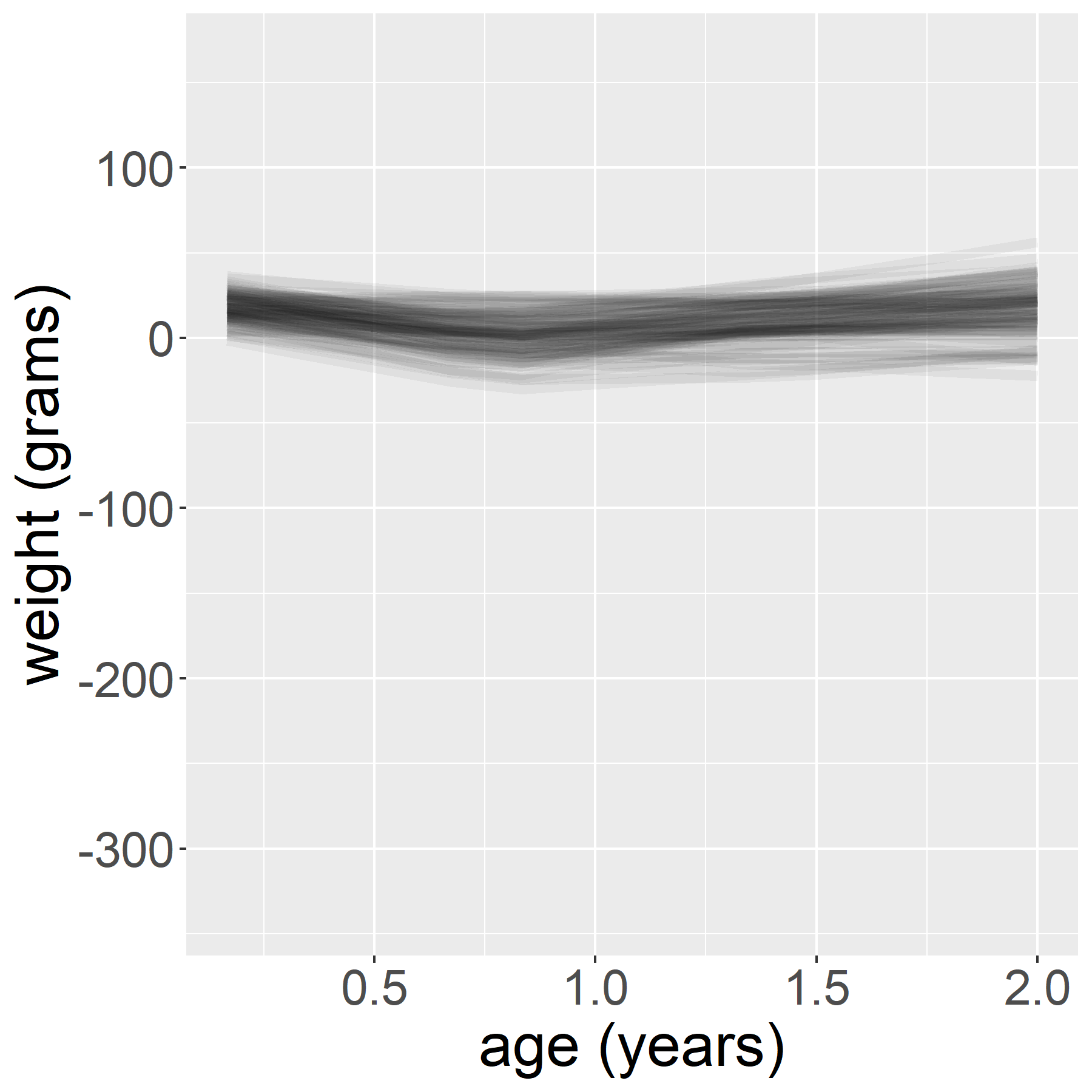}&\includegraphics[width = 1.0 in]{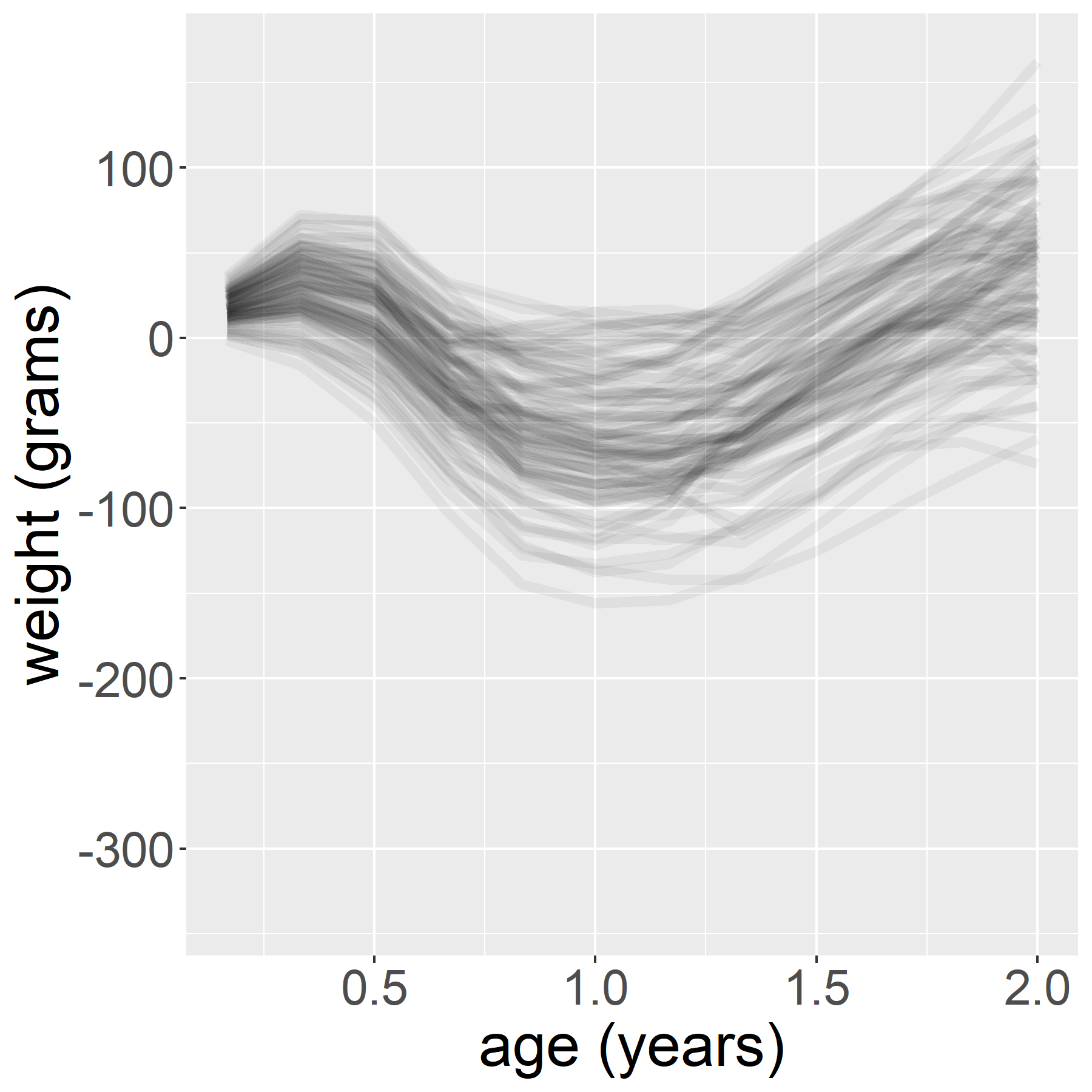}&\includegraphics[width = 1.0 in]{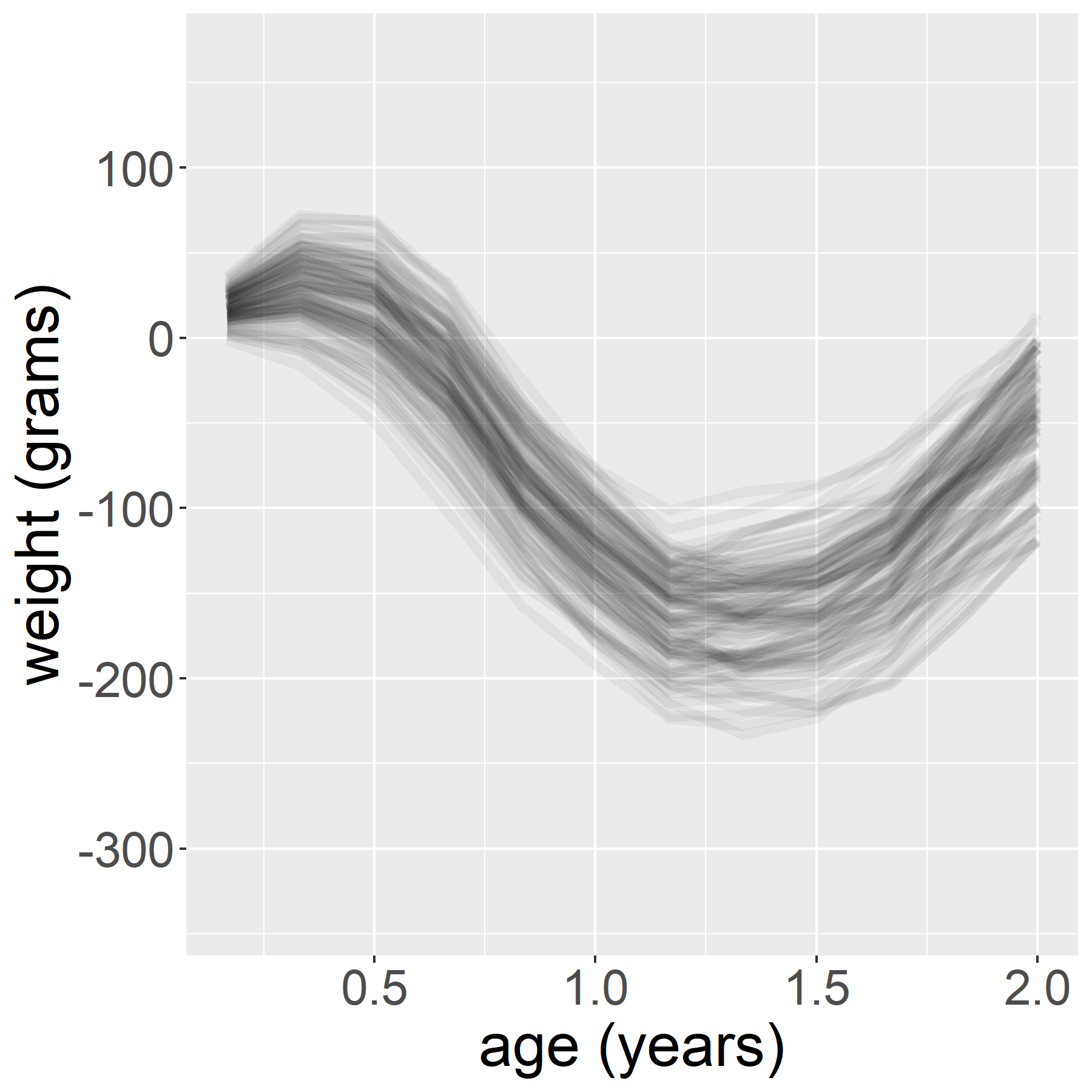}&\includegraphics[width = 1.0 in]{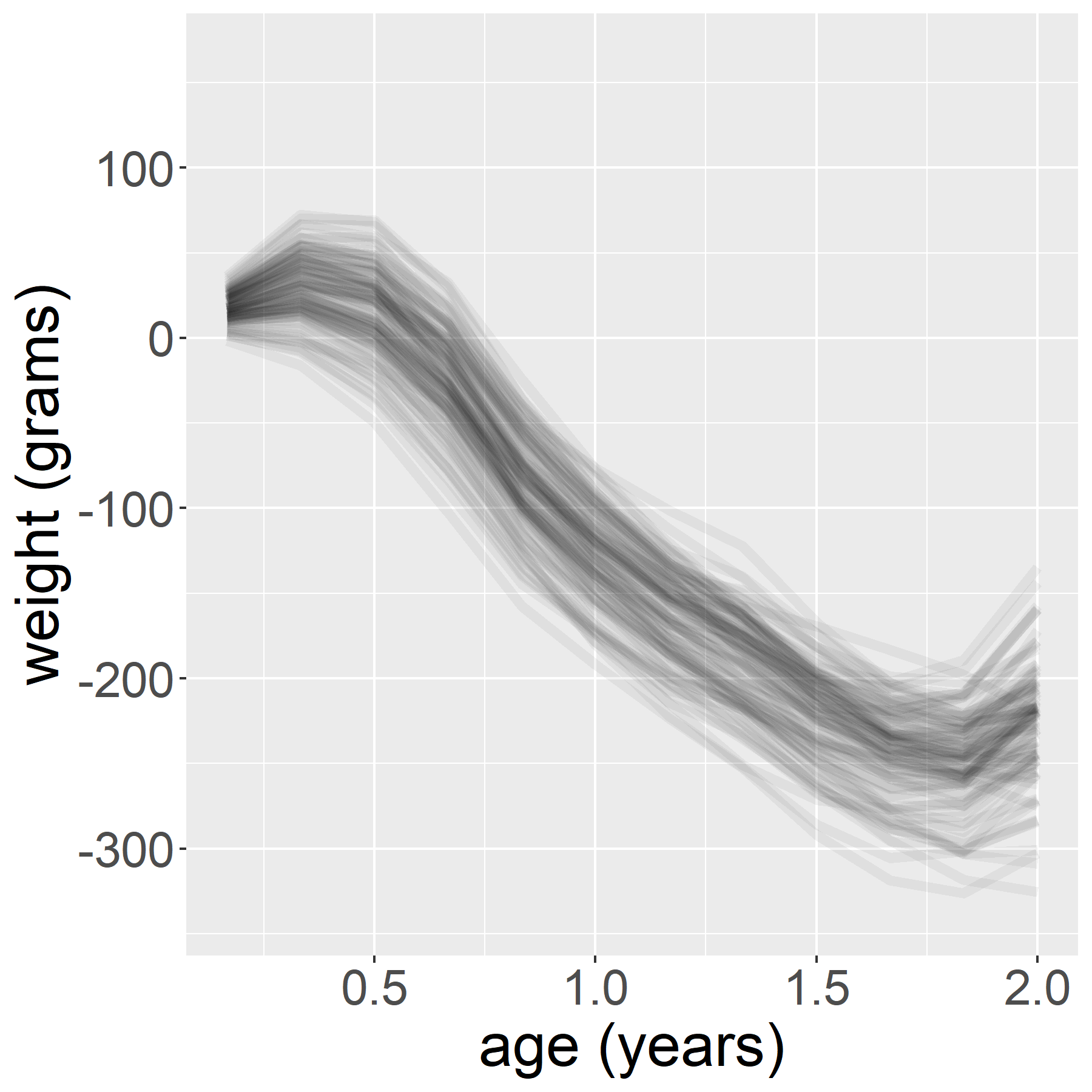}&\includegraphics[width = 1.0 in]{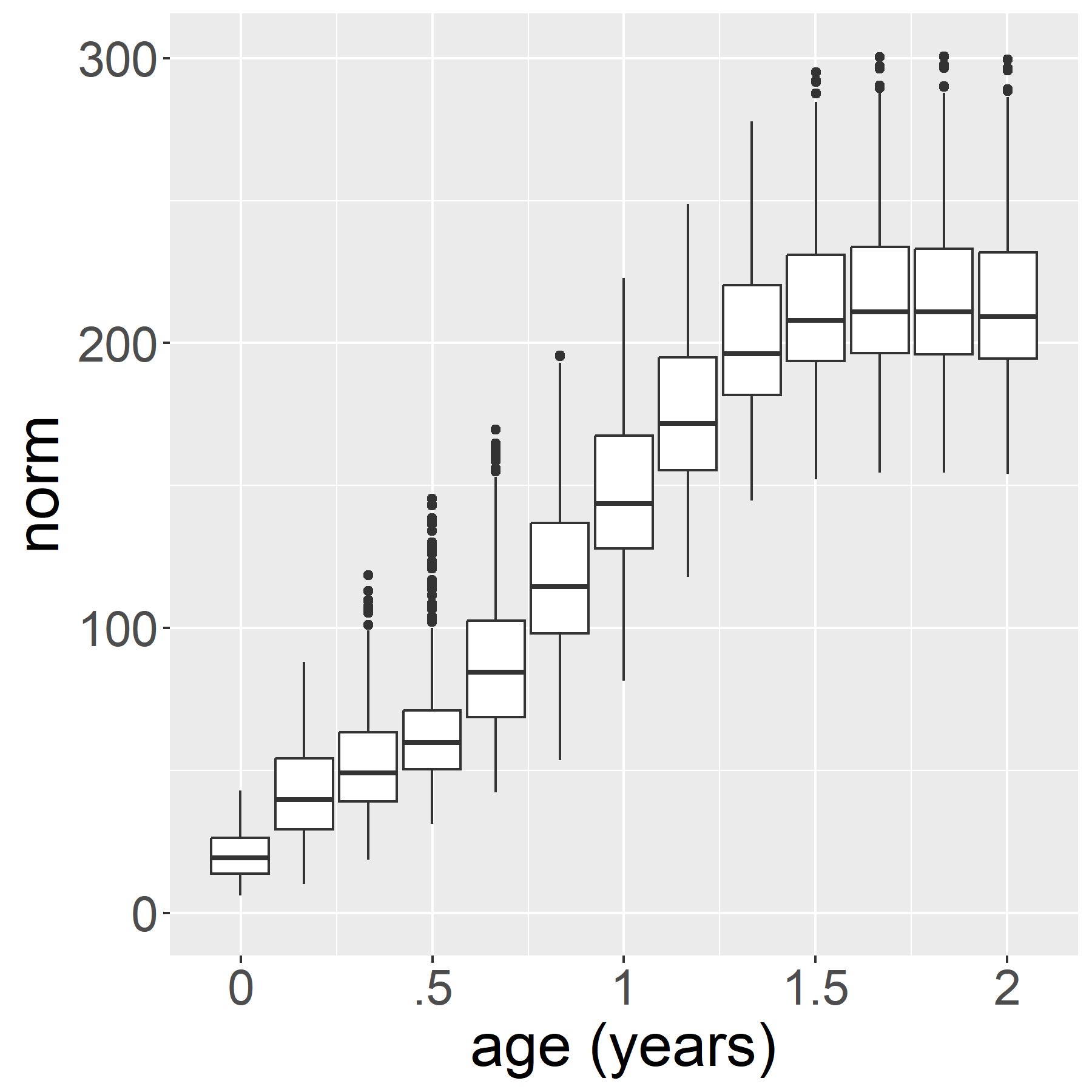} \\
(e) & (f) & (g) & (h) & (i) 
\end{tabular}
\caption{Inference for parameters related to modeling breastfeeding status on weight using the \cite{crainiceanu2010} model: (a)-(c) $\text{logit}^{-1}(\hat{\mu}^{z_1} \pm \text{sd}(\hat{\eta}^z_{1,k},\ldots,\hat{\eta}^z_{n,k})\hat{\lambda}^z_k(t)$ for $k = 1,2,3$. (d) latent factors, $\eta^z_{i,k}\ i =1,\ldots,n,\ k = 1,2,3$. (e)-(h) Posterior samples of $\int_{\mathcal{T}} \beta_1 1_{s\leq s'}ds$ $s' = 0, 0.5,1,2$. (f) Boxplots of posterior samples of $\|\int_{\mathcal{T}} \hat{\beta_1} 1_{s\leq s'}ds\|_2$ for different values of $s'$.} \label{fig:wb_cebu_bf}
    \end{center}
\end{figure}

\begin{figure}[t]
\begin{center}
 \begin{tabular}{cccc}
\includegraphics[width = 1.25 in]{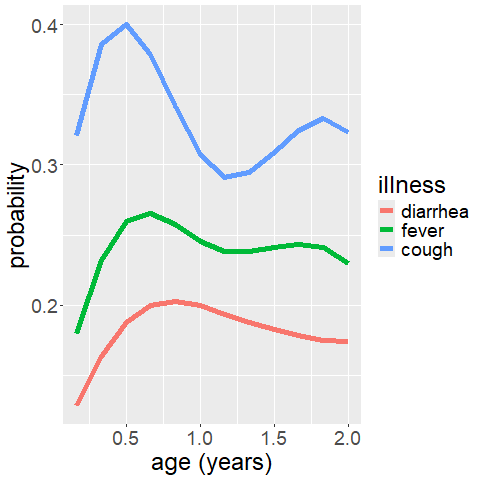}&\includegraphics[width = 1.25 in]{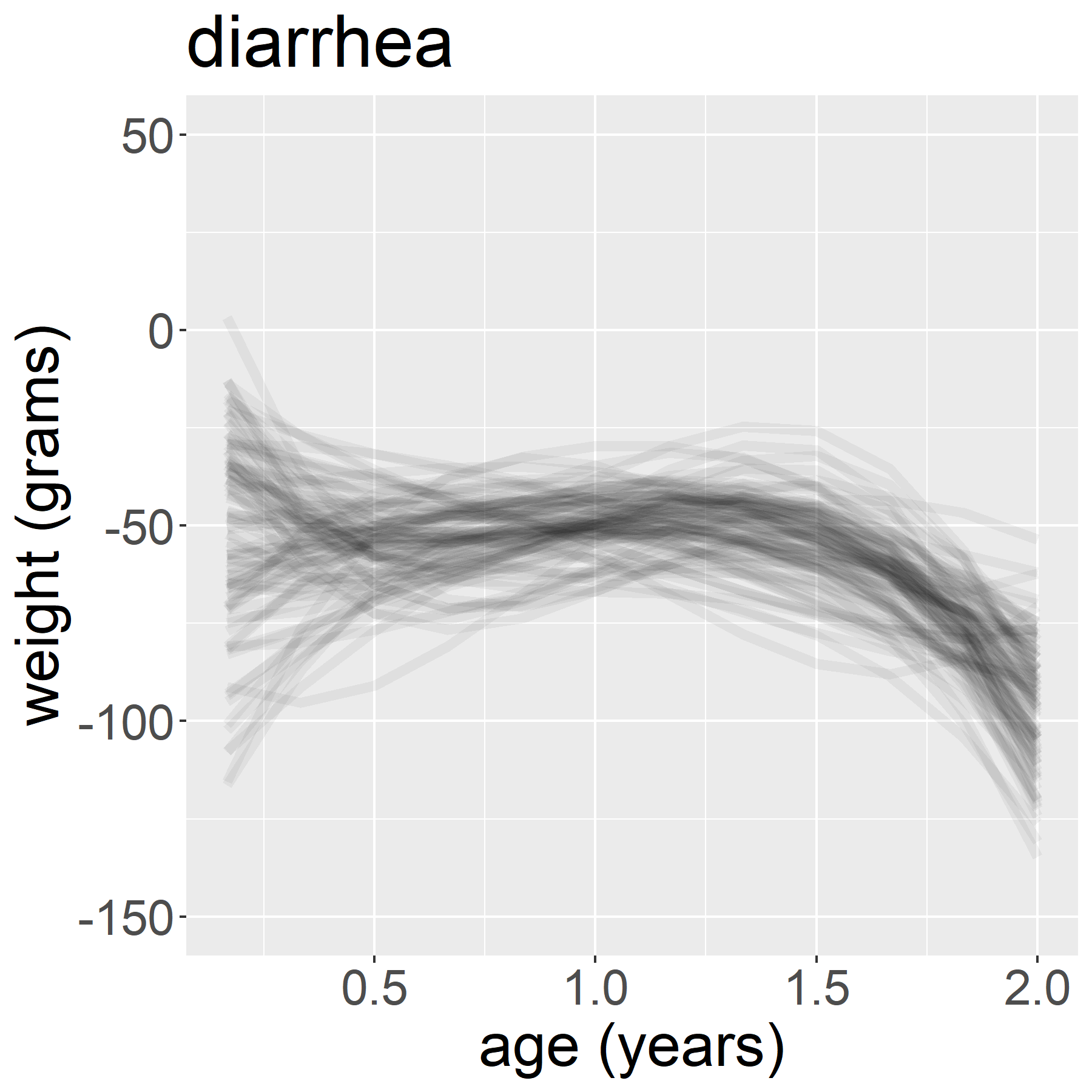} &\includegraphics[width = 1.25 in]{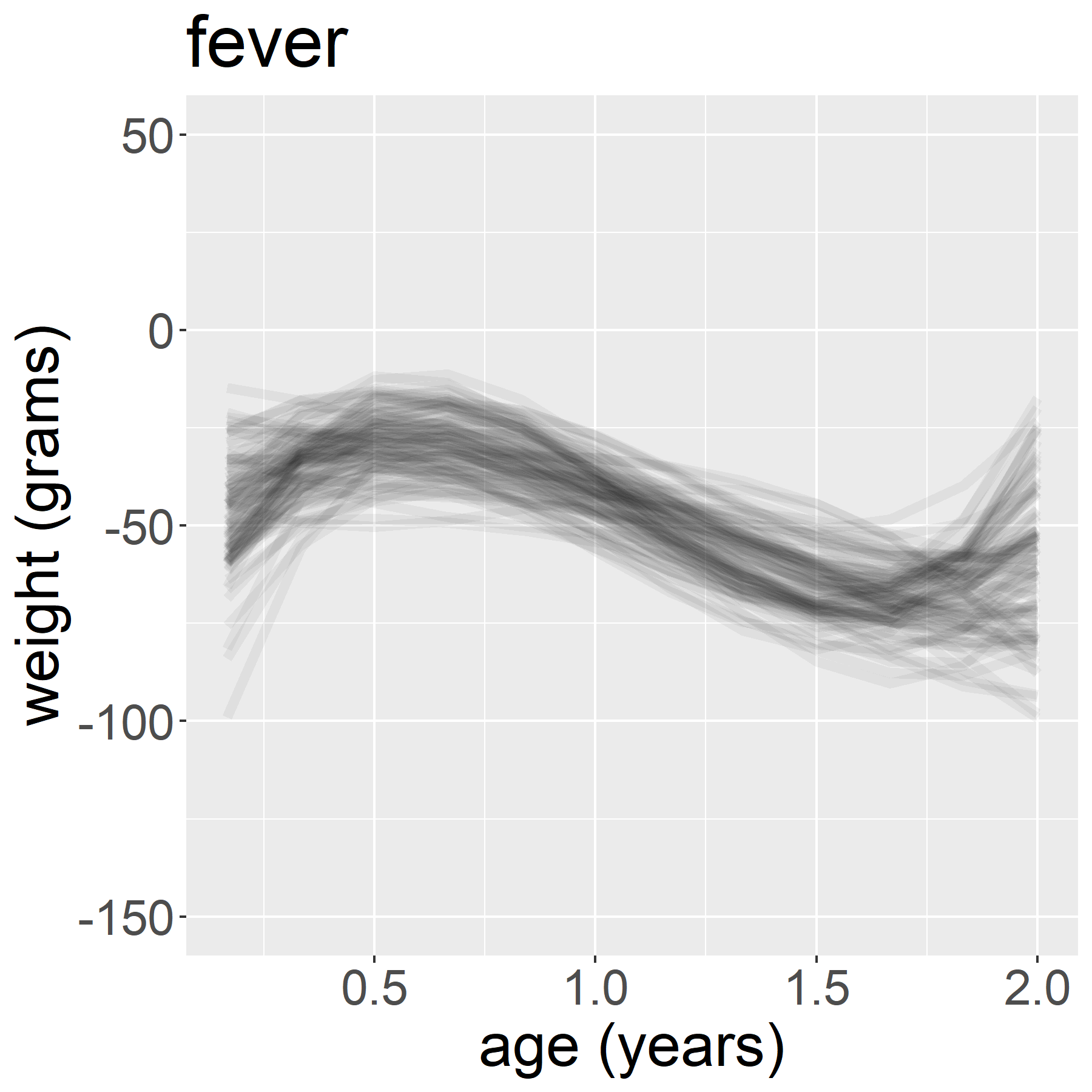} &\includegraphics[width = 1.25 in]{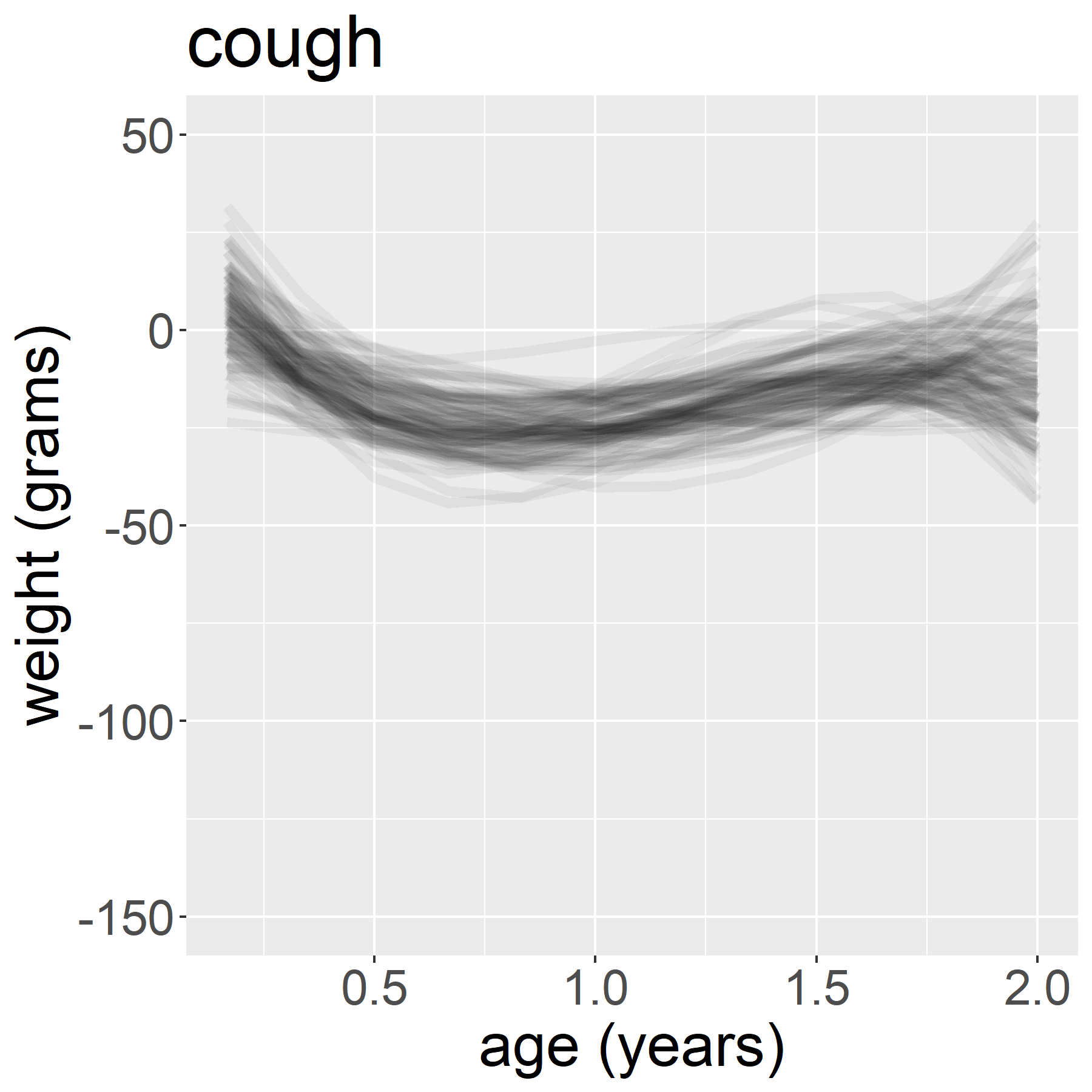} \\
(a) & (b) & (c) & (d) \\
\end{tabular}
\caption{Inferred functions related to modelling the effects of illness on weight using the \cite{crainiceanu2010} model: (a) $\text{logit}^{-1}(\hat\mu^{z_j}(t))$. (b)-(d) Posterior samples of $\beta_j(t)$ for $j = 2,3,4$.}\label{fig:wb_cebu_ill}
    \end{center}
\end{figure}

\begin{figure}[t]
\begin{center}
\begin{tabular}{ccc}
\includegraphics[width = 1.75 in]{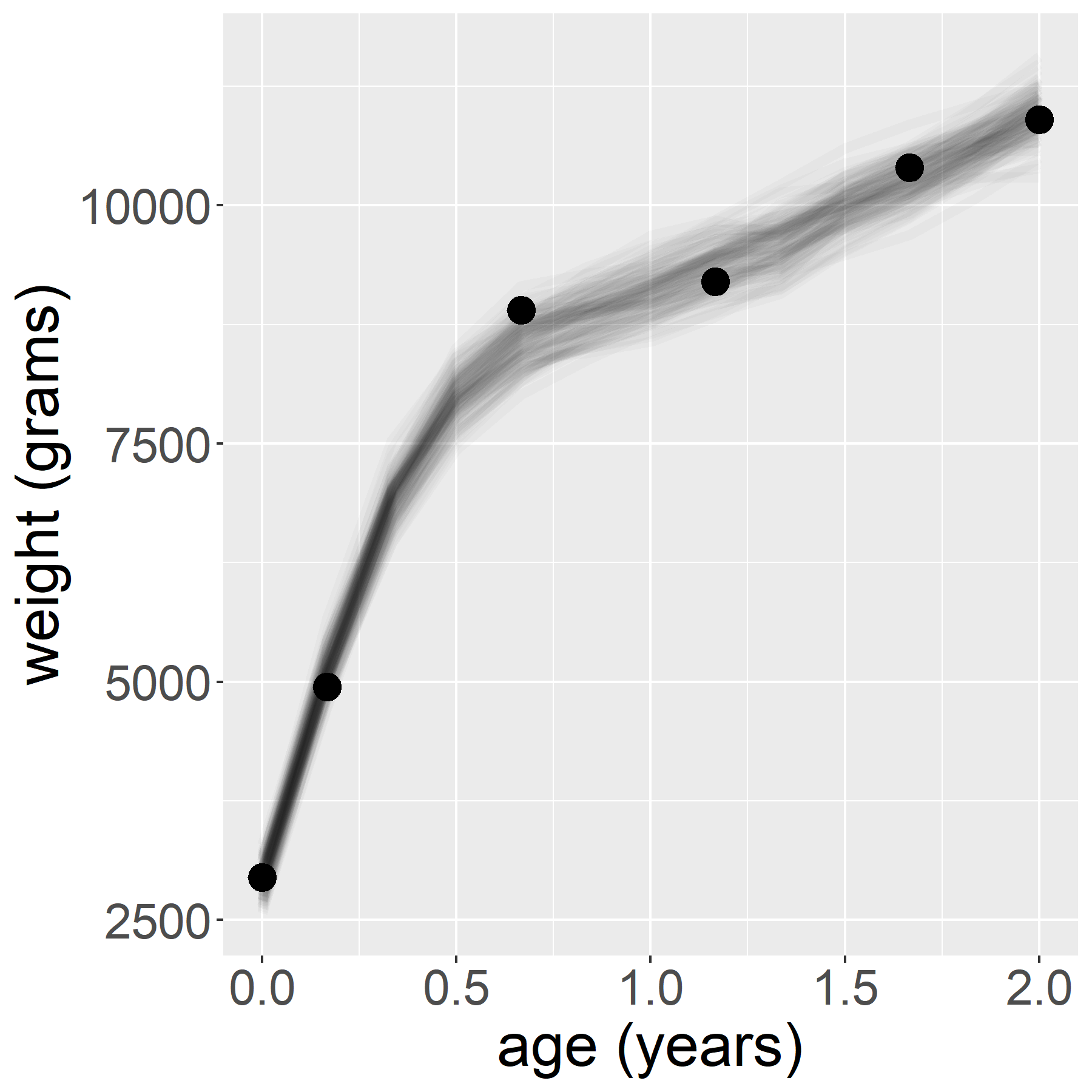}& \includegraphics[width = 1.75 in]{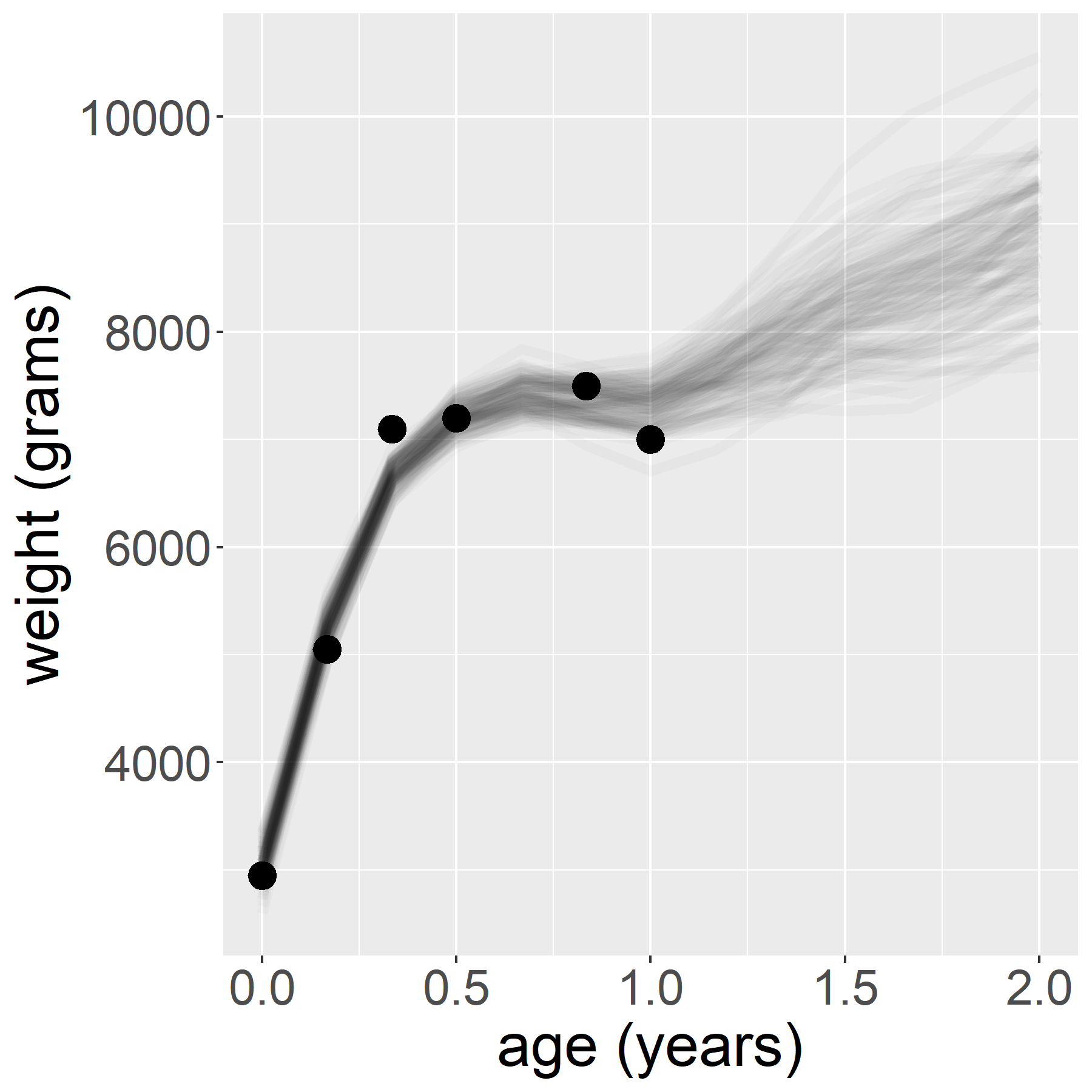}&\includegraphics[width = 1.75 in]{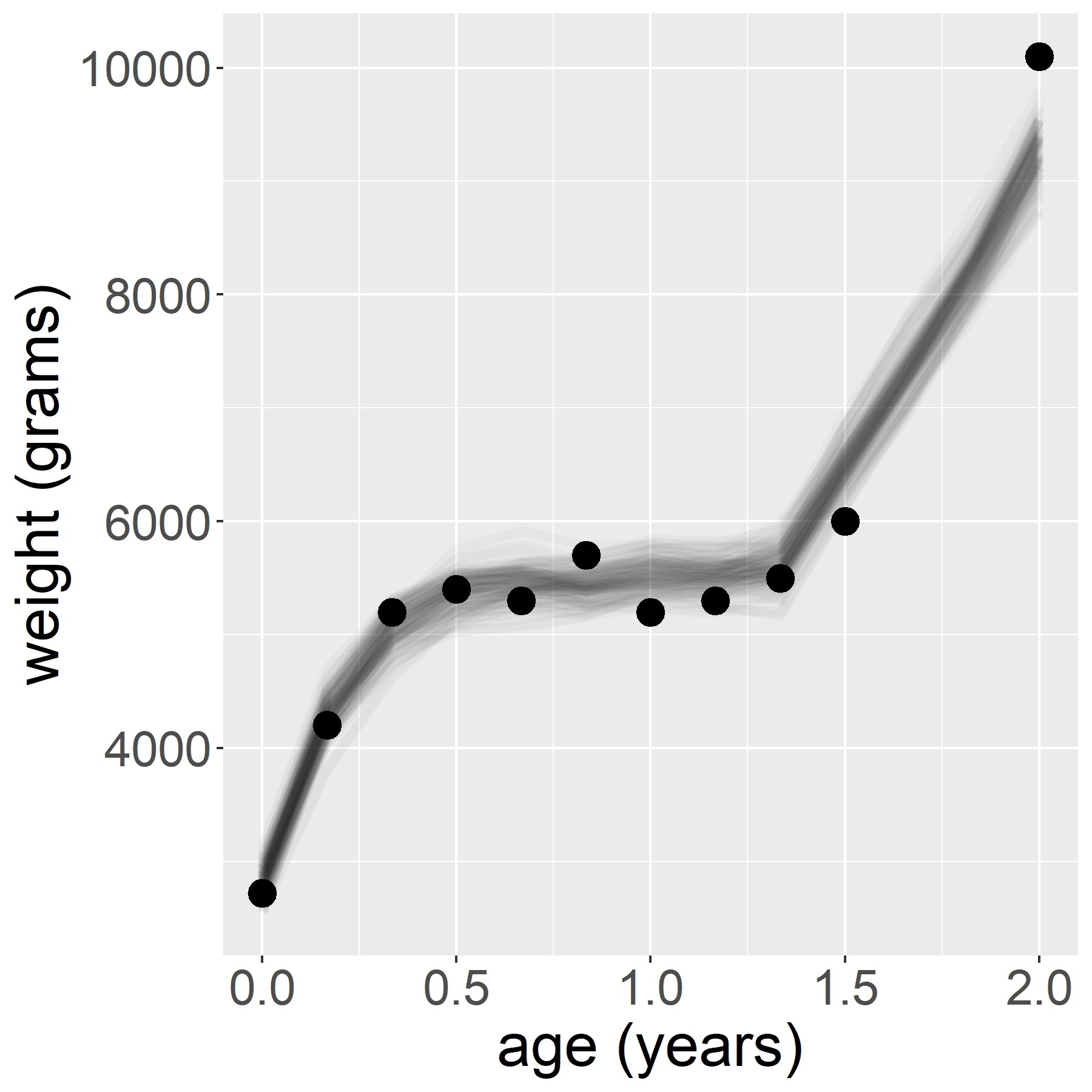} \\
(a) & (b) & (c) \\
\end{tabular}
\caption{Inferred weight trajectories for different subjects using the \cite{crainiceanu2010} model: (a)-(c) Posterior samples of $\hat{\mu}(\protect\vv{t}_i) + (\hat{\lambda}_1(\protect\vv{t}_i),\ldots,\hat{\lambda}_K(\protect\vv{t}_i))^\top\hat{\eta}_i + O_i\sum_{j = 1}^4\int_{T}\beta_j(s,\protect\vv{t})z_{i,j}(s)ds$, overlaid on $y_i(\protect\vv{t}_i)$, $i = 421,\ 1626,\ 2205$.}\label{fig:wb_cebu_fit}
    \end{center}
\end{figure}

In this section, we analyze the Cebu dataset with methods discussed in \cite{crainiceanu2010}, which we compare to the results from the $\small{\mbox{NeMO}}$ model. The observation model is given by Equations \eqref{eq:obs_model1}\&\eqref{eq:obs_model2}. In line with their empirical Bayes setup, $\mu(\vv{t})$ is estimated using the cross sectional mean of the data, and $\lambda_1,\ldots,\lambda_K$ are estimated from the mean-centered data using a penalized spline based approach. Additionally, we complete longitudinal covariates using model based imputation, with the generative model
\begin{align}
    P(z_{i,1}(t) = 1) & =  \text{logit}^{-1}(\mu^{z_1} + (\lambda^z_1(t),\ldots\lambda^z_{K^z}(t))^\top\eta^z_i), \nonumber \\
    P(z_{i,j}(t) = 1) & = \text{logit}^{-1}(\mu^{z_j}) ,\ j= 2,3,4, \nonumber 
\end{align}
where model components are estimated using the \texttt{slfpca} \texttt{R} package. 

The mean of the response and laodings are fixed at point estimates, $\hat\mu(\vv{t}),\hat{\lambda}_1(\vv{t}),\ldots,\hat{\lambda}_K(\vv{t})$ in subsequent modeling. Remaining model parameters are assigned conjugate prior distributions and posterior inference is carried out through Gibbs sampling. As suggested by \cite{crainiceanu2010}, \textit{a priori} $\Theta_{k,q}\sim N(0,\hat\phi_k)$ and $\xi\sim N(0,\hat\phi_k)$, where the prior variances are eigenvalues of the estimated covariance matrix that determine the estimated loadings. For the regression coefficients between the scalar covariates and the latent factors, we assume independent, identically distributed standard normal priors. We use the same bases discussed in the main paper to simplify the integral equations in the functional linear model components that govern the relationships between the longitudinal covariates and weight. Basis coefficients are given normal conjugate priors, consistent with the $\small{\mbox{NeMO}}$ model. 

Figure \ref{fig:wb_cebu_fpca} panels (a) and (b) show the estimated mean process and loadings. These estimated functions are similar in shape and magnitude to the loadings presented in the main paper. Panels (d)-(f) of this figure show the inferred effects of the scalar covariates by displaying posterior samples of $(\hat{\lambda_1}(t),\ldots,\hat{\lambda_K}(t))\hat{\Theta}_q,\ q = 1\ldots,4$. For $q = 1,2,3$ the resulting functions match the results presented in the main paper. The function for $q = 4$ in the main paper is inferred to be closer to zero with a considerable amount of uncertainty. The discrepancy between these two results could be from conditioning on estimated loadings in the empirical Bayes setting. 

Figure \ref{fig:wb_cebu_bf} panels (a)-(c) visualize estimated generalized functional factor loadings used to describe the structure variability in breastfeeding status estimated from the \texttt{slfpca} \texttt{R} package. These loadings seem to capture if a child is breastfed, and when they stopped breastfeeding. Panels (d) of this figure shows estimated latent factors, $\hat{\eta}^z_i,\ i = 1,\ldots,n$. As in Section \ref{sec:add_post_figs}, the latent factors have a non-Gaussian structure. Panel (e)-(h) display $\|\int_{\mathcal{T}} \hat{\beta_1} 1_{s\leq s'}ds\|_2$, for different values of $s'$. The general shape of inferred functions is consistent with those presented in the main paper, however, the functions in panels (e)-(h) are slightly smaller in magnitude. As shown in panel (f), there is a diminishing effect of prolonged breastfeeding. 

Figure \ref{fig:wb_cebu_ill} panel (a) displays the estimated average probability of a child experiencing the illnesses by age. As in the model presented in the main paper, all of these estimated functions have a prominent peak at early ages. The estimated regression functions, $\beta_j(t),\ j = 2,3,4$, shown in panels (b)-(d), match their corresponding interpretations in the main paper. It appears that experiencing diarrhea and fever have a more negative effect on weight than experiencing cough. 

The inferences made in this section are generally consistent with the inferences presented in the main paper. We use the widely applicable information criterion to compare the predictive ability of the model presented in this section and the $\small{\mbox{NeMO}}$ model \citep{vehtari2017}. The widely applicable information criterion values are $WAIC_{NeMO} = 69709.10$ for the $\small{\mbox{NeMO}}$ model and $WAIC_{CG} = 69905.07$ for the model of \cite{crainiceanu2010}. The difference in WAICs is $WAIC_{NeMO} - WAIC_{CG} = -195.97$ with standard error $114.04$. Using this criteria, the $\small{\mbox{NeMO}}$ model is preferred in terms of predictive accuracy. 

\subsection{Model Fit Assessment}

\begin{figure}[t!]
\begin{center}
 \begin{tabular}{c}
\includegraphics[width = 3in]{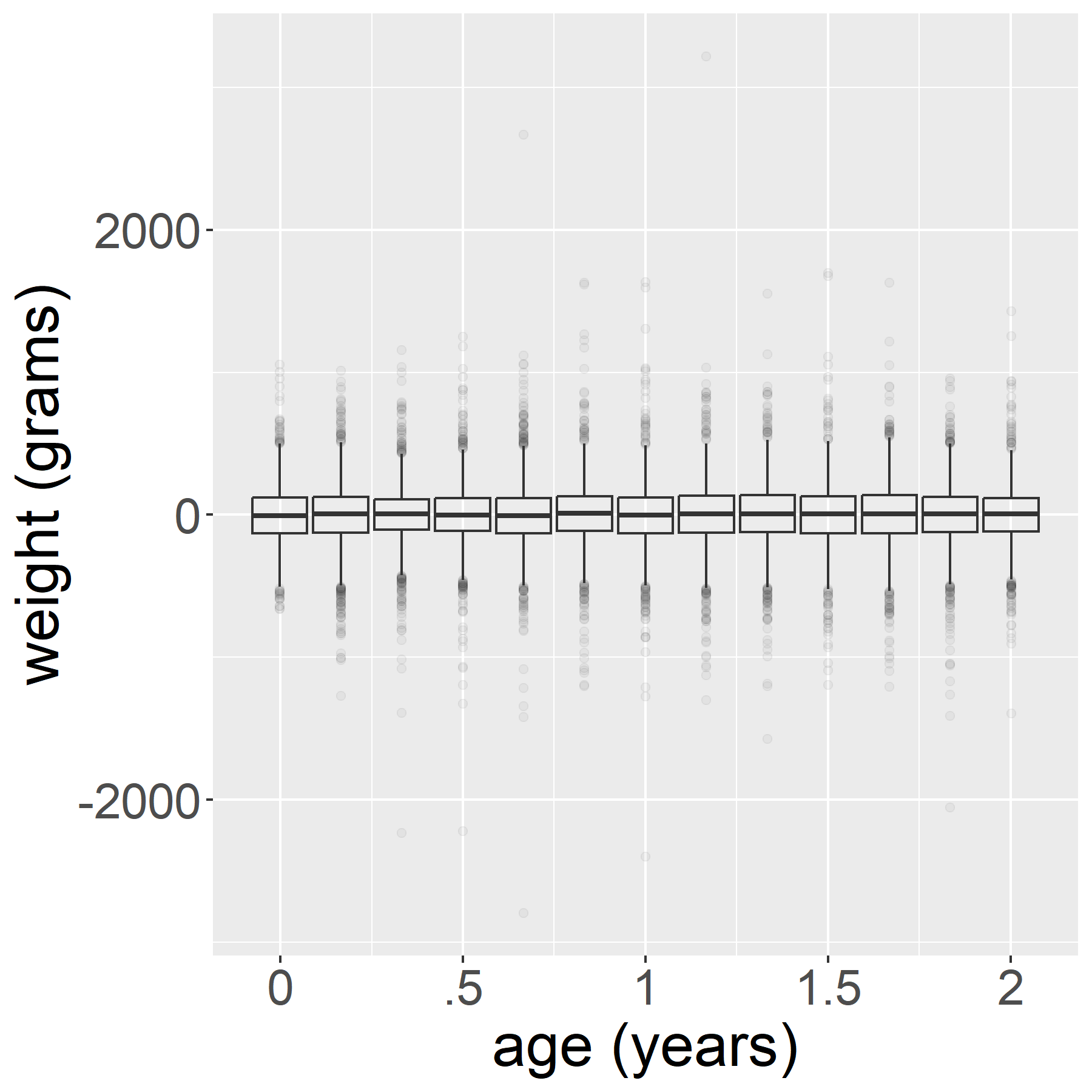}\\
\end{tabular}
\caption{Boxplots of residuals by age.}\label{fig:residuals}
    \end{center}
\end{figure}

\begin{figure}[t!]
\begin{center}
\begin{tabular}{ccc}
\includegraphics[width = 1.5 in]{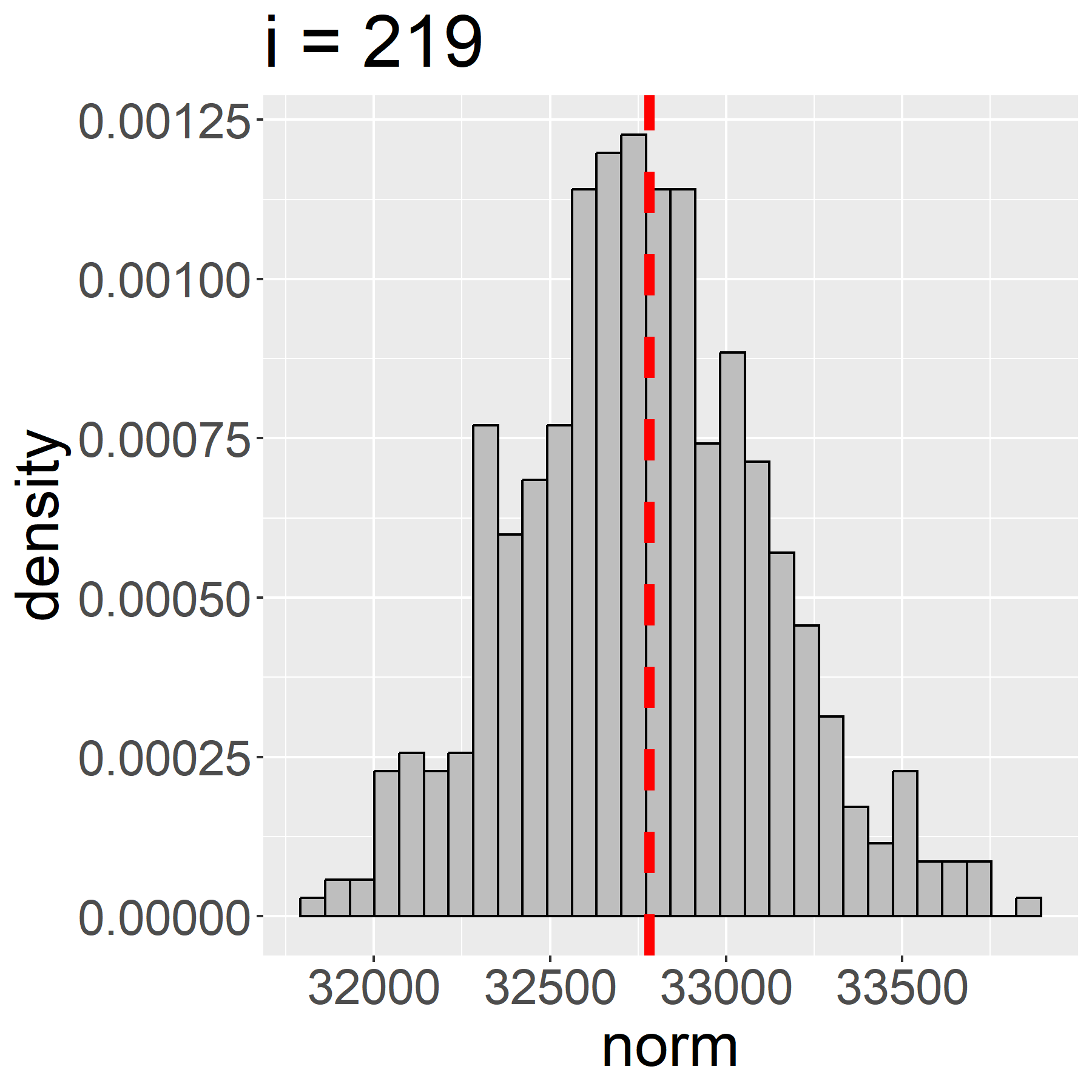}&\includegraphics[width = 1.5 in]{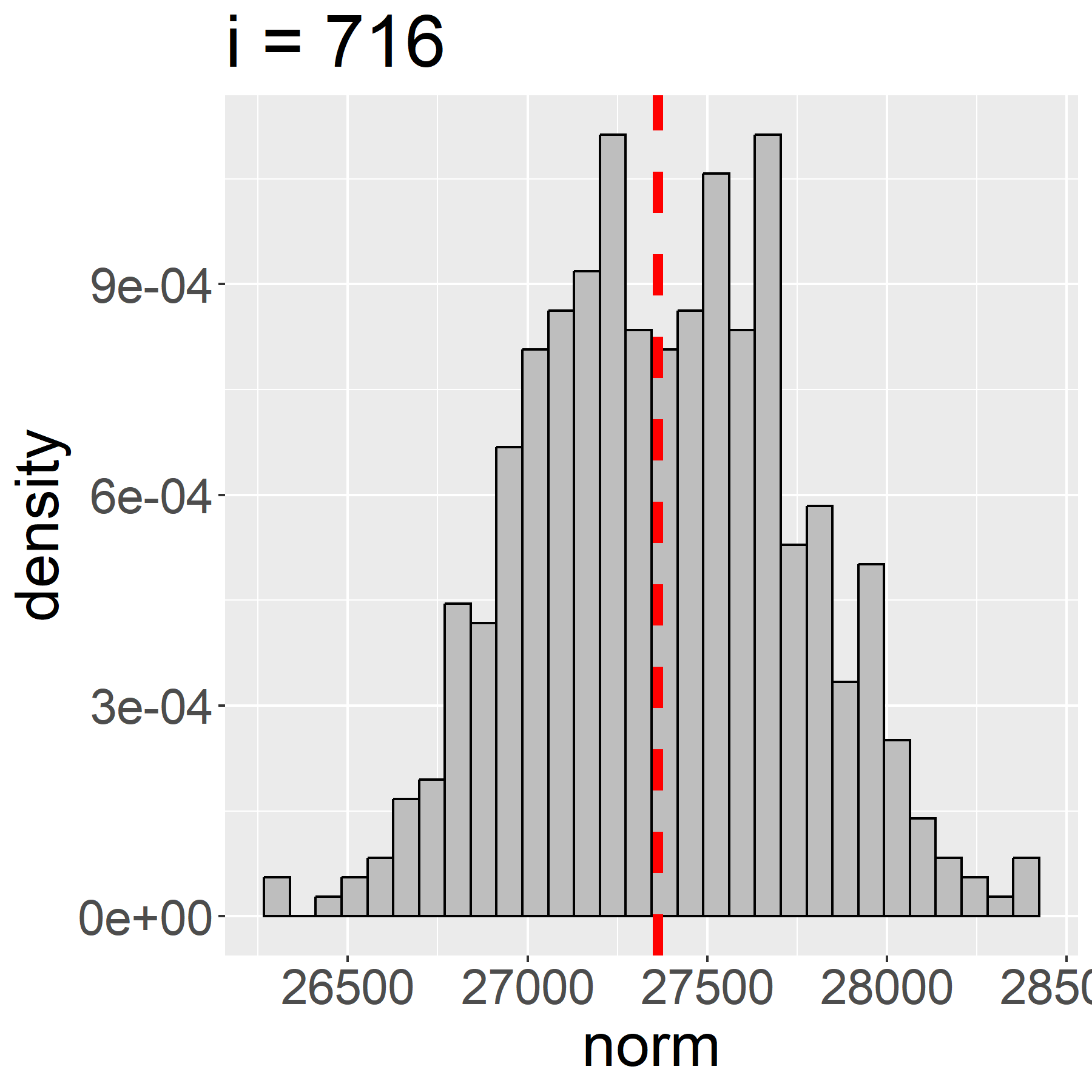}&\includegraphics[width = 1.5 in]{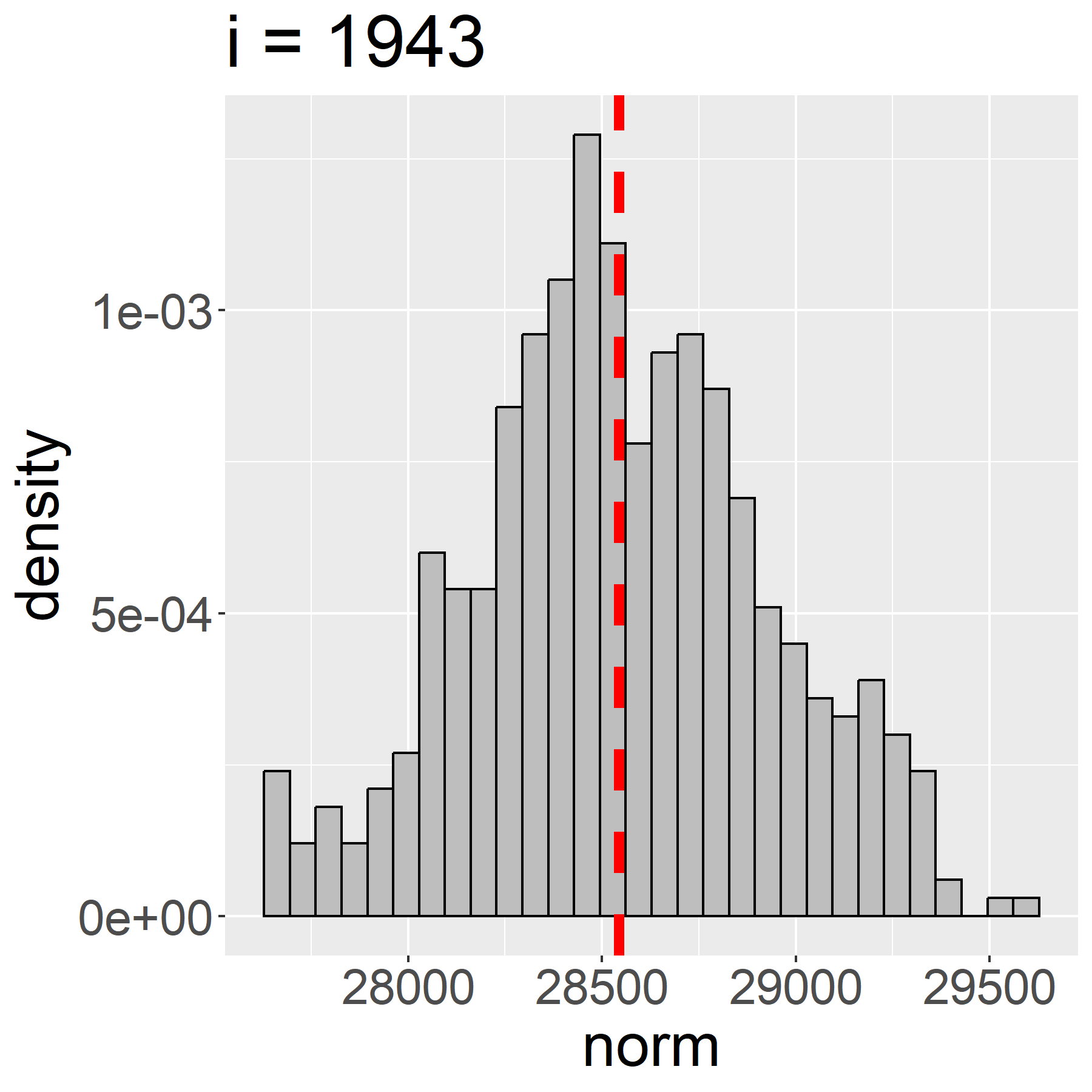} \\
(a) & (b) & (c) \\
\includegraphics[width = 1.5 in]{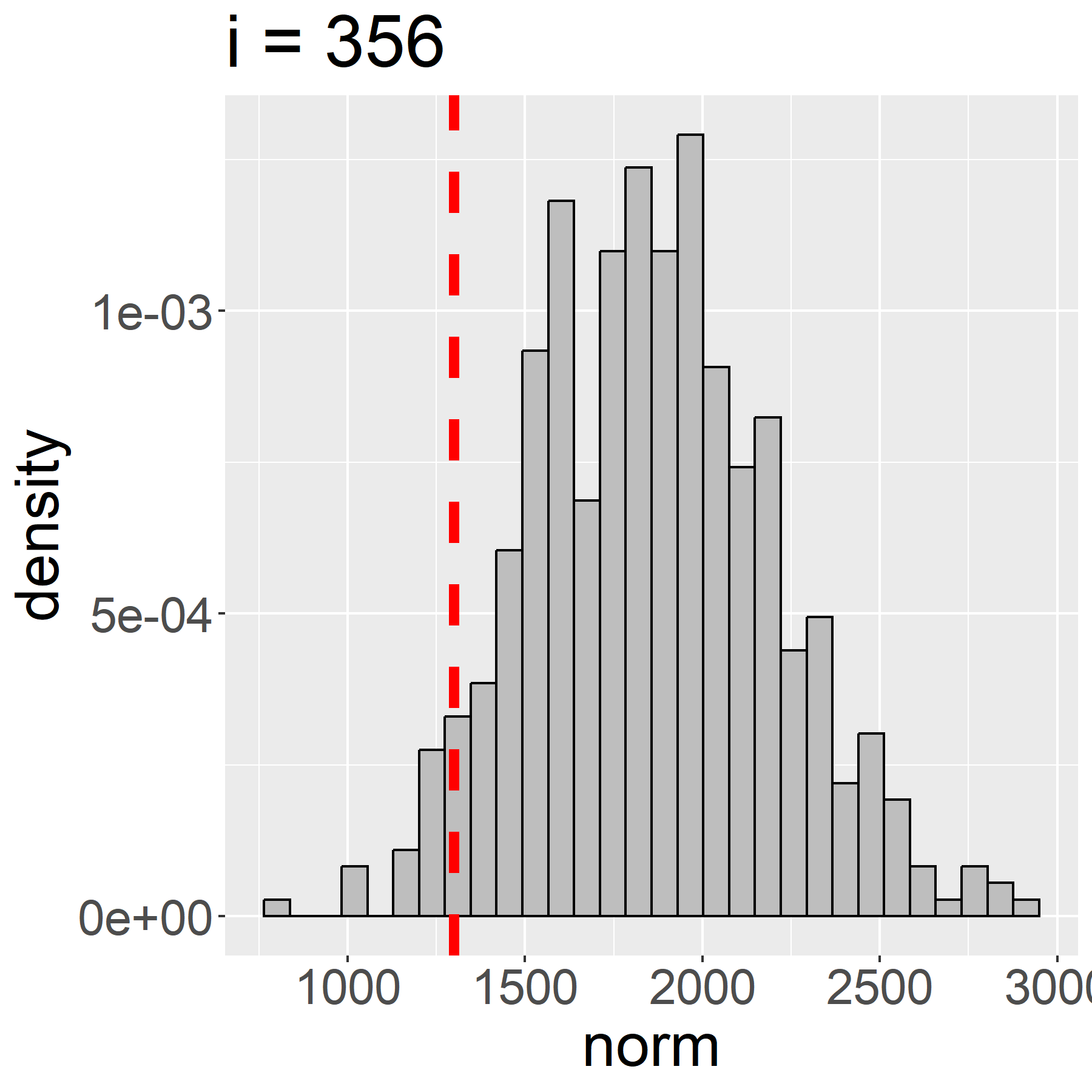} & \includegraphics[width = 1.5 in]{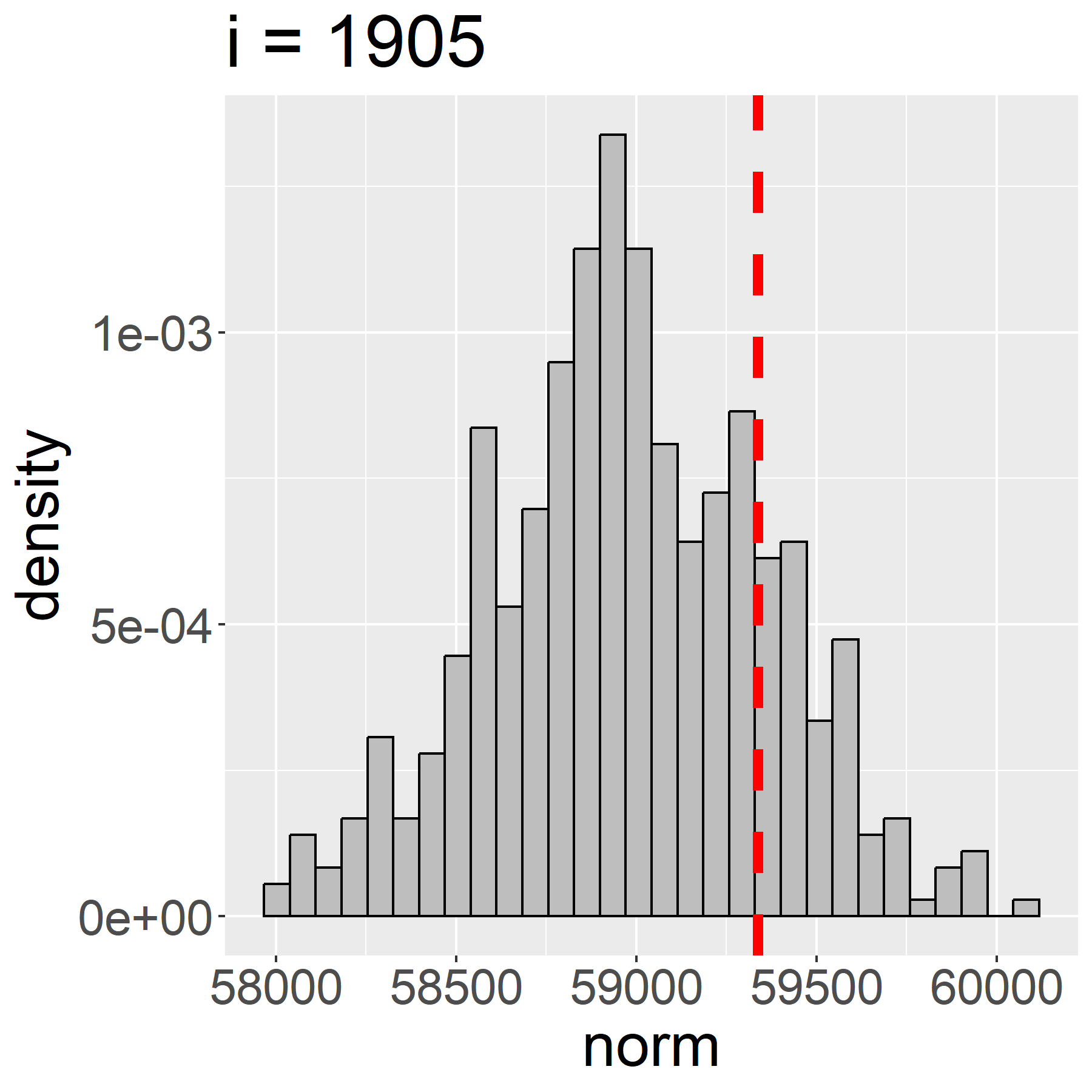} & \includegraphics[width = 1.5 in]{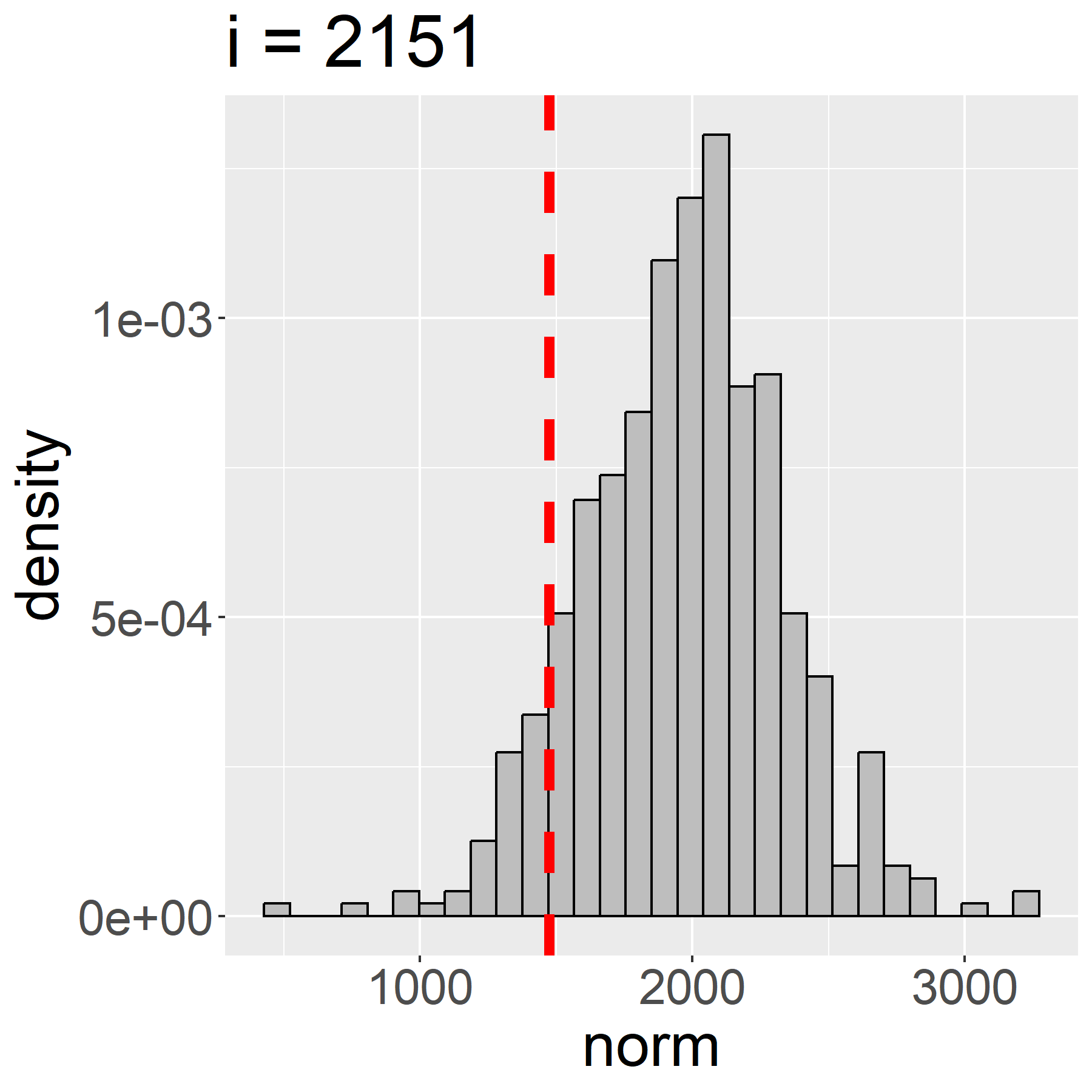} \\
(d) & (e) & (f) 
\end{tabular}
 \begin{tabular}{c}
\includegraphics[width = 2in]{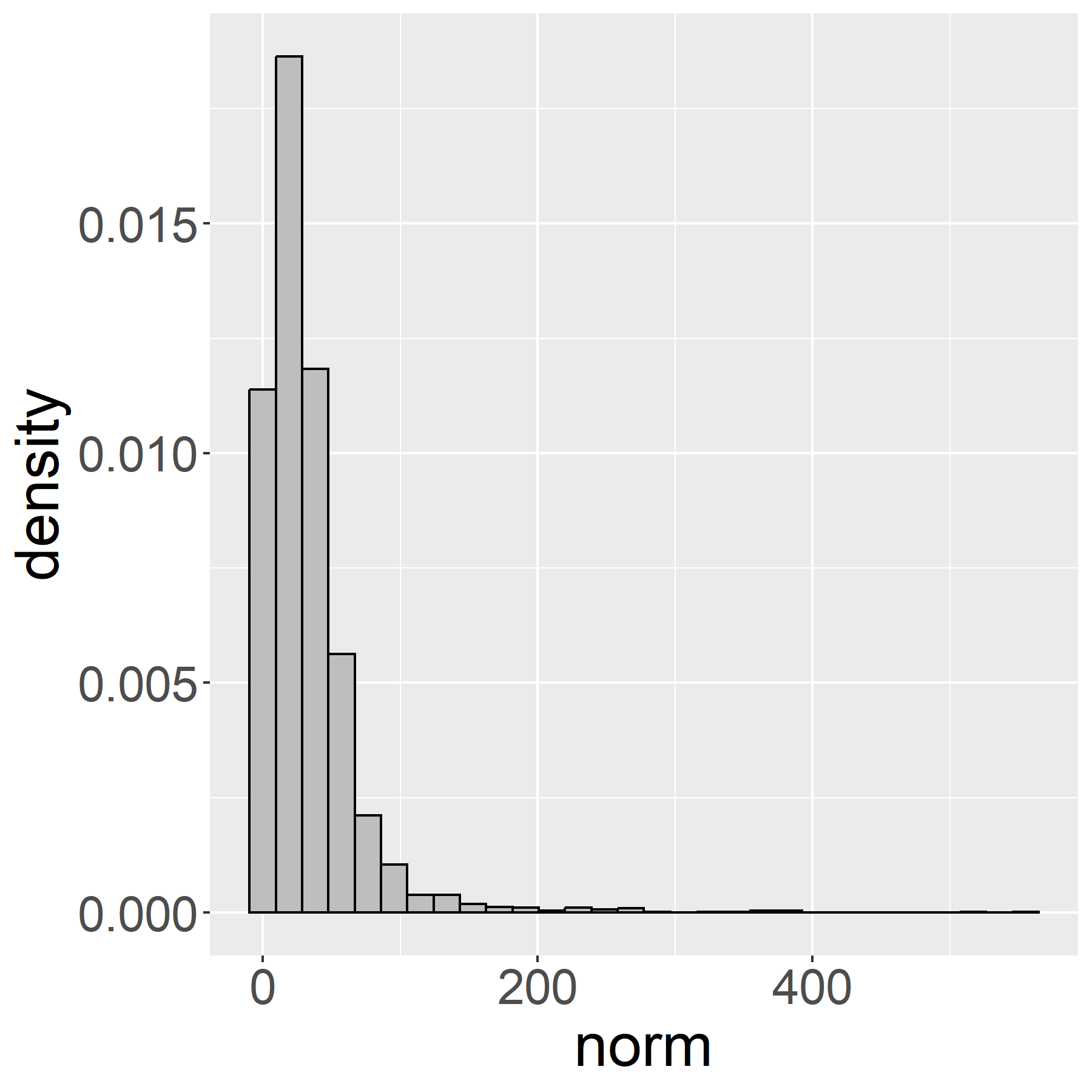}\\
(g)
\end{tabular}
\caption{$T_{mag,y_i}$ based on posterior predictive samples (histogram) and observation (red lines) for random subjects (a)-(c) and subjects whose predicted values were furthest from the observed values (d)-(f). (g) the absolute difference between the mean of $T_{mag,y_i}$ based on posterior predictive samples and the observed value for all subjects. }\label{fig:norm_pred_check}
    \end{center}
\end{figure}

\begin{figure}[t!]
\begin{center}
\begin{tabular}{ccc}
\includegraphics[width = 1.25 in]{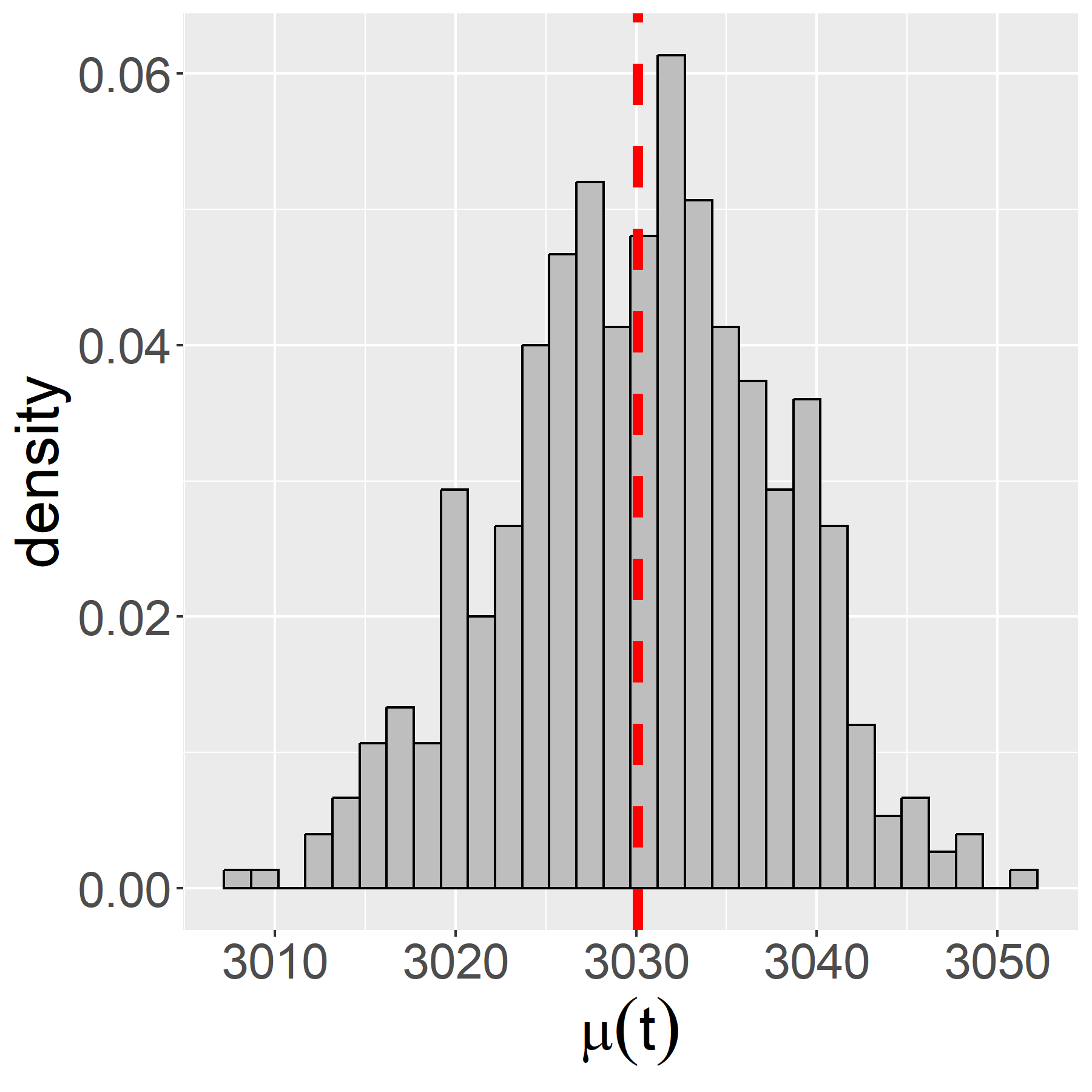}&\includegraphics[width = 1.25 in]{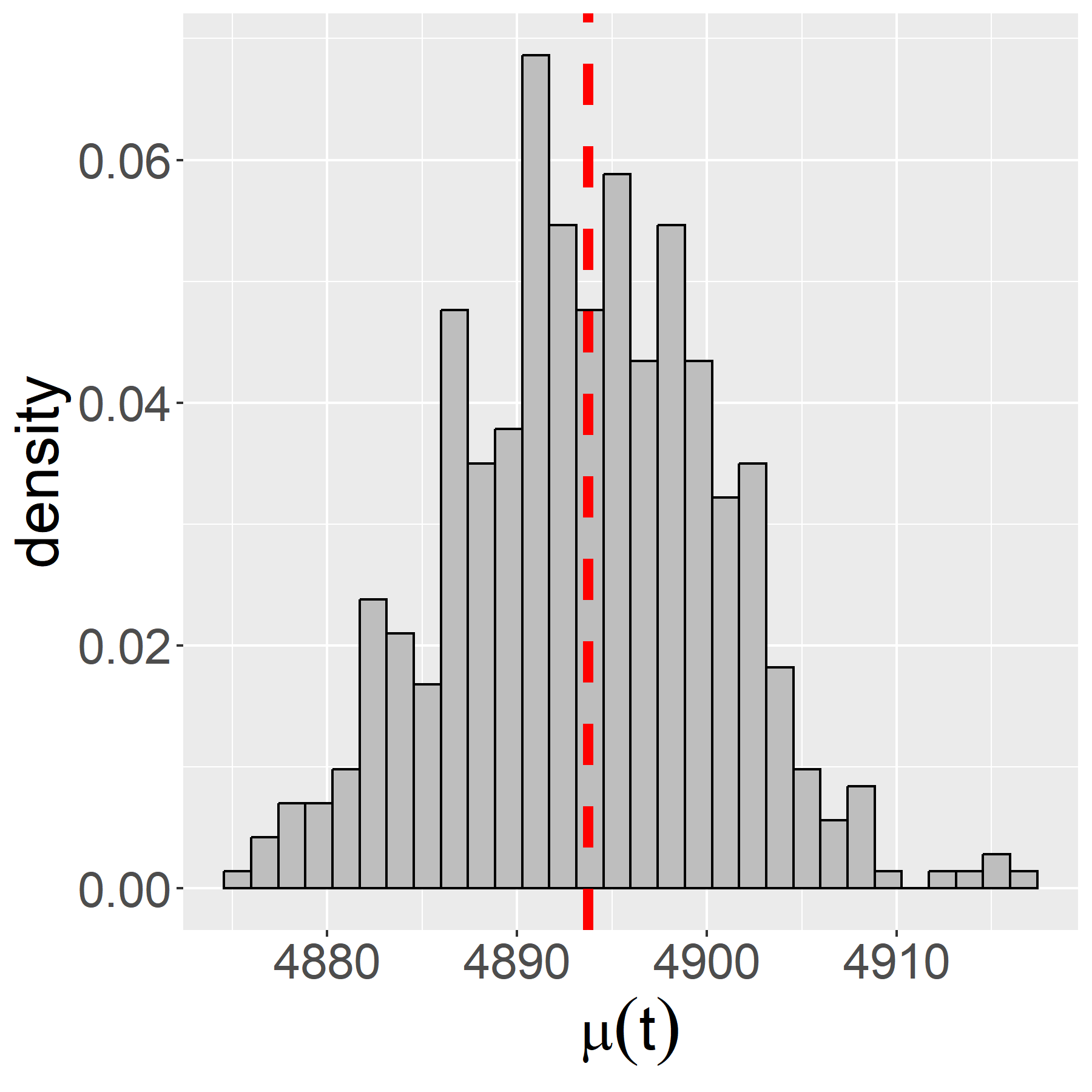}&\includegraphics[width = 1.25 in]{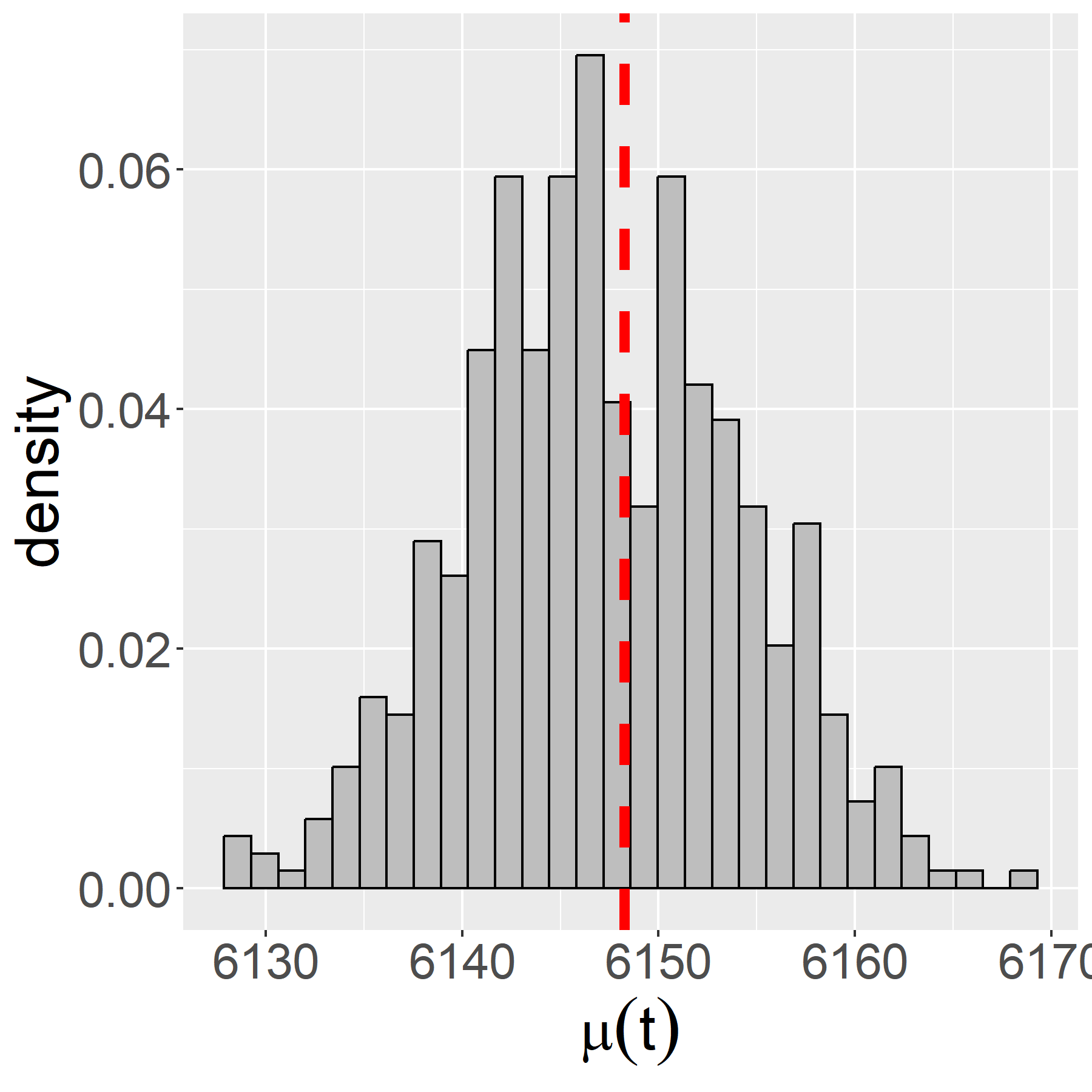} \\
(a) & (b) & (c) \\
\includegraphics[width = 1.25 in]{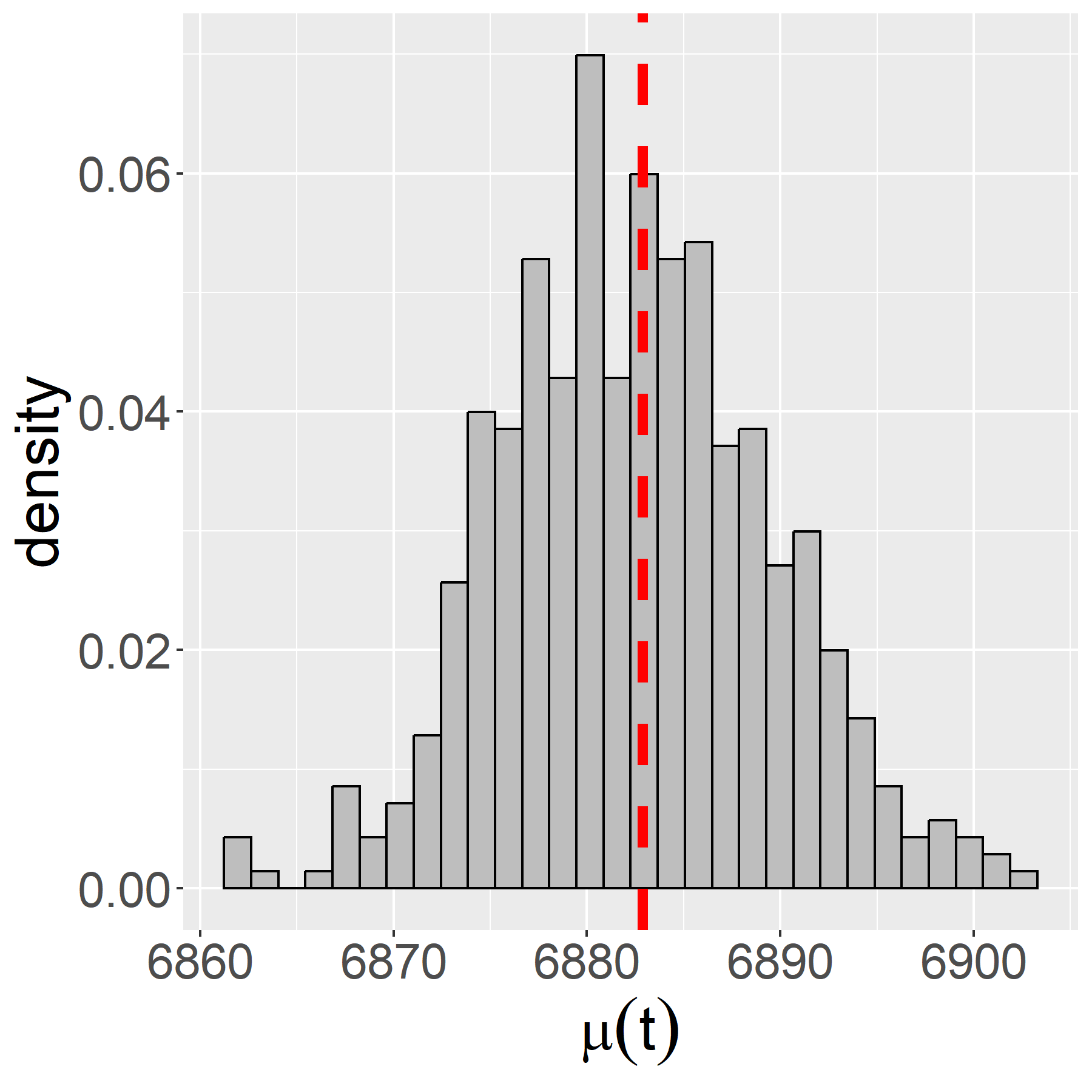}&\includegraphics[width = 1.25 in]{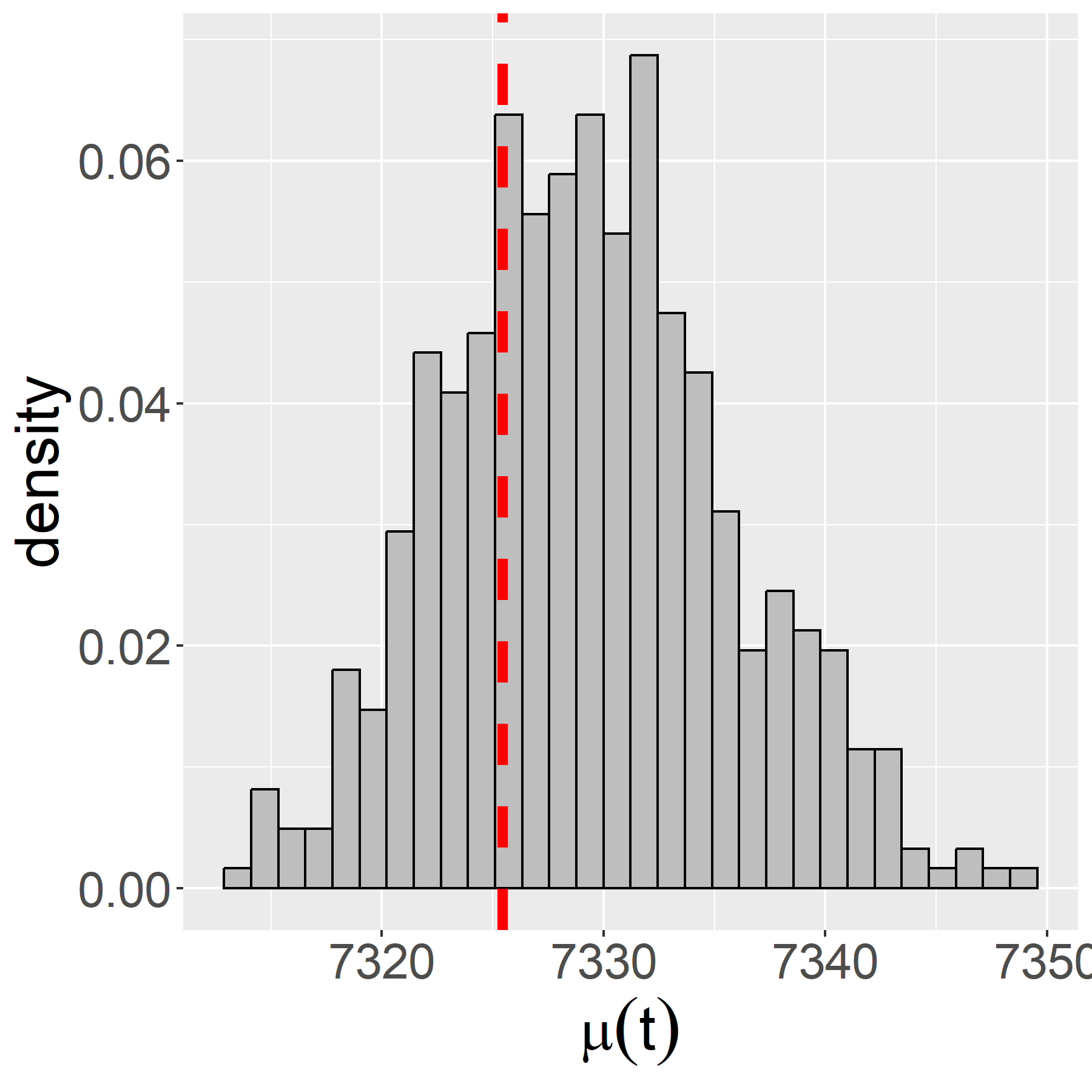}&\includegraphics[width = 1.25 in]{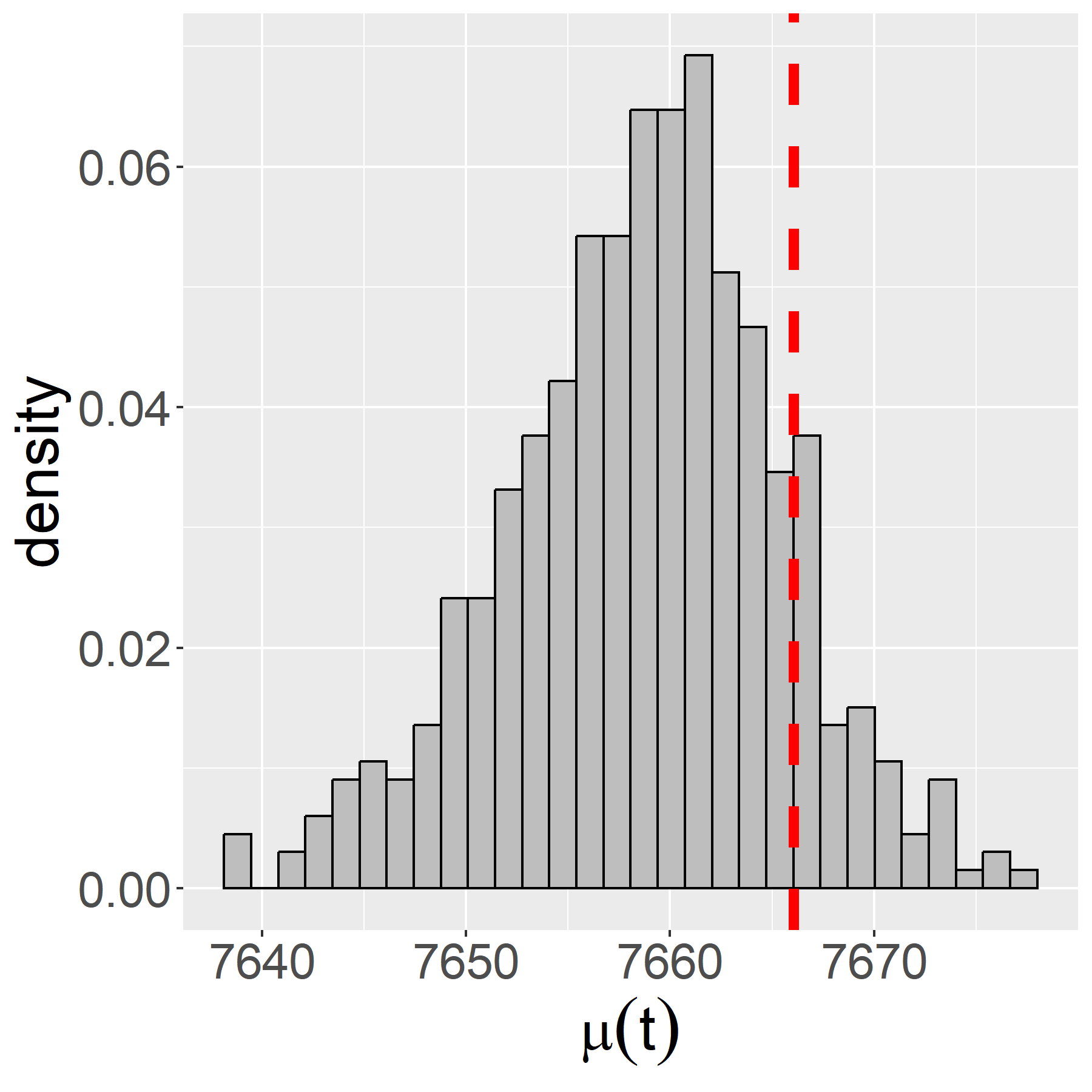} \\
(d) & (e) & (f) \\
\includegraphics[width = 1.25 in]{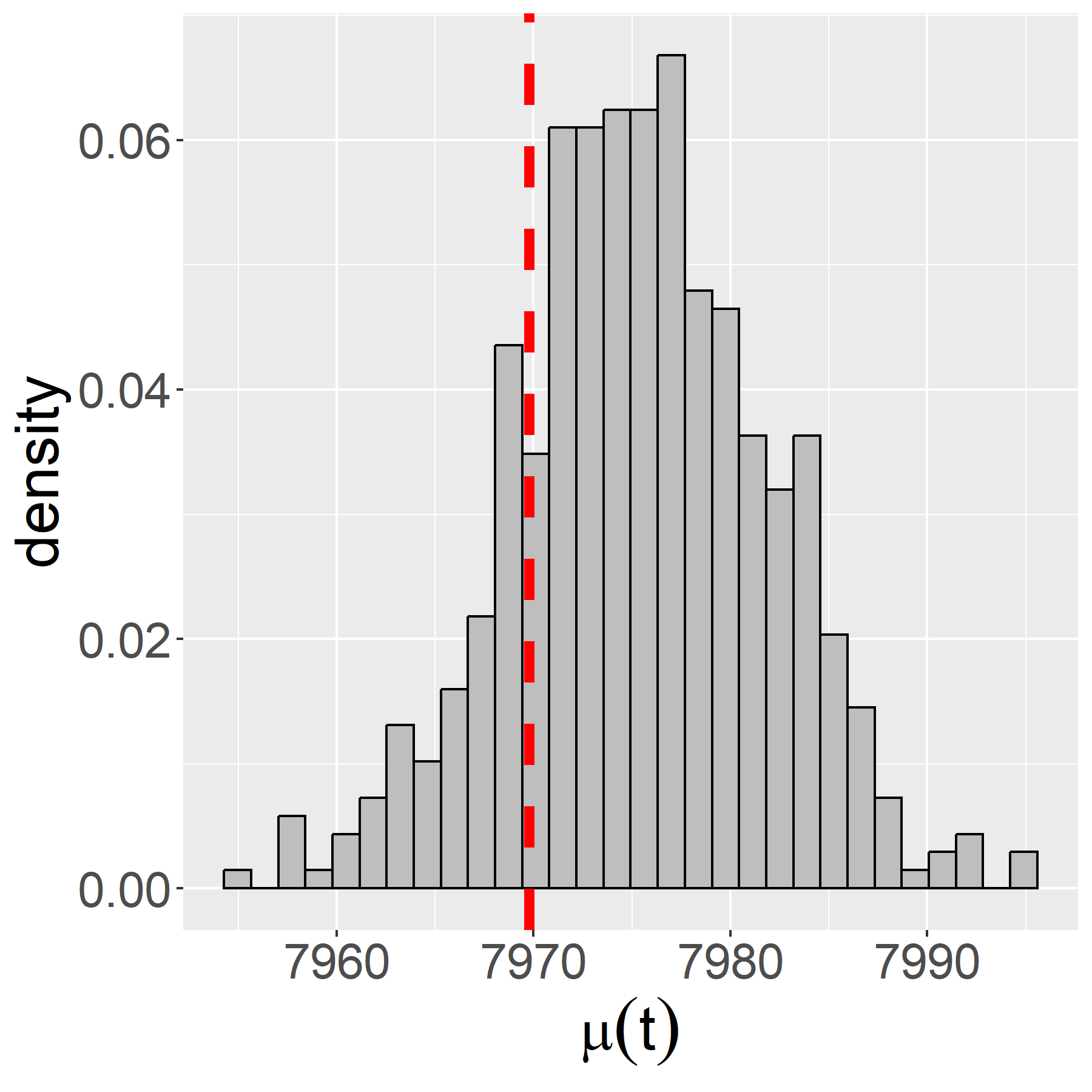}&\includegraphics[width = 1.25 in]{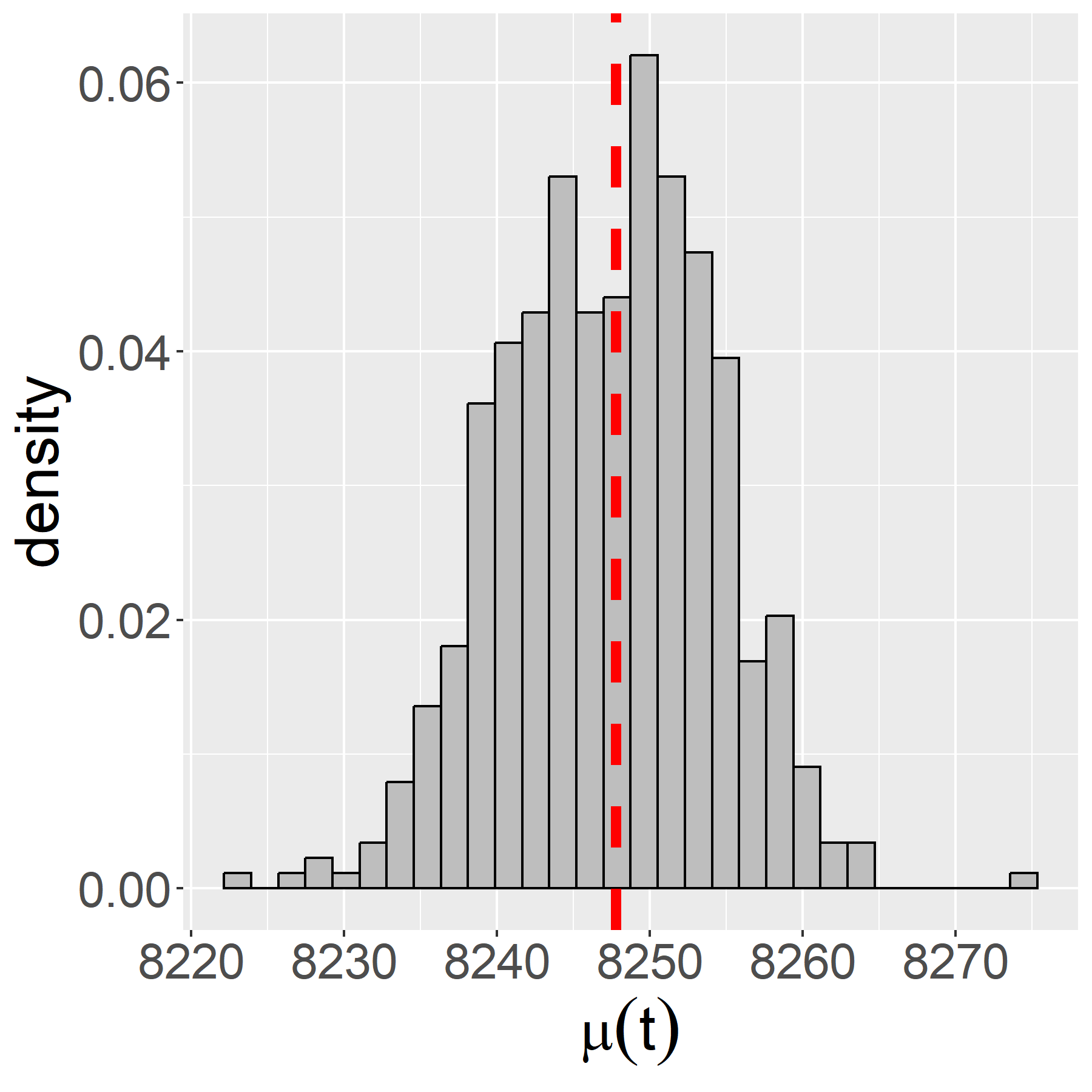}&\includegraphics[width = 1.25 in]{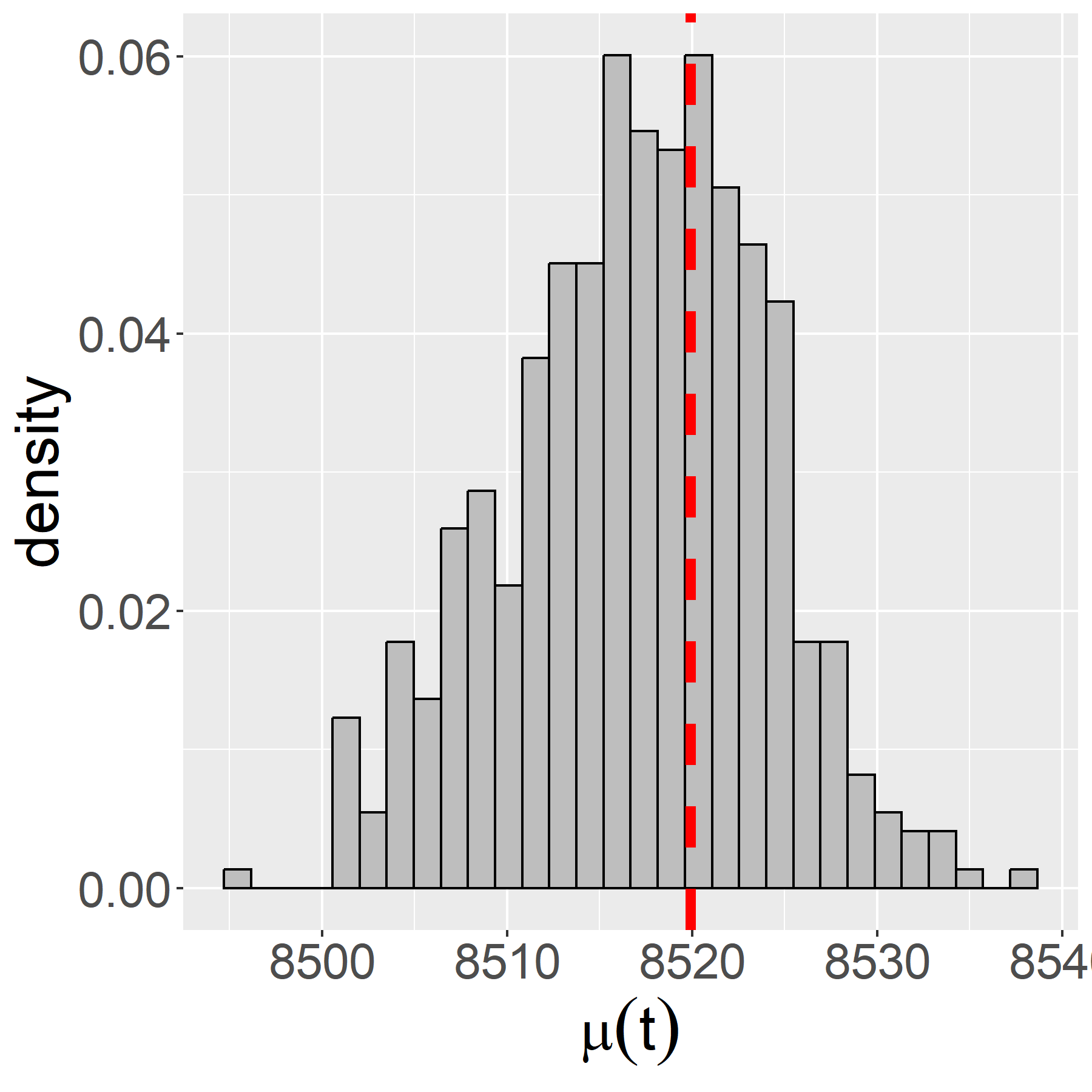} \\
(g) & (h) & (i) \\
\includegraphics[width = 1.25 in]{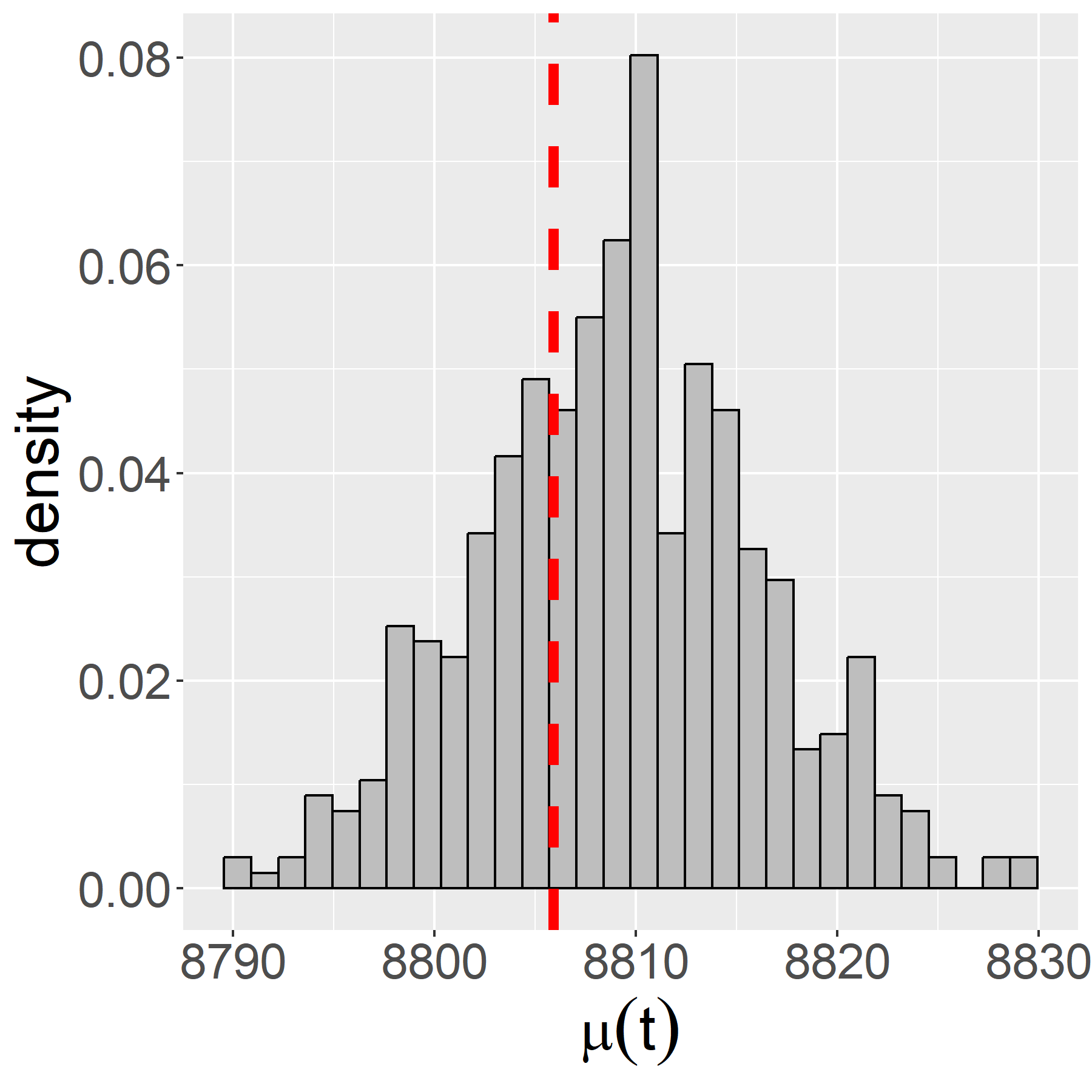}&\includegraphics[width = 1.25 in]{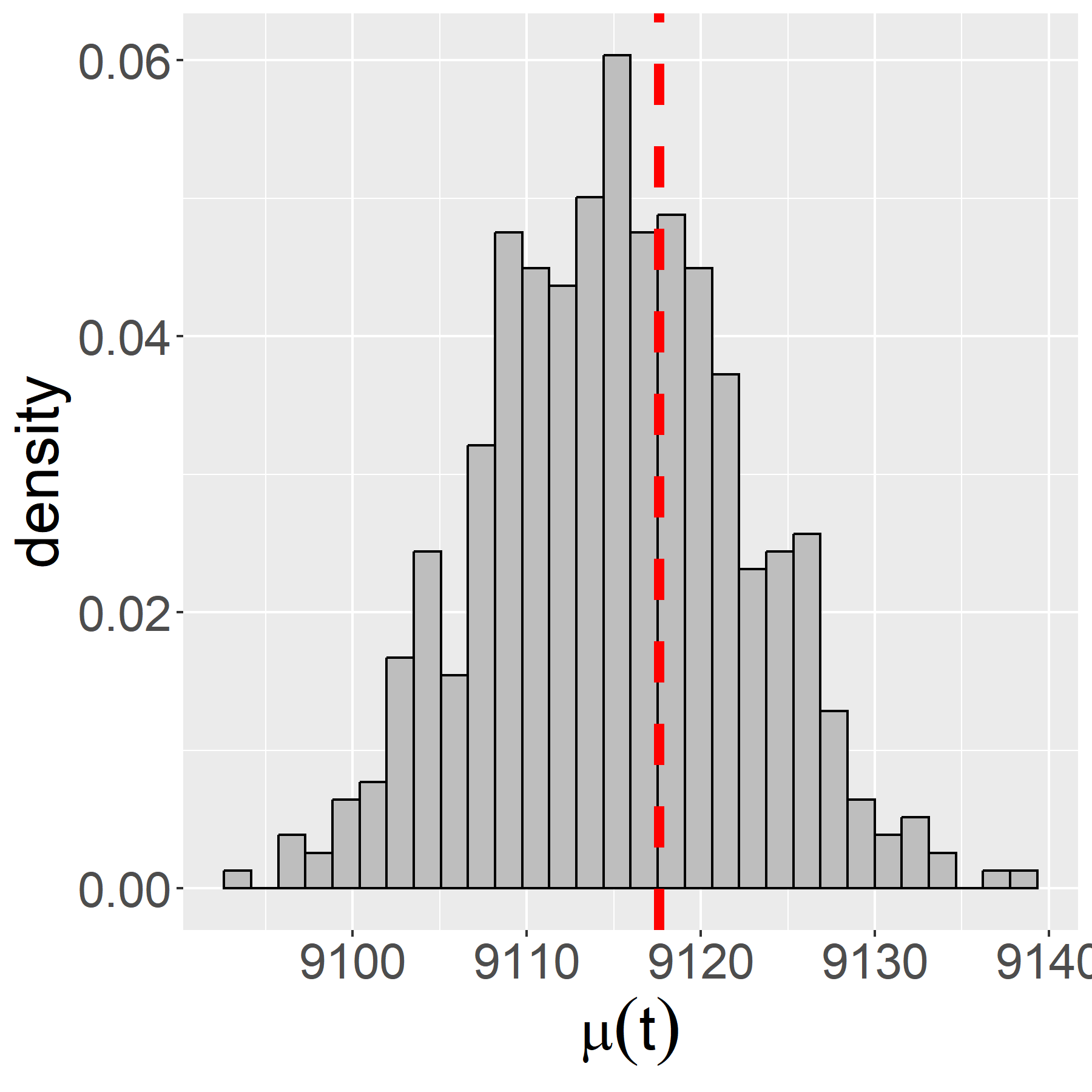}&\includegraphics[width = 1.25 in]{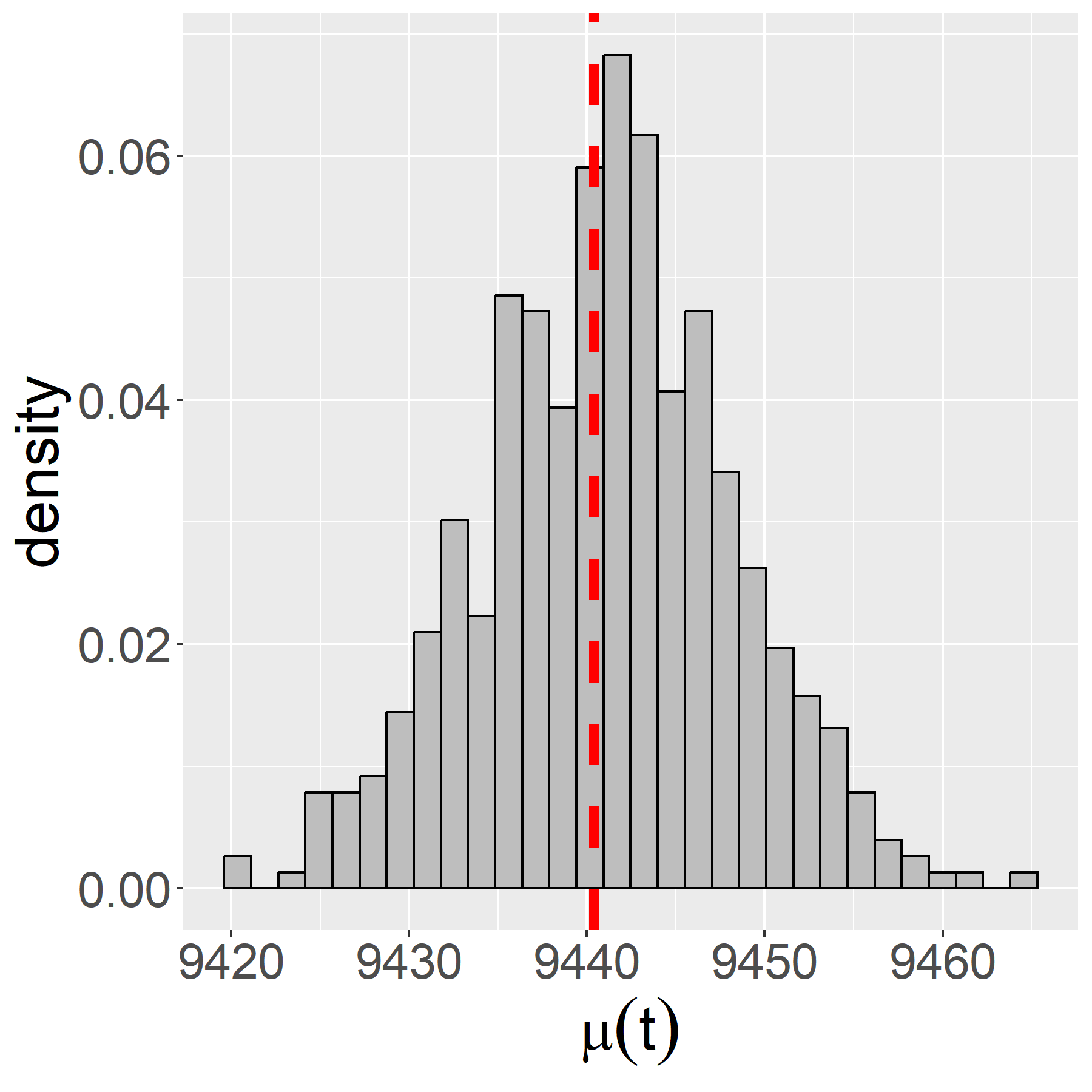} \\
(j) & (k) & (l) \\
\includegraphics[width = 1.25 in]{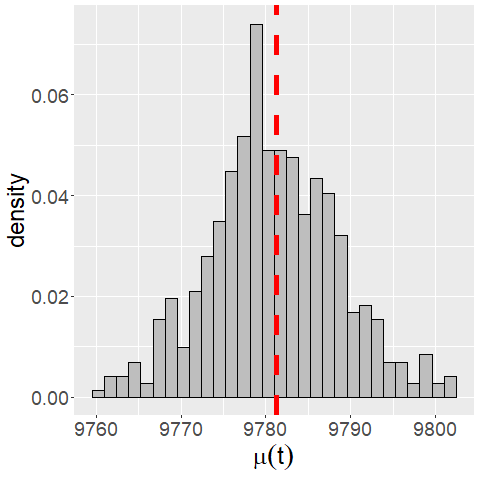}& & \\
(m) & & \\
\end{tabular}
\caption{$T_{mean,(y_1,\ldots,y_n)}$ based on posterior predictive samples (histogram) and observation (red lines) for (a) $t = 0$ years, (b) $t = 2/24$ years, ..., (m) $t = 2$ years. }\label{fig:mean_pred_check}
    \end{center}
\end{figure}

\begin{figure}[t!]
\begin{center}
\begin{tabular}{ccc}
\includegraphics[width = 1.5 in]{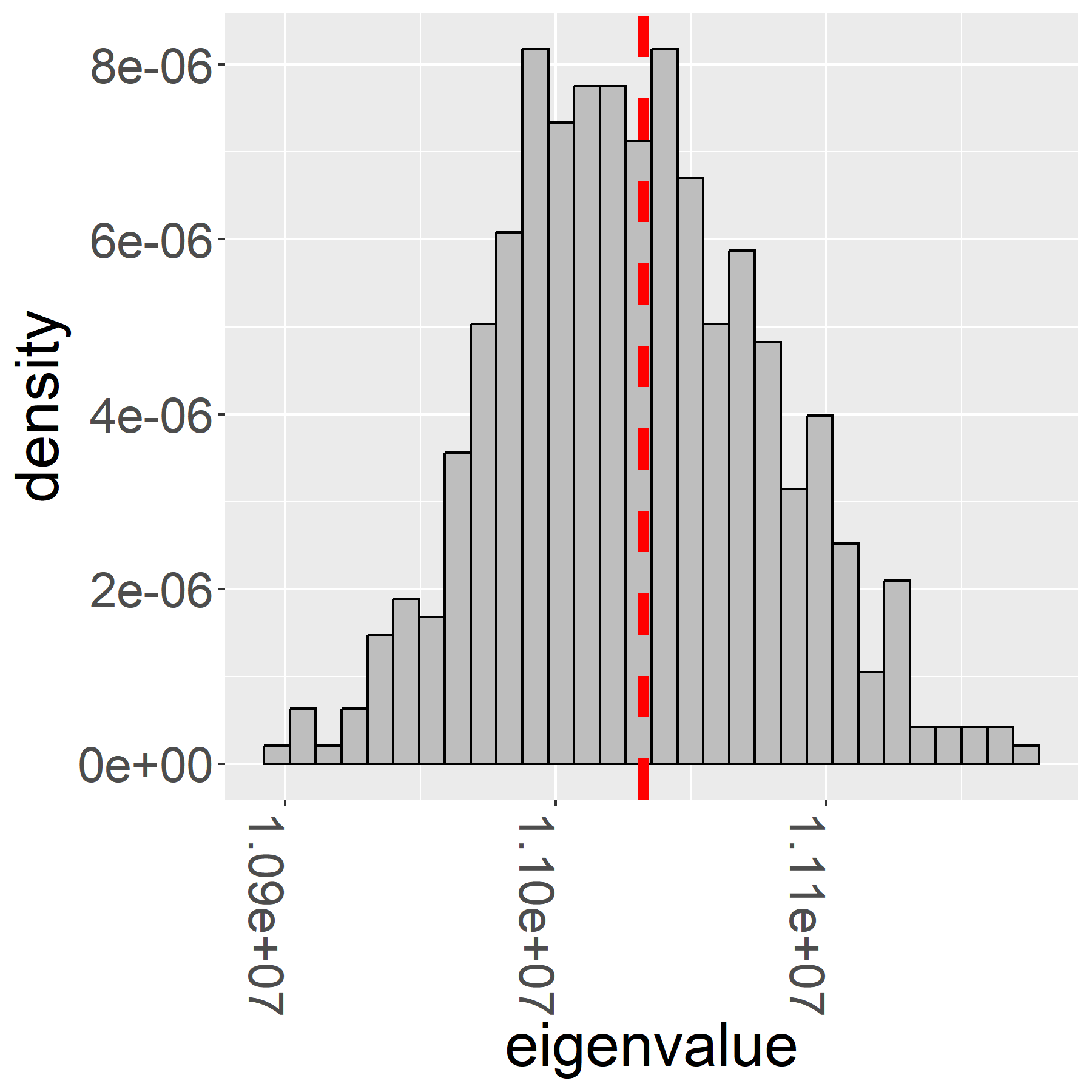}&\includegraphics[width = 1.5 in]{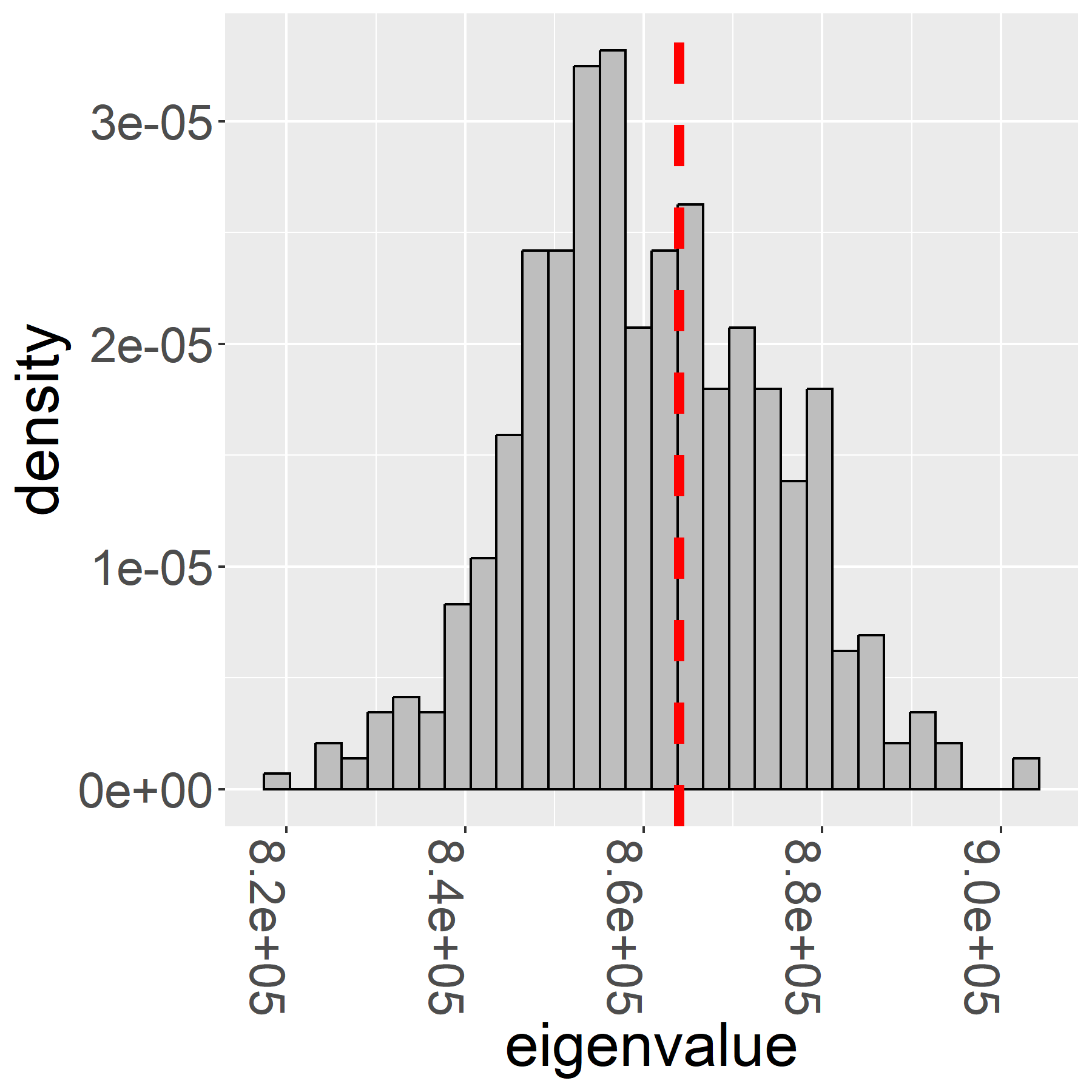}&\includegraphics[width = 1.5 in]{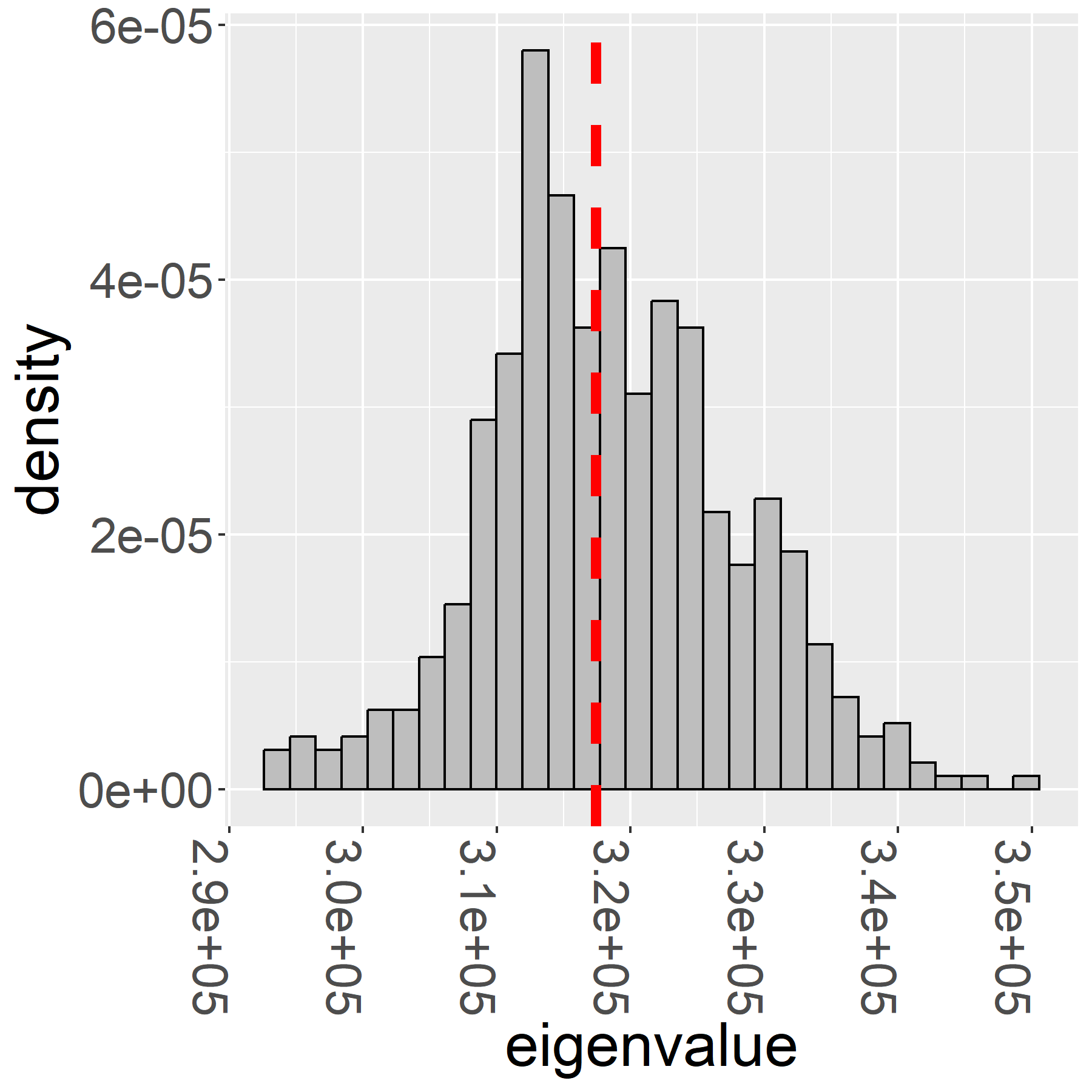} \\
(a) & (b) & (c) 
\end{tabular}
\begin{tabular}{cc}
\includegraphics[width = 1.5 in]{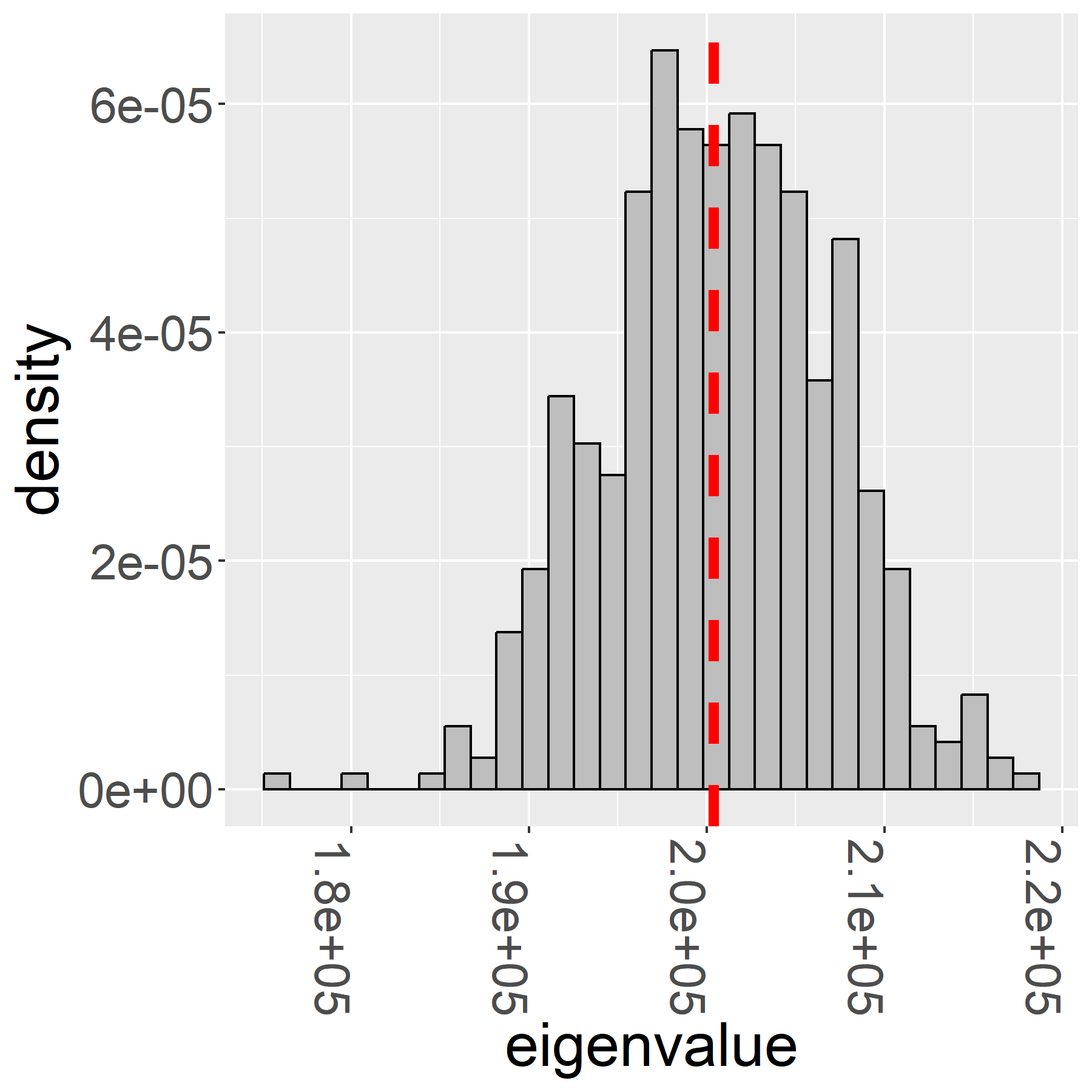}&\includegraphics[width = 1.5 in]{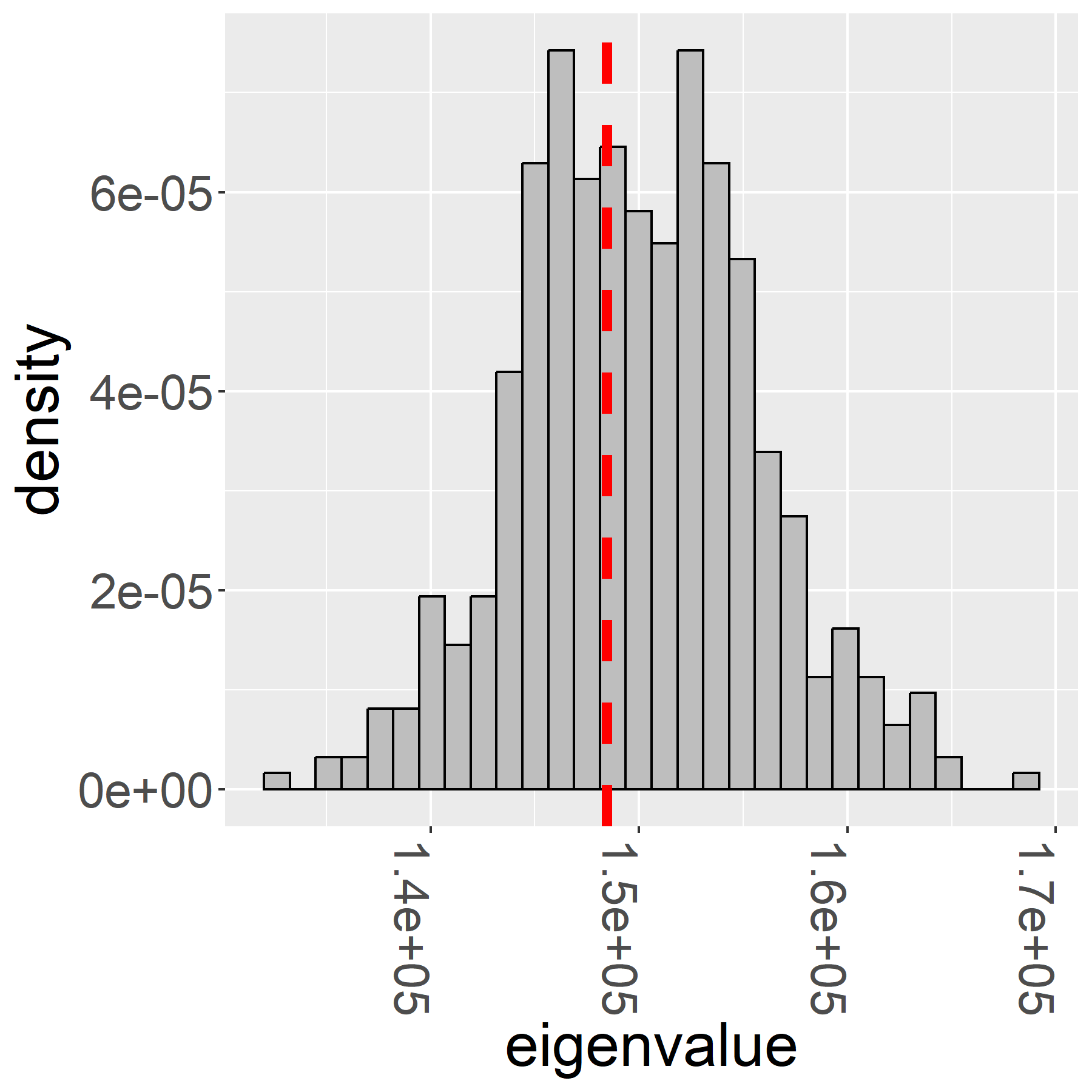} \\
(d) & (e) 
\end{tabular}
\caption{$T_{var,(y_1,\ldots,y_n)}$ based on posterior predictive samples (histogram) and observation (red lines) for (a) $k = 1$, (b) $k = 2$, ..., (e) $k = 5$.}\label{fig:var_pred_check}
    \end{center}
\end{figure}

In this section, we assess the fit of the $\small{\mbox{NeMO}}$ model to the Cebu data presented in the main paper. 

Let $f_i(\vv{t}) = \mu(\vv{t}) + (\lambda_1(\vv{t}),\ldots,\lambda_K(\vv{t}))^\top\eta_i + \sum_{j = 1}^4\int_{T}\beta_j(s,\vv{t})z_{i,j}(s)ds$ denote the function that underlies the sparse and noisy observation $y_i(\vv{t}_i)$, $f^{[iter]}_i(\vv{t})$ denote posterior Markov chain Monte Carlo samples, and let $\hat{f}_i(\vv{t})$ denote the posterior mean computed from Markov chain Monte Carlo samples. Residuals are estimated as $\hat\epsilon_i(\vv{t}_i) = y_i(\vv{t}_i) - \hat{f}_i(\vv{t}_i)$. Figure \ref{fig:residuals} shows a boxplot of estimated residuals for all subjects, $i = 1,\ldots,n$, by age. Visually, it appears that there is no mean trend across ages, and there does not appear to be substantial heteroskedasticity. 

We further assess the fit of the model using posterior predictive checks \citep{gelman2013}. Under the assumed model, posterior predictive samples are generated via $y_{i}^{pred,[iter]}(\vv{t}_i)\sim N(f^{[iter]}_i,\sigma^{[iter],2}I_{m_i})$. Quantities of interest, $T$, based on the posterior predictive samples are compared to the same quantities computed using the actual observations. We consider the following quantities of interest used to study the fit of our model,
\begin{eqnarray}
T_{mag,y_i} & = & \|y_i\|_2,\ i = 1,\ldots,n \nonumber \\
T_{mean,(y_1,\ldots,y_n)}(t) & = & \frac{1}{n}\sum_{i = 1}^n y_i(t),\ t = 0,\frac{2}{24},\ldots,2\  \nonumber \\
T_{var,(y_1,\ldots,y_n)}(k) & = & \text{eig}_k\bigg(\text{cov}(y(s),y(t)) \bigg),\ k = 1\ldots,K, \nonumber
\end{eqnarray}
where $\text{eig}_k$ denotes the $k$\textsuperscript{th} largest eigenvalue of the sample covariance matrix based on $y_1,\ldots,y_n$.
The quantity $T_{mag}$ can be used to assess how well the model captures the magnitude of individual subjects, $T_{mean}$ can be used to assess how well the model captures the mean trend across subjects, and $T_{var}$ can be used to assess how well the modes of variability are captured across subjects. 

Figure \ref{fig:norm_pred_check} panels (a)-(f) show histograms of $T_{mag,y_i}$ based on posterior predictive draws for different subjects, where the red line represents $T_{mag,y_i}$ based on actual observed values. Panels (a)-(c) represent random samples of subjects in the population, while panels (d)-(f) represent subjects having posterior predictive mean of $T_{mag,y_i}$ furthest from the observed value. Even in the worst case scenario, the predictions based on the model cover the observed value. Panel (g) of this figure shows a histogram of the absolute difference between the mean of $T_{mag,y_i}$ based on  posterior predictive samples and the observed value. The absolute difference is mostly concentrated near zero. Figure \ref{fig:mean_pred_check} panels (a)-(f) show histograms of $T_{mean,(y_1,\ldots,y_n)}$ based on posterior predictive draws at different times $t = 0,\ldots,2$, where the red line represents $T_{mean,(y_1,\ldots,y_n)}$ based on actual observed values. These values are computed cross-sectionally by removing missing values. At each time, it appears that the observed value of $T_{mean,(y_1,\ldots,y_n)}$ is covered by the values computed using posterior predictive samples. Figure \ref{fig:var_pred_check} panels (a)-(f) show histograms of $T_{var,(y_1,\ldots,y_n)}$ based on posterior predictive draws for different eigenvalues $k = 1,\ldots,5$, where the red line represents $T_{var,(y_1,\ldots,y_n)}$ based on actual observed values. The sample covariance is computed by using pairwise complete observations. For each eigenvalue the observed value of $T_{var,(y_1,\ldots,y_n)}$ is covered by the values computed using posterior predictive samples. Based on these posterior predictive checks, it is apparent that the model is adequate in describing these individual and population level features of the observations. 

In the main paper, cough, season, and strata are noted as the covariates with the smallest effect on the response. We consider reducing the model presented in the main paper by omitting these covariates. In order to assess fit of reduced models, compared to a full model with all covariates, we use widely applicable information criterion. Results are presented in Table \ref{table:full_v_reduced}. For each case, the full model is preferred. 

\begin{table}
\caption{Widely applicable information criterion (WAIC) for full versus reduced models. In all cases the full model is preferred.\label{table:full_v_reduced}}
\centering
\begin{tabular}{|l||l|l|l|}
\hline
covariate omitted & WAIC     & \begin{tabular}[c]{@{}l@{}}difference\\ (full - reduced)\end{tabular} & \begin{tabular}[c]{@{}l@{}}standard error\\ (difference in WAIC)\end{tabular} \\ \hline \hline
cough            & 69727.66 & -18.40                                                                & 26.41                                                                         \\ \hline
season           & 69719.45 & -10.19                                                                & 33.76                                                                         \\ \hline
strata           & 69718.37 & -9.11                                                                 & 26.41                                                                         \\ \hline
\end{tabular}
\end{table}

\subsection{Markov chain Monte Carlo Trace Plots}

The inferential results of the main paper are based on Markov chain Monte Carlo samples. In this section we plot thinned Markov chain Monte Carlo samples to diagnose convergence. The diagnostic plots for all parameters do not indicate that there is an issue with Markov chain Monte Carlo convergence. Figure \ref{fig:trace_fpca} displays trace plots related to the inferential results of Figures 3 and 4 of the main paper. Figure \ref{fig:trace_fpca_reg} displays trace plots related to the inferential results of Figure 5 of the main paper. Figure \ref{fig:trace_bf} displays trace plots related to the inferential results of Figures 6 and 7 of the main paper. Figure \ref{fig:trace_ill} displays trace plots related to the inferential results of Figure 8 of the main paper

\begin{figure}[t]
\begin{center}
 \begin{tabular}{ccc}
\includegraphics[width = 1.25 in]{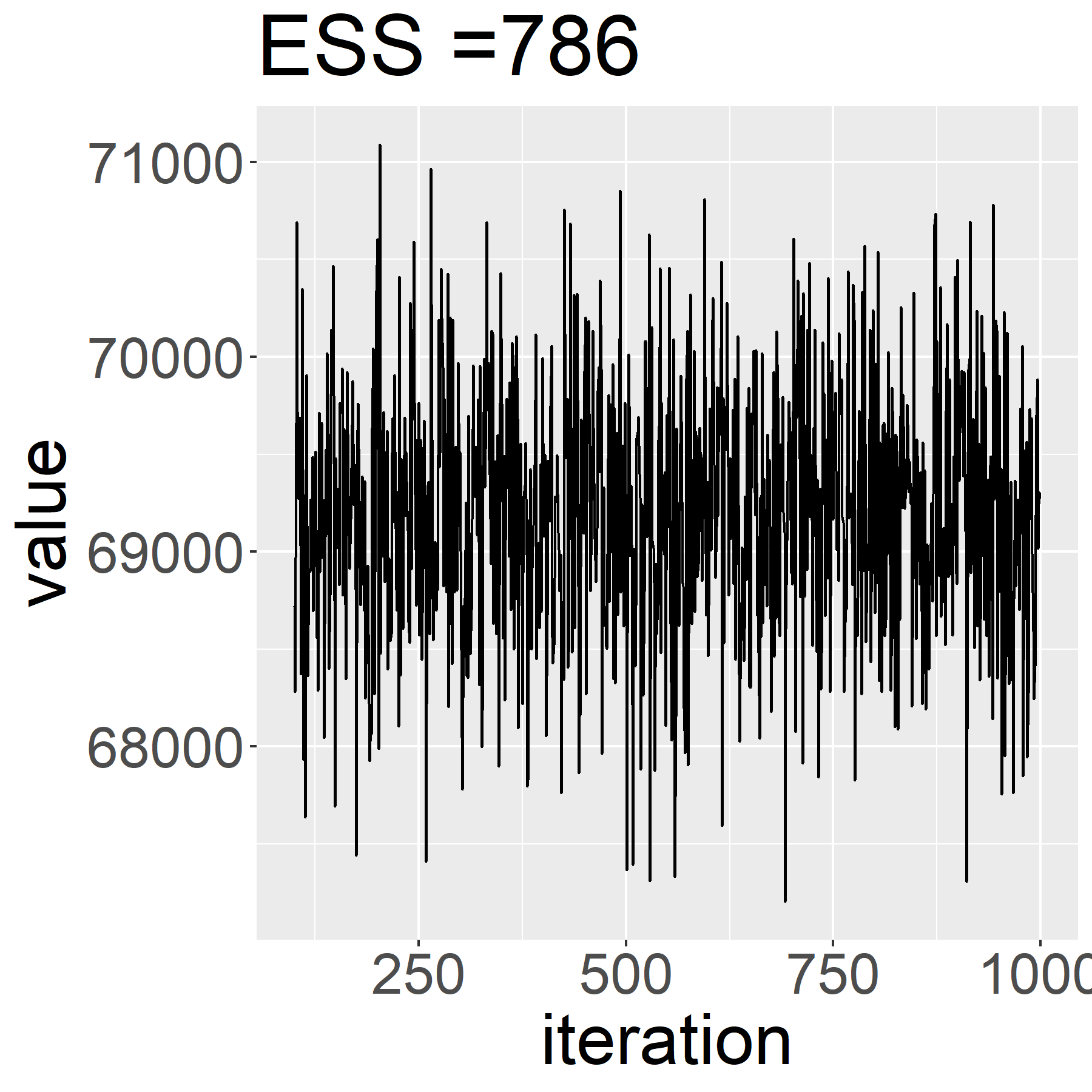} & \includegraphics[width = 1.25 in]{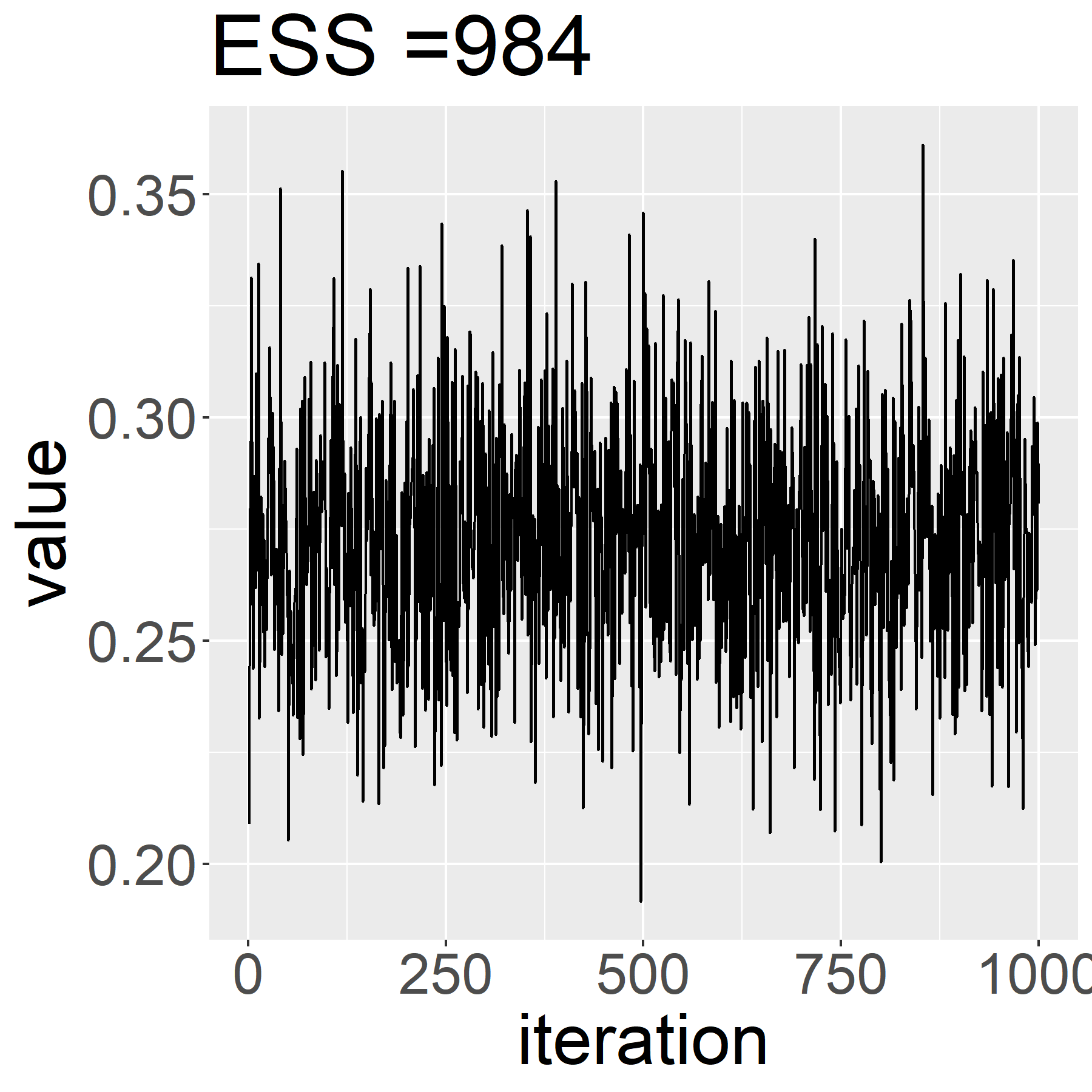} & \includegraphics[width = 1.25 in]{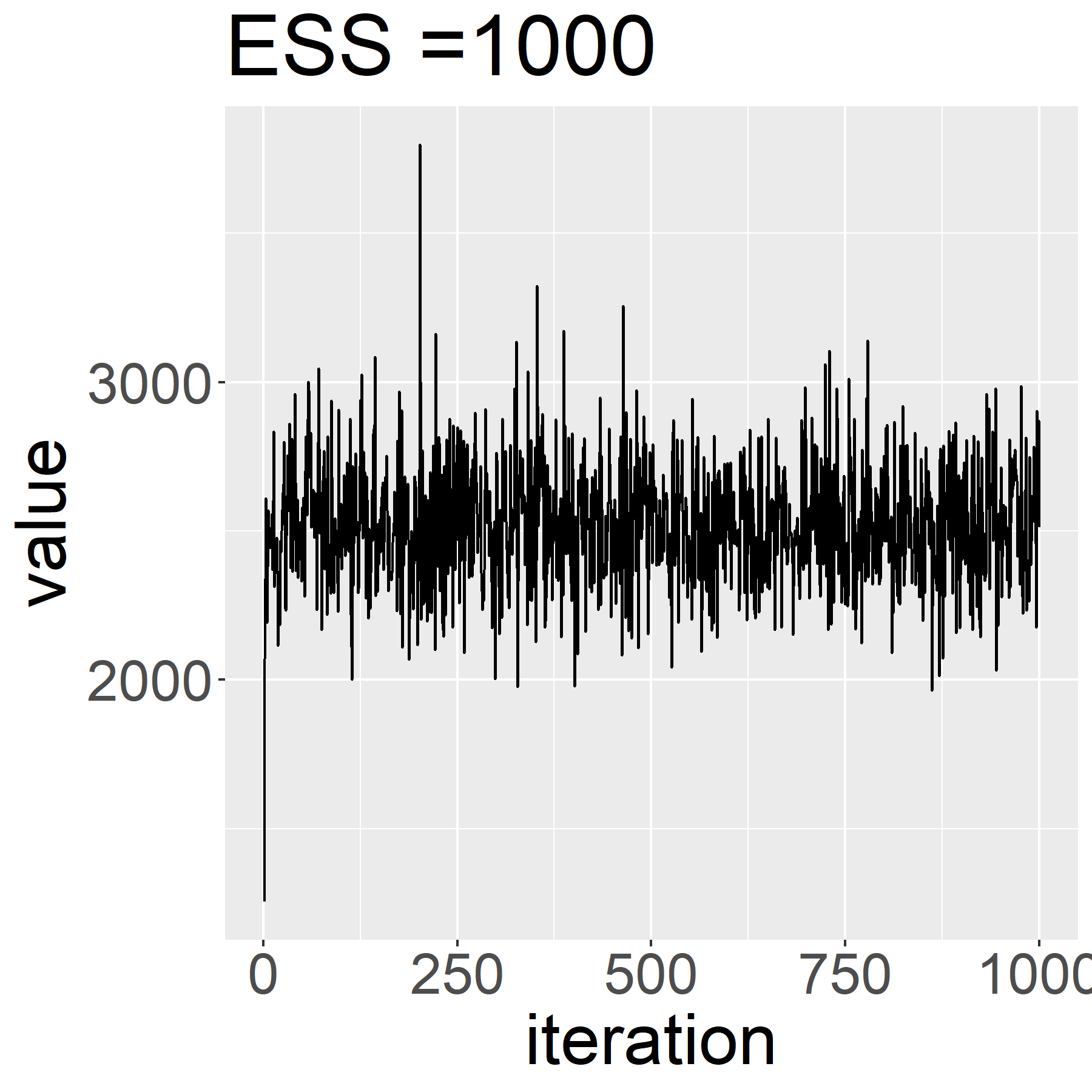}  \\
$\sigma^2$ & $l_\mu^2$ & $\tau^2_\mu$ \\
\end{tabular}
 \begin{tabular}{ccccc}
\includegraphics[width = 1.0 in]{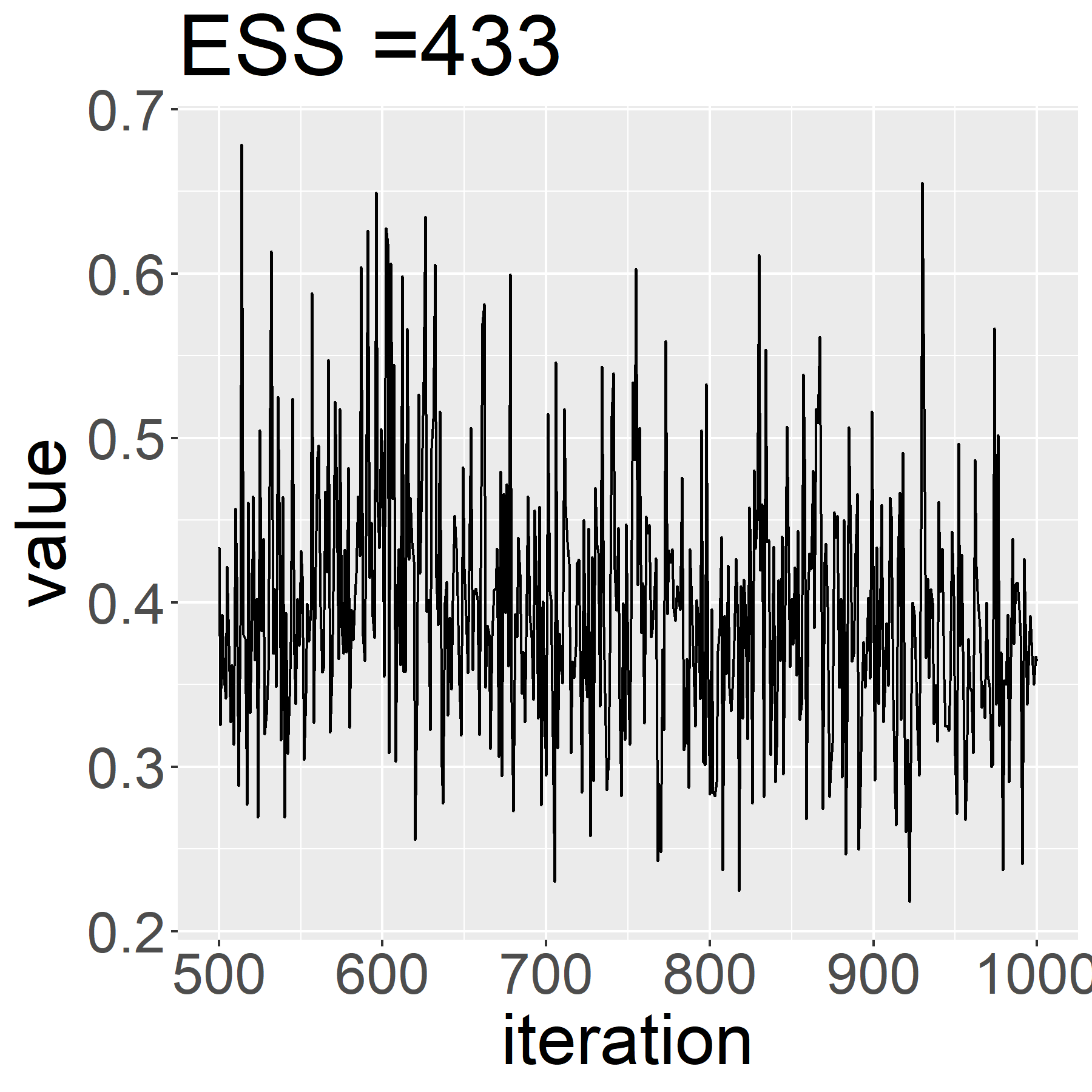} & \includegraphics[width = 1.0 in]{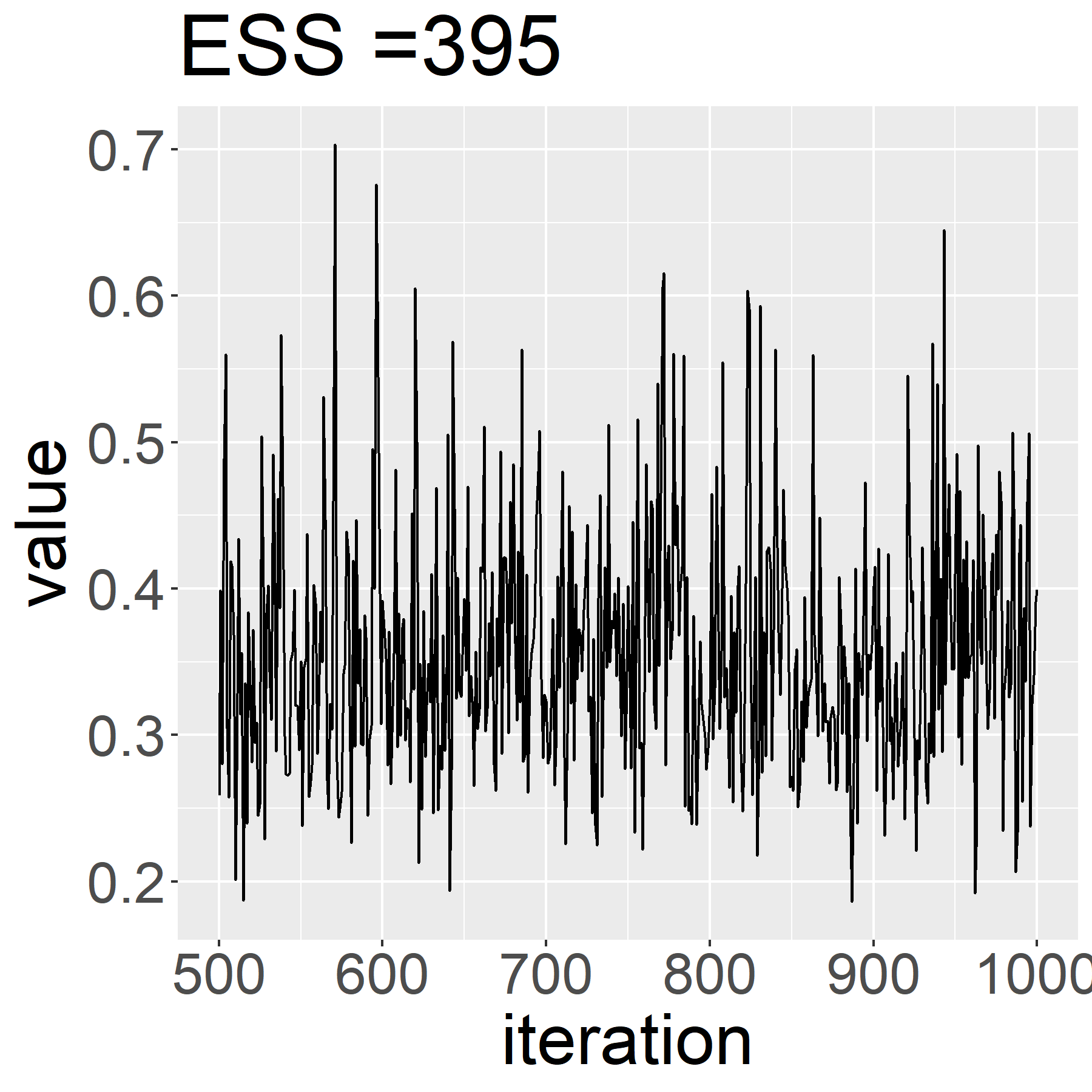} & \includegraphics[width = 1.0 in]{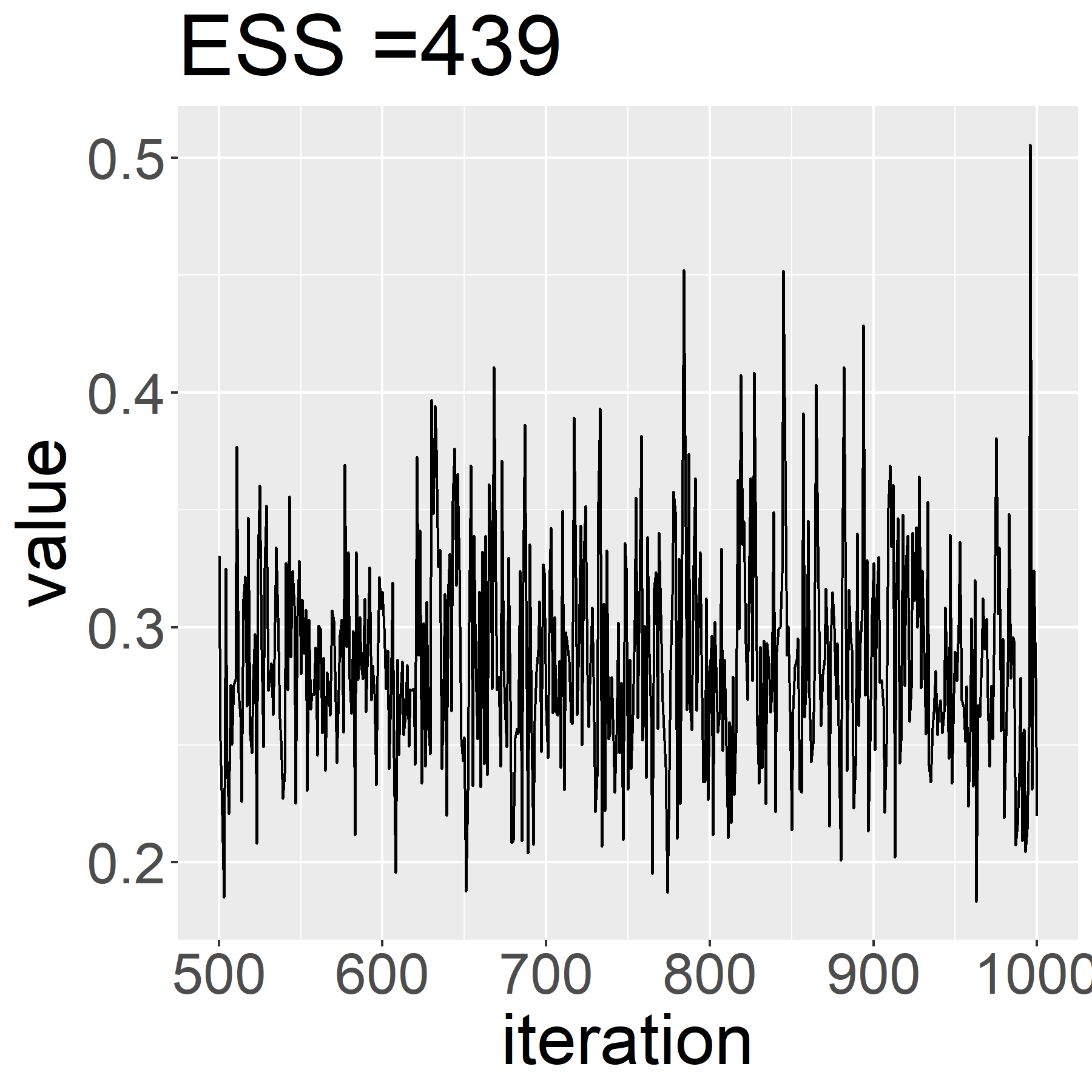}  & \includegraphics[width = 1.0 in]{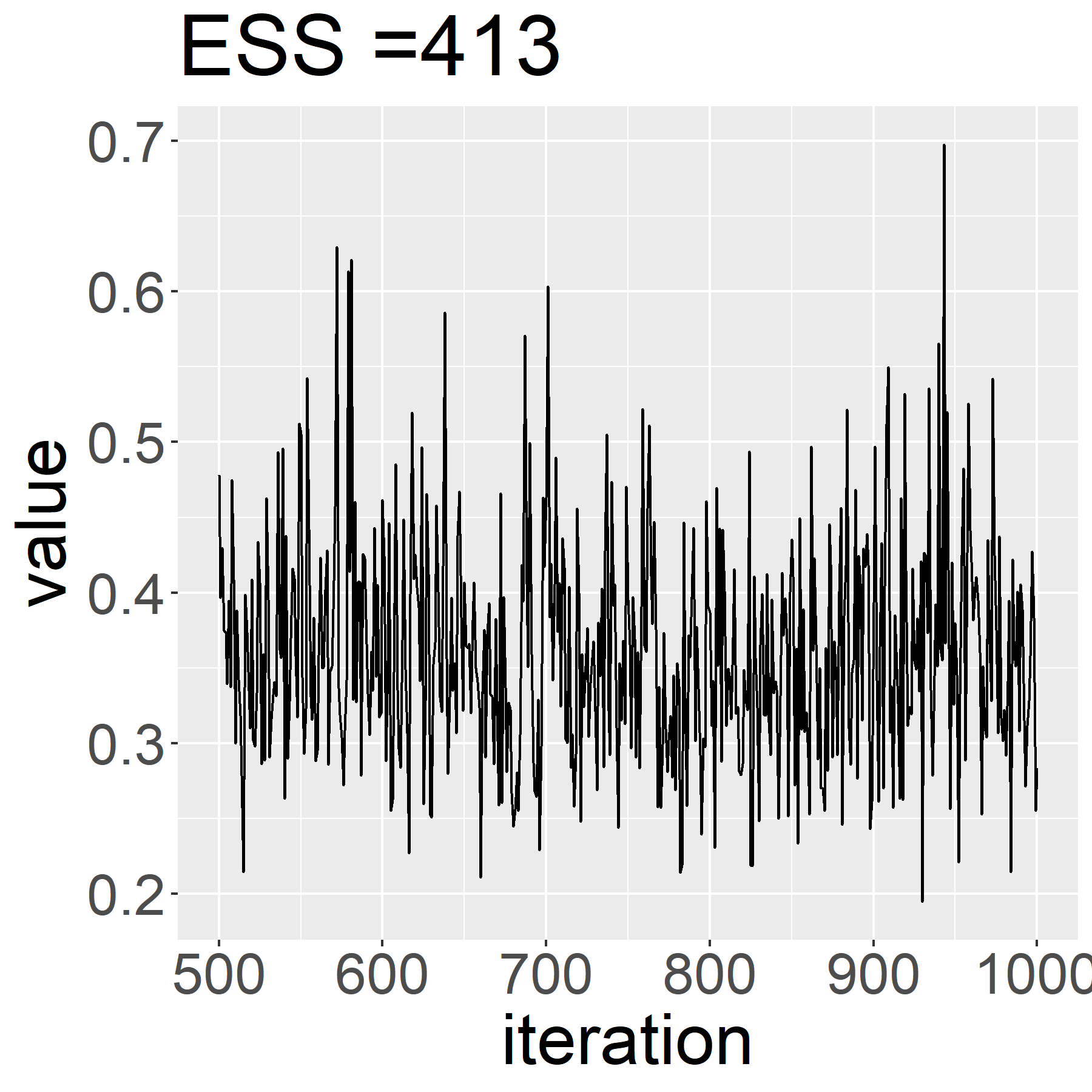}  & \includegraphics[width = 1.0 in]{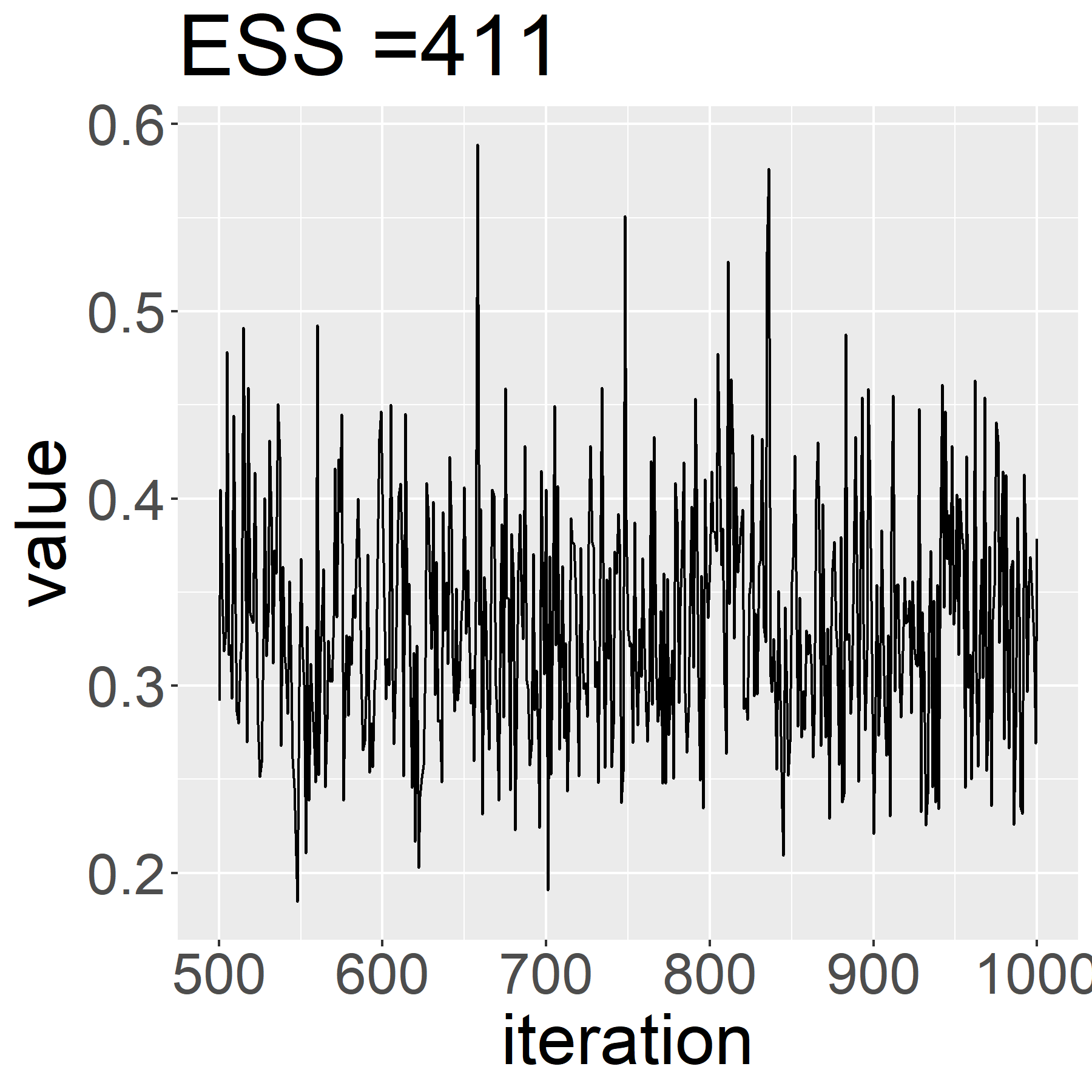} \\
$l_1^2 $ & $l_2^2$& $l_3^2$ & $l_4^2$ &  $l_5^2$ \\
\includegraphics[width = 1.0 in]{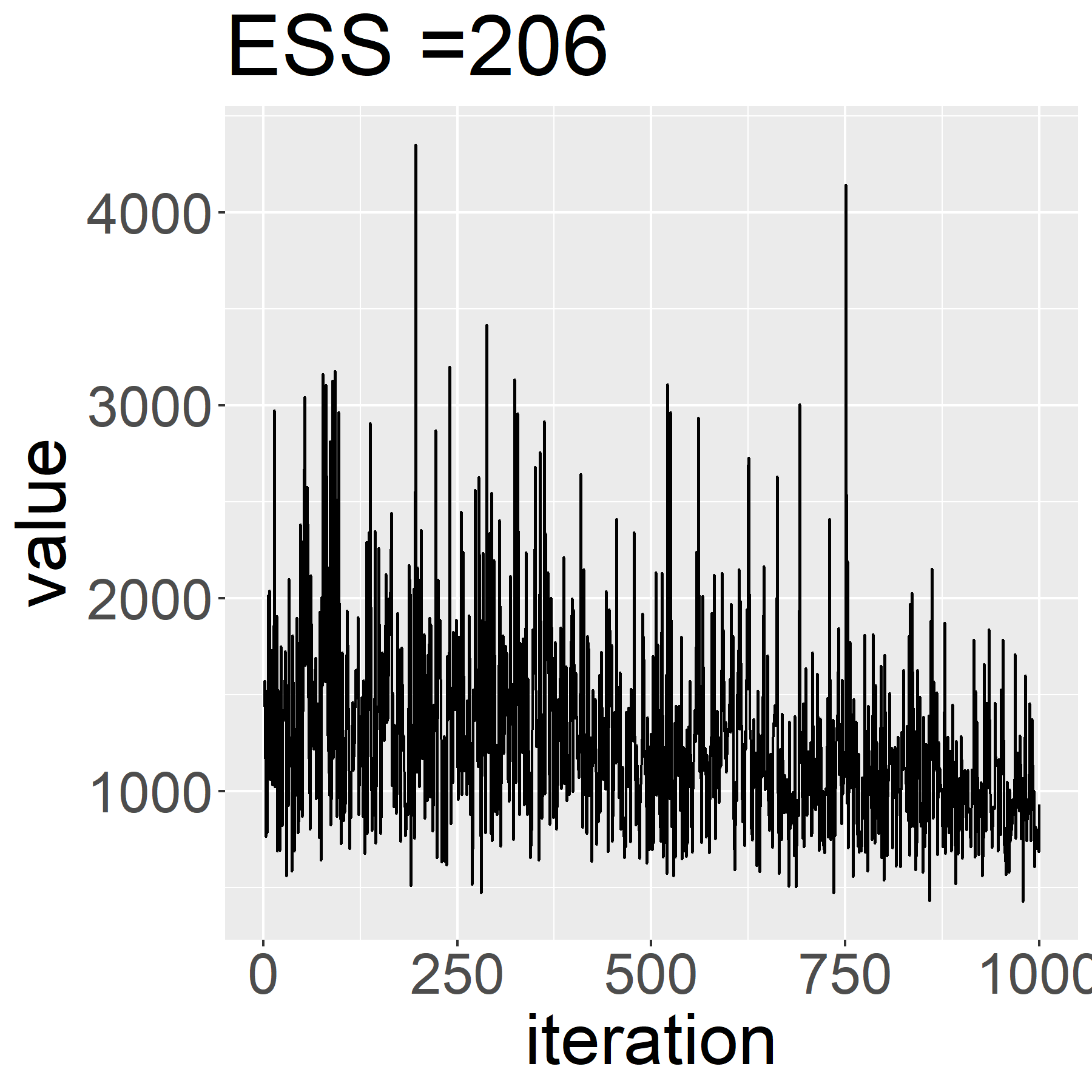} & \includegraphics[width = 1.0 in]{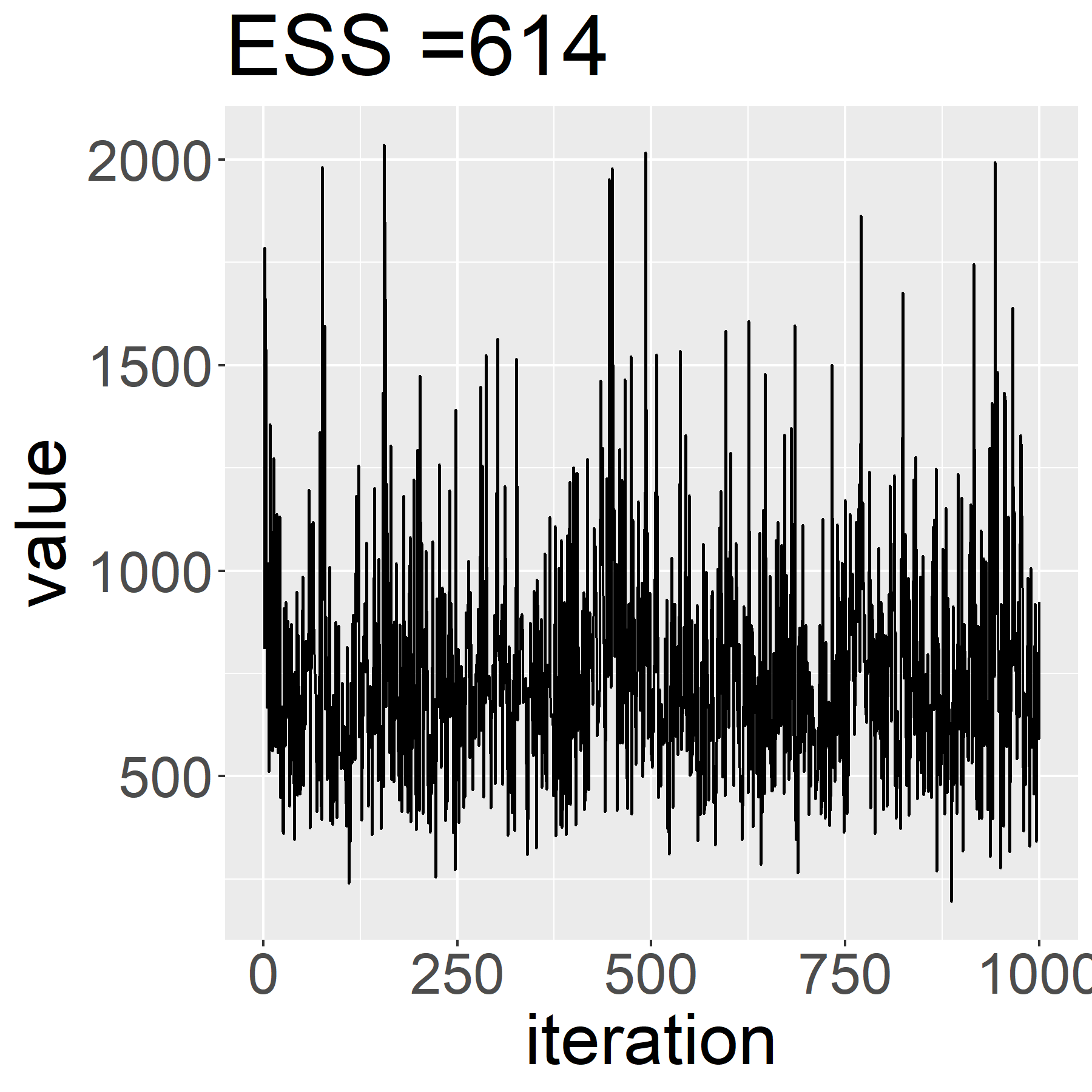} & \includegraphics[width = 1.0 in]{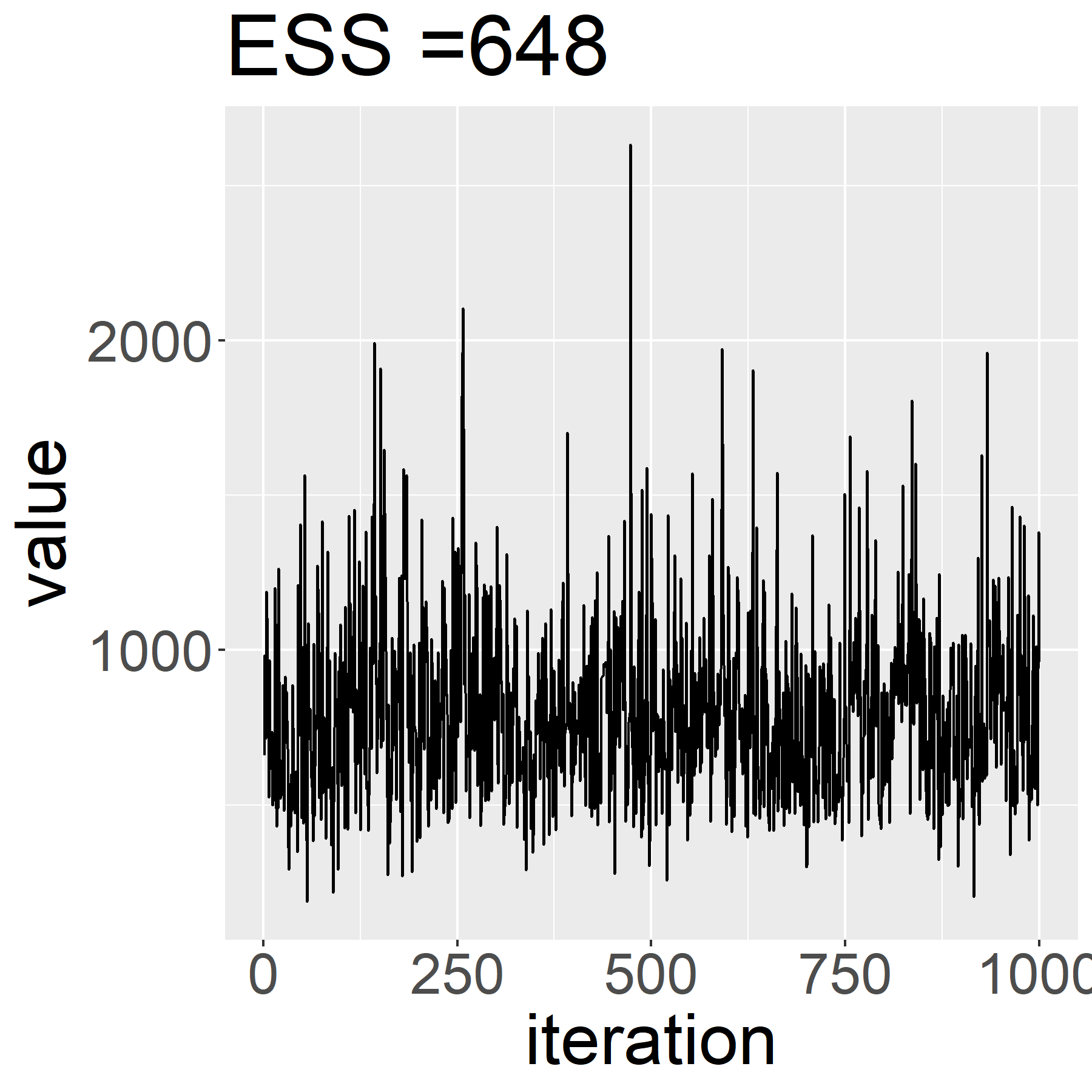}  & \includegraphics[width = 1.0 in]{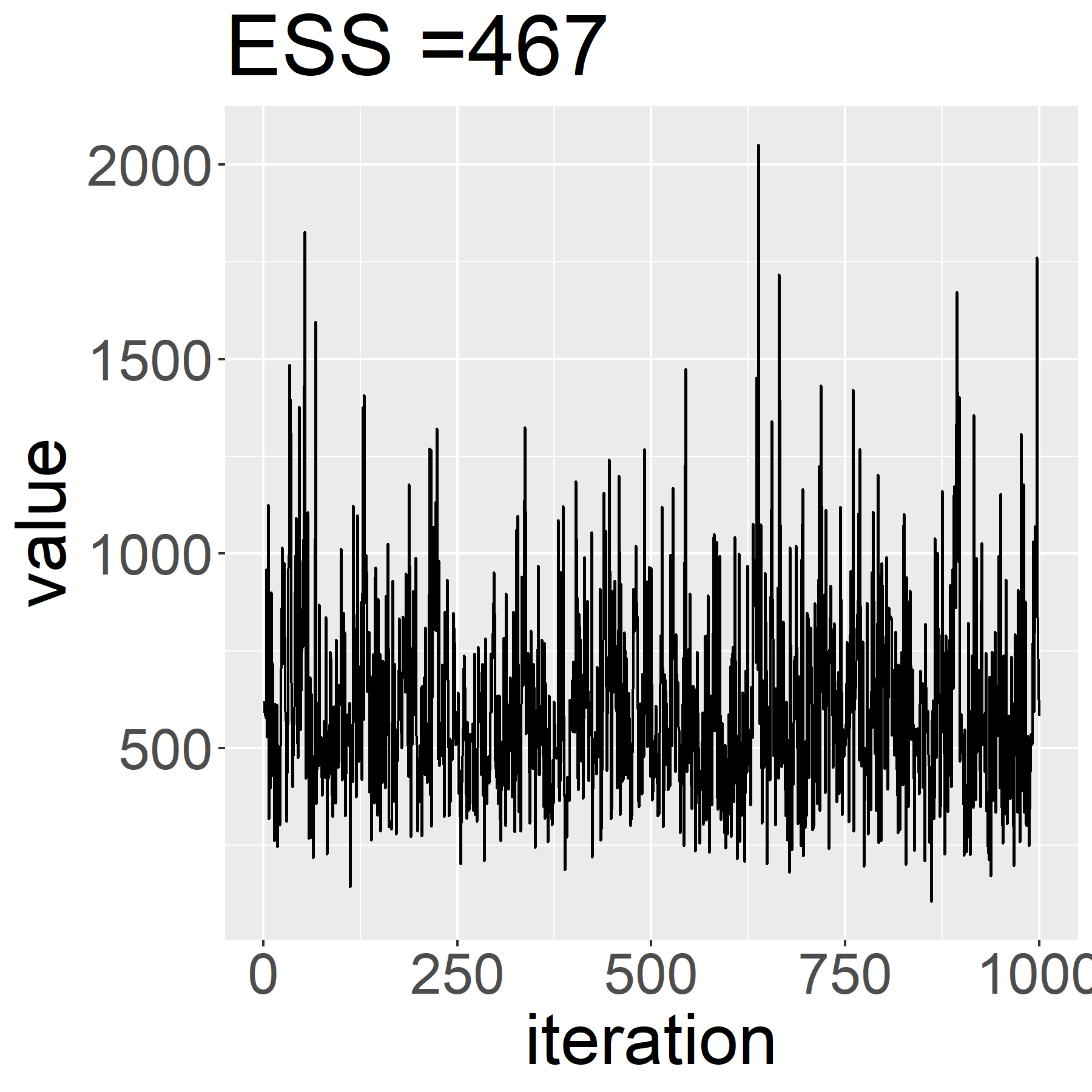}  & \includegraphics[width = 1.0 in]{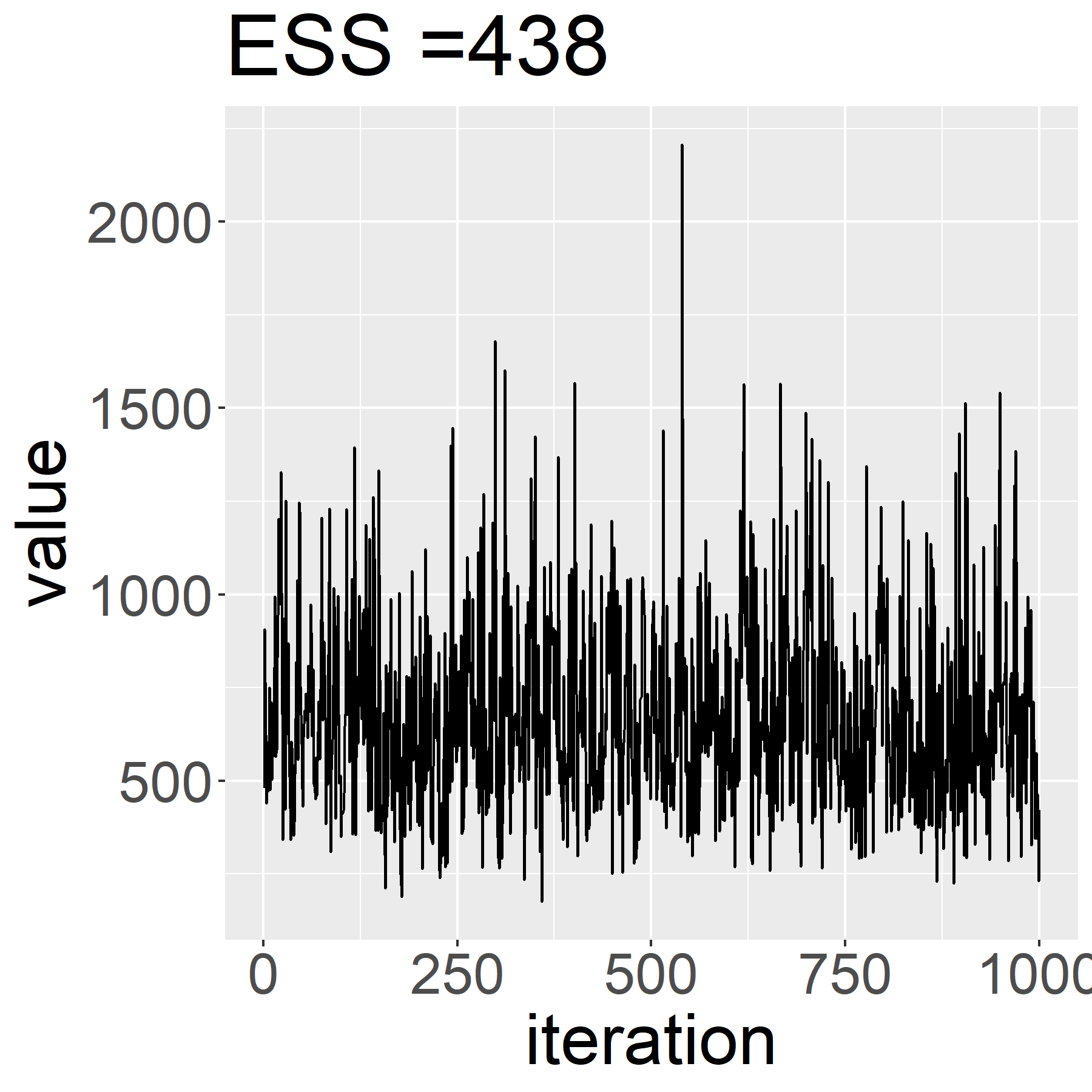} \\
$\tau_1^2 $ & $\tau_2^2$& $\tau_3^2$ & $\tau_4^2$ &  $\tau_5^2$ \\
\includegraphics[width = 1.0 in]{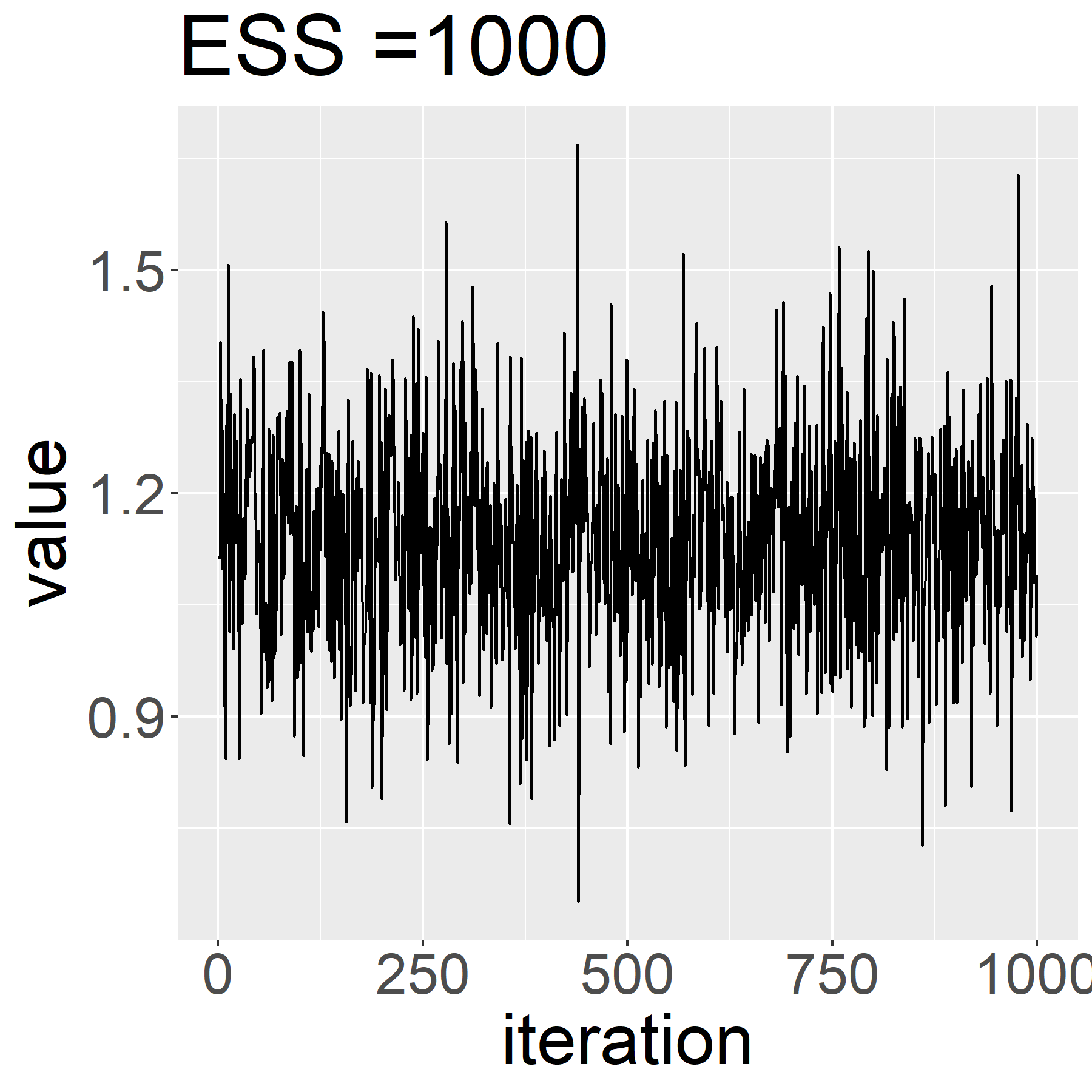} & \includegraphics[width = 1.0 in]{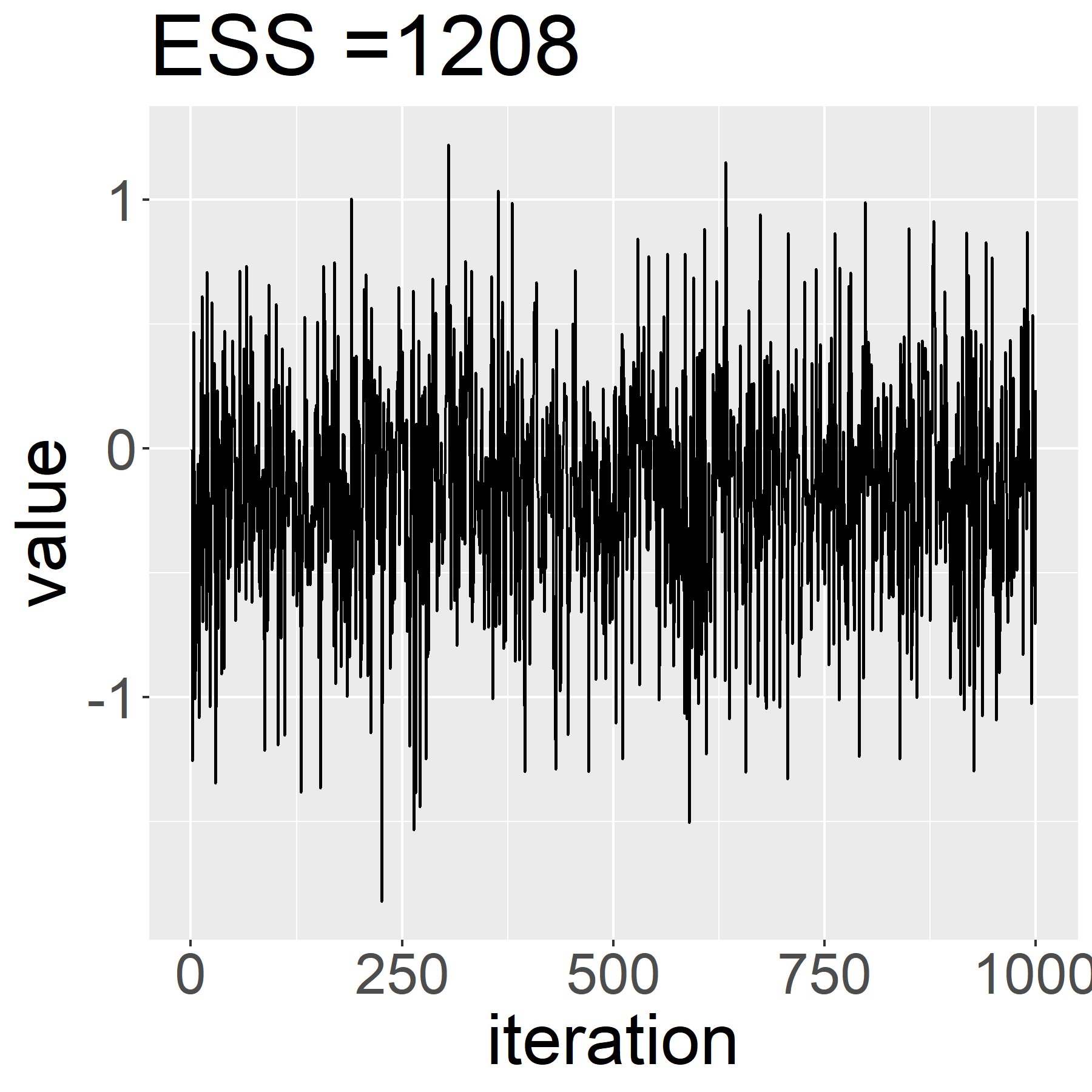} & \includegraphics[width = 1.0 in]{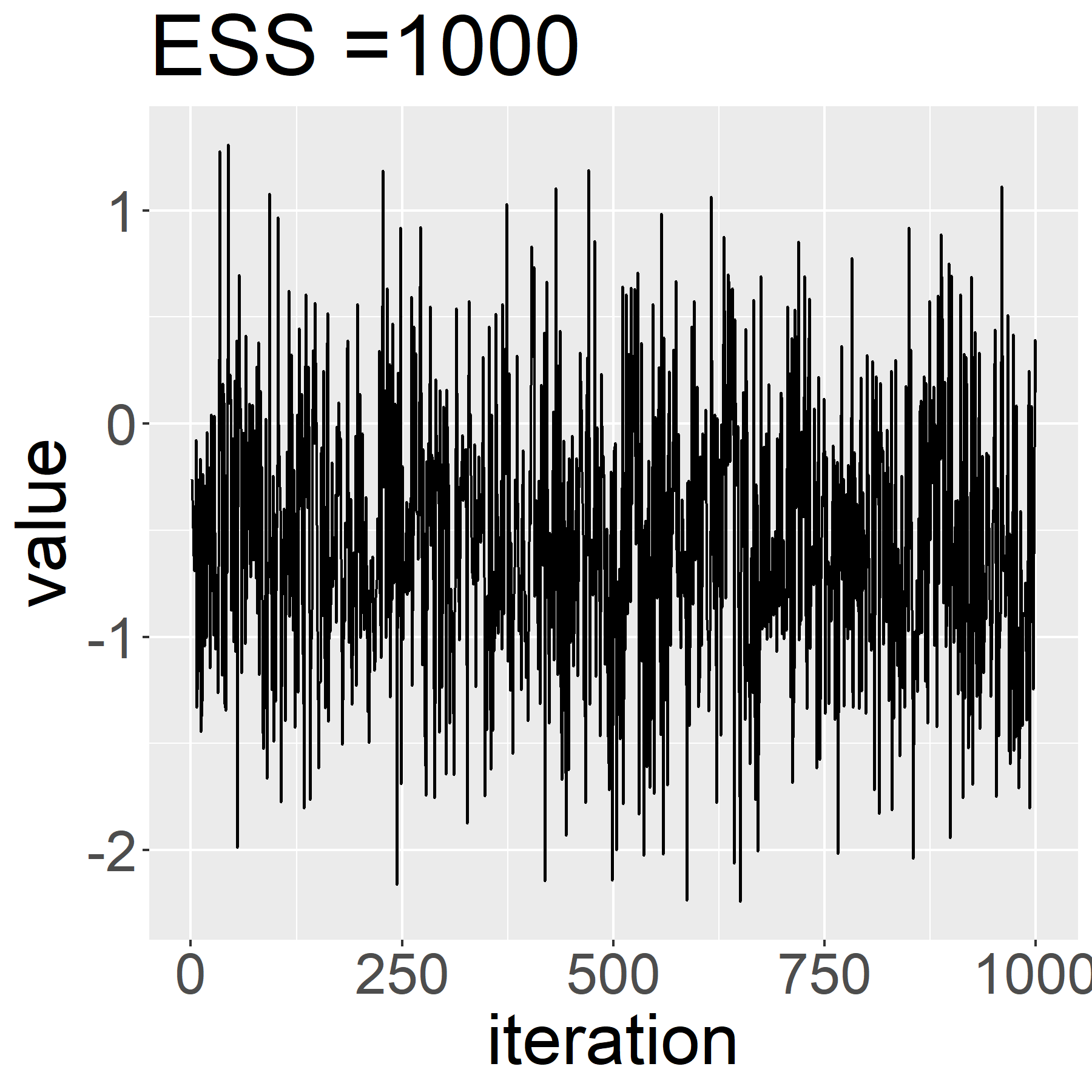}  & \includegraphics[width = 1.0 in]{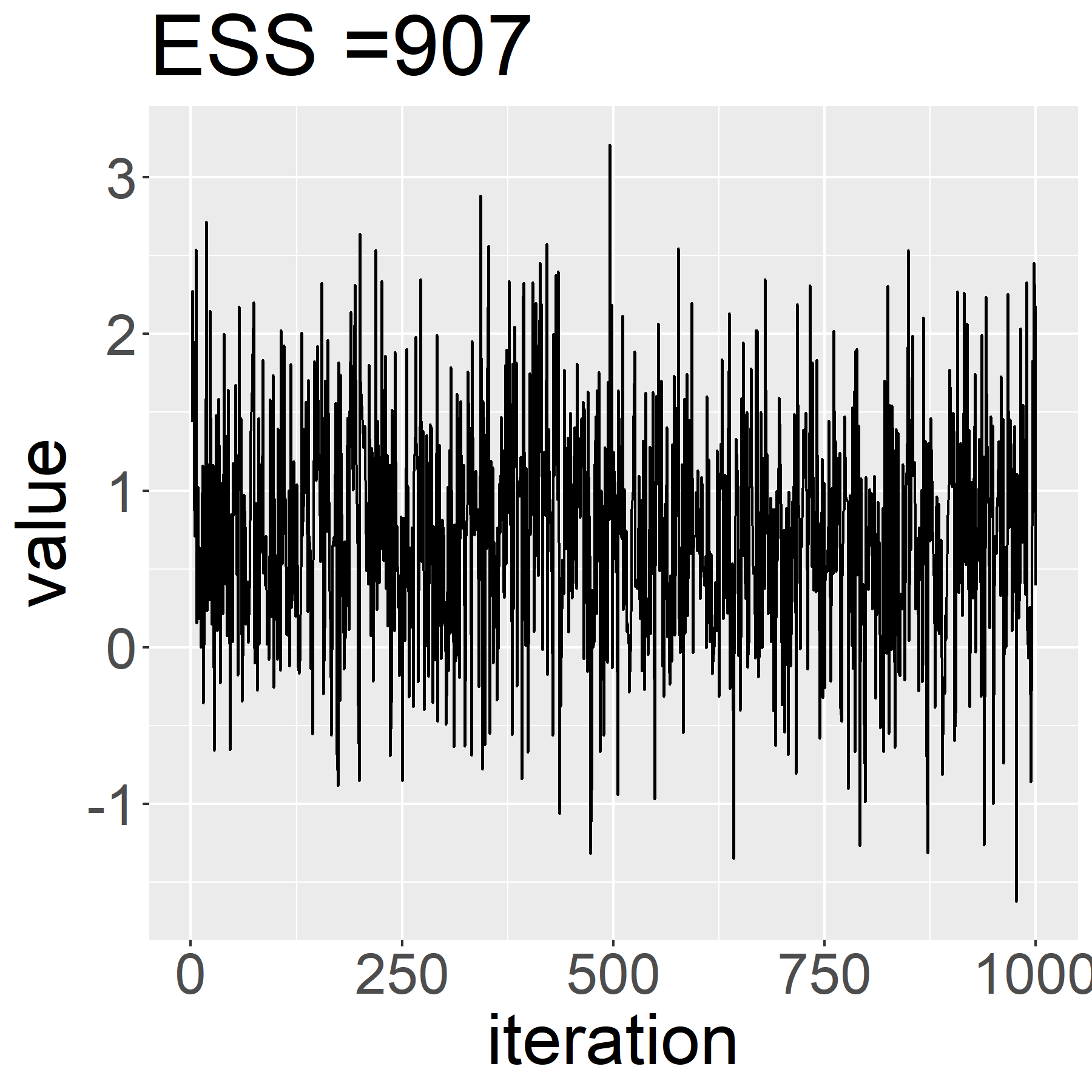}  & \includegraphics[width = 1.0 in]{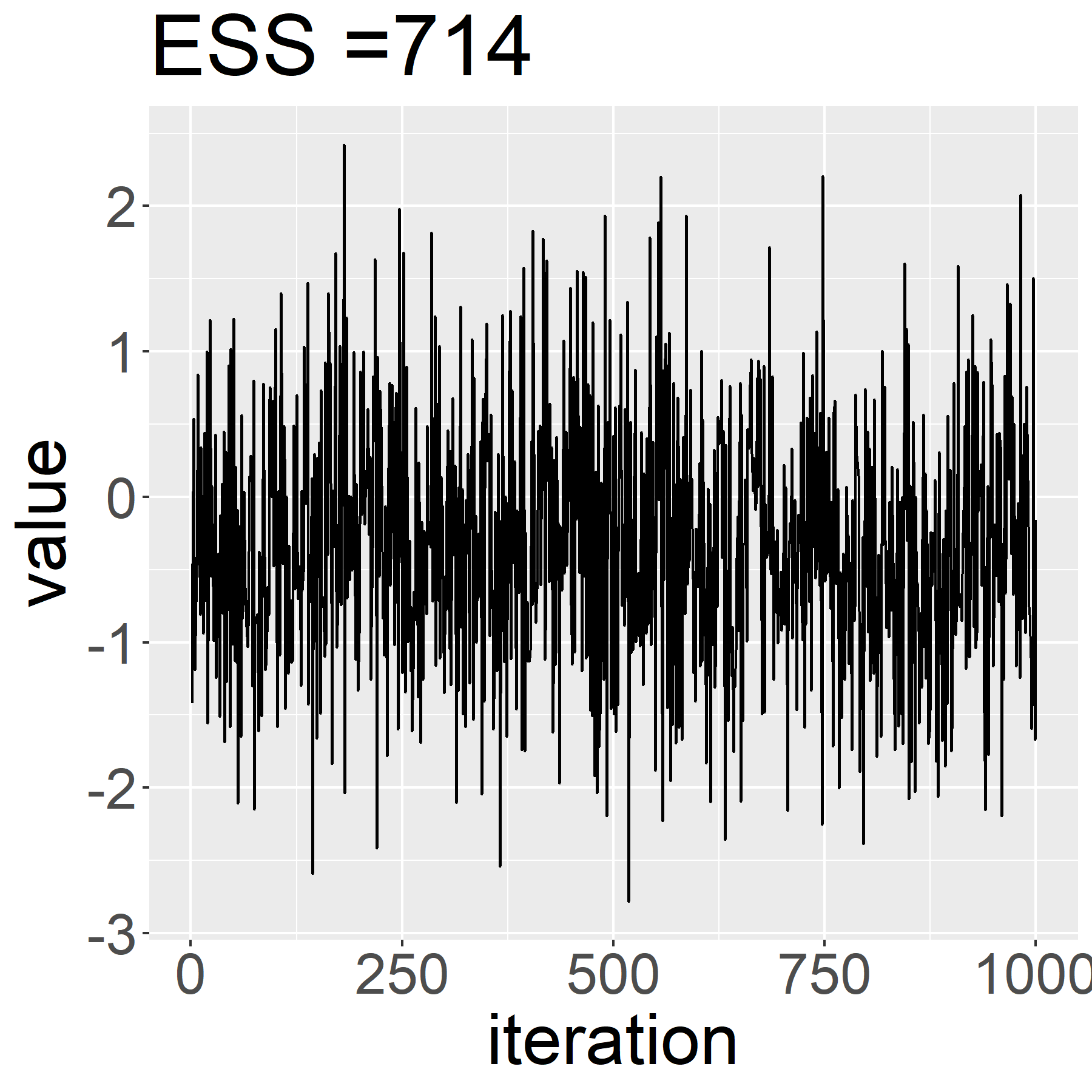} \\
$\xi_{421,1}$ & $\xi_{421,2}$ & $\xi_{421,3}$ & $\xi_{421,4}$ & $\xi_{421,5}$ \\
\end{tabular}
\caption{Trace plots related to the functional factor analysis results presented Figures 3 and 4 in the main paper.}\label{fig:trace_fpca} 
    \end{center}
\end{figure}

\begin{figure}[t]
\begin{center}
 \begin{tabular}{cccc}
\includegraphics[width = 1.0 in]{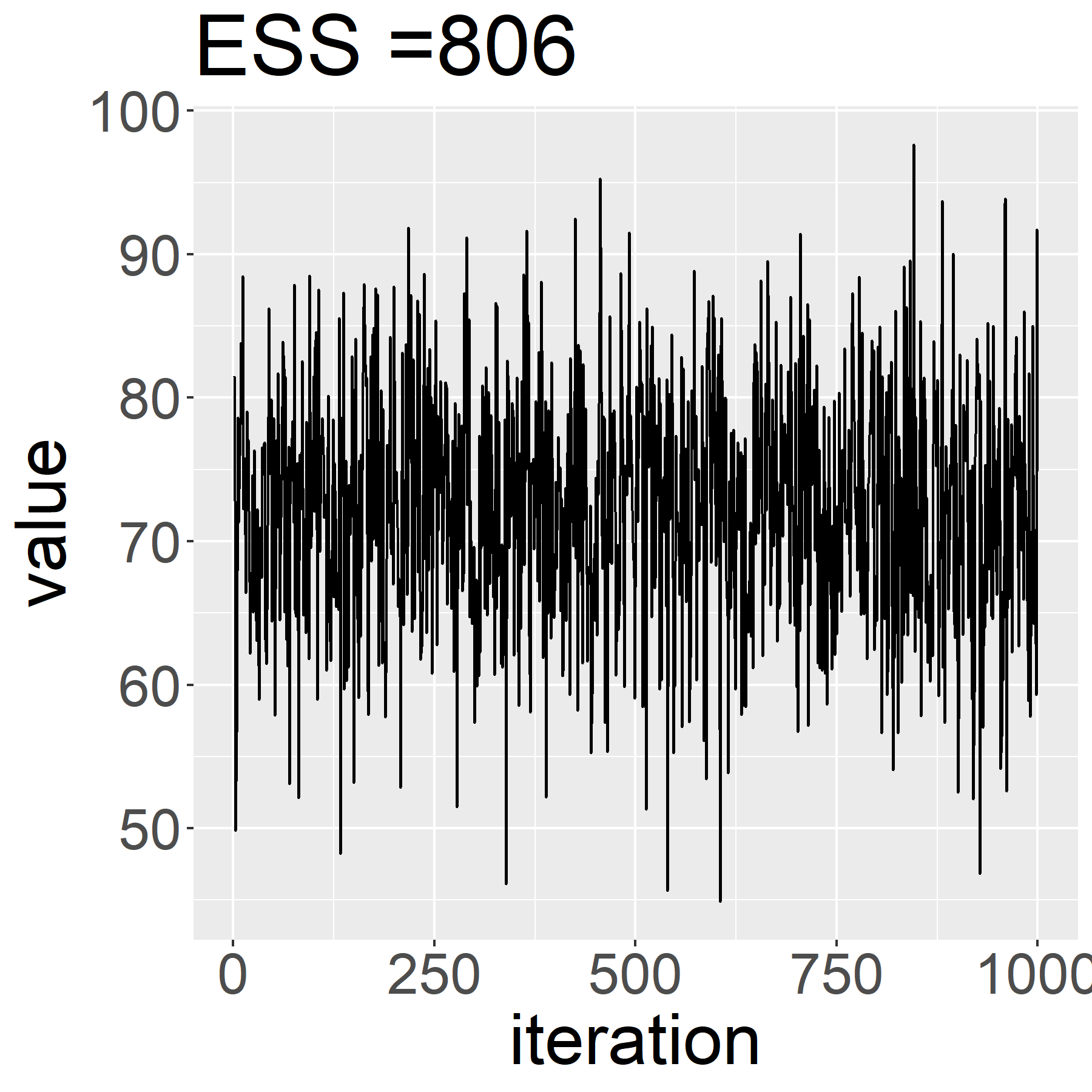} & \includegraphics[width = 1.0 in]{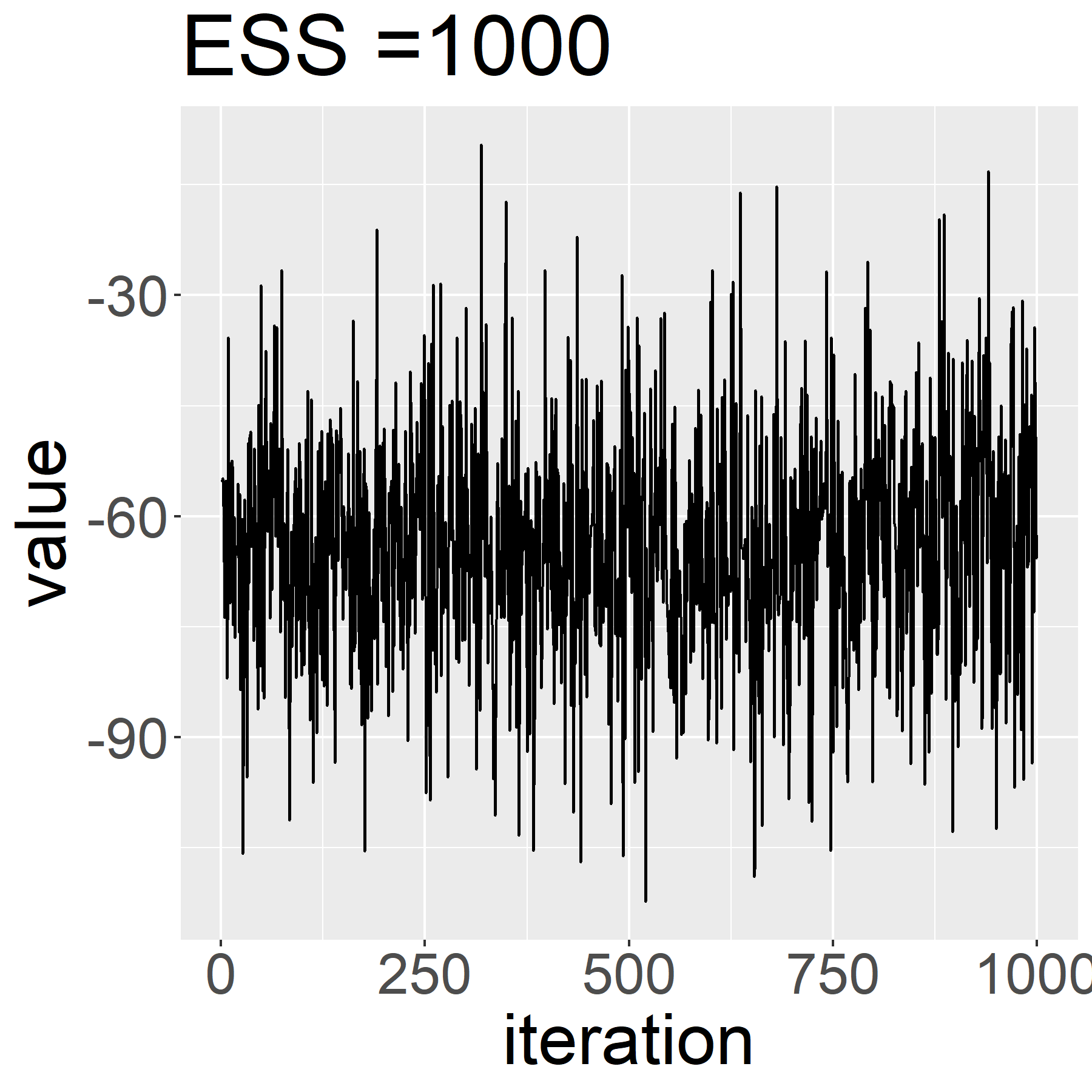} & \includegraphics[width = 1.0 in]{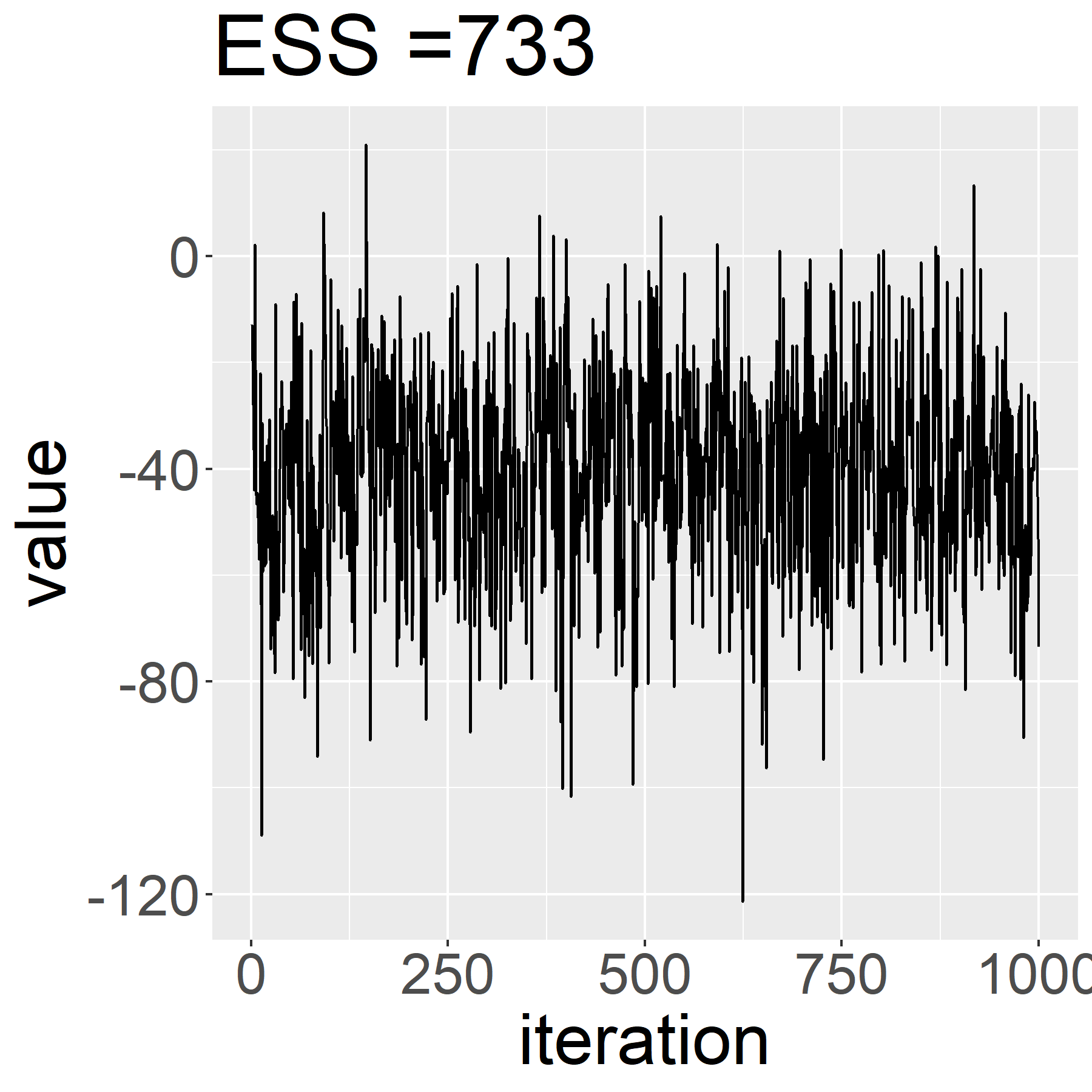}  & \includegraphics[width = 1.0 in]{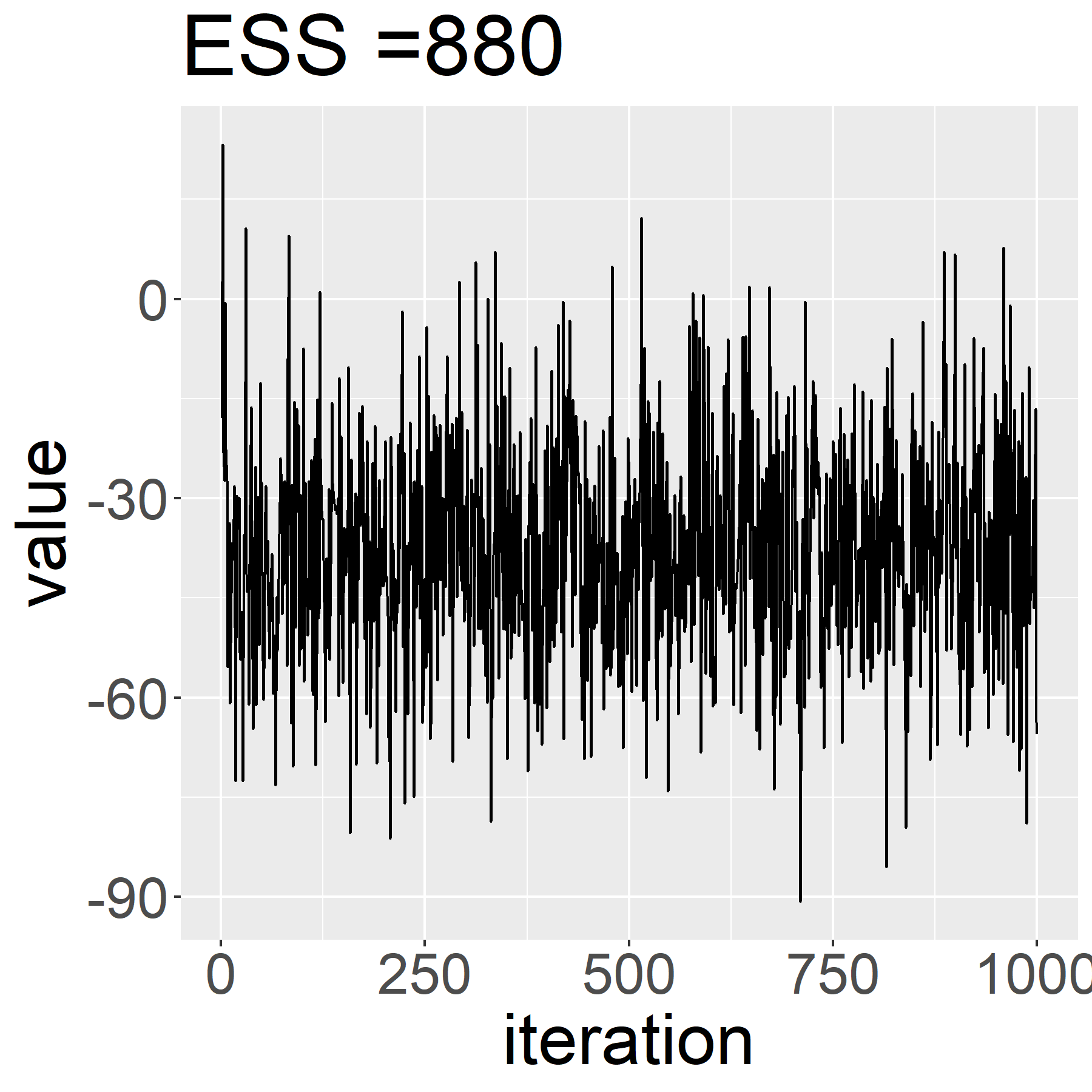} \\
\end{tabular}

$(\lambda_1(t_1),\ldots,\lambda_5(t_1))^\top \Theta_{q}$ for $q = 1,\ldots,4$ \\

 \begin{tabular}{cccc}
\includegraphics[width = 1.0 in]{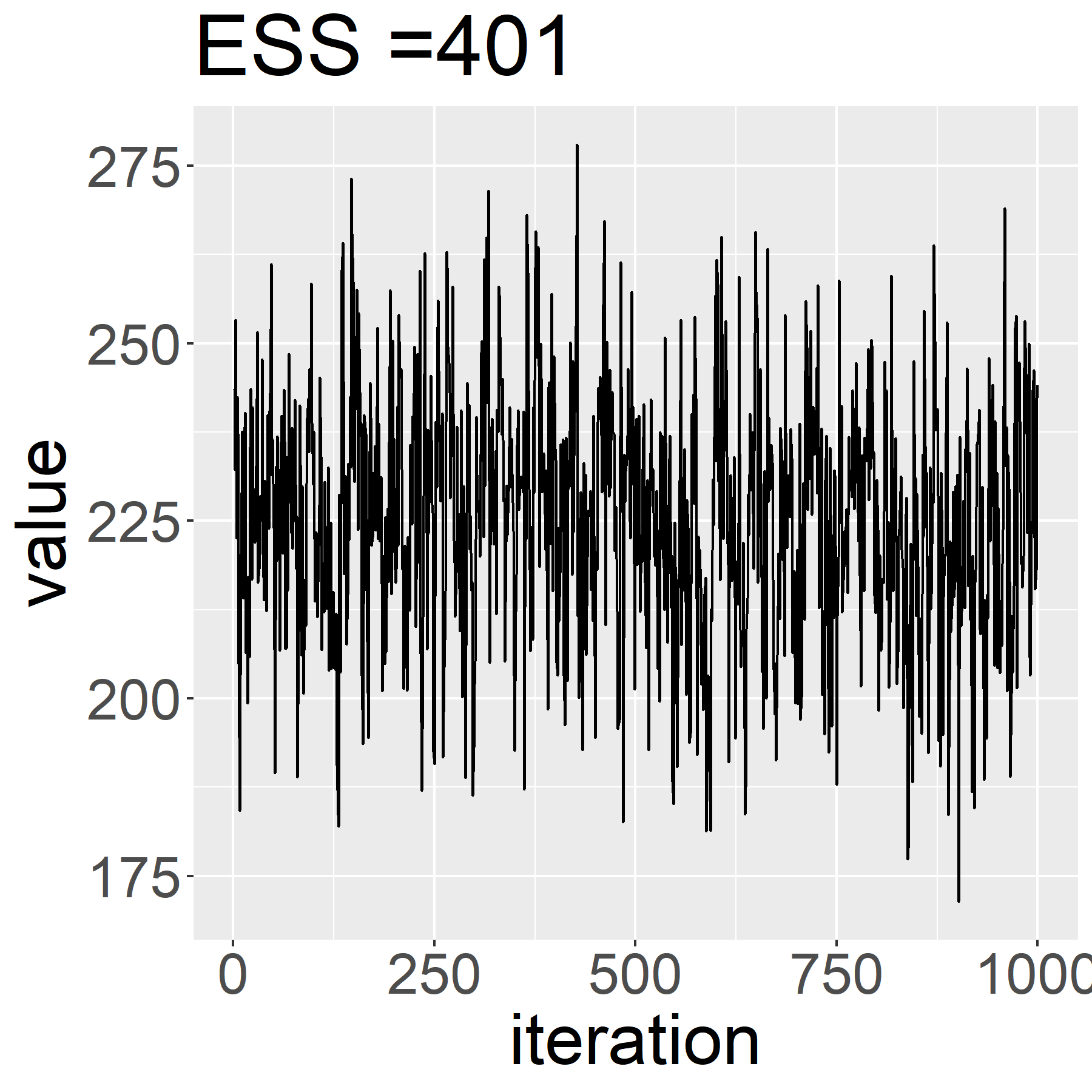} & \includegraphics[width = 1.0 in]{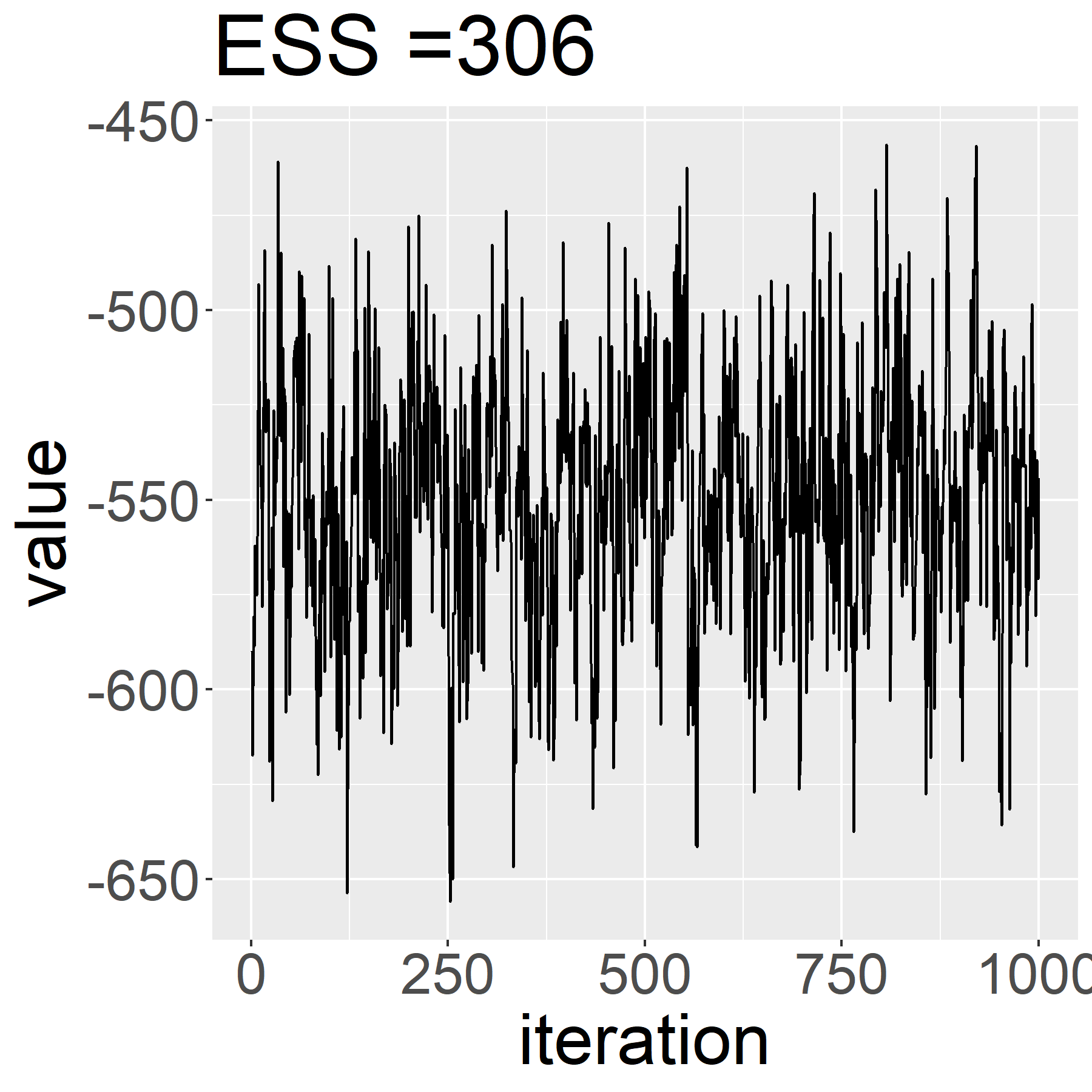} & \includegraphics[width = 1.0 in]{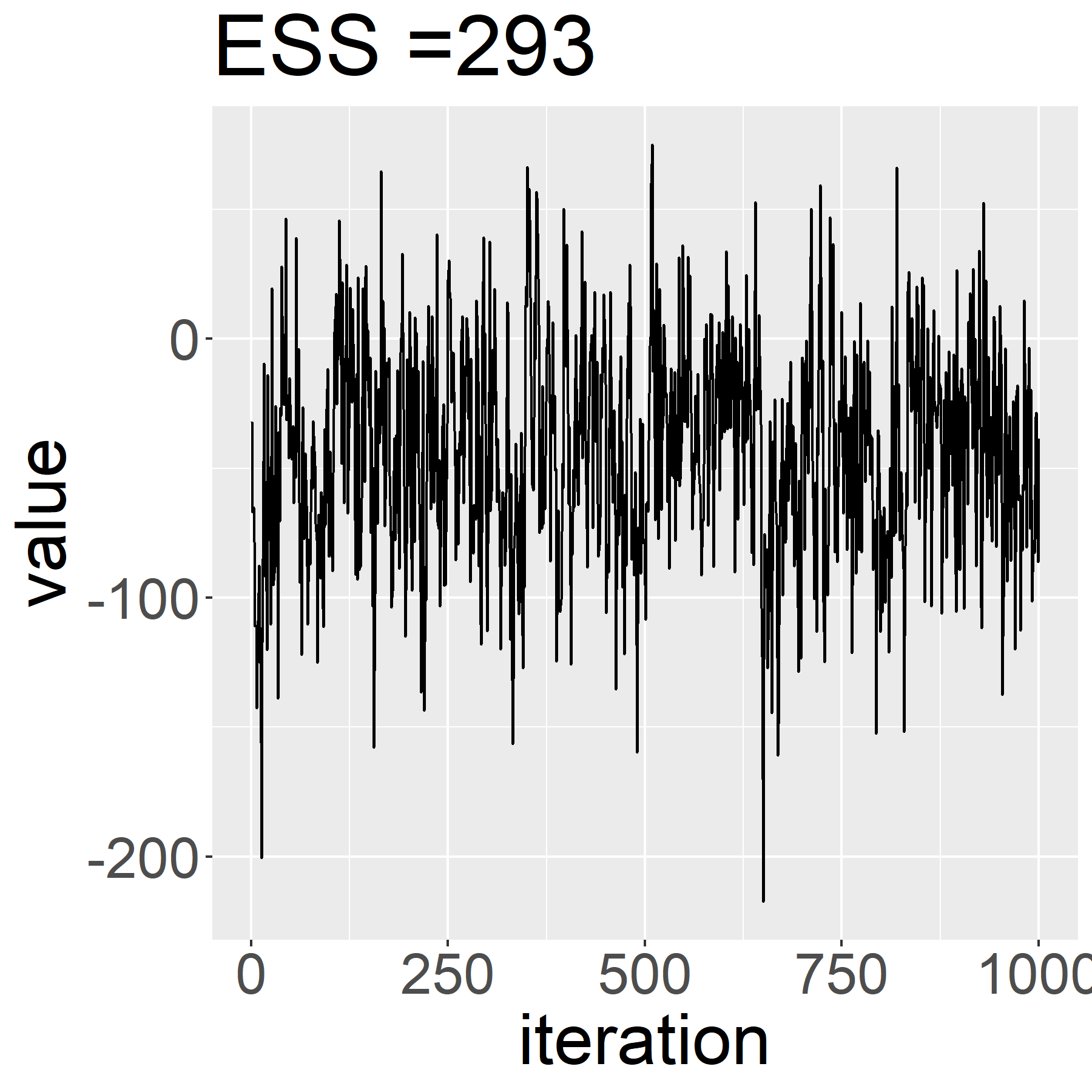}  & \includegraphics[width = 1.0 in]{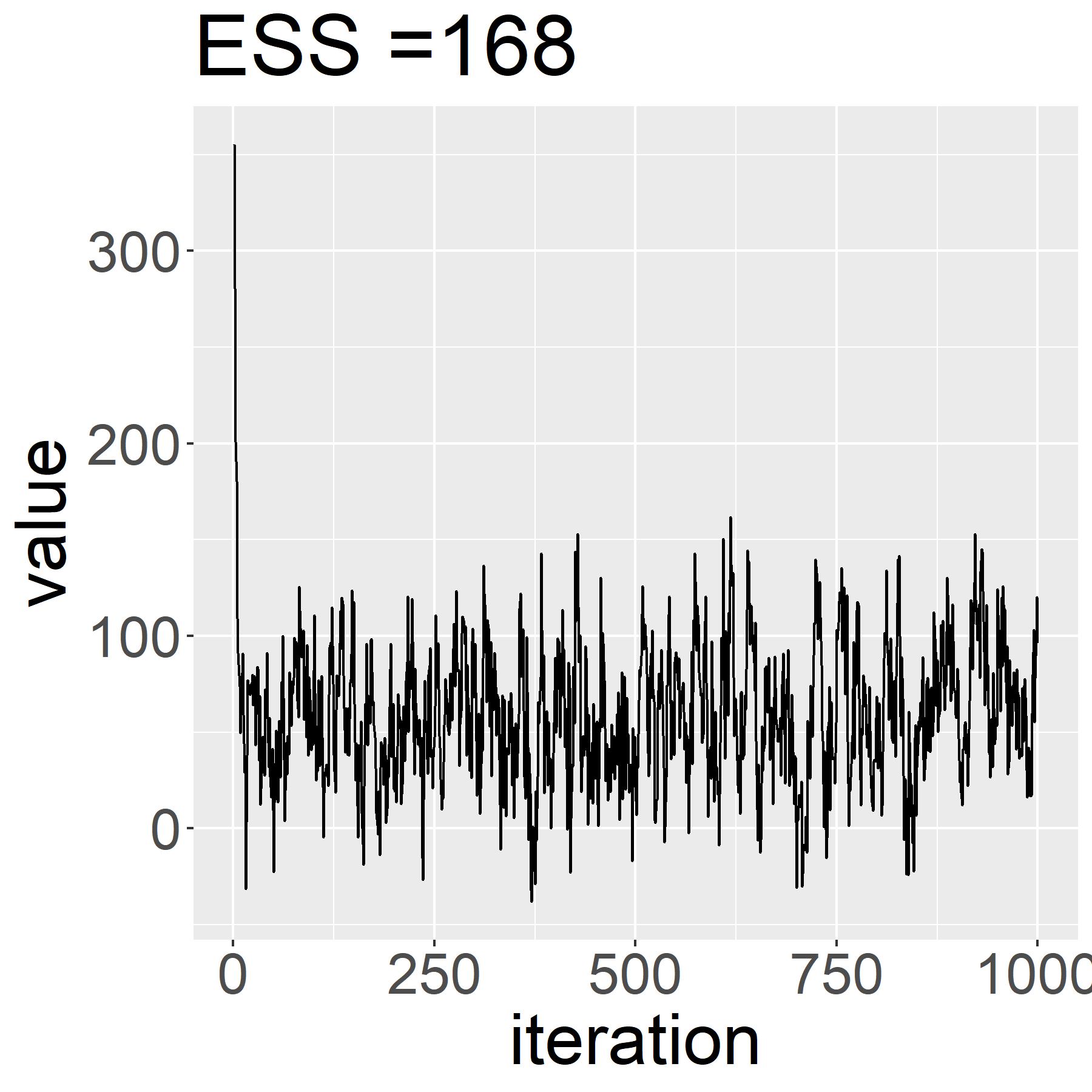} \\
\end{tabular}

$(\lambda_1(t_6),\ldots,\lambda_5(t_6))^\top \Theta_{q}$ for $q = 1,\ldots,4$ \\
 \begin{tabular}{cccc}
\includegraphics[width = 1.0 in]{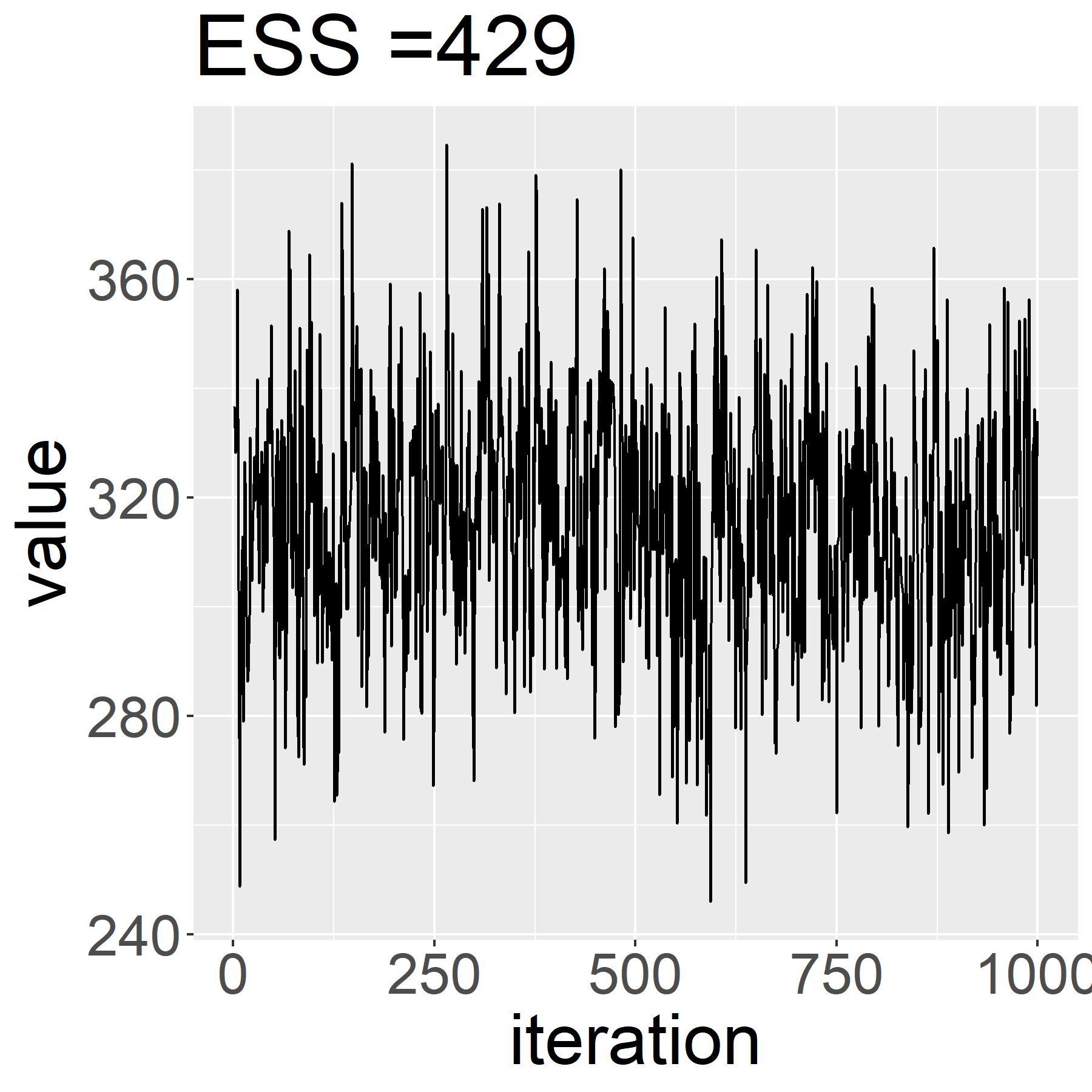} & \includegraphics[width = 1.0 in]{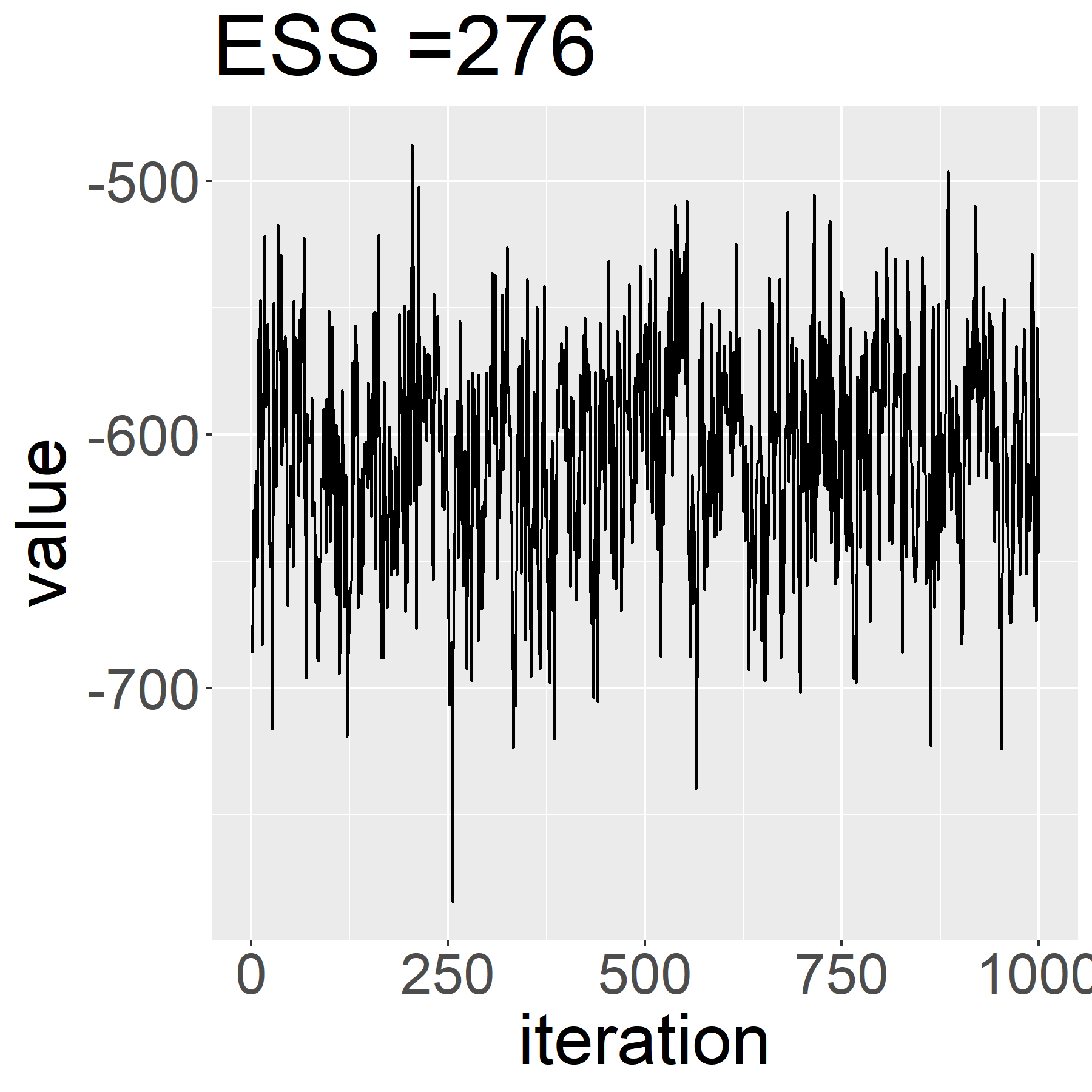} & \includegraphics[width = 1.0 in]{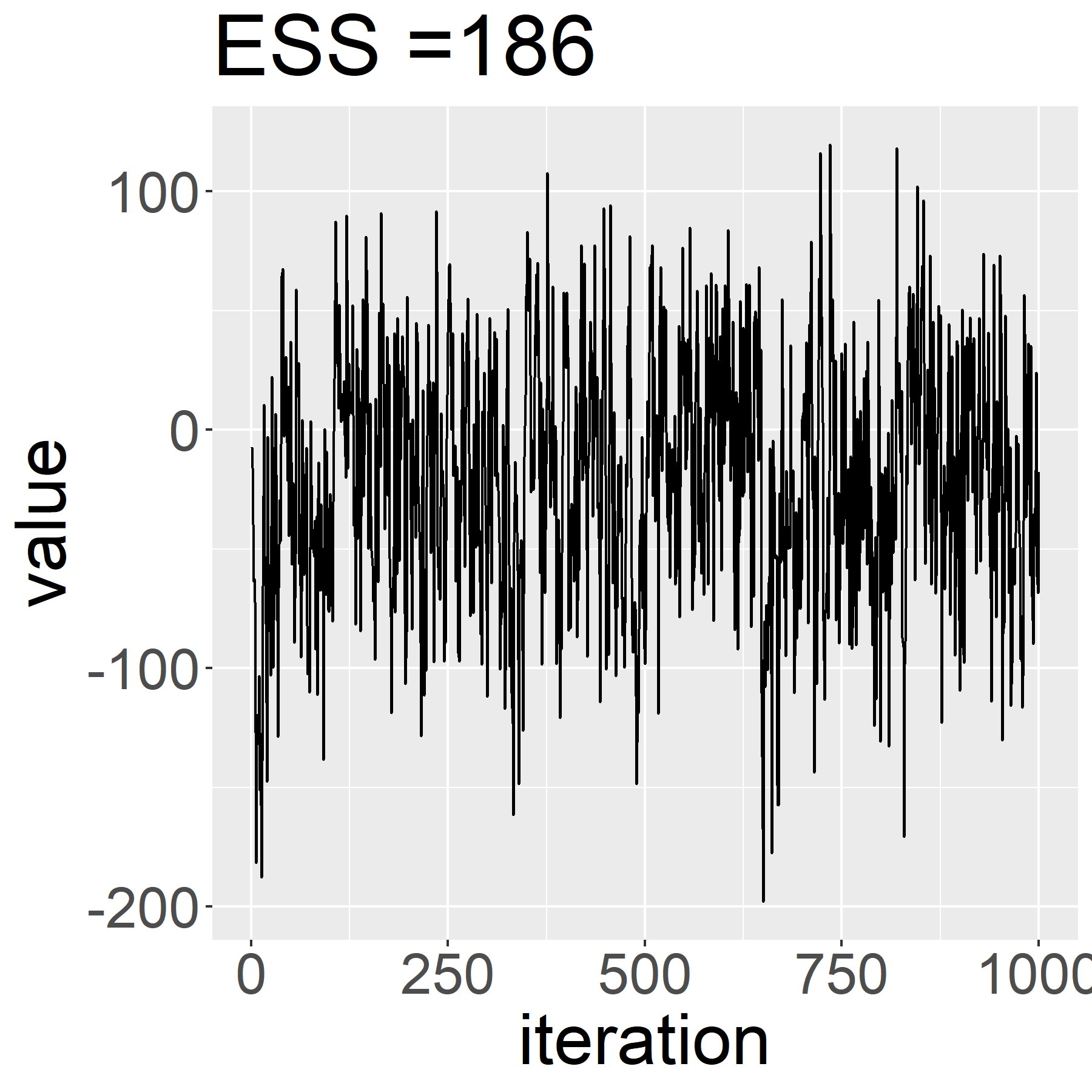}  & \includegraphics[width = 1.0 in]{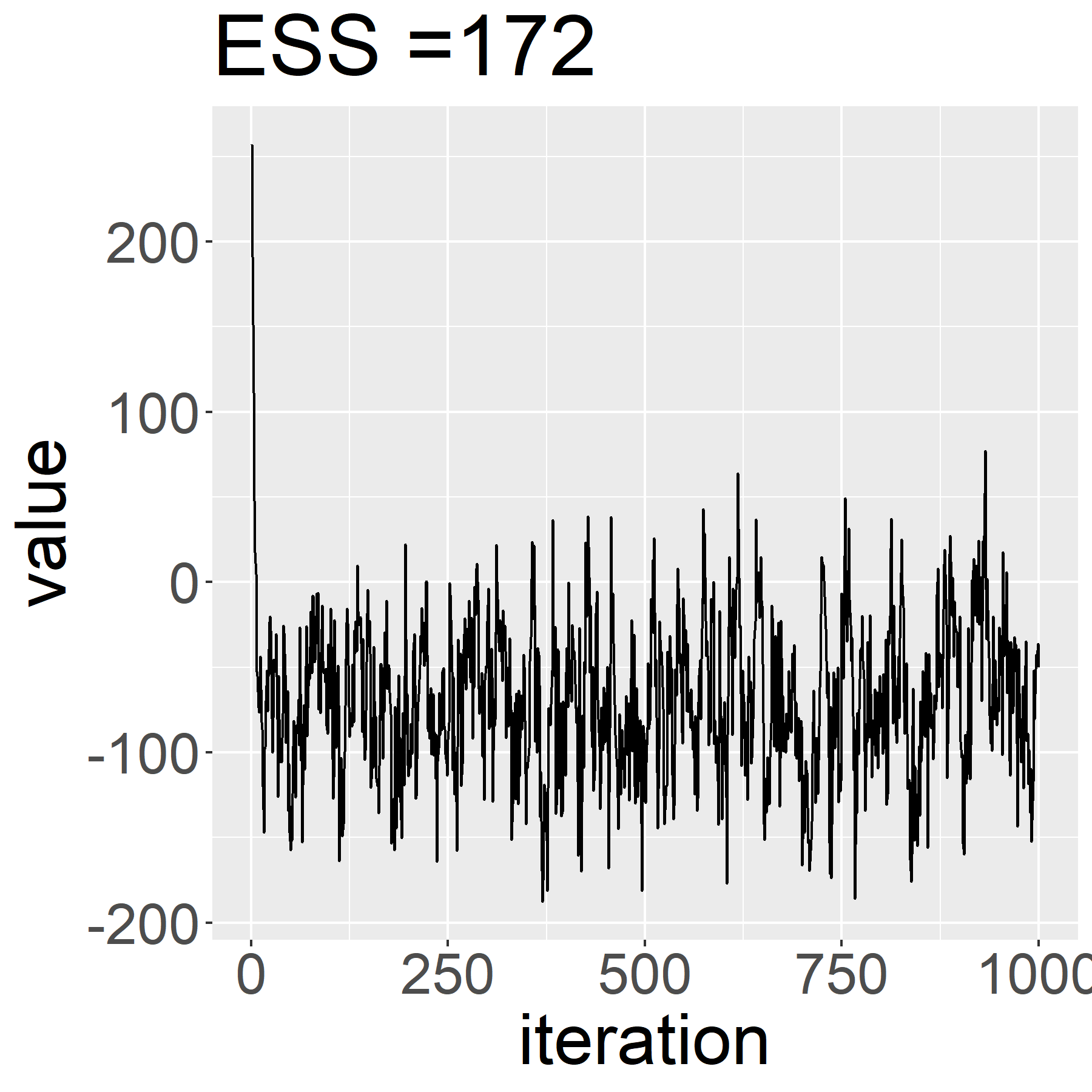} \\
\end{tabular}

$(\lambda_1(t_{13}),\ldots,\lambda_5(t_{13}))^\top \Theta_{q}$ for $q = 1,\ldots,4$ \\
\caption{Trace plots related to latent factor regression onto scalar covariates presented in Figure 5 in the main paper.}\label{fig:trace_fpca_reg}
    \end{center}
\end{figure}

\begin{figure}[t]
\begin{center}
 \begin{tabular}{ccc}
\includegraphics[width = 1.25 in]{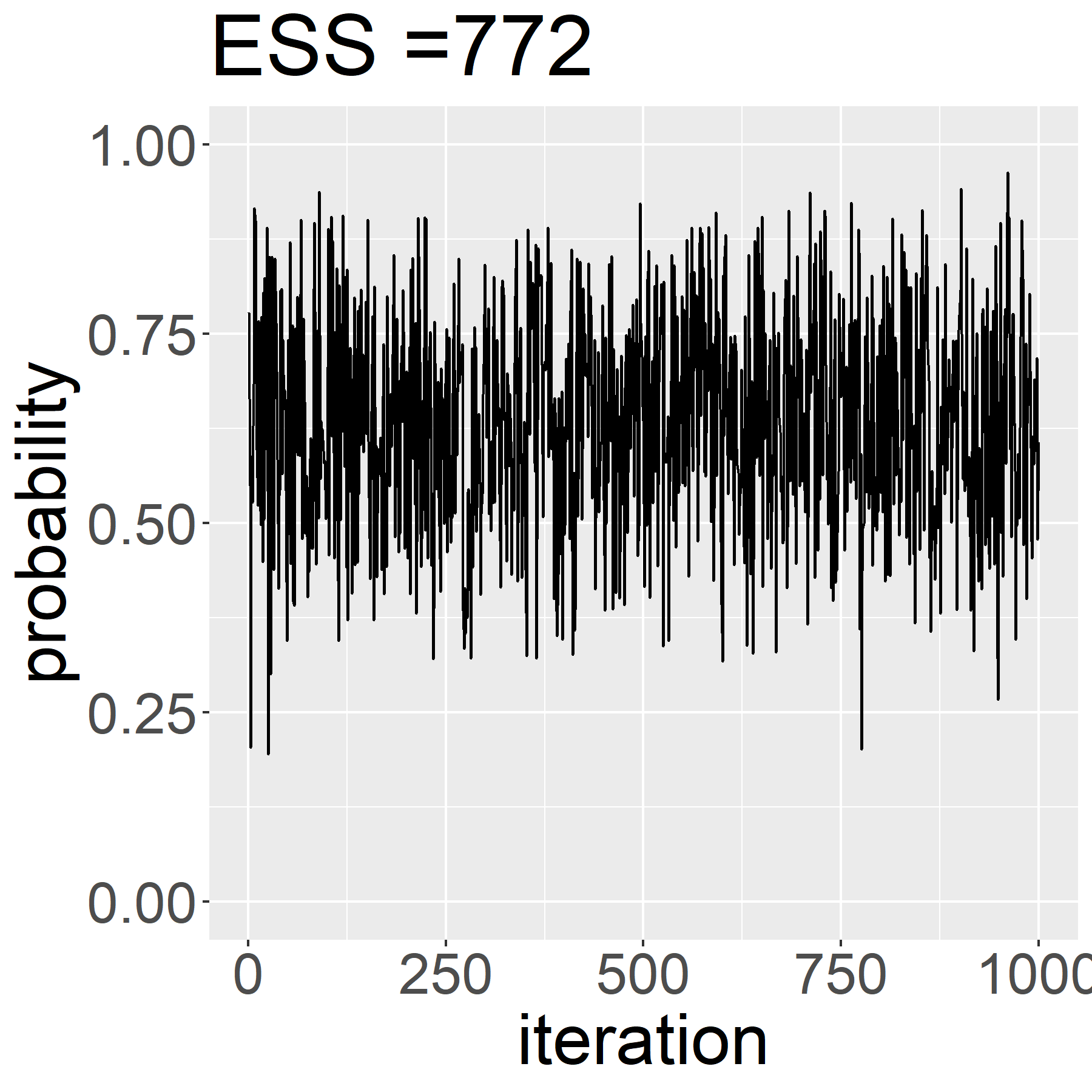} & \includegraphics[width = 1.25 in]{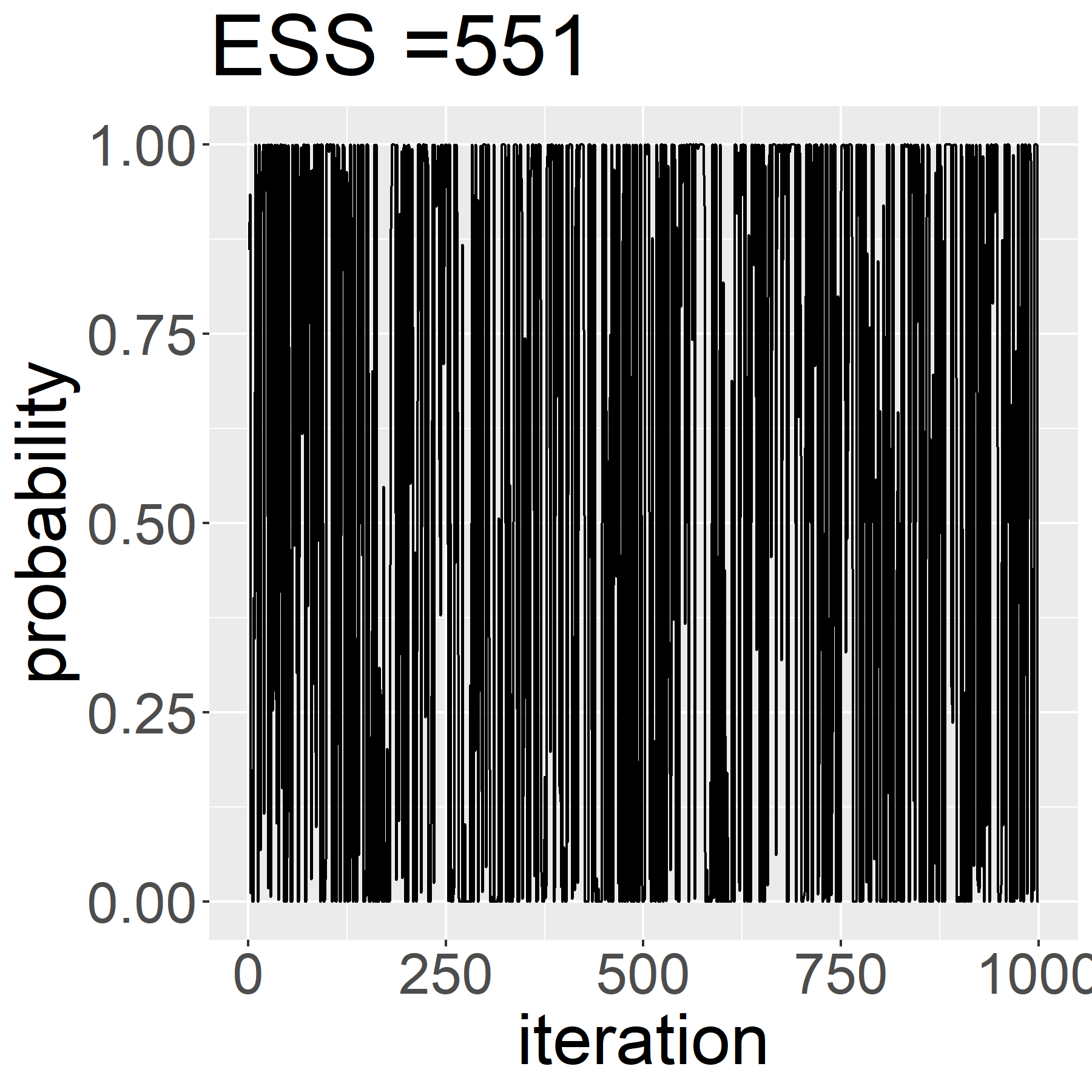} & \includegraphics[width = 1.25 in]{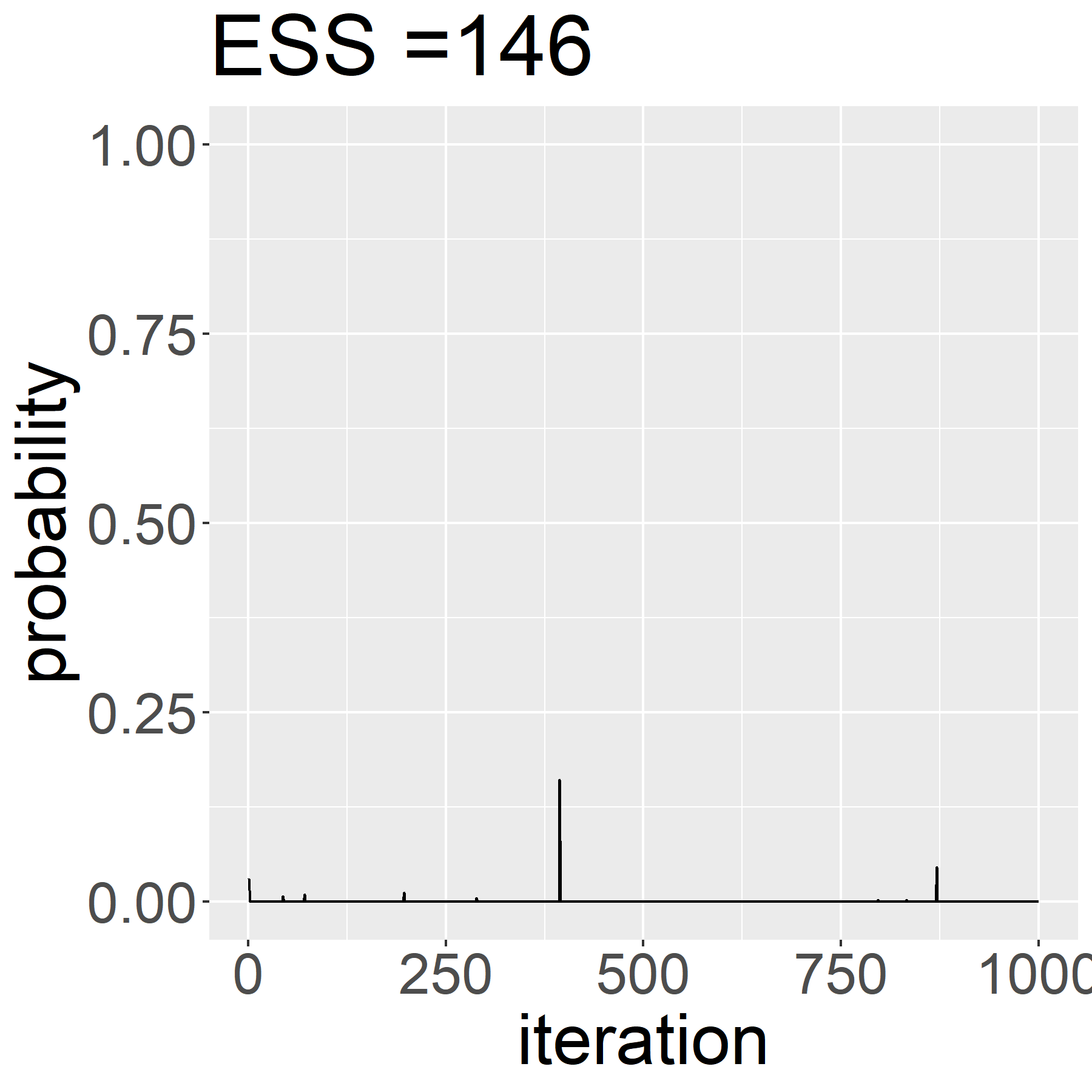} \\ \end{tabular}
$\Phi(\mu^{z_1}(t) + (\lambda^{z}_1(t),\lambda^{z}_2(t),\lambda^{z}_3(t))^\top\eta^z_{2087})$  for $t = 0,1,2$ \\
 \begin{tabular}{ccc}
\includegraphics[width = 1.25 in]{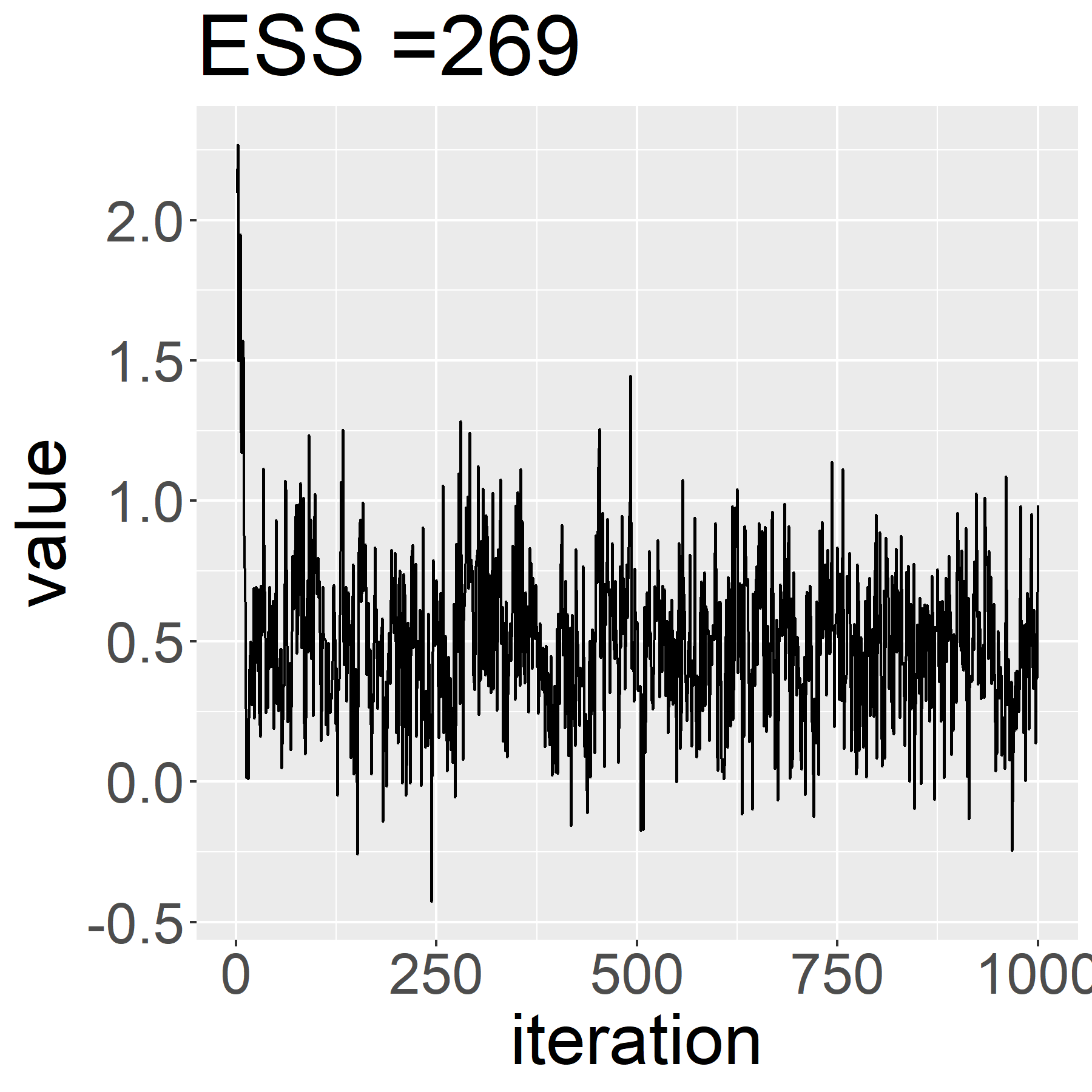} & \includegraphics[width = 1.25 in]{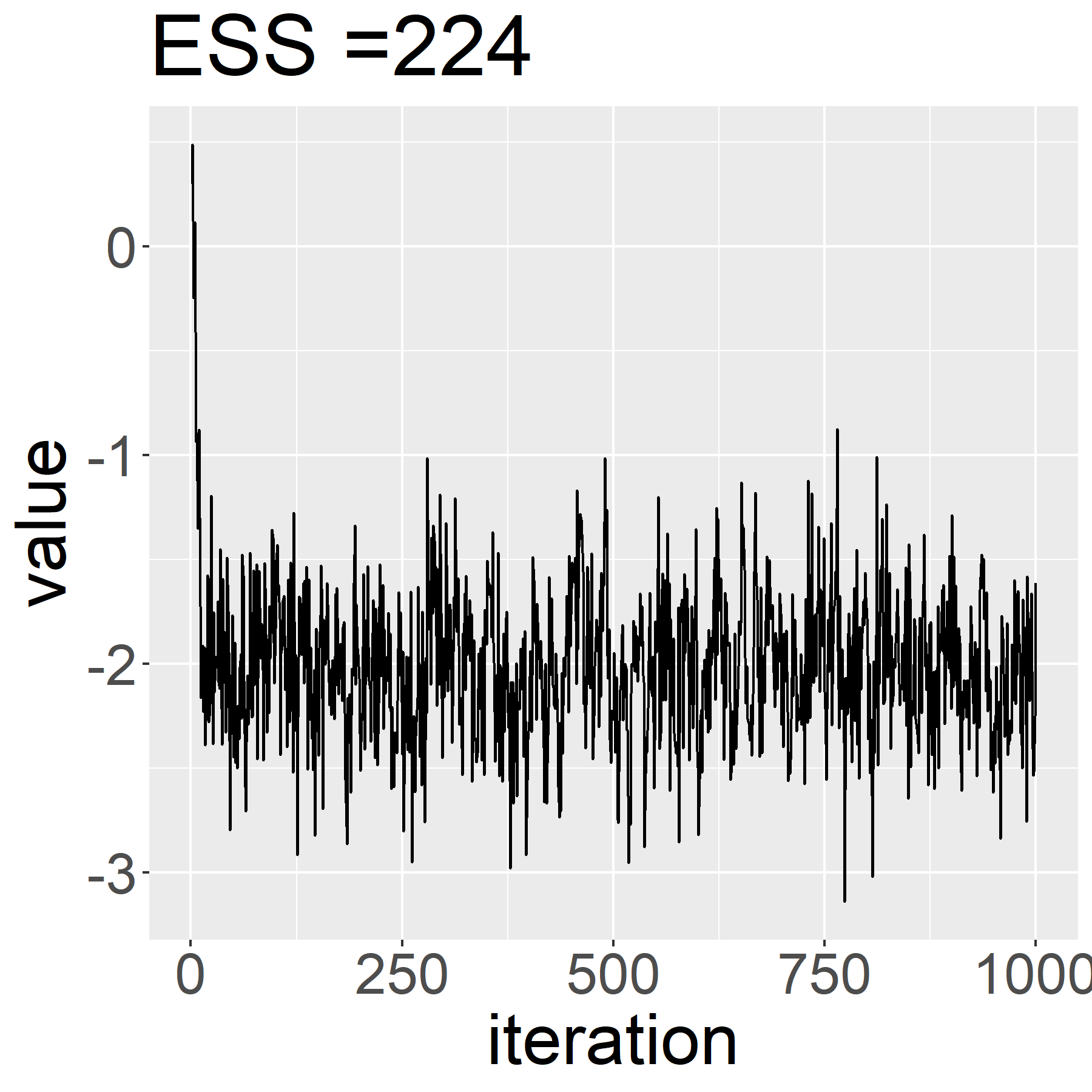} & \includegraphics[width = 1.25 in]{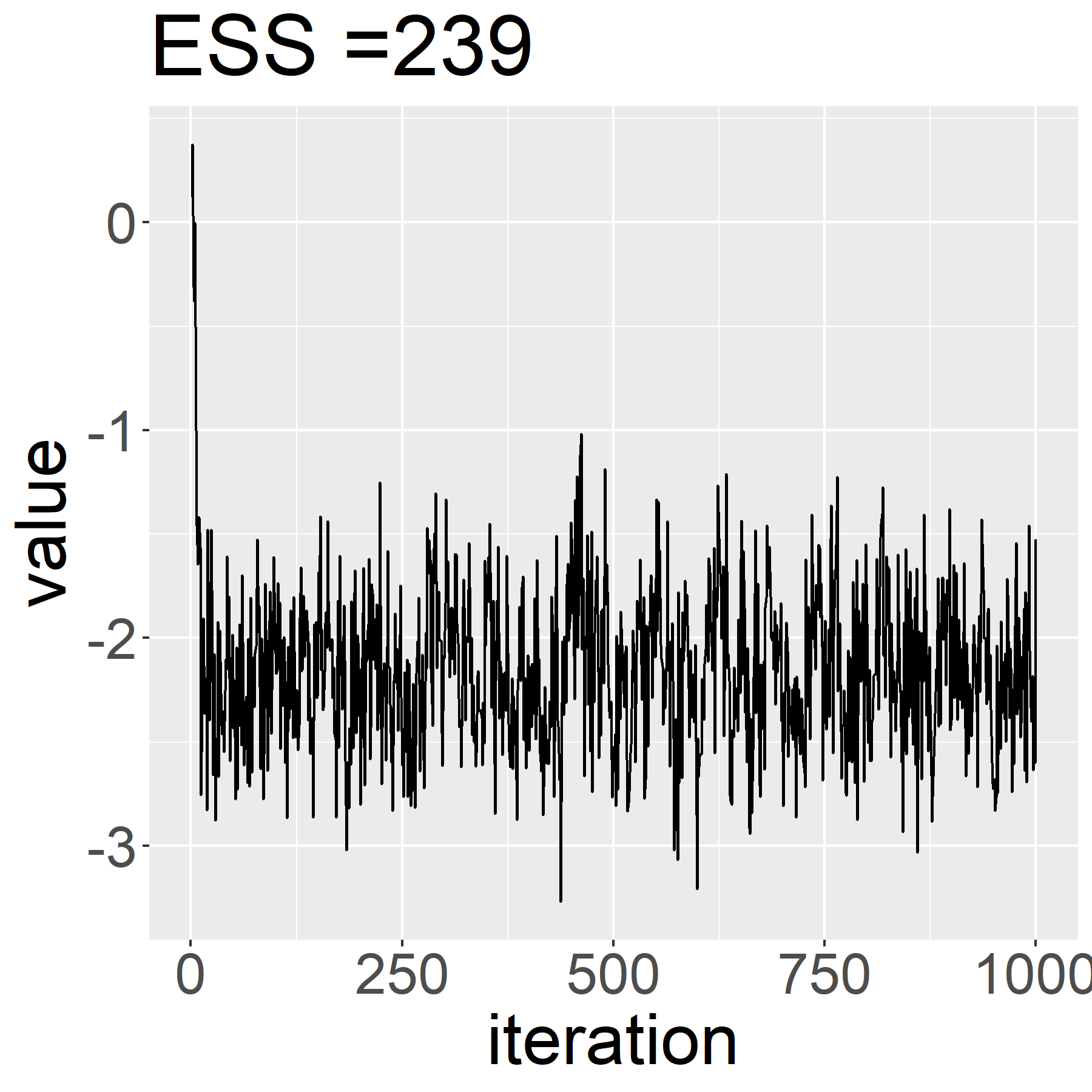} \\
$\rho_{1,1}$ & $\rho_{2,1}$ & $\rho_{3,1}$ \\
\includegraphics[width = 1.25 in]{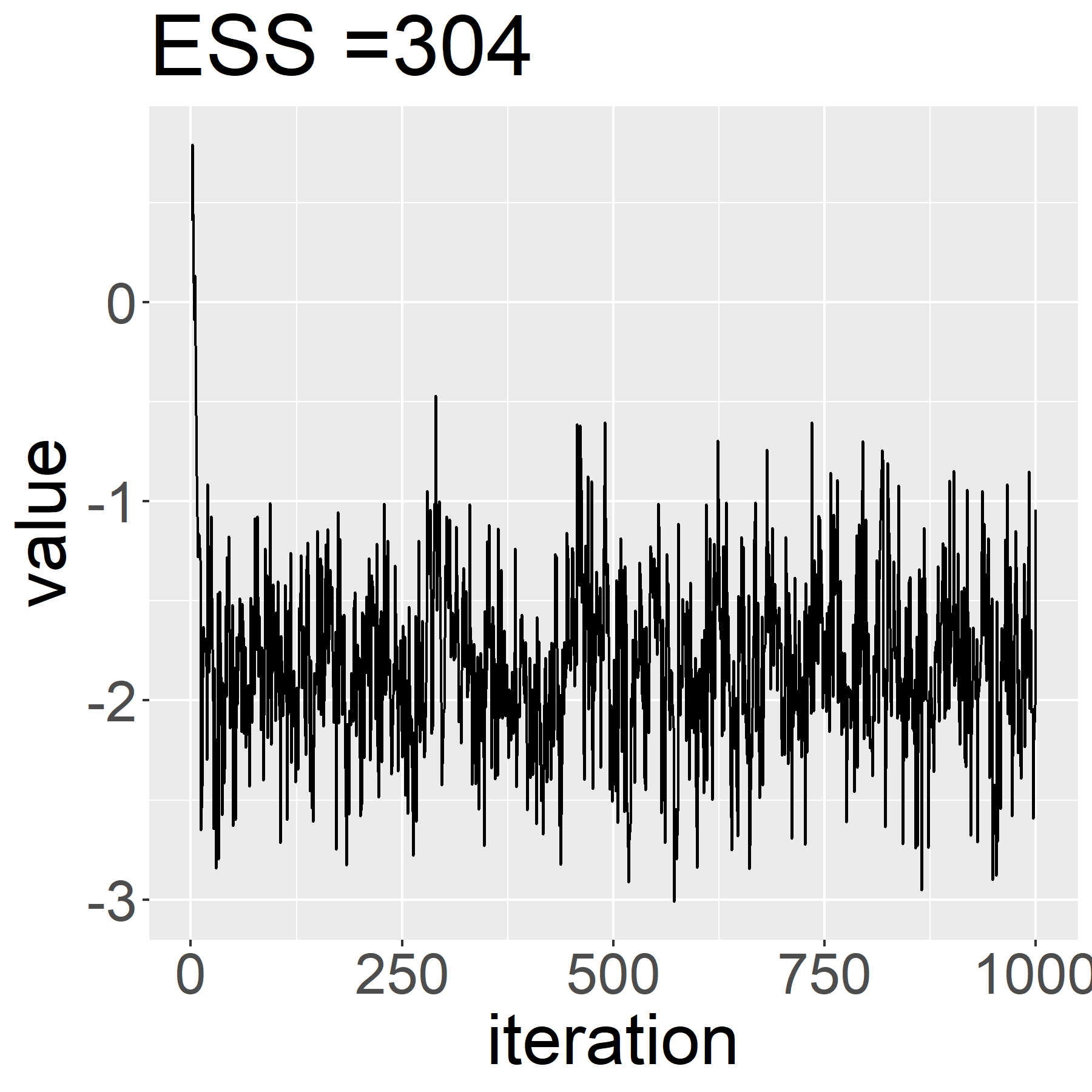} & \includegraphics[width = 1.25 in]{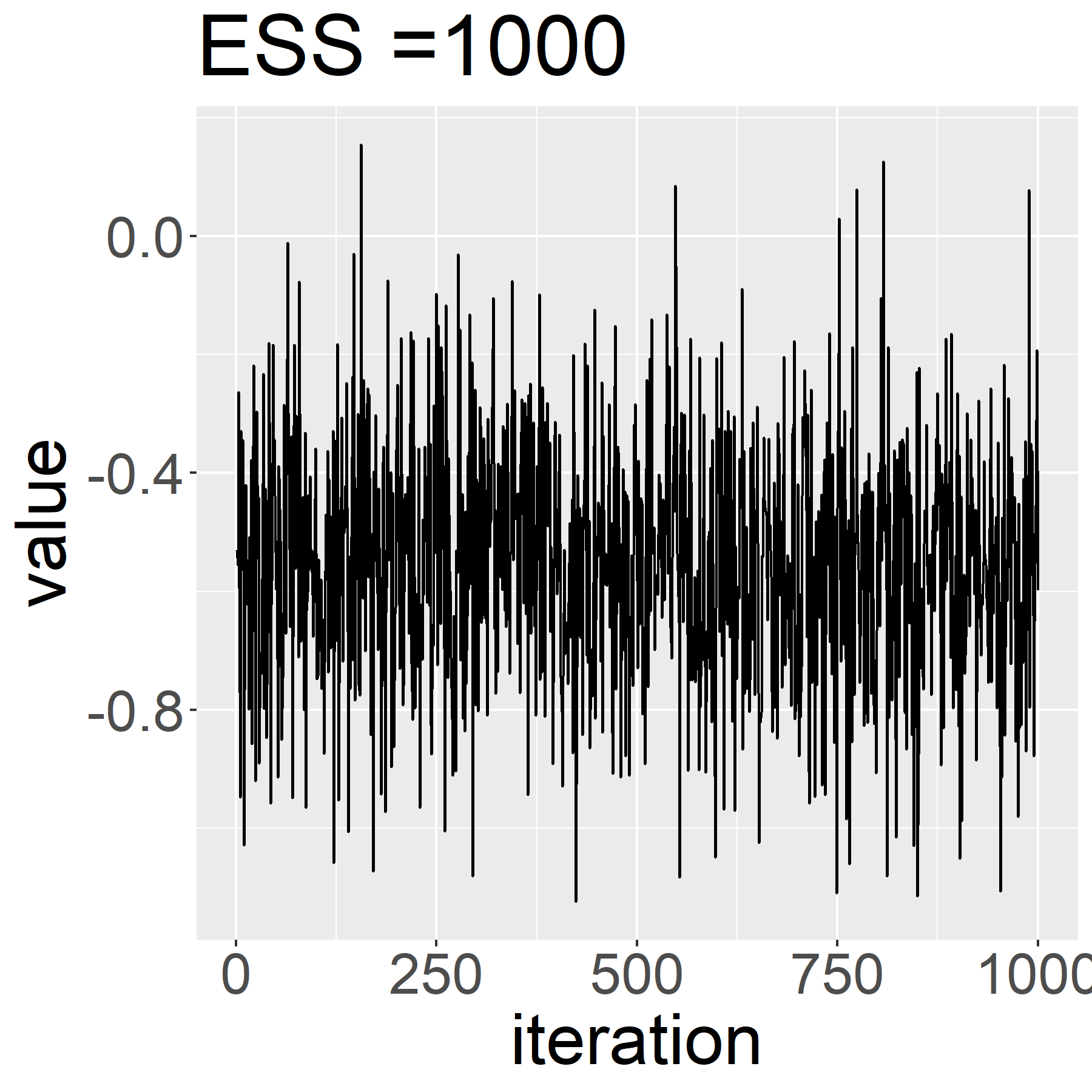} & \includegraphics[width = 1.25 in]{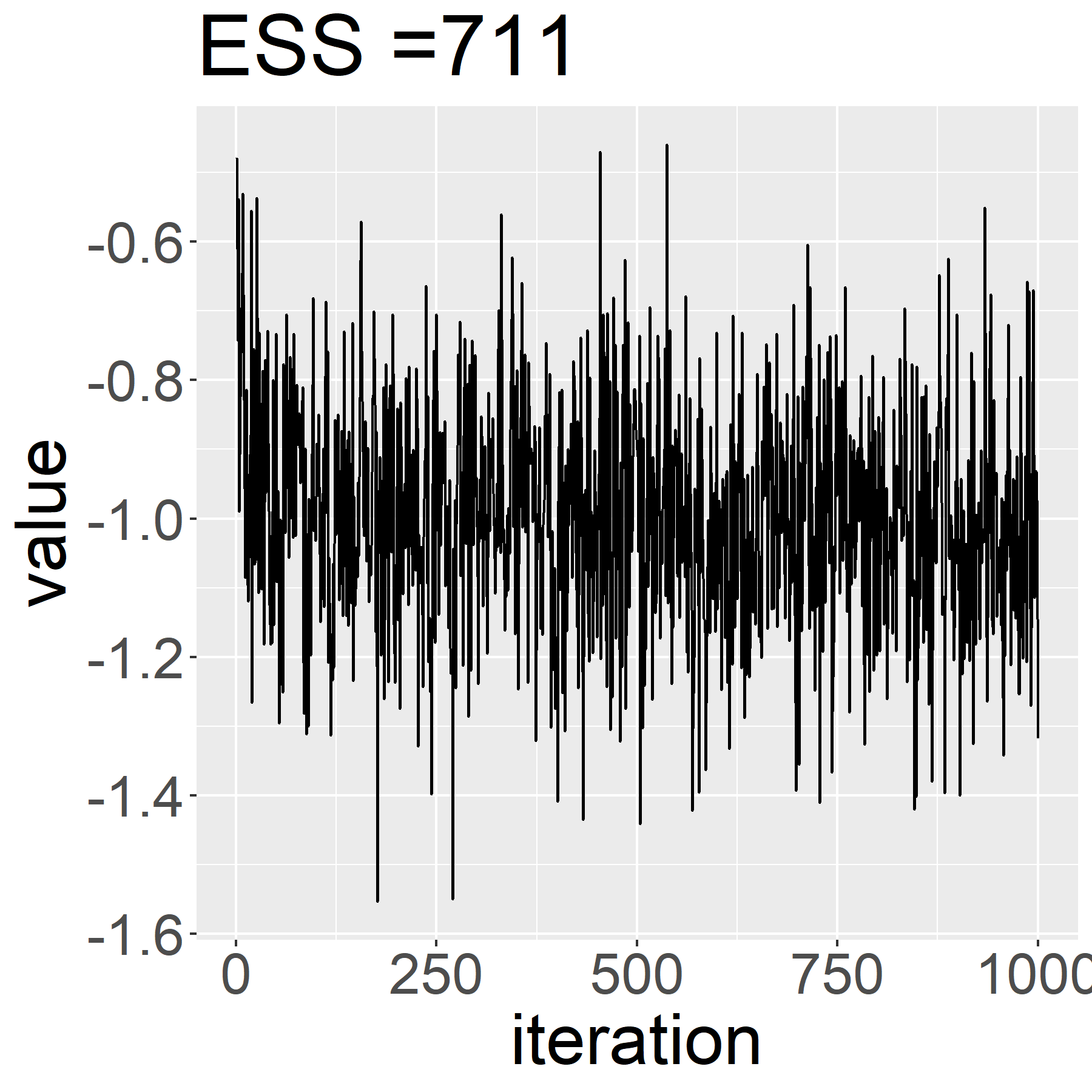} \\
$\rho_{4,1}$ & $\rho_{5,1}$ & $\rho_{6,1}$ \\
\end{tabular}
 \begin{tabular}{cccc}
\includegraphics[width = 1.25 in]{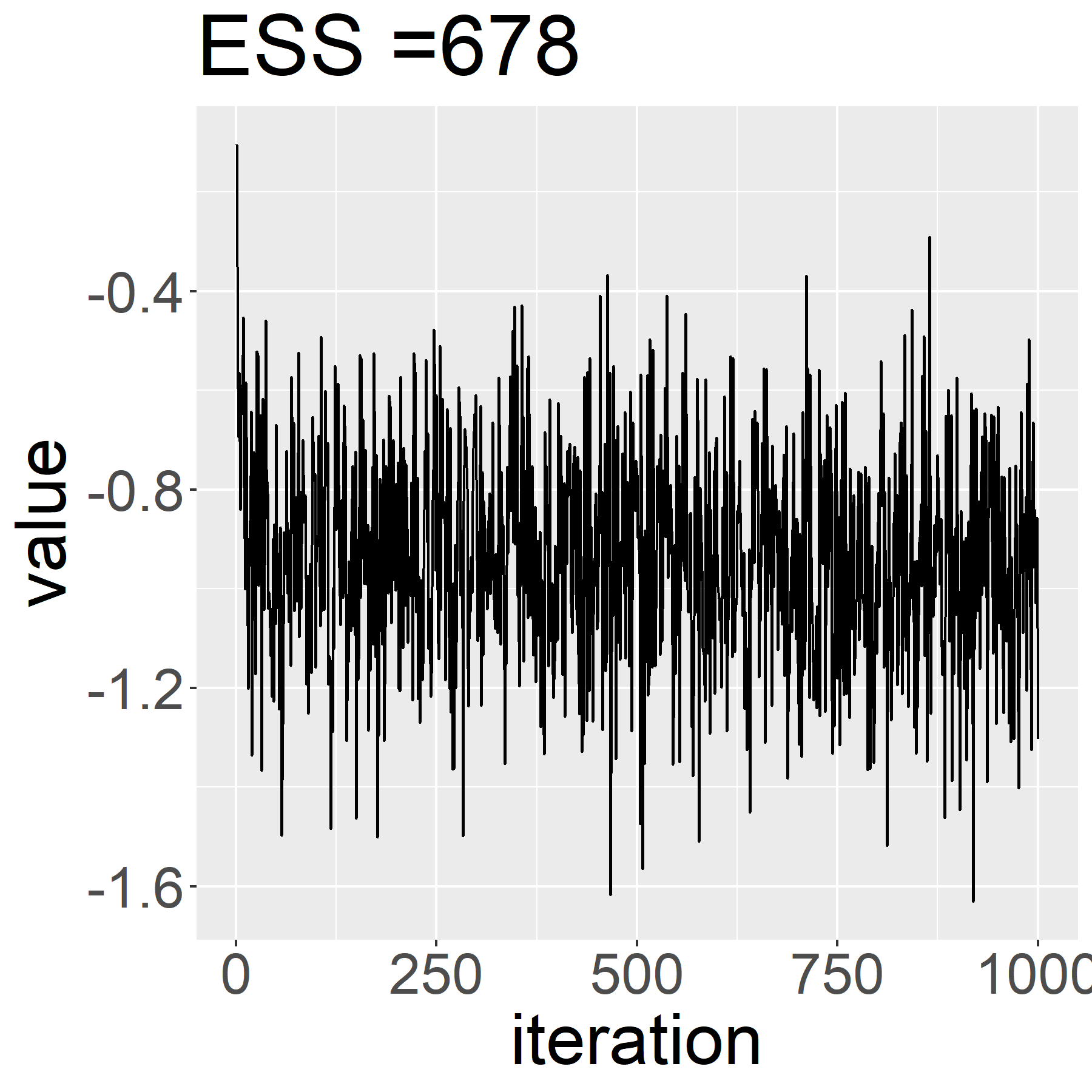} & \includegraphics[width = 1.25 in]{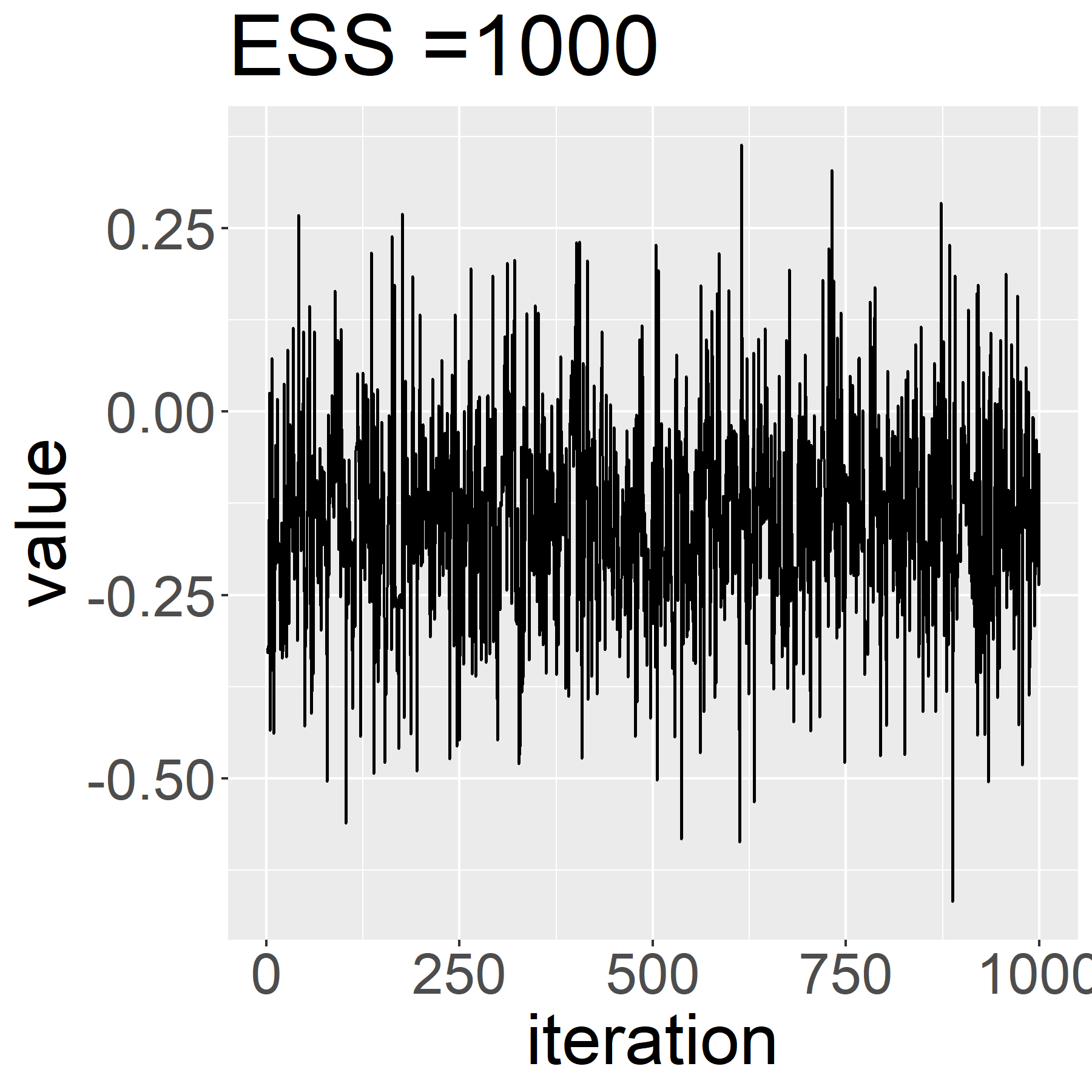} & \includegraphics[width = 1.25 in]{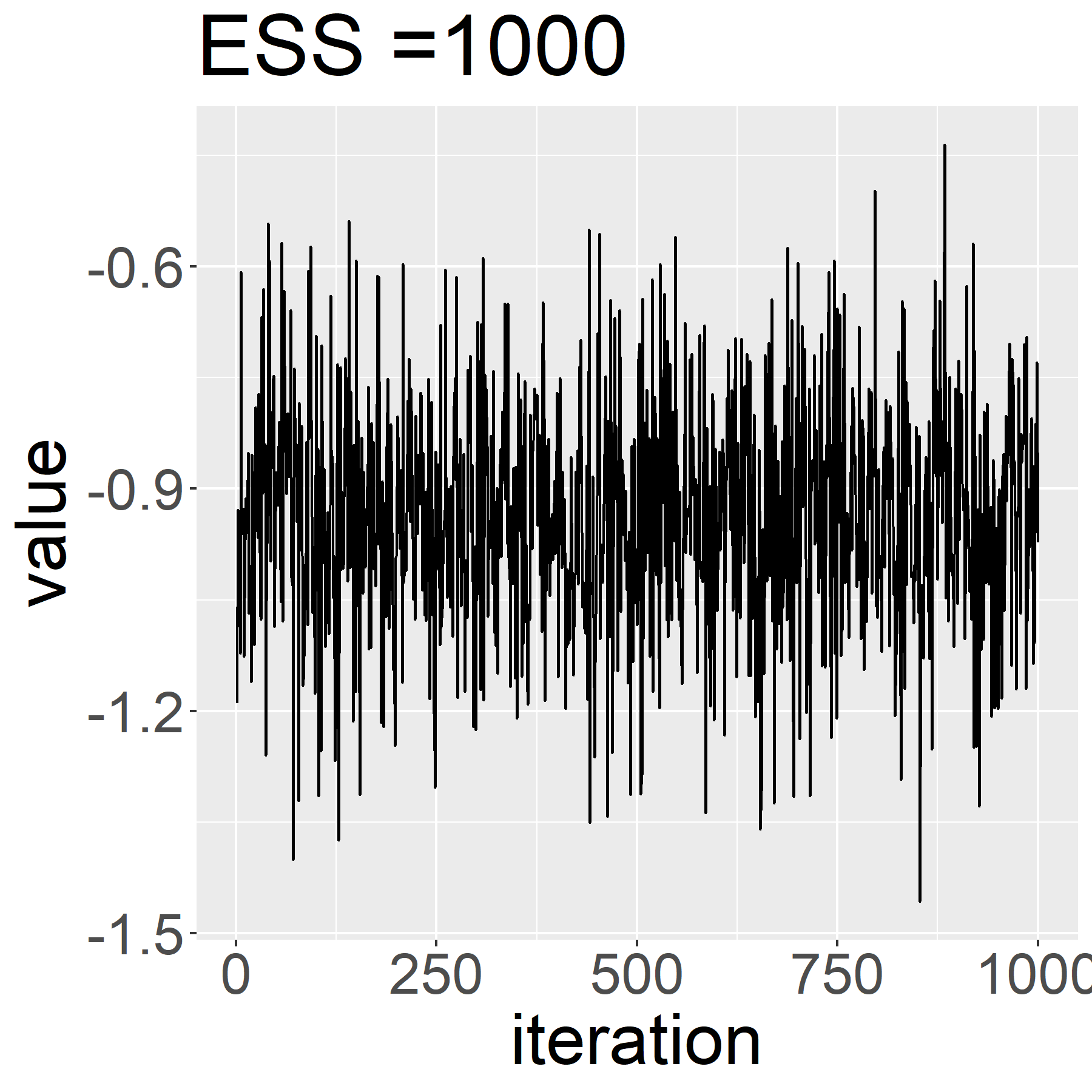} & \includegraphics[width = 1.25 in]{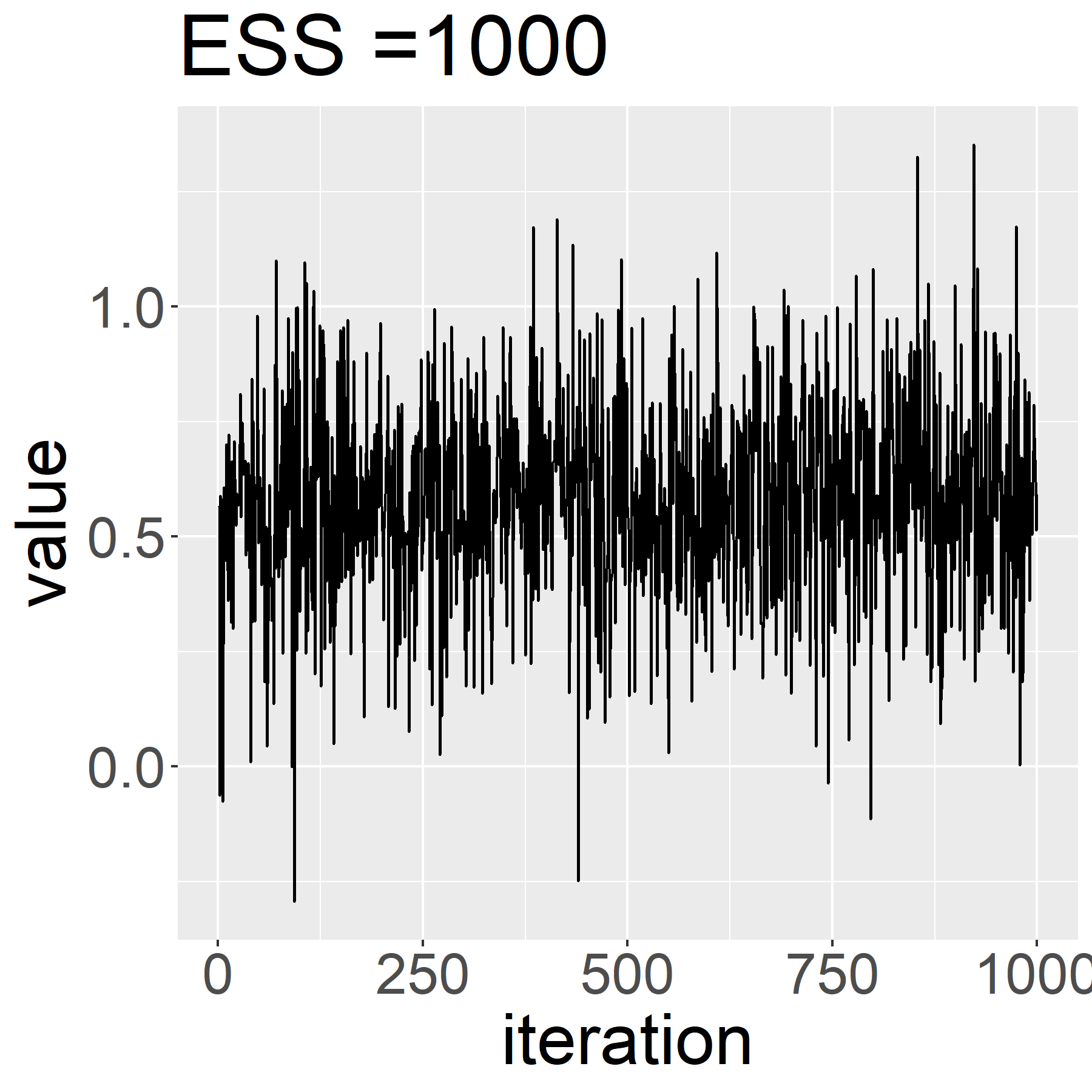} \\
$\rho_{7,1}$ & $\rho_{8,1}$ & $\rho_{9,1}$& $\rho_{10,1}$ \\
\end{tabular}
\caption{Trace plots related to breastfeeding status interpolation and historical regression presented in Figures 6 and 7 in the main paper.}\label{fig:trace_bf}
    \end{center}
\end{figure}

\begin{figure}[t]
\begin{center}
 \begin{tabular}{ccc}
\includegraphics[width = 1.0 in]{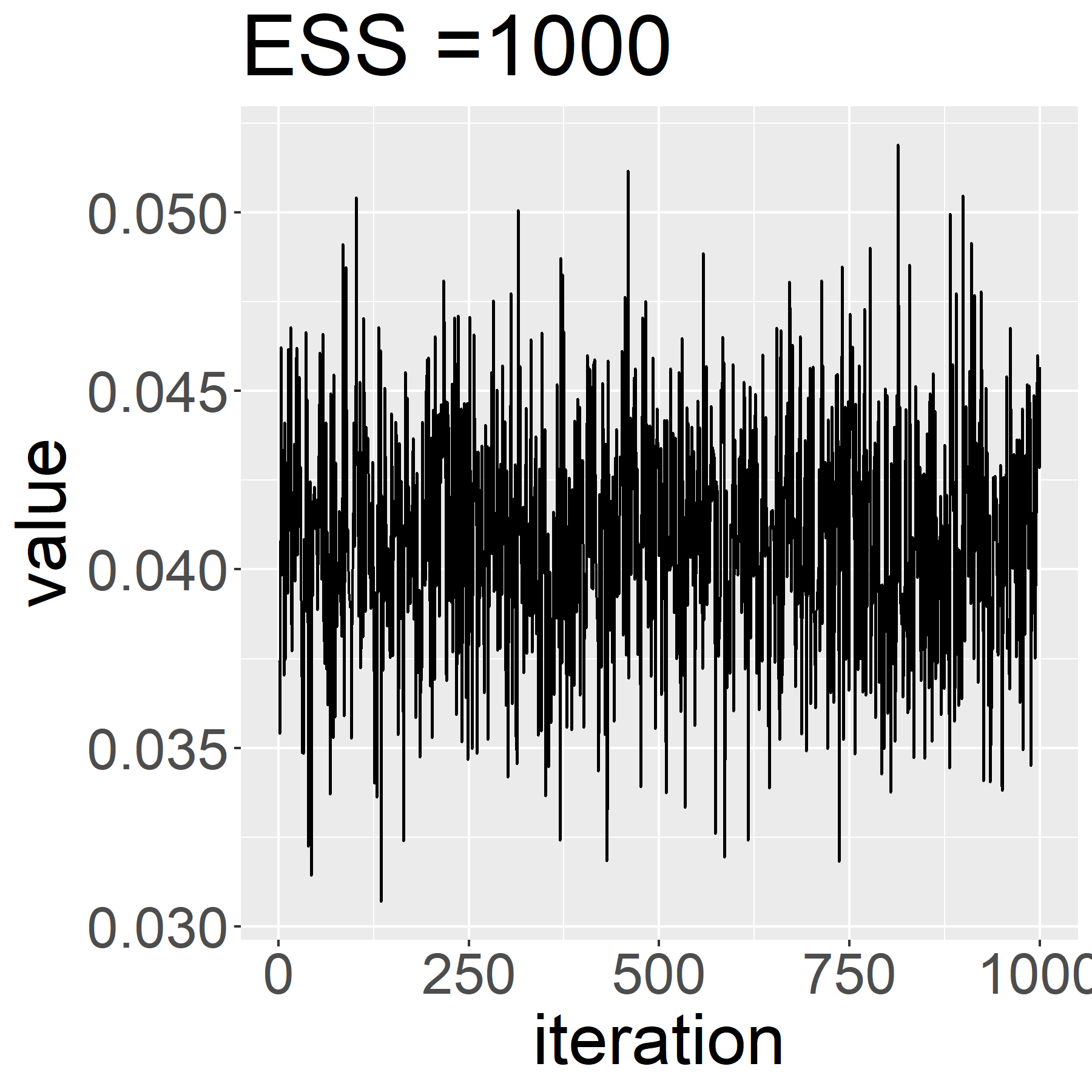} & \includegraphics[width = 1.0 in]{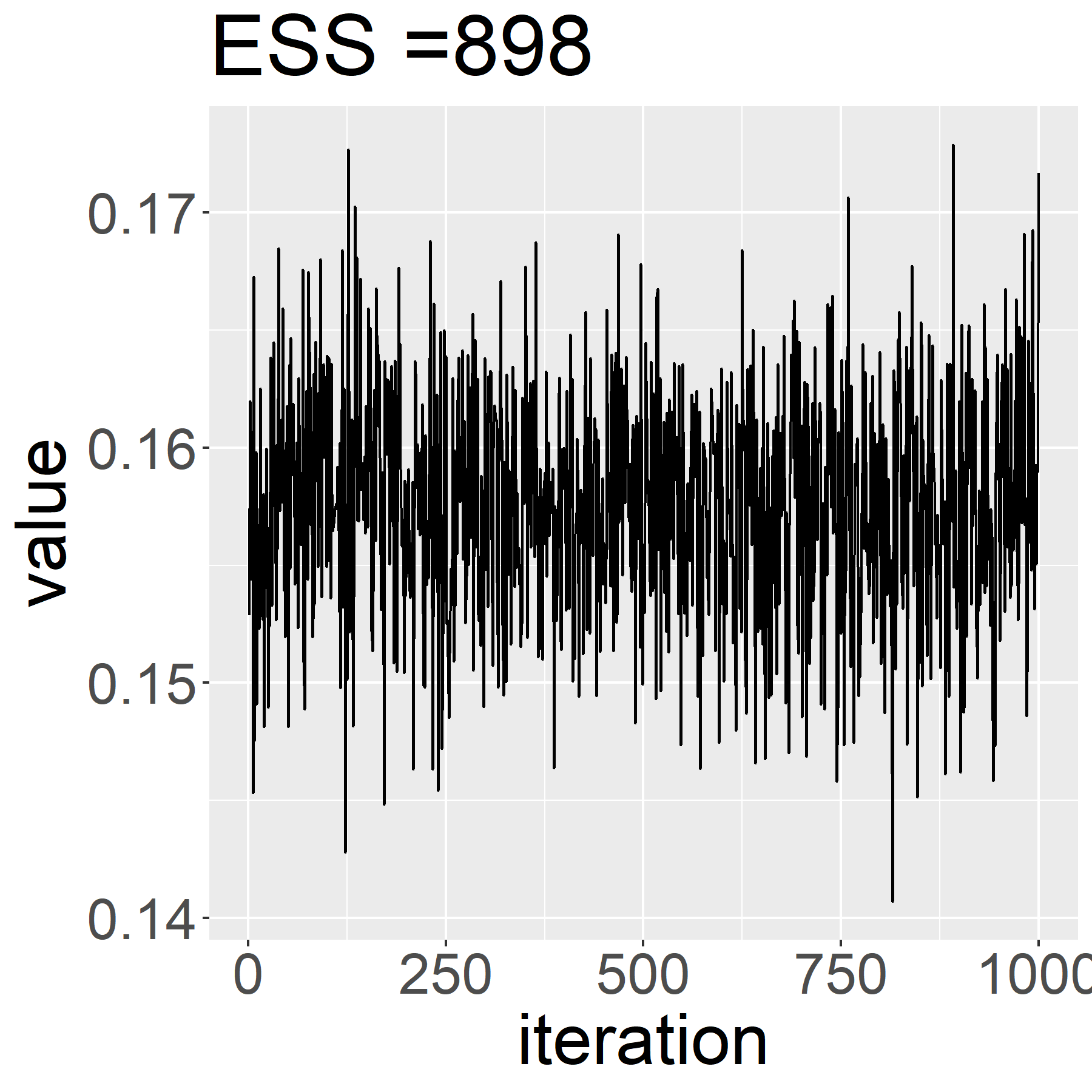} & \includegraphics[width = 1.0 in]{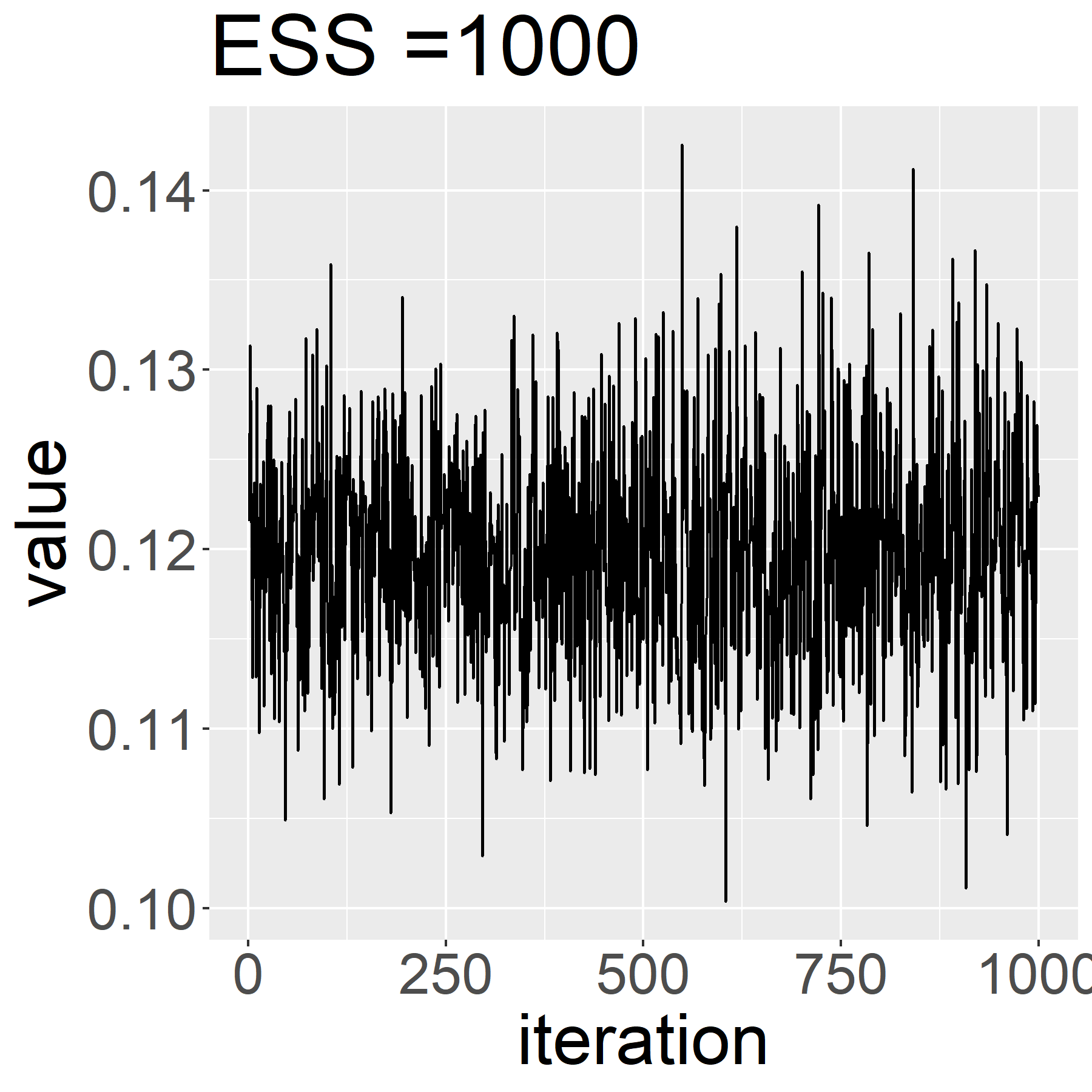} \\
\end{tabular}

$\Phi(\mu^{z_2}(t))$  for $t = 2/24,1,2$ \\
 \begin{tabular}{ccc}
\includegraphics[width = 1.0 in]{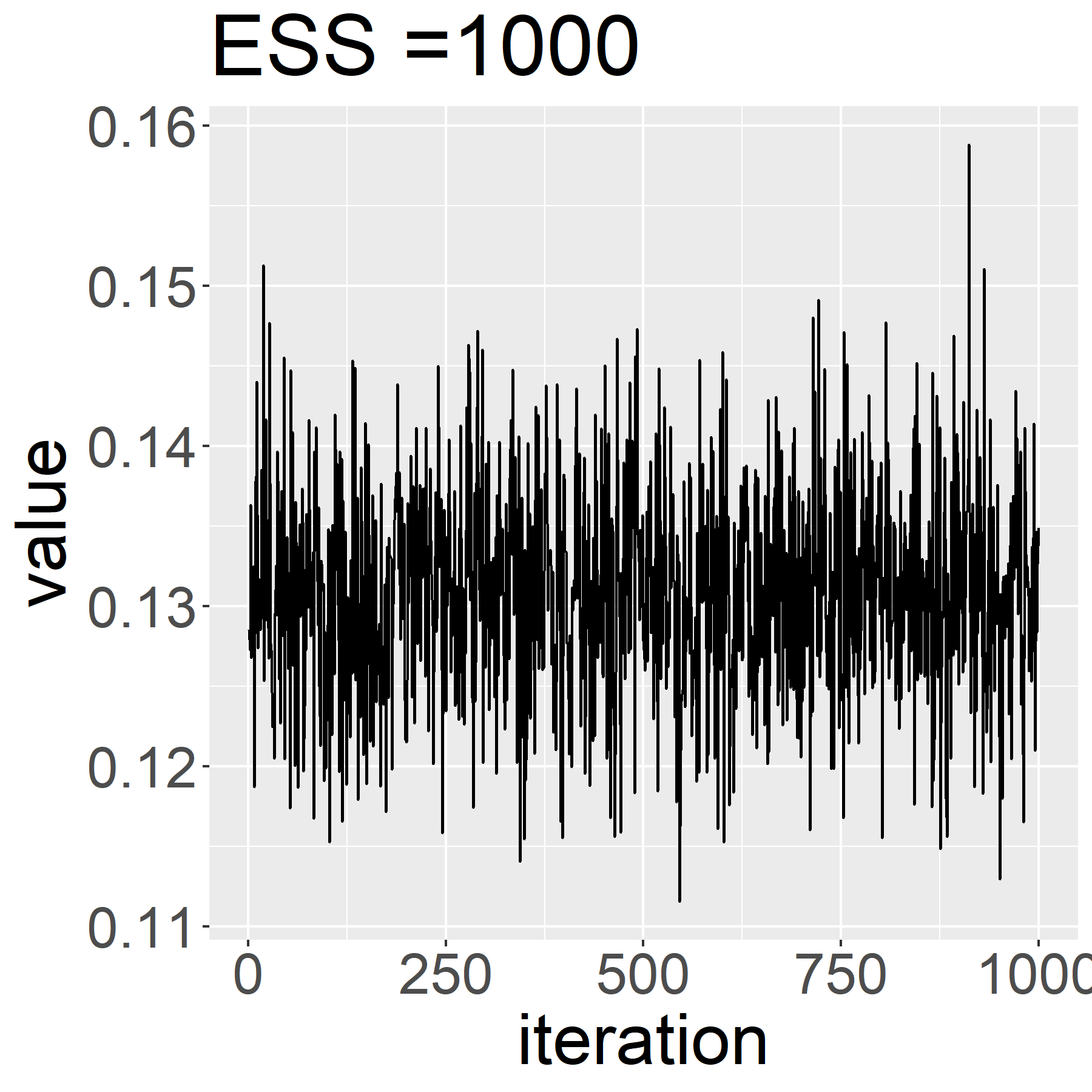} & \includegraphics[width = 1.0 in]{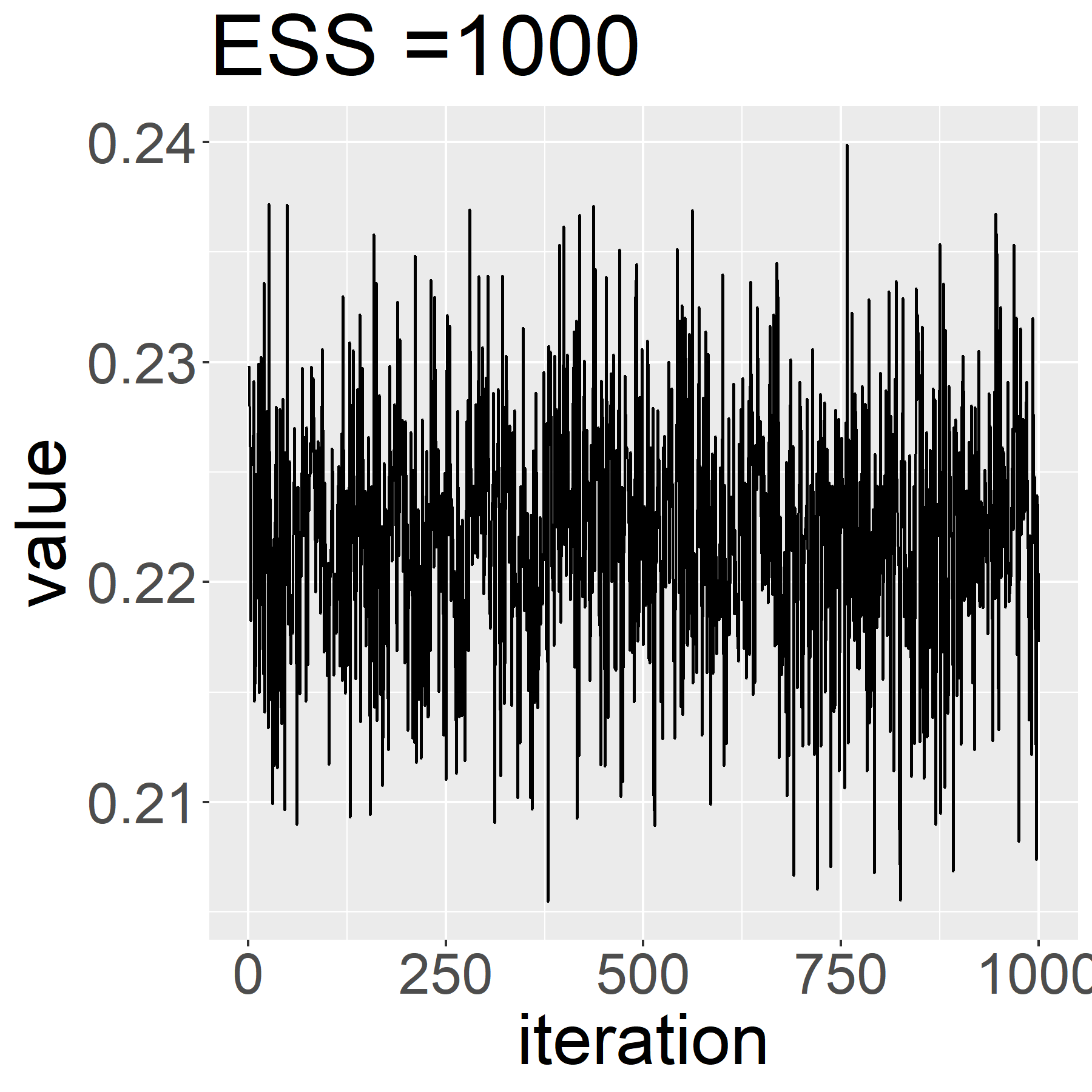} & \includegraphics[width = 1.0 in]{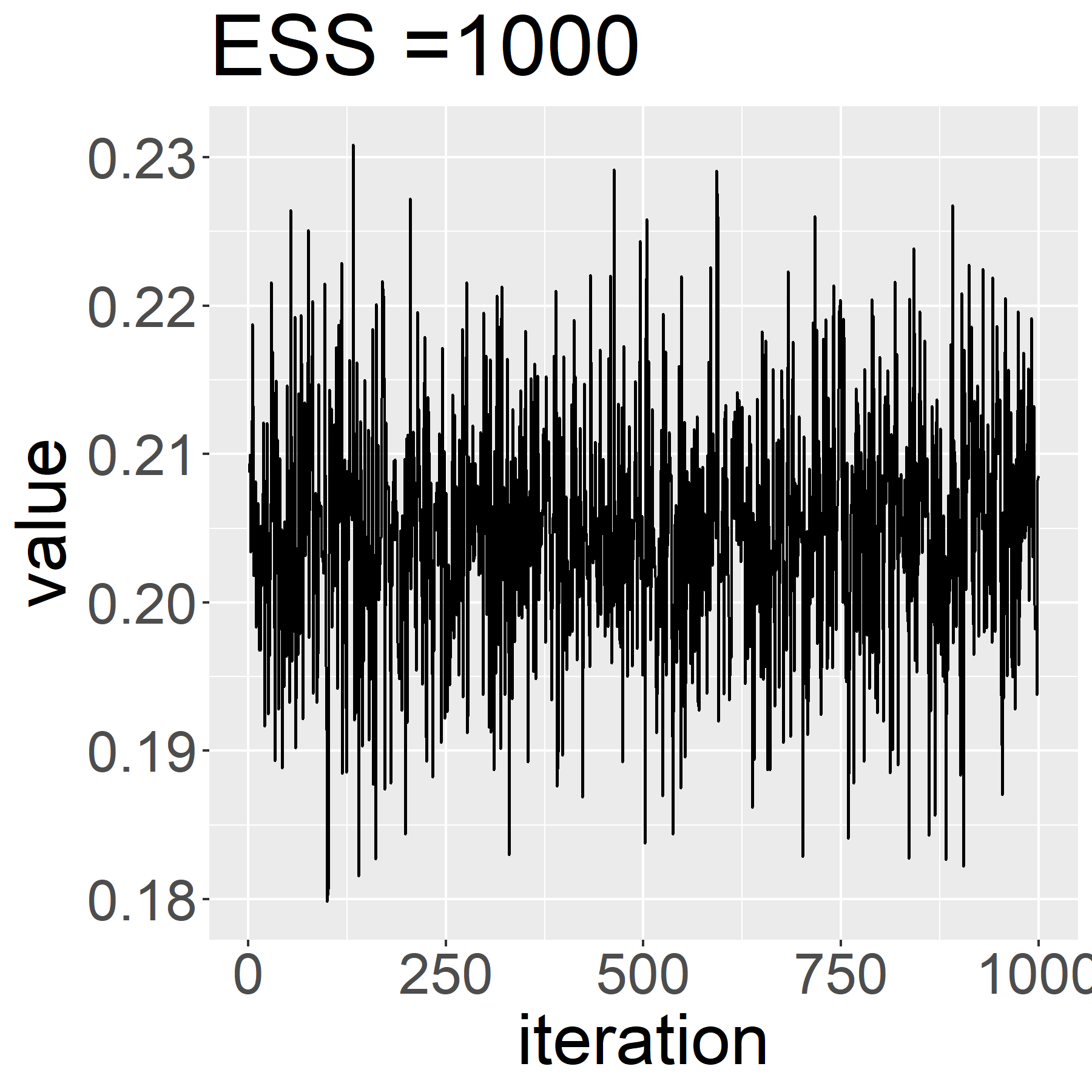} \\
\end{tabular}

$\Phi(\mu^{z_3}(t))$  for $t = 2/24,1,2$ \\
 \begin{tabular}{ccc}
\includegraphics[width = 1.0 in]{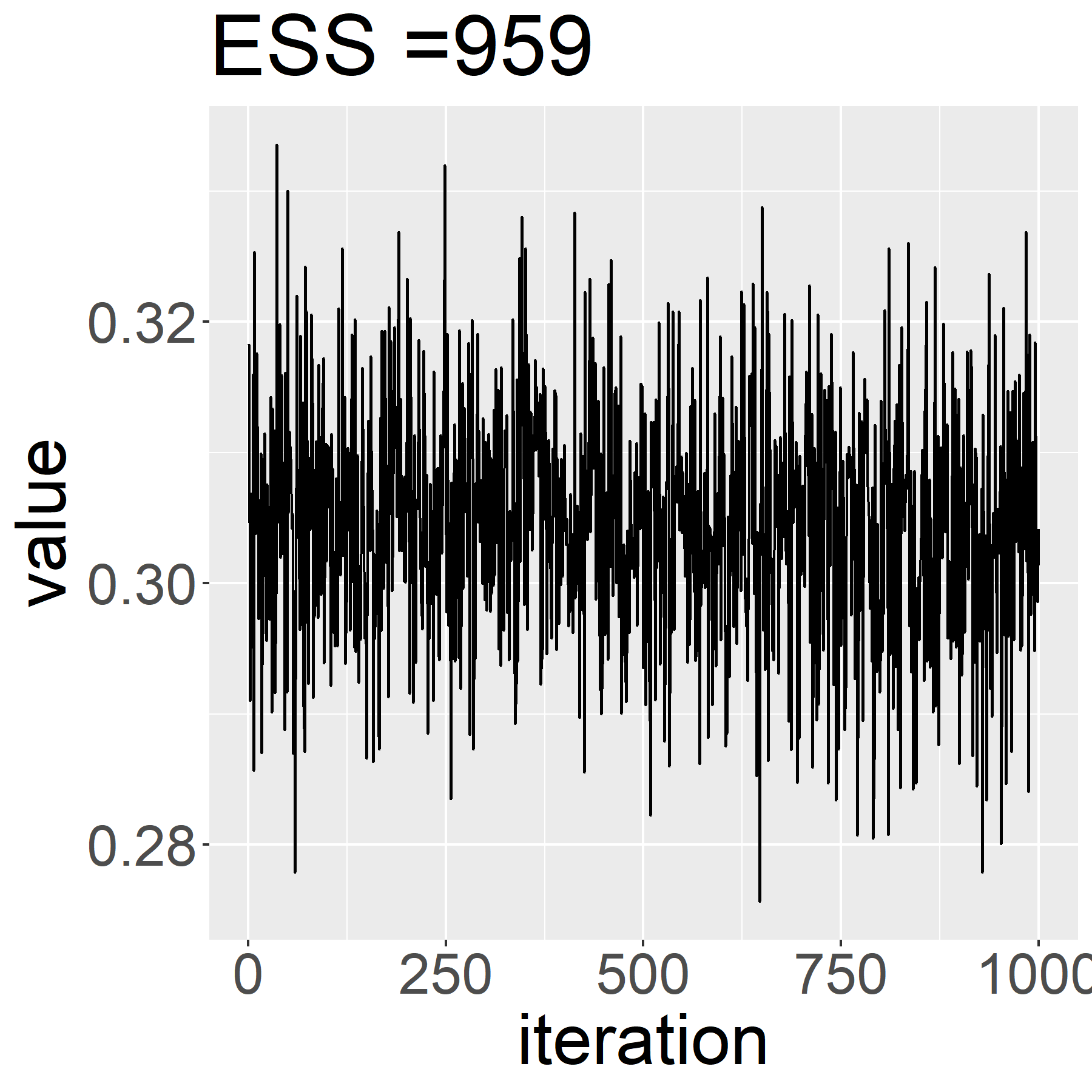} & \includegraphics[width = 1.0 in]{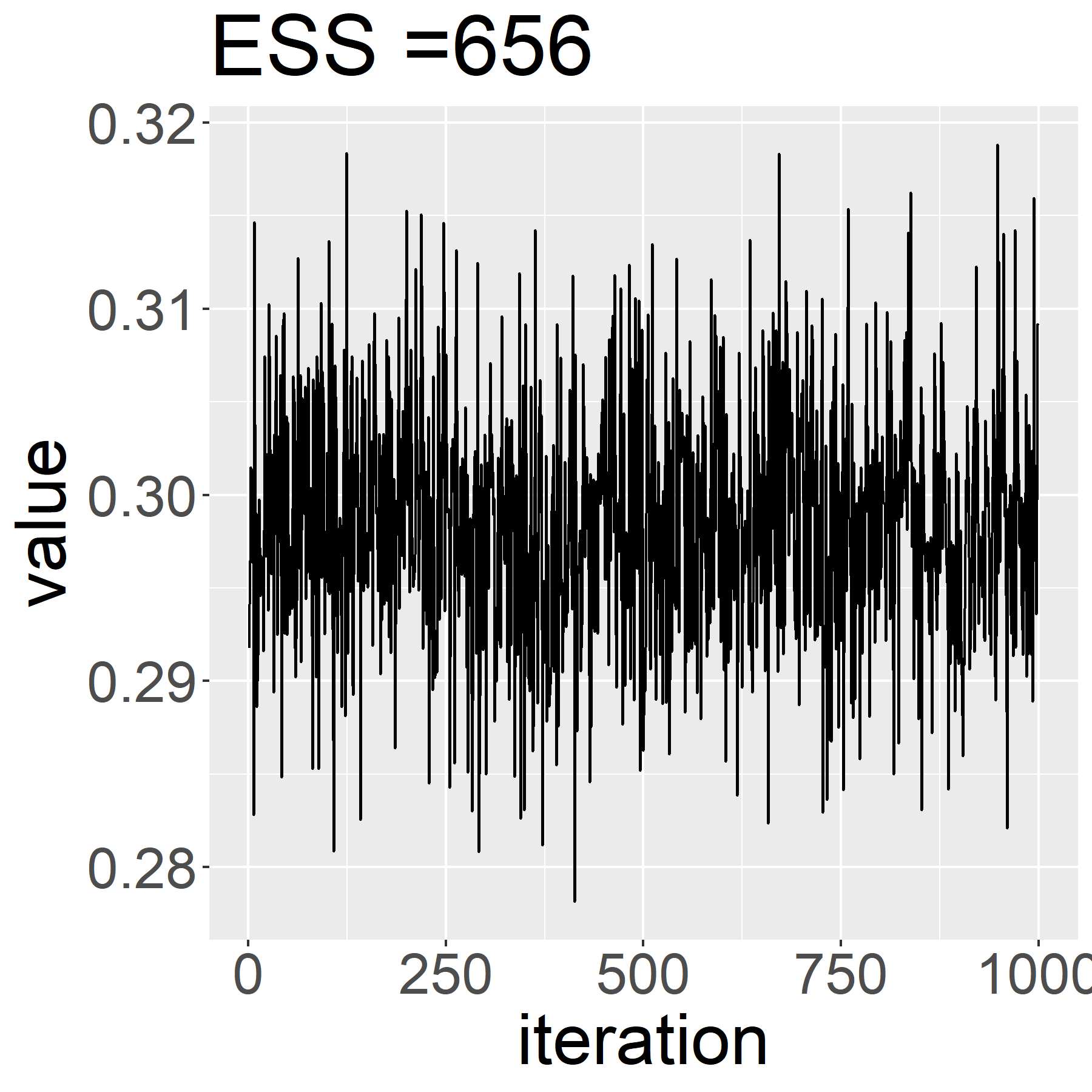} & \includegraphics[width = 1.0 in]{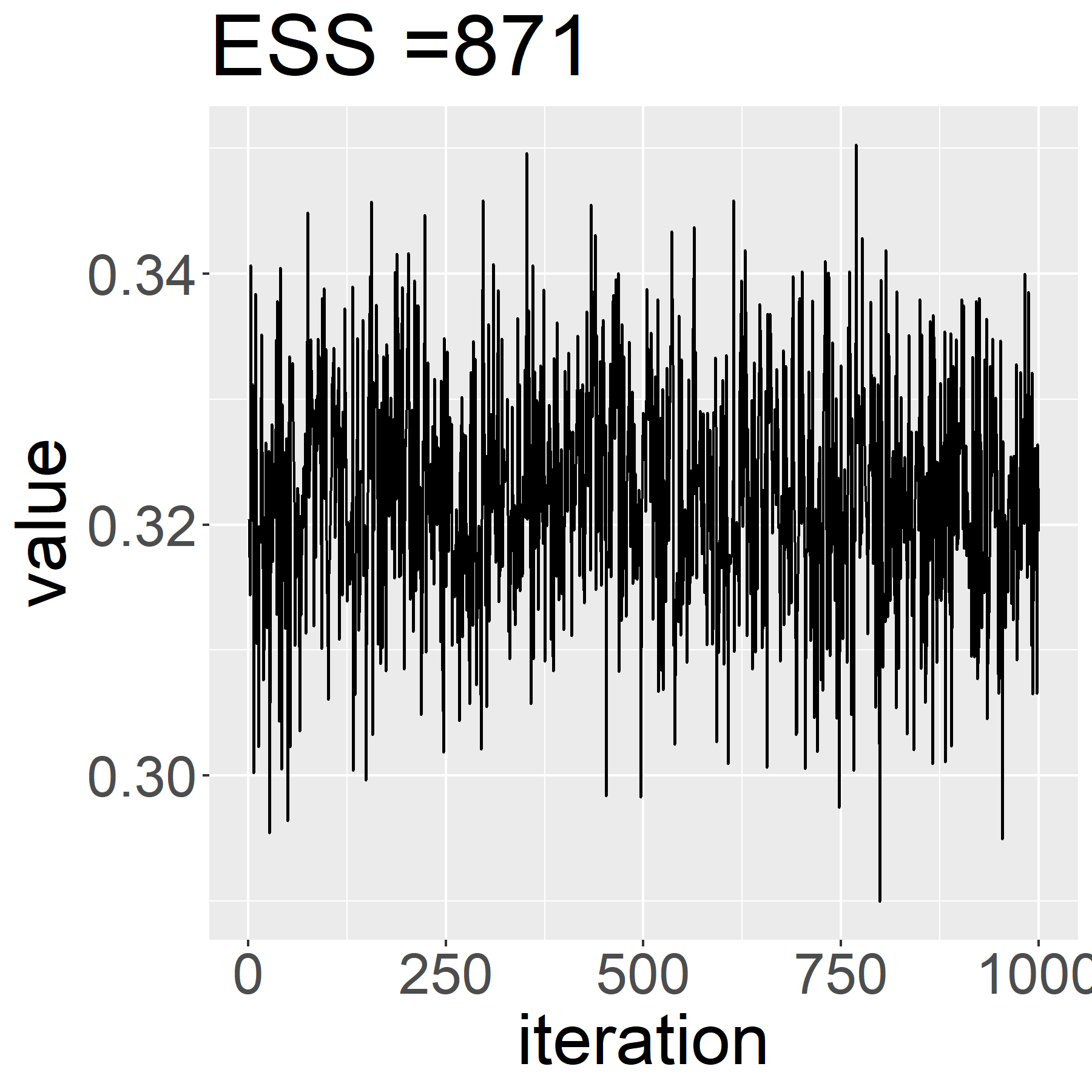} \\
\end{tabular}

$\Phi(\mu^{z_4}(t))$  for $t = 2/24,1,2$ \\

 \begin{tabular}{cccc}  
\includegraphics[width = 1.0 in]{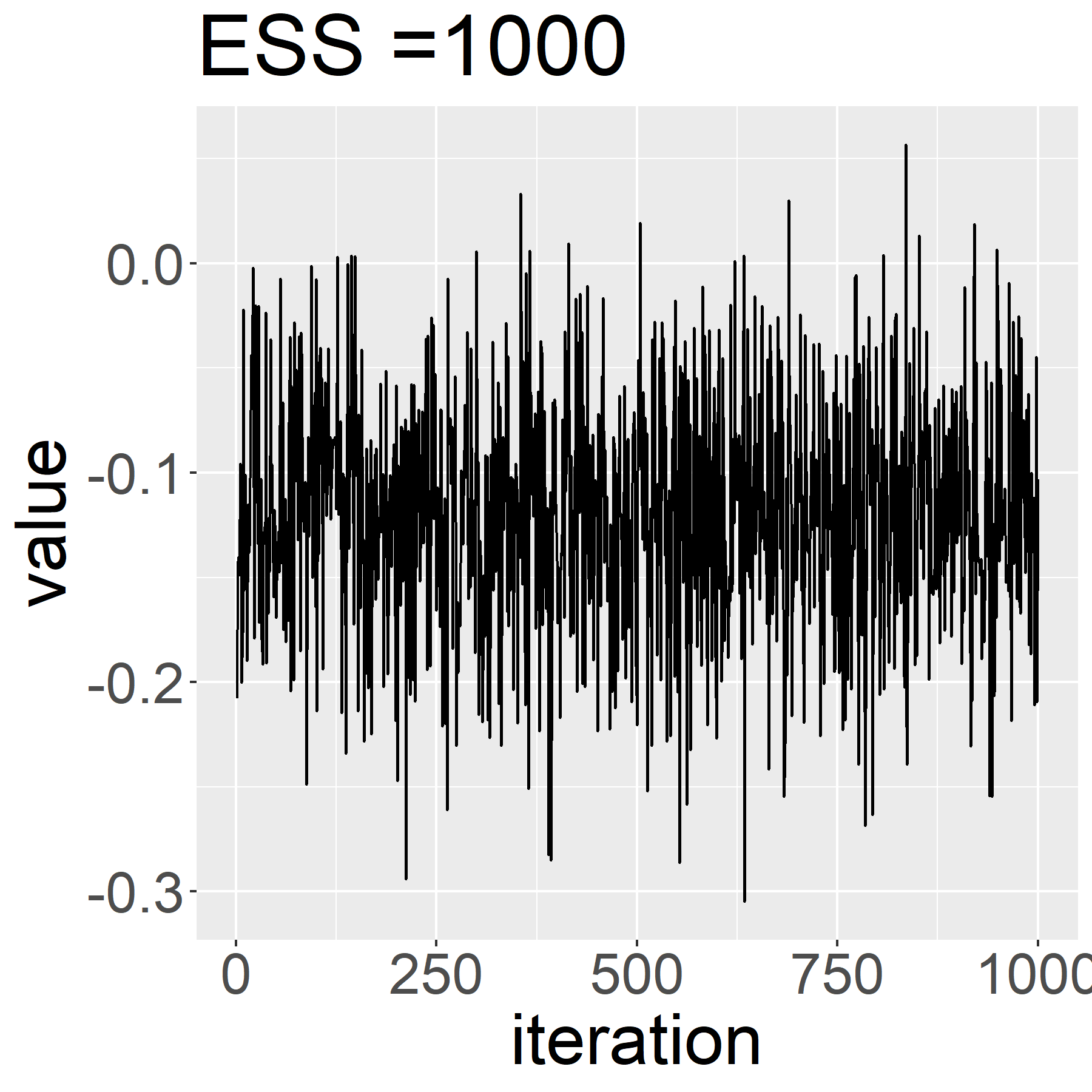} & \includegraphics[width = 1.0 in]{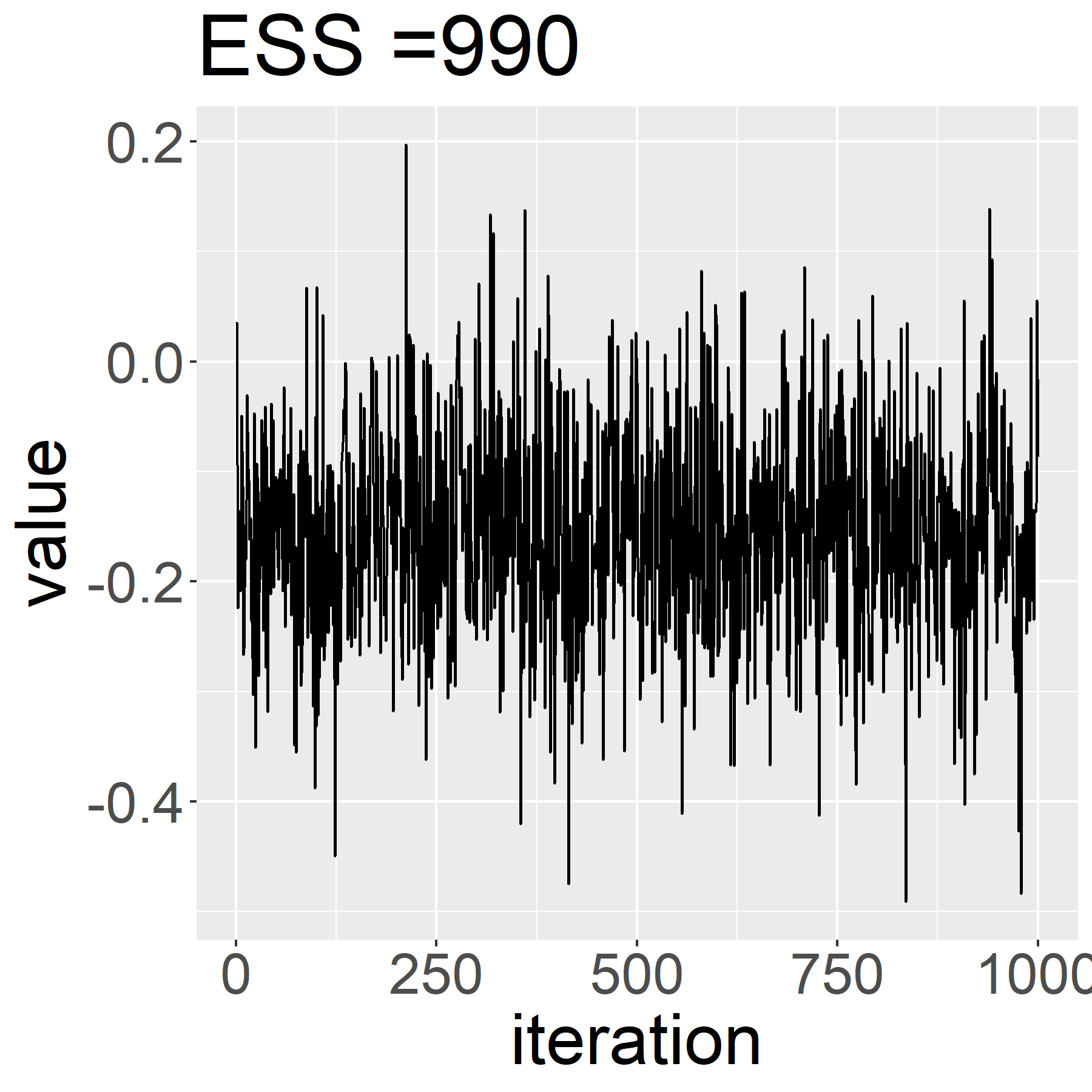} & \includegraphics[width = 1.0 in]{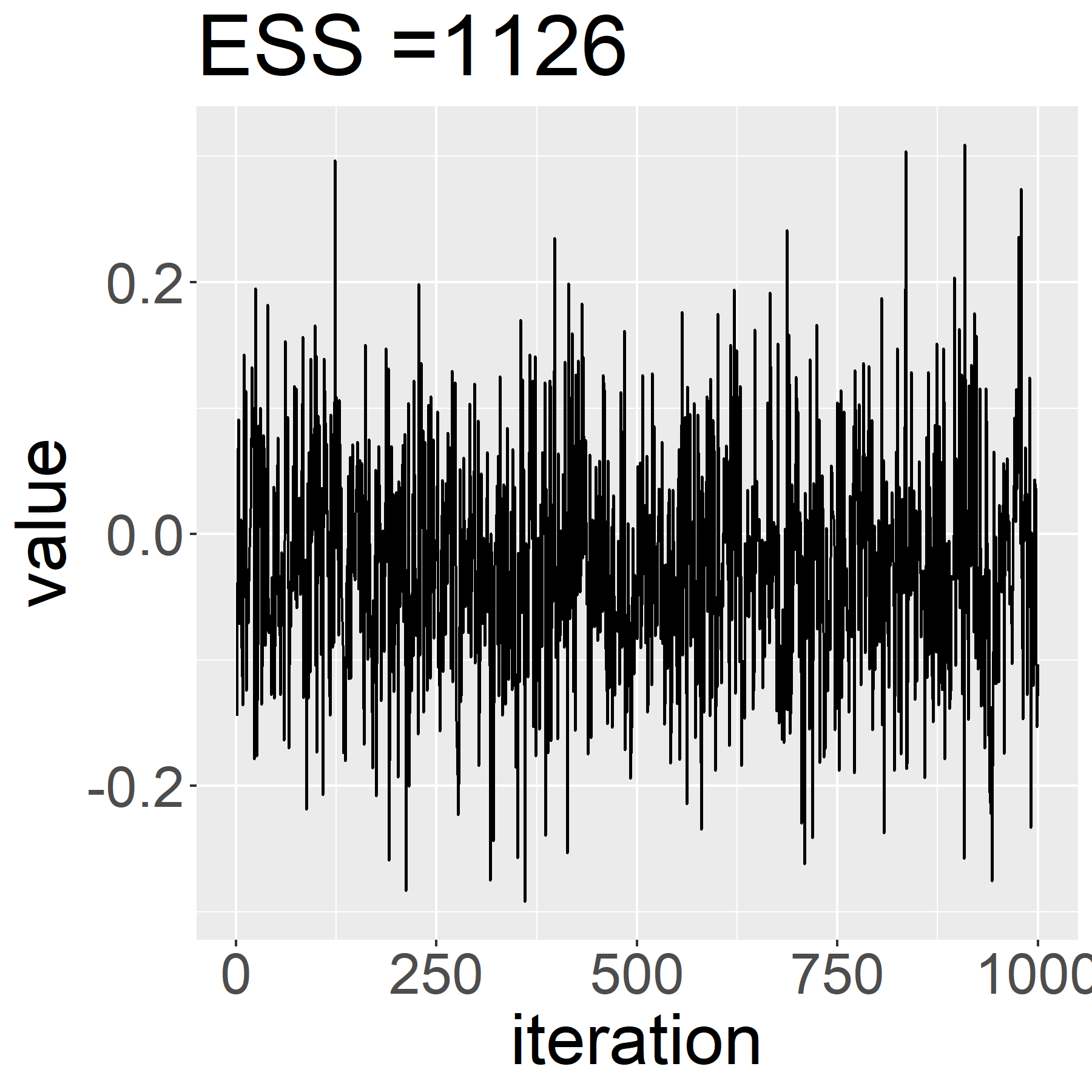} & \includegraphics[width = 1.0 in]{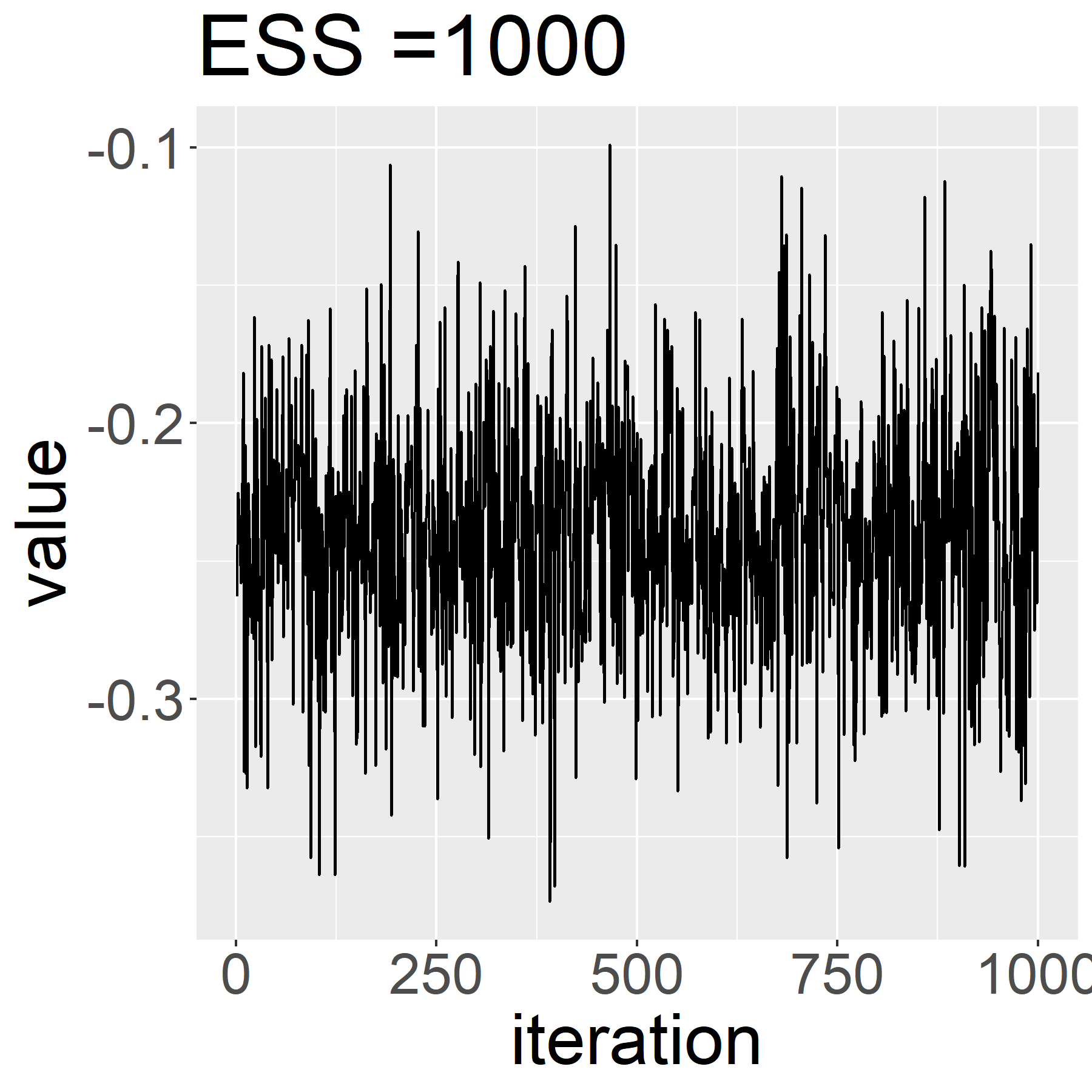} \\
$\rho_{1,2}$ & $\rho_{2,2}$ & $\rho_{3,2}$ & $\rho_{4,2}$ \\
\includegraphics[width = 1.0 in]{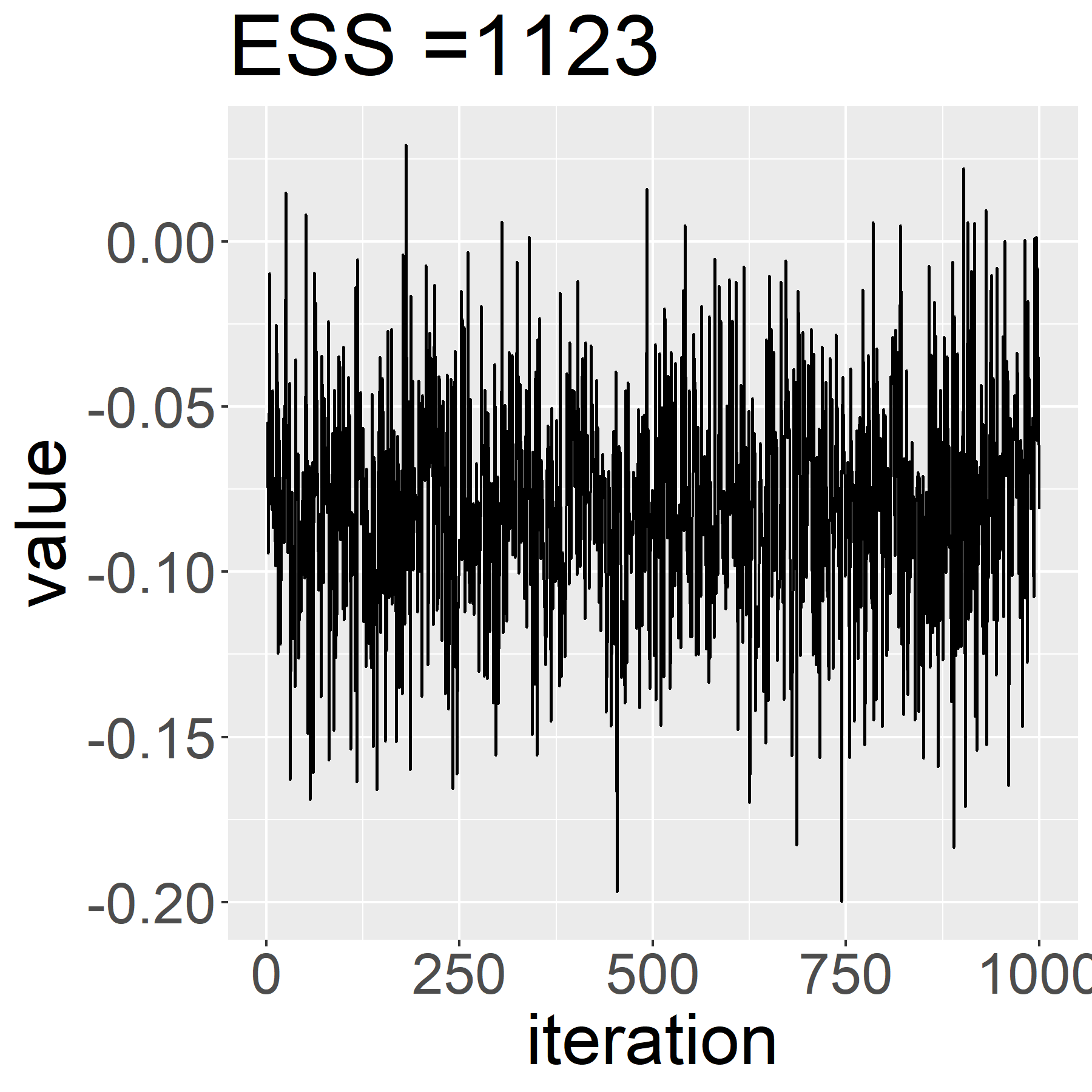} & \includegraphics[width = 1.0 in]{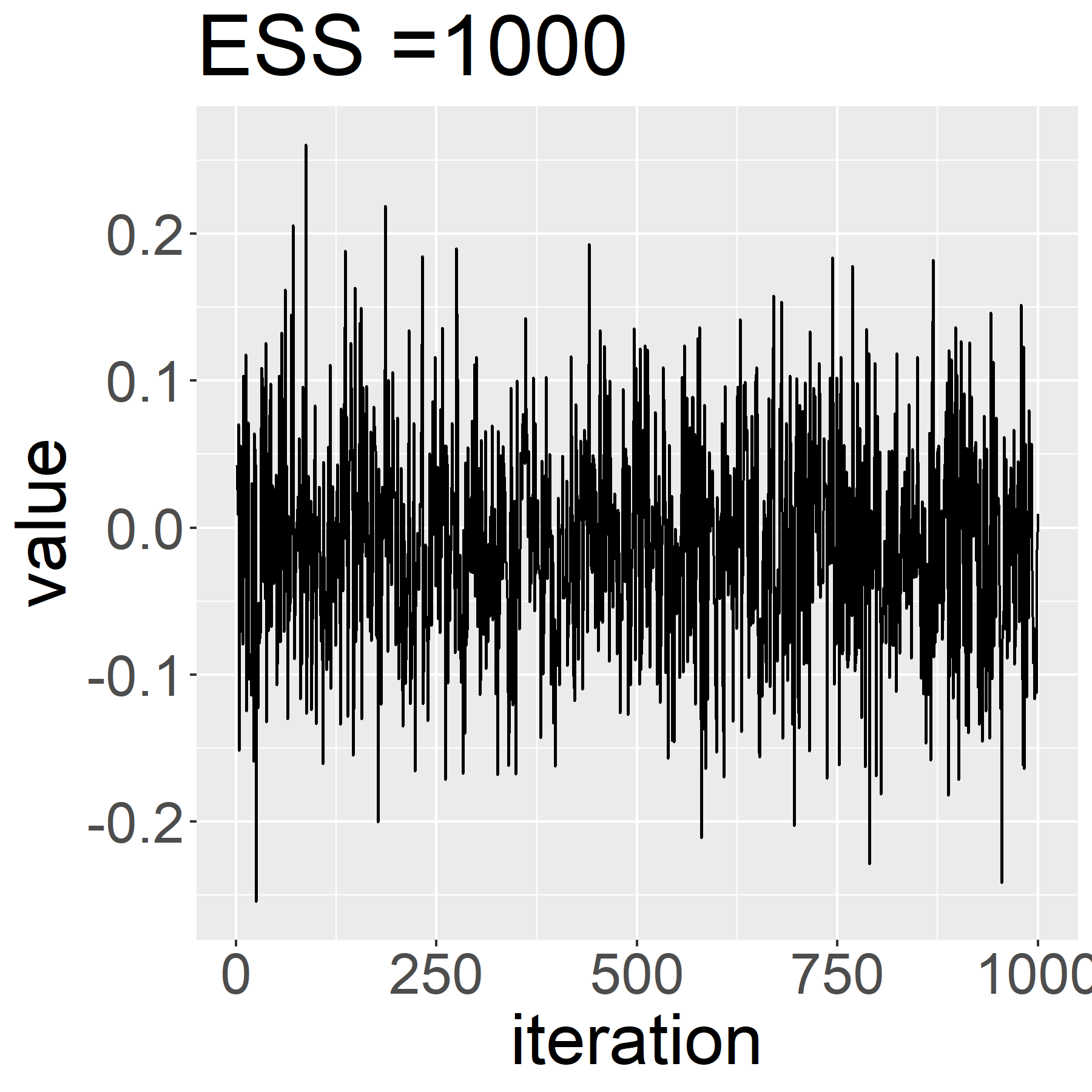} & \includegraphics[width = 1.0 in]{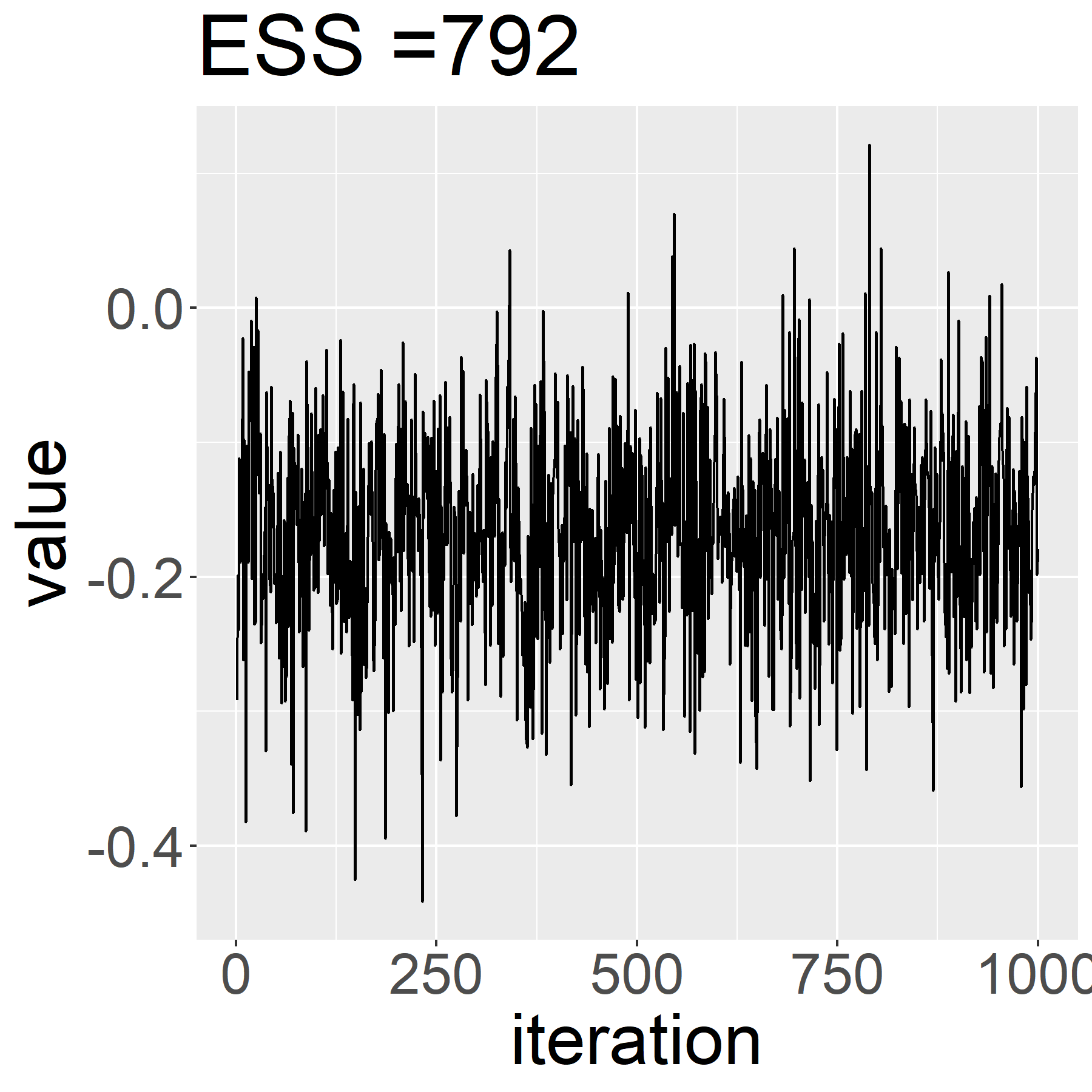} & \includegraphics[width = 1.0 in]{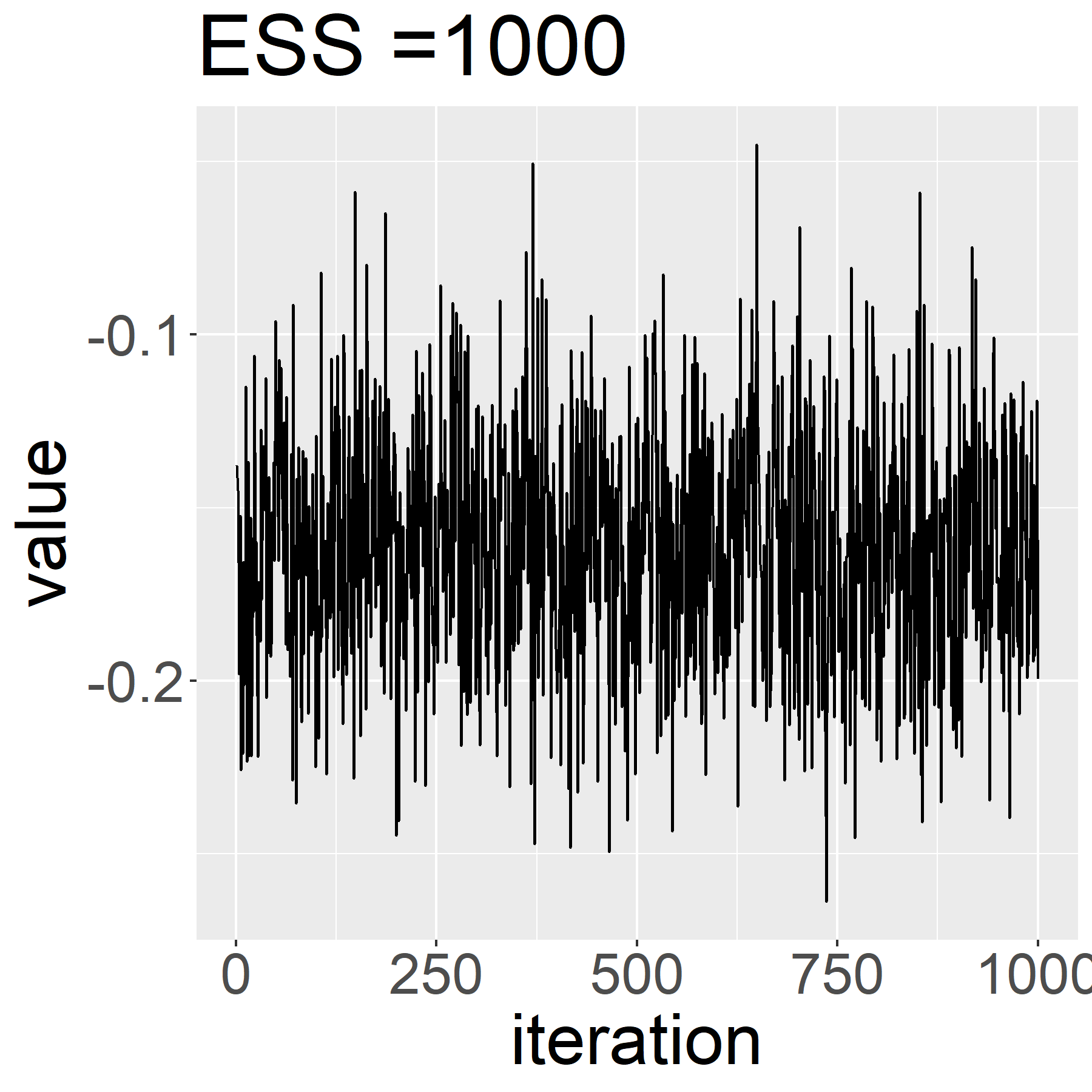} \\
$\rho_{1,3}$ & $\rho_{2,3}$ & $\rho_{3,3}$ & $\rho_{4,3}$ \\
\includegraphics[width = 1.0 in]{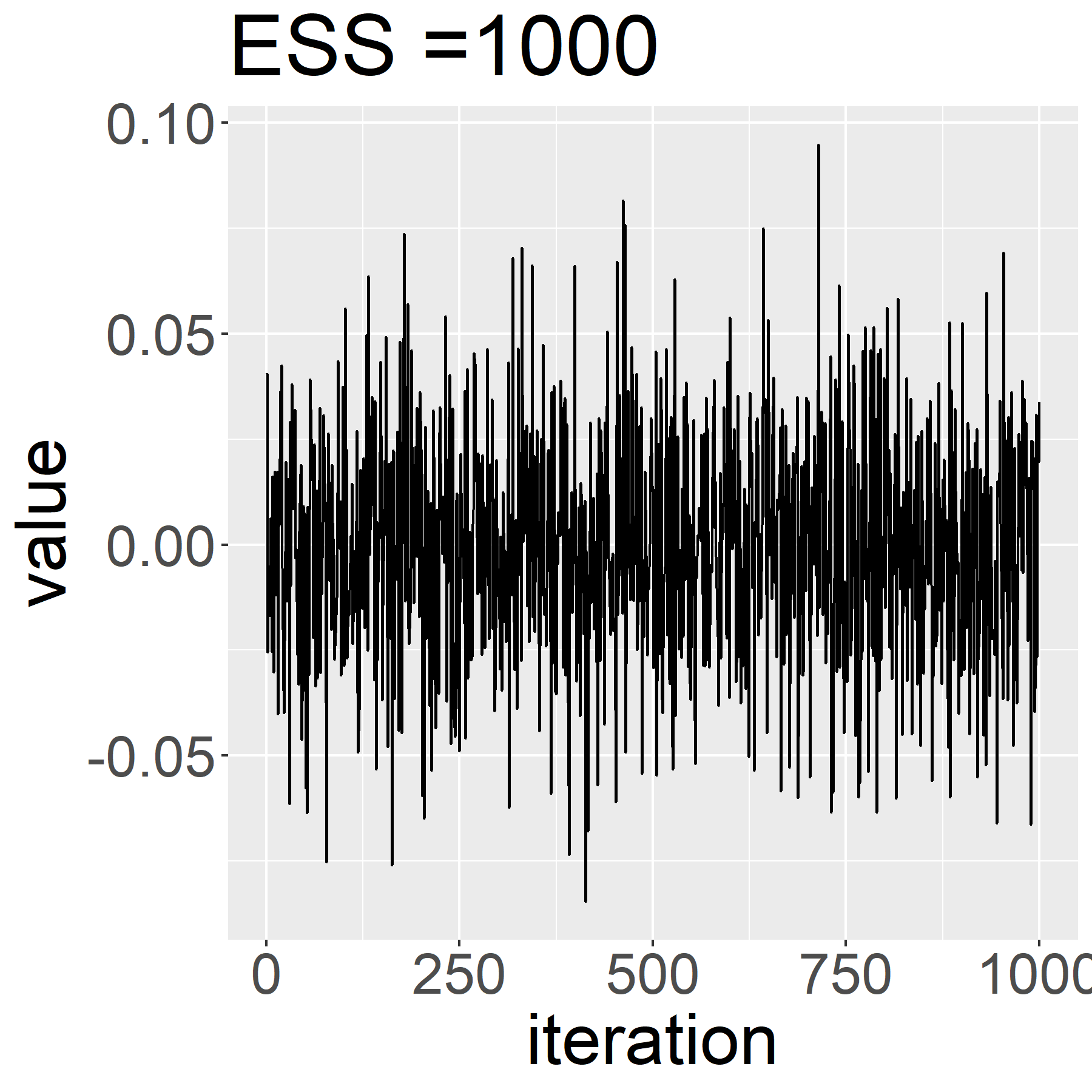} & \includegraphics[width = 1.0 in]{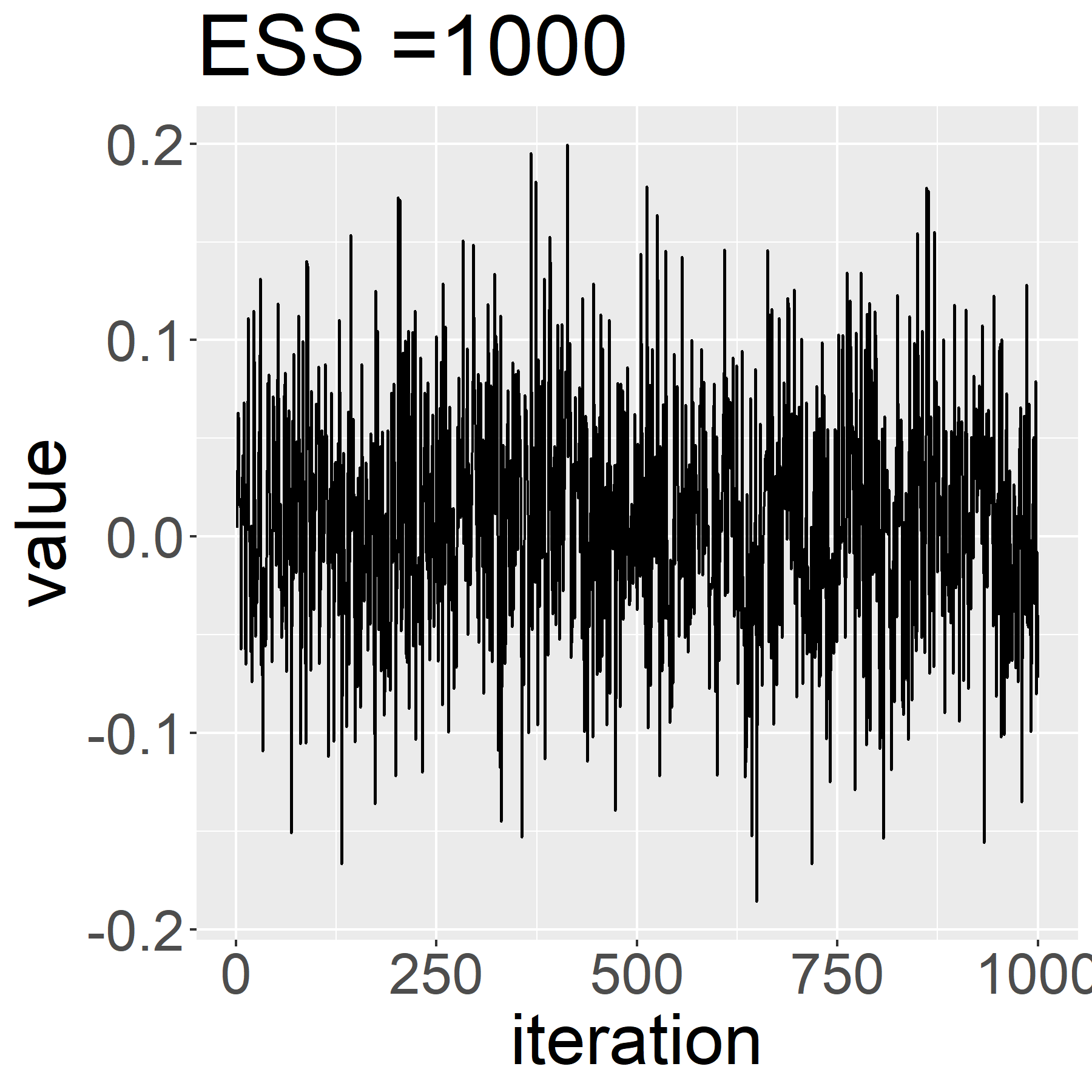} & \includegraphics[width = 1.0 in]{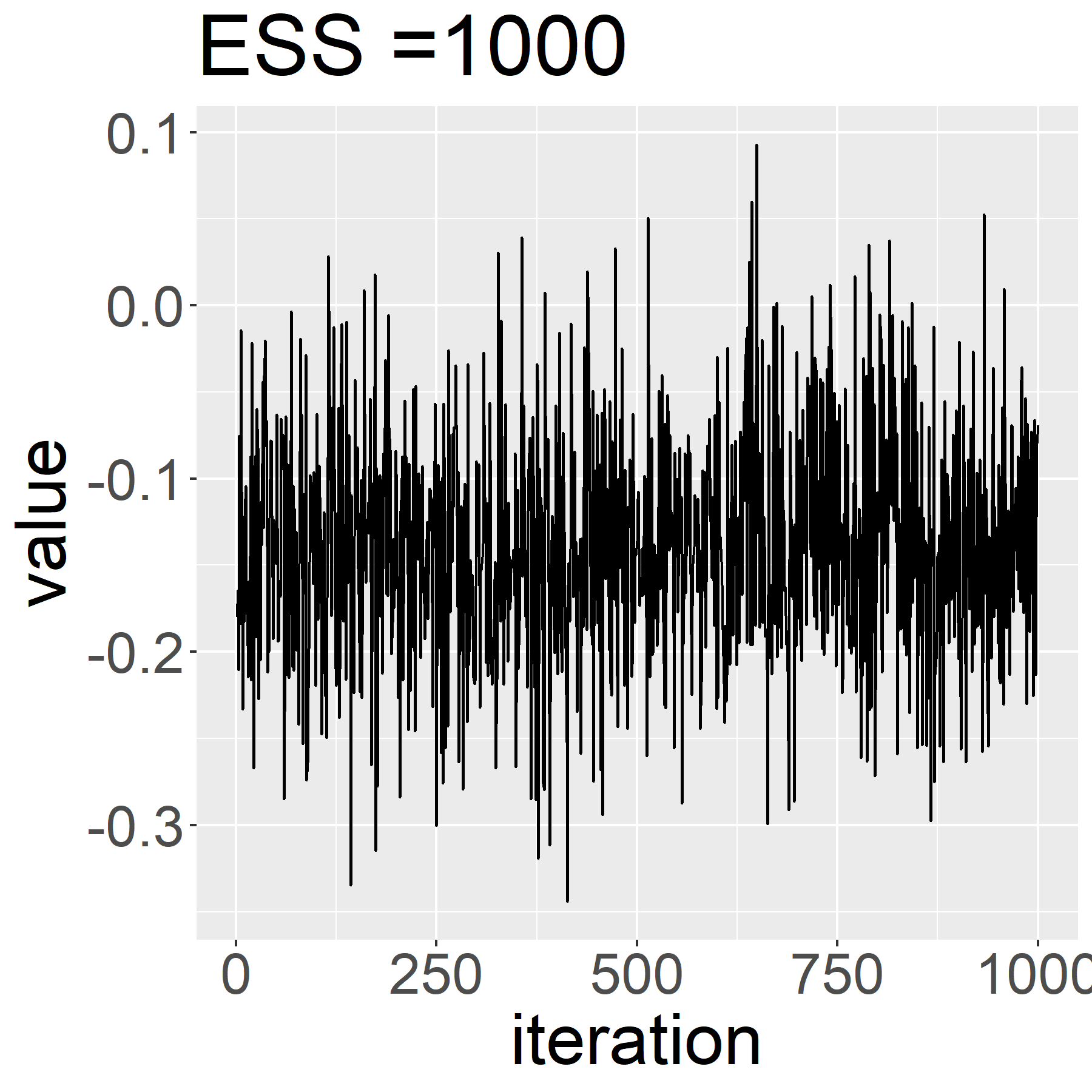} & \includegraphics[width = 1.0 in]{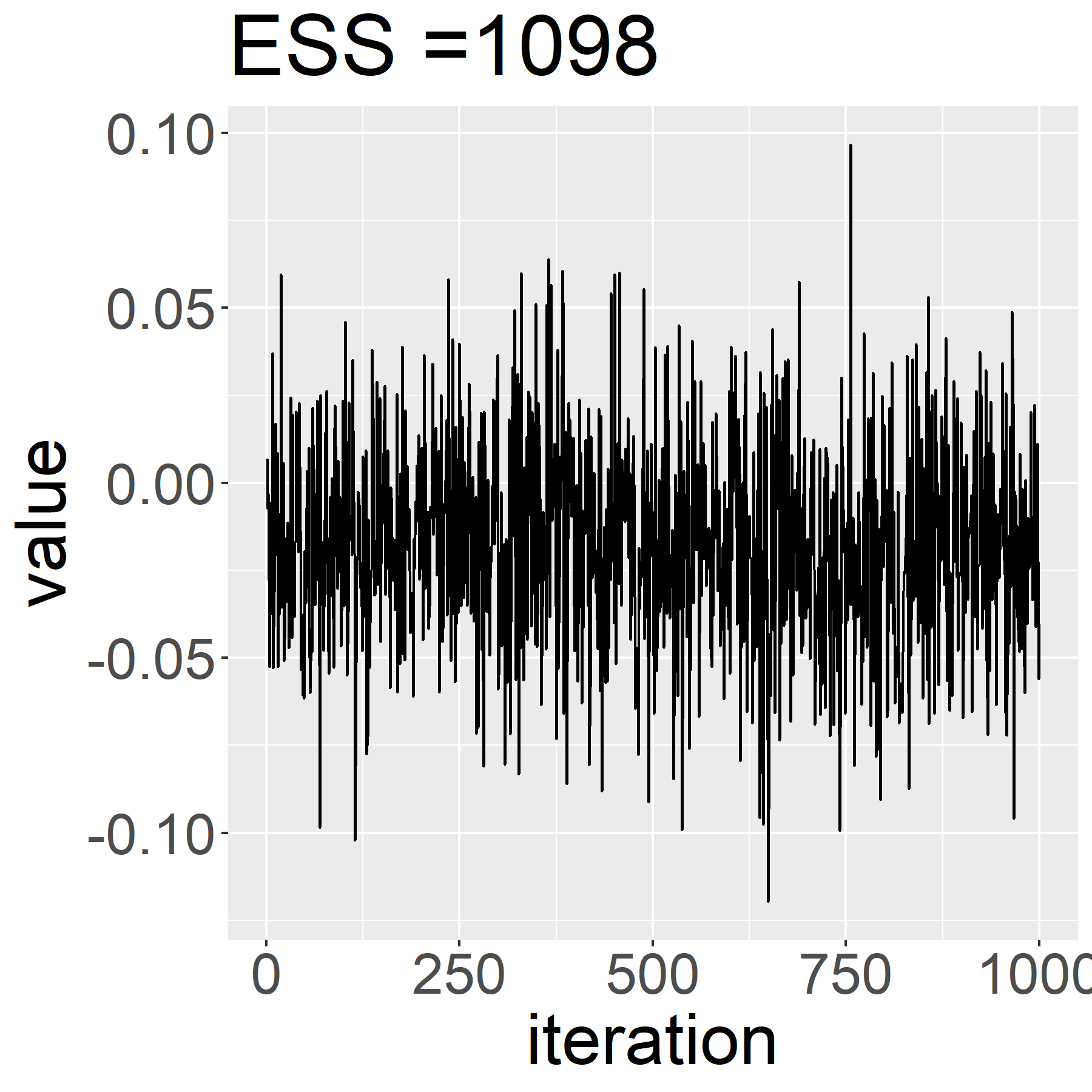} \\
$\rho_{1,4}$ & $\rho_{2,4}$ & $\rho_{3,4}$ & $\rho_{4,4}$ \\
\end{tabular}
\caption{Trace plots related to illness status interpolation and concurrent regression presented in Figure 8 in the main paper.}\label{fig:trace_ill}
    \end{center}
\end{figure}

\bibliographystyle{agsm}
\bibliography{bibliography}

\end{document}